\newtheorem{theorem}{Theorem}
\author{Oliver \textsc{Allanson}} 
\keywords{} 
\begin{document}

\frontmatter 

\pagestyle{plain} 

\begin{titlepage}

\begin{center}
\vspace*{-1cm}

\HRule \\[0.4cm] 
{\huge \bfseries \ttitle\par}\vspace{0.4cm} 
\HRule \\[1.5cm] 

{\huge {\bf Oliver Douglas Allanson}}
\vspace{-1cm}

\includegraphics[width=10cm]{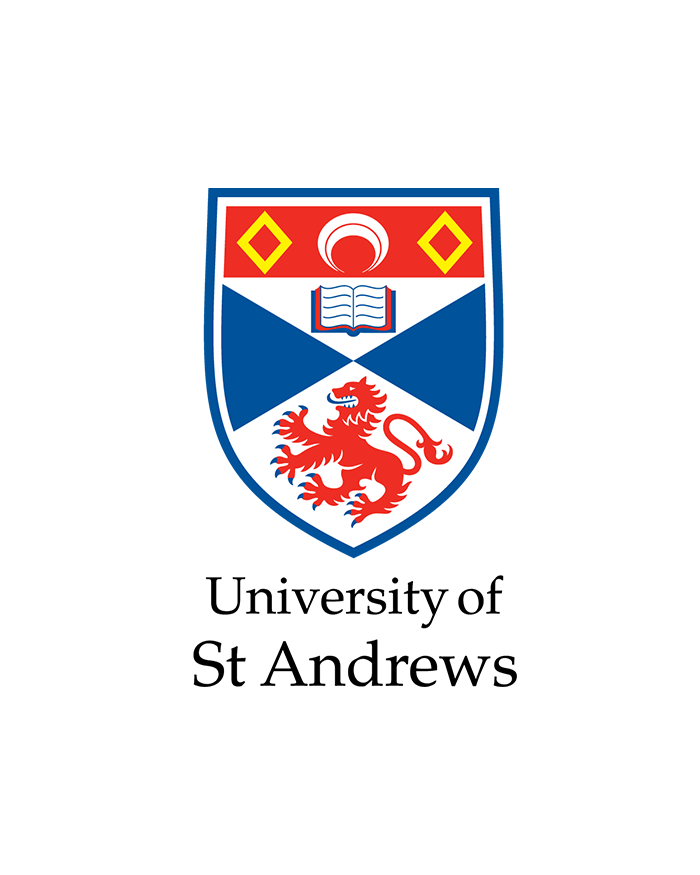}
\vspace{-1cm}

{\large This thesis is submitted in partial fulfilment for the degree of PhD\\ at the University of St Andrews}

\today
 
\end{center}

\end{titlepage}

\cleardoublepage

\section*{\huge \centering Abstract}
\noindent Vlasov-Maxwell equilibria are characterised by the self-consistent descriptions of the steady-states of collisionless plasmas in particle phase-space, and balanced macroscopic forces. We study the theory of Vlasov-Maxwell equilibria in one spatial dimension, as well as its application to current sheet and flux tube models.

\noindent The `inverse problem' is that of determining a Vlasov-Maxwell equilibrium distribution function self-consistent with a given magnetic field. We develop the theory of inversion using expansions in Hermite polynomial functions of the canonical momenta. Sufficient conditions for the convergence of a Hermite expansion are found, given a pressure tensor. For large classes of DFs, we prove that non-negativity of the distribution function is contingent on the magnetisation of the plasma, and make conjectures for all classes.

\noindent The inverse problem is considered for nonlinear `force-free Harris sheets'. By applying the Hermite method, we construct new models that can describe sub-unity values of the plasma beta ($\beta_{pl}$) for the first time. Whilst analytical convergence is proven for all $\beta_{pl}$, numerical convergence is attained for $\beta_{pl}=0.85$, and then $\beta_{pl}=0.05$ after a `re-gauging' process. 

\noindent We consider the properties that a pressure tensor must satisfy to be consistent with `asymmetric Harris sheets', and construct new examples. It is possible to analytically solve the inverse problem in some cases, but others must be tackled numerically. We present new exact Vlasov-Maxwell equilibria for asymmetric current sheets, which can be written as a sum of shifted Maxwellian distributions. This is ideal for implementations in particle-in-cell simulations.

\noindent We study the correspondence between the microscopic and macroscopic descriptions of equilibrium in cylindrical geometry, and then attempt to find Vlasov-Maxwell equilibria for the nonlinear force-free `Gold-Hoyle' model. However, it is necessary to include a background field, which can be arbitrarily weak if desired. The equilibrium can be electrically non-neutral, depending on the bulk flows.\\

\noindent \textbf{PhD supervisor:} \\ Prof Thomas Neukirch, University of St Andrews \\
\noindent \textbf{Viva examiners:} \\ Dr Andrew Wright, University of St Andrews \\ Prof Alexander Schekochihin, University of Oxford.


\cleardoublepage


\section*{\huge\centering  Candidate's Publications}

From the different research projects in my doctoral studies, the following papers have been published, or have been submitted:
\begin{enumerate}
\item O. Allanson, T. Neukirch, F. Wilson \& S. Troscheit: \\An exact collisionless equilibrium for the Force-Free Harris Sheet with low plasma beta, \emph{Physics of Plasmas}, {\bf 22}, 102116, 2015 
\item O. Allanson, T. Neukirch, S. Troscheit \& F. Wilson: \\From one-dimensional fields to Vlasov equilibria: theory and application of Hermite polynomials, \emph{Journal of Plasma Physics}, {\bf 82}, 905820306, 2016 
\item O. Allanson, F. Wilson \& T. Neukirch: \\Neutral and non-neutral collisionless plasma equilibria for twisted flux tubes: The Gold-Hoyle model in a background field, \emph{ Physics of Plasmas} {\bf 23}, 092106, 2016
\item O. Allanson, F. Wilson, T. Neukirch, Y.-H. Liu, and J. D. B. Hodgson:\\ Exact Vlasov-Maxwell equilibria for asymmetric current sheets, \emph{Geophysical Research Letters}, {\bf 44}, 2017
\item O. Allanson, T. Neukirch and S. Troscheit:\\ The inverse problem for collisionless plasma equilibria, \emph{Invited paper for The IMA Journal of Applied Mathematics}, submitted

\end{enumerate}

\cleardoublepage



\section*{\huge \centering Acknowledgements} 

There are people without whom this PhD would not have been possible, and there are those without whom it would not have been the same. \\

\noindent Those in the latter camp:
\begin{description}
\item[Thomas Neukirch] Chief amongst them. Thank you Thomas for supporting me wholeheartedly at every stage, for making sure that I didn't make a complete mess out of this, and for this adventure in Vlasov theory. One day we'll convince everyone that it is so much more interesting than MHD. 
\item[Roomies] To Sophie Dawe, Thomas Elsden and Cara Fraser (with an honourable mention for Jonathan Fraser, you were there often enough). Thank you for teaching me that you can never have too much of a good thing. And thank you for the impromptu - but regular - dancing to the house song. A particular thank you to Tom for the camaraderie of our days spent at Martyr's Kirk; the gleeful mid-morning breaks spent shivering over bacon sandwiches and the Bialetti are crystallised in my memory.
\item[The YRM2016 team] Thomas Bourne, Zo\"{e} Sturrock, Sascha Troscheit, Cristina Evans, Daniel Bennett, and Fiona MacFarlane. That was all very fun wasn't it, and extremely efficient!? 
\item[Sascha Troscheit] I don't know how you managed to get mentioned twice? Thank you for the innumerable distracting visits to my office, and for implicitly agreeing to make them useful, by engaging with me in my plasma physics problems. One day I'll help you with slippery devil's staircases. 
\end{description}

\noindent I would like to thank the STFC for allowing me to spend three and a half more years as a student, enabling me to fiddle around in plasma physics. I would also like to acknowledge grants from the National Science Foundation, the Vlasovia conference, and the Royal Astronomical Society for making possible my trips to the 2016 AGU Chapman conference in Dubrovnik, the 2016 Vlasovia meeting in Calabria, and the 2017 American Geophysical Union fall meeting in San Francisco, respectively.


\cleardoublepage




\section*{\huge \centering Candidate's Declarations}

I, Oliver Douglas Allanson, hereby certify that this thesis, which is approximately 38000 words in length, has been written by me, and that it is the record of work carried out by me, or principally by myself in collaboration with others as acknowledged, and that it has not been submitted in any previous application for a higher degree. 

\vspace{1cm}

\noindent I was admitted as a research student in September 2013 and as a candidate for the degree of Ph.D in September 2014; the higher study for which this is a record was carried out in the University of St Andrews between 2013 and 2017. 

\vspace{1cm}

\noindent Date: \\
\rule[0.5em]{25em}{0.5pt} 
\vspace{1cm}

\noindent Signature of candidate:\\
\rule[0.5em]{25em}{0.5pt} 

\cleardoublepage

\section*{\huge \centering Supervisor's Declaration}

I hereby certify that the candidate has fulfilled the conditions of the Resolution and Regulations appropriate for the degree of Ph.D in the University of St Andrews and that the candidate is qualified to submit this thesis in application for that degree. 

\vspace{1cm}

\noindent Date: \\
\rule[0.5em]{25em}{0.5pt} 
\vspace{1cm}

\noindent Signature of supervisor:\\
\rule[0.5em]{25em}{0.5pt} 


\cleardoublepage


\section*{\huge  \centering Permission for Publication} 

In submitting this thesis to the University of St Andrews I understand that I am giving permission for it to be made available for use in accordance with the regulations of the University Library for the time being in force, subject to any copyright vested in the work not being affected thereby. I also understand that the title and the abstract will be published, and that a copy of the work may be made and supplied to any bona fide library or research worker, that my thesis will be electronically accessible for personal or research use unless exempt by award of an embargo as requested below, and that the library has the right to migrate my thesis into new electronic forms as required to ensure continued access to the thesis. I have obtained any third-party copyright permissions that may be required in order to allow such access and migration, or have requested the appropriate embargo below. 
\vspace{1cm}

\noindent The following is an agreed request by candidate and supervisor regarding the publication of this thesis:
\vspace{1cm}

\noindent PRINTED COPY

\noindent No embargo on print copy 
\vspace{1cm}

\noindent ELECTRONIC COPY

\noindent No embargo on electronic copy

\vspace{1cm}

\noindent Date: \\
\rule[0.5em]{25em}{0.5pt} 
\vspace{1cm}

\noindent Signature of candidate:\\
\rule[0.5em]{25em}{0.5pt} 
\vspace{1cm}

\noindent Signature of supervisor:\\
\rule[0.5em]{25em}{0.5pt} 


\cleardoublepage



\noindent 

\section*{\huge \centering Some important notation}
 \begin{table*}[ht]
 \centering
  \begin{center}
  \begin{tabular}{lll} 
 \noindent $\boldsymbol{x}$ & (Particle) position &  \\
$\boldsymbol{v}$ & (Particle) velocity & $\boldsymbol{v}=d\boldsymbol{x}/dt$  \\
$\boldsymbol{A}$&Magnetic vector potential &\\
$\boldsymbol{B}$ & Magnetic field& $\boldsymbol{B}=\nabla\times\boldsymbol{A}$\\
$\phi$ & Electrostatic scalar potential & \\
$\boldsymbol{E}$ & Electric field & $\boldsymbol{E}=-\nabla\phi-\partial\boldsymbol{A}/\partial t$\\
$\alpha(\boldsymbol{r})$&Force-free parameter&$\alpha(\boldsymbol{r})=\boldsymbol{B}\cdot(\nabla\times\boldsymbol{B})/|\boldsymbol{B}|^2$\\
$s$ & Particle species $s$&\\
$m_s$ & Particle mass&\\
$q_s$ & Particle charge& $e=q_i=-q_e$\\
$f_s$ & Particle distribution function (DF) &\\
$n_s$ & Particle number density & $n_s=\int f_s d^3v$\\
$\rho_s$ & Mass density& $\rho_s=m_sn_s$\\
$\sigma$ & Electric charge density& $\sigma=\sum_s\sigma_s=\sum_sq_sn_s$\\
$\boldsymbol{V}_s$&Bulk flow& $\boldsymbol{V}_s=n_s^{-1}\int\boldsymbol{v}f_sd^3v$ \\
$\boldsymbol{j}$&Electric current density& $\boldsymbol{j}=\sum_s \boldsymbol{j}_s=\sum_sq_sn_s\boldsymbol{V}_s$\\
$\boldsymbol{w}_s$&Particle flow relative to the bulk&$\boldsymbol{w}_s=\boldsymbol{v}-\boldsymbol{V}_s$\\
$P_{ij}$& Thermal pressure tensor& $P_{ij}=\sum_sP_{ij,s}$\\
&&$=\sum_s\int w_{is}w_{js}f_sd^3v$\\
$p$ & Scalar thermal pressure &$p=\text{Tr}(P_{ij})/3$  \\
$H_s$& Particle Hamiltonian (energy)&$H_s=m_s\boldsymbol{v}^2/2+q_s\phi$\\
$\boldsymbol{p}_{s}$& Particle canonical momenta&$\boldsymbol{p}_{s}=m_s\boldsymbol{v}+q_s\boldsymbol{A}$\\
$\beta_s$&Thermal beta&$\beta_s=1/(m_sv_{th,s}^2)$\\
$v_{\text{th},s}$&Particle thermal velocity&\\
$r_{Ls}$ &Thermal Larmor radius&$r_{Ls}=m_sv_{th,s}(e|\boldsymbol{B}|)^{-1}$\\
$T_s$&Temperature&$T_s=1/(k_B\beta_s)$\\
$L$&Macroscopic length scale& e.g. current sheet width\\
$\delta_s$&Magnetisation parameter&$\delta_s=r_{Ls}/L$\\
$\beta_{pl}$&Plasma beta&$\beta_{pl}=\sum_s\beta_{pl,s}$  \\
&&$=\sum_sn_sk_BT_s/(\boldsymbol{B}^2/(2/\mu_0))$\\
$\lambda_D$&Debye radius&$\lambda_D=\sqrt{\epsilon_0k_BT_e/(n_ee^2)}$\\

\end{tabular}
\end{center}
\end{table*}


\cleardoublepage



\section*{\huge \centering Physical constants (SI units)}
\begin{table*}[ht]
 
  \begin{center}
  \begin{tabular}{lll} 

\noindent Boltzmann's constant&$k_B$&\SI{1.3807e-23}{\joule\per\kelvin}\\
Speed of light in a vacuum& $c$ & \SI{2.9979e8}{\meter\per\second} \\
Permittivity of free space&$\epsilon_0$&\SI{8.8542e-12}{\farad\per\metre}\\
Permeability of free space&$\mu_0$&\SI{4\pi e-7}{\henry\per\metre}\\
Proton mass&$m_i$&\SI{1.6726e-27}{\kilogram}\\
Electron mass&$m_e$&\SI{9.1094e-31}{\kilogram}\\
Elementary charge&$e$&\SI{1.6022e-19}{\coulomb}\\

\end{tabular}
\end{center}
\end{table*}

\section*{\huge \centering Abbreviations}
\begin{table*}[ht]
 \centering
  \begin{center}
  \begin{tabular}{ll}

\noindent \textbf{DF} & \textbf{D}istribution \textbf{F}unction \\
\textbf{VM} & \textbf{V}lasov-\textbf{M}axwell \\
\textbf{MHD} & \textbf{M}agneto\textbf{H}ydro\textbf{D}ynamics\\
\textbf{GEM} & \textbf{G}eospace \textbf{E}nvironmental \textbf{M}odelling \\
\textbf{1D} & \textbf{1 D}imensional\\
\textbf{2D} & \textbf{2 D}imensional \\
\textbf{RHS} & \textbf{R}ight \textbf{H}and \textbf{S}ide\\
\textbf{LHS} & \textbf{L}ight \textbf{H}and \textbf{S}ide\\
\textbf{FT}&\textbf{F}ourier \textbf{T}ransform\\
\textbf{IFT}&\textbf{I}nverse \textbf{F}ourier \textbf{T}ransform\\
\textbf{PIC} & \textbf{P}article-\textbf{I}n-\textbf{C}ell\\
\textbf{FFHS} & \textbf{Force}-\textbf{F}ree \textbf{H}arris \textbf{S}heet \\
\textbf{MMS} & \textbf{M}agnetospheric \textbf{M}ulti\textbf{S}cale mission\\
\textbf{AHS} & \textbf{A}symmetric \textbf{H}arris \textbf{S}heet\\
\textbf{AH+G} & \textbf{A}symmetric \textbf{H}arris plus \textbf{G}uide\\
\textbf{GH} & \textbf{G}old-\textbf{H}oyle\\
\textbf{GH+B} & \textbf{G}old-\textbf{H}oyle plus \textbf{B}ackground\\

\end{tabular}
\end{center}
\end{table*}


\cleardoublepage




{\large \centering
\noindent This thesis is dedicated to\\Sophie Dawe's love, and levity \\Phil Michaels' life, and spirit \\my parents' support, and tender care\\my family present, passed and in-law \\my friends for bringing me to life, \\
and cutting me down to size. \\

\vspace{3mm}
\noindent\rule{0.4\linewidth}{1pt}
\vspace{1.5cm}

\noindent
At quite uncertain times and places, \\
The atoms left their heavenly path, \\
And by fortuitous embraces, \\
Engendered all that being hath.\\ 
\vspace{3mm}
\noindent And though they seem to cling together,\\ 
And form "associations" here, \\
Yet, soon or late, they burst their tether,\\ 
And through the depths of space career.\\





\vspace{3mm}
\noindent Soon, all too soon, the chilly morning, \\\
This flow of soul will crystallize, \\
Then those who Nonsense now are scorning,\\ 
May learn, too late, where wisdom lies.\bigbreak

\vspace{3mm}

\noindent James Clerk Maxwell\\
\noindent Molecular evolution (abridged)\\
 \emph{Nature}, {\bf 8}, 205, \textit{page 473} (1873)\\}


\cleardoublepage


\tableofcontents 

\cleardoublepage


\mainmatter 

\pagestyle{thesis} 



\chapter{Introduction} \label{Intro} 

\epigraph{\emph{Most important part of doing physics is the knowledge of approximation.}}{\textit{Lev Landau}}

\section{The hierarchy of plasma models}
More than $99\%$ of the known matter in the Universe is in the plasma state \citep{Baumjohannbook}, by far the most significant material constituent of stellar, interplanetary, interstellar and intergalactic media. Not only is a deep understanding of plasmas then clearly necessary to understand the physics of our universe, but plasmas are also of real interest to us on Earth. Nuclear fusion experiments - and in principle, future power stations - necessarily exploit the plasma state to work, either using high-temperature plasmas confined by strong magnetic fields, or plasmas formed by the laser ablation of a solid fuel target. 

Plasmas are often known as the `fourth' state of matter, lying after the `third', and more familiar gaseous state. At a temperature above $100,000$K, most matter exists in an ionised state, however plasmas can exist at much lower temperatures should ionisation mechanisms exist, and if the density is sufficiently low \citep{Krallbook-1973}. Figures \ref{fig:zoo} and \ref{fig:zoo2} display some examples from the rich array of plasma environments in temperature-density scatter plots; from the relatively cool and diffuse plasmas of interstellar space, to the incredibly dense and hot plasmas of stellar and laboratory fusion. Since there is such variety in the physical conditions able to sustain plasmas, the `plasma state' may best describe \emph{collective behaviours}, the characteristics that persist despite the range of physical conditions that can sustain plasmas (we see from Figure \ref{fig:zoo2} that even the free electrons in metals can be considered, or modelled, as a plasma). Matter is in a plasma state when the degree of ionisation is sufficiently high that the dynamical behaviour of the particles is dominated by electromagnetic forces \citep{Fitzpatrickbook}, and this can even be the case for ionisation levels as low as a fraction of a percent \citep{Peratt-1996}. 
\begin{figure}
    \centering
     \begin{subfigure}[b]{0.7\textwidth}
        \includegraphics[width=\textwidth]{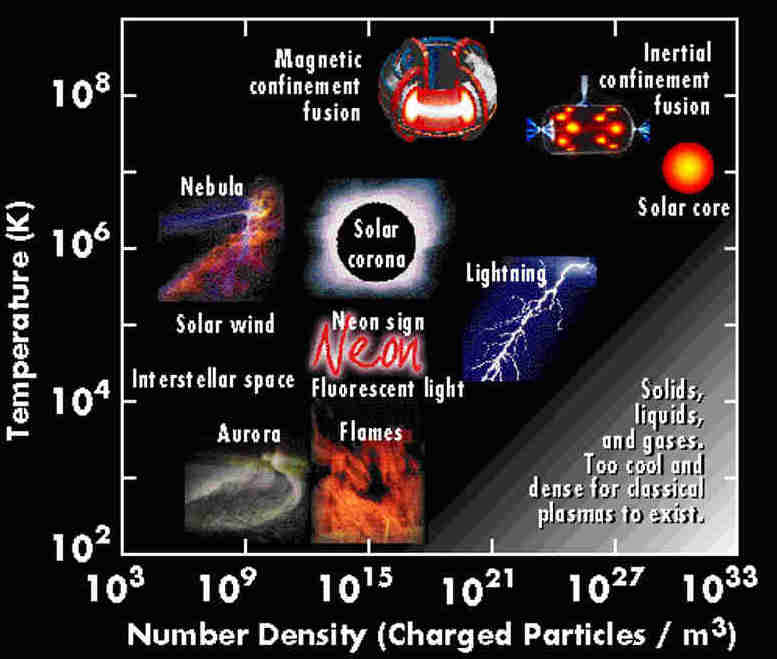}
    \caption{{\small The plasma `zoo': A density-temperature plot displaying various plasma environments and phenomena, and their contrast to solids, liquids and gases. {\bf Image copyright:} \href{http://www.cpepphysics.org}{Contemporary Physics Education Project}, (reproduced with permission).}}\label{fig:zoo}
     \end{subfigure}
    \begin{subfigure}[b]{0.7\textwidth}
     \includegraphics[width=\textwidth]{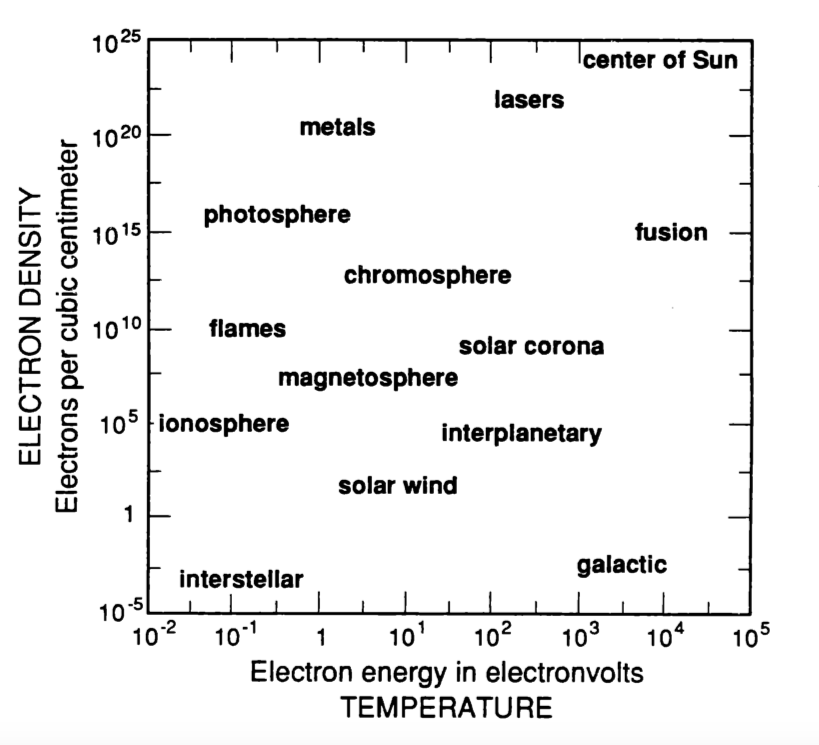}
    \caption{{\small A temperature-density plot reproduced from \citet{Peratt-1996}, focussing on the environments in which plasmas appear. {\bf Image Copyright:} Springer, \href{http://link.springer.com/journal/10509}{\emph{Astrophysics and Space Science}} {\bf 242}, 1-2, (1996), pp. 93-163., copyright (1996), (reproduced with permission). }}\label{fig:zoo2}
     \end{subfigure}
     \caption{\small The variety of plasma conditions and environments.}\label{}
\end{figure}
Whilst many of these plasmas possess some shared tendencies and behaviours, it is not possible to capture all the detailed physics of the entire variety of plasma processes with one particular mathematical toolkit or model. Not only may some models fail to capture certain aspects of the physics by virtue of the approximations made, but they may be inefficient, or in fact insoluble when applied in practice. Hence, plasma physics is a discipline with a rich variety of perspectives and methods. Within each of these paradigms we make certain approximations and ordering assumptions, in order to capture the essence of the problem at hand.

\subsection{Single particle motion}\label{sec:guiding}
Taking the viewpoint of particulate matter as the fundamental approach, then a `full' description of plasmas is found by solving the (Lorentz) equation of motion of each individual particle, written in classical form as 
\begin{equation}
\boldsymbol{F}_{s}(\boldsymbol{x}(t),\boldsymbol{v}(t);t)=q_s(\boldsymbol{E}(\boldsymbol{x},t)+\boldsymbol{v}(t)\times\boldsymbol{B}(\boldsymbol{x},t)),\label{eq:Lorentz}
\end{equation}
with the force, $\boldsymbol{F}_s$, on a test particle of species $s$, of charge $q_s$, at position $\boldsymbol{x}$, and with velocity $\boldsymbol{v}$, when under the influence of electric and magnetic fields, $\boldsymbol{E}$ and $\boldsymbol{B}$. One can in principle integrate in time to calculate the trajectory of the particle for all future times (e.g. see \cite{Vekstein-2002}),
\[
\boldsymbol{x}(t)=\int_{t_0}^t \boldsymbol{v} (t^\prime)dt^\prime,
\]
for $\boldsymbol{v}(t_0)$ some initial condition. However, in all but the simplest electromagnetic field geometries these integrals may not even be able to be written down, and/or one might have to resort to numerical methods to calculate the trajectory. One more complication is the effect of the charged particles on the electromagnetic fields, $\boldsymbol{E}$ and $\boldsymbol{B}$, and this shall be discussed in Section \ref{sec:machine}.

If a plasma is sufficiently magnetised it has small parameters  
\begin{equation}
\frac{r_L}{L}\ll 1,\hspace{3mm} \frac{1/\Omega}{\tau}\ll 1,\nonumber
\end{equation}
for $r_L$ and $\Omega$ the characteristic values of the Larmor radius and gyrofrequency of individual particle gyromotion respectively, and $L$ and $\tau$ the characteristic length and time scales upon which the electromagnetic fields vary. In such a case there is a well understood treatment for particle orbits, namely \emph{Guiding Centre} theory (e.g. see \cite{Northrop-1961,Littlejohn-1983,Cary-2009}). Guiding centre theory models particle motion as a superposition of rapid gyromotion and a comparitively slow \emph{secular} drift (e.g. see \cite{Morozov-1966}). This gyromotion is depicted in Figure \ref{fig:Northrop}, reproduced from \citet{Northrop-1963}; in which the notation $\boldsymbol{\rho}$ and $\rho$ are used for the gyroradius `vector' and magnitude respectively (in contrast to the use of $r_{L}$ herein); $\boldsymbol{r}$ is the particle position; and $\boldsymbol{R}$ is the guiding center position, such that $\boldsymbol{r}=\boldsymbol{R}+\boldsymbol{\rho}$.
\begin{figure}
    \centering
        \includegraphics[width=0.7\textwidth]{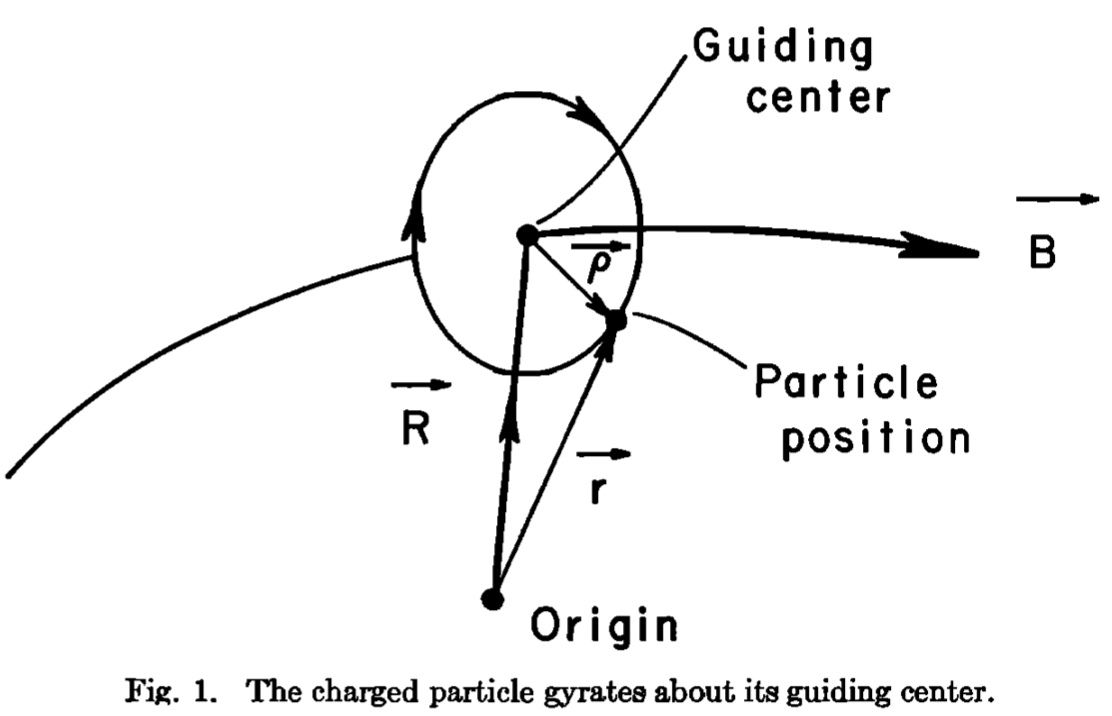}
    \caption{{\small A figure from \citet{Northrop-1963}. This figure depicts the gyromotion about the local magnetic field of a positively charged particle. {\bf Image copyright:} \href{http://sites.agu.org}{American Geophysical Union}  (reproduced with permission).}}
    \label{fig:Northrop}
\end{figure}
The local gyromotion is governed by the conservation (to lowest order) of the \emph{magnetic moment}, 
\begin{equation}
\mu =\frac{m_sv_{\perp}^2}{2|\boldsymbol{B}|},\nonumber
\end{equation}
for $m_s$ the mass of a particle, and $v_{\perp}^2 $ the square magnitude of the particle velocity normal to the local magnetic field. This theory is very useful for heuristic understanding of individual particle motion, and for the study of `test particles' embedded in a system of interest (e.g. see \cite{Threlfall-2015, Borissov-2016}), however not for `building up' a theory that models the evolution of the particles and electromagnetic fields self-consistently. In a situation in which many particles are present, the self-consistent modelling of all of the particles would in practice require knowledge of the individual particle interactions via the electromagnetic fields of mixed origin (microscopic/self-generated and macroscopic/external fields), and in principle collisions, which is mathematically unwieldy. However, we note here that it is possible - whilst unconventional - to use $N$-body particle dynamics to study collective effects in plasma physics (e.g. see \cite{Pines-1952,Escande-2016}), including the recent work of Dominique Escande and collaborators, who have taken an N-body approach to `re-deriving' physical phenomena, such as Debye shielding and Landau Damping (see Figure \ref{fig:Escande} for a representation of how their work `sidesteps' the more traditional routes). 

\begin{figure}
    \centering
        \includegraphics[width=0.7\textwidth]{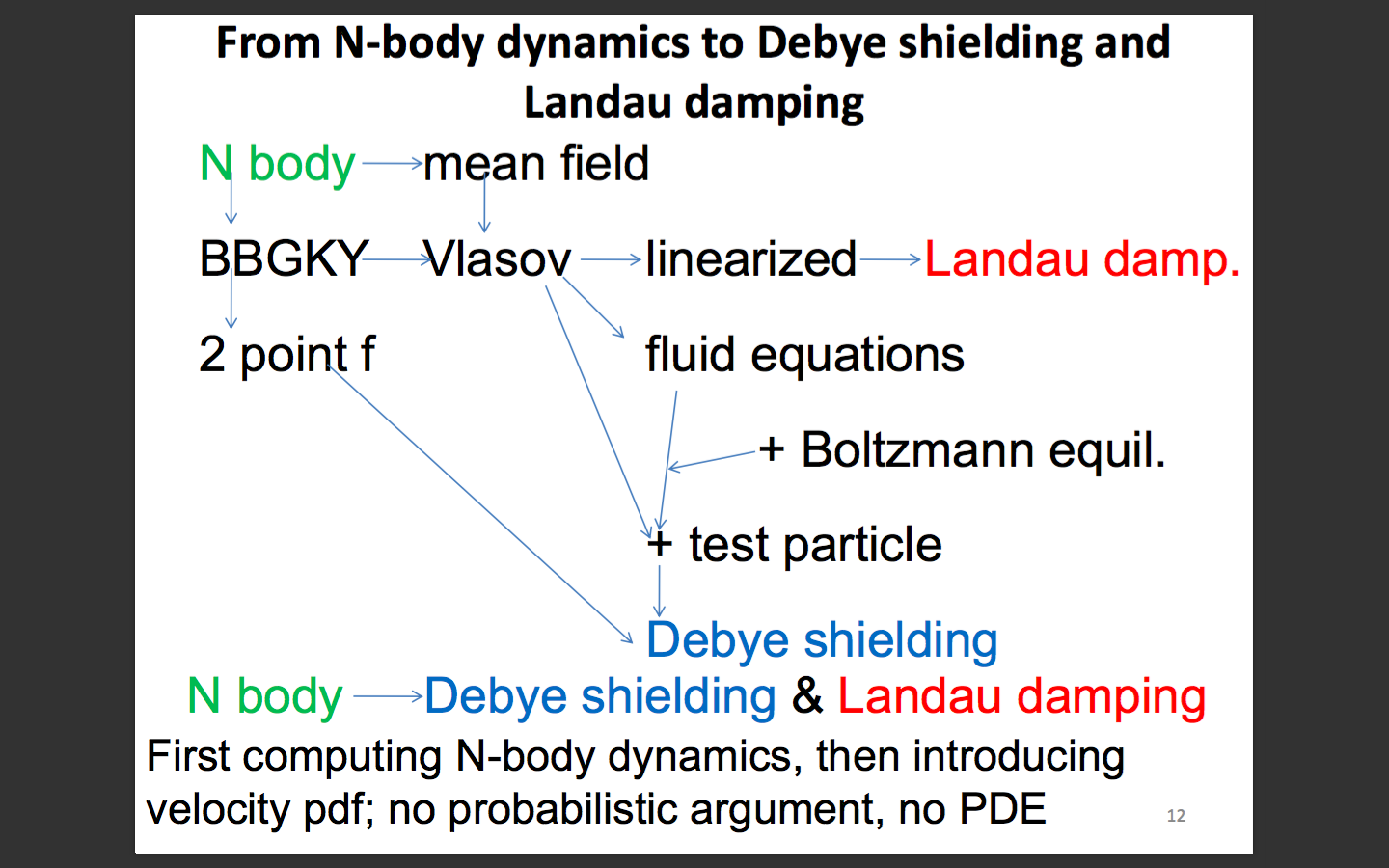}
    \caption{{\small A representation of the different models, approaches and phenomena in plasma physics. {\bf Image copyright:} Dominique Escande: From his presentation at the EPS Conference on Plasma Physics 2015 in Lisbon (reproduced with permission).}}\label{fig:Escande}
\end{figure}

\subsection{Kinetic theory}\label{sec:machine}
To move forward we require a mean-field/statistical formalism that allows for a self-consistent set of evolution equations, involving the quantities that both describe the particles and electromagnetic fields. The electromagnetic fields are governed by Maxwell's equations, and given in free space as 
\begin{eqnarray}
\nabla\cdot\boldsymbol{E}&=&\frac{\sigma}{\epsilon_0},\label{eq:Gaussequation}\\
\nabla\times\boldsymbol{E}&=&-\frac{\partial\boldsymbol{B}}{\partial t},\label{eq:Faradaylaw}\\
\nabla\times\boldsymbol{B}&=&\mu_{0}\boldsymbol{j}+\frac{1}{c^2}\frac{\partial\boldsymbol{E}}{\partial t},\\
\nabla\cdot\boldsymbol{B}&=&0,
\end{eqnarray}
for $\sigma$ and $\boldsymbol{j}$ the charge and current densities respectively (e.g. see \cite{Griffithsbook}). The electric permittivity and magnetic permeability \emph{in vacuo} are given by $\epsilon_{0}$ and $\mu_{0}$ respectively, and they are related by $c^{2}=1/(\mu_{0}\epsilon_{0})$, for $c$ the speed of light in free space. The electric and magnetic fields are defined as derivatives of the electrostatic scalar potential, $\phi$, and the magnetic vector potential, $\boldsymbol{A}$, according to
\begin{eqnarray}
\boldsymbol{E}&=&-\nabla\phi-\frac{\partial \boldsymbol{A}}{\partial t},\label{eq:Edef}\\
\boldsymbol{B}&=&\nabla\times\boldsymbol{A}.\label{eq:Bdef}
\end{eqnarray}
The potential functions are themselves `sourced' by $\sigma$ and $\boldsymbol{j}$, respectively,
\begin{eqnarray}
\phi (\boldsymbol{x},t)&=&\frac{1}{4\pi\epsilon_{0}}\int_{-\infty}^\infty \int_{-\infty}^\infty\int_{-\infty}^\infty \frac{\sigma (\boldsymbol{x}^\prime,t_r)}{|\boldsymbol{x}-\boldsymbol{x}^\prime|} d^3x^\prime,\label{eq:phisource}\\
\boldsymbol{A}(\boldsymbol{x},t)&=&\frac{\mu_0}{4\pi}\int_{-\infty}^\infty \int_{-\infty}^\infty\int_{-\infty}^\infty \frac{\boldsymbol{j} (\boldsymbol{x}^\prime,t_r)}{|\boldsymbol{x}-\boldsymbol{x}^\prime|}d^3x^\prime,\label{eq:Asource}
\end{eqnarray}
for $t_r=t-|\boldsymbol{x}-\boldsymbol{x}^\prime|/c$ the \emph{retarded time} \citep{Griffithsbook}. The charge and current densities can be calculated by taking \emph{moments} of the \emph{1-particle distribution functions} (DF), $f_s(\boldsymbol{x},\boldsymbol{v};t)$ for particle species $s$ (e.g. see \cite{Krallbook-1973,Schindlerbook}), over velocity space
\begin{eqnarray} 
\sigma(\boldsymbol{x},t)&=&\sum_s q_sn_s=\sum_s q_s\int_{-\infty}^\infty \int_{-\infty}^\infty\int_{-\infty}^\infty f_sd^3v,\label{eq:fmomentzero}\\
\boldsymbol{j}(\boldsymbol{x},t)&=&\sum_s q_sn_s\boldsymbol{V} _s=\sum_sq_s\int_{-\infty}^\infty\int_{-\infty}^\infty\int_{-\infty}^\infty \boldsymbol{v}f_sd^3v,\label{eq:fmomentone}
\end{eqnarray}
with $n_s$ and $\boldsymbol{V} _s$ the number density and bulk velocity of particle species $s$ respectively. Hereafter we us the notation $d^3x$ and $d^3v$ to imply triple integration over all position and velocity space respectively,
\begin{eqnarray}
\int d^3x&:=& \int_{-\infty}^\infty\int_{-\infty}^\infty\int_{-\infty}^\infty d^3x,\nonumber\\
\int d^3v&:= &\int_{-\infty}^\infty\int_{-\infty}^\infty\int_{-\infty}^\infty d^3v,\nonumber
\end{eqnarray}
unless otherwise stated. The DF, $f_{s}$, represents the number density of particles in a microscopic volume of six-dimensional phase-space at a particular time, such that
\begin{eqnarray}
f_s(\boldsymbol{x},\boldsymbol{v};t)d^3xd^3v&=&\# \text{ of particles in volume} \,d^3x  \,\text{centred on} \,\boldsymbol{x}\nonumber\\
&&\text{with velocities in the range}\, (\boldsymbol{v},\, \boldsymbol{v}+d\boldsymbol{v}).\nonumber
\end{eqnarray} 
Note that one can instead use the Klimontovich-Dupree description to exactly describe the particles using Dirac-Delta functions in phase space, but this approach is really only useful for formal considerations \citep{Krallbook-1973}. 

Now we are in a position to imagine the `machine' behind nature's self-consistent evolution of the particles and fields in the plasma, in the following way:
\begin{description}
\item[Statistical description:] $f_s (\boldsymbol{x},\boldsymbol{v},t_r)$ is found by `coarse graining' (or `ensemble averaging') the exact positions and velocities of the particles of species $s$ at time $t_r$ \citep{Krallbook-1973,Fitzpatrickbook} 
\item[Source terms:] $\sigma (\boldsymbol{x},t_r)$ and $\boldsymbol{j}(\boldsymbol{x},t_r)$ are then found by integrating $f_s(\boldsymbol{x},\boldsymbol{v},t_r)$ over velocity space (Equations \ref{eq:fmomentzero} and \ref{eq:fmomentone})
\item[Potentials:] $\phi (\boldsymbol{x},t)$ and $\boldsymbol{A}(\boldsymbol{x},t)$ are found by integrating $\sigma (\boldsymbol{x},t_r)$ and $\boldsymbol{j}(\boldsymbol{x},t_r)$ (Equations \ref{eq:phisource} and \ref{eq:Asource})
\item[Forces:] $\boldsymbol{F}_s(t)$ is found by differentiating the $\phi (\boldsymbol{x},t)$ and $\boldsymbol{A}(\boldsymbol{x},t)$ (Equations \ref{eq:Edef} and \ref{eq:Bdef})
\item[Velocities:] $\boldsymbol{v}(t+\delta t)$ is found by integrating the Lorentz force, $\boldsymbol{F}_s(\boldsymbol{x},t)$, for $\delta t$ some infinitesimal time (Equation \ref{eq:Lorentz})
\item[Positions:]  $\boldsymbol{x}(t+2\delta t)$ is found by integrating $\boldsymbol{v}(t+\delta t)$
\item [Statistical description:] $f_s (\boldsymbol{x},\boldsymbol{v},t+2\delta t)$ is found by ... and so the cycle continues.
\end{description}

To put these ideas on a firm mathematical footing, we need to understand the evolution of $f_s$ in phase space, $(\boldsymbol{x},\boldsymbol{v};t)$. The DF evolves according to an equation typically known as the \emph{Boltzmann equation},
\begin{equation}
\frac{\partial f_s}{\partial t}+\boldsymbol{v}\cdot\frac{\partial f_s}{\partial \boldsymbol{x}}+\frac{q_s}{m_s}\left(\boldsymbol{E}+\boldsymbol{v}\times\boldsymbol{B}\right)\cdot\frac{\partial f_s}{\partial \boldsymbol{v}}=\frac{\partial f_s}{\partial t}\bigg\rvert_c,\label{eq:Boltz}
\end{equation}
with the right-hand side (RHS) of the equation describing the evolution of the DF according to `collisions' (e.g. binary Coulomb collsions, see \cite{Fitzpatrickbook}). Properly, this equation is specifically named after the form of collision operator assumed, e.g. Boltzmann, Fokker-Planck or Lenard-Balescu \citep{Schindlerbook}. If the collision operator chosen is a function of $f_s$ alone, then the Boltzmann equation and Maxwell's equations form a closed set, and the plasma is said to be in a \emph{kinetic regime} \citet{Schindlerbook}. In its general form, the Boltzmann equation can be obtained by integrating the Liouville equation for the N-particle DF in $6N$ dimensional phase-space,
\begin{equation}
\frac{d F_s(\boldsymbol{x}_1,...,\boldsymbol{x}_N,\boldsymbol{v}_1,...,\boldsymbol{v}_N;t)}{dt}=0,\nonumber
\end{equation}
over the positions and velocities of all but one particle \citep{Krallbook-1973} (made possible by the fact that particles of a particular species are identical \citep{Tong}). This also involves some assumptions made about the weak nature of the particle coupling in the plasma, characterised by
\begin{equation}
g=\frac{4\pi}{ 3 \Lambda_p}=\frac{1}{n_e\lambda_D^3}\ll 1,\nonumber
\end{equation}
for the small parameter $g$, i.e. a \emph{weakly coupled plasma} \citep{Schindlerbook,Krallbook-1973}. Here, $\Lambda_p$ is the \emph{plasma parameter}, equal to the number of electrons in the \emph{Debye sphere}, a sphere of radius $\lambda_D$ beyond which charge density inhomogeneities are shielded \citep{Krallbook-1973,Fitzpatrickbook}. The small parameter $g$ is used as the ordering parameter in an infinite hierarchy of statistical equations - the so called \emph{BBGKY hierarchy} - for which closure is achieved by neglecting terms of the desired order in $g^s$ \citep{Krallbook-1973}. The standard collisional framework is achieved by neglecting terms of order $g^2$ and above.

\subsection{Quasineutrality}\label{sec:quasi}
It is a feature common to many weakly coupled plasmas that typical spatial variations, $L$, are much larger than a quantity known as the \emph{Debye radius}, $\lambda_D$,
\begin{equation}
\epsilon=\frac{\lambda_D}{L}\ll 1,\hspace{3mm}\text{s.t.}\hspace{3mm}\lambda_D=\sqrt{\frac{\epsilon_0k_{B}T_e}{ne^2}},\nonumber
\end{equation}
for $k_{B}$ Boltzmann's constant, $T_e$ the electron temperature, and $e$ the fundamental charge. In such a situation the plasma is considered to be \emph{quasineutral} \citep{Schindlerbook}, typically taken to mean that
\begin{equation}
n_i=n_e\iff\sigma=0.
\end{equation}
Note that this is in an asymptotic sense, and formally does not imply that $\nabla\cdot\boldsymbol{E}$ vanishes, see e.g. \citet{Freidbergbook, Schindlerbook, Harrison-2009POP}. To see how this works, first notice that if one normalises Poisson's equation by
\[
\phi=\phi_{0}\tilde{\phi},\hspace{3mm}\nabla=\frac{1}{L}\tilde{\nabla},\hspace{3mm}\sigma=en_0\tilde{\sigma},
\]
for characteristic values $\phi_0$, $L$ and $n_0$ of the scalar potential, length scales and number densities, then one obtains
\[
\epsilon^2\tilde{\nabla}^2\tilde{\phi}=-\tilde{\sigma},
\]
for $\phi_0=k_BT_0/e$, and $\epsilon=\lambda_D/L$. In the quasineutral limit the $\epsilon^2$ parameter is vanishingly small. If one then makes an expansion of small parameters
\[
\tilde{\phi}=\sum_{n=0}^\infty \epsilon^{2n}\tilde{\phi}_n,\hspace{3mm}\tilde{\sigma}=\sum_{n=0}^\infty \epsilon^{2n}\tilde{\sigma}_n,
\]
then one sees that formally, for $\lambda_D/L\ll 1$,
\begin{eqnarray}
\tilde{\sigma}_0&=&0,\nonumber\\
\tilde{\nabla}^2\tilde{\phi}_0&=&-\tilde{\sigma}_1.\nonumber\\
&\vdots&\nonumber
\end{eqnarray}
As such, letting $\sigma=0$ is an approximation to the quasineutral limit, valid to \emph{first order}.

It should also be mentioned that quasineutrality implies that the characteristic frequencies are much less than the (electron) plasma frequency, 
\begin{equation}
\omega_p=\sqrt{\frac{n_ee^2}{\epsilon_0 m_e}}.
\end{equation}
Quoting \citet{Freidbergbook} directly: \emph{``For any low-frequency macroscopic charge separation that tends to develop, the electrons have more than an adequate time to respond, thus creating an electric field which maintains the plasma in local quasineutrality''}. The assumption of quasineutrality is consistent with neglecting the displacement current in Maxwell's equations \citep{Schindlerbook}. These ordering assumptions give the quasineutral `low-frequency/pre-Maxwell' equations that are commonly used in plasma physics
\begin{eqnarray}
\nabla\times\boldsymbol{B}=\mu_{0}\boldsymbol{j},&&\hspace{3mm}\text{Amp\`{e}re's Law}\nonumber\\
\nabla\times\boldsymbol{E}=-\frac{\partial\boldsymbol{B}}{\partial t},&&\hspace{3mm}\text{Faraday's Law}\nonumber\\
\nabla\cdot\boldsymbol{B}=0 &&\hspace{3mm}\text{Solenoidal constraint},\nonumber
\end{eqnarray}
and
\begin{equation}
\Bigg(\nabla\cdot\boldsymbol{E}=\frac{\sigma}{\epsilon_{0}},\hspace{3mm}\text{Gau{\ss}' Law, s.t.}\hspace{3mm}\frac{\epsilon_0\nabla\cdot\boldsymbol{E}}{\sigma}\ll 1\Bigg).\nonumber
\end{equation}
In practice, Gau{\ss}' Law is often not considered as a `core equation' in plasma physics, and is implicitly `replaced' by $\sigma=0$. Faraday's law is also often `reformulated' by eliminating the electric field using some version of Ohm's law (e.g. see \cite{Schindlerbook, KulsrudMHD, Freidbergbook, Krallbook-1973, Fitzpatrickbook}).

\subsection{Fluid Models}
Fluid models are the next step in the hierarchy after kinetic models, and are characterised by variables that depend only on space and time. Hence, the fluid equations are calculated by integrating over velocity space: taking velocity space \emph{moments} of the kinetic equation at hand \citep{Schindlerbook}. This process was laid down in the seminal work of \citet{Braginskii-1965}, giving the collisional transport (or Braginskii) equations
\begin{eqnarray}
\frac{\partial \rho_e}{\partial t}+\rho_e\nabla\cdot\boldsymbol{V}_e=0,\hspace{3mm}&&\text{\emph{Electron mass transport}}\nonumber\\
\rho_e\frac{d\boldsymbol{V}_e}{dt}+\nabla p_e+\nabla\cdot\boldsymbol{\pi}_e-\sigma_e(\boldsymbol{E}+\boldsymbol{V}_e\times\boldsymbol{B})=\boldsymbol{F}_{\text{fr},e},\hspace{3mm}&&\text{\emph{Electron mom. transport}}\nonumber\\
\frac{3}{2}\frac{dp_e}{dt}+\frac{5}{2}p_e\nabla\cdot\boldsymbol{V_e}+\boldsymbol{\pi}_e:\nabla\boldsymbol{V}_e+\nabla\cdot\boldsymbol{q}_e=W_e,\hspace{3mm}&&\text{\emph{Electron energy transport}}\nonumber
\end{eqnarray}
for electrons, and
\begin{eqnarray}
\frac{\partial \rho_i}{\partial t}+\rho_i\nabla\cdot\boldsymbol{V}_i=0,\hspace{3mm}&&\text{\emph{Ion mass transport}}\nonumber\\
\rho_i\frac{d\boldsymbol{V}_i}{dt}+\nabla p_i+\nabla\cdot\boldsymbol{\pi}_i-\sigma_i(\boldsymbol{E}+\boldsymbol{V}_i\times\boldsymbol{B})=-\boldsymbol{F}_{\text{fr},i},\hspace{3mm}&&\text{\emph{Ion mom. transport}}\nonumber\\
\frac{3}{2}\frac{dp_i}{dt}+\frac{5}{2}p_i\nabla\cdot\boldsymbol{V}_i+\boldsymbol{\pi}_i:\nabla\boldsymbol{V}_i+\nabla\cdot\boldsymbol{q}_i=W_e,\hspace{3mm}&&\text{\emph{Ion energy transport}}\nonumber
\end{eqnarray}
for ions, using the notation from \citet{Fitzpatrickbook}. In these equations $\rho_s=m_sn_s$ defines the mass density, $p_s=\frac{1}{3}\text{Tr}(\boldsymbol{P}_s)$ the scalar pressure for species $s$, defined by the trace of the pressure tensor of species $s$
\begin{equation}
P_{ij,s}=m_s\sum_s\int f_sw_{is}w_{js}d^3v\hspace{3mm}\text{s.t.}\hspace{3mm}P_{ij}=\sum_sP_{ij,s},\nonumber
\end{equation}
for $\boldsymbol{w}_s=\boldsymbol{v}-\boldsymbol{V}_s$ the velocity of a particle relative to the bulk flow, and for which
\begin{equation}
\boldsymbol{\pi}_s=\boldsymbol{P}_s-p_s\boldsymbol{I},\nonumber
\end{equation}
is the stress/generalised viscosity tensor. The vector $\boldsymbol{q}_s$,
\begin{equation}
\boldsymbol{q}_s=\frac{m_s}{2}\int w_s^2\boldsymbol{w}_sf_sd^3v,\nonumber
\end{equation}
is the heat flux density. Finally, $\boldsymbol{F}_{\text{fr},s}$ and $W_s$ are found by taking the momentum- and energy- moments of the collision operator (the RHS of the Boltzmann equation), and represent the collisional friction force, and collisional energy change, respectively. 

These are the \emph{two-fluid} equations. They describe the spatio-temporal evolution of the moments of the ion and electron DFs resepctively, and these are coupled by the EM fields. In their current form they are not closed: there are more unknowns than equations \citep{Freidbergbook}. It is not the purpose of this introduction to explore the subtle details of fluid closure, two-fluid, single fluid and magnetohydrodynamic (MHD) theories. For details on these topics see \citet{Schindlerbook, KulsrudMHD, Freidbergbook, Krallbook-1973, Fitzpatrickbook}.

\section{Collisions in plasmas}
\subsection{Collisional plasmas}\label{sec:collisions} 
The collisionality of a plasma species is characterised in time and space by two quantities \citep{Fitzpatrickbook}: the collision rate/frequency, $\nu_s$; and the mean free path $\lambda_{\text{mfp},s}$, such that
\begin{eqnarray}
\nu_s&\approx&\sum_{s^\prime}\nu_{ss^\prime},\nonumber\\
\lambda_{\text{mfp},s}&=&v_{\text{th},s}/\nu_s,\nonumber\\
T_i=T_e&\implies&\nu_e\sim \sqrt{\frac{m_i}{m_e}}\nu_i.\nonumber
\end{eqnarray}
That is to say that the total collision rate for a species is made up of the collision rates with all species (including its own), the mean free path measures the typical distance a particle travels between collisions, and that in the case of an isothermal plasma the collision rate for electrons is much greater than that for ions. The thermal velocity, $v_{\text{th},s}$, gives the energy of random particle motion $E_{\text{random}}=m_sv_{\text{th},s}^2$, such that in thermal equilibrium $k_BT_s=E_{\text{random}}$ \citep{Schindlerbook}. We note here that a collision is classified as a $\ge 90^\circ$ scattering event, and as such a particle may have numerous `small-angle' scattering (i.e. $<90^\circ$) events before a successful `collision' \citep{Fitzpatrickbook}. 

A collision dominated plasma is one for which the mean free path is much smaller than typical plasma length scales, $L$
\begin{equation}
\lambda_{\text{mfp}}\ll L\nonumber ,
\end{equation}
with the opposite limit indicating a collisionless plasma. The collisional frequency typically has magnitude
\begin{equation}
\nu_e\sim\frac{\ln\Lambda_p}{\Lambda_p}\omega_p,\nonumber
\end{equation}
\citep{Fitzpatrickbook} and as such 
\begin{equation}
\nu_e\ll\omega_p\iff\Lambda_p\gg 1\iff g\ll 1.\nonumber
\end{equation}
That is to say that weakly coupled plasmas are those for which collisions are not able to prevent plasma oscillations from regulating charge separation. In the case of a sufficiently collisional plasma characterised by 
\begin{eqnarray}
\frac{1}{\nu_s}\frac{\partial \langle v^kf_s\rangle}{\partial t}&&\ll  \langle v^kf_s\rangle,\nonumber\\
\lambda_{\text{mfp},s}\nabla \langle v^kf_s\rangle&&\ll \langle v^kf_s\rangle,\nonumber\\
\lambda_{\text{mfp},s}e|\boldsymbol{E}|&&\ll k_BT_s\nonumber
\end{eqnarray}
for which $\langle v^kf_s\rangle$ is a $k^{\text{th}}-$order velocity moment of the DF, then the plasma is in a \emph{local thermal equilibrium} (e.g. see \cite{Cowleynotes}), characterised by a temperature $T_s(\boldsymbol{x},t)$, and the DF can be written as a Maxwellian of the form 
\begin{equation}
f_{s}(\boldsymbol{x},\boldsymbol{v};t)=\frac{n_s(\boldsymbol{x};t)}{(2\pi k_BT_s(\boldsymbol{x};t)/m_s)^{3/2}}e^{-m_s(\boldsymbol{v}-\boldsymbol{V}_s(\boldsymbol{x};t))^2/(k_BT_s(\boldsymbol{x};t))},\label{eq:localtherm}
\end{equation}
to lowest order. This DF describes a plasma species with local number density $n_s(\boldsymbol{x},t)$ and local bulk velocity $\boldsymbol{V}_s(\boldsymbol{x},t)$. The DF in Equation (\ref{eq:localtherm}) is clearly not an equilibrium solution, since the number density, bulk flow and temperature explicitly depend on time. Given sufficient time, Boltzmann's \emph{H-Theorem} implies that collisions will always attempt to drive a system towards thermal equilibrium (e.g. see \cite{Grad-1949b, Brush-2003}), defined by a DF of the form
\begin{equation}
f_{s}(\boldsymbol{v})=\frac{n_{s}}{(2\pi k_BT_s/m_s)^{3/2}}e^{-m_s(\boldsymbol{v}-\boldsymbol{V}_s)^2/(k_BT_s)}.\label{eq:therm}
\end{equation}
The DF in Equation (\ref{eq:therm}) is of the same form as that in Equation (\ref{eq:localtherm}), but is now independent of space and time. The temperature is constant and a non-zero bulk flow is permitted.

\subsection{Collisionless plasmas}
The statement that collisionless plasmas are those for which $\lambda_{\text{mfp}}\gg L$ is rather truistic, and not particularly helpful in physical terms. Using the definition of the plasma parameter  \citep{Fitzpatrickbook},
\begin{equation}
\Lambda_p=\frac{4\pi}{n_e^{1/2}}\left(\frac{\sqrt{\epsilon_0T_e}}{e}\right)^3,\nonumber
\end{equation}
we see that the collision frequency behaves like
\begin{equation}
\nu_e\sim\frac{e^4n_e\ln\Lambda_p}{4\pi\epsilon_0^2m^{1/2}T_e^{3/2}}=\frac{e^4}{4\pi\epsilon_0^2m^{1/2}}\frac{n_e}{T_e^{3/2}}\ln\left(\frac{4\pi}{n_e^{1/2}}\left(\frac{\sqrt{\epsilon_0T_e}}{e}\right)^3\right).\nonumber
\end{equation}
Hence, dense and low temperature plasmas are more likely to be collisional, whereas diffuse and high temperature plasmas tend to be collisionless. In such situations, it is reasonable to neglect the RHS of the Boltzmann equation (Equation (\ref{eq:Boltz})), giving the \emph{Vlasov equation} \citep{Vlasov-1968},
\begin{equation}
\frac{\partial f_s}{\partial t}+\boldsymbol{v}\cdot\frac{\partial f_s}{\partial \boldsymbol{x}}+\frac{q_s}{m_s}\left(\boldsymbol{E}+\boldsymbol{v}\times\boldsymbol{B}\right)\cdot\frac{\partial f_s}{\partial \boldsymbol{v}}=0.\label{eq:Vlasov}
\end{equation}
In closed form this equation can be written, using Hamilton's equations \citep{Tong}, as 
\begin{eqnarray}
\frac{df_s}{dt}&=&\frac{\partial f_s}{\partial t}+\frac{\partial f_s}{\partial \boldsymbol{x}}\cdot\frac{ d\boldsymbol{x}}{dt}+\frac{\partial f_s}{\partial \boldsymbol{v}}\cdot\frac{ d\boldsymbol{v}}{dt}=0,\nonumber\\
&=&\frac{\partial f_s}{\partial t}+\frac{\partial f_s}{\partial \boldsymbol{x}}\cdot\frac{ \partial H_s}{\partial\boldsymbol{p}_s}-\frac{\partial f_s}{\partial \boldsymbol{p}_s}\cdot\frac{ \partial H_s}{\partial \boldsymbol{x}}=0,\nonumber\\
&=&\frac{\partial f_s}{\partial t}+\left\{ f_s , H_s\right\}_{PB}=0.
\end{eqnarray}
Here, the Hamiltonian is given by $H_s$, the canonical momenta by $\boldsymbol{p}_s$, and the brackets $\{ \, , \,\}_{PB}$ are Poisson brackets, whose definition can be inferred from above. We can go from using velocity variables in the first line, to momentum variables in the second since $d\boldsymbol{p}_s=m_sd\boldsymbol{v}$. The Vlasov equation essentially states that the DF is conserved along a particle trajectory in phase-space \citep{Schindlerbook}, since the characteristics of the Vlasov equation are the single particle equations of motion,
\begin{eqnarray*}
\frac{d}{dt}\boldsymbol{x}(t)&=&\boldsymbol{v}(t),\\
\frac{d}{dt}\boldsymbol{v}(t)&=&\frac{q_s}{m_s}(\boldsymbol{E}+\boldsymbol{v}\times\boldsymbol{B}).
\end{eqnarray*}
The solutions of this equation are in principle completely reversible in time, and hence entropy conserving \citep{Krallbook-1973}.

\section{Collisionless plasma equilibria}\label{sec:collisionless}
A Vlasov equilibrium is obtained when the DF satisfies
\begin{equation}
\frac{\partial f_s}{\partial t}=0\implies\left\{ H_s , f_s\right\}_{PB}=0.
\end{equation}
This statement does not mean that there are no macroscopic particle flows or currents; density, pressure or temperature gradients; or even heat fluxes, for example. That is to say that the moments of the DF can still have gradients in space. Rather, it is an equilibrium in the sense of a particle distribution. This means that the value of the DF at each individual point in phase-space is independent of time.

It is a standard result in classical mechanics that constants of motion, $C_s(\boldsymbol{x}(t),\boldsymbol{p}(t))$, (that do not depend explicitly on time) are in `involution' with/commute with the Hamiltonian \citep{TongMech},
\begin{equation}
\left\{ H_s , C_s\right\}_{PB}=0.
\end{equation}
Using this result, and the linearity of the Poisson bracket, we see that any function of the constants of motion is a Vlasov equilibrium DF, since
\begin{equation}
\left\{ H_s , f_s(C_{1s},...,C_{ns})\right\}_{PB}=\sum_{j=1}^n \frac{\partial f_s}{\partial C_{js}}\left\{ H_s , C_{js}\right\}_{PB}=\sum_{j=1}^n \frac{\partial f_s}{\partial C_{js}}\, \times \, 0=0.
\end{equation}

 We can also show that the reverse is true, namely that any Vlasov equilibrium DF is a function of the constants of motion. First consider a Vlasov equilibrium DF $f_s(G_1,G_2,...,G_n)$ for arbitrary linearly independent functions $G_j(\boldsymbol{x}(t),\boldsymbol{p}(t))$. Then by linearity of the Poisson Bracket,
\begin{equation}
\left\{ H_s , f_s\right\}_{PB}=\sum_{j=1}^n \frac{\partial f_s}{\partial G_j}\left\{ H_s , G_j\right\}_{PB}.
\end{equation}
This sum must be zero for an equilibrium, and since the $G_j$ are linearly independent, that implies that each of the Poisson brackets must be zero independently. Hence the $G_j$ must be constants of motion and so
\[
\text{\emph{``}$f_s$ \emph{is a Vlasov equilibrium DF}}\iff\text{$f_s$ \emph{is a function of the constants of motion''}}.
\]

It is clear that a Vlasov equilibrium DF also satisfies the time-dependent Vlasov equation itself \citet{Schindlerbook}, since
\begin{equation}
\frac{df_s}{dt}=\frac{\partial f_s}{\partial t}+\left\{ f_s , H_s\right\}_{PB}=0+0.
\end{equation}
Using this fact, one can construct time-dependent solutions for `nonlinear' propagating structures to the Vlasov equation by using a frame transformation \citep{Schamel-1979}. Then one can solve for Vlasov equilibria in the wave frame, e.g. the famous BGK modes \citep{Bernstein-1957} and Schamel's theory \citep{Schamel-1986}, amongst other examples, e.g. see \citet{Abraham-Shrauner-1968, Ng-2005,Vasko-2016,Hutchinson-2017}. 

\subsection{The `forward' and `inverse' approaches}\label{sec:for_inv}
As described above, one can easily construct equilibrium solutions of the Vlasov equation provided that at least one constant of motion has been identified. Any differentiable function of the constants of motion is an equilibrium solution of the Vlasov equation \citep{Schindlerbook}, and is physically meaningful provided all velocity moments exist,
\begin{equation}
\bigg| \int v_1^iv_2^jv_3^k\, f_s\, dv_{1}\, dv_{2}\, dv_{3}\bigg| \,<\,\infty\, \forall \, i, \, j,\, k \in 0,1,2,... \, ,\nonumber
\end{equation}
and the function is non-negative over all phase-space,
\[
f_s(\boldsymbol{x},\boldsymbol{v})\ge 0\, \forall\, \boldsymbol{x},\, \boldsymbol{v}.\nonumber
\]
Whilst such a function may well satisfy these mathematical/microscopic conditions, the next question to ask is of the macroscopic electromagnetic fields that are consistent with such a function. Through Equations (\ref{eq:fmomentzero}) and (\ref{eq:fmomentone}), we see that the distribution of particles in phase-space determines the charge and current densities respectively,  in configuration-space. These charge and current densities are consistent with certain electric and magnetic fields through Maxwell's equations (Equations (\ref{eq:Gaussequation}) - (\ref{eq:Faradaylaw})). Hence, a full understanding of the macroscopic and microscopic physics of a plasma necessitates a self-consistent `solution' of the Vlasov-Maxwell (VM) system.

From these considerations, it should be clear that there are two possible routes to follow, in the absence of a comprehensive self-consistent theory, namely
\begin{itemize}
\item `Inverse': Given some or all of the macroscopic fields $(\phi,\boldsymbol{A})$, can we find a self-consistent DF, $f_s$? (e.g. see discussions in  \cite{Alpers-1969, Channell-1976, Mynick-1979a,Greene-1993, Harrison-2009POP, Belmont-2012, Allanson-2016JPP})
\item `Forward': Given a DF, $f_s$, can we find some set of self-consistent macroscopic fields, $(\phi,\boldsymbol{A})$? (e.g. see discussions in \cite{Grad-1961, Harris-1962, Sestero-1964, Sestero-1965, Lee-1979JGR, Schindlerbook, Kocharovsky-2010, Vasko-2013})
\end{itemize}
The forward approach is the one that is most frequently seen in the literature. This is partly due, mathematically, to the fact that this involves solving differential equations, as opposed to the often less tractable inversion of integral equations in the case of the inverse approach. But also, as argued in Section \ref{sec:collisions}, it is reasonable on physical grounds to assume that - for sufficiently collisional \citep{Cowleynotes} and `not-too-turbulent' plasmas \citep{Alpers-1969} - that the DF is (locally) Maxwellian, and then to proceed with the forwards approach from thereon. 

In the case of collisionless plasmas, there are an infinite class of equilibrium solutions in principle, and hence the forwards approach would have to be predicated on some prior knowledge of the DF. In-situ observations of DFs have only recently become available with spatio-temporal resolution on kinetic scales, for example the NASA Multiscale Magnetospheric (MMS) mission \citep{Hesse-2016}, and the ESA candidate mission: Turbulent Heating ObserveR (THOR) \citep{Vaivads-2016}. 

Due to the ubiquitous nature and reasonable validity of the MHD approach in many environments, and the relative wealth and long history of magnetic field measurements, the equilibrium structures and dynamics of electromagnetic fields are better understood and more often used as the fundamental basis, or object, of plasma physics discussions and theory. Hence, it is of use, and necessity, to consider the inverse approach. 

\begin{figure}
    \centering
        \includegraphics[width=0.7\textwidth]{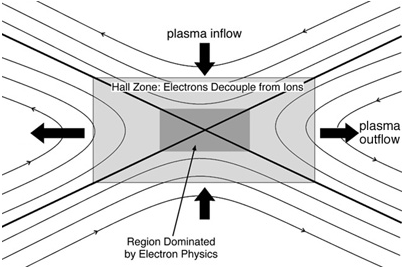}
    \caption{{\small A diagrammatic representation of the local structure of a magnetic reconnection event, and the `electron diffusion region', in which the electrons decouple from the magnetic field. {\bf Image copyright:}  \href{http://mms.space.swri.edu}{NASA MMS-SMART Investigation}, (reproduced with permission).}}\label{fig:edr}
\end{figure}

\subsection{Motivating translationally invariant Vlasov-Maxwell (VM) equilibria}
\subsubsection{Current sheets}\label{sec:currentsheets}\label{sec:HarrisDF}
In a planar geometry, localised electric currents in a plasma are known as current sheets: frequently considered to be the initial state of wave processes \citep{Fruit-2002}, instabilities \citep{Schindlerbook}, reconnection \citep{Yamada-2010} and various dynamical phenomena in laboratory \citep{Beidler-2011}, space \citep{Zelenyi-2011} and astrophysical \citep{DeVore-2015} plasmas. The formation of current sheets is ubiquitous in plasmas. They can form between plasmas of different origins that encounter each other, such as at Earth's magnetopause between the magnetosheath plasmas and magnetospheric plasmas (e.g. see \cite{Dungey-1961, Phan-1996}); or they can develop spontaneously in magnetic fields that are subjected to random external driving (e.g. see \cite{Parker-1994}), such as in the solar corona. 

As to be introduced in Section \ref{sec:reconnection}, localised electric currents are an important ingredient for magnetic reconnection: acting as a signature of sheared magnetic fields, and reconnection electric fields (e.g. see \cite{Biskamp-2000,Hesse-2011}). As per Poynting's theorem \citep{Poynting-1884}, with $\boldsymbol{S}=\mu_0^{-1}\boldsymbol{E}\times\boldsymbol{B}$, and neglecting electric field energy,
\[
\frac{\partial B^2/((2\mu_0)}{\partial t}=-\nabla\cdot\boldsymbol{S}-\boldsymbol{j}\cdot\boldsymbol{E},
\]
intense current sheets are ideal locations for magnetic energy conversion and dissipation \citep{Birn-2010, Zenitani-2011}. The dominant mechanisms that release the free energy include magnetic reconnection, and various plasma instabilities.

The currents themselves are usually considered synonymous with a stressed and/or anti-parallel magnetic field configuration, since in a quasineutral plasma (or a plasma in equilibrium), the current density is given by
\[
\boldsymbol{j}=\frac{1}{\mu_0}\nabla\times\boldsymbol{B}.
\]
Perhaps the most used current sheet equilibrium model is represented in Figure \ref{fig:Harris}: the Harris sheet \citep{Harris-1962},
\begin{eqnarray}
\boldsymbol{B}&=&B_0\left(\tanh\left(\frac{z}{L}\right),0,0\right),\label{eq:harrissheet}\\
\frac{1}{\mu_0}\nabla\times\boldsymbol{B}=\boldsymbol{j}&=&\frac{B_0}{\mu_0L}\left(0,\text{sech}^2\left(\frac{z}{L}\right),0\right),\nonumber\\
\frac{dp}{dz}=-j_yB_x\implies p&=&\frac{B_0^2}{2\mu_0}\text{sech}^2(z/L),\nonumber
\end{eqnarray}
with $L$ the current sheet `width', normalising $z$; $B_0$ the asymptotic values of the magnetic field, normalising $B_x$;  $j_{y0}=B_0/(\mu_0L)$ and $p_0=B_0^2/(2\mu_0)$ normalising the current density and scalar pressure respectively. The maximum shear of $B_x$ is localised in the region $-L<z<L$, and this is where we see the maximum values of the current density: the current sheet itself. A Vlasov equilibrium DF self-consistent with the Harris sheet is given by
\begin{equation}
f_s=\frac{n_{0s}}{(\sqrt{2\pi}v_{\text{th},s})^3}e^{-\beta_s(H_s-u_{ys}p_{ys})}, \label{eq:HarrisDF}
\end{equation}
with $\beta_s=1/(m_sv_{\text{th},s}^2)$; $n_{0s}$ a constant with dimensions of spatial number density (and not necessarily representing the number density itself); and with $u_{ys}$ a bulk flow parameter, that in this case coincides with the bulk flow itself, i.e. $u_{ys}=V_{ys}$. Note that one can derive other equilibrium DFs for the Harris sheet, e.g. the Kappa ($\kappa$) DF \citep{Fu-2005}.

\subsubsection{Harris-type distribution functions (DFs)}\label{sec:Harristype}
If we were to `generalise' the DF in Equation (\ref{eq:HarrisDF}) to one that supports two current density components (and hence a DF self-consistent with a different magnetic field), then we have
\[
f_s=\frac{n_{0s}}{(\sqrt{2\pi}v_{\text{th},s})^3}e^{-\beta_s(H_s-u_{xs}p_{xs}-u_{ys}p_{ys})}.
\]
One particularly nice feature of a DF that is a function of $(H_s-u_{xs}p_{xs}-u_{ys}p_{ys})$,
\[
f_s=f_s(H_s-u_{xs}p_{xs}-u_{ys}p_{ys})
\]
is that the bulk flows are directly related to the flow parameters, i.e. $V_{xs}=u_{xs}$ and $V_{ys}=u_{ys}$. This is seen by the following argument. If we define $\mathcal{H}_s=H_s-\boldsymbol{u}_s\cdot\boldsymbol{p}_s$ for
\[
\boldsymbol{u}_s=(u_{xs},u_{ys},0),\hspace{3mm}\boldsymbol{p}_s=(p_{xs},p_{ys},0),
\]
then $f_s=f_s(\mathcal{H}_s)$ and 
\[
\mathcal{H}_s=\frac{m_s}{2}\boldsymbol{\mathcal{U}}_s^2-\frac{m_s}{2}\boldsymbol{u}_{s}^2-q_s(A_x+A_y)\hspace{3mm}\text{s.t.}\hspace{3mm}\boldsymbol{\mathcal{U}}_s=\boldsymbol{v}-\boldsymbol{u}_s.
\]
If we now consider the first-order moment of $f_s$ by $\boldsymbol{\mathcal{U}}_s$, the result must be zero since $f_s$ only depends on $\boldsymbol{\mathcal{U}}_s^2$, through $\mathcal{H}_s$. Consequently
\[
\int \boldsymbol{\mathcal{U}}_s f_s(\mathcal{H}_s)d^3\mathcal{U}_{s}=0=\underbrace{\int \boldsymbol{v} f_s d^3v }_{n_s\boldsymbol{V}_s}-\underbrace{\boldsymbol{u}_s\int f_s d^3v}_{n_s\boldsymbol{u}_s},
\]
and hence $\boldsymbol{V}_s=\boldsymbol{u}_s=(u_{xs},u_{ys},0)$.

\begin{figure}
    \centering
   
    \begin{subfigure}[b]{0.45\textwidth}
        \includegraphics[width=\textwidth]{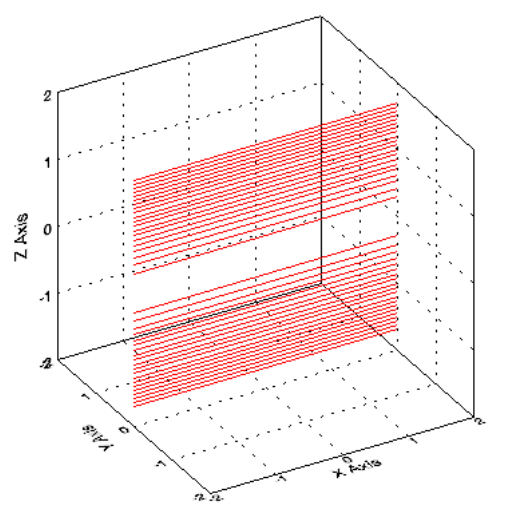}
        \caption{\small The Harris sheet magnetic field}
        \label{fig:Harris1}
    \end{subfigure}
    \begin{subfigure}[b]{0.35\textwidth}
        \includegraphics[width=\textwidth]{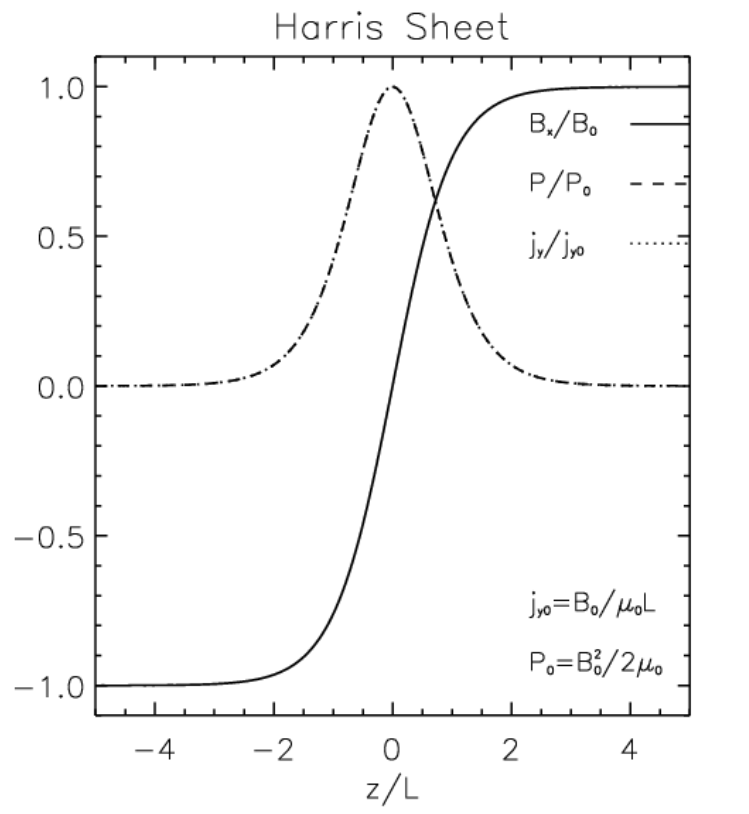}
        \caption{\small The Harris sheet equilibrium magnetic field, current density and scalar pressure. }
        \label{fig:Harris2}
    \end{subfigure}
      
    \caption{ {\small Figure \ref{fig:Harris1} represents the magnetic field lines for the Harris sheet magnetic field. Figure \ref{fig:Harris2} shows the normalised $B_x, j_y$, and scalar pressure $p$ for the Harris sheet equilibrium characterised by $j_y=dB_x/dz$, and $dp/dz=-j_yB_x$.  {\bf Image's copyright:} M.G. Harrison's PhD thesis \citep{Harrison-thesis}, (reproduced with permission).}  }\label{fig:Harris}
\end{figure}

\subsubsection{Other applications}
Current sheets are by no means the only application of the work on translationally invariant VM equilibria in this thesis. As indicated in Section \ref{sec:lit}, translationally invariant VM equilibria are of use for numerous other applications in plasma physics. Examples include nonlinear waves (e.g. see \cite{Bernstein-1957, Ng-2012}); electron holes, ion holes and double layers (e.g. see \cite{Schamel-1986}); and colllisionless shock fronts (e.g. see \cite{Montgomery-1969, Burgessbook}).

\subsection{Magnetic Reconnection}\label{sec:reconnection}
Magnetic reconnection is a ubiquitous phenomenon in solar, space, astrophysical and laboratory plasmas, and now considered to be \emph{``among the most fundamental unifying concepts in astrophysics, comparable in scope and importance to the role of natural selection in biology.''} \citep{Moore-2015}: see authoritative discussions of `classical' reconnection in \citet{Schindlerbook, Priest-2000, Biskamp-2000, Hesse-2011}; on modern theories of `fast' reconnection and `turbulent/stochastic reconnection' in \citet{Lazarianbook, Loureiro-2016}; and `fractal reconnection' in \citet{Shibata-2001}. The literature on the topic is vast and there are many complex concepts to consider regarding the precise mathematical definition (e.g. see \cite{Hesse-1988,Priest-2014}) of reconnection and its physical behaviour in different dimensions and plasma environments. The phenomenon also appears in physical environments as numerous as the number of plasma environments themselves, e.g. solar corona, planetary and pulsar magnetospheres, magnetic dynamos, gamma-ray bursts, geomagnetic storms and sawtooth crashes in tokamaks. However, there are common features that are agreed upon:
\begin{description}
\item[Topology:] There is a change in the topology of the magnetic field, caused by processes in non-ideal ($\boldsymbol{E}+\boldsymbol{V}\times\boldsymbol{B}\ne\boldsymbol{0}$) regions of plasma with strong localised electric currents and parallel electric fields. \item[Diffusion region:] This region is termed the diffusion region (e.g. see \cite{Hesse-2001, Schindlerbook, Hesse-2011}), and is represented locally, and in an idealised geometry in Figure \ref{fig:edr}. 
\item[Decoupling:] Ideal MHD breaks down within the diffusion region, kinetic physics is dominant, and the plasma decouples from the magnetic field, enabling stored magnetic energy to be released to the physical medium.
\end{description}
Hence, magnetic reconnection explicitly couples (via the transmission of energy) the macroscopic ideal MHD picture of relatively slow-evolving and large scale neutral, conducting fluids to the small-scale, short-timescale and non-neutral kinetic plasma physics. Reconnection can of course occur in many different ways. It could occur in one of following ways
\begin{description}
\item[Incidental:] One physical phenomenon out of many (and not necessarily dominant), occurring in a dynamical plasma, e.g. small scale reconnection in a turbulent plasma (e.g. \cite{Lazarian-1999});
\item[Steady-state:] A continuous reconnection phenomenon that generates kinetic energy with no significant macroscopic structural changes, e.g. the Sweet-Parker \citep{Parker-1957, Sweet-1958} and Petschek models \citep{Petschek-1964};
\item[Instability:] The result of an instability, i.e. the system was perturbed from equilibrium, reconnection was initiated, and the system does not return to the initial equilibrium, e.g. the tearing mode instability (e.g. see \cite{Furth-1963, Drake-1977}). 
\end{description}

\subsubsection{Approximate equilibria in particle-in-cell (PIC) simulations}\label{sec:flowshift}
Magnetic reconnection processes can critically depend on a variety of length and time scales, for example on lengths of the order of the Larmor orbits and below that of the mean free path (e.g. see \cite{Biskamp-2000, Birn-2007}). In such situations a collisionless kinetic theory could be necessary to capture all of the relevant physics, and as such an understanding of the differences between using MHD, two-fluid, hybrid, Vlasov and other approaches is of paramount importance, for example see \citet{Birn-2001, Birn-2005} for discussions of this problem in the context of one-dimensional (1D) current sheets: the `Geospace Environmnetal Modelling (GEM)' and `Newton' challenges. 

In the absence of an exact collisionless kinetic equilibrium solution, one has to use non-equilibrium DFs to start kinetic simulations, without knowing how far from the true equilibrium DF they are. In such cases, non-equilibrium drifting Maxwellian distributions are frequently used (see \cite{Swisdak-2003, Hesse-2005, Pritchett-2008, Malakit-2010, Aunai-2013, Hesse-2013, Guo-2014, Hesse-2014, Liu-2016} for examples), 
\begin{equation}
f_{\text{Maxw},s}=\frac{n_s(\boldsymbol{x})}{(\sqrt{2\pi}v_{\text{th},s})^3}\exp\left[\frac{\left(\boldsymbol{v}-\boldsymbol{V}_s(\boldsymbol{x})\right)^2}{2v_{\text{th},s}^2}\right], \label{eq:Mshift}
\end{equation}   
with $v_{\text{th},s}$ a characteristic value of the thermal velocity, $n_s(\boldsymbol{x})$ the number density, and $\boldsymbol{V}_{s}$ the bulk velocity of species $s$ . These DFs can reproduce the same moments $n_s, \boldsymbol{V}_s$ (and $p=n_sk_BT_s$, typically with $n_i=n_e$) necessary for a fluid equilibrium, maintained by the gradient of a scalar pressure, 
\[
\nabla p=\boldsymbol{j}\times\boldsymbol{B}.
\]
However, the DF, $f_{\text{Maxw,s}}$,  in Equation (\ref{eq:Mshift}) is not an exact solution of the Vlasov equation and hence does not describe a kinetic equilibrium. The macroscopic force balance self-consistent with a quasineutral Vlasov/kinetic equilibrium is maintained by the divergence of a rank-2 pressure tensor, $P_{ij}=P_{ij}(A_x(z),A_y(z))$ (e.g. see \cite{Channell-1976, Mynick-1979a, Schindlerbook}), according to 
\[
\nabla\cdot\boldsymbol{P}=\boldsymbol{j}\times\boldsymbol{B}.
\] 
As explained in \citet{Aunai-2013} on the subject of PIC simulations, the fluid equilibrium characterised by a drifting Maxwellian can evolve to a quasi-steady state \emph{``with an internal structure very different from the prescribed one''}, and as demonstrated in \citet{Pritchett-2008}, undesired electric fields, \emph{``coherent bulk oscillations''}, and other perturbations may form, in nature's attempt to maintain force-balance. Figure \ref{fig:Pritchett} is taken from \citet{Pritchett-2008}, and demonstrates this phenomenon. Each of the panels relates, in principle, to a 1D MHD equilibrium characterised by $dp/dx=j_yB_z$, in which the PIC simulation is intialised with a DF of the form of that in Equation (\ref{eq:Mshift}). Panel (a) demonstrates how the initial condition is self-consistent with a magnetic field profile and number density that are very close to those prescribed by the fluid equilibrium. However, panel (b) shows an electric field that forms due to the non-equilbrium initial state, and panel (c) demonstrates the resultant disparity between the exact/`fluid' current density (black), and that derived from the PIC simulation (red). 

\begin{figure}
    \centering
        \includegraphics[width=0.7\textwidth]{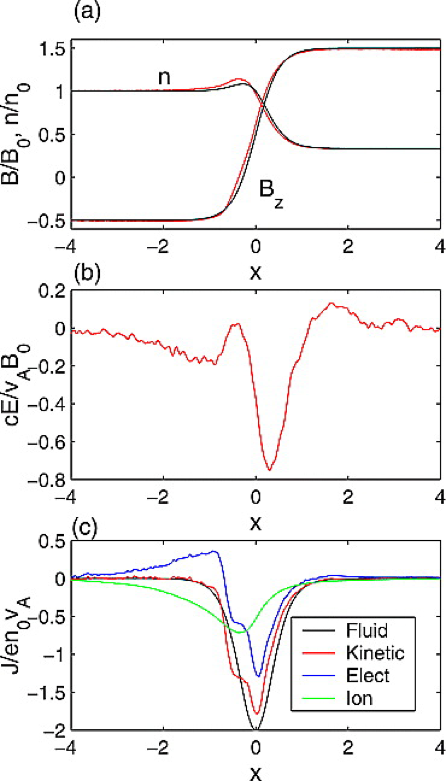}
    \caption{{\small A figure from \citet{Pritchett-2008}. Profiles in $x$ across a 1D current layer: (a) magnetic field $B_z(x)$ and density $n(x)$ from a PIC simulation at `time' 20 (red curves), and from the fluid equilibrium (black curves); (b) electric field $E_x(x)$ from a PIC simulation at `time' 20; (c) current density $J_y(x)$ determined from a PIC simulation at `time' 20 carried by the electrons (blue curve), ions (green curve), and the electrons and ions combined (red curve) and the fluid current density corresponding to the magnetic field (black curve). \\
    {\bf Image copyright:} \href{http://sites.agu.org}{American Geophysical Union} (reproduced with permission).}}    \label{fig:Pritchett}
\end{figure}

The knowledge of exact VM equilibria thus provides the chance to initialise PIC simulations in full confidence, with the intended macroscopic quantities reproduced. Exact VM equilibria would also permit analytical and numerical studies of the linear phase of collisionless instabilities \citep{Gary-2005}, such as the tearing mode (e.g. see \cite{Drake-1977,Quest-1981A}). This sort of exact analysis is formally out of reach without an exact initial condition since - as discussed by e.g. \citet{Pritchett-2008, Aunai-2013} - a non-exact Vlasov solution creates perturbations itself, by virtue of not being an equilibrium.

Of course, one could make an argument on the basis of ordering arguments that a non-exact equilibrium DF such as that in Equation (\ref{eq:Mshift}) allows the study of the nonlinear (and perhaps the linear) phase dynamics of plasma instabilities, such as the tearing mode. This sort of argument would be based on the assumption that a drifting Maxwellian such as that in Equation (\ref{eq:Mshift}) is \emph{sufficiently} close to a VM equilibrium so as not to significantly affect the physical processes. However, it is generally unclear how far such an initial condition is from exact equilibrium.

\subsection{Forward approach for one-dimensional (1D) VM equilibria}
To give context and to demonstrate the contrast, I will briefly introduce the `forward approach' in VM equilibria, as used and discussed in e.g. \citet{Grad-1961, Harris-1962, Sestero-1967, Lee-1979JGR, Schindlerbook}. In these - and other - works, a self-consistent solution to the VM system is found first by specifying the equilibrium DF as a function of the constants of motion. For example, a 1D system with $\partial/\partial x=\partial/\partial y=0$, has the Hamiltonian, and two canonical momenta as the constants of motion,
\begin{eqnarray}
H_s(\phi(\boldsymbol{x}), \boldsymbol{v})=H_s(z, \boldsymbol{v})&=&m_{s}\boldsymbol{v}^2/2+q_{s}\phi (z),\label{eq:hs}\\
p_{xs}(A_x(\boldsymbol{x}), \boldsymbol{v})=p_{xs}(z,v_x)&=&m_{s}v_{x}+q_{s}A_x(z),\label{eq:pxs}\\
p_{ys}(A_y(\boldsymbol{x}), \boldsymbol{v})=p_{ys}(z,x_y)&=&m_{s}v_{y}+q_{s}A_y(z),\label{eq:pys}
\end{eqnarray}
These quantities are constants of motion in the sense that for an individual particle trajectory (the characteristics of the Vlasov equation) parameterised by $t$,
\begin{eqnarray}
\frac{d}{dt}H_s (z(t),\boldsymbol{v}(t))=\frac{d}{dt}p_{xs}(z(t),\boldsymbol{v}(t))=\frac{d}{dt}p_{ys}(z(t),\boldsymbol{v}(t))=0,\nonumber
\end{eqnarray}
where the $d/dt$ is in fact an operator involving derivatives over phase-space,
\[
\frac{d}{dt}=\cancelto{0}{\frac{\partial}{\partial t}}+\frac{dz}{d t}\frac{\partial}{\partial z}+\frac{d\boldsymbol{v}}{d t}\cdot\frac{\partial}{\partial \boldsymbol{v}}.
\]

Using these relationships, it is now clear how one can justify writing the equilibrium DF as a function of the constants of motion
\[
f_s(\boldsymbol{x},\boldsymbol{v})=f_s(z,\boldsymbol{v})=f_s(H_s(z,\boldsymbol{v}),p_{xs}(z,\boldsymbol{v}),p_{ys}(z,\boldsymbol{v})),
\]
and a solution of Vlasov's equation. Note how the second equality above demonstrates that the non-uniqueness of the correspondences,
\begin{eqnarray}
z&=&z(H_s,p_{xs},p_{ys}),\nonumber\\
\boldsymbol{v}&=&\boldsymbol{v}(H_s,p_{xs},p_{ys}),
\end{eqnarray}
could play a role in this problem, see e.g. \citet{Grad-1961, Belmont-2012} for discussions of this problem.

In order to now satisfy the equilibrium Maxwell equations, scalar and vector potentials must be found that satisfy the following,
\begin{eqnarray}
-\epsilon_0 \frac{d^2}{dz^2}\phi (z)=\sigma (\phi(z),A_x(z),A_y(z))&=&\sum_s \int f_s (H_s,p_{xs},p_{ys}) \,d^3v,\nonumber\\
-\frac{1}{\mu_0}\frac{d^2}{dz^2}A_x(z)=j_x (\phi(z),A_x(z),A_y(z))&=&\sum_s \int v_x \,f_s (H_s,p_{xs},p_{ys}) \,d^3v,\nonumber\\
-\frac{1}{\mu_0}\frac{d^2}{dz^2}A_y(z)=j_y (\phi(z),A_x(z),A_y(z))&=&\sum_s \int v_y \,f_s (H_s,p_{xs},p_{ys}) \,d^3v.\nonumber
\end{eqnarray}
Since the RHS of the above equations are in principle now known functions of $(\phi,A_x,A_y)$, the problem of finding a VM equilibrium has been reduced to solving 3 coupled (ordinary) differential equations, subject to boundary conditions, e.g. the asymptotic values of the potentials at $z=\pm \infty$.

\subsubsection{A route through the forward problem} \label{sec:forwardexample}
To demonstrate how the forward problem works, we give an example for a form of DF that could be used,
\[
f_s=\frac{n_{0s}}{(\sqrt{2\pi}v_{\text{th},s})^3}e^{-\beta_sH_s}(a_se^{\beta_su_{xs}p_{xs}}+b_se^{\beta_su_{ys}p_{ys}})
\]
for the constants $a_s$ and $b_s$. This form is chosen as it is directly relatable to those considered in e.g. \citet{Harris-1962, Schindlerbook}, and has properties like that discussed in Section \ref{sec:Harristype}. With this form of DF, the charges and current densities become
\begin{eqnarray}
&&\sigma=-\epsilon_0 \frac{d^2\phi}{dz^2}=\sum_s  q_se^{-q_s\beta_s\phi}\left[n_{as}e^{ q_s\beta_su_{xs}A_x}+n_{bs}e^{q_s\beta_su_{ys}A_y} \right] ,\label{eq:sigmaharris}\\
&&j_x=-\frac{1}{\mu_0}\frac{d^2A_x}{dz^2}=\sum_s   q_s n_{as}u_{xs} e^{ -q_s\beta_s(\phi-u_{xs}A_x)}  ,\label{eq:jxharris}\\
&&j_y=-\frac{1}{\mu_0}\frac{d^2A_y}{dz^2}=\sum_s   q_s n_{bs}u_{ys}e^{-q_s\beta_s(\phi      -u_{ys}A_y)} ,\label{eq:jyharris}
\end{eqnarray}
for $n_{as}=n_{0s}a_s\exp (u_{xs}^2/(2v_{\text{th},s}^2))$ and $n_{bs}=n_{0s}b_s\exp (u_{ys}^2/(2v_{\text{th},s}^2))$. If we now make the assumption of quasineutrality - on the level of $\sigma (\phi,A_x,A_y)=0$ - then from consideration of Equation (\ref{eq:sigmaharris}), we see that one possible solution for $\phi=\phi(A_x,A_y)$ is as
\begin{equation}
\phi(A_x,A_y)=\frac{1}{\beta_e+\beta_i}(\beta_iu_{xi}+\beta_eu_{xe})A_x+\text{const.}=\frac{1}{\beta_e+\beta_i}(\beta_iu_{yi}+\beta_eu_{ye})A_y+\text{const.},\label{eq:phiharris}
\end{equation}
when
\begin{eqnarray}
\beta_iu_{xi}A_x&=&\beta_iu_{yi}A_y,\nonumber\\
\beta_eu_{xe}A_x&=&\beta_eu_{ye}A_y.\nonumber
\end{eqnarray}
Upon substituting Equation (\ref{eq:phiharris}) into Equations (\ref{eq:jxharris}) and (\ref{eq:jyharris}), the problem has now been reduced to solving two second order nonlinear ODEs in $A_x$ and $A_y$, 
\begin{eqnarray}
&&(j_x=)-\frac{1}{\mu_0}\frac{d^2A_x}{dz^2}= j_{x0}e^{\alpha_xA_x},\nonumber\\
&&(j_y=)-\frac{1}{\mu_0}\frac{d^2A_y}{dz^2}= j_{y0}e^{\alpha_yA_y},\nonumber
\end{eqnarray}
for constants $\alpha_x,\alpha_y, j_{x0}$ and $j_{y0}$. For examples/discussions of solutions to ODEs such as these, see \citet{Harris-1962, Schindlerbook, Tassi-2008, Vasko-2013}. Note that Harris treats a problem like this in 1D, but with only one current density component; Schindler treats a 2D problem with only one current density component; Tassi treats a 2D problem in an MHD context and exploiting Lie Point symmetries, but with some 1D solutions; and Vasko also treats the 2D problem with a group theory approach, and only one current density component.

\subsection{Inverse approach for 1D VM equilibria}\label{sec:inverseapproach}
As demonstrated by the above example, the `forward approach' necessarily restricts the choice of electromagnetic fields that one can describe in a VM equilibrium, by the solution of differential equations. The inverse approach bypasses this restriction, since it begins with the prescription of the (electro-)magnetic fields themselves. The counterpoint to this - since the calculation of charge and current densities involves definite integration and hence a loss of information - is that there are in principle an infinite number of possible VM equilibrium DFs for a given macroscopic fluid equilibrium, e.g. see \citet{Wilson-2011} for an explicit demonstration of this feature. 

The inverse approach is used in \citet{Alpers-1969, Channell-1976, Greene-1993, Harrison-2009PRL} to obtain analytical solutions of VM equilibria, and in \citet{Mynick-1979a, Belmont-2012} for numerical ones. All of these works consider 1D Cartesian coordinates, which are very frequently used in the study of waves, instabilities and reconnection (e.g. see \cite{Schindlerbook}). In this work, and without loss of generality, $z$ is taken to be the spatial coordinate on which the system depends, and so $\nabla=(0,0,\partial/\partial z)$. Thus the particle Hamiltonian, $H_s$, and two of the canonical momenta $p_{xs}$ and $p_{ys}$ are conserved, see Equations (\ref{eq:hs} - \ref{eq:pys}).

\subsubsection{Existence of a Vlasov equilibrium}\label{sec:inversekey}
Resembling discussions in e.g. \citet{Bertotti-1963, Channell-1976, Mynick-1979a, Greene-1993, Schindlerbook, Harrison-2009POP}, we now consider the theory that describes macroscopic equilibria in one dimension, given the existence of a Vlasov equilibrium DF. The first velocity moment of the Vlasov equation in Cartesian coordinates
\[
\int_{-\infty}^\infty \boldsymbol{v} \, \left(   \boldsymbol{v}\cdot\frac{\partial f_s}{\partial \boldsymbol{x}}+\frac{q_s}{m_s}(\boldsymbol{E}+\boldsymbol{v}\times\boldsymbol{B})\cdot\frac{\partial f_s}{\partial \boldsymbol{v}}       \right)\,d^3v=\boldsymbol{0},
\]
will, after a little algebra, yield the macroscopic/fluid equation of motion
\[
\nabla\cdot\boldsymbol{P}=\sigma\boldsymbol{E}+\boldsymbol{j}\times\boldsymbol{B}.
\]
In our 1D equilibrium geometry $B_z=j_z=E_x=E_y=0$ automatically, for
\[
\boldsymbol{B}=\nabla\times\boldsymbol{A},\hspace{3mm}
\boldsymbol{E}=-\nabla\phi,
\]
and so this implies that force-balance is maintained by
\begin{eqnarray}
\frac{d}{dz}P_{zx}&=&0,\nonumber \\
\frac{d}{dz}P_{zy}&=&0,\nonumber\\
\frac{d}{dz}P_{zz}&=&\sigma E_z+j_xB_y-j_yB_x.\label{eq:zforcebalance}
\end{eqnarray}
We note here that this type of equilibrium is known as a \emph{tangential equilibrium} (e.g. see \cite{Mottez-2004}), and is characterised by
\[
\boldsymbol{B}\cdot\nabla =0,\hspace{3mm}\boldsymbol{V}_s\cdot\nabla =0,
\]
i.e. the magnetic field and bulk plasma flows are normal to the gradient direction(s). 

If we now consider the dynamic component of the pressure tensor,
\[
P_{zz}=\sum_s m_s \int _{-\infty}^\infty v_z^2\, f_s(H_s(\boldsymbol{v}^2,\phi),p_{xs}(v_x, A_x),p_{ys}(v_y,A_y))\,d^3v,
\]
then we see that $P_{zz}=P_{zz}(\phi,A_x,A_y)$. Note that the pressure tensor is usually found by taking moments by $w_{zs}=v_z-V_{zs}$. But since $f_s$ is only a function of $v_z$ through $H_s$ (which is a function of $v_z^2$), then automatically the $v_z$ moment of $f_s$ is zero, and so the bulk flow $V_{zs}=0$, giving $w_{zs}=v_z$. Using this knowledge of the form of $P_{zz}$ gives
\begin{equation}
\frac{d}{dz}P_{zz}=\frac{d \phi}{dz}\frac{\partial P_{zz}}{\partial \phi}+\frac{dA_x}{dz}\frac{\partial P_{zz}}{\partial A_x}+\frac{dA_y}{dz}\frac{\partial P_{zz}}{\partial A_y},\label{eq:chainrule}
\end{equation}
by the chain rule. A term-by-term comparison of Equation (\ref{eq:zforcebalance}) with Equation (\ref{eq:chainrule}) yields
\begin{eqnarray}
\sigma&=&-\frac{\partial P_{zz}}{\partial\phi},\nonumber\\
j_x&=&\frac{\partial P_{zz}}{\partial A_x},\nonumber\\
j_y&=&\frac{\partial P_{zz}}{\partial A_y},\nonumber
\end{eqnarray}
and so we see that the existence of a Vlasov equilibrium implies the existence of a \emph{potential} function, $P_{zz}$, from which the charge and current densities can be calculated.

The above equations demonstrate that a reasonable first step in an attempt to find a VM equilibrium DF self-consistent with a given set of electromagnetic fields is to first find a  $P_{zz}$ function that is compatible. For example, in the case of a \emph{force-free field} for which $\boldsymbol{j}\times\boldsymbol{B}=\boldsymbol{0}$, there is a simple procedure one can follow to calculate an expression for $P_{zz}(A_x,A_y)$ (for details relevant to force-free fields, see e.g. \citet{Harrison-2009POP} and Chapter \ref{Sheets}).

\subsubsection{Equilibrium DF}
The Vlasov equation can be solved by any differentiable function $f_s(H_s,p_{xs},p_{ys})$, with the additional `physical' constraints being that $f_s$ is also normalisable, non-negative and has velocity moments of arbitrary order \citep{Schindlerbook}. In line with numerous previous works in 1D \citep{Sestero-1967, Alpers-1969, Channell-1976, Harrison-2009PRL, Abraham-Shrauner-2013}, the work in this thesis on VM equilibria in Cartesian coordinates (Chapters \ref{Vlasov}, \ref{Sheets} and \ref{Asymmetric}) shall consider DFs of the form
\begin{equation}
f_s=\frac{n_{0s}}{(\sqrt{2\pi}v_{\text{th},s})^3}e^{-\beta_sH_s}g_s(p_{xs},p_{ys}),\label{eq:F_form}
\end{equation}
for $g_s$ an as yet unknown function, to be determined. This form is chosen for the DF for the following reasons:
\begin{description}
\item[Integrability:] $e^{-\beta_sH_s}$ scales like $e^{-v^2/(2v_{\text{th},s}^2)}$, implying that for a reasonable $g_s$ function, all moments of $f_s$ will be integrable, as necessary,
\item[Solving integrals:] The $e^{-v^2/(2v_{\text{th},s}^2)}$ dependence lends itself to not only being integrable, but to having known definite integrals when multiplied by many functions,
\item[Physical meaning:] As discussed in Section \ref{sec:collisions}, the unique equilibrium DF for a collisional plasma is a Maxwellian. As such it is clear how this Vlasov/collisionless equilibrium DF relates to a collsional equilibrium DF, 
\item[Elegance:] A zero-flow Maxwellian DF is reproduced when $g_{s}=1$.
\end{description}

\subsubsection{Scalar and vector potentials}\label{sec:pseudo}
As demonstrated in Section \ref{sec:forwardexample}, the combination of quasineutrality, \[n_{i}(A_{x},A_{y},\phi)=n_{e}(A_{x},A_{y},\phi),\] and a DF of the form in Equation (\ref{eq:F_form}) results in a scalar potential that is implicitly defined as a function of the vector potential, e.g. \citet{Harrison-2009POP,Schindlerbook,Tasso-2014,Kolotkov-2015}:
\begin{equation}
\phi_{qn}(A_x,A_y)=\frac{1}{e(\beta_e+\beta_i)}\ln (n_i/n_e).\label{eq:quasineutral}
\end{equation} 
In Chapters \ref{Vlasov}, \ref{Sheets} and \ref{Asymmetric}, and as in e.g. \citet{Channell-1976}, parameters will be chosen such that $n_i=n_e$ as functions over $(A_x,A_y)$ space, and so `strict neutrality' is satisfied, implying $\phi_{qn}=0$. This choice of parameters is mathematically equivalent to the condition used to derive the `micro-macroscopic' parameter relationships, which will be discussed later.

It has been commented in e.g. \citet{Grad-1961, Bertotti-1963,Nicholson-1963, Sestero-1966, Mynick-1979a,Attico-1999,Harrison-2009POP,}, that the 1D VM equilibrium problem is analagous to that of a particle moving under the influence of a potential; with the relevant component of the pressure tensor, $P_{zz}$, taking the role of the potential; $(A_x,A_y)$ the role of position and $z$ the role of time. This analogy is demonstrated by
\begin{eqnarray}
\frac{d^2A_x}{dz^2}=-\mu_0\frac{\partial P_{zz}}{\partial A_x},\label{eq:Amp1}\\
\frac{d^2A_y}{dz^2}=-\mu_0\frac{\partial P_{zz}}{\partial A_y}.\label{eq:Amp2}
\end{eqnarray} 
The LHS of the above equations take the role of acceleration, and the RHS take the role of force, as the gradient of a potential. Through this analogy, the task of finding a consistent $P_{zz}$ function - as discussed in Section \ref{sec:inversekey} - can be reformulated as finding a `potential function' $P_{zz}$, such that a `particle trajectory' follows $(A_x(z),A_y(z))$.

\subsubsection{The inverse problem}\label{sec:Channell}
\citet{Channell-1976} developed the theory of the inverse problem in a general sense, with the assumption of zero scalar potential from the offset. It is shown therein that a DF of the form of Equation (\ref{eq:F_form}) implies that the relevant component of the pressure tensor, $P_{zz}$, is a 2-D integral transform of the unknown function $g_{s}$, given by
\begin{eqnarray}
&&P_{zz}(A_x,A_y)=\frac{\beta_{e}+\beta_{i}}{\beta_{e}\beta_{i}}\frac{n_{0s}}{2\pi m_{s}^2v_{\text{th},s}^2}\nonumber\\
&&\times\int_{-\infty}^\infty\int_{-\infty}^\infty \; e^{-\beta_{s}\left((p_{xs}-q_{s}A_x)^2+(p_{ys}-q_{s}A_y)^2\right)/(2m_{s})}g_{s}(p_{xs},p_{ys})dp_{xs}dp_{ys}. \label{eq:Channell}
\end{eqnarray}
This equation together with Equations (\ref{eq:Amp1}) and (\ref{eq:Amp2}) define the inverse problem at hand, viz. `for a given macroscopic equilibrium described by $(A_x(z),A_y(z))$, can we find a self-consistent $P_{zz}(A_x,A_y)$ according to Equations (\ref{eq:Amp1}) and (\ref{eq:Amp2}), and can we then invert the integral transform in Equation (\ref{eq:Channell}) to solve for the unknown function $g_{s}$?' Observe that the LHS of Equation (\ref{eq:Channell}) is species-independent, whereas the RHS seems not to be. In fact, the consistency of this equation for both ions and electrons is one more condition that is implicit in `Channell's method', and is formally compatible with the condition of strict neutrality, $\phi=0$.
 
 \subsubsection{Inversion by Fourier transforms}\label{sec:ftransform}
As written, Equation (\ref{eq:Channell}) is almost exactly a 2D \emph{convolution} of the functions $e^{-(t_1^2+t_2^2)/2}$ and $g(t_1,t_2)$, for a convolution of functions $h_1(t_1,t_2)$ and $h_2(t_1,t_2)$ defined as
\begin{eqnarray}
h_1\star h_2\, (\tau_1,\tau_2)&=&\int_{t_1=-\infty}^{t_1=\infty}\int_{t_2=-\infty}^{t_2=\infty} h_1(\tau_1-t_1,\tau_2-t_2)h_2(t_1,t_2)dt_2dt_1\nonumber ,\\
&=&\int_{t_1=-\infty}^{t_1=\infty}\int_{t_2=-\infty}^{t_2=\infty} h_1(t_1,t_2)h_2(\tau_1-t_1,\tau_2-t_2)dt_2dt_1.\label{eq:cconvolution}
\end{eqnarray}
There is a useful result regarding the Fourier transform,
\[
\text{FT} [h](\omega)=\frac{1}{\sqrt{2\pi}} \int_{-\infty}^\infty e^{-i t \omega} h(t) dt,
\]
of a convolution. The convolution theorem states that
\[
\text{FT} [h_1\star h_2] (\omega_1,\omega_2)= \text{FT} [h_1]  (\omega_1) \, \text{FT} [h_2 ](\omega_2),
\]
\citep{Zayedbook}. That is to say that the Fourier transform of a convolution of functions is the product of the transforms of the individual functions. By making some simple changes of variables, $\boldsymbol{\mathcal{A}}=\boldsymbol{A}/q_s$, Equation (\ref{eq:Channell}) can be manipulated into the form of Equation (\ref{eq:cconvolution}),
\begin{equation}
P_{zz}\left(\frac{A_x}{q_s},\frac{A_y}{q_s}\right)=P_{zz}(\mathcal{A}_{xs},\mathcal{A}_{ys})=\frac{\beta_{e}+\beta_{i}}{\beta_{e}\beta_{i}}\frac{n_{0s}}{2\pi m_{s}^2v_{\text{th},s}^2} e^{-\beta_s(p_{xs}^2+p_{ys}^2)/(2m_s)}\star g_s.\label{eq:pformnew}
\end{equation}
As such, and using the convolution theorem, $g_s$ can - at least formally - be written
\begin{equation} 
g_s(p_{xs},p_{ys})=\frac{\beta_{e}\beta_{i}}{\beta_{e}+\beta_{i}}\frac{2\pi m_{s}^2v_{\text{th},s}^2} {n_{0s}}\text{IFT}\left[    \frac{\text{FT}[P_{{zz}}](\omega_1,\omega_2)    }{\text{FT}\left[   e^{-\beta_s(t_1^2+t_2^2)/(2m_s)}\right] (\omega_1,\omega_2)     }           \right],\label{eq:formalfourier}
\end{equation}
for IFT the inverse Fourier transform,
\[
\text{IFT}[h](t)=\frac{1}{\sqrt{2\pi}}\int_{-\infty}^\infty e^{i t \omega} \text{FT} [h](\omega) d\omega .
\]
Note that the $t_1,t_2,\omega_1,\omega_2$ variables used in Equation (\ref{eq:formalfourier}) are in a sense dummy variables, and do not in fact represent time/frequency in this example, but were used for consistency with the rest of the discussion. For dimensional consistency the conjugate variables to the $p_{xs},p_{ys}$ variables should have dimensions of ``1/momentum''. 

This Fourier transform method is used in \citet{Channell-1976, Harrison-2009PRL} to derive VM equilibrium DFs for 1D macroscopic equilibria, and in a sense this is the most natural method for the problem. At least, one can always formally write down the solution. However, there are two main difficulties:
\begin{description}
\item[Integrability:] Since the Fourier transform of a Gaussian is a Gaussian \citep{Erdelyi}, part of the RHS of Equation (\ref{eq:formalfourier}) is an exponential of a positive quadratic. Formally, the integrability of the RHS places serious restrictions on the nature of $\text{FT}[P_{zz}:(\omega_1,\omega_2)] $, and hence the validity of the method. We note here that despite this formal restriction on the use of the Fourier transform, it is in effect possible to bypass this problem by inspection. For example, in \citet{Neukirch-2009, Abraham-Shrauner-2013} the $g_s$ function is found `by inspection'/using known integrals, that give the same result that the (invalid) Fourier transform method would have.
\item[Integrals:] It may be that certain $P_{zz}$ functions in equation (\ref{eq:pformnew}) have no analytic expression for the Fourier transform, or that the argument of the RHS of Equation (\ref{eq:formalfourier}) has no analytic expression for the inverse Fourier transform.
\end{description}

\subsection{Previous work on VM equilibria}\label{sec:lit}
In this thesis we shall consider theory and examples of exact self-consistent solutions of the VM system for magnetised plasmas, including some non-trivial solutions of Poisson's equation such that the plasma can be either neutral or non-neutral, in Chapter \ref{Cylindrical}. Our focus will be on translationally invariant equilibria in Cartesian geometry in Chapters \ref{Vlasov}, \ref{Sheets} and \ref{Asymmetric}, and on rotationally symmetric equilibria in cylindrical geometry in Chapter \ref{Cylindrical}. These solutions can either describe equilibria of the VM system, such that the one-particle DF for species $s$, $f_s$, satisfies the steady-state Vlasov equation in particle phase space $(\boldsymbol{x},\boldsymbol{v})$,
\begin{equation}
\frac{df_s (\boldsymbol{x},\boldsymbol{v};t)}{dt}=0=\frac{\partial f_s (\boldsymbol{x},\boldsymbol{v};t)}{\partial t},\nonumber
\end{equation}
or as aforementioned in Section \ref{sec:collisionless}, nonlinear wave solutions that satisfy the above equation when Galilei-transformed to the wave frame (e.g. see \cite{Bernstein-1957, Abraham-Shrauner-1968}), by making a transformation
\begin{eqnarray}
\boldsymbol{x}&\to&\boldsymbol{x}-\boldsymbol{u}t,\nonumber\\
\boldsymbol{v}&\to&\boldsymbol{v}-\boldsymbol{u},\nonumber
\end{eqnarray}
for $\boldsymbol{u}$ the phase velocity of the travelling wave. 

Knowledge of exact solutions to the VM system are of value in the study of a wide variety of phenomena in collisionless plasmas, and a comprehensive review and description of all the potential applications is beyond the scope of this thesis. However, we shall survey the theoretical works most relevant to ours, and some applications. Broadly speaking there are three approaches in the literature: on electrostatic and un-magnetised; electrostatic and magnetised; and neutral magnetised plasmas. Of course these `streams' have some overlap, and theoretically the boundary between them is `woolly' by the Lorentz invariance of Maxwell's equations. Specifically, since Galilean frame transformations, $\boldsymbol{u}$ - in the non-relativistic scenario where $u\ll c$ - can send 
\begin{eqnarray}
\boldsymbol{E}^\prime=\boldsymbol{0}\to\boldsymbol{E}=\boldsymbol{u}\times\boldsymbol{B}, \hspace{3mm}\text{or}\label{eq:Gal_E}\\
\boldsymbol{B}^\prime=\boldsymbol{0}\to\boldsymbol{B}=-\frac{1}{c^2}\boldsymbol{u}\times\boldsymbol{E},\label{eq:Gal_B}
\end{eqnarray}
(e.g. see \cite{Griffithsbook,Landaufields}). We interpret Equations (\ref{eq:Gal_E}) and (\ref{eq:Gal_B}) as follows. Consider two coordinate systems: the stationary laboratory, $K$, and one moving at a constant velocity $\boldsymbol{u}$ relative to the laboratory, $K^\prime$. In these two coordinate systems, the electromagnetic fields are denoted without and with primes, respectively. Then Equation (\ref{eq:Gal_E}) says that if in the frame $K^\prime$ the electric field is measured to be $\boldsymbol{E}^\prime=\boldsymbol{0}$, then it measured to be given by $\boldsymbol{u}\times\boldsymbol{B}$ in the $K$ frame. Likewise, Equation (\ref{eq:Gal_B}) says that if in the frame $K^\prime$ the magnetic field is measured to be $\boldsymbol{B}^\prime=\boldsymbol{0}$, then it measured to be given by $c^{-2}\boldsymbol{u}\times\boldsymbol{E}$ in the $K$ frame.

Not only that, but the differences/distinctions between the following frequently assumed states:
\begin{itemize}
\item `strict neutrality' (e.g. see \cite{Grad-1961,Channell-1976}),
\[
\phi=0\implies\sigma=0;\nonumber
\] 
\item quasineutrality, i.e. $\sigma=0$ to first order, as introduced in Section \ref{sec:quasi}), and typically achieved in the literature (e.g. see \cite{Harrison-2009POP}) by
\[
\phi=\phi(\boldsymbol{A}(\boldsymbol{x}))\hspace{3mm}\text{s.t.}\hspace{3mm}\sigma=0;\nonumber
\]
\item non-neutrality (e.g. see \cite{Davidsonbook} for the authoritative text),
\[
\phi=\phi(\boldsymbol{x})\hspace{3mm}\text{s.t.}\hspace{3mm}\sigma\ne 0,\nonumber
\]
\end{itemize}
are subtle (e.g. see \cite{Bertotti-1963, Greene-1993,Schindlerbook}). Given these considerations, we shall make some crude distinctions, and given that the electrostatic literature is relatively self-contained and seemingly the one that gained maturity the quickest, we describe this first.

The seminal work on electrostatic solutions of the VM system in the absence of a magnetic field is that of \citet{Bernstein-1957}, in which an inductive method is developed that calculates the DF of trapped electrons in a nonlinear travelling electrostatic wave (\emph{Bernstein-Greene-Kruskal (BGK) waves}), for a given 1D scalar potential, $\phi$, in the wave frame. This work was developed upon in particular by \citet{Schamel-1971, Schamel-1972JPP} with particular emphasis on the necessary condition of positivity of the DF. Other theoretical works in a 1D geometry include those on ion-acoustic waves (e.g. see \cite{Schamel-1972PPCF}, ion/electron holes and double layers (e.g. see \cite{Schamel-1986,Schamel-2000}), generalisations and extensions of BGK theory (e.g. see \cite{Lewis-1984,Karimov-1999}), and `three-dimensional BGK waves' (e.g. see \cite{Ng-2005,Ng-2006}). One particular application of this theory is the phenomena of collisionless shocks (e.g. see \cite{Burgessbook, Marcowith-2016}), relevant in astrophysical, laboratory, and laboratory astrophysical contexts (e.g. see \cite{Montgomery-1969,Forslund-1970,Forslund-1971, Eliasson-2006JETPL, Spitkovsky-2008APJA,Stockem-2014, Cairns-2014, Svedung-Wettervik-2016}).

There exists a similarly rich literature for magnetised quasi-neutral and non-neutral solutions (the majority of which is quasi-neutral), much of which is collected in the articles by \citet{Roth-1996, Zelenyi-2011, Artemyev-2013}. Perhaps the most ubiquitous work in the context of current sheets is that of \citet{Harris-1962}, in which it is demonstrated that the DF consistent with the 1D Harris current sheet and for a plasma with zero scalar potential can, by using a post-hoc Galilean transformation, also describe a non-neutral configuration (the Harris sheet equilibrium is considered in the relativistic case in \cite{Hoh-1966}). The foundational work in the realm of magnetised and electrostatic collisionless shocks is that of \citet{Sagdeev-1966}, in which analogies are drawn between the equations describing solitary waves, and the motion of a particle in a potential: the Sagdeev potential. General theoretical treatments on quasi-neutral and non-neutral VM equilibria include, for
\begin{itemize}
\item 1D plasmas: \citet{Tonks-1959,Sestero-1964,Sestero-1966,Sestero-1967, Lam-1967, Abraham-Shrauner-1968,Lemaire-1976,Lee-1979JPP,Mitchell-1979, Greene-1993, Mottez-2003, Yoon-2006, Balikhin-2008, Artemyev-2011}, 
\item Two-dimensional (2D) plasmas: \citet{Hewett-1976,Mynick-1979a,Kan-1979PSS,Otto-1984,Muschietti-2000,Schindler-2002,Eliasson-2006PPCF, Suzuki-2008, Kocharovsky-2010, Schindlerbook,Ng-2012, Vasko-2013},
\item With applications to magnetospheres for 1D plasmas: \citet{Davies-1968, Davies-1969, Su-1971, Kan-1979JGRa,Stern-1981JGR,Stern-1981NASA, Rogers-1988, DeVore-2015},
\item With applications to magnetospheres for 2D plasmas: \citet{Kan-1979JGRb,Lee-1979JGR, Birn-2004} .
\end{itemize}

For theoretical treatments that treat the plasma as strictly neutral ($\phi=0$), see \citet{Grad-1961, Hurley-1963, Nicholson-1963, Schmid-Burgk-1965,Moratz-1966, Lerche-1967, Alpers-1969, Channell-1976, Bobrova-1979, Lakhina-1983, Attico-1999, Bobrova-2001, Fu-2005, Yoon-2005, Harrison-2009PRL, Neukirch-2009, Panov-2011, Wilson-2011, Belmont-2012, Janaki-2012, Abraham-Shrauner-2013, Ghosh-2014, Kolotkov-2015,Allanson-2015POP,Allanson-2016JPP}.

We should indicate that there also exists a substantial literature on magnetised neutral and non-neutral VM solutions in cylindrical geometry (for example flux tubes, mono-energetic beams, laboratory pinches and astrophysical jets), with \citet{Davidsonbook,,Vinogradov-2016, Allanson-2016POP} and references therein, as well as Chapter \ref{Cylindrical} providing a suitable starting point for an interested reader.

\section{Thesis motivation and outline}
The importance of understanding the equilibrium states permitted by a given system is common to most physical disciplines, and this is - broadly speaking - the motivation for the work in this thesis. Specifically, I shall consider electromagnetic structures that - by the balance of electromagnetic, inertial, and thermal pressure forces - confine the mass and electric currents in a plasma. These equilibrium configurations will be considered in Cartesian and cylindrical geometries, namely current sheets and flux tubes. There are many potential applications for current sheet and flux tube equilibria, and these shall be discussed in Chapters \ref{Sheets}, \ref{Asymmetric}, and then \ref{Cylindrical} respectively. However, the main/most timely application of the work in this thesis could be to studies of magnetic reconnection, for which localised currents are a pre-condition. 

\subsection{Outline of the thesis}
This thesis is structured as follows:
\begin{itemize}

\item Chapter \ref{Vlasov}: {\bf The use of Hermite polynomials for the inverse problem in one-dimensional Vlasov-Maxwell equilibria}\\
By expressing the unknown functions, $g_{s}$, of the canonical momenta as (infinite) expansions of Hermite polynomials, we establish a one-to-one correspondence between the coefficients of expansion, and those of a Maclaurin expansion of the pressure tensor. We then find a sufficient condition for the convergence of the Hermite representation, contingent on the Maclaurin expansion coefficients of the pressure tensor. For certain classes of DFs, we prove results on the non-negativity of the $g_{s}$ function, and make a conjecture for all other classes. 

\item Chapter \ref{Sheets}: {\bf One-dimensional nonlinear force-free current sheets}\\
Using pressure transformation techniques, we find a new pressure tensor self-consistent with the force-free Harris sheet magnetic field, for any value of the plasma beta, and crucially sub-unity values that could not be accessed before. Then we use the Hermite polynomial expansion technique established in Chapter \ref{Vlasov} to calculate a Vlasov equilibrium DF consistent with the low beta force-free Harris sheet. Next, the Hermite expansion is proven to be analytically convergent, using the sufficient condition derived in Chapter \ref{Vlasov}, and we confirm that the DF satisfies the conjectured condition for non-negativity of the Hermite representation of a DF, also from Chapter \ref{Vlasov}.
 
We conduct a preliminary analysis on the physical properties of the DF, but encounter numerical difficulties for the parameter range of interest when attempting to make plots for $\beta_{pl}<0.85$. In response to this difficulty, we `re-gauge' the vector potential, allowing for numerical convergence of the Hermite expansions for much lower values of the plasma beta, $\beta_{pl}=0.05$. As before, we establish the necessary convergence and non-negativity of the DF, and present new plots.

\item Chapter \ref{Asymmetric}: {\bf One-dimensional asymmetric current sheets}\\
We first consider the mathematical problem for a pressure tensor consistent with an `asymmetric' current sheet equilibrium. Using these results, we present possible examples of pressure tensors self-consistent with asymmetric equilibria, and discuss the inverse problem. It becomes apparent that for certain representations, the problem is not analytically soluble, and numerical techniques are necessary. Using representations for the pressure tensor that give soluble solutions, we present exact analytic VM equilibria for an asymmetric Harris sheet with guide field, and a preliminary analysis

\item Chapter \ref{Cylindrical}: {\bf Neutral and non-neutral flux tube equilibria}\\ 
This is a departure from the previous work on translationally invariant systems. First we consider the problem of constructing one-dimensional VM equilibria in cylindrical geometry, and establish the fluid equation(s) of motion from the Vlasov equation in cylindrical geometry. We include an analysis of the microscopic origin of the macroscopic forces in the resultant equation of motion. 

Next, there is discussion on the attempts to construct VM equilibria for the exact Gold-Hoyle model, a force-free flux tube. These attempts do not yield solutions, and there seems to be good physical reasoning behind the mathematical difficulties. By making a small change to the macroscopic magnetic field, we are able to find a consistent VM equilibrium for the Gold-Hoyle model embedded in a uniform background field. We present a preliminary analysis of the equilibrium, including a consideration of multiple maxima in velocity space, and the non-neutrality of the macroscopic configuration.

\item Chapter \ref{Discussion}: {\bf Discussion}\\
We briefly summarise the main results from this thesis, and place them in the context of current plasma physics research. In particular we focus on open questions and avenues that merit further investigation.


\end{itemize}

\null\newpage

\chapter{The use of Hermite polynomials for the inverse problem in one-dimensional Vlasov-Maxwell equilibria} \label{Vlasov} 

\epigraph{\emph{Boltzmann's is still the most beautiful equation in the world, but Vlasov's isn't too shabby!}}{\textit{C\'{e}dric Villani }}

\noindent Much of the work in this chapter is drawn from \citet{Allanson-2015POP, Allanson-2016JPP}.

\section{Preamble}
In this chapter, the aim is to make a contribution to the theory of exact equilibrium solutions to the Vlasov-Maxwell system, in 1D Cartesian geometry. In particular, we consider a solution method for the inverse problem in collisionless equilibria, namely that of calculating a VM equilibrium for a given macroscopic (fluid) equilibrium. Using Jeans' theorem \citep{Jeans-1915}, the equilibrium DFs are expressed as functions of the constants of motion, in the form of a stationary Maxwellian multiplied by an unknown function of the two conserved canonical momenta. In this case it is possible to reduce the inverse problem to inverting Weierstrass transforms, which we achieve by using expansions over Hermite polynomials. A sufficient condition on the pressure tensor is found which guarantees the convergence of the candidate solution when satisfied, and as a result the existence of velocity moments of all orders. This condition is obtained by elementary means, and it is clear how to put it into practice. We also argue that for a given pressure tensor for which our method applies, there always exists a non-negative DF for a sufficiently magnetised plasma. This argument is in fact proven for certain classes of DFs, and in the form of conjecture for others.

\section{Introduction}\label{vlasovintro}
\subsection{Hermite polynomials in fluid closure}
\[
f=\frac{n(\boldsymbol{x},t)}{(\sqrt{2\pi k_BT(\boldsymbol{x},t)/m} )^3}e^{-\boldsymbol{w}(\boldsymbol{x},t)^2/(2k_BT(\boldsymbol{x},t)/m)}\sum_{n=0}^\infty a^{(n)}(\boldsymbol{x},t)\mathcal{H}^{(n)}(\boldsymbol{w}),
\]
for $\mathcal{H}^{(n)}$ the $n$-dimensional Hermite ``polynomial'', and in fact a rank-$n$ tensor, defined by
\begin{eqnarray}
\mathcal{H}^{(n)}(\boldsymbol{w})&=&\frac{(-)^n}{\mathcal{W}(\boldsymbol{w})}\frac{\partial ^n}{\partial w_{i_1}...\partial w_{i_n}}\mathcal{W}(\boldsymbol{w}),\nonumber\\
\text{s.t.}\hspace{3mm}\mathcal{W}(\boldsymbol{w})&=&\frac{1}{(2\pi)^{3/2}}e^{-w^2/(2k_BT(\boldsymbol{x},t)/m)},
\end{eqnarray}
with each of the $i_n$-indices running over $\{x,y,z\}$. Note that - by the commutativity of partial derivatives - the labelling of the n-dimensional Hermite polynomials is somewhat degenerate, e.g. $H^{(2)}_{xy}=H^{(2)}_{yx}=w_xw_y$.

In this representation $a^{(n)}\mathcal{H}^{(n)}$ is the scalar product of two rank-$n$ tensors, with the $a$ `coefficients' relating directly to the velocity moments of the DF, and as such they neatly `index' the relationship between the particle distributions and certain macroscopic quantities:
\begin{eqnarray}
a^{(0)}&=&1\iff \int f d^3v =n(\boldsymbol{x},t) ,\nonumber\\
\boldsymbol{a}^{(1)}&=&(0,0,0)\iff \int w_i fd^3v=0\nonumber\\
a^{(2)}_{ij}&=&\pi_{ij}/p,\iff \pi_{ij}=P_{ij}(\boldsymbol{x},t)-\delta_{ij}p(\boldsymbol{x},t) \nonumber\\
a^{(3)}_{ijk}&=&S_{ijk}/(pv_{\text{th}})\iff \int w_iw_jw_kfd^3v=S_{ijk} (\boldsymbol{x},t) .\nonumber\\
&\vdots&\nonumber
\end{eqnarray}
for $\delta_{ij}$ the Kronecker delta and $S_{ijk}$ the heat flux tensor. By substituting this expanded form of the DF into Boltzmann's equation (Equation (\ref{eq:Boltz})), multiplying by $\mathcal{H}^{n}(\boldsymbol{w})$ and then integrating over velocity space $d^3v$, Grad obtains an infinite hierarchy of differential equations that describe the spatial-temporal evolution of the $a^{(n)}$ coefficients, and in turn the moments of the DF. By truncating to third order (i.e. up to $S_{ijk}$), Grad then develops the ``13-moment'' equations for the variables $n, \boldsymbol{V}, T, \pi_{ij}$ and $S_i=pv_{\text{th}}a^{(3)}_{ijj}$.

Grad uses Hermite polynomials (or generalisations thereof) in gas kinetic theory because of their orthogonality properties with respect to Gaussian functions, and this is what allows each term of order $n$ in the expansion of the DF to be directly related to $n^{\text{th}}$ order velocity-space moments of the DF. It is for this very reason that Hermite polynomials have a long history in plasma physics.

\subsection{Hermite polynomials in VM plasma theory}
The most typical approach in collisionless and weakly collisional plasma kinetic theory is to use expansions in `scalar' Hermite polynomials, defined by
\begin{eqnarray}
H_n(v)&=&(-1)^ne^{v^2}\frac{d^{n}}{dv^{n}}e^{-v^2},\label{eq:hdef}\\
\int_{-\infty}^\infty H_m(v)H_n(v)e^{-v^2}dv&=&\delta_{mn}2^nn!\sqrt{\pi}.\label{eq:orthogonal}
\end{eqnarray}
Hermite polynomials are a complete orthogonal set of polynomials for $f\in L^2(\mathbb{R},e^{-v^2}dv)$ \citep{Arfkenbook}. That is to say that for any piecewise continuous $f$, such that
\begin{equation}
\int_{-\infty}^\infty |f|^2 e^{-v^2}dv <\infty ,\label{eq:squareintegrable}
\end{equation}
then there exists an (infinite) expansion in Hermite polynomials, $\sum_{n=0}^{\infty} c_n H_n(v)$, such that
\begin{equation}
\lim_{k\to\infty}\int_{-\infty}^\infty\bigg|f-\sum_{n=0}^k c_n H_n(v)\bigg|^2e^{-v^2}dv=0.\label{eq:parity}
\end{equation}
Whereas Equations (\ref{eq:hdef}) and (\ref{eq:squareintegrable}) are the standard definitions relevant to the use of Hermite polynomials, it will be of use in this work to consider the scaled function $H_n(v/(\sqrt{2}v_{\text{th},s}))$, since Maxwellian DFs scale with $e^{-v^2/(2v_{\text{th},s}^2)}$, as opposed to $e^{-v^2/(v_{\text{th},s}^2)}$. This slight modification results in changes to Equations (\ref{eq:hdef}), (\ref{eq:orthogonal}), (\ref{eq:squareintegrable}) and (\ref{eq:parity}), easily achieved by substitution.

\subsubsection{Hermite polynomials in velocity space}\label{sec:phasespace}
As intimated above, expansions in Hermite polynomials are a natural choice for representing the velocity space structure of a DF in equilibrium and near-equilibrium plasmas, be the (near-)equilibrium collisional and hence (near-)thermal; or collisionless, and hence not necessarily (near-)thermal at all. Their suitability is epitomised by Equation (\ref{eq:orthogonal}), and is demonstrated as follows. 

First consider a quite general DF, written explicitly as a function over phase space $(\boldsymbol{x},\boldsymbol{v};t)$, and of the form
\begin{eqnarray}
&&f_s(\boldsymbol{x},\boldsymbol{v},t)=\frac{n_s(\boldsymbol{x},t)}{(\sqrt{2\pi}v_{\text{th},s}(\boldsymbol{x},t) )^3}e^{-v^2/(2(v_{\text{th},s}(\boldsymbol{x},t)^2))}\nonumber\\
&&\times\sum_{ij} a_{ij}(\boldsymbol{x},t)H_i\left(\frac{v_x}{\sqrt{2}v_{\text{th},s}(\boldsymbol{x},t)}\right)H_j\left(\frac{v_y}{\sqrt{2}v_{\text{th},s}(\boldsymbol{x},t)}\right),\label{eq:fvelocity}
\end{eqnarray}
where we define a time and space dependent thermal velocity by $v_{\text{th},s}(\boldsymbol{x},t)=k_BT_s(\boldsymbol{x},t)/m_s$. Expansions such as these are used in \citet{Hewett-1976, Camporeale-2006, Suzuki-2008}, for example. This form of the DF implies that a velocity space moment with respect to the $(i,j)^{\text{th}}$-order Hermite polynomials is directly related to the $(i,j)^{\text{th}}$-order coefficient of expansion,
\[
\int f_s H_i\left(\frac{v_x}{\sqrt{2}v_{\text{th},s}}\right)H_j\left(\frac{v_y}{\sqrt{2}v_{\text{th},s}}\right) d^3v\propto n_s(\boldsymbol{x},t)a_{ijk}(\boldsymbol{x},t).
\]
A DF expanded in Hermite polynomials in the manner of Equation (\ref{eq:fvelocity}) also possesses the feature that `normal' velocity moments yield simple results, since the velocity space moments can be determined using the following definite integral \citep{Gradshteyn},
\begin{equation}
\int_{-\infty}^\infty v^n e^{-v^2}H_n(v)dv=n!\sqrt{\pi}.    \label{eq:Gradshteyn}
\end{equation} 
For example, the charge density and current density are directly related to the $a_{ij}$ coefficients according to
\begin{eqnarray}
\sigma(\boldsymbol{x},t)&\propto &\sum_s q_sn_sa_{00},\nonumber\\
j_x(\boldsymbol{x},t)&\propto &\sum_s q_sn_sv_{\text{th},s}a_{10},\nonumber\\
j_y(\boldsymbol{x},t)&\propto &\sum_s q_sn_sv_{\text{th},s}a_{01}.\nonumber
\end{eqnarray}

\subsubsection{Hermite polynomials in momentum space}
The usefulness of Hermite polynomial expansions is not necessarily restricted to writing them as explicit functions of velocity space. If one considers VM equilibria, then as aforementioned the equilibrium DF is a function of phase space $(\boldsymbol{x},\boldsymbol{v})$ \emph{through} its dependence on the constants of motion. In such circumstances one could write the DF as a stationary Maxwellian multiplied by an expansion in Hermite polynomials in the canonical momenta. For example, in the case of a 1D plasma such that $\nabla=(0,0,\partial/\partial z)$, one could write
\begin{equation}
f_s=\frac{n_{0s}}{(\sqrt{2\pi}v_{\text{th},s} )^3}e^{-\beta_sH_s}\sum_{ij}a_{ij}H_i\left(\frac{p_{xs}}{\sqrt{2}m_sv_{\text{th},s}}\right)H_j\left(\frac{p_{ys}}{\sqrt{2}m_sv_{\text{th},s}}\right),\label{eq:fmomenta}
\end{equation}
(e.g. see \cite{Abraham-Shrauner-1968,Channell-1976} for expansions such as these). Despite the fact that the Maxwellian factor, $e^{-\beta_sH_s}$, is a function of $v^{2}$, and the Hermite polynomials are functions of the momenta, one can still exploit the orthogonality properties of the Hermite polynomials. To see this, we can use the identity mentioned in \citet{Hermite}
\begin{equation}
H_j(x+y)=(H+2x)^j,\hspace{3mm} \text{s.t.}\hspace{3mm}H^j:=H_j(y).\label{eq:Hermite}
\end{equation}
The identity in Equation (\ref{eq:Hermite}), and proven below, is useful since we can associate $X=x+y$ with $p_{js}=m_sv_j+q_sA_j$. This allows us to re-write the DF from Equation (\ref{eq:fmomenta}), and to separate the dependence on velocity and vector potential. Since the vector potential is a function of space ($z$) only, the phase-space variables have also been `separated' allowing us to use results such as those explained in Section (\ref{sec:phasespace}).

We now prove this identity, since it seems fairly non-standard, and the above reference cites personal communication as the source:
\begin{proof}
We first make use of the generating function for Hermite polynomials \citep{Arfkenbook}
\begin{eqnarray}
\exp(2Xt-t^2)&=&\sum_{j=0}^\infty H_j(X)\frac{t^j}{j!}.\label{eq:genfunc}
\end{eqnarray}
By substituting $X=x+y$ into Equation (\ref{eq:genfunc}) we see that
\begin{eqnarray}
\exp(2(x+y)t-t^2)&=&\sum_{j=0}^\infty H_j(x+y)\frac{t^j}{j!},\nonumber\\
&=&\exp(2xt)\sum_{i=0}^\infty H_i(y)\frac{t^i}{i!}.\nonumber
\end{eqnarray}
Then, expanding $\exp(2xt)$ as an infinite series implies that
\begin{equation}
\sum_{i=0}^\infty\sum_{k=0}^\infty\frac{(2xt)^k}{k!}\frac{H_i(y)t^i}{i!}=\sum_{j=0}^\infty H_j(x+y)\frac{t^j}{j!}.
\end{equation}
To isolate the $H_j(x+y)$ term, we now need to pick the terms such that $i+k=j$:
\begin{eqnarray}
 \frac{H_j(x+y)}{j!}&=&\sum_{k=0}^j\frac{(2x)^k}{k!(j-k)!}H_{j-k}(y),\\
\implies H_j(x+y)&=&\sum_{k=0}^j {j \choose k}(2x)^kH_{j-k}(y),\\
\implies H_j(x+y)&=&(H+2x)^j,\hspace{3mm} H^j:=H_j(y) .
\end{eqnarray}
\end{proof}

\subsection{Hermite polynomials for exact VM equilibria}
In the work by \citet{Abraham-Shrauner-1968}, expansions in Hermite polynomials of the canonical momentum are used to solve the VM system for the case of `stationary waves' in a manner like that to be described in this chapter. These correspond not to Vlasov equilibria, but rather to nonlinear waves that are stationary in the wave frame, as discussed in Section \ref{sec:lit}. Abraham-Shrauner considers a 1D plasma with only one component of current density, first in a general sense, and then considers three different magnetic field configurations. \citet{Alpers-1969} also presents a somewhat general discussion on the use of Hermite polynomials for 1D VM equilibria, and proceeds to consider models suitable for the magnetopause, with both one component of the current density, and with two. In the work by \citet{Channell-1976}, two methods are presented for the solution of the inverse problem with neutral VM equilibria, by means of example. These two methods are inversion by Fourier transforms and -- once again -- expansion over Hermite polynomials respectively. Channell uses Hermite polynomials in the canonical momenta, but this time with two components of the current density, for the specific case of a magnetic field that is especially suitable to be considered as a stationary wave solution.

In contrast to \citet{Abraham-Shrauner-1968, Alpers-1969, Channell-1976}, the works by \citet{Hewett-1976,Suzuki-2008} both consider the forwards problem in VM equilibria, and use Hermite polynomial expansions in velocity space, for 1D and 2D plasmas respectively. \citet{Hewett-1976} assume a representation for the DF similar to that in Equation (\ref{eq:fvelocity}) but with only one current density component, and ensure self-consistency with Maxwell's equations numerically, whereas \citet{Suzuki-2008} use an analytical approach, e.g. demonstrating that the Hermite polynomial approach can reproduce known equilibria such as the Harris sheet \citep{Harris-1962}, and the Bennett Pinch \citep{Bennett-1934}.

To give a subset of (modern) examples outside the realm of equilibrium studies \emph{per se}, Hermite polynomial expansions are used by \citet{Daughton-1999} to assess the linear stability of a Harris current sheet; by \citet{Camporeale-2006} also on the linear stability problem, using a truncation method somewhat like that of \cite{Grad-1949b}, and managing to bypass the traditional approach of integrating over the `unperturbed orbits' \citep{Coppi-1966, Drake-1977, Quest-1981A, Daughton-1999}; by \citet{Zocco-2015} on linear collisionless Landau damping  \citep{Landau-1946,Mouhot-2011}; and by \citet{Schekochihin-2016} on the problem of the free-energy associated with velocity-space moments of the DF, in the problem of plasma turbulence. 

\subsubsection{Mathematical criteria}
Since a DF represents a probability (in phase space), it clearly must satisfy the property
\begin{equation}
f_s\ge 0 \, \forall \, \boldsymbol{x},\boldsymbol{v},t ,\label{eq:constraint1}
\end{equation}
and since a DF found using a Hermite polynomial method could in principle include an infinite series of polynomials in momenta/velocity that does not represent a known function in closed form, it is by no means clear if Equation (\ref{eq:constraint1}) will be satisfied. This issue is recognised by \citet{Abraham-Shrauner-1968, Hewett-1976}. Not only is the non-negativity in question, but it is not obvious whether a given expansion in Hermite polynomials even converges, and this question was also raised by \citet{Hewett-1976}. Finally, even if the Hermite expansion converges, it must -when multiplied by the Maxwellian factor - produce a DF for which velocity moments of all orders exist, as discussed in Section \ref{sec:collisions}. In order to have full confidence in the Hermite polynomial method we need to address these issues of non-negativity, convergence, and the existence of moments.

Crucially, none of the above references tackle the questions of non-negativity and convergence of an infinite series of Hermite polynomials in a systematic way, or of the boundedness of the resultant DF. The method presented in this chapter should be seen as a rigorous extension, or generalisation, of the Hermite Polynomial discussed previously by these authors. 

We should mention that the \emph{reverse} questions are well established, i.e. if one \emph{a priori} knows the DF in closed form, or at least if Equation (\ref{eq:squareintegrable}) is satisfied. In such circumstances, one can represent a given non-negative DF as a Maxwellian multiplied by an expansion in Hermite polynomials provided the $g_s$ function grows at a rate below $e^{v^2/(4v_{\text{th},s}^2)}$ \citep{Grad-1949b, Widder-1951}.

The structure of the rest of this chapter is as follows. Section \ref{sec:formal} contains the details of a formal solution to the inverse problem, by using known methods of inverting \emph{Weierstrass transforms} with possibly infinite series of Hermite polynomials. For the formal solution to meaningfully describe a DF however, these series must be convergent, positive and bounded. A sufficient condition for convergence that places a restriction on the pressure tensor is obtained in Section \ref{sec:convergence}. In Section \ref{sec:non-neg} we argue that for an appropriate pressure function, there always exists a positive DF, for a sufficiently magnetised plasma, including proofs for a certain class of function.

\section{Formal solution by Hermite polynomials}\label{sec:formal}
It was demonstrated in Section \ref{sec:inversekey} that the pressure tensor component $P_{zz}$ can be seen as the `key' to solving the inverse problem for VM equilibria. In a 1D $z-$dependent geometry, the inverse problem is encapsulated by Equation (\ref{eq:Channell}), repeated below, 
\begin{eqnarray}
&&P_{zz}(A_x,A_y)=\frac{\beta_{e}+\beta_{i}}{\beta_{e}\beta_{i}}\frac{n_{0s}}{2\pi m_{s}^2v_{\text{th},s}^2}\nonumber\\
&&\times\int_{-\infty}^\infty\int_{-\infty}^\infty \; e^{-\beta_{s}\left((p_{xs}-q_{s}A_x)^2+(p_{ys}-q_{s}A_y)^2\right)/(2m_{s})}g_{s}(p_{xs},p_{ys})dp_{xs}dp_{ys}, \nonumber
\end{eqnarray}
along with Amp\`{e}re's Law and quasineutrality (in this chapter we shall assume strict neutrality),
\begin{eqnarray}
\frac{\partial P_{zz}}{\partial A_x}&=&-\frac{1}{\mu_0}\frac{d^2A_x}{dz^2},\nonumber\\
\frac{\partial P_{zz}}{\partial A_y}&=&-\frac{1}{\mu_0}\frac{d^2A_y}{dz^2},\nonumber\\
\phi &=&0.\nonumber
\end{eqnarray}
The subsequent work in this chapter assumes that such a function, $P_{zz}(A_x,A_y)$, has been found. To make mathematical progress, we shall make the assumption that the $P_{zz}$ function found is of either `summative' or `multiplicative' separability, i.e. that $P_{zz}(A_x,A_y)$ is of the form
\begin{equation}
P_{zz}=\frac{n_0(\beta_e+\beta_i)}{\beta_e\beta_i}\left(\tilde{P}_1(A_x)+\tilde{P}_2(A_y)\right)\;{\rm or}\;P_{zz}=\frac{n_0(\beta_e+\beta_i)}{\beta_e
\beta_i}\tilde{P}_1(A_x)\tilde{P}_2(A_y).\label{eq:pform}
\end{equation}
The constants $n_0, \beta_e$ and $\beta_i$ are present in order to give the correct dimensions to the $P_{zz}$ expression, in a species independent manner, such that the `components' of the pressure, $\tilde{P}_1(A_x)$ and $\tilde{P}_2(A_y)$, are dimensionless. These assumptions are commensurate with 
\begin{equation}
g_{s}=g_{1s}(p_{xs};v_{\text{th},s})+g_{2s}(p_{ys};v_{\text{th},s})\;{\rm or}\;g_{s}=g_{1s}(p_{xs};v_{\text{th},s})g_{2s}(p_{ys};v_{\text{th},s})\label{eq:gsep},
\end{equation}
respectively, and allow separation of variables according to 
\begin{eqnarray}
\tilde{P}_1(A_x)=\frac{1}{\sqrt{2\pi} m_{s}v_{\text{th},s}}\int_{-\infty}^{\infty}\; e^{-\beta_{s}\left(p_{xs}-q_{s}A_x\right)^2/(2m_{s})}g_{1s}(p_{xs};v_{\text{th},s})dp_{xs},\label{eq:p1tog1}\\ 
\tilde{P}_2(A_y)=\frac{1}{\sqrt{2\pi} m_{s}v_{\text{th},s}}\int_{-\infty}^{\infty}\; e^{-\beta_{s}\left(p_{ys}-q_{s}A_y\right)^2/(2m_{s})}g_{2s}(p_{ys};v_{\text{th},s})dp_{ys}.\label{eq:p2tog2}
\end{eqnarray}
The separation constant is set to unity in the case of multiplicative separability, and zero in the case of additive separability, without loss of generality. We have included the parametric dependence on the thermal velocity, $v_{\text{th},s}$, in the $g_s$ functions to highlight the fact that the $g_s$ functions must behave in such a way that the RHS of Equations (\ref{eq:p1tog1}) and (\ref{eq:p2tog2}) must, after integration, be independent of species as discussed in Section \ref{sec:Channell}. This would be impossible if $g_s$ did not depend on $v_{\text{th},s}$.

The components of the pressure are now represented by 1D integral transforms of the unknown parts of the DF, namely Weierstrass transforms.

\subsection{Weierstrass transform}\label{sec:Weierstrass}
The Weierstrass transform, $u(x,t)$ of $u_0(y)$, is defined by
\begin{equation}
u(x,t):=\mathcal{W}\left[u_0\right] (x,t)=\frac{1}{\sqrt{4\pi t}}\int_{-\infty}^\infty \, e^{-(x-y)^2/(4t)}\,u_0(y)\,dy,\label{eq:Weierstrass}
\end{equation}
see \citet{Bilodeau-1962} for example. This is also known as the Gau{\ss} transform, Gau{\ss}-Weiertrass transform and the Hille transform \citep{Widder-1951}. As the Green's function solution to the heat/diffusion equation, 
\begin{eqnarray}
\frac{\partial u}{\partial t}-\frac{\partial^2 u}{\partial x^2} &=&0\nonumber,\\
\text{such that} \hspace{3mm}u(x,0)&=&u_0(x), \, \forall x\in (-\infty,\infty),\nonumber\\
\implies u(x,t)&=&\mathcal{W}\left[u_0\right] (x,t) \nonumber ,
\end{eqnarray}
$u(x,1)$ represents the temperature/density profile of an infinite rod one second after it was $u_0(x)$, see \citet{Widder-1951}. Hence the Weierstrass transform of a positive function is itself a positive function.

\subsection{Two interpretations with respect to our equations }
Give or take some constant factors, Equations (\ref{eq:p1tog1}) and (\ref{eq:p2tog2}) express $\tilde{P}_1$ and $\tilde{P}_2$ as Weierstrass transforms of $g_{1s}$ and $g_{2s}$ respectively. To discuss this problem in generality, the following discussions in this chapter will make regular use of the subscript $j\in\{1,2\}$. This index will indicate the following components for the vector potential and canonical momenta,
\begin{eqnarray}
(A_1,A_2)&:=&(A_x,A_y),\nonumber\\
(p_{1s},p_{2s})&:=&(p_{xs},p_{ys})\nonumber
\end{eqnarray}
Otherwise, the indexing of $P_1, P_2, g_{1s}, g_{2s}$ will remain ``as is''. As such the inverse problem is now characterised by the following equation,
\[
\tilde{P}_{j}(A_j)=\frac{1}{\sqrt{2\pi} m_{s}v_{\text{th},s}}\int_{-\infty}^{\infty} \; e^{-\beta_s(p_{js}-q_{s}A_j)^2/(2m_s)}g_{js}(p_{js};v_{\text{th},s})dp_{js}
\]
To be precise, there are two different interpretations of the equations that could be made here, namely:
\begin{description}
\item[Dimensionality retained and `time' is a variable:]
\begin{equation}
\tilde{P}_{j}(A_j)=:\mathcal{I}_{j}(\mathcal{A}_{js})=\frac{1}{\sqrt{4\pi\varepsilon_s}}\int_{-\infty}^\infty e^{-(p_{js}-\mathcal{A}_{js})^2/(4\varepsilon_s)}g_{js}(p_{js};\varepsilon_s)dp_{js},\label{eq:timeanalogy}
\end{equation}
for $\varepsilon_s=m_s^2v_{\text{th},s}^2/2$ and $\mathcal{A}_{js}=q_{s}A_{j}$. This first interpretation is depicted by Equation (\ref{eq:timeanalogy}) and casts the inverse problem in direct comparison with the Weiertrass transform, making a correspondence between space and time in the heat equation, $(x, t)$, to $(\mathcal{A}_{js},\varepsilon_s)$ in our inverse problem. However, one difference is that the $g_s$ function must - at least parametrically - depend on `time', $\varepsilon_s$, in contrast to the initial condition (i.e. time-independent function) that is part of the integrand in Equation (\ref{eq:Weierstrass}). We know that $g_s$ must depend on a species-dependent parameter, i.e. $\varepsilon_s$, since the result of the integral (the LHS) must be independent of $\varepsilon_s$, in a similar vein to the discussion in Section \ref{sec:Channell}.
\item[Dimensionless variables and `time' is fixed:]
\begin{equation}
\tilde{P}_j\left(\text{sgn}(q_s)\delta_s A_j \right)=:\mathcal{J}_{js}(\tilde{A}_{j};\delta_s)=\frac{1}{\sqrt{2\pi}}\int_{-\infty}^\infty e^{-(\tilde{p}_{js}-\tilde{A}_j)^2/2} \bar{g}_{js}(\tilde{p}_{js};\delta_s) d\tilde{p}_{js},\label{eq:analogy}
\end{equation}
with ${\rm sgn}(q_e)=-1$ and ${\rm sgn}(q_i)=1$, and for 
\begin{eqnarray}
\delta_s&=&\frac{m_sv_{\text{th},s}}{eB_0L},\nonumber\\
\tilde{p}_{js}&=&\frac{p_{js}}{m_sv_{\text{th},s}},\nonumber\\
\tilde{A}_j&=&\frac{A_j}{B_0L}\nonumber\\
\bar{g}_{js}(\tilde{p}_{js};\delta_s)&=&g_{js}(p_{js};v_{\text{th},s})\nonumber .
\end{eqnarray}
The species-dependent magnetisation parameter, $\delta_s$ (e.g. see \cite{Fitzpatrickbook}), is defined by 
\[
\delta_s=\frac{r_{Ls}}{L}=\frac{m_{s}v_{\text{th},s}}{eB_0L}.
\]
It is the ratio of the thermal Larmor radius, $r_{Ls}=v_{\text{th},s}/|\Omega_s|$, to the characteristic length scale of the system, $L$. The gyrofrequency of particle species $s$ is $\Omega_s=q_{s}B_0/m_{s}$. The magnetisation parameter is also known as the fundamental ordering parameter in gyrokinetic theory (see \cite{Howes-2006, Abel-2013} for example). In particle orbit theory, $\delta_s\ll 1$ implies that a guiding centre approximation will be applicable for that species, e.g. see \citet{Northrop-1961} and Section \ref{sec:guiding}.

This second interpretation is depicted by Equation (\ref{eq:analogy}) and once again casts the inverse problem in direct comparison with the Weiertrass transform, making a correspondence between space in the heat equation, $x$, to $\tilde{A}$ in our inverse problem. But in this case the `time' is evaluated at $t=1/2$. Since the LHS of Equation (\ref{eq:analogy}) is now a function of $\delta_s$, we have included the parametric dependence on $\delta_s$ in $\bar{g}_s$. 
\end{description}

\subsubsection{The `backwards heat equation'}
The first interpretation is the one that I believe carries the most meaning for the problem considered in this thesis. Since the integral transform described by Equation (\ref{eq:Channell}) must leave the LHS independent of species-dependent parameters, it makes sense that the transformed function, $g_s$, is not directly analogous to an initial condition. If $g_s$ was an `initial condition' and independent of `time', $\varepsilon_s$, then the outcome of the evolution (transform) would surely give a time-dependent solution, i.e. one that depends on $\varepsilon_s$. But that is not what occurs. The correct analogy is to view the $g_s$ function not as an initial condition, but \emph{as the `heat distribution' $\varepsilon_s$ `seconds' ago, such that when evolved (transformed) forward by $\varepsilon_s$ `seconds', the resultant `heat distribution' is $P_{zz}$}. In that sense, we are considering the heat equation but with a \emph{final condition}, as opposed to an initial condition: the `backwards heat equation'. Similar topics are discussed in the `backwards uniqueness of the heat equation' (see e.g. \cite{Evansbook}).

\subsection{Formal inversion of the Weierstrass transform}
Formally, the operator for the inverse Weierstrass transform is $\; e^{-D^2}$, with D the differential operator and the exponential suitably interpreted, see \citet{Eddington-1913, Widder-1954} for two different interpretations of this operator. 

A second, and perhaps more computationally `practical' method employs Hermite polynomials, see \citet{Bilodeau-1962}. The Weierstrass transform of the $n^{th}$ Hermite polynomial $H_n(y/2)$ at $t=1$ is $x^n$. Hence if one knows the coefficients of the Maclaurin expansion of $u(x,1)$ in Equation (\ref{eq:Weierstrass}), 
\[
u(x,1)=\sum_{j=0}^\infty\eta_jx^j,
\]
then the Weierstrass transform can immediately be inverted to obtain the formal expansion
\begin{equation}
u_0(y)=\sum_{j=0}^\infty\eta_jH_j\left(y/2\right).\label{eq:inverseweier}
\end{equation}
For this method to be useful in our problem, the pressure function must have a Maclaurin expansion that is convergent over all $(A_x,A_y)$ space. Then, its coefficients of expansion must `allow' the Hermite series to converge. 

\subsubsection{Formal inversion of our problem}
The following discussion applies to pressure functions of both summative and multiplicative form, with Maclaurin expansion representations (convergent over all $(A_x,A_y)$ space) given by
\begin{equation}
\tilde{P}_1(A_x)=\sum_{m=0}^\infty a_m\left(\frac{A_x}{B_0L}\right)^m,\;\;\tilde{P}_2(A_y)=\sum_{n=0}^\infty b_n\left(\frac{A_y}{B_0L}\right)^n,\label{eq:Pmacro}
\end{equation}
with $B_0$ and $L$ the characteristic magnetic field strength and spatial scale respectively. In line with the discussion on inversion of the Weierstrass transform in Section \ref{sec:formal}, we solve for $g_{s}$ functions represented by the following expansions
\begin{eqnarray}
g_{1s}(p_{xs};v_{\text{th},s})&=&\displaystyle\sum_{m=0}^\infty C_{ms}H_m\left(\frac{p_{xs}}{\sqrt{2}m_{s}v_{\text{th},s}}\right),\label{eq:gform1}\\
g_{2s}(p_{ys};v_{\text{th},s})&=&\displaystyle\sum_{n=0}^\infty D_{ns}H_n\left(\frac{p_{ys}}{\sqrt{2}m_{s}v_{\text{th},s}}\right),\label{eq:gform2}
\end{eqnarray}
with currently unknown species-dependent coefficients $C_{ms}$ and $D_{ns}$. We cannot simply `read off' the coefficients of expansion as in Equation (\ref{eq:inverseweier}), since our integral equations are not quite in the `perfect form' of Equation (\ref{eq:Weierstrass}). Upon computing the integrals of Equations (\ref{eq:p1tog1}) and (\ref{eq:p2tog2}) with the above expansions for $g_{s}$, we have
\begin{equation}
\tilde{P}_1(A_x)=\displaystyle\sum_{m=0}^\infty \left(\frac{\sqrt{2}q_{s}}{m_{s}v_{\text{th},s}}\right)^{m}C_{ms}\,A_x^{m},\;\;\tilde{P}_2(A_y)=\displaystyle\sum_{n=0}^\infty \left(\frac{\sqrt{2}q_{s}}{m_{s}v_{\text{th},s}}\right)^{n}D_{ns}\,A_y^n.\label{eq:Pmicro}
\end{equation}
This result appears species dependent. However, to ensure self-consistency with quasineutrality ($n_i(A_x,A_y)=n_e(A_x,A_y)$) - as in \citet{Channell-1976, Harrison-2009PRL, Wilson-2011} - we have to fix the pressure function to be species independent. It clearly must also match with the pressure function that maintains equilibrium with the prescribed magnetic field. The conditions to be derived here are critical for making a link between the macroscopic description of the equilibrium structure  with the microscopic one of particles. These requirements imply - by the matching of Equations (\ref{eq:Pmacro}) and (\ref{eq:Pmicro}) - that
\begin{eqnarray}
\left(\frac{\sqrt{2}q_{s}}{m_{s}v_{\text{th},s}}\right)^{m}C_{ms} & = & \left(\frac{1}{B_0L}\right)^{m}a_{m}\implies C_{ms}={\rm sgn}(q_{s})^m\left(\frac{\delta_s}{\sqrt{2}}\right)^{m}a_{m}, \label{eq:ccoeff}\\
\left(\frac{\sqrt{2}q_{s}}{m_{s}v_{\text{th},s}}\right)^{n}D_{ns} & =  &\left(\frac{1}{B_0L}\right)^nb_n\implies D_{ns}={\rm sgn}(q_{s})^n\left(\frac{\delta_s}{\sqrt{2}}\right)^{n}b_{n}\label{eq:dcoeff}.
\end{eqnarray}

\section{Mathematical validity of the method}\label{sec:convergence}

\subsection{Convergence of the Hermite expansion}
Here we find a sufficient condition that, when satisfied, guarantees that the Hermite series representations in (\ref{eq:gform1}) and (\ref{eq:gform2}) converge. This provides some answers to questions on the convergence of Hermite Polynomial representations of Vlasov equilibria dating back to \citet{Hewett-1976}, and implicit in the work of e.g. \citet{Alpers-1969,Channell-1976,Suzuki-2008}.
\begin{theorem}\label{thm:HermiteConvergence}
Consider a Maclaurin expansion of the form
\begin{equation}
\tilde{P}_j(A_{j})=\sum_{m=0}^\infty a_m\left(\frac{A_{j}}{B_0L}\right)^m
\end{equation}
that is convergent for all $A_j$. Then for $\varepsilon_s=m_{s}^2v_{\text{th},s}^2/2$ the function $g_{js}$, calculated in the inverse problem defined by the association
\begin{equation}
\tilde{P}_{j}(A_j):=\tilde{P}_{\text{INT},j}(A_j)=\frac{1}{\sqrt{4\pi\varepsilon_s}}\int_{-\infty}^{\infty} \; e^{-(p_{js}-q_{s}A_j)^2/(4\varepsilon_s)}g_{js}(p_{js};v_{\text{th},s})dp_{js}.\label{eq:sample}
\end{equation}
of the form
\begin{equation}
g_{js}(p_{js};v_{\text{th},s})=\sum_{m=0}^\infty a_m\,{\rm sgn}(q_{s})^m\left(\frac{\delta_s}{\sqrt{2}}\right)^mH_{m}\left(\frac{p_{js}}{\sqrt{2}m_{s}v_{\text{th},s}}\right)\label{eq:gj}
\end{equation}
converges for all $p_{js}$, provided 
\begin{equation}
\lim_{m\to\infty}\sqrt{m}\left|\frac{a_{m+1}}{a_m}\right|<1/\delta_s,
\end{equation}
in the case of a series composed of both even- and odd-order terms, or
\begin{equation}
\lim_{m\to\infty}\,m\,\left|\frac{a_{2m+2}}{a_{2m}}\right|<1/(2\delta_s^2),\hspace{3mm}\lim_{m\to\infty}\,m\,\left|\frac{a_{2m+3}}{a_{2m+1}}\right|<1/(2\delta_s^2),
\end{equation}
in the case of a series composed only of even-, or odd-order terms, respectively.
\end{theorem}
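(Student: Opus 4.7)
The plan is to reduce the question of pointwise convergence of the Hermite series (\ref{eq:gj}) to the ratio test applied to a dominating series of non-negative scalars. Because Hermite polynomials oscillate and grow, one cannot usefully apply the ratio test directly to $\sum_m c_{ms} H_m(y)$ with $c_{ms} = a_m \,\mathrm{sgn}(q_s)^m (\delta_s/\sqrt{2})^m$ and $y = p_{js}/(\sqrt{2}\, m_s v_{\text{th},s})$: consecutive Hermite values can vanish, and the ratio $H_{m+1}(y)/H_m(y)$ has no well-defined large-$m$ limit. Instead, I would majorise each term by a manifestly positive quantity whose asymptotics are clean.

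The key tool is the Cramér-type pointwise bound for the physicists' Hermite polynomials,
\[
|H_m(y)| \;\le\; K\cdot 2^{m/2}\sqrt{m!}\; e^{y^2/2},
\]
valid for all $y\in\mathbb{R}$ and some universal constant $K>0$. Substituting into the $m$-th term of (\ref{eq:gj}) yields the pointwise bound
\[
\left| a_m\,\mathrm{sgn}(q_s)^m (\delta_s/\sqrt{2})^m H_m(y)\right| \;\le\; K\,e^{y^2/2}\,|a_m|\,\delta_s^m\sqrt{m!}.
\]
Hence, at each fixed $p_{js}$, the Hermite series is dominated by $K e^{y^2/2}\sum_{m\ge 0} |a_m|\delta_s^m\sqrt{m!}$, whose convergence implies absolute (and locally uniform in $p_{js}$) convergence of (\ref{eq:gj}).

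I would then apply the ratio test to the dominating series. The ratio of consecutive terms is
\[
\frac{|a_{m+1}|\,\delta_s^{m+1}\sqrt{(m+1)!}}{|a_m|\,\delta_s^m\sqrt{m!}} \;=\; \delta_s\sqrt{m+1}\,\left|\frac{a_{m+1}}{a_m}\right|,
\]
whose limit is $\delta_s\lim_{m\to\infty}\sqrt{m}\,|a_{m+1}/a_m|$. Requiring this to be strictly less than $1$ reproduces exactly the stated condition $\lim\sqrt{m}|a_{m+1}/a_m| < 1/\delta_s$. For the even-only case the ratio is taken between the $(2m+2)$-th and $2m$-th terms, producing a factor $\delta_s^2\sqrt{(2m+2)(2m+1)}\sim 2m\,\delta_s^2$, so convergence requires $\lim m|a_{2m+2}/a_{2m}| < 1/(2\delta_s^2)$; the odd-only case is identical \emph{mutatis mutandis}. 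The hypothesis that the Maclaurin series for $\tilde{P}_j$ converges on all of $\mathbb{R}$ (which via the root test translates into $|a_m|^{1/m}\to 0$, considerably stronger than what we need) guarantees that the cases of interest are not vacuous.

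The principal technical obstacle is establishing the Cramér-type bound with the precise $2^{m/2}\sqrt{m!}$ growth; this is the only feature of the Hermite polynomials we use, but it is crucial, because it is exactly the factor that conspires with the $(\delta_s/\sqrt{2})^m$ weight to yield the clean $\delta_s^m\sqrt{m!}$ majorant. This estimate is classical in the orthogonal-polynomial literature, so I would state it as a lemma (with citation) rather than derive it. A minor subtlety is that the theorem is phrased with $\lim$ rather than $\limsup$; if needed, one can reformulate using $\limsup$ and invoke the $\limsup$ version of the ratio test (or the Cauchy root test) to obtain a slightly stronger sufficient condition without altering the structure of the argument.
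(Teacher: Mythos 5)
Your proposal is correct and follows essentially the same route as the paper's proof: the paper also invokes the Cram\'er-type bound $|H_m(x)| < k\sqrt{m!}\,2^{m/2}e^{x^2/2}$ (with $k=1.086435$), majorises the series by $k\,e^{p_{js}^2/(4m_s^2v_{\text{th},s}^2)}\sum_m |a_m|\delta_s^m\sqrt{m!}$, and applies the ratio test followed by the comparison test to obtain the stated conditions. The even-/odd-only cases are handled identically in both arguments.
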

\begin{proof}
For a series composed of even- and odd-order terms, we have that
\begin{equation}
g_{js}(p_{js};v_{\text{th},s})=\sum_{m=0}^{\infty}a_{m}\,{\rm sgn}(q_{s})^m\left(\frac{\delta_s}{\sqrt{2}}\right)^mH_{m}\left(\frac{p_{js}}{\sqrt{2}m_{s}v_{\text{th},s}}\right)\label{eq:g2}.
\end{equation}
An upper bound on Hermite polynomials (see e.g.  \cite{Sansonebook}) is provided by the identity 
\begin{equation}
| H_{j}(x)|<k\sqrt{j!}2^{j/2}\exp\left(x^2/2\right)\; \mbox{\rm s.t.\ } \; k=1.086435\, .\label{eq:hermbound}
\end{equation}
This upper bound implies that
\[
0<|a_m|\left(\frac{\delta_s}{\sqrt{2}}\right)^m\bigg|H_m\left(\frac{p_{s}}{\sqrt{2}m_{s}v_{\text{th},s}}\right)\bigg|<k|a_m|\delta_s ^m\sqrt{m!}\exp\left(\frac{p_{js}^2}{4m_{s}^2v_{\text{th},s}^2}\right).
\]
Let us now compose a series of the upper bounds,
\[
g_{js,\text{upper}}=k\exp\left(\frac{p_{js}^2}{4m_{s}^2v_{\text{th},s}^2}\right)\sum_{m=0}^\infty |a_m|\delta_s ^m\sqrt{m!}.
\]
By the use of the ratio test \citep{Bartle}, a sufficient condition for convergence of $g_{js,\text{upper}}$ is found by
\begin{eqnarray}
\lim_{m\to\infty}\Bigg|\frac{a_{m+1}}{a_m}\Bigg|\sqrt{m+1}<1/\delta_s,\nonumber\\
\implies \lim_{m\to\infty}\Bigg|\frac{a_{m+1}}{a_m}\Bigg|\sqrt{m}<1/\delta_s, \label{eq:criterion}
\end{eqnarray}
for a given $\delta_s\in (0,\infty)$. If the $a_m$ satisfy the criteria in Equation (\ref{eq:criterion}) then $g_{js,\text{upper}}$ is a convergent series, and hence by the comparison test \citep{Bartle}, 
\[
g_{js,\text{absolute}}=\sum_{m=0}^{\infty}|a_{m}|\left(\frac{\delta_s}{\sqrt{2}}\right)^m\bigg|H_{m}\left(\frac{p_{js}}{\sqrt{2}m_{s}v_{\text{th},s}}\right)\bigg|,
\]
is a convergent series. This then implies that 
\[
\sum_{m=0}^{\infty}a_{m}{\rm sgn}(q_{s})^m\left(\frac{\delta_s}{\sqrt{2}}\right)^mH_{m}\left(\frac{p_{js}}{\sqrt{2}m_{s}v_{\text{th},s}}\right)(=g_{js}(p_{js};v_{\text{th},s}))
\]
is an absolutely convergent series, and in turn a convergent series. We can now confirm that $g_{js}(p_{js};v_{\text{th},s})$ is a convergent series \citep{Bartle}.

An analogous argument holds for those series with only even or odd order terms, with the ratio test giving
\begin{equation}
\lim_{m\to\infty}\Bigg|\frac{a_{2m+2}}{a_{2m}}\Bigg|m<1/(2\delta_s^2), \hspace{2mm}\text{or}\hspace{2mm}\lim_{m\to\infty}\Bigg|\frac{a_{2m+3}}{a_{2m+1}}\Bigg|m<1/(2\delta_s^2), \label{eq:criterion2}
\end{equation}
respectively. By the same argument as above, the comparison test implies that if the condition of (\ref{eq:criterion2}) is satisfied, that since the series composed of upper bounds will converge, so must $g_{js}(p_{js})$.
\end{proof}

\subsubsection{Decay rate of the coefficients}
In order to get a better understanding of the meaning of Theorem \ref{thm:HermiteConvergence}, it is instructive to recapitulate the results in a continuous setting. One could imagine the modulus of the coefficients, $|a_{m}|$, as a subset of the codomain of a continuous function of the independent variable $m$,
\begin{eqnarray}
&&|a_m|, \, m=0,1,2,....\nonumber\\
\to &&a=a(m),\, m\in[0,\infty), \hspace{3mm} \text{s.t.}\hspace{3mm}a(0)=|a_0|, a(1)=|a_1| ... \, .\nonumber
\end{eqnarray}
In this case, we require 
\[
a(m)=\mathcal{O}(a_{\text{u}}(m)),\hspace{3mm}\text{s.t.}\hspace{3mm}a_u(m)=(\delta_s^2 m)^{-m/2},
\]
since the function $a_{\text{u}}$ satisfies the restrictions of Equations (\ref{eq:criterion}) and (\ref{eq:criterion2}), i.e
\begin{eqnarray}
\mathcal{O}\left(\bigg|\frac{a_{\text{u}}(m+1)}{a_{\text{u}}(m)}\bigg|\right)&=&\frac{1}{\delta_s\sqrt{m}},\nonumber\\
\mathcal{O}\left(\bigg|\frac{a_{\text{u}}(2m+2)}{a_{\text{u}}(2m)}\bigg|\right)&=&\frac{1}{2\delta_s^2m},\nonumber\\
\mathcal{O}\left(\frac{a_{\text{u}}(2m+3)}{a_{\text{u}}(2m+1)}\bigg|\right)&=&\frac{1}{2\delta_s^2m}.\nonumber
\end{eqnarray}
Hence the modulus of the coefficients, $|a_m|$ must `fall below' the graph of $(\delta_s^2 m)^{-m/2}$ for large $m$, and depicted in Figure \ref{fig:decay}.
\begin{figure}
    \hspace{1cm}
    \includegraphics[width=0.7\textwidth]{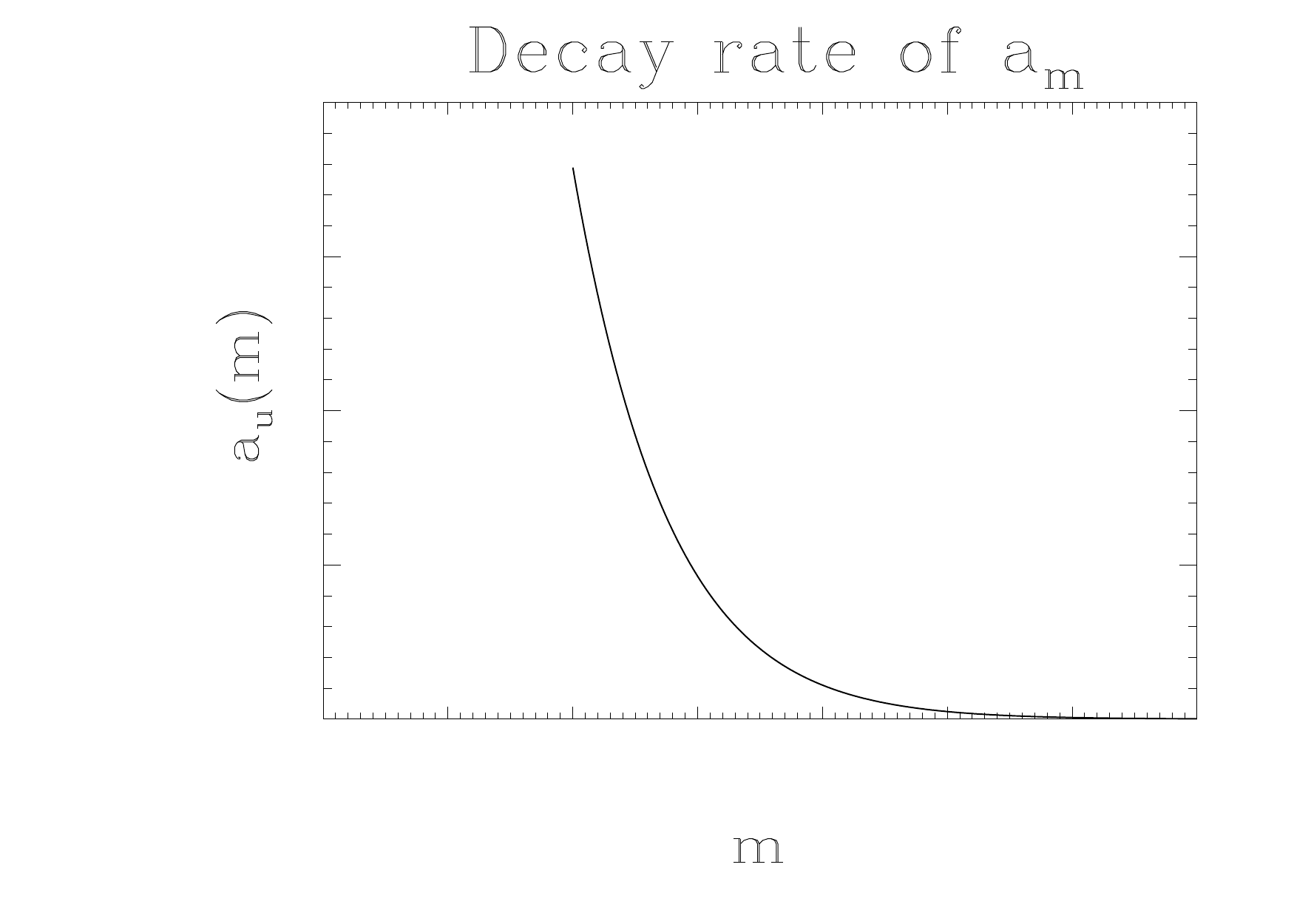}
        \caption{{\small Theorem \ref{thm:HermiteConvergence} states that if the modulus of the coefficients, $|a_m|$, `fall below' the graph of $(\delta_s^2 m)^{-m/2}$ as $m\to\infty$, then the Hermite series of Equation (\ref{eq:gj}) will converge. } }
        \label{fig:decay}
        \end{figure}

\subsubsection{The existence of velocity moments}\label{sec:boundedness}
Once the convergence of the Hermite polynomial is established, then one can begin to consider the boundedness of the DF, and the existence of velocity moments. If $g_{s}(p_{s};v_{\text{th},s})$ is a convergent series, then by using Equation (\ref{eq:hermbound}) we see that
\[
|g_{js}(p_{js};v_{\text{th},s})|<\mathcal{L}_{js}\exp\left(\frac{p_{js}^2}{4m_{s}^2v_{\text{th},s}^2}\right) \hspace{3mm}\forall\, p_{js},
\]
and for $\mathcal{L}_{js}$ a finite, positive constant, independent of space and momentum. By now using the form of the DF from Equation (\ref{eq:F_form}) and the separability conditions of Equation (\ref{eq:gsep}), we see that 
\begin{eqnarray}
&&|f_s|<\exp\left[-(p_{xs}-q_sA_x)^2/(2m_s^2v_{\text{th},s}^2)-(p_{ys}-q_sA_y)^2/(2m_s^2v_{\text{th},s}^2)-v_z^2/(2v_{\text{th},s}^2)\right]\nonumber\\
&&\times\left(\mathcal{L}_{xs}e^{p_{xs}^2/(4m_{s}^2v_{\text{th},s}^2)}+\mathcal{L}_{ys}e^{p_{ys}^2/(4m_{s}^2v_{\text{th},s}^2)}    \right),\nonumber
\end{eqnarray}
in the case of additive separability, or
\begin{eqnarray}
&&|f_s|<\exp\left[-(p_{xs}-q_sA_x)^2/(2m_s^2v_{\text{th},s}^2)-(p_{ys}-q_sA_y)^2/(2m_s^2v_{\text{th},s}^2)-v_z^2/(2v_{\text{th},s}^2)\right]\nonumber\\
&&\times\left(\mathcal{L}_{xs}\mathcal{L}_{ys}e^{p_{xs}^2/(4m_{s}^2v_{\text{th},s}^2)}e^{p_{ys}^2/(4m_{s}^2v_{\text{th},s}^2)}    \right),\nonumber
\end{eqnarray}
in the case of multiplicative separability. In either case, we see that boundedness in momentum space (and hence velocity space) is guaranteed. The reasoning is as follows. Since $p_{js}=m_sv_j+q_sA_j$, the arguments of the exponentials scale like
\begin{equation}
\exp\left(-\frac{v_j^2}{4v_{th,s}^2}\right), \label{eq:dfscale}
\end{equation}
in $v_j$ velocity space. There is also a spatial dependence in the argument of the exponential, through $A_j(z)$, but this does not affect the velocity moment at a given $z$ value. The scaling described by Expression (\ref{eq:dfscale}) not only ensures boundedness, but guarantees that velocity moments of all order exist, since
\[
\bigg|\int_{-\infty}^\infty  v^k e^{-v^2/(4v_{th,s})^2} dv \bigg|\, <\,\infty \, \forall \, k \, \in \, 0,1,2, ...
\]

\subsubsection{Summary}
In this Section we have shown that for a DF of the form 
\begin{equation}
f_s(H_s,p_{xs},p_{ys})=\frac{n_{0s}}{(\sqrt{2\pi}v_{\text{th},s})^3}e^{-\beta_sH_s}g_s(p_{xs},p_{ys};v_{\text{th},s}),\nonumber
\end{equation}
with
\begin{equation}
g_{s}=g_{1s}(p_{xs};v_{\text{th},s})+g_{2s}(p_{ys};v_{\text{th},s})\;{\rm or}\;g_{s}=g_{1s}(p_{xs};v_{\text{th},s})g_{2s}(p_{ys};v_{\text{th},s})\nonumber,
\end{equation}
and 
\begin{eqnarray}
g_{1s}(p_{xs};v_{\text{th},s})&=&\sum_{m=0}^\infty a_m\,{\rm sgn}(q_{s})^m\left(\frac{\delta_s}{\sqrt{2}}\right)^mH_{m}\left(\frac{p_{xs}}{\sqrt{2}m_{s}v_{\text{th},s}}\right)\nonumber,\\
g_{2s}(p_{ys};v_{\text{th},s})&=&\sum_{n=0}^\infty b_n\,{\rm sgn}(q_{s})^n\left(\frac{\delta_s}{\sqrt{2}}\right)^nH_{n}\left(\frac{p_{ys}}{\sqrt{2}m_{s}v_{\text{th},s}}\right)\nonumber,
\end{eqnarray}
the $g_s$ functions are convergent provided the criteria on the growth rates of the coefficients of expansion from Theorem \ref{thm:HermiteConvergence} are satisfied:
\begin{equation}
\lim_{m\to\infty}\sqrt{m}\left|\frac{a_{m+1}}{a_m}\right|<1/\delta_s,\nonumber
\end{equation}
in the case of a series composed of both even- and odd-order terms, or
\begin{equation}
\lim_{m\to\infty}\,m\,\left|\frac{a_{2m+2}}{a_{2m}}\right|<1/(2\delta_s^2),\hspace{3mm}\lim_{m\to\infty}\,m\,\left|\frac{a_{2m+3}}{a_{2m+1}}\right|<1/(2\delta_s^2),\nonumber
\end{equation}
in the case of a series composed only of even-, or odd-order terms, respectively, and this in turn implies that velocity moments of the DF of all order exist.

\section{Non-negativity of the Hermite expansion}\label{sec:non-neg}
In this Section, we consider the non-negativity of the Hermite series representation of $g_{s}$ -- given by Equations (\ref{eq:gform1}) and (\ref{eq:gform2}) -- and hence  positivity of the DF. As such this Section responds to questions on the positivity of DF representation by Hermite polynomials raised by \citet{Abraham-Shrauner-1968, Hewett-1976}, and implicit in the work of e.g. \citet{Alpers-1969, Channell-1976, Suzuki-2008}. 

\subsection{Possible negativity of the Hermite expansion}\label{sec:negativity}
For an example of a $g_{js}$ function that is not necessarily always positive despite the pressure function being positive, consider a pressure function (e.g. from \cite{Channell-1976}) that is quadratic in the vector potential. In our notation, the pressure function considered by Channell is
\[
\tilde{P}=\frac{1}{2}\left(a_0+a_2\left(\frac{A_x}{B_0L}\right)^2\right)+\frac{1}{2}\left(a_0+a_2\left(\frac{A_y}{B_0L}\right)^2\right),
\]
for $a_0,a_2>0$. The resultant $g_{s}$ function is of the form
\[
g_{s}\propto \frac{1}{2}\left[a_0+a_2\left(\frac{\delta_s}{\sqrt{2}}\right)^2H_2\left(\frac{p_{xs}}{\sqrt{2}m_{s}v_{\text{th},s}}\right)\right]+\frac{1}{2}\left[a_0+a_2\left(\frac{\delta_s}{\sqrt{2}}\right)^2H_2\left(\frac{p_{ys}}{\sqrt{2}m_{s}v_{\text{th},s}}\right)\right].
\]
Once these Hermite polynomials are expanded, and by substituting $p_{xs}=p_{ys}=0$, we see that positivity of $g_s$ is -- for given values of $a_0$ and $a_2$ -- contingent on the size of $\delta_s$, 
\begin{eqnarray}
g_s(0,0)=a_0-a_2\delta_{s}^{2},\nonumber\\
\therefore g_s(0,0)\ge 0\implies \delta_s^2\le \frac{a_0}{a_2}.\nonumber
\end{eqnarray}
However, there is not necessarily anything `special' about the origin, as compared to other points in momentum-space. For example, consideration of the pressure function
\[
\tilde{P}_j=\left(a_0+a_2\left(\frac{A_j}{B_0L}\right)^2 +a_4\left(\frac{A_j}{B_0L}\right)^4       \right),
\]
gives a $g_{js}$ function that can, for given values of $a_0,a_2,a_4$ and for $\delta_s$ sufficiently large, be positive at $p_{js}=0$, and negative at some other points. 

It is worth considering how a $g_{js}$ function that is negative for some $p_{js}$ can transform in the manner of (\ref{eq:p1tog1}) and (\ref{eq:p2tog2}) to give a positive $\tilde{P}_j(A_j)$. One might expect that for certain values of $A_j$ such that the Gaussian 
\[
e^{-(p_{js}-q_{s}A_j)^2/(4\varepsilon_s)}
\]
is centred on the region in $p_{js}$ space for which $g_{js}$ is negative, that a negative value of $\tilde{P}_j(A_j)$ could be the result.

Essentially, the Gaussian will only `successfully sample' a negative region of $g_{js}$ to give a negative value of $\tilde{P}_j(A_j)$ if the Gaussian is narrow enough -- for a given value of $\varepsilon_s$ --  to `resolve' a negative patch of $g_{js}$. In other words, if the Gaussian is too broad, it won't `see' the negative patches of $g_{js}$, and hence $\tilde{P}_j(A_j)$ will be positive. Hence the non-negativity of $\tilde{P}_j(A_j)$ is a restriction on the possible shape of $g_{js}$, and how that shape must scale with $\varepsilon_s$.

\subsection{Detailed arguments}
When considering the non-negativity of the Hermite expansion, it is instructive to rewrite (\ref{eq:sample}) in the form
\begin{equation}
\sum_{n=0}^\infty a_n\left(\text{sgn}(q_s)\delta_s\tilde{A}_j\right)^n=\frac{1}{\sqrt{2\pi}}\int_{-\infty}^\infty e^{-(\tilde{p}_{js}-\tilde{A}_j)^2/2}\bar{g}_{js}(\tilde{p}_{js};\delta_s)d\tilde{p}_{js},\label{eq:newsample}
\end{equation}
by using the following associations
\[
\tilde{A}_j=\frac{A_j}{B_0L},\hspace{3mm}\tilde{p}_{js}=\frac{p_s}{\sqrt{2\varepsilon_s}},\hspace{3mm}g_{js}(p_{js};\varepsilon_s)=\bar{g}_{js}(\tilde{p}_{js};\delta_s).
\]
The formal solution as an expansion in Hermite polynomials can be written as
\begin{equation}
\bar{g}_{js}(\tilde{p}_{js};\delta_s)=\sum_{n=0}^\infty a_n\text{sgn}(q_s)^n\left(\frac{\delta_s}{\sqrt{2}}\right)^nH_n\left(\frac{\tilde{p}_{js}}{\sqrt{2}}\right).\label{eq:gnorm}
\end{equation}
We shall assume that the right-hand side of (\ref{eq:gnorm}) represents a differentiable function. Note that the Gaussian in (\ref{eq:newsample}) is of fixed width $2\sqrt{2}$ (defined at $1/e$), in contrast to the Gaussian of variable width defined in (\ref{eq:sample}). 

\subsubsection{Boundedness below zero of the Hermite expansion}
If the Hermite series satisfies the condition in Theorem 1 then it is convergent, so Equation (\ref{eq:hermbound}) gives
\begin{equation}
\left|\bar{g}_{js}(\tilde{p}_{js};\delta_s)\right|<\mathcal{L}_{js}e^{\tilde{p}_{js}^2/4}      
\nonumber
\end{equation}
for some finite and positive $\mathcal{L}_{js}$, determined by the sum of the (possibly infinite) series. Note that these bounds automatically imply integrability of $f_s$ since as can be seen from Equation (\ref{eq:newsample}), for some finite $L^\prime>0$, we have that $\left|\bar{g}_{js}(\tilde{p}_{js};\delta_s)\right|<L^\prime e^{\tilde{p}_{js}^2/2}$  implies integrability, which is a less strict condition. 

The bounds on $\bar{g}_{js}$ given above demonstrate that $\bar{g}_{js}$ can not tend to $\pm \infty$ for finite $\tilde{p}_{js}$. Hence, if it reaches $-\infty$ at all, it can only do so as $|\tilde{p}_{js}|\to\infty$.  We argue however that the positivity of the pressure prevents the possibility of $\bar{g}_{js}$ being without a finite lower bound. The heuristic reasoning is as follows: the expression on the RHS of Equation (\ref{eq:newsample}) treats -- in the language of the heat/diffusion equation -- the $\bar{g}_{js}$ function as the initial condition for a temperature/density distribution on an infinite 1-D line, and the left-hand side represents the distribution at some finite time later on. Were $\bar{g}_{js}$ to be unbounded from below, this would imply for our problem that a smooth `temperature/density' distribution that is initially unbounded from below could, in some finite time, evolve into a distribution that has a positive and finite lower bound. This seems entirely unphysical since this would imply that an infinite negative `sink' of heat/mass would somehow be `filled in' above zero level in a finite time. 

\subsubsection{Proofs and arguments by contradiction}
Here we give some technical remarks that support our claim that $\bar{g}_{js}$ (and hence $g_{js}$) is bounded below, using an argument by contradiction. First of all consider a smooth $\bar{g}_{js}$ function that is unbounded from below in positive momentum space. Then, depending on the number and nature of stationary points, either 
\begin{itemize}
\item Case 1: There will be some $\tilde{p}_{j0,s}$ such that $\bar{g}_{js}<c<0$  for all $\tilde{p}_{js}>\tilde{p}_{j0,s}$. This is a trivial statement if $\bar{g}_{js}$ has only a finite number of stationary points, whereas in the case of an infinite number of stationary points, all maxima of $\bar{g}_{js}$ for $\tilde{p}_{js}>\tilde{p}_{j0,s}$ must be `away' from zero by a finite amount.
\item Case 2:  In this case the (infinite number of) maxima either can rise above zero, or tend to zero from below in a limiting fashion.
\end{itemize}

If $\bar{g}_{js}$ is of the type described in Case 1, then we can create an `envelope' $g_{\text{env},j}$ for $\bar{g}_{js}$ such that $g_{\text{env},j}>\bar{g}_{js}$ for all $\tilde{p}_{js}$. The envelope we choose is 
\begin{equation}
g_{\text{env},j}=
\begin{cases}
\mathcal{L}_{js}e^{\tilde{p}_{js}^2/4},\text{   for   } \tilde{p}_{js}\le \tilde{p}_{j0,s},\\
c\text{   for   } \tilde{p}_{js} > \tilde{p}_{j0,s}.
\end{cases}
\end{equation}
The $\mathcal{L}_{js}e^{\tilde{p}_{js}^2/4}$ form for the profile is chosen because this represents the absolute upper bound for our convergent Hermite expansions, at a given $\tilde{p}_{js}$ as seen from Equation (\ref{eq:hermbound}). If we then substitute the $g_{\text{env},j}$ function for $\bar{g}_{js}$ in Equation (\ref{eq:newsample}) the integrals give combinations of error functions, 
\begin{eqnarray}
&&\frac{1}{\sqrt{2\pi}}\int_{-\infty}^\infty e^{-(\tilde{p}_{js}-\tilde{A}_j)^2/2}\bar{g}_{\text{env},j}d\tilde{p}_{js}=\nonumber\\
&&\frac{\mathcal{L}_{js}e^{\tilde{A}_j^2/2}}{\sqrt{2}} \left( \text{erf} \left(\frac{\tilde{p}_{j0,s} - 2 \tilde{A}_j}{2}\right) + 1\right)+\frac{c}{2}\left(   \text{erf}\left( \frac{\tilde{A}_j - \tilde{p}_{j0,s}}{\sqrt{2}}\right) + 1\right)\nonumber
\end{eqnarray}
from which it is seen that one obtains a negative result, i.e. $c$, as $\tilde{A}_j\to\infty$. This is a contradiction since the left-hand side of Equation (\ref{eq:newsample}) is positive for all $\tilde{A}_j$. Hence we can discount the $\bar{g}_{js}$ functions of the variety described in Case 1, as we have a contradiction.

Case 2 is less simple to treat. The fact that there exists an infinite number of local minima and that the infimum of $\bar{g}_{js}$ is $-\infty$ implies that there exists an infinite sequence of points in momentum space, $\mathcal{S}_p=\{ \tilde{p}_k\, : \, k=1,2,3 ...\}$, that are local minima of $\bar{g}_{js}$, such that $\bar{g}_{js}(\tilde{p}_{k+1})<\bar{g}_{js}(\tilde{p}_{k})$. Essentially there are an infinite number of minima `lower than the previous one'. For sufficiently large $k=l$, we have that the magnitude of the minima is much greater than the width of the Gaussian, i.e.
\[
|\bar{g}_{js}(\tilde{p}_{l})| \gg 2\sqrt{2}.
\]
In this case the only way that the sampling of $\bar{g}_{js}$ described by Equation (\ref{eq:newsample}) could give a positive result for a Gaussian centred on the minima is if $\bar{g}_{js}$ rapidly grew to become sufficiently positive, in order to compensate the negative contribution from the minimum and its local vicinity. However, this seems to be at odds with the condition that $\bar{g}_{js}$ is smooth, since the function would have to rise in this manner for ever more negative values of the minima (and hence rise ever more quickly) as $k\to \infty$. We claim that this can not happen, and hence we discount the $\bar{g}_{js}$ functions of the variety described in Case 2.

Since there is no asymmetry in momentum-space in this problem, the arguments above hold just as well for for a $\bar{g}_{js}$ function that is unbounded from below in negative momentum space. It should be clear to see that if $\bar{g}_{js}$ can not be unbounded from below in either the positive or negative direction, then it can not be unbounded in both directions either.

\subsubsection{Behaviour with respect to the magnetisation}
If $\bar{g}_{js}$ (and hence $g_{js}$) is indeed bounded below then that means that one can always add a finite constant to $g_{js}$ to make it positive, should the lower bound be known. However this constant contribution would directly correspond to raising the pressure (through the zeroth order Maclaurin coefficient $a_0$). 

If we wish to consider a pressure function that is `fixed', then we have a fixed $a_0$, and so it is not immediately obvious whether or not we can obtain a $g_{js}$ that is positive over all momentum space. We have already seen some examples in Section \ref{sec:negativity} for which the sign of $g_{js}$ depended on the value of $\delta_s$. 

Consider $\bar{g}_{js}$ evaluated at some particular value of $\tilde{p}_{js}$. We see from Equation (\ref{eq:gnorm}) that positivity requires
\[
a_0+c_1\delta_s+c_2\delta_s^2+...>0,
\]
for $c_1,c_2,...$ finite constants. We also know that $a_0>0$ since $P_j(0)>0$, i.e. the pressure is positive. This clearly demonstrates that positivity of $g_{js}$ places some restriction on possible values of $\delta_s$.

Let us now suppose that for a given value of $\delta_s$, that there exists some regions in $\tilde{p}_{js}$ space where $\bar{g}_s<0$. Our claim that $\bar{g}_{js}$ has a finite lower bound, combined with the expression in Equation (\ref{eq:gnorm}) implies that the $\bar{g}_s$ function is bounded below by a finite constant of the form $a_0+\delta_s \mathcal{M}$, with
\[
\mathcal{M}=\frac{1}{\sqrt{2}}\inf_{\tilde{p}_{js}}\sum_{n=1}^\infty a_n\text{sgn}(q_s)^n\left(\frac{\delta_s}{\sqrt{2}}\right)^{n-1}H_n\left(\frac{\tilde{p}_{js}}{\sqrt{2}}\right),
\]
and finite (and for $\inf$ the \emph{infimum}, i.e. the greatest lower bound). By letting $\delta_s\to 0$ we see that $\bar{g}_{js}$ will converge uniformly to $a_0$, with
\[
\lim_{\delta_s\to 0}\bar{g}_{js}(\tilde{p}_{js},\delta_s)=a_0>0.
\]
Hence, there must have existed some critical value of $\delta_s=\delta_c$ such that for all $\delta_s<\delta_c$ we have positivity of $\bar{g}_{js}$. Note that if the negative patches of $\bar{g}_{js}$ do not exist for any $\delta_s$, then trivially $\delta_c=\infty$ as a special case.

\subsection{Summary}
To summarise, we claim  -- provided $g_s$ is differentiable and convergent -- that for values of the magnetisation parameter $\delta_s$ less than some critical value $\delta_c$, according to $0<\delta_s<\delta_c\le \infty$, $g_s$ is positive for any positive pressure function. The crucial step in this work was to prove/argue that $g_s$ is bounded from below by a constant for all values of the momenta.

We have in fact proven this result for the class of $g_s$ functions for which the number of stationary points is finite, or if infinite for which the stationary points are `away' from zero by a finite amount. We have also presented arguments based on the differentiability of $g_s$, that support this result for other classes of $g_s$ function.

\section{Illustrative case of the use of the method: \\
correspondence with the Fourier transform method}\label{Sec:Channell}
Here we give an example of the use of the solution method to a pressure function that was first discussed in \citet{Channell-1976}. In that paper, Channell actually solved the inverse problem by the Fourier transform method, and showed that the solution was valid given certain restrictions on the parameters. We tackle the problem via the Hermite Polynomial method, and find that for the resultant DF to be convergent, we require exactly the same restrictions as Channell. This parity between the validity of the two methods is reassuring, and implies that the necessary restrictions on the parameters are in a sense `method independent', and are the result of fundamental restrictions on the inversion of Weierstrass transformations. 

The magnetic field considered by Channell can not be given analytically, but is of the form 
\begin{equation}
\boldsymbol{B}=(B_{x}(z)\,,0\,,0),\hspace{3mm}\text{s.t.}\hspace{3mm}B_{x}(-\infty)=B_0,\label{eq:channellnumeric1}
\end{equation}
and self-consistent with a pressure function 
\begin{equation}
P_{zz}=P_0e^{-\gamma \tilde{A}_{y}^{2}}\label{eq:channellnumeric2}
\end{equation}
for $P_0,B_0$ and $L$ characteristic values of the pressure, magnetic field and length scales, $\tilde{A}_{y}=A_{y}/(B_{0}L)$ and $\gamma>0$ dimensionless. The magnetic field and self-consistent number density profiles for this equilibrium are shown in Figure \ref{fig:Channell}, reproduced from \citet{Channell-1976}.
\begin{figure}
    \centering
    \includegraphics[width=0.7\textwidth]{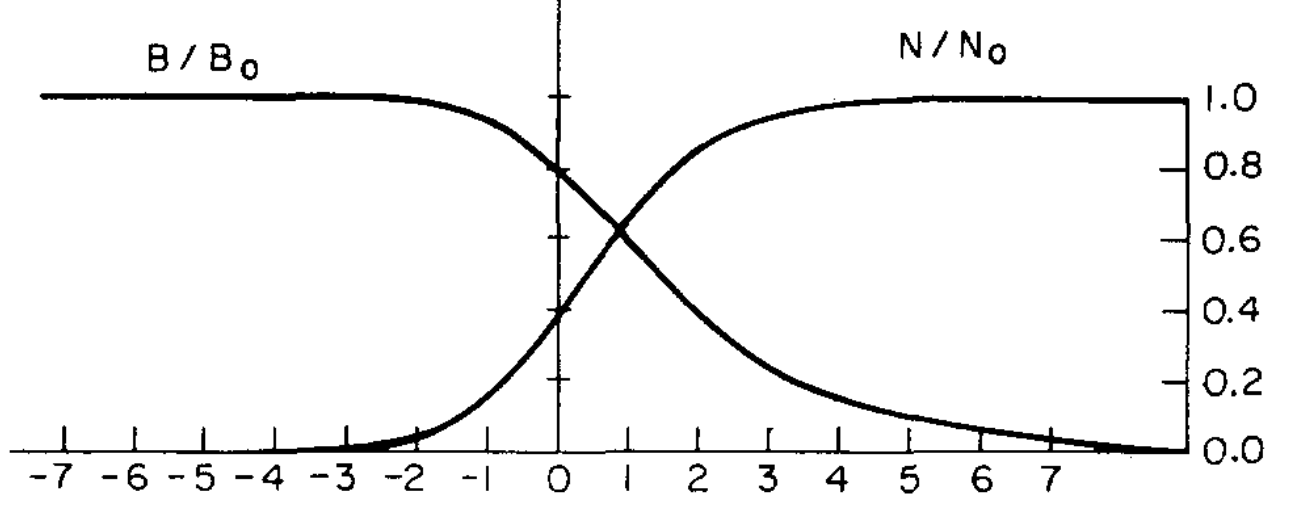}
        \caption{\small A figure from \citet{Channell-1976} that displays the magnetic field and number density consistent with Equations (\ref{eq:channellnumeric1}) and (\ref{eq:channellnumeric2}). {\bf Image Copyright:} AIP, \href{http://aip.scitation.org}{\emph{Physics of Fluids}} {\bf 19}, 1541, (1976), copyright (1976), (reproduced with permission).}
        \label{fig:Channell}
        \end{figure}
Note that the $\gamma$ used by Channell has dimensions equivalent to $1/(B_{0}^{2}L^{2})$. We can now write the details of the inversion. The equation we must solve, for a DF given by 
\[
f_{s}=\frac{n_{0}}{(\sqrt{2\pi}v_{\text{th},s})^{3}}e^{-\beta_{s}H_{s}}g_{s}(p_{ys};v_{\text{th},s})
\]  
is
\[
P_0\exp \left(-\gamma \frac{A_{y}^{2}}{B_{0}^{2}L^{2}}\right)=\frac{n_0 (\beta_e+\beta_i)}{\beta_e \beta_i} \frac{1}{\sqrt{2\pi}m_{s}v_{\text{th},s}}\int_{-\infty}^\infty {\rm e}^{-(p_{ys}-q_{s}A_y)^2/(2m_{s}^2v_{\text{th},s}^2)}g_{s}dp_{ys}.
\]
We can immediately formally invert this equation as per the methods described in this Chapter, given the Maclaurin expansion of the pressure 
\[ 
P_{zz}=P_0\sum_{m=0}^\infty a_{2m}\left(\frac{A_y}{B_0L}\right)^{2m}\hspace{3mm}\text{s. t.} \hspace{3mm}a_{2m}=\frac{(-1)^m\gamma ^m}{m!},
\]
to give
\[
g_{s}(p_{ys})=\sum_{m=0}^\infty \left(\frac{\delta_s}{\sqrt{2}}\right)^{2m}a_{2m}H_{2m}\left(\frac{p_{ys}}{\sqrt{2}m_{s}v_{\text{th},s}}\right).
\]
Let us turn to the question of convergence. Theorem 1 states that if
\[
\lim_{m\to\infty} m\left|\frac{a_{2m+2}}{a_{2m}}\right|<1/(2\delta_s^2), 
\]
then the $g_{s}$ function is convergent. This is readily seen to imply that $\gamma$ must satisfy
\[
\gamma< \frac{1}{2\delta_s^2},
\]
for the Hermite series representation of  $g_{s}$ to be convergent. This condition is exactly equivalent to the one derived by Channell (Equation (28) in the paper). Note that now that we have established convergence for particular $\gamma$, then boundedness results follow as per the results in Section \ref{sec:boundedness}. One more question remains, namely how does the $g_{s}$ function derived compare to the Gaussian $g_{s}(p_{ys})$ function derived by Channell 
\[
g_{s}\propto e^{-4\gamma^2\delta_s^4p_{ys}^2/(1-4\gamma^2\delta_s^4)}
\]
(in our notation) using the method of Fourier transforms? In fact, one can see by setting $y=0$ in Mehler's Hermite Polynomial formula \citep{Watson-1933}
\[
\frac{1}{\sqrt{1-\rho^2}}\exp\left[\frac{2xy\rho-(x^2+y^2)\rho^2    }{1-\rho^2}   \right]=\sum_{n=0}^{\infty}\frac{\rho ^n}{2^n n!} H_n(x)H_n(y)  , 
 \]
and using 
\begin{equation}
H_m(0)=
\begin{cases}
0 & \text{if}\hspace{3mm} m\hspace{3mm} \text{is odd}, \\
  (-1)^{m/2}m!/(m/2)!      & \text{if}\hspace{3mm} m\hspace{3mm} \text{is even},
\end{cases}\nonumber
\end{equation}
(see \cite{Gradshteyn} for example), that the Hermite series represents a Gaussian function in the range $| \rho |<1$. This is equivalent to the condition derived above for convergence, $\gamma< 1/(2\delta_s^2)$. Hence, we have shown that for this specific example - solvable by using both Hermite polynomials and Fourier transforms - the two methods used to solve the inverse problem give equivalent functions with equivalent ranges of mathematical validity.

\section{Summary}
The primary result of this chapter is the rigorous generalisation of a solution method that exactly solves the `inverse problem' in 1-D collisionless equilibria, for a certain class of equilibria. Specifically, given a pressure function, $P_{zz}(A_x,A_y)$, of a separable form, neutral equilibrium DFs can be calculated that reproduce the prescribed macroscopic equilibrium, provided $P_{zz}$ satisfies certain conditions on the coefficients of its (convergent) Maclaurin expansion, and is itself positive. 

The DF has the form of a Maxwellian modified by a function $g_s$, itself represented by -- possibly infinite -- series of Hermite polynomials in the canonical momenta. It is crucial that these series are convergent and positive for the solution to be meaningful. A sufficient condition was derived for convergence of the DF by elementary means, namely the ratio test, with the result a restriction on the rate of decay of the Maclaurin coefficients of $P_{zz}$. For DFs that are written as an expansion in Hermite polynomials, multiplied by a stationary Maxwellian, we have demonstrated that the necessary boundedness results follow.

We also argue that for such a pressure function that is also positive, that the Hermite series representation of the modification to the Maxwellian is positive, for sufficiently low values of the magnetisation parameter, i.e. lower than some critical value. This was actually proven for a certain class of $g_s$ functions, and differentiability of $g_s$ was assumed. It would be interesting in the future to investigate whether this critical value of the magnetisation parameter can be determined. It is also desirable that the result is proven for all reasonable function classes.

We have demonstrated the application of the solution method in Section \ref{Sec:Channell}. This particular example already has a known solution and range of validity in parameter space, obtained by a Fourier transform method in \citep{Channell-1976}. We obtain a solution with an alternate representation using the Hermite Polynomial method. The Hermite series obtained is shown to be equivalent to the representation obtained by Channell, and to have the exact same range of validity in parameter space. It is not clear if this equivalence between solutions obtained by the two different methods is true in general. Our problem is somewhat analagous to the heat/diffusion equation, and in that `language' the question of the equivalence of solutions is related to the `backwards uniqueness of the heat equation' (see e.g. \citep{Evansbook}). The degree of similarity between our problem and the one described by Evans, and its implications, are left for future investigations.

Also, whilst we have assumed that the pressure is separable (either summatively or multiplicatively), the method should be adaptable in the `obvious way' for pressures that are a `superposition' of the two types. Interesting further work would be to see if the method can be adapted to work for pressure functions that are non-separable, i.e. of the form
\begin{equation*}
P_{zz}=\sum_{m,n}\mathcal{C}_{mn}\left(\frac{A_x}{B_0L}\right)^{m}\left(\frac{A_y}{B_0L}\right)^{n}.
\end{equation*}

\null\newpage

\chapter{One-dimensional nonlinear force-free current sheets} 

\label{Sheets} 

 \epigraph{\emph{We have to keep an eye on the electrons.}}{\textit{Thomas Neukirch}}
 
\noindent Much of the work in this chapter is drawn from \citet{Allanson-2015POP, Allanson-2016JPP}.

\section{Preamble}
In this chapter we present new exact collisionless equilibria for a 1D nonlinear force-free magnetic field, namely the force-free Harris sheet. In contrast to previous solutions \citep{Harrison-2009PRL, Wilson-2011, Abraham-Shrauner-2013, Kolotkov-2015}, the solutions that we present allow the plasma beta ($\beta_{pl}$) to take any value, and crucially values below unity for the first time. In the derivations of the equilibrium DFs it is found that the most typical approach of Fourier Transforms can not be applied, and so we use expansions in Hermite polynomials, making use of the techniques developed in Chapter \ref{Vlasov}. Using the convergence criteria developed therein, we verify that the Hermite expansion representation of the DFs are convergent for all parameter values. As shown in Chapter \ref{Vlasov}, this also implies boundedness, and the existence of velocity moments of all orders. 

Despite the proven analytic convergence, initial difficulties in attaining numerical convergence mean that plots of the DF can be presented for the plasma beta only modestly below unity. In the effort to model equilibria with much lower values of the plasma, we use a new gauge for the vector potential, and calculate the DF consistent with this gauge, confirming the properties of convergence velocity moments. This new gauge makes attaining numerical convergence possible for lower values of the plasma beta, and we present results for $\beta_{pl}=0.05.$

\section{Introduction}
Force-free equilibria, with fields defined by
\begin{eqnarray}
\boldsymbol{j}\times\boldsymbol{B}=\frac{1}{\mu_0}(\nabla\times\boldsymbol{B})\times\boldsymbol{B}&=&\boldsymbol{0},  \label{eq:force-free}
\end{eqnarray}
are of particular relevance to the solar corona (e.g. see \cite{Priest-2000, Wiegelmann} and Figure \ref{fig:beta}); current sheets in the Earth's magnetotail (e.g. \cite{Vasko-2014,Petrukovich-2015}), the Earth's magnetopause (e.g. \cite{Panov-2011}) and in the Jovian magnetotail (e.g. \cite{Artemyev-2014}); other astrophysical plasmas (e.g. \cite{Marshbook}); scrape-off layer currents in tokamaks (e.g. \cite{Fitzpatrick-2007}); and `Taylor-relaxed' magnetic fields in fusion experiments (e.g. \cite{Taylor-1974,Taylor-1986}). Equation (\ref{eq:force-free}) implies that the current density is everywhere-parallel to the magnetic field;
\begin{equation}
\mu_0\boldsymbol{j}=\alpha(\boldsymbol{x})\boldsymbol{B},\label{eq:alpha}
\end{equation}
or zero in the case of potential fields, and with $\alpha$ the \emph{force-free parameter}. If $\nabla\alpha\neq 0$ then the force-free field is nonlinear, whereas a constant $\alpha$ corresponds to a linear force-free field. Note that 
\[
\nabla\cdot(\nabla\times\boldsymbol{B})=0\implies\boldsymbol{B}\cdot\nabla\alpha =0,
\]
and hence $\alpha$ is a constant along a magnetic field line, but will vary from field line to field line in the case of nonlinear force-free fields. Extensive discussions of force-free fields are given in \citet{Sakurai-1989, Marshbook}. 

\subsection{Force-free equilibria and the plasma beta}
Equation (\ref{eq:force-free}) presents the force-free condition in purely geometric terms, i.e. an equilibrium force-free magnetic field has field lines obeying certain geometrical constraints, such that a particular combination of spatial derivatives vanish. In order to gain some physical insight, consider a generic plasma equilibrium (in the absence of a gravitational potential),
\begin{equation}
\nabla \cdot\boldsymbol{P}=\sigma\boldsymbol{E}+\boldsymbol{j}\times\boldsymbol{B}.\label{eq:sheetbalance}
\end{equation}
Next, normalise each of the quantities according to
\begin{eqnarray}
\nabla\cdot\boldsymbol{P}&=&\frac{p_0}{L_P}\,\tilde{\nabla}\cdot\tilde{\boldsymbol{P}},\nonumber\\
\sigma\boldsymbol{E}&=&\sigma_0E_0\,\tilde{\sigma}\tilde{\boldsymbol{E}},\nonumber\\
\boldsymbol{j}\times\boldsymbol{B}&=&\frac{B_0^2}{\mu_0L_B}\,\tilde{\boldsymbol{j}}\times\tilde{\boldsymbol{B}},\nonumber
\end{eqnarray}
for $L_P,L_B$ typical values of the length scales associated with the pressure and magnetic fields respectively; and with $p_0, \sigma_0,E_0, B_0$ typical values of the thermal pressure, charge density, electric and magnetic field respectively. Furthermore, since $\boldsymbol{E}=-\nabla\phi$ and $\nabla^{2}\phi=-\sigma/\epsilon_0$, we define
\begin{eqnarray}
\sigma_0&=&-\frac{\epsilon_0\phi_0}{L_\phi^2},\nonumber\\
E_0&=&-\frac{\phi_0}{L_\phi},\nonumber\\
\text{s.t.}\hspace{3mm}\phi_0&=&\frac{k_BT_0}{e},\nonumber
\end{eqnarray}
for $T_0$ a typical value of the temperature, and $L_\phi$ the length scale associated with the scalar potential. Written in dimensionless form, the force balance equation (Equation (\ref{eq:sheetbalance})) can now be written as
\begin{equation}
\frac{\beta_{pl}}{2}L_B\left[\frac{1}{L_P}\,\tilde{\nabla}\cdot\tilde{\boldsymbol{P}}-\frac{1}{L_\phi}   \frac{\lambda_D^2}{L_\phi^2}\,\tilde{\sigma}\tilde{\boldsymbol{E}}\right]=\tilde{\boldsymbol{j}}\times\tilde{\boldsymbol{B}},\nonumber
\end{equation}
for $\beta_{pl}=2\mu_0p_0/B_0^2$ the plasma beta, and $\lambda_D=\sqrt{\epsilon_0k_BT_0/(n_0e^2)}$ the Debye radius. Note that we have made use of $p_0=n_0k_BT_0$. This equation demonstrates that - in principle - 
\[
\beta_{pl}\ll1 \cancel{\iff}\boldsymbol{j}\times\boldsymbol{B}=\boldsymbol{0},
\]
for $\cancel{\iff}$ to read as `not equivalent', i.e. force free equilibria need not necessarily have a vanishing plasma beta, or vice versa. However, we see that for a quasineutral plasma in which $\epsilon=\lambda_D/L_\phi \ll1$, the second term on the LHS is - for a given value of $\beta_{pl}$ - almost certainly of a lower order than the first term on the LHS, due to the $\epsilon^2$ dependence. 
\begin{figure}
    \centering
        \includegraphics[width=0.7\textwidth]{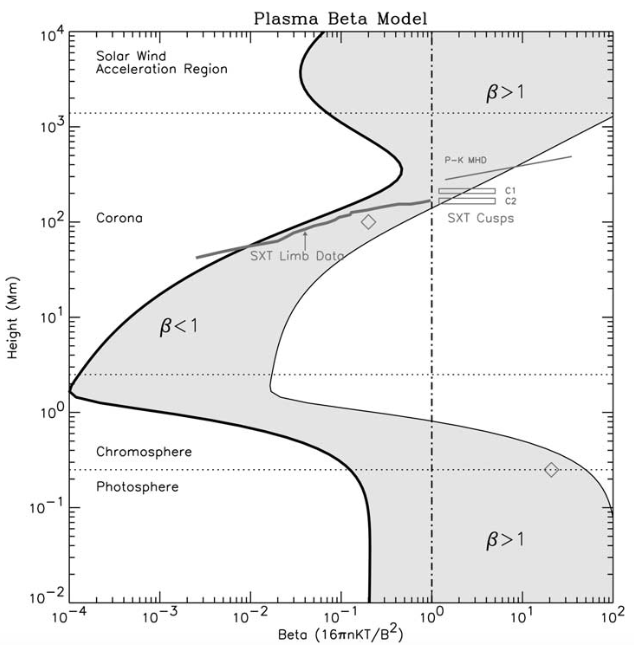}
    \caption{{\small A figure from \citet{Gary-2001} that displays a representative $\beta_{pl}$ model over solar active regions, derived from a range of sources. {\bf Image Copyright:} Springer, \href{http://link.springer.com/journal/11207}{\emph{Solar Physics}} {\bf 203}, 1, (October 2001), pp. 71-86., copyright (2001), (reproduced with permission). }}
 \label{fig:beta}     
\end{figure}
Hence we see that for a quasineutral equilibrium
\begin{equation}
\frac{\beta_{pl}}{2}\,\frac{L_B}{L_P}\,\tilde{\nabla}\cdot\tilde{\boldsymbol{P}}=\tilde{\boldsymbol{j}}\times\tilde{\boldsymbol{B}},\nonumber
\end{equation}
and so it would now seem fair to say that $\beta_{pl}\ll 1 \iff \boldsymbol{j}\times\boldsymbol{B}=\boldsymbol{0}$ for a quasineutral equilibrium, unless the thermal pressure varies with respect to very fine length scales. For a similar discussion to the above, including the gravitational acceleration but not the electric field, see \citet{Neukirch-2005}. Figure \ref{fig:beta} is reproduced from \citet{Gary-2001} and shows a model for the plasma beta in the solar atmosphere, compiled from observational data. The figure demonstrates that $\beta_{pl}$ can take sub-unity and vanishing values in the solar chromosphere and the corona, as well as values above one (contrary to the most typical assumptions). As such, much of the solar corona magnetic field is modelled as force-free \citep{Wiegelmann}.

\subsection{1D force-free equilibria}
1D magnetic fields can be represented without loss of generality by 
\begin{equation}
\boldsymbol{B}=\left(B_x(z),B_y(z),0\right)=\left(-\frac{dA_y}{dz},\frac{dA_x}{dz},0\right).
\end{equation}
The force-free condition then implies that
\begin{equation}
\boldsymbol{j}\times\boldsymbol{B}=\boldsymbol{0}\implies\frac{d}{dz}\left(\frac{B_x^2}{2\mu_0}+\frac{B_y^2}{2\mu_0}\right)=0,\label{eq:bconstant}
\end{equation}
and hence the magnetic field is necessarily of uniform magnitude. Considering the equation of motion for a quasineutral plasma, now given by
\[
\frac{d}{dz}\left( P_{zz}+\frac{B^2}{2\mu_0}\right)=0,
\]
we see that the thermal pressure is also of constant magnitude,
\begin{equation}
\frac{d}{dz}P_{zz}=0\implies P_{zz}={\rm const.}
\end{equation}

As demonstrated in Section \ref{sec:inversekey}, the (assumed) existence of a VM equilibrium implies - through the dependence of the DF on the constants of motion - that the pressure tensor is a function of the vector and scalar potentials. Hence, we see that for a quasineutral plasma in which $\phi_{qn}=\phi(A_x,A_y)$, the force-free equilibrium fields correspond to a  \emph{trajectory}, $\boldsymbol{A}_{ff}(z)=(A_x(z),A_y(z),\phi_{qn}(A_x(z),A_y(z)))$, that is itself a \emph{contour};
\begin{equation} 
\frac{d}{dz}P_{zz}(A_x(z),A_y(z))=0, \label{eq:Contour}
\end{equation}
of the \emph{potential}, $P_{zz}$ \citep{Harrison-2009POP,Harrison-2009PRL}. As such, the construction of a $P_{zz}$ function that satisfies Equation (\ref{eq:Contour}), given some $(A_x(z),A_y(z)$ is the first step in the inverse method for 1D force-free equilibria. 

In fact, Equation (\ref{eq:Contour}) compactly defines the entire macroscopic problem, since 
\begin{equation}
\frac{\partial P_{zz}}{\partial\boldsymbol{A}}=\boldsymbol{j},\label{eq:pressurecurrent}
\end{equation}
implies that
\begin{eqnarray}
\frac{d}{dz}P_{zz}(A_x(z),A_y(z))=\underbrace{\frac{\partial P_{zz}}{\partial A_x}}_{j_x}\underbrace{\frac{dA_x}{dz}}_{B_y}+\underbrace{\frac{\partial P_{zz}}{\partial A_y}}_{j_y}\underbrace{\frac{dA_y}{dz}}_{-B_x}=0,\nonumber\\
=j_xB_y-j_yB_x,\nonumber\\
=(\boldsymbol{j}\times\boldsymbol{B})_z.
\end{eqnarray}
This demonstrates that - in a 1D quasineutral plasma - the existence of a VM equilibrium that is self-consistent with a spatially uniform pressure tensor directly implies that the magnetic field is force-free.  

\subsubsection{Pressure tensor transformation theory}\label{sec:pressuretrans}
The inverse problem is not only non-unique regarding the form of the DF for a particular macroscopic equilibrium (as discussed in Section \ref{sec:inverseapproach}), but also for the form of $P_{zz}(A_x,A_y)$ for a particular magnetic field. Given a specific force-free magnetic field, i.e. a specific $\left(A_x,A_y\right)$, and a known $P_{zz}$ that satisfies Equations (\ref{eq:Contour}) and (\ref{eq:pressurecurrent}), one can construct infinitely many new $\bar{P}_{zz}$ functions that also satisfy them;
\begin{equation}
\bar{P}_{zz}=\frac{1}{\psi^\prime(P_{ff})}\psi(P_{zz}),\label{eq:Ptrans}
\end{equation} 
for differentiable and non-constant $\psi$, provided the LHS is positive, and for which the value of $P_{zz}$ evaluated on the force-free contour, $\boldsymbol{A}_{ff}$, is the constant, $P_{ff}$ \citep{Harrison-2009POP}. These $\bar{P}_{zz}$ functions maintain a force-free equilibrium with the \emph{same magnetic field} as $P_{zz}$, since
\begin{eqnarray}
\frac{\partial \bar{P}_{zz}}{\partial\boldsymbol{A}}\bigg|_{\boldsymbol{A}_{ff}}=\frac{1}{\psi^\prime(P_{ff})}\frac{\partial{\psi}}{\partial P_{zz}}\frac{\partial P_{zz}}{\partial \boldsymbol{A}}\bigg|_{\boldsymbol{A}_{ff}}=\frac{\partial P_{zz}}{\partial \boldsymbol{A}}\bigg|_{\boldsymbol{A}_{ff}}=\boldsymbol{j}_{ff},\nonumber
\end{eqnarray}
for $\boldsymbol{j}_{ff}$ the current density derived from $\boldsymbol{A}_{ff}$.

\section{Force-free current sheet VM equilibria}
Since current sheets are extremely important for reconnection studies (e.g. see \cite{Priest-2000}), and it is appropriate in many circumstances to model the magnetic field as force-free, a natural step is to construct VM equilibria for force-free current sheets. The archetypal 1D current sheet structure used to model reconnection is the Harris sheet \citep{Harris-1962} (see Section \ref{sec:currentsheets}), 
\[
\boldsymbol{B}=B_0(\tanh(z/L),0,0),
\]
for which an exact VM equilibrium DF is well-known. However, the Harris sheet has $\boldsymbol{j}\perp\boldsymbol{B}$ and hence is not force-free, with thermal pressure gradients balancing those of the magnetic pressure. It is possible to approximate a force-free field with the addition of a uniform guide field
\[
\boldsymbol{B}=(B_{x0}\tanh(z/L), B_{y0},0),
\]
for $B_{x0},B_{y0}$ constants. This magnetic field configuration is frequently chosen as the initial condition in PIC simulations of magnetic reconnection (e.g. see \cite{Pritchett-2004}), and the VM equilibrium is easily implemented since it is the same as that for the Harris sheet (Equation (\ref{eq:HarrisDF})).

In principle, this magnetic field does approach a force free configuration for $B_{y0}\gg B_{x0}$, since $\boldsymbol{j}$ is approximately parallel to $\boldsymbol{B}$. However, the current density, $j_y$, is completely independent of the magnitude of the guide field, and so it is quite unlike an exact force-free field, for which the field-aligned current is related to the shear of the magnetic field. The equilibrium force balance is still maintained by the balance between gradients in the thermal pressure and the magnetic pressure,
\[ 
\frac{1}{2\mu_0}\left(B_{x0}^2\tanh^2\left(\frac{z}{L}\right)+B_{y0}^2\right),
\]
unlike for an exact force-free field. Finally, the addition of the guide field adds no extra free energy to the system \citep{Harrison-thesis}. Hence it is of value to consider VM equilibria self-consistent with exact force-free magnetic fields because of their distinct physical nature, with one motivation in mind to see how these differences affect the magnetic reconnection process.       

As discussed in e.g. \citet{Bobrova-2001,Vekstein-2002}, Equation (\ref{eq:bconstant}) implies that a 1D force-free field can be written without loss of generality as
\begin{equation}
\boldsymbol{B}(z)=B_0(\cos (S(z)),\sin (S(z)),0), \label{eq:forcefreerepresentation}
\end{equation}
where $S(z) =\int\alpha (z)dz$, for $\alpha$ defined in Equation (\ref{eq:alpha}). 1D linear force-free fields then, necessarily, have $S(z)$ as a linear function of $z$, i.e. $S_0z+S_1$. As a result, Equation (\ref{eq:forcefreerepresentation}) then implies that that the magnetic field configuration for linear force-free fields will be periodic in the $z$ direction, and hence there will be an infinite sequence of current sheet structures,
\[
\boldsymbol{j}=\frac{-B_0S_0}{\mu_0L} ( \sin (S_0z+S_1), \cos(S_0z+S_1)   ,0  ).
\]
Figure \ref{fig:lff} displays the magnetic field from Equation (\ref{eq:forcefreerepresentation}), and its current density, for $S(z)=z-\pi/2$. 
\begin{figure}
    \centering
   
    \begin{subfigure}[b]{0.45\textwidth}
        \includegraphics[width=\textwidth]{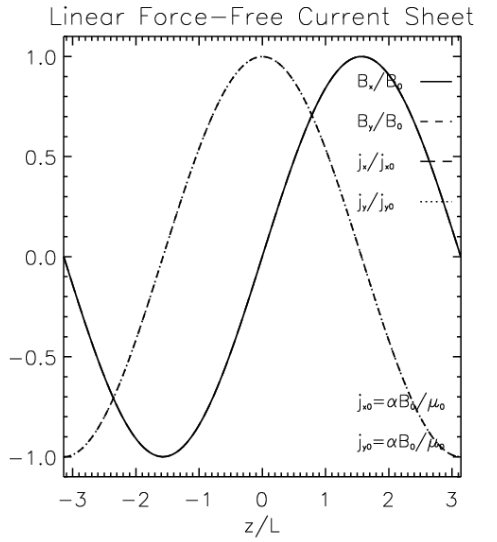}
        \caption{}
        \label{fig:lff1}
    \end{subfigure}
    \begin{subfigure}[b]{0.45\textwidth}
        \includegraphics[width=\textwidth]{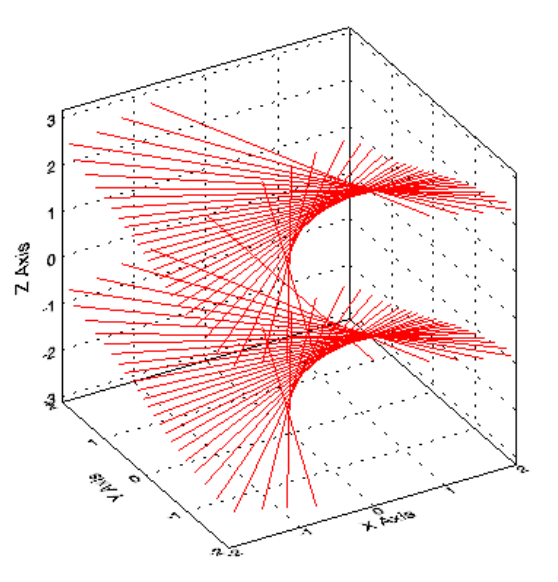}
        \caption{ }
        \label{fig:lff2}
    \end{subfigure}

    \caption{\small Figure \ref{fig:lff1} shows the magnetic field and current density components for a linear force-free field with $S(z)=z-\pi/2$. Figure \ref{fig:lff2} shows the magnetic field lines.  {\bf Images copyright:} M.G. Harrison's PhD thesis \citep{Harrison-thesis}, (reproduced with permission).  }      \label{fig:lff}
\end{figure}

In contrast to linear force-free fields, nonlinear force-free fields admit - in principle - all reasonable varieties of differentiable $S(z)$ functions, and hence are able to describe single, localised and intense current sheet structures.

\subsection{Known VM equilibria for force-free magnetic fields}\label{sec:known}
The first VM equilibria self-consistent with linear force-free fields were found approximately fifty years ago, \citep{Moratz-1966, Sestero-1967}, with further examples of equilibria in \citet{Channell-1976, Bobrova-1979,Correa-Restrepo-1993, Attico-1999,Bobrova-2001} (note that \citet{Channell-1976,Attico-1999} don't actually make the connection to force-free fields, but write down DFs that are self-consistent with such fields). A limited number of PIC studies with exact VM equilibria for linear force-free fields as initial conditions have been conducted in \citet{Bobrova-2001, Li-2003, Nishimura-2003, Sakai-2004, Bowers-2007, Harrison-thesis}.

In contrast, exact VM equilibria for nonlinear force-free fields were only discovered in \citet{Harrison-2009PRL} (see also \cite{Neukirch-2009}), with subsequent solutions in \citet{Wilson-2011, Abraham-Shrauner-2013, Kolotkov-2015}, and `nearly force-free' equilibria in \citet{Artemyev-2011}. As a result, the investigations of the linear and nonlinear dynamics of such configurations are at an early stage \citep{Harrison-thesis, Wilson-thesis, Wilson-2017}, with the first fully kinetic simulations of collisionless reconnection with an initial condition that is an exact Vlasov solution for a nonlinear force-free field conducted by \citep{Wilson-2016}, and using the DF derived by \citet{Harrison-2009PRL}.

\subsubsection{The force-free Harris sheet}
The nonlinear force-free VM equilibrium solutions derived by \citet{Harrison-2009PRL, Wilson-2011, Kolotkov-2015} are self-consistent with the force-free Harris sheet (FFHS), defined by
\begin{eqnarray}
\boldsymbol{B}&=&B_0\left(\text{tanh}\left(z/L\right), \text{sech}\left(z/L\right),0\right), \label{eq:FFHSmagfield}\\
\boldsymbol{j}&=&\frac{B_0}{\mu_0L}\frac{1}{\cosh(z/L)}\left(\text{tanh}\left(z/L\right), \text{sech}\left(z/L\right),0\right),\label{eq:FFHScurrent}\\
P_{zz}(z)&=&P_T-\frac{B_0^2}{2\mu_0}=\text{const.}
\end{eqnarray}
with $L$ the width of the current sheet, $B_0$ the constant magnitude of the magnetic field, $\alpha (z)= L^{-1}\text{sech}(z/L)$ and $P_T$ the total pressure. The magnetic field and current density for the FFHS are displayed in Figure \ref{fig:ffhs}. 

The DF found by \citet{Abraham-Shrauner-2013} is consistent with magnetic fields more general than the FFHS, described by Jacobi elliptic functions, 
\[
\boldsymbol{B}=B_0 \left(\text{sn} \left(\frac{z}{L},k\right), \text{cn} \left(\frac{z}{L},k\right), 0\right),
\]
with $\text{sn}$ and $\text{cn}$ doubly periodic generalisations of the trigonometric functions. The parameter $k$ is a real number such that as $k\to 0$, $\text{sn}\to\sin$ and $\text{cn}\to\cos$; whereas for $k\to 1$, $\text{sn}\to\tanh$ and $\text{cn}\to\text{sech}$. As such the FFHS is a special case, as is the linear force-free case when $k\to 0$. We also note work on `nearly' force-free equilibria \citep{Artemyev-2011}, with the FFHS modified by adding a small $B_z$ component. 
\begin{figure}
    \centering
     \begin{subfigure}[b]{0.45\textwidth}
        \includegraphics[width=\textwidth]{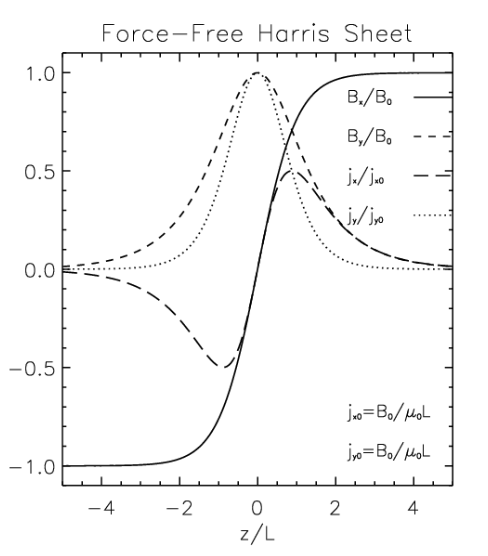}
        \caption{}
        \label{fig:ffhs1}
    \end{subfigure}
    \begin{subfigure}[b]{0.45\textwidth}
        \includegraphics[width=\textwidth]{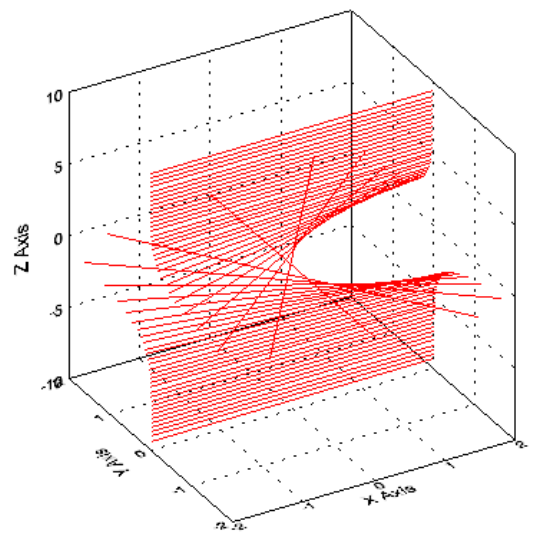}
        \caption{ }
        \label{fig:ffhs2}
    \end{subfigure}

    \caption{\small Figure \ref{fig:ffhs1} shows the magnetic field and current density components for the FFHS. Figure \ref{fig:ffhs2} shows the magnetic field lines.  {\bf Images copyright:} M.G. Harrison's PhD thesis \citep{Harrison-thesis}, (reproduced with permission).  }      \label{fig:ffhs}
\end{figure}

As demonstrated by \citet{Harrison-2009PRL, Neukirch-2009}, the assumption of summative separability for $P_{zz}$ (the first option in Equation (\ref{eq:pform})), determines the components of the pressure according to 
\begin{eqnarray}
P_{zz}(A_x,A_y)+\frac{B_0^2}{2\mu_0}=P_T,\nonumber\\
P_1(A_x)+\frac{1}{2\mu_0}B_y^2(A_x)=P_{T1},\;\;P_2(A_y)+\frac{1}{2\mu_0}B_x^2(A_y)=P_{T2}\label{eq:p1p2balance}
\end{eqnarray}
for $P_{T1},P_{T2}$ constants such that $P_{T1}+P_{T2}=P_{T}$ is the total pressure. We choose to write $B_x$ and $B_y$ as functions of $A_y$ and $A_x$ since $B_x=-dA_y/dz$ and $B_y=dA_x/dz$. In the `particle in a potential' analogy - as discussed in Section \ref{sec:pseudo} - this corresponds to writing $v_x=v_x(x(t))$, and $v_y=v_y(y(t))$.

The expressions in Equation (\ref{eq:p1p2balance}) can now be used as the left-hand side of the integral Equations (\ref{eq:p1tog1}) and (\ref{eq:p2tog2}), and one could attempt to invert the Weierstrass transforms. They were used by \citet{Harrison-2009PRL} to derive a summative pressure for the FFHS. The gauge chosen for the magnetic field was
\begin{equation}
\boldsymbol{A}=B_0L\left(2\arctan\left(\exp\left(\frac{z}{L}\right)\right),\ln\left(\text{sech}\left(\frac{z}{L}\right)\right),0\right),\label{eq:Agauge1}
\end{equation}
and as such the pressure tensor is given by
\begin{equation}
P_{zz}=\frac{B_0^2}{2\mu_0}\left[\frac{1}{2}\text{cos}\left(\frac{2A_x}{B_0L}\right)+\text{exp}\left(\frac{2A_y}{B_0L}\right)+b\right]. \label{eq:P_Harrison}
\end{equation}
The constant $b>1/2$ contributes to a `background' pressure consistent with a Maxwellian distribution, required for positivity. Figure \ref{fig:Harrisonpzz} shows the $P_{zz}$ function as defined by Equation (\ref{eq:P_Harrison}), with the overlaid contour delineating the `path' followed by $\boldsymbol{A}=(A_x(z),A_y(z),0)$ according to Equation (\ref{eq:Agauge1}), and such that $dP_{zz}/dz=0$.
\begin{figure}
    \centering
        \includegraphics[width=0.7\textwidth]{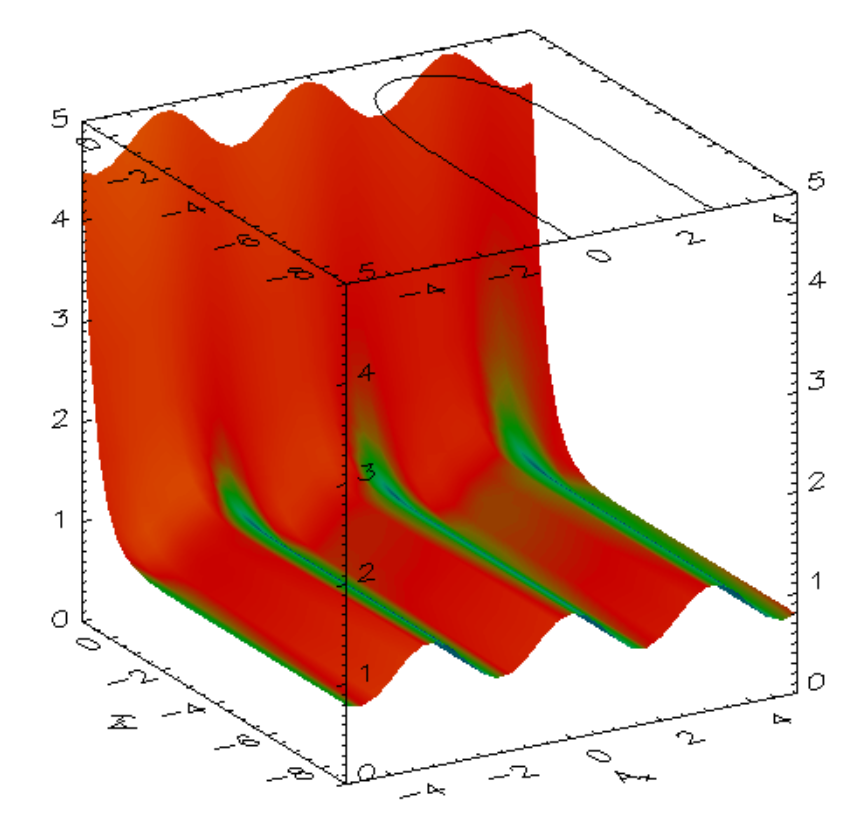}
        \caption{\small The \emph{Harrison-Neukirch} pressure function $P_{zz}$, with overlaid contour delineating the path in $(A_x(z),A_y(z))$ on which $dP_{zz}/dz=0$. {\bf Images copyright:} M.G. Harrison's PhD thesis \citep{Harrison-thesis}, (reproduced with permission).}
        \label{fig:Harrisonpzz}
   \end{figure}
Using either Fourier transforms or inspection to invert the Weierstrass transforms, the DF calculated to correspond to the $P_{zz}$ in Equation (\ref{eq:P_Harrison}) was given by
\[
f_s=\frac{n_{0s}}{(\sqrt{2\pi}v_{\text{th},s})^3}e^{-\beta_sH_s}\left(a_s\cos\left(\beta_su_{xs}p_{xs}\right)+e^{\beta_su_{ys}p_{ys}}+b_s\right).
\]
In this representation, $u_{xs}$ and $u_{ys}$ are bulk flow parameters in the $x$ and $y$ directions respectively, with 
\begin{eqnarray}
V_{xs}&=&\frac{u_{ys}\sinh (z/L)}{(b+1/2)\cosh ^2(z/L)},\nonumber\\
V_{ys}&=&\frac{u_{ys}}{(b+1/2)\cosh ^2(z/L)},\nonumber
\end{eqnarray}
and $|u_{xs}|=|u_{ys}|$.

\subsubsection{Summative pressures and the plasma beta}
A free choice of the plasma beta is not possible in the summative \emph{Harrison-Neukirch} equilibrium DF: it is bounded below by unity. In fact it is a feature generally observed that for pressure tensors (that correspond to force-free fields) constructed in this manner \citep{Harrison-2009PRL, Wilson-2011, Abraham-Shrauner-2013, Kolotkov-2015}, that the plasma-beta is bounded below by unity. By combining Equations (\ref{eq:forcefreerepresentation}) and (\ref{eq:p1p2balance}) we see that under the following assumptions,
\begin{enumerate}
\item  $P_1(A_x)\ge 0$ and $P_2(A_y)\ge 0$
\item  $\exists \, z_{1},z_{2}$ s.t. $\sin ^2S(z_{1})=1, \sin^2S(z_2)=0, \cos ^2S(z_{2})=1, \cos^2S(z_1)=0$.
\end{enumerate}
We justify Assumption 1. by the following argument. Whilst formally we only require the sum $P_{zz}=P_1(A_x(z))+P_2(A_y(z))\ge 0$ (since pressure can't be negative), we do in fact require $P_1(A_x)\ge 0$ and $P_2(A_y)\ge 0$ individually. The inverse problem defined by Equation (\ref{eq:Channell}) ties together the dependence of $P_{zz}$ on $A_x$ and $A_y$ to the dependence of the DF on $p_{xs}$ and $p_{ys}$ respectively. As the DF must be positive with respect to the independent variation of $p_{xs}$ or $p_{ys}$, so must $P_{zz}$ be with respect to independent variations of $A_x$ and $A_y$. 

Assumption 2. is trivially true in the case of a 1D linear force-free field, since $S(z)$ is a linear function of $z$. For the case of a nonlinear force-free field in which one of the magnetic field components goes through $0$, and the other tends to 0 at $\pm\infty$, Assumption 2. will hold, and this is the case for the FFHS. 

If we combine Assumptions 1. and 2., then the following inequalities will hold,
\begin{eqnarray}
P_{T1}&=&P_1(A_x(z_1))+\frac{B_0^2}{2\mu_0}\sin^2S(z_1)\ge \frac{B_0^2}{2\mu_0},\label{eq:pt1}\\
P_{T2}&=&P_2(A_y(z_2))+\frac{B_0^2}{2\mu_0}\cos^2S(z_2)\ge\frac{B_0^2}{2\mu_0}.\label{eq:pt2}
\end{eqnarray}
In fact, since $P_{zz}(z)=\text{const.}$, and $P_{T1}$ and $P_{T2}$ are independent of each other through the separation of variables, we see that the inequalities in Equations (\ref{eq:pt1}) and (\ref{eq:pt2}) must in fact hold true for all $z$. Using this knowledge, and equations (\ref{eq:p1p2balance}), we conclude that
\[
P_{T}=P_{T1}+P_{T2}\ge 2\frac{B_0^2}{2\mu_0}\implies P_1(A_x)+P_2(A_y) +\frac{B_0^2}{2\mu_0}\ge 2\frac{B_0^2}{2\mu_0},
\]
and then, upon dividing through by $B_0^2/(2\mu_0)$ that
\[
\beta_{pl}+1\ge 2\implies \beta_{pl}\ge 1.
\]

\subsubsection{Exponential pressure transformation}\label{sec:exppressure}
The lower bound of unity on the $\beta_{pl}$ for the DFs considered by \citet{Harrison-2009PRL, Wilson-2011, Abraham-Shrauner-2013, Kolotkov-2015} could be considered a problem for modelling the solar corona. Formally, $\beta_{pl}$ is defined as the ratio of the thermal energy density to the magnetic energy density;
\begin{equation}
\beta_{pl}=\sum_s\beta_{pl,s}=\frac{2\mu_0k_B}{B_0^2}\sum_sn_sT_s, \label{eq:plasmabeta}
\end{equation} 
for $n_s$ and $T_s$ the number density and temperature - of species $s$ - respectively. In a 1D Cartesian geometry, and for a DF of the form of Equation (\ref{eq:F_form}), the following relation holds
\[
P_{zz,s}=\frac{n_{s}}{\beta_{s}}=n_{s}k_{B}T_{s},
\]
e.g. see \citet{Channell-1976, Harrison-2009POP}. As a result the plasma beta can be written in the more familiar form, 
\[
\beta_{pl}=\frac{2\mu_0P_{zz}}{B_0^2}, \hspace{3mm}\text{s.t.}\,P_{zz}=\sum_sP_{zz,s}
\]
In this chapter we take the $P_{zz}$ used in \citet{Harrison-2009PRL, Neukirch-2009, Wilson-2011, Kolotkov-2015}, which is given by Equation (\ref{eq:P_Harrison}), and transform it as in Equation (\ref{eq:Ptrans}) with the exponential function according to 
\begin{equation}
\psi(P_{zz})=\exp\left[\frac{1}{P_0}\left(P_{zz}-P_{ff}\right)\right],\label{eq:Pfunc}
\end{equation}
with $P_0$ a freely chosen positive constant. This gives $\bar{P}_{zz,ff}=P_0$, and so the plasma pressure can be as low or high as desired. \citet{Channell-1976} showed that under the assumptions used in this chapter,
\begin{equation}
P_{zz}(A_x,A_y)=\frac{\beta_e+\beta_i}{\beta_e\beta_i}n(A_x,A_y),\label{eq:Ppropn}
\end{equation}
where $n=n_i=n_e$. Equation (\ref{eq:plasmabeta}) then gives
\begin{equation}
\beta_{pl}=\frac{2\mu_0P_{zz,ff}}{B_0^2}=\frac{2\mu_0P_0}{B_0^2}.\nonumber
\end{equation} 
Hence, a freely chosen $P_0$ corresponds directly to a freely chosen $\beta_{pl}$.

We note here that this pressure transformation can also be implicitly seen for the different linear force-free cases presented in the literature, although this connection has never been made. For example, the pressure function in \citet{Sestero-1967} (and implicitly in  \cite{Bobrova-2001}) is an exponentiated version of that in \citet{Channell-1976,Attico-1999}. A further interesting aspect is that the momentum dependent parts of the DFs are also related to each other exponentially in the linear force-free case.

Obviously, even if integral Equation (\ref{eq:Channell}) can be solved for the original function $P_{zz}(A_x, A_y)$ it is by no means clear that this is possible for the transformed function $\bar{P}_{zz}$. Usually one would expect that solving equation (\ref{eq:Channell}) for $g_{s}$ is much more difficult after the transformation to $\bar{P}_{zz}$.

\section{VM equilibria for the force-free Harris sheet:\\ $\beta_{pl}\in (0,\infty)$}\label{sec:newdf1}
\subsection{Calculating the DF}
The pressure function in Equation (\ref{eq:P_Harrison}) describes $\beta_{pl}\ge 1$ regimes, and we are to transform according to Equations (\ref{eq:Ptrans}) and (\ref{eq:Pfunc}) in order to realise $\beta_{pl}<1$, resulting in
\begin{equation}
\bar{P}_{zz}=P_0\exp\left\{\frac{1}{2\beta_{pl}}\left[\cos\left(\frac{2A_x}{B_0L}\right)+2\exp\left(\frac{2A_y}{B_0L}\right)              -1 \right]                   \right\}.\nonumber
\end{equation}
The $-1/(2\beta_{pl})$ term comes from the fact that ${P_{ff}=B_0^2/(2\mu_0)(1+(b-1/2))}$, readily seen for $z=0$, for example. Note that $P_{zz}$ is constant over $z$, and so we can evaluate at any $z$ to calculate $P_{ff}$. Exponentiation of $P_{zz}$ has clearly resulted in a complicated LHS of Equation (\ref{eq:Channell}), i.e.
\begin{eqnarray}
&&P_0\exp\left\{\frac{1}{2\beta_{pl}}\left[\cos\left(\frac{2A_x}{B_0L}\right)+2\exp\left(\frac{2A_y}{B_0L}\right)              -1 \right]\right\}   = \frac{\beta_{e}+\beta_{i}}{\beta_{e}\beta_{i}}\frac{n_{0s}}{2\pi m_{s}^2v_{\text{th},s}^2}\nonumber\\
&&\times\int_{-\infty}^\infty\int_{-\infty}^\infty  e^{-\beta_{s}\left((p_{xs}-q_{s}A_x)^2+(p_{ys}-q_{s}A_y)^2\right)/(2m_{s})}g_{s}(p_{xs},p_{ys})dp_{xs}dp_{ys},    \label{eq:challenging}
\end{eqnarray}
and so the inverse problem defined above is mathematically challenging.  

Since exponentiation of the `summative' pressure function results in a `multiplicative' one, we shall exploit separation of variables by assuming $g_s\propto g_{1s}(p_{xs})g_{2s}(p_{ys})$, whilst noting that $\bar{P}_{zz}\propto \bar{P}_1(A_x)\bar{P}_2(A_y)$. This assumption leads to integral equations of the form of those in Equations (\ref{eq:p1tog1}) and (\ref{eq:p2tog2}),
\begin{eqnarray}
\bar{P}_1(A_x)&\propto&\int_{-\infty}^{\infty}e^{-\beta_{s}\left(p_{xs}-q_sA_x\right)^2/(2m_s)}g_1(p_{xs})dp_{xs},\label{eq:P1tog1}\\
\bar{P}_2(A_y)&\propto&\int_{-\infty}^{\infty}e^{-\beta_{s}\left(p_{ys}-q_sA_y\right)^2/(2m_s)}g_2(p_{ys})dp_{ys},\label{eq:P2tog2}
\end{eqnarray}  
in which the LHS are formed of exponentiated cosine and exponential functions, respectively. From Equation (\ref{eq:challenging}), we see that the inverse problem now defined by Equations (\ref{eq:P1tog1}) and (\ref{eq:P2tog2}) is not analytically soluble by Fourier transform methods. Hence, we resolve to use the Hermite polynomial method from Chapter \ref{Vlasov}.

The first step is to Maclaurin expand the exponentiated pressure function of Equation (\ref{eq:P_Harrison}) according to Equations (\ref{eq:Ptrans}) and (\ref{eq:Pfunc}). Exponentiation of a power series is a combinatoric problem, and was tackled by E.T. Bell in \citet{Bell-1934}. If $h(x)=\exp k(x)$, and $k(x)$ is given by the power series 
\begin{equation}
k(x)=\sum_{n=1}^\infty \frac{1}{n!}\zeta_nx^n,\nonumber
\end{equation} 
then
\begin{equation}
h(x)=\sum_{n=0}^\infty \frac{1}{n!}Y_n(\zeta_1,\zeta_2,...,\zeta_n)x^n,\nonumber
\end{equation}
for $Y_n$ the $n$th Complete Bell Polynomial (CBP), with $Y_0=1$. These can be defined explicitly for $n\ge 1$ by Fa\`{a} di Bruno's determinant formula as the determinant of an $n\times n$ matrix \citep{Johnson-2002},
 \begin{equation}
Y_n(\zeta_1,\zeta_2,...\zeta_n)=\left|\begin{matrix}
  {n-1 \choose 0}\zeta_1 &{n-1 \choose 1}\zeta_2  &{n-1 \choose 2}\zeta_3  &\hdots &{n-1 \choose n-2}\zeta_{n-1} &{n-1 \choose n-1}\zeta_n \\
  -1 &{n-2 \choose 0}\zeta_1  & {n-2 \choose 1}\zeta_2 &\hdots &{n-2 \choose n-3}\zeta_{n-2} &{n-2 \choose n-1}\zeta_{n-1} \\
  0 & -1 & {n-3 \choose 0}\zeta_1  & \hdots&{n-3 \choose n-4}\zeta_{n-3} &{n-3 \choose n-3}\zeta_{n-2}\\
  \vdots  &\vdots &\vdots & &\vdots &\vdots\\
0& 0& 0& \hdots& {1\choose 0}\zeta_1&{1\choose 1}\zeta_2\\
0& 0& 0& \hdots& -1& {0\choose 0}\zeta_1\\
 \end{matrix}\right|.\label{eq:Bruno}
\end{equation}
For example $Y_1(\zeta_1)=\zeta_1$ and $Y_2(\zeta_1,\zeta_2)=\zeta_1^2+\zeta_2$. We include this determinant form here since this is the representation we use to plot the DF. Instructive references on CBPs can be found in \citet{Riordan-1958, Comtet-1974, Kolbig-1994, Connon-2010}, for example. Another representation for the CBPs is given by \citet{Connon-2010}, where for $n\ge 1$ the $Y_n$ can be written as
\begin{equation}
Y_n(\zeta_1,\zeta_2,...\zeta_n)=\sum_{\pi (n)}\frac{n!}{k_1!k_2!...k_n!}\left(\frac{\zeta_1}{1!}\right)^{k_1}\left(\frac{\zeta_2}{2!}\right)^{k_2}...\left(\frac{\zeta_n}{n!}\right)^{k_n},\label{eq:connonrep}
\end{equation}
where the sum is taken over all partitions $\pi(n)$ of $n$, i.e. over all sets of integers $k_j$ such that
\[
k_1+2k_2+...+nk_n =n.
\]
Using CBPs, and a simple scaling argument \citep{Bell-1934, Connon-2010}, immediately seen from equation (\ref{eq:connonrep}),
\begin{equation}
Y_n(a\zeta_1,a^2\zeta_2,...,a^n\zeta_n)=a^nY_n(\zeta_1,\zeta_2,...\zeta_n),\label{eq:CBPscale}
\end{equation}
we can derive the Maclaurin expansion of the transformed pressure, making use of
\[
\cos\left(\frac{2A_x}{B_0L}\right)=\sum_{n=0}^\infty\frac{(-1)^n}{(2n)!}\left(\frac{2A_x}{B_0L}\right)^{2n},\hspace{3mm}\exp\left(\frac{2A_y}{B_0L}\right)=\sum_{n=0}^\infty\frac{(-1)^n}{(2n+1)!}\left(\frac{2A_y}{B_0L}\right)^{2n+1}.
\]
The Maclaurin expansion is found to be
\begin{equation}
\bar{P}_{zz}=P_0e^{-1/(2\beta_{pl})}\sum_{m=0}^\infty a_{2m}\left(\frac{A_x}{B_0L}\right)^{2m}\sum_{n=0}^\infty b_n \left(\frac{A_y}{B_0L}\right)^n, \label{eq:Pmacl}
\end{equation} 
with
\begin{equation}
a_{2m}=e^{1/(2\beta_{pl})}\frac{(-1)^m2^{2m}}{(2m)!}Y_{2m}\left(0,\frac{1}{2\beta_{pl}}, 0 , ... , 0 , \frac{1}{2\beta_{pl}}\right),\label{eq:a2msimple}
\end{equation}
and
 \begin{equation}
b_n=e^{1/\beta_{pl}}\frac{2^n}{n!}Y_n\left(\frac{1}{\beta_{pl}}, ..., \frac{1}{\beta_{pl}}\right).\label{eq:bnsimple}
\end{equation}
This allows us to formally solve the inverse problem for the unknown functions $g_{1s}(p_{xs})$ and $g_{2s}(p_{ys})$ in terms of Hermite polynomials (using results from Chapter \ref{Vlasov}), giving
\begin{eqnarray}
&&f_s(H_s,p_{xs},p_{ys})=\displaystyle \frac{n_{0s}}{\left(\sqrt{2\pi}v_{\text{th},s}\right)^3}e^{-1/(2\beta_{pl})}\times\nonumber\\
&&\left[\displaystyle\sum_{m=0}^\infty C_{2m,s}H_{2m}\left(\frac{p_{xs}}{\sqrt{2}m_sv_{\text{th},s}}\right)\sum_{n=0}^\infty D_{ns}H_n\left(\frac{p_{ys}}{\sqrt{2}m_sv_{\text{th},s}}\right)\right]e^{-\beta_sH_s},\label{eq:result}
\end{eqnarray}
for species-dependent coefficients $C_{2m,s}$ and $D_{ns}$. As discussed in Chapter \ref{Vlasov}, we fix the micro-macroscopic parameter relationships by the following conditions
\begin{eqnarray}
\sigma(A_x,A_y)&=&0,\nonumber\\
P_0\exp\left\{\frac{1}{2\beta_{pl}}\left[\cos\left(\frac{2A_x}{B_0L}\right)+2\exp\left(\frac{2A_y}{B_0L}\right)              -1 \right]                   \right\}&=&m_s\sum_s\int v_z^2f_sd^3v,\nonumber
\end{eqnarray}
for the $f_s$ given by Equation (\ref{eq:result}). After performing the necessary integrations, these conditions are satisfied by fixing the parameters according to
\begin{eqnarray}
n_{0i}&=&n_{0e}=n_0,\hspace{3mm}P_0=n_0\frac{\beta_e+\beta_i}{\beta_e\beta_i}\nonumber\\
C_{2m,s}&=&\left(\frac{\delta_s}{\sqrt{2}}\right)^{2m}a_{2m},\hspace{3mm}D_{ns}=\text{sgn}(q_s)^n\left(\frac{\delta_s}{\sqrt{2}}\right)^{n}b_{n}.\nonumber
\end{eqnarray}
As yet, the distribution of Equation (\ref{eq:result}), together with the micro-macroscopic conditions, is only a formal solution to the inverse problem posed, and we now proceed to confirm the convergence and boundedness properties, using techniques from Chapter \ref{Vlasov}.

\subsection{Convergence and boundedness of the DF}\label{app:A}
Here we include the full details of the calculations that confirm the validity of the Hermite Polynomial representation of the multiplicative FFHS equilibrium in the `original' gauge (Equation (\ref{eq:Agauge1})). We shall first verify the convergence of $g_{2s}$ (expanded over $n$ in Equation (\ref{eq:result})) using the convergence condition from Section \ref{sec:convergence}, and then verify convergence of $g_{1s}$ by comparison with $g_{2s}$. 
\subsubsection{Convergence of the $p_{ys}$ dependent sum}
As Theorem \ref{thm:HermiteConvergence} states, we can verify convergence of $g_{2s}$ provided
\[
\lim_{{n\to\infty}}\sqrt{n}\left|\frac{b_{n+1}}{b_n}\right|<1/\delta_s
.\]
Explicit expansion of the exponentiated exponential series by `twice' using Maclaurin series (as opposed to the CBP formulation of Equation (\ref{eq:bnsimple})) gives
\begin{eqnarray}
\tilde{P}_2(\tilde{A}_y)=\exp \left(\frac{1}{\beta_{pl}}\exp\left(\frac{2A_y}{B_0L}\right)\right)&=&\sum_{k=0}^\infty \frac{1}{\beta_{pl}^kk!} \exp\left(\frac{2kA_y}{B_0L}\right),\nonumber\\
&=&\sum_{k=0}^\infty  \frac{1}{\beta_{pl}^kk!}\sum_{n=0}^\infty\frac{2^nk^n}{n!}\left(\frac{A_y}{B_0L}\right)^n\nonumber\\
&=&\sum_{n=0}^\infty b_n\left(\frac{A_y}{B_0L}\right)^n\nonumber ,
\end{eqnarray}
such that $b_n$ are defined by
\begin{equation}
b_n=\frac{2^n}{n!}\sum_{k=0}^\infty \frac{k^n}{\beta_{pl}^kk!},\label{eq:bmac}
\end{equation}
And for which the sum over $k$ is itself a convergent series, meaning that the $b_n$ are well-defined. Using the definition of $b_n$ and $b_{n+1}$ gives
\begin{eqnarray*}
b_{n+1}/b_n&=&\frac{2}{n+1}\sum_{j=0}^\infty \frac{j^{n+1}}{j!\beta_{pl}^{j}}\biggr/\sum_{j=0}^\infty\frac{j^n}{j!\beta_{pl}^{ j}}\\
&=&\frac{2}{n+1}\left(\frac{\displaystyle 0+\frac{1}{0!\beta_{pl}}+\frac{2^n}{1!\beta_{pl}^{ 2}}+\frac{3^n}{2!\beta_{pl}^{ 3}}+...}{\displaystyle 0+ \frac{1}{1!\beta_{pl}}+\frac{2^n}{2!\beta_{pl}^{ 2}}+\frac{3^n}{3!\beta_{pl}^{ 3}}+...}\right)\\
&=&\frac{2}{n+1}\left(\frac{\displaystyle \frac{1}{\beta_{pl}}+2\frac{2^n}{2!\beta_{pl}^{ 2}}+3\frac{3^n}{3!\beta_{pl}^{ 3}}+...}{\displaystyle \displaystyle \frac{1}{1!\beta_{pl}}+\frac{2^n}{2!\beta_{pl}^{ 2}}+\frac{3^n}{3!\beta_{pl}^{ 3}}+...}\right).
\end{eqnarray*}
The $k$th `partial sum' of this fraction has the form
\[
\mathcal{S}_{n,k}=\frac{p_1+2p_2+3p_3+...+kp_k}{p_1+p_2+p_3+...}
\]
with $p_i\asymp 1/i!$, where we write $g\asymp h$ to mean $g/h$ and $h/g$ are bounded away from $0$. Now since the denominator of the $p_{i}$ increase factorially we have $ i p_{i}\asymp p_{i}$ and hence
\[
0<\sum_{i=1}^{\infty}ip_{i}<\infty \hspace{3mm}{\rm and }\hspace{3mm} 0<\sum_{i=1}^{\infty}p_{i}<\infty.
\]
Thus $\mathcal{S}_{n,k}\to \mathcal{S}_{n,\infty}\in (0,\infty)$ and, more specifically, $\mathcal{S}_{n,\infty}\asymp 1$ in $n$. 
Therefore 
\[
b_{n+1}/b_{n}= \mathcal{S}_{n,\infty}/(n+1)\asymp 1/n.
\]
That is to say $b_{n+1}/b_{n}$ behaves asymptotically like $1/n$. This satisfies the condition of Theorem 1. Hence $g_{2s}(p_{ys})$ converges  for all $\delta_s$ and $p_{ys}$ by the comparison test.

\subsubsection{Convergence of the $p_{xs}$ dependent sum}
We shall now verify convergence of $g_{1s}$, by comparison with $g_{2s}$. By explicitly using the Maclaurin expansion of the exponential, and then the power-series representation for $\cos^nx$ from \citet{Gradshteyn}
\begin{eqnarray*}
\cos ^{2n}x&=&\frac{1}{2^{2n}}\left[\sum_{k=0}^{n-1}2{2n \choose k}\cos (2(n-k)x)+{2n \choose n} \right],\\
\cos ^{2n-1}x&=&\frac{1}{2^{2n-2}}\sum_{k=0}^{n-1}{2n-1 \choose k}\cos ((2n-2k-1)x),
\end{eqnarray*}
one can calculate 
\[
\tilde{P}_1(\tilde{A}_x)=\exp\left(\frac{1}{2\beta_{pl}}\cos\left(\frac{2A_x}{B_0L}\right)\right)=\sum_{m=0}^{\infty}a_{2m}\left(\frac{A_x}{B_0L}\right)^{2m}.
\]
The zeroth coefficient is given by $a_0=\exp\left(1/(2\beta_{pl})\right)$, and the rest are
\[
a_{2m}=\frac{2(-1)^m}{(2m)!}\sum_{k=0}^\infty\sum_{j\in J_{k}} \frac{1}{j!(4\beta_{pl})^j}{j \choose k}(j-2k)^{2m}, \label{eq:appendix1}
\]
for $J_{k}=\left\{2k+1, 2k+2, ...\right\}$ and $m\neq 0$. By rearranging the order of summation, $a_{2m}$ can be written 
\[
a_{2m}=\frac{2(-1)^m}{(2m)!}\sum_{j=1}^\infty \frac{1}{j!(4\beta_{pl})^j}\sum_{k=0}^{\lfloor (j-1)/2 \rfloor}{j \choose k}(j-2k)^{2m}, 
\]
where $\lfloor x \rfloor$ is the floor function, denoting the greatest integer less than or equal to $x$. Recognising an upper bound in the expression for $a_{2m}$;
\[
\sum_{n=0}^{\lfloor (j-1)/2 \rfloor}{j \choose n}(j-2n)^{2m}\leq j^{2m}\sum_{n=0}^j{j \choose n}=2^{j}j^{2m},
\]
gives 
\begin{eqnarray*}
a_{2m}<\frac{2(-1)^m}{(2m)!}\sum_{j=1}^\infty\frac{2^{j+1}j^{2m}}{j!2^j(2\beta_{pl})^j}&=&2\frac{(-1)^m}{(2m)!}\sum_{j=1}^\infty\frac{j^{2m}}{j!(2\beta_{pl})^j},\\
&\le & \frac{2}{(2m)!}\sum_{j=1}^\infty\frac{j^{2m}}{j!(2\beta_{pl})^j},\\
&=&\frac{1}{(2m)!}\sum_{j=1}^\infty\frac{2^{1-j}j^{2m}}{j!\beta_{pl}^j}<b_{2m}.
\end{eqnarray*}
Hence we now have an upper bound on $a_{2m}$ for $m\neq 0$ and we know that $a_{2m+1}=0$, and so is bounded above by $b_{2m+1}$. Note also that $a_0<b_0$. Hence, each term in our series for $g_{1s}(p_{xs})$ is bounded above by a series known to converge for all $\delta_s$ according to
\[
a_l\left(\frac{\delta_s}{\sqrt{2}}\right)^lH_l(x)<b_l\left(\frac{\delta_s}{\sqrt{2}}\right)^lH_l(x),\hspace{3mm}\forall l.
\]
So by the comparison test, we can now say that $g_{1s}\left(p_{xs}\right)$ is a convergent series. Hence the representation of the DF in Equation (\ref{eq:result}) is convergent. 


\subsubsection{Boundedness of the DF}
The boundedness of the DF in Equation (\ref{eq:result}) is now guaranteed by the reasoning from Section \ref{sec:boundedness} for a general solution, and need not be repeated here.

\subsection{Moments of the DF} 

\label{Appendix-PoP} 
The moments of the DF are used to calculate the number density and bulk velocity, and in turn the charge and current densities respectively. It is useful to calculate these quantities from the DF to confirm parity with the required macroscopic quantities not only as a procedural check, but also to derive relations between the micro- and macroscopic parameters. 
\subsubsection{The zeroth order moment}
The number density is found by taking the zeroth moment;
\begin{eqnarray*}
&&n_s(A_x,A_y)=\frac{e^{-\frac{1}{2\beta_{pl}}}}{m_s^3}\frac{n_0}{(\sqrt{2\pi}v_{\text{th},s})^3}\times\\
&&\biggr [\sum_{m=0}^\infty C_{2m,s}\int_{-\infty}^\infty e^{-\frac{\beta_s}{2m_s}(p_{xs}-q_sA_x)^2}H_{2m}\left(\frac{p_{xs}}{\sqrt{2}m_sv_{\text{th},s}}\right)dp_{xs}\times\\
&&\sum_{n=0}^\infty D_{ns}\int_{-\infty}^\infty e^{-\frac{\beta_s}{2m_s}(p_{ys}-q_sA_y)^2}H_{n}\left(\frac{p_{ys}}{\sqrt{2}m_sv_{\text{th},s}}\right)dp_{ys}\int_{-\infty}^\infty e^{-\frac{\beta_s}{2m_s}p_{zs}^2}dp_{zs}\biggr],
\end{eqnarray*}
which, after integrating over $p_{zs}$ and making substitutions,  gives
\begin{eqnarray*}
n_s(A_x,A_y)&=\displaystyle\frac{n_0e^{-\frac{1}{2\beta_{pl}}}}{\pi}\sum_{m=0}^\infty C_{2m,s}\int_{-\infty}^\infty e^{-(X-\frac{q_sA_x}{\sqrt{2}m_sv_{\text{th},s}})^2}H_{2m}(X)dX\\
&\times\displaystyle\sum_{n=0}^\infty D_{ns}\int_{-\infty}^\infty e^{-(Y-\frac{q_sA_y}{\sqrt{2}m_sv_{\text{th},s}})^2}H_n(Y)dY.
\end{eqnarray*}
Use the standard integral \citep{Gradshteyn}, 
\begin{equation}
\int_{-\infty}^\infty e^{-(x-y)^2}H_n(x)dx=\sqrt{\pi}2^ny^n,\nonumber
\end{equation}
to give
\begin{eqnarray*}
n_s(A_x,A_y)&=&n_0e^{-\frac{1}{2\beta_{pl}}}\displaystyle\sum_{m=0}^\infty C_{2m,s}2^{2m}\left(\frac{q_sA_x}{\sqrt{2}m_sv_{\text{th},s}}\right)^{2m}\displaystyle\sum_{n=0}^\infty D_{ns}2^n\left(\frac{q_sA_y}{\sqrt{2}m_sv_{\text{th},s}}\right)^{n}\\
&=&\frac{n_0}{P_0}\bar{P}_{zz}.
\end{eqnarray*}
Using $P_{zz,ff}=P_0$, we see that 
\begin{equation}
n_{ff}=n_0,\nonumber
\end{equation}
and so $n_0$ represents the constant particle number density.

\subsubsection{The $v_x$ moment}
We now take the first moment of the DF by $v_x$ denoted by $[v_xf_s]$;
\begin{eqnarray*}
&&[v_xf_s]=\displaystyle\frac{1}{m_s^3}\int_{-\infty}^{\infty}\int_{-\infty}^{\infty}\int_{-\infty}^{\infty} v_xf_sd^3p,\\
&&=\displaystyle\frac{n_0e^{-\frac{1}{2\beta_{pl}}}}{(\sqrt{2\pi})m_sv_{\text{th},s}}\sum_{n=0}^\infty b_n\left(\frac{A_y}{B_0L}\right)^n\sum_{m=0}^\infty C_{2m,s}\times\\
&&\underbrace{\int_{-\infty}^\infty v_xe^{-\frac{\beta_s}{2m_s}(p_{xs}-q_sA_x)^2}H_{2m}\left(\frac{p_{xs}}{\sqrt{2}m_sv_{\text{th},s}}\right)dp_{xs}}_{I_{v_x}},
\end{eqnarray*}
after both the $p_{ys}$ and $p_{zs}$ integrations. Now, use the Hermite expansion of the exponential \citep{Morse-1953}, to give
\begin{eqnarray*}
&&I_{v_x}=\frac{1}{m_s}\int_{-\infty}^\infty (p_{xs}-q_sA_x)H_{2m}\left(\frac{p_{xs}}{\sqrt{2}m_sv_{\text{th},s}}\right)e^{-\frac{\beta_sp_{xs}^2}{2m_s}}\times\\
&&\left[\sum_{j=0}^\infty\frac{1}{(j)!}\left(\frac{q_sA_x}{\sqrt{2}m_sv_{\text{th},s}}\right)^jH_{j}\left(\frac{p_{xs}}{\sqrt{2}m_sv_{\text{th},s}}\right)\right]dp_{xs}.
\end{eqnarray*}
Now define an inner product according to
\begin{equation}
\langle f_1(x),f_2(x)\rangle =\int_{-\infty}^\infty e^{-x^2}f_1(x)f_2(x)dx.
\end{equation}
Then orthogonality of the Hermite polynomials (Equation (\ref{eq:orthogonal})), and the recurrence relation, $H_{n+1}(x)=2xH_n(x)-2nH_{n-1}(x)$, are used to give
\begin{equation}
\begin{split}
\langle xH_j(x),H_{2m}(x)\rangle&=j\langle H_{j-1}(x),H_{2m}(x)\rangle+\frac{1}{2}\langle H_{j+1}(x),H_{2m}(x)\rangle\\
&=\sqrt{\pi}2^{2m}(2m)!\left(j\delta_{j-1,2m}+\frac{1}{2}\delta_{j+1,2m}\right).
\end{split}
\end{equation}
This allows us to write
\begin{eqnarray*}
&&I_{v_x}=\displaystyle\sqrt{2\pi}v_{\text{th},s}2^{2m}(2m)!\times\\
&&\sum_{j=0}^\infty \frac{1}{j!}\left(\frac{q_sA_x}{\sqrt{2}m_sv_{\text{th},s}}\right)^j\biggr[\sqrt{2}m_sv_{\text{th},s}\left(j\delta_{j-1,2m}+\frac{1}{2}\delta_{j+1,2m}\right)-q_sA_x\delta_{j,2m}\biggr].\nonumber
\end{eqnarray*}
Hence, we have
\begin{eqnarray*}
[v_xf_s]&=&\displaystyle\frac{n_0e^{-\frac{1}{2\beta_{pl}}}}{m_s}\sum_{n=0}^\infty b_n\left(\frac{A_y}{B_0L}\right)^n\sum_{m=0}^\infty C_{2m,s}2^{2m}(2m)!\\
&&\times\sum_{j=0}^\infty \frac{1}{j!}\left(\frac{q_sA_x}{\sqrt{2}m_sv_{\text{th},s}}\right)^j\biggr[\sqrt{2}m_sv_{\text{th},s}\left(j\delta_{j-1,2m}+\frac{1}{2}\delta_{j+1,2m}\right)-q_sA_x\delta_{j,2m}\biggr].
\end{eqnarray*}
reducing to
\begin{eqnarray}
\displaystyle[v_xf_s]&=&\displaystyle\left(\frac{m_sv_{\text{th},s}^2}{q_sB_0L}\right)n_0e^{-\frac{1}{2\beta_{pl}}}\sum_{n=0}^\infty b_n\left(\frac{A_y}{B_0L}\right)^n\sum_{m=1}^\infty a_{2m}2m\left(\frac{A_x}{B_0L}\right)^{2m-1}\nonumber \\
&=&\left(\frac{m_sv_{\text{th},s}^2}{q_sP_0}\right)n_0\frac{\partial \bar{P}_{zz}}{\partial A_x}=\frac{\beta_e\beta_i}{\beta_e+\beta_i}\left(\frac{1}{q_s\beta_s}\right)\frac{\partial \bar{P}_{zz}}{\partial A_x}\label{eq:verify}
\end{eqnarray}

The $x$ component of current density is defined as $j_x=\sum_sq_s[v_xf_s]$, giving
\begin{eqnarray}
j_x&=&\frac{\beta_e\beta_i}{\beta_e+\beta_i}\frac{\partial \bar{P}_{zz}}{\partial A_x}\sum_s\frac{1}{\beta_s}=\frac{\partial \bar{P}_{zz}}{\partial A_x}\nonumber\\
\implies j_x&=&\frac{\partial \bar{P}_{zz}}{\partial A_x},
\end{eqnarray}
reproducing the familiar result from e.g. \citet{Channell-1976, Harrison-2009POP, Schindlerbook, Mynick-1979a}. The first moment of the DF can also be used to calculate the bulk velocity in terms of the microscopic parameters;
\begin{equation}
V_{xs}=\frac{[v_xf_s]}{n_s}=\frac{j_x}{q_s\beta_sP_0},
\end{equation}
using Equation (\ref{eq:verify}). Then, by using the current density for the FFHS (Equation (\ref{eq:FFHScurrent})),
\begin{equation}
\boldsymbol{j}=\frac{B_0}{\mu_0L}\left(\frac{\text{sinh}\left(\frac{z}{L}\right)}{\text{cosh}^2\left(\frac{z}{L}\right)},\frac{1}{\text{cosh}^2\left(\frac{z}{L}\right)}   ,0 \right),
\end{equation}
we have the bulk flow in $x$
\begin{equation}
V_{xs}=\frac{B_0}{\mu_0 Lq_s\beta_sP_0}\frac{\text{sinh}\left(\frac{z}{L}\right)}{\text{cosh}^2\left(\frac{z}{L}\right)}.
\end{equation}

\subsubsection{The $v_y$ moment}
By a completely analogous calculation, we derive the $v_y$ moment of the DF,
\begin{eqnarray*} 
[v_yf_s]&=&\displaystyle\left(\frac{m_sv_{\text{th},s}^2}{P_0q_s}\right)n_0\frac{\partial \bar{P}_{zz}}{\partial A_y}\\
&=&\displaystyle\frac{\beta_e\beta_i}{\beta_e+\beta_i}\left(\frac{m_sv_{\text{th},s}^2}{q_s}\right)\frac{\partial \bar{P}_{zz}}{\partial A_y}
\end{eqnarray*}
Again, the current density $j_y=\sum_sq_s[v_yf_s]$ gives
\begin{eqnarray*}
j_y&=&\displaystyle\frac{\beta_e\beta_i}{\beta_e+\beta_i}\frac{\partial P_{zz}}{\partial A_y}\sum_sm_sv_{\text{th},s}^2=\frac{\partial \bar{P}_{zz}}{\partial A_y}\\
\implies j_y&=&\frac{\partial \bar{P}_{zz}}{\partial A_y}.
\end{eqnarray*}
We can also calculate the bulk velocity in terms of the microscopic parameters;
\begin{equation}
V_{ys}=\frac{B_0}{\mu_0 Lq_s\beta_sP_0}\frac{1}{\text{cosh}^2\left(z/L\right)}.
\end{equation}

\subsection{Properties of the DF}
\subsubsection{Current sheet width}
The nature of the inverse problem is to calculate a microscopic description of a system, given certain prescribed macroscopic data. Hence, one of the main tasks is to find the relationships between the characteristic parameters of each level of description. That is to say, given $(B_0,P_0,L)$ for example, what is their relation to $(m_s, q_s, v_{\text{th},s}, n_{0s})$?

Currently, there are six free parameters that will determine the nature of the equilibrium. These are $n_{0}$, $\beta_{pl}$, $\beta_{th,i}$, $\beta_{th,e}$,  $\delta_i$ and $\delta_e$. $n_{0}$ is in principle fixed by ensuring that the DF is normalised to the total particle number. As yet we have no information regarding the width of the current sheet $L$. To this end we shall consider bulk velocities $V_{xs}$ and $V_{ys}$, obtained from the first moment of the DF. The calculations in Section \ref{Appendix-PoP}, together with the fact that $B_0=\sqrt{2\mu_0P_0/\beta_{pl}}$ give
\begin{eqnarray*}
V_{xs}&=&\frac{[v_xf_s]}{n_0}=\displaystyle\sqrt{\frac{2}{\mu_0\beta_{pl}P_0}}\frac{1}{Lq_s\beta_s}\frac{\text{sinh}\left(z/L\right)}{\text{cosh}^2\left(z/L\right)},\\
V_{ys}&=&\frac{[v_yf_s]}{n_0}=\displaystyle\sqrt{\frac{2}{\mu_0\beta_{pl}P_0}}\frac{1}{Lq_s\beta_s}\frac{1}{\text{cosh}^2\left(z/L\right)}.
\end{eqnarray*}
We can identify the coefficient of the $z$ dependent profiles as the amplitude of the bulk velocities, $V_{xs}$ and $V_{ys}$, as $u_{s}$, given by
\begin{equation}
u_{s}=\sqrt{\frac{2}{\mu_0\beta_{pl}P_0}}\frac{1}{Lq_s\beta_s},\label{eq:us}
\end{equation}
giving
\begin{eqnarray}
(u_{i}-u_{e})^2&=&\frac{2}{\mu_0\beta_{pl}P_0L^2e^2}\left(\frac{\beta_e+\beta_i}{\beta_e\beta_i}\right)^2,\\
\implies L &=&\frac{1}{e}\sqrt{\frac{2(\beta_e+\beta_i)}{\mu_0n_0\beta_e\beta_i(u_{i}-u_{e})^2\beta_{pl}}},\label{eq:length1}
\end{eqnarray}
where $e=|q_s|$. Interestingly, this is almost identical to the expression found in \citet{Neukirch-2009} for the current sheet width of the Harrison-Neukirch equilibrium, with the addition of the $\beta_{pl}^{1/2}$ factor in the denominator. It is readily seen that, given some fixed $B_0$, $L\propto\beta_{pl}^{-1/2}$. This makes sense in that, by raising the number density $n_0$, and hence $\beta_{pl}$, there are simply more current carriers available to produce $\boldsymbol{j}$, and hence the width $L$ can reduce. By manipulating Equation (\ref{eq:us}) one can show that the amplitudes of the fluid velocities are given by 
\begin{equation}
\frac{u_{s}}{v_{\text{th},s}}=2\text{sgn}(q_s)\frac{\delta_s }{\beta_{pl}}=2\text{sgn}(q_s)\frac{\rho_s}{L\beta_{pl}}.
\end{equation}
Once again, this is almost identical to the expression found in \citet{Neukirch-2009}, with the addition of a $\beta_{pl}$ factor in the denominator.

\subsubsection{Plots of the DF}\label{sec:plotsgauge1}
Having found mathematical expressions for the DFs, we now present different plots of their dependence on $v_x$ and $v_y$, for $z/L=0,-1,1$. Plotting $f_s$ in the original gauge is a challenging numerical task, and particularly for the low-$\beta_{pl}$ regime. The reasoning is as follows. When $\beta_{pl}<1/2$, the $C_{2m,s}$ (for example) are readily seen to be of the order
\begin{equation*}
\left(\frac{1}{\sqrt{2}}\right)^{2m}\frac{1}{(2m)!}\left(\frac{\delta_s}{\beta_{pl}}\right)^{2m},
\end{equation*}
since $Y_{2m}$ is a polynomial of order $2m$ in $1/(2\beta_{pl})$. The factorial dependence in the denominator ensures that these terms $\to 0$ as $m \to \infty$. But, for relatively small $m$ there is a competition between the factorial and the $\beta_{pl}^{-2m}$, factor. This means that one must go to many terms in the expansion to get near numerical convergence. As a result, one needs to calculate both incredibly small (e.g. the $1/(2m)!$ factor), and incredibly large numbers (the $Y_{2m}$ factors), and combine them to reach $C_{2m,s}$. 

Furthermore, the Hermite polynomials become very large when the modulus of the argument is large. In normalised parameters, suitable for numerical methods, we have that
\begin{equation}
H_n\left(\frac{p_{js}}{\sqrt{2}m_sv_{\text{th},s}}\right)=H_n\left(\frac{1}{\sqrt{2}}\left(\tilde{v}_{js}+\text{sgn}(q_s)\delta_s^{-1}\tilde{A}_j\right)\right),\label{eq:normalisedHn}
\end{equation}
for $\tilde{p}_{js}=p_{js}/(m_sv_{\text{th},s})$, and $\tilde{A}_j=A_j/(B_0L)$. In particular, small values of $\delta_s$ mean that one needs to calculate $H_n$ of a large number, which can itself be inordinately large since $H_n$ is a polynomial.

So, while it has been proven that the series with which we represent the DFs are convergent for all values of the relevant parameters, attaining numerical convergence is difficult for the low-$\beta_{pl}$ regime, and particularly for the $p_{xs}$ dependent sum. Here we present plots for $\beta_{pl}=0.85$ and $\delta_i=\delta_e=0.15$.  As aforementioned we use Fa\`{a} di Bruno's determinant formula in Equation (\ref{eq:Bruno}) to calculate the CBP's, and a recurrence relation for the Hermite Polynomials. Whilst this $\beta_{pl}$ is only modestly below unity, however it represents a value of which we are confident of our numerics for both the $p_{xs}$ and $p_{ys}$ dependent sums. In Figures (\ref{fig:1aa})-(\ref{fig:1cc}) we plot the $v_x$ variation of our electron DF, as a representative example (the $v_y$ plots are qualitatively similar). First of all we note that the DFs appear to have only a single maximum, and fall off as $v_x\to\pm \infty$. This is to be contrasted with the plots of the DF using the additive pressure, which can have multiple peaks \citep{Neukirch-2009}. Thus far we have not found any indication of multiple peaks in the parameter regime that we have been able to explore. However, this does not mean that multiple peaks can not appear, for example for lower values of the $\beta_{pl}$.

A first look at the plots also seems to indicate that the shape of the DF resembles the shape of a Maxwellian. Motivated by this similarity, we define a Maxwellian DF according to Equation (\ref{eq:Mshift}), and repeated here,
\begin{equation}
f_{Maxw,s}=\frac{n_0}{(\sqrt{2\pi}v_{\text{th},s})^3}\exp\left[\frac{\left(\boldsymbol{v}-\boldsymbol{V}_s(z)\right)^2}{2v_{\text{th},s}^2}\right]. \label{eq:Maxshift}
\end{equation}   
The Maxwellian distribution reproduces the same first order moment in terms of $z$ as the equilibrium solution does, namely $\boldsymbol{V}_s$, and a spatially uniform number density, namely $n_0$. However it is not a solution of the Vlasov equation and hence not an equilibrium solution. PIC simulations for a force-free field were initiated with a distribution of this type in \citet{Hesse-2005,Birn-2010}, for example. To highlight the difference between the two DFs, we plot both the $v_x$ and $v_y$ variation of the ratio of the DF, with the Maxwellian of Equation (\ref{eq:Maxshift}) for both ions and electrons in Figures (\ref{fig:2aa})-(\ref{fig:5cc}). As we can see, in all plots the ratio deviates from unity, and in some cases these deviations are substantial. This shows that the initial impression is somewhat misleading. We also observe a symmetry in that the $v_y$ dependent plots are even in $z$, since $A_y$ and $\langle v_y\rangle _s$ are even in $z$.

To further see the deviations of $f_s$ from the Maxwellian, we present contour plots of the difference $f_s-f_{Maxw,s}$ in Figures (\ref{fig:6aa})-(\ref{fig:7cc}) over $(v_x,v_y)$ space for various $z$ values. One observation we can make from these is that there is a symmetry with respect to both velocity direction and the value of $z$. For example it seems that $f_s$ is symmetric under the transformation $(v_x\to -v_x, \,z\to -z)$. This seems reasonable since $A_x$ is dynamically equivalent to an odd function of $z$, by a gauge transformation, as $B_y$ is even (more on this in Section \ref{Subsec:regauge}). For a plasma-beta modestly below unity, and thermal Larmor radius roughly 15$\%$ of the current sheet width, we find distributions that are roughly Maxwellian in shape, but `shallower' at the centre of the sheet. At the outer edges of the sheet, this shallowness assumes a drop-shaped depression in the $v_x$  direction, with localised differences for large $v_y$.

\begin{figure}
\centering
\begin{subfigure}[b]{0.6\textwidth}
\includegraphics[width=\textwidth]{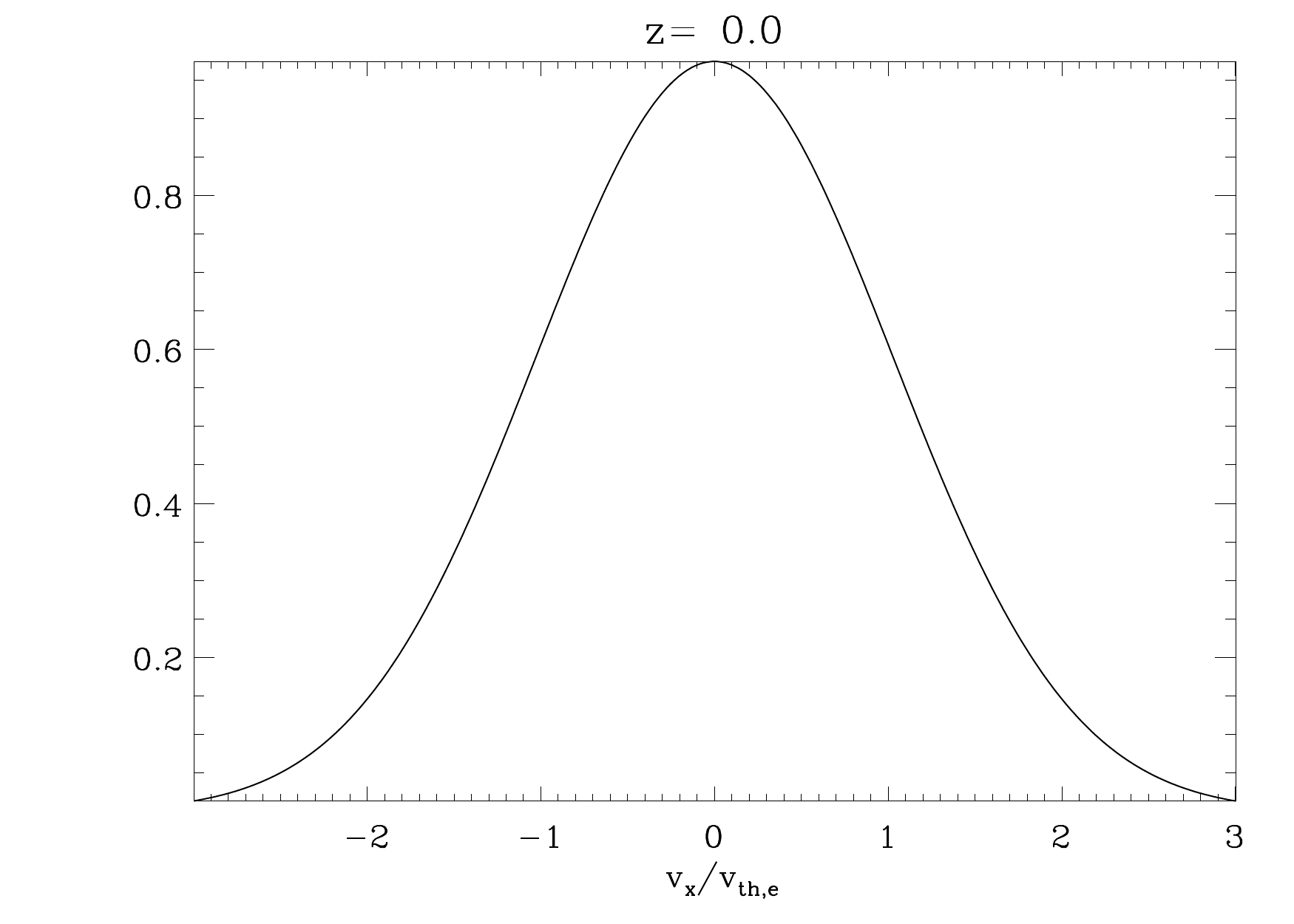}
 \caption{\small \label{fig:1aa}}
\end{subfigure}
\begin{subfigure}[b]{0.6\textwidth}
\includegraphics[width=\textwidth]{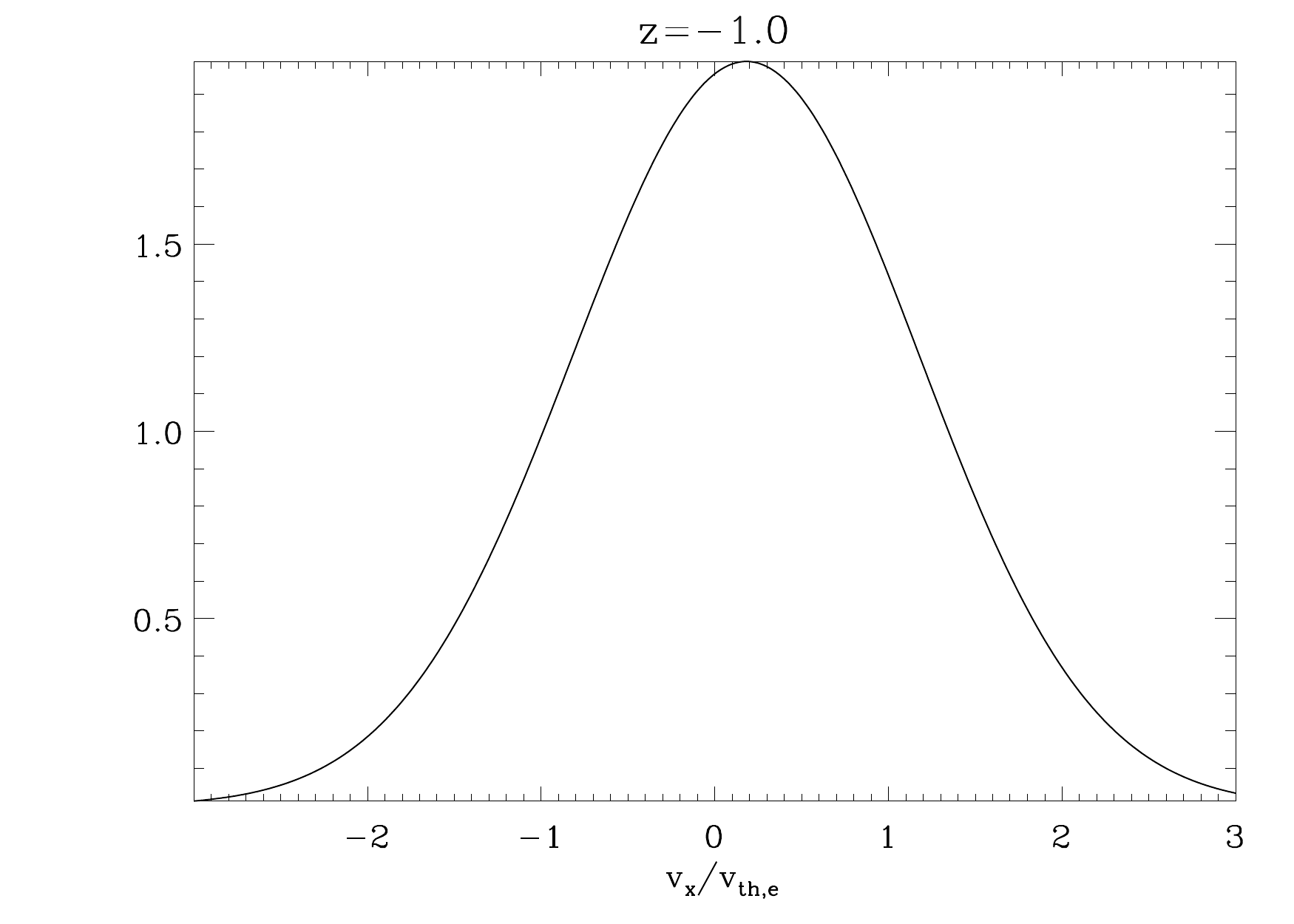}
 \caption{\small \label{fig:1bb}}
\end{subfigure}
\begin{subfigure}[b]{0.6\textwidth}
\includegraphics[width=\textwidth]{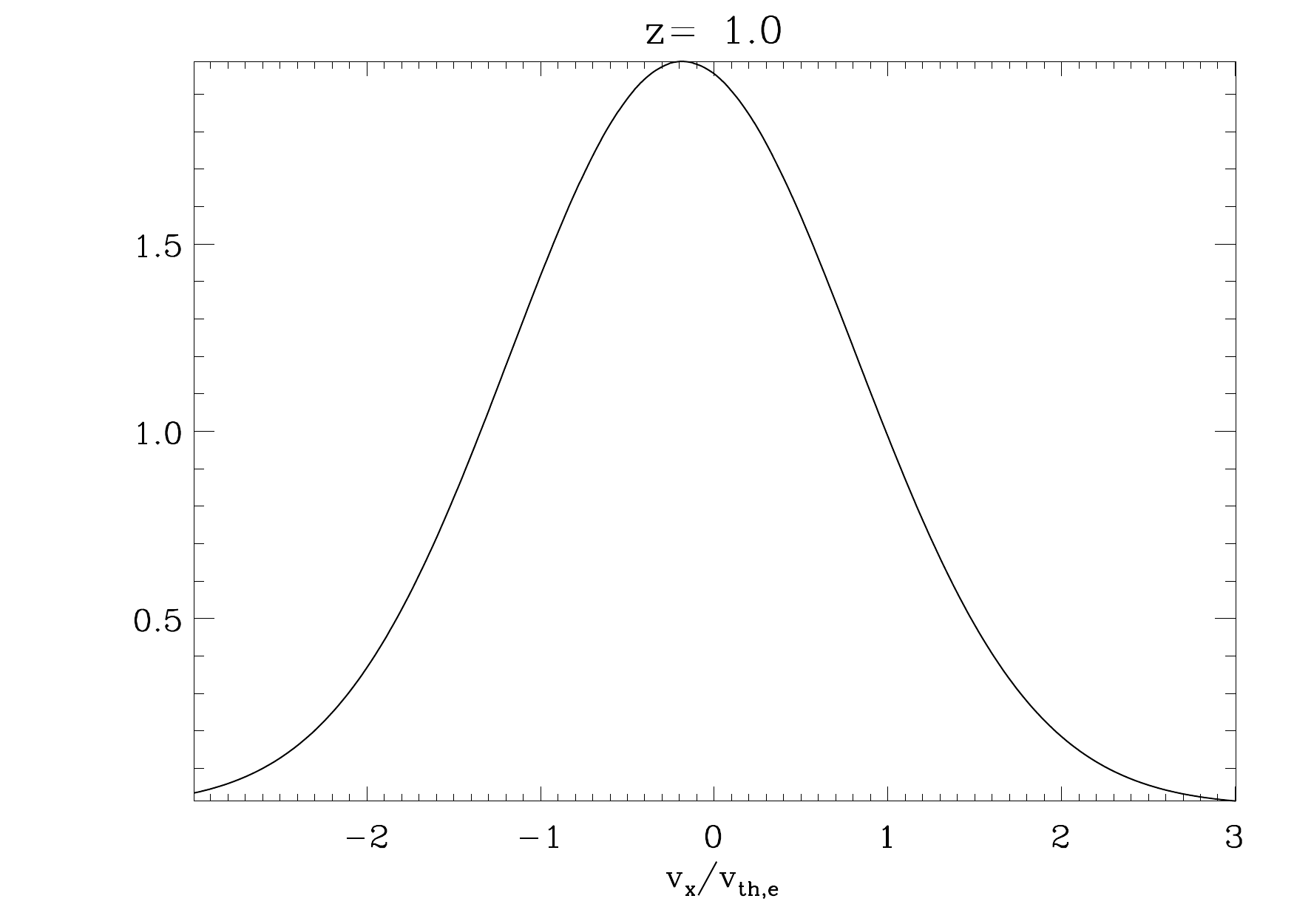}
 \caption{\small \label{fig:1cc}}
\end{subfigure}
\caption{ \small The $v_x$ variation of $f_e$ for $z/L=0$ (\ref{fig:1aa}),  $z/L=-1$ (\ref{fig:1bb}) and  $z/L=1$ (\ref{fig:1cc}). $\beta_{pl}=0.85$ and $\delta_e=0.15$. Note the antisymmetry of the $z=\pm 1$ plots with respect to each other.}
 \end{figure}

\begin{figure}
\centering
\begin{subfigure}[b]{0.6\textwidth}
\includegraphics[width=\textwidth]{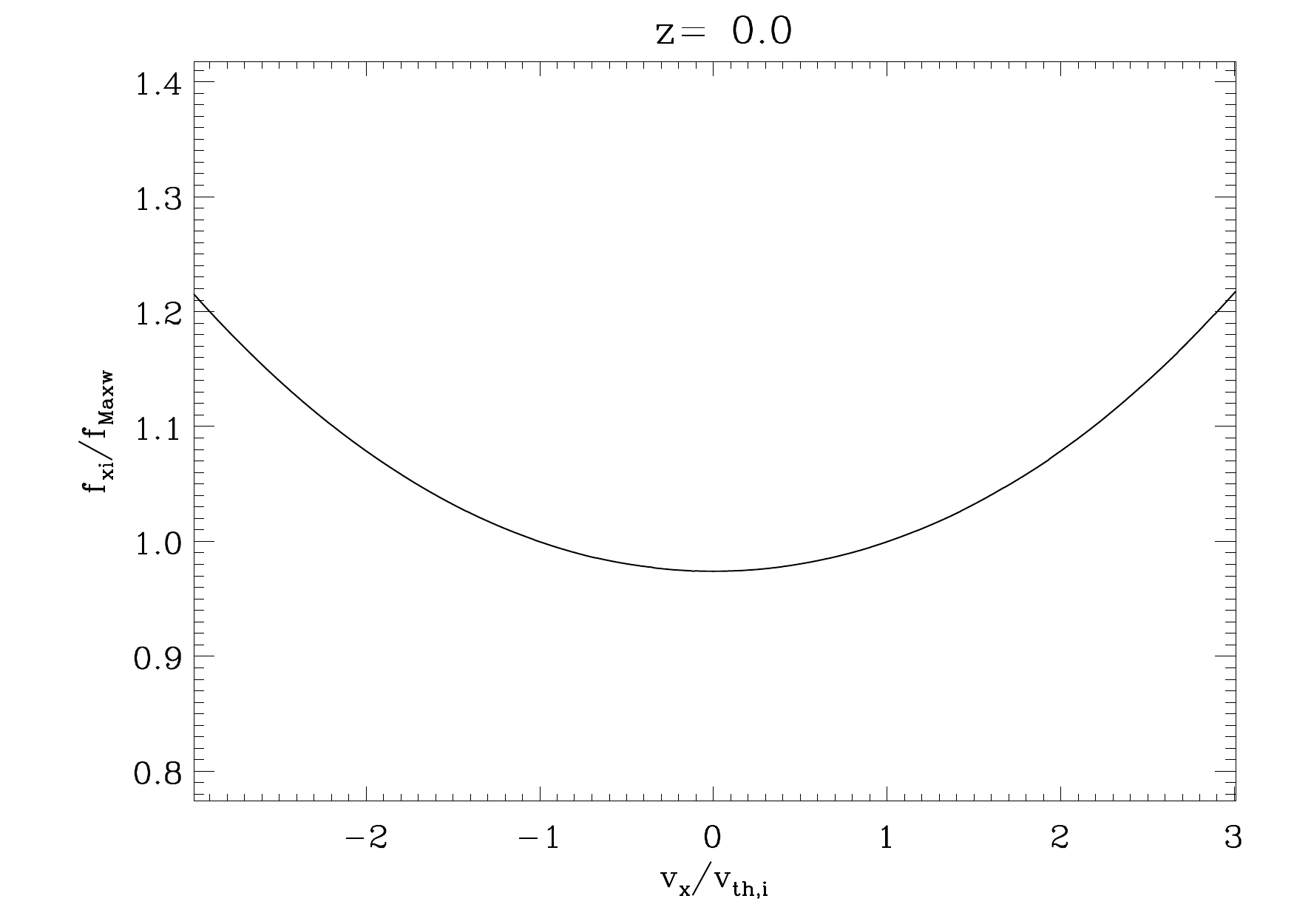}
 \caption{\small \label{fig:2aa}}
\end{subfigure}
\begin{subfigure}[b]{0.6\textwidth}
\includegraphics[width=\textwidth]{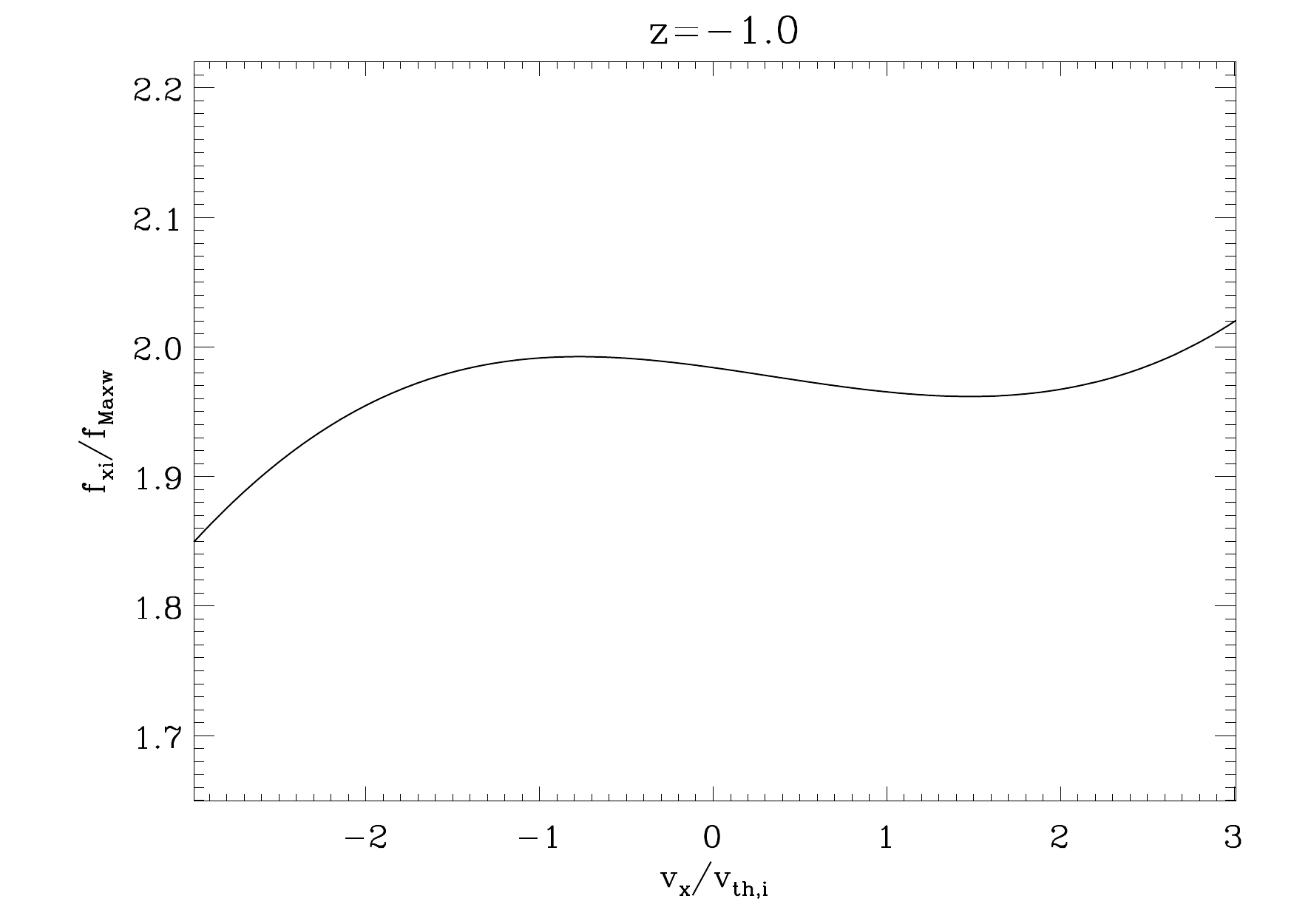}
 \caption{\small \label{fig:2bb}}
\end{subfigure}
\begin{subfigure}[b]{0.6\textwidth}
\includegraphics[width=\textwidth]{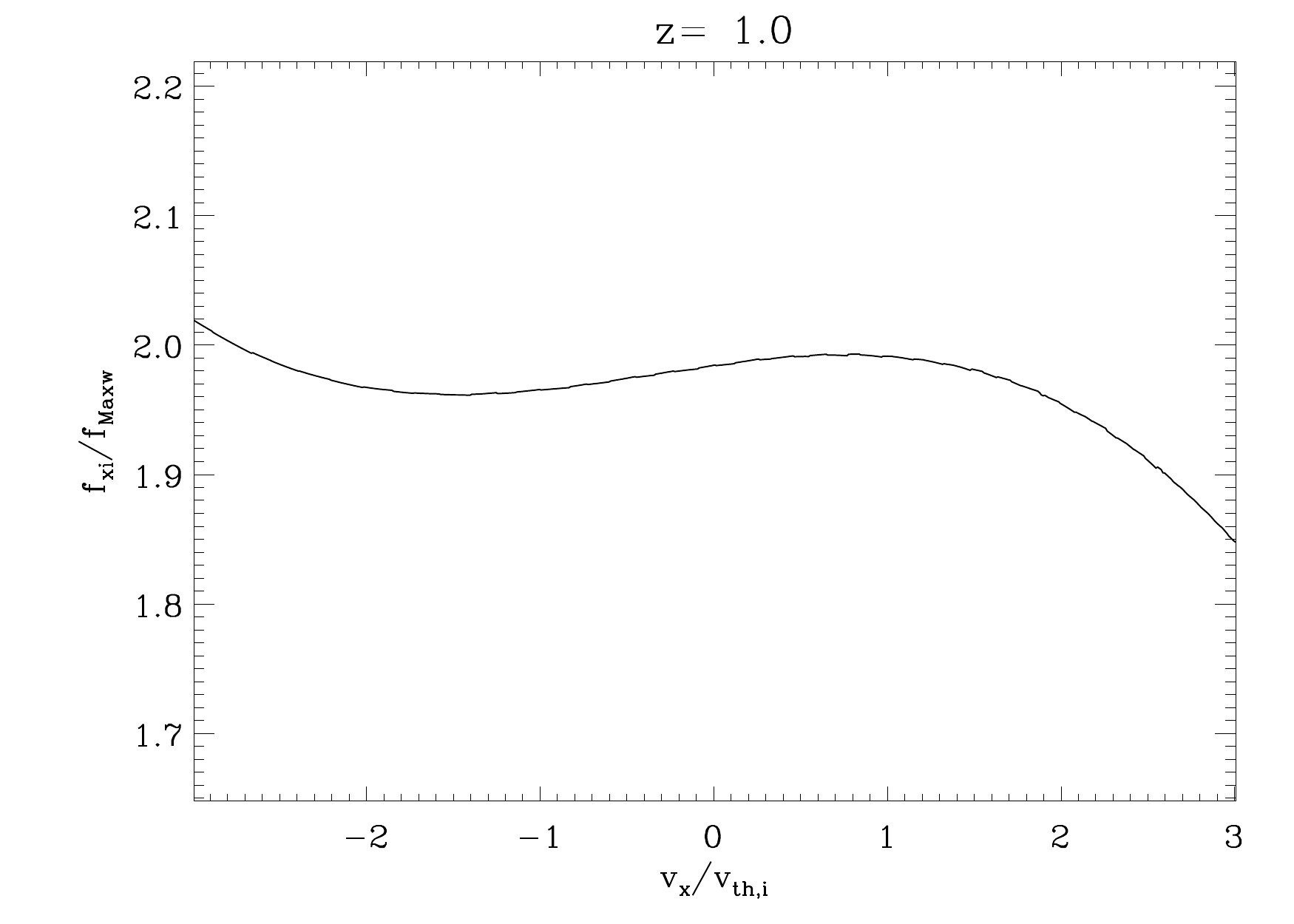}
 \caption{\small \label{fig:2cc}}
\end{subfigure}
\caption{\small The $v_x$ variation of $f_i/f_{Maxw,i}$ for $z/L=0$ (\ref{fig:2aa}),  $z/L=-1$ (\ref{fig:2bb}) and  $z/L=1$ (\ref{fig:2cc}). $\beta_{pl}=0.85$ and $\delta_i=0.15$. Note the antisymmetry of the $z=\pm 1$ plots with respect to each other.}
 \end{figure}

\begin{figure}
\centering
\begin{subfigure}[b]{0.6\textwidth}
\includegraphics[width=\textwidth]{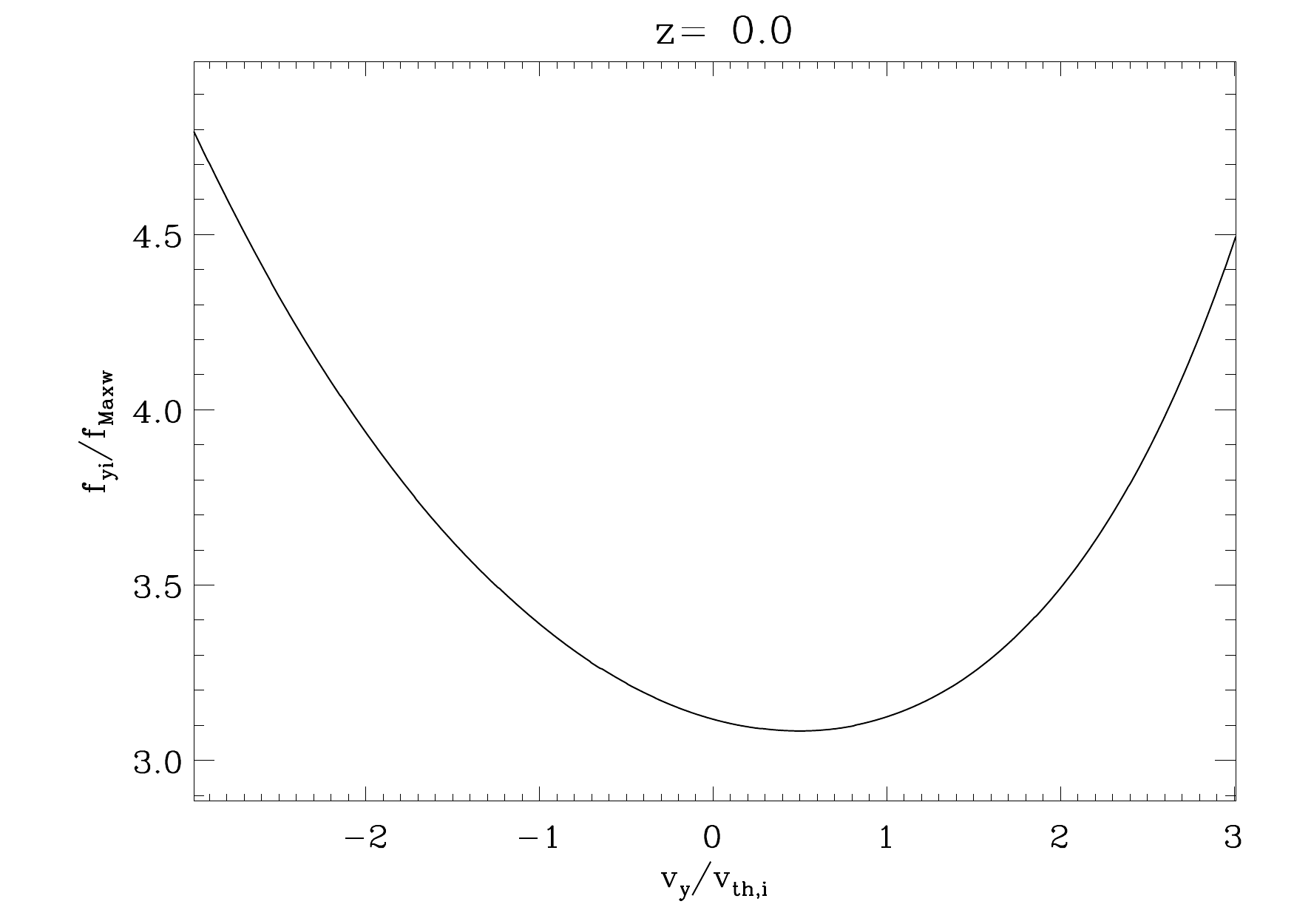}
 \caption{\small \label{fig:3aa}}
\end{subfigure}
\begin{subfigure}[b]{0.6\textwidth}
\includegraphics[width=\textwidth]{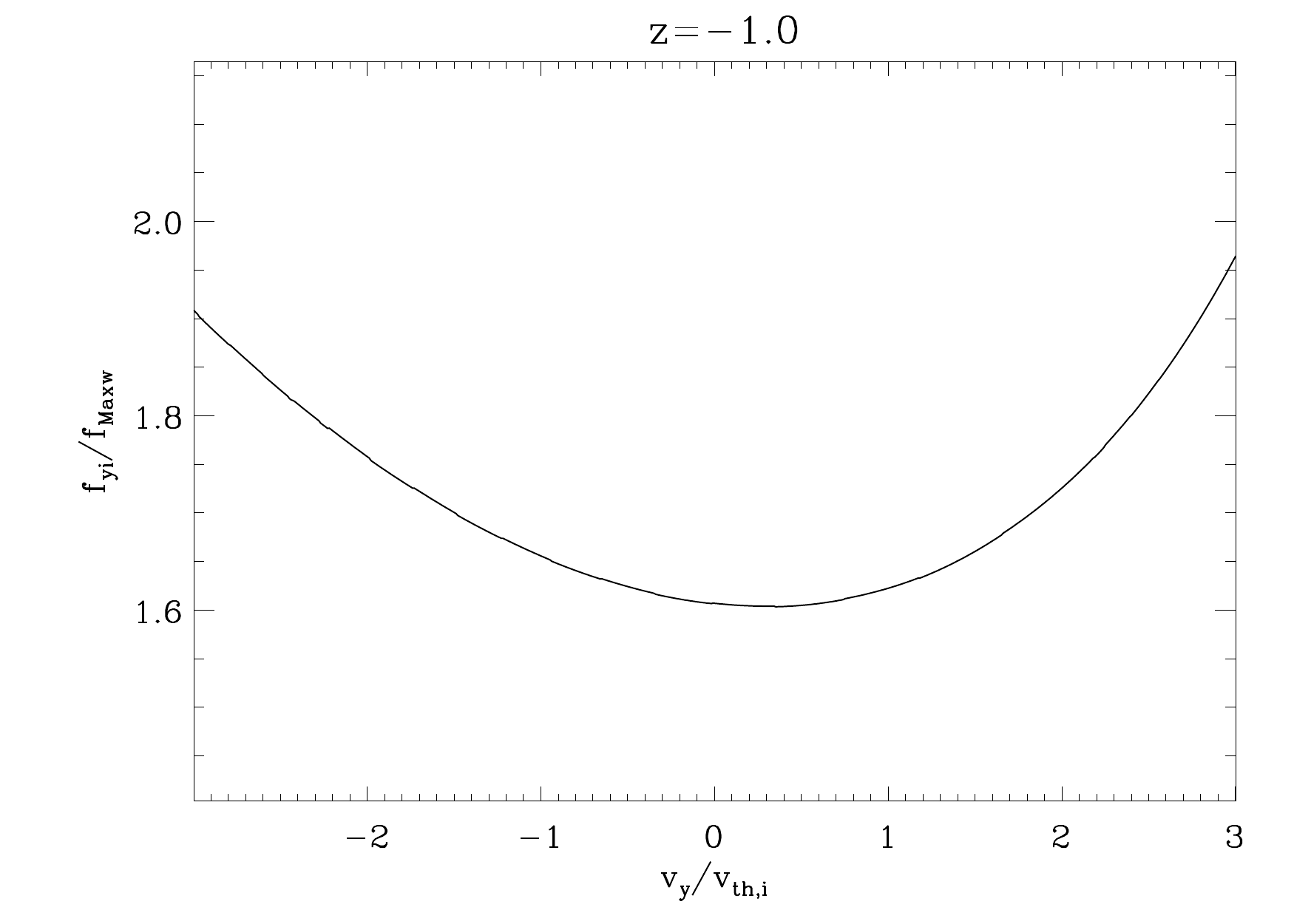}
 \caption{\small \label{fig:3bb}}
\end{subfigure}
\begin{subfigure}[b]{0.6\textwidth}
\includegraphics[width=\textwidth]{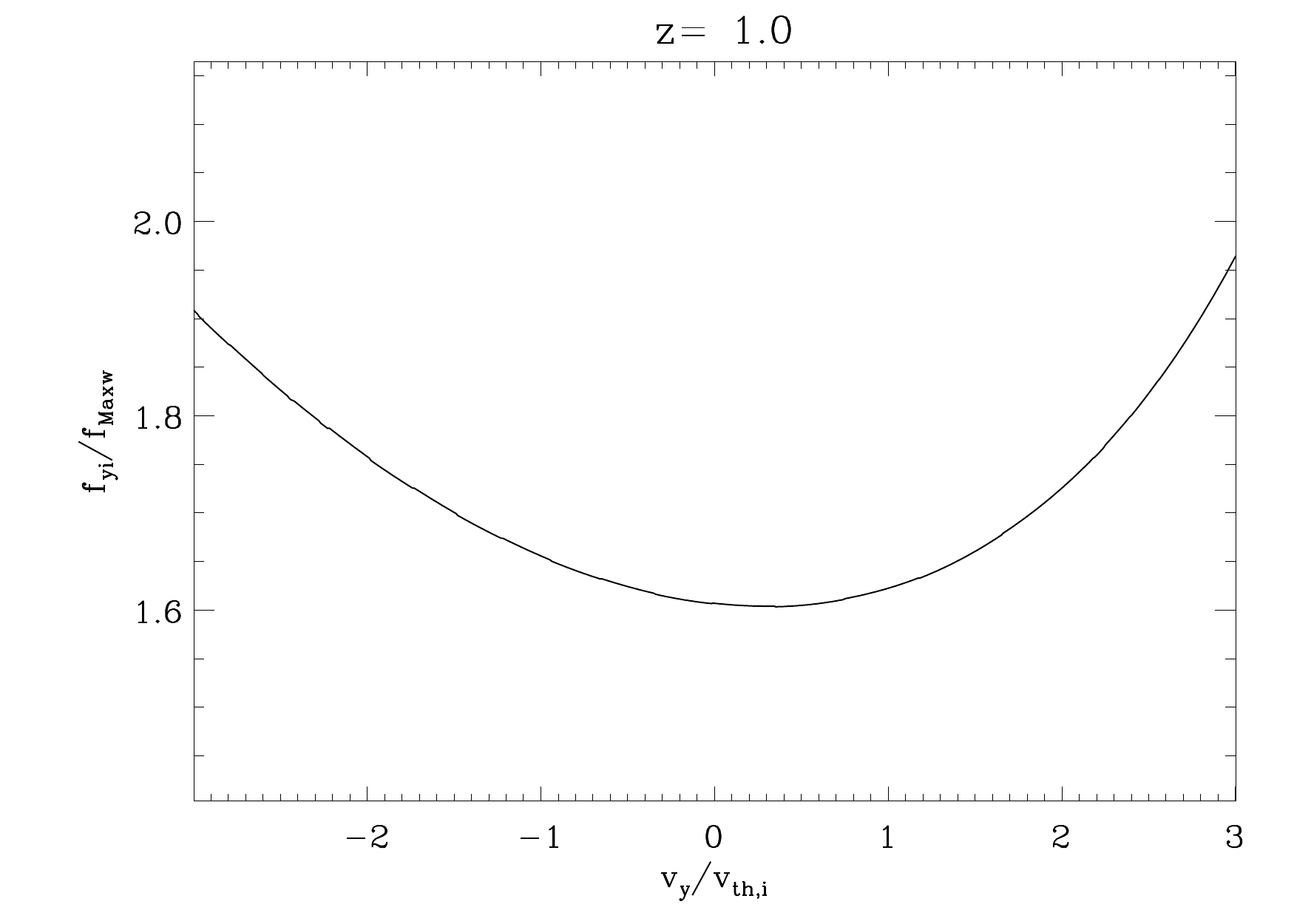}
 \caption{\small \label{fig:3cc}}
\end{subfigure}
\caption{\small The $v_y$ variation of $f_i/f_{Maxw,i}$ for $z/L=0$ (\ref{fig:3aa}),  $z/L=-1$ (\ref{fig:3bb}) and  $z/L=1$ (\ref{fig:3cc}). $\beta_{pl}=0.85$ and $\delta_i=0.15$. Note the symmetry of the $z=\pm 1$ plots with respect to each other.}
\end{figure}

\begin{figure}
\centering
\begin{subfigure}[b]{0.6\textwidth}
\includegraphics[width=\textwidth]{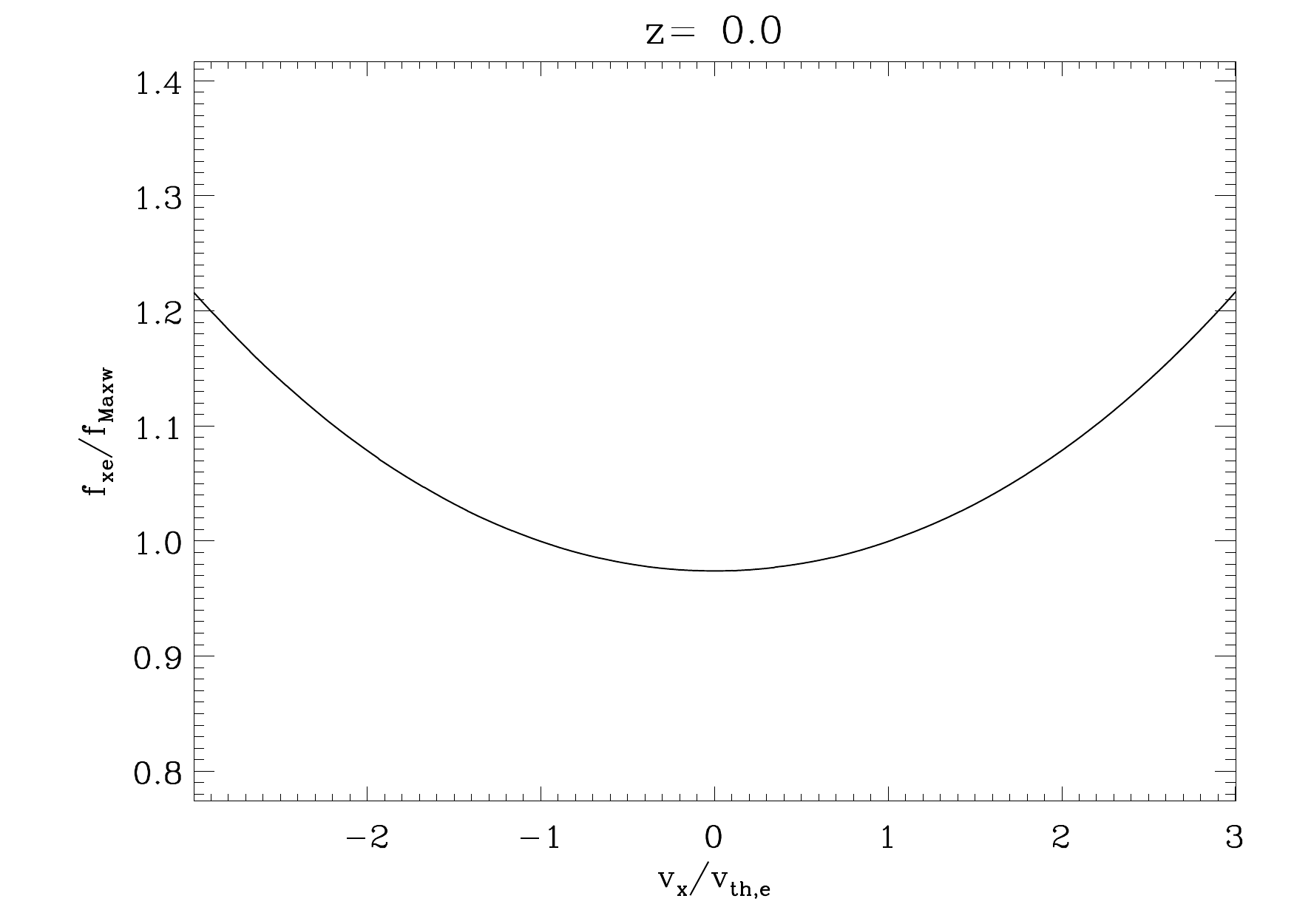}
 \caption{\small \label{fig:4aa}}
\end{subfigure}
\begin{subfigure}[b]{0.6\textwidth}
\includegraphics[width=\textwidth]{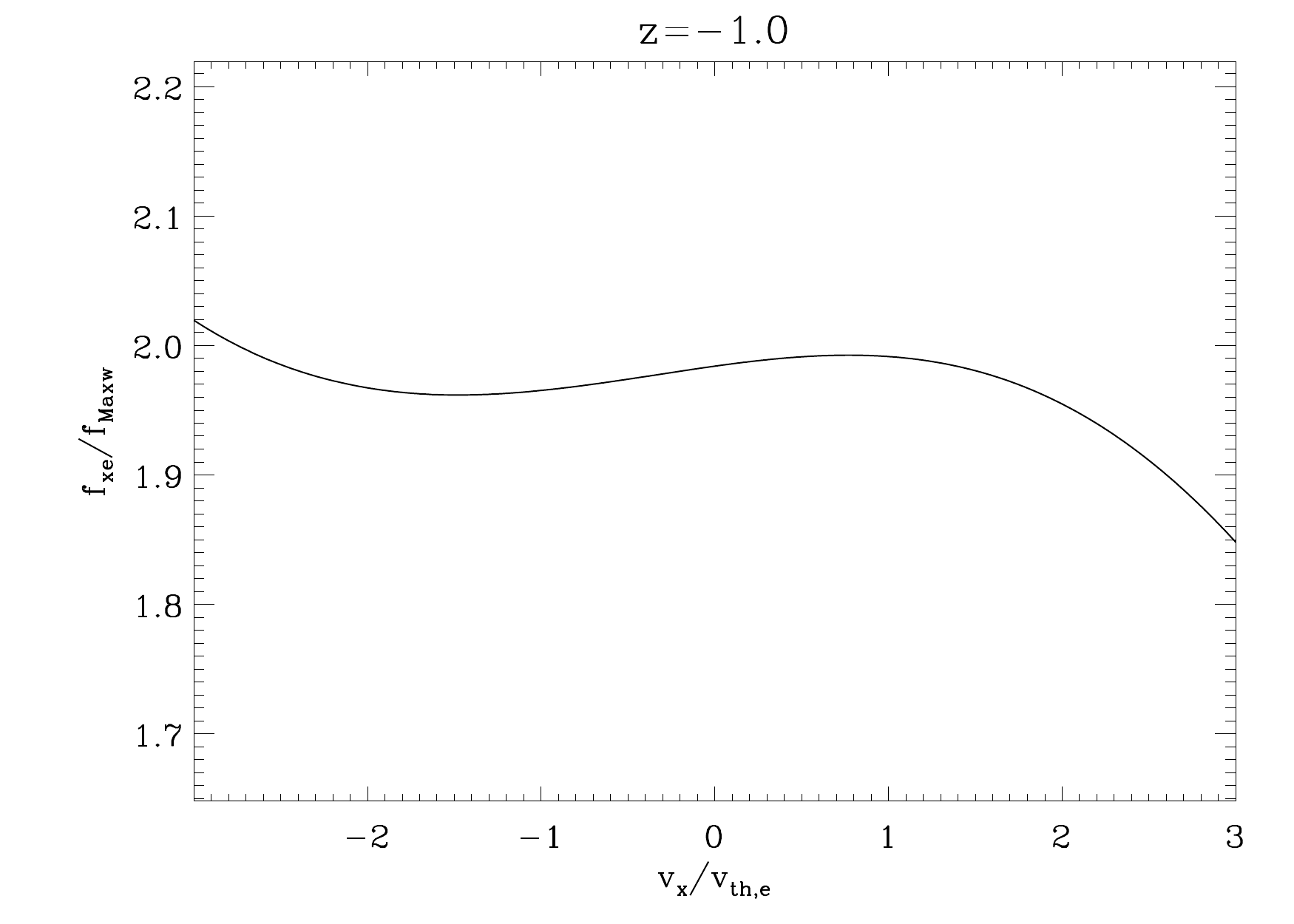}
 \caption{\small \label{fig:4bb}}
\end{subfigure}
\begin{subfigure}[b]{0.6\textwidth}
\includegraphics[width=\textwidth]{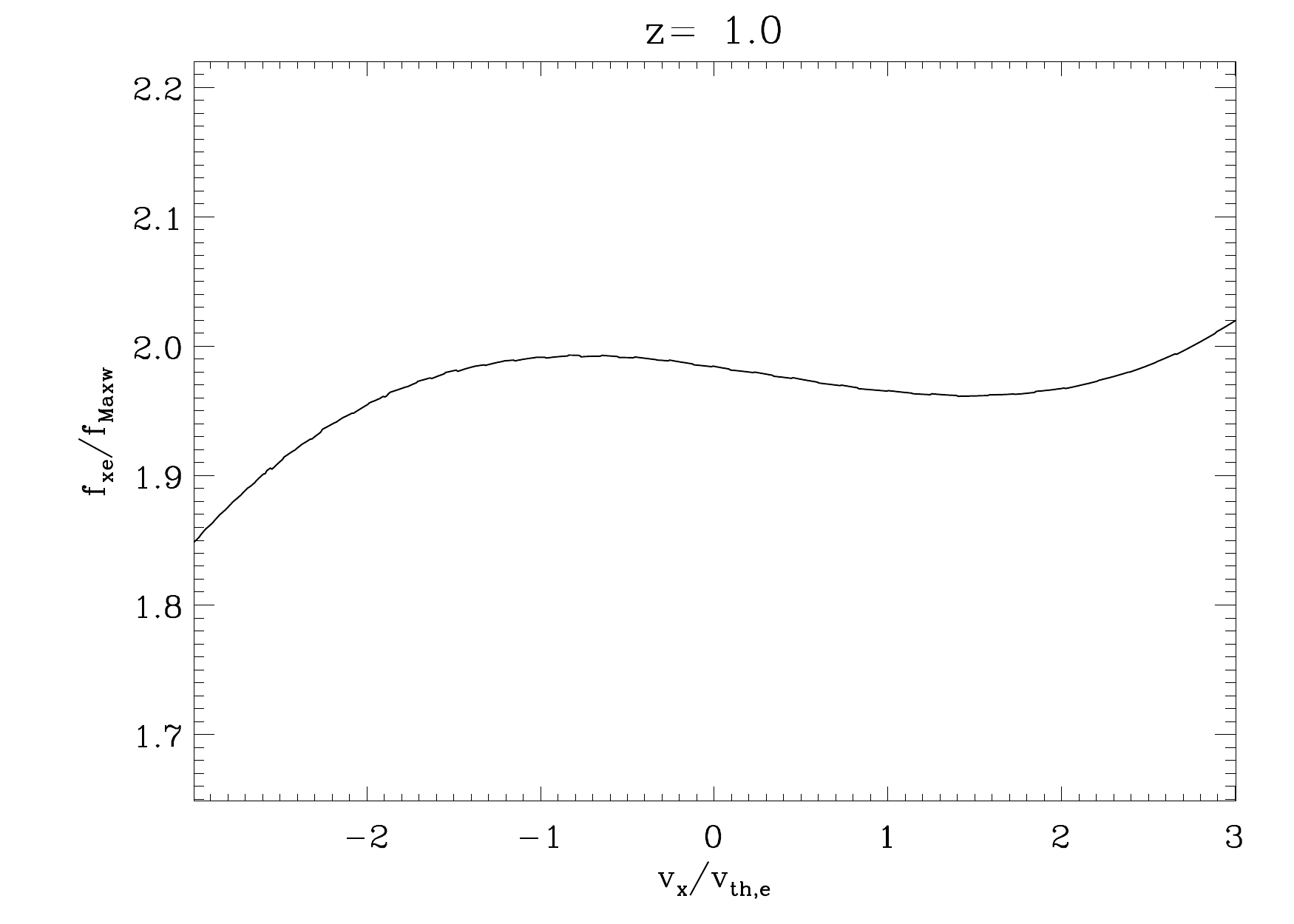}
 \caption{\small \label{fig:4cc}}
\end{subfigure}
\caption{\small The $v_x$ variation of $f_e/f_{Maxw,e}$ for $z/L=0$ (\ref{fig:4aa}),  $z/L=-1$ (\ref{fig:4bb}) and  $z/L=1$ (\ref{fig:4cc}). $\beta_{pl}=0.85$ and $\delta_e=0.15$. Note the antisymmetry of the $z=\pm 1$ plots with respect to each other.}
 \end{figure}

\begin{figure}
\centering
\begin{subfigure}[b]{0.6\textwidth}
\includegraphics[width=\textwidth]{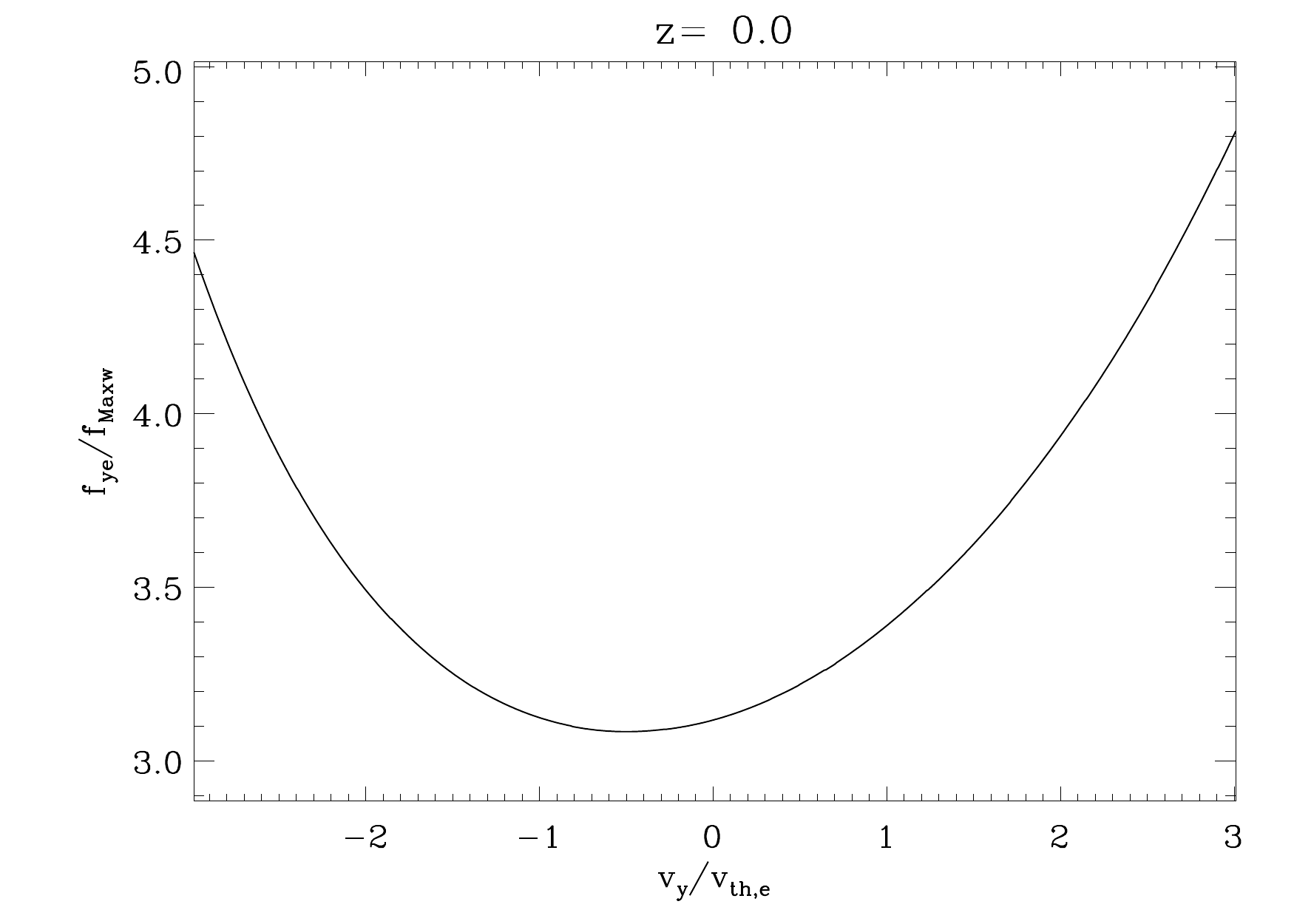}
 \caption{\small \label{fig:5aa}}
\end{subfigure}
\begin{subfigure}[b]{0.6\textwidth}
\includegraphics[width=\textwidth]{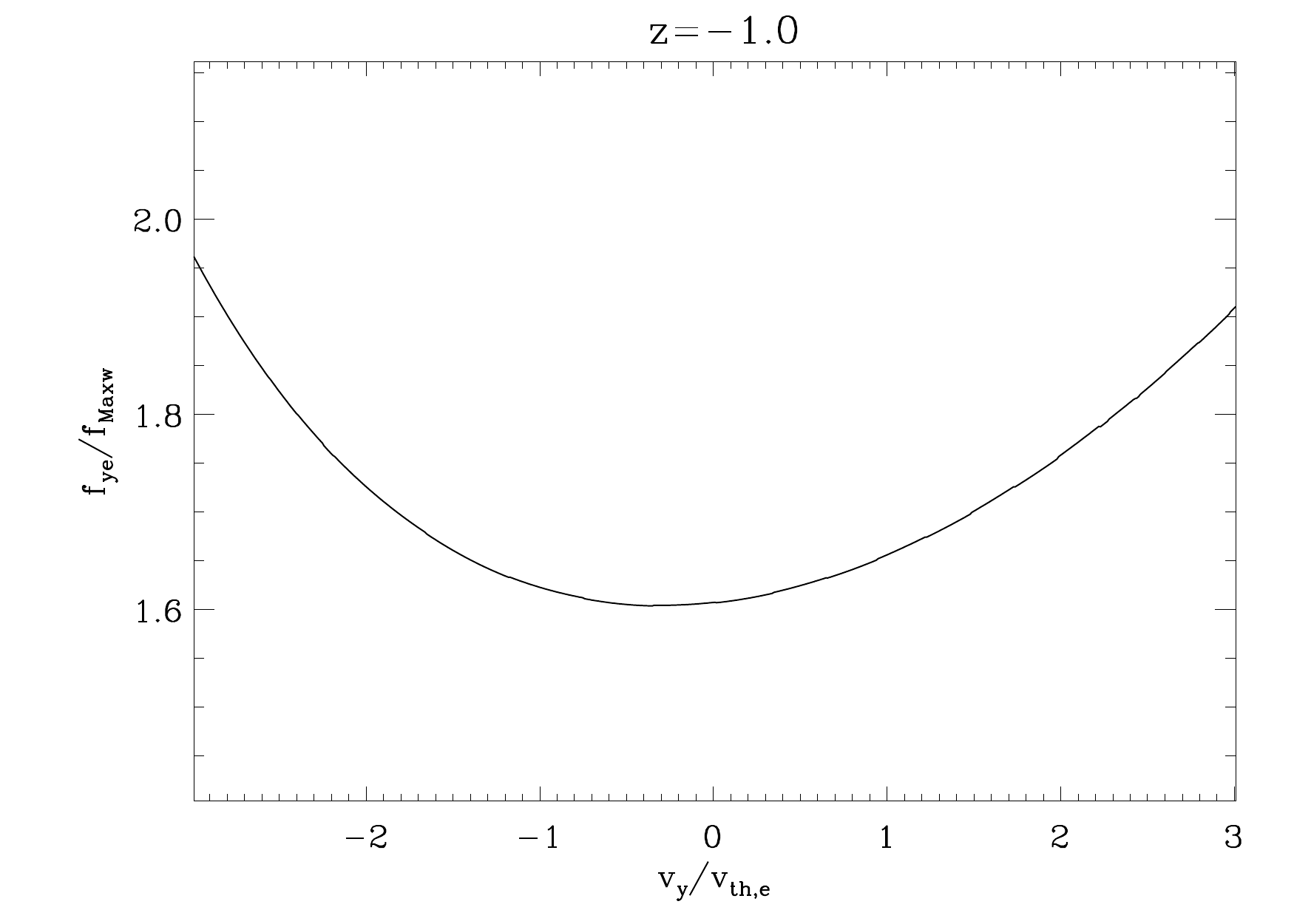}
 \caption{\small \label{fig:5bb}}
\end{subfigure}
\begin{subfigure}[b]{0.6\textwidth}
\includegraphics[width=\textwidth]{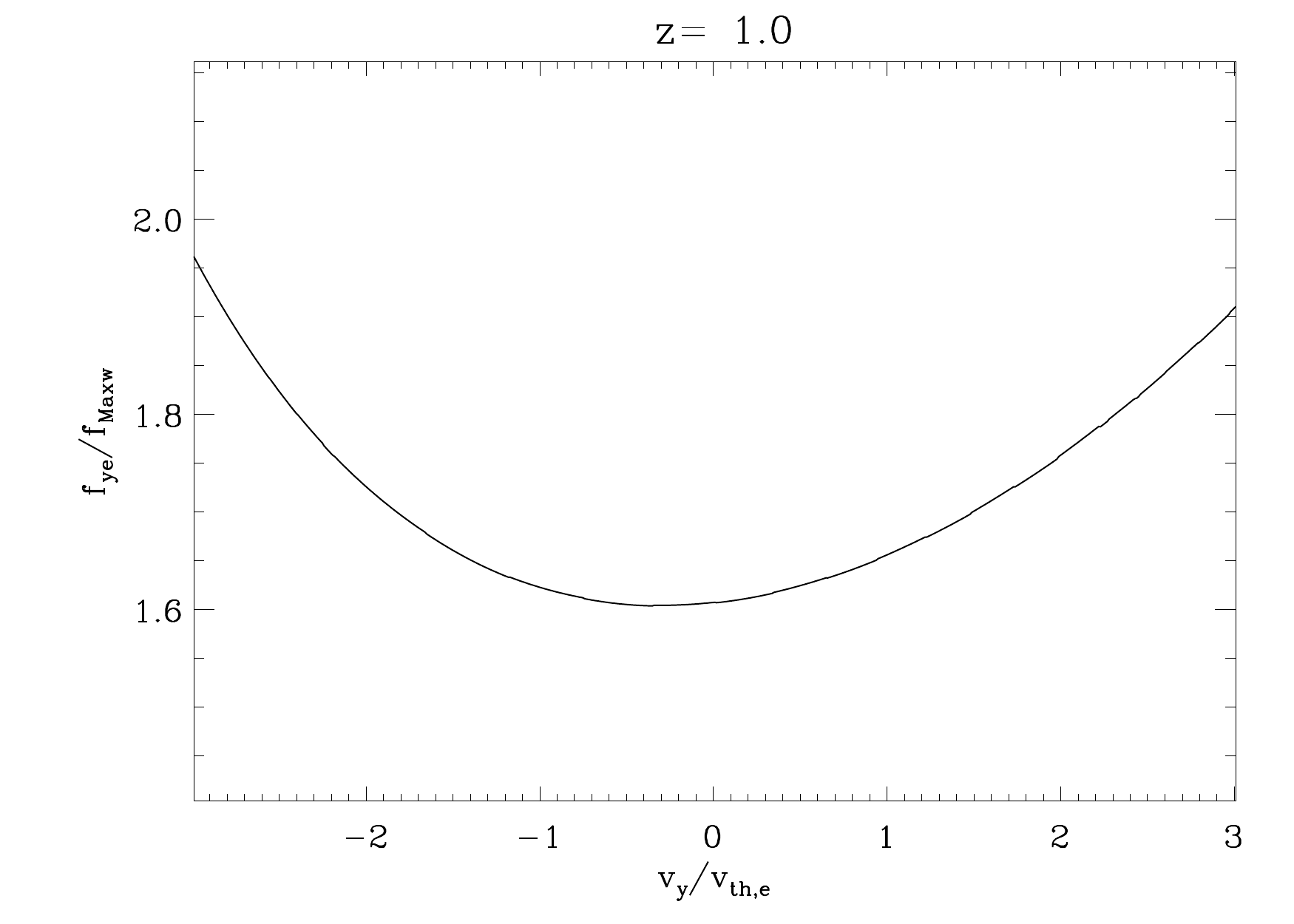}
 \caption{\small \label{fig:5cc}}
\end{subfigure}
\caption{\small The $v_y$ variation of $f_e/f_{Maxw,e}$ for $z/L=0$ (\ref{fig:5aa}),  $z/L=-1$ (\ref{fig:5bb}) and  $z/L=1$ (\ref{fig:5cc}). $\beta_{pl}=0.85$ and $\delta_e=0.15$. Note the symmetry of the $z=\pm 1$ plots with respect to each other.}
 \end{figure}

\begin{figure}
\centering
\begin{subfigure}[b]{0.6\textwidth}
\includegraphics[width=\textwidth]{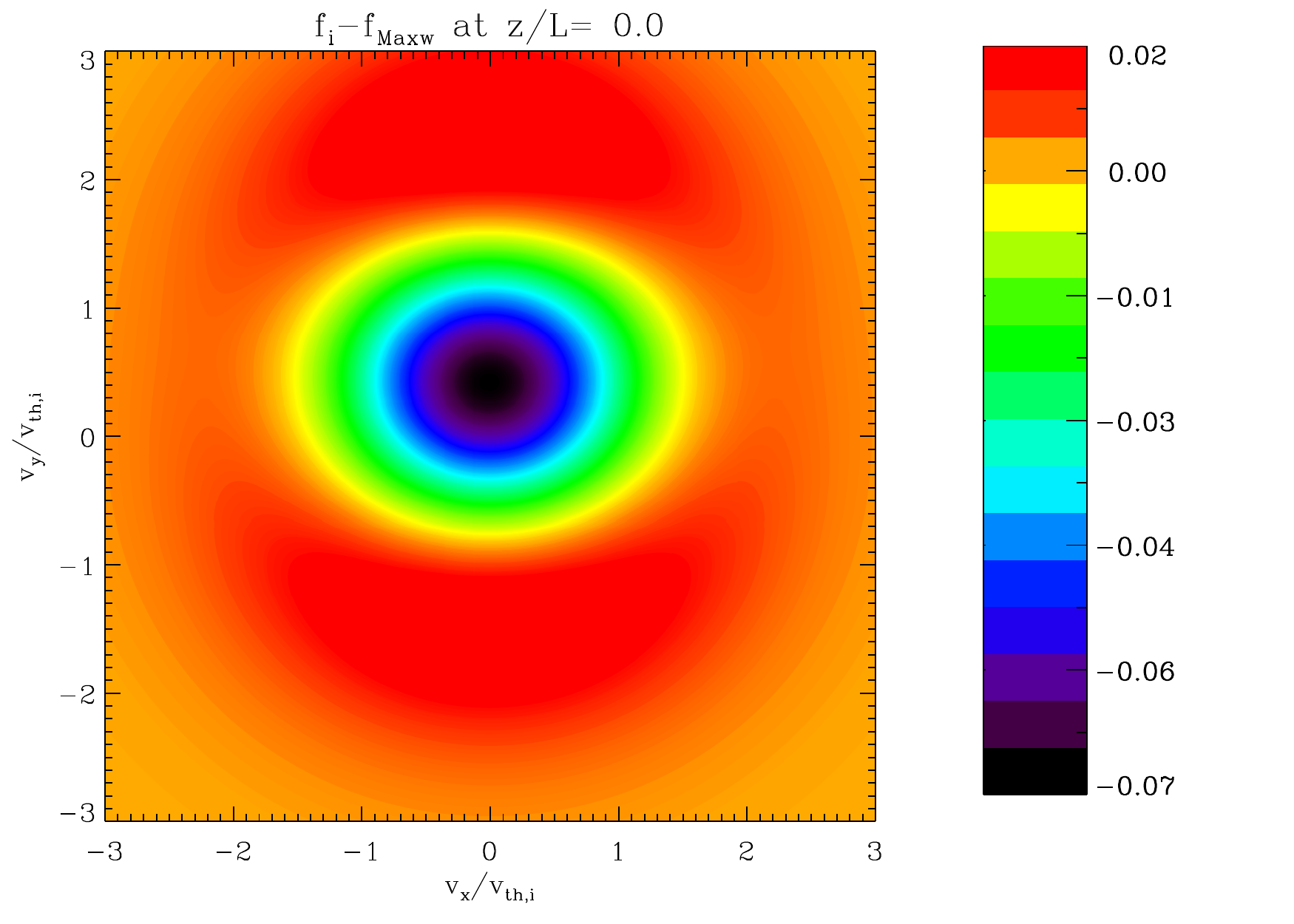}
 \caption{\small \label{fig:6aa}}
\end{subfigure}
\begin{subfigure}[b]{0.6\textwidth}
\includegraphics[width=\textwidth]{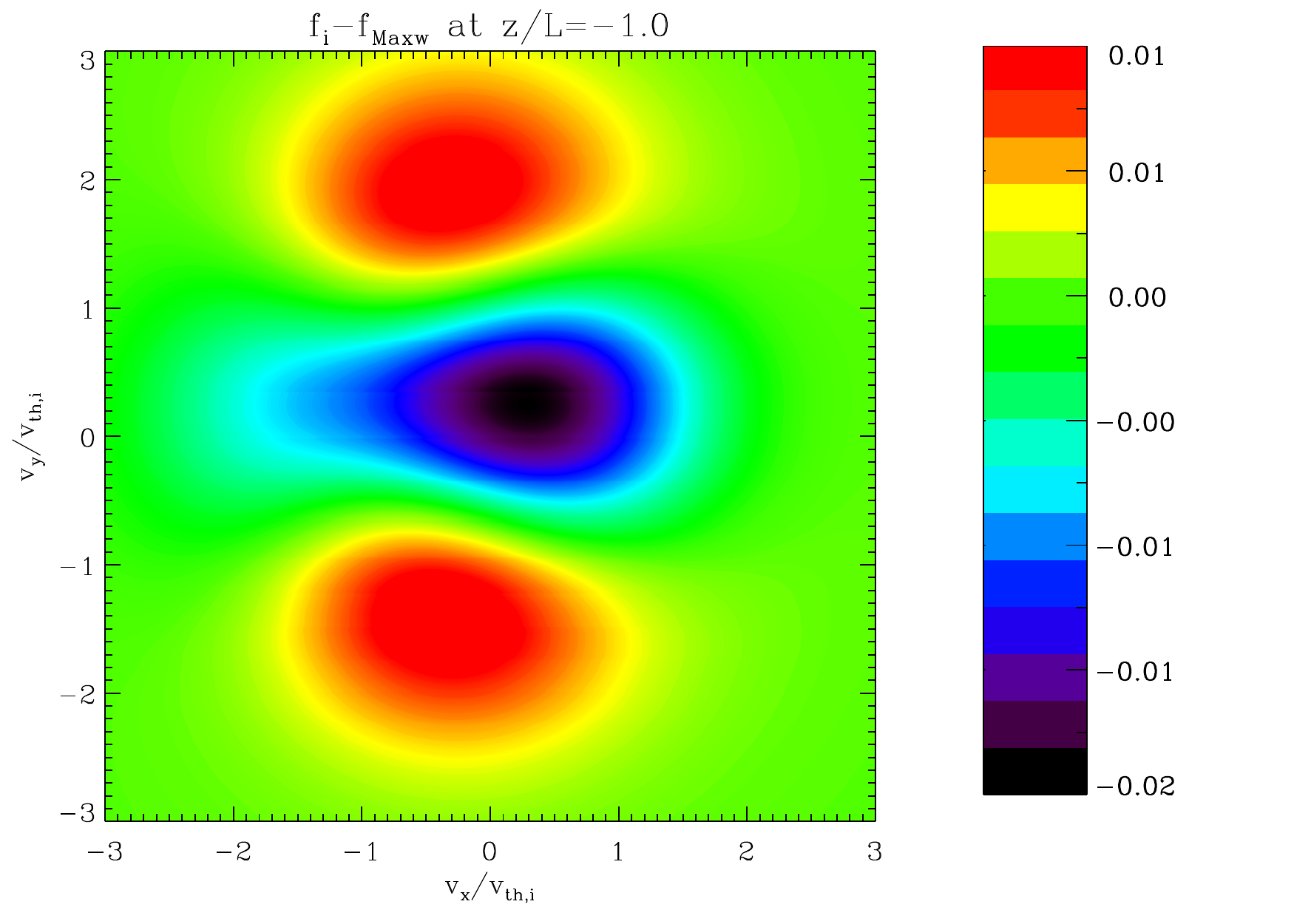}
 \caption{\small \label{fig:6bb}}
\end{subfigure}
\begin{subfigure}[b]{0.6\textwidth}
\includegraphics[width=\textwidth]{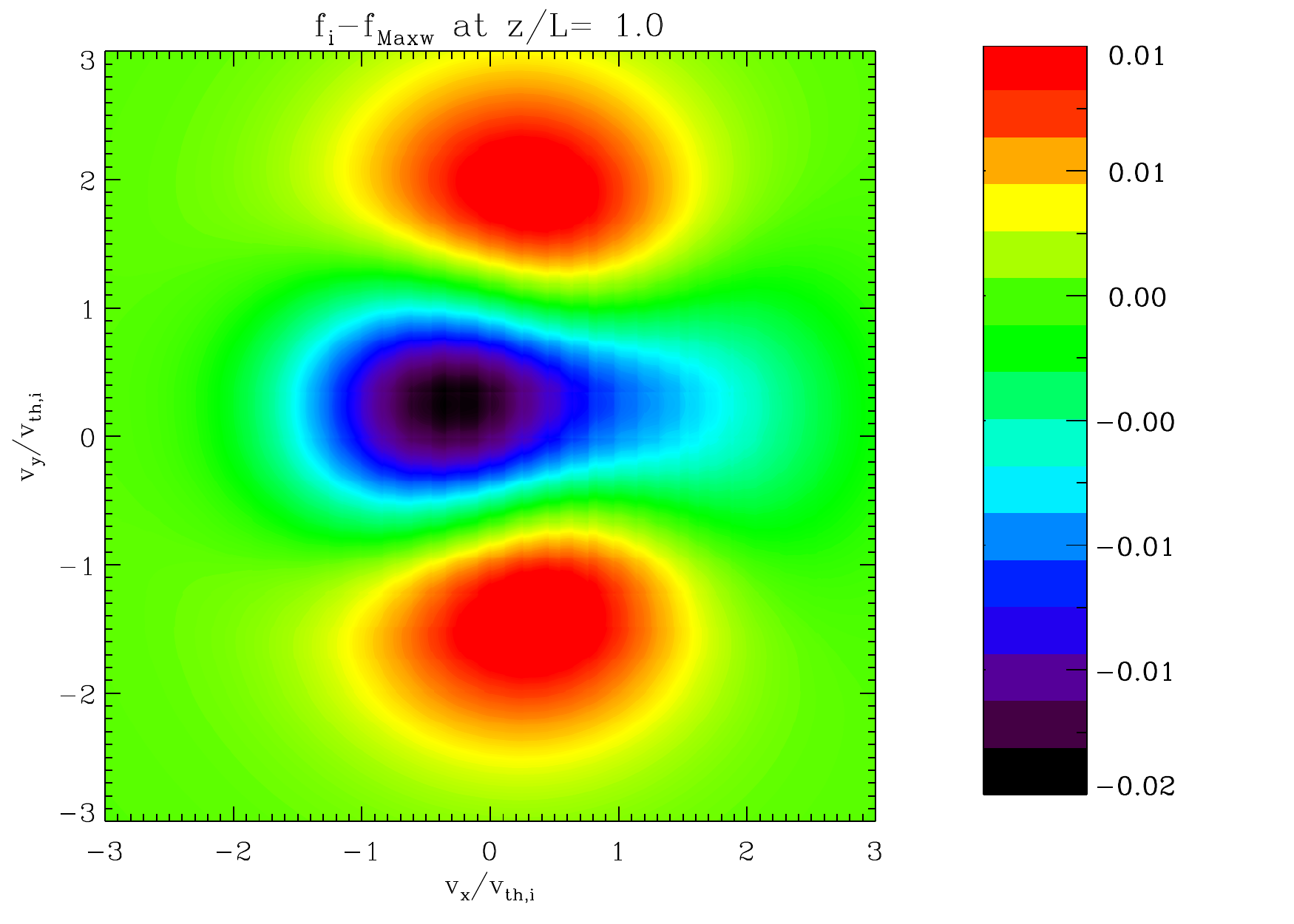}
 \caption{\small \label{fig:6cc}}
\end{subfigure}
\caption{\small Contour plots of $f_i-f_{Maxw,i}$ for $z/L=0$ (\ref{fig:6aa}),  $z/L=-1$ (\ref{fig:6bb}) and  $z/L=1$ (\ref{fig:6cc}). $\beta_{pl}=0.85$ and $\delta_i=0.15$. Note the antisymmetry of the $z=\pm 1$ plots with respect to each other.}
 \end{figure}

\begin{figure}
\centering
\begin{subfigure}[b]{0.6\textwidth}
\includegraphics[width=\textwidth]{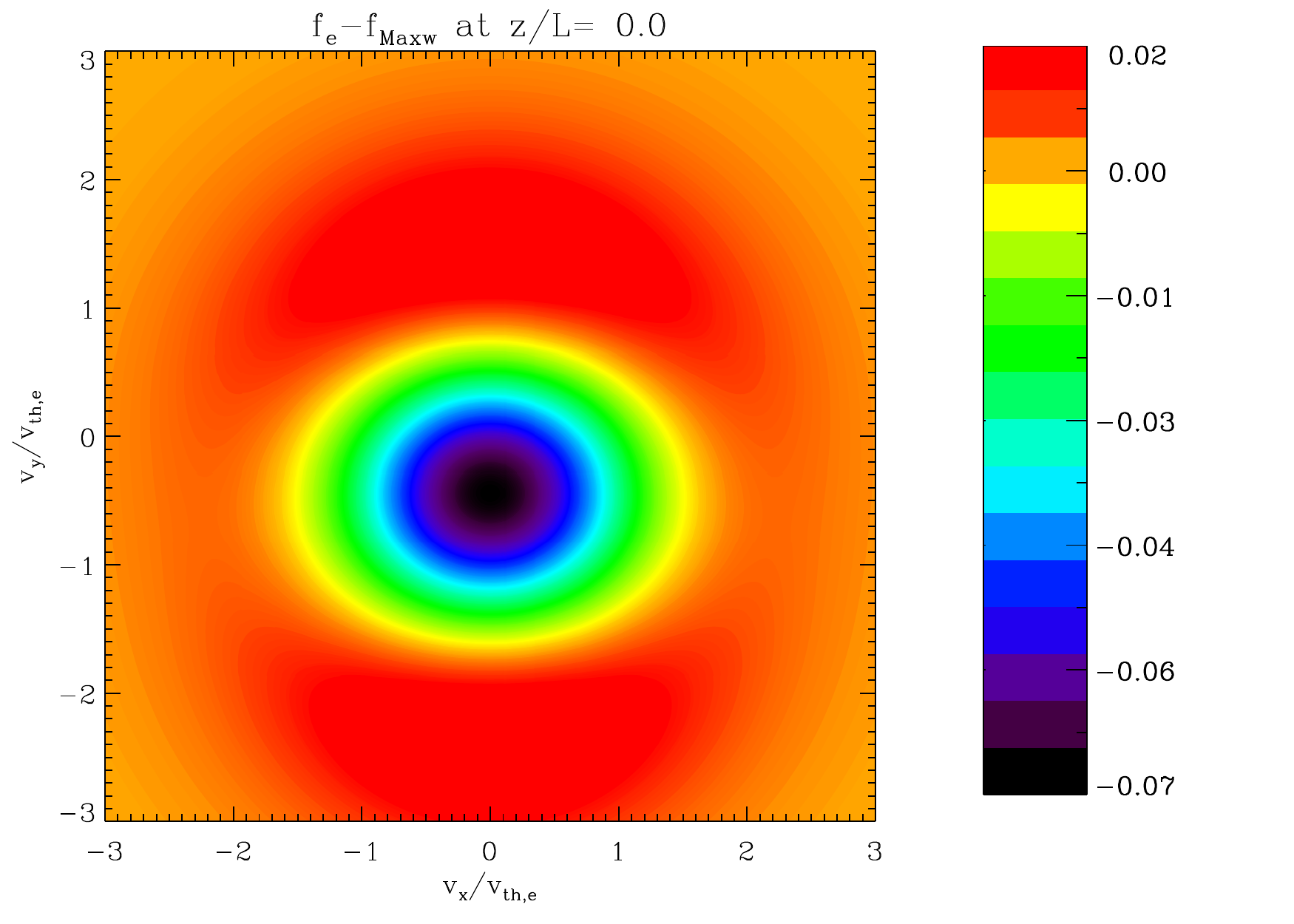}
 \caption{\small \label{fig:7aa}}
\end{subfigure}
\begin{subfigure}[b]{0.6\textwidth}
\includegraphics[width=\textwidth]{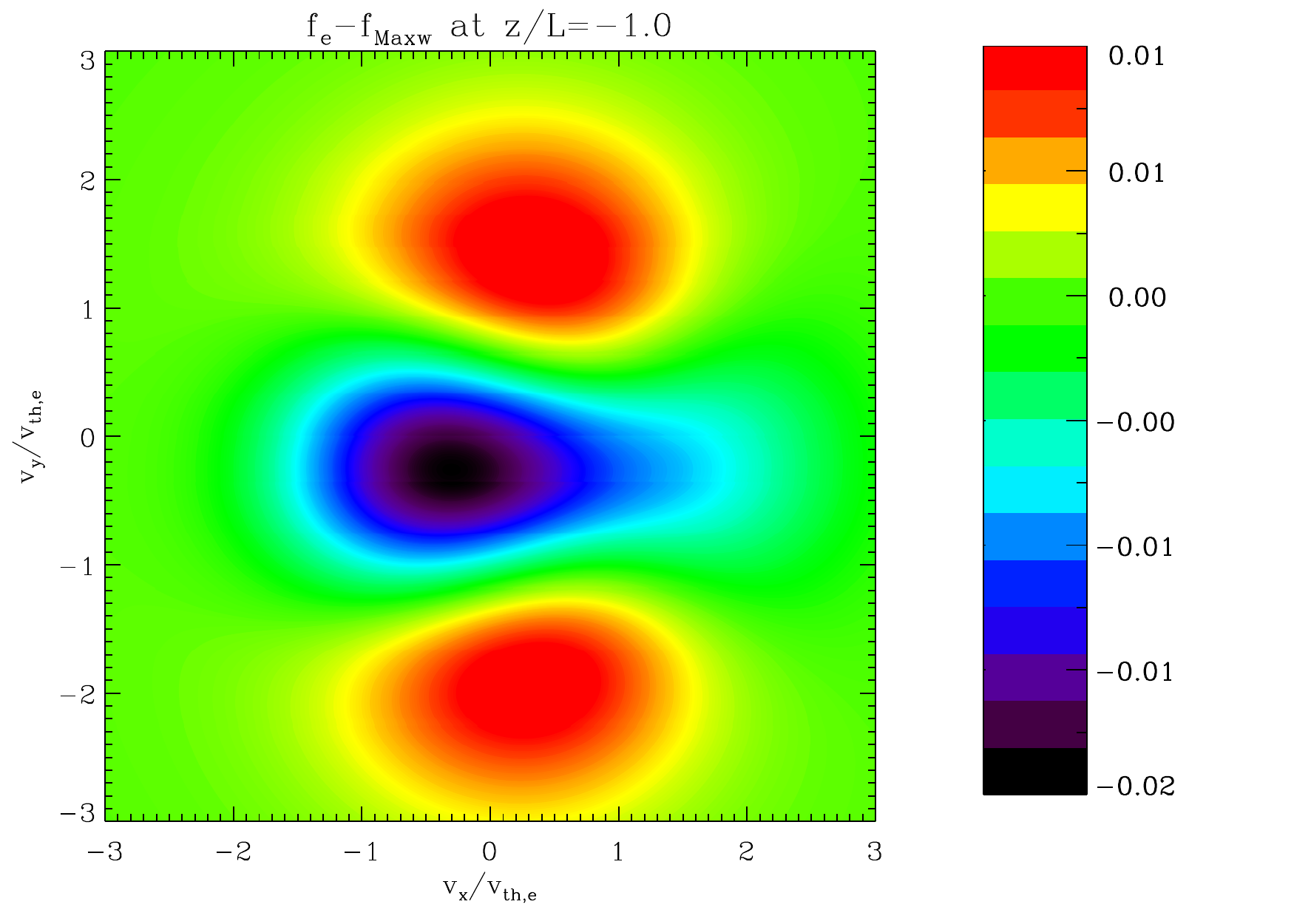}
 \caption{\small \label{fig:7bb}}
\end{subfigure}
\begin{subfigure}[b]{0.6\textwidth}
\includegraphics[width=\textwidth]{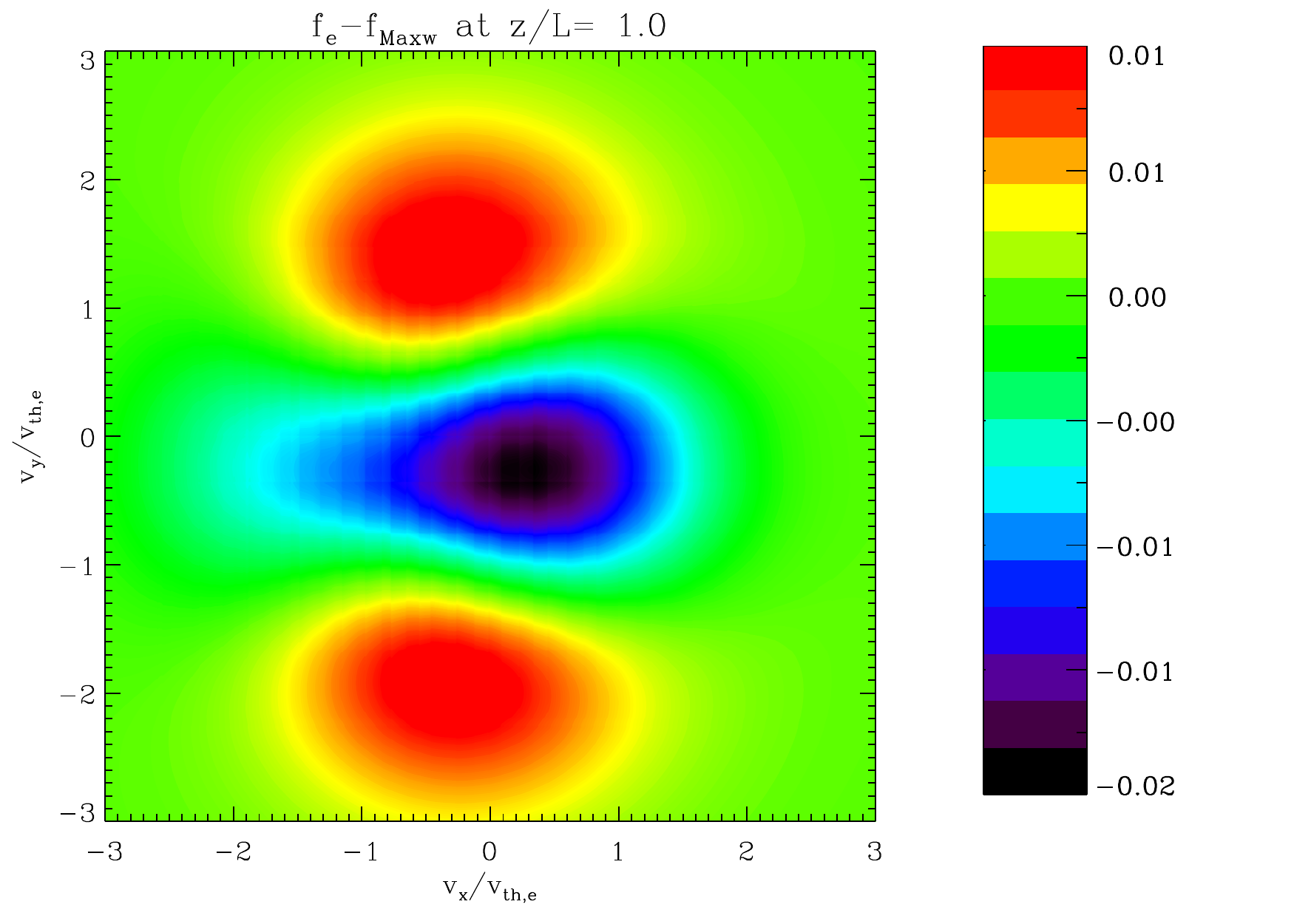}
 \caption{\small \label{fig:7cc}}
\end{subfigure}
\caption{\small Contour plots of $f_e-f_{Maxw,e}$ for $z/L=0$ (\ref{fig:7aa}),  $z/L=-1$ (\ref{fig:7bb}) and  $z/L=1$ (\ref{fig:7cc}). $\beta_{pl}=0.85$ and $\delta_e=0.15$. Note the antisymmetry of the $z=\pm 1$ plots with respect to each other.}
 \end{figure}

\section{`Re-gauged' equilibrium DF for the FFHS}\label{sec:newdf2}
\subsection{On the gauge for the vector potential}\label{Subsec:regauge}
In Section \ref{sec:newdf1} we used the pressure transformation techniques to derive a pressure tensor of `multiplicative form'
\[P_{zz}=P_1(A_x)P_2(A_y),\]
in order to construct a DF self-consistent with any value of the $\beta_{pl}$. However, the exact form of the DF was challenging to calculate numerically for low $\beta_{pl}$, with plots for $\beta_{pl}$ only modestly below unity presented ($\beta_{pl}=0.85$). The `problem terms' are those that depend on $p_{xs}$. The specific problem is that the $A_x$ function in the original gauge is neither even or odd, 
\[A_x=2B_0L\arctan \left(\exp\left(\frac{z}{L}\right)\right),\]
and as a result the range of $p_{xs}$ for which it is necessary to numerically calculate a convergent DF can be obstructive, say over a symmetric range in velocity space. Equation (\ref{eq:normalisedHn}) shows us that when $A_x$ is neither even nor odd, then $|p_{xs}|$ can take on larger than `necessary' values for a given $v_x$.

In this chapter, we shall `re-gauge' the vector potential component $A_x$ to be an odd function, 
\begin{equation}A_x=2B_0L\arctan\left(\tanh\left(\frac{z}{2L}\right)\right),\label{eq:gauge}\end{equation}
which is commensurate with $B_y$ being an even function and results in the same $B_y=B_0\,{\rm sech}(z/L)$ as the one derived from the $A_x$ defined in (\ref{eq:Agauge1}). As a consequence the numerical calculation of the DFs that we shall calculate for the FFHS becomes easier in the low $\beta_{pl}$ regime.

 \subsection{DF for the `re-gauged' FFHS: $\beta_{pl}\in (0\,,\,\infty)$ }\label{subsubsec:ANWTregaugemult}
We will now calculate a multiplicative DF for the `re-gauged' FFHS, in the same style as in Section \ref{sec:newdf1}, in the effort to produce a low-beta DF for the FFHS that is easier to calculate numerically, and plot. The new gauge is defined by
\begin{equation}
\boldsymbol{A}=B_0L\left(2\arctan\left(\tanh\left(\frac{z}{2L}\right)\right),\ln\text{sech}\frac{z}{L},0\right). 
\end{equation}
This re-gauging is equivalent to adding a constant to $A_{x}$ and so corresponds to a shift in the origin of the $A_x$ dependent part of the summative $P_{zz}$ used in \citet{Harrison-2009PRL}. As a result, one can derive a new summative pressure function in the same manner as in \citep{Harrison-2009PRL}, corresponding to this new gauge, as 
\begin{equation}
P_{zz}=\frac{B_0^2}{2\mu_0}\left[\sin ^2\left(\frac{A_x}{B_0L}\right) + \exp\left(\frac{2A_y}{B_0L}\right)   \right]\label{eq:pnewgauge}
\end{equation}
The next step is to construct a multiplicative pressure tensor. Using the same pressure transformation technique as in Section \ref{sec:exppressure}, on the $P_{zz}$ given in Equation (\ref{eq:pnewgauge}), we arrive at the `re-gauged' multiplicative pressure
\begin{eqnarray}P_{zz}&=&P_0e^{-1/\beta_{pl}}\exp\left[ \frac{1}{\beta_{pl}}\left( \sin ^{2}\left(\frac{A_x}{B_0L}\right)+\exp\left(\frac{2A_y}{B_0L}\right)   \right)         \right]  \\
&=&P_0\exp{ \left[ \sum_{n=1}^\infty\frac{1}{(2n)!} \nu_{2n}\left(\frac{A_x}{B_0L}\right)^{2n}   \right]                }  \exp{ \left[\sum_{n=1}^\infty \frac{1}{n!} \xi_n \left(\frac{A_y}{B_0L}\right)^n  \right]                }     ,\end{eqnarray} 
with the coefficients defined by
\[ \nu_{2n}=\frac{(-1)^{n+1}2^{2n-1}}{\beta_{pl}}  ,\hspace{5mm}\xi_n=\frac{2^n}{\beta_{pl}}\, .\]
We now use the theory of CBPs, as in \citep{Allanson-2015POP} and Section \ref{sec:newdf1}, to write the pressure as
\begin{eqnarray}
P_{zz}&=&P_0\sum_{m=0}^\infty \frac{1}{(2m)!}Y_{2m}\left(0\,,\,\nu_2\,,\,0\,,\, \nu_4\, ,\, ...\,,0\,,\, \nu_{2m}\right)\left(\frac{A_x}{B_0L}\right)^{2m}\nonumber\\
&\times&\sum_{n=0}^\infty \frac{1}{n!}Y_{n}\left(\xi_1\,,\,\xi_2\,,\, ...\,,\,\xi_n        \right)\left(\frac{A_y}{B_0L}\right)^{n}.\nonumber
\end{eqnarray}
Once again using the simple scaling argument from Equation (\ref{eq:CBPscale}), we have
\begin{eqnarray}
P_{zz}&=&P_0\sum_{m=0}^\infty \frac{(-1)^m2^{2m}}{(2m)!}Y_{2m}\left(0\,\,\frac{-1}{2\beta_{pl}}\,,\,0\,,\, \frac{-1}{2\beta_{pl}}\, ,\, ...\,,\,0\,,\, \frac{-1}{2\beta_{pl}}\right)\left(\frac{A_x}{B_0L}\right)^{2m}\nonumber\\
&\times&\sum_{n=0}^\infty \frac{2^m}{n!}Y_{n}\left(\frac{1}{\beta_{pl}}\,,\,\frac{1}{\beta_{pl}}\,,\, ...\,,\,\frac{1}{\beta_{pl}}          \right)\left(\frac{A_y}{B_0L}\right)^{n}.\nonumber
\end{eqnarray}
Using the methods established in Chapter \ref{Vlasov}, namely expansion over Hermite polynomials, we calculate a DF that gives the above pressure
\begin{eqnarray}
&&f_{s}=\frac{n_0}{(\sqrt{2\pi}v_{\text{th},s})^3}e^{-\beta_sH_s}\times\nonumber\\
&&\sum_{m=0}^\infty a_{2m}\left(\frac{\delta_s}{\sqrt{2}}\right)^{2m}H_{2m}\left(\frac{p_{xs}}{\sqrt{2}m_{s}v_{\text{th},s}}\right)\times\nonumber\\
&&\sum_{n=0}^\infty b_n{\rm sgn}(q_{s})^n\left(\frac{\delta_s}{\sqrt{2}}\right)^nH_n\left(\frac{p_{ys}}{\sqrt{2}m_{s}v_{\text{th},s}}\right),\label{eq:gammaANWT}
\end{eqnarray}
for
\begin{eqnarray}a_{2m}&=&\frac{(-1)^m2^{2m}}{(2m)!}Y_{2m}\left(0\,\,\frac{-1}{2\beta_{pl}}\,,\,0\,,\, \frac{-1}{2\beta_{pl}}\, ,\, ...\,,\,0\, ,\, \frac{-1}{2\beta_{pl}}\right)\nonumber ,\\
b_n&=&\frac{2^m}{n!}Y_{n}\left(\frac{1}{\beta_{pl}}\,,\,\frac{1}{\beta_{pl}}\,,\, ...\,,\,\frac{1}{\beta_{pl}}          \right).
\end{eqnarray}
One can readily calculate the number density for this DF using standard integral results \citep{Gradshteyn} to be 
\[n_s(A_{x},A_{y})=n_0\sum_{m=0}^\infty a_{2m}\left(\frac{A_x}{B_0L}\right)^{2m}\sum_{n=0}^\infty b_{n}\left(\frac{A_y}{B_0L}\right)^n= P_0\frac{\beta_e\beta_i}{\beta_e+\beta_i}.\]

\subsection{ Convergence and boundedness of the DF }
This DF has identical coefficients for the $p_{ys}$-dependent Hermite polynomials as that derived in Section \ref{sec:newdf1}, and so we need not verify convergence for that series. In fact, all that has changed in the analysis of the coefficients for the $p_{xs}$-dependent sum is that we now have to consider the Maclaurin coefficients of $\sin ^2(A_x/(B_0L))$ as opposed to $\cos(2A_x/(B_0L))$. These Maclaurin coefficients both have the same `factorial dependence' and as such the convergence of the one DF implies the convergence of the other. 


The boundedness argument is exactly analogous to that made above for the DF in original gauge, and need not be repeated here.

\subsection{Plots of the DF}\label{sec:plotsgauge2}
We now present plots for the DF given in Equation (\ref{eq:gammaANWT}), for $\beta_{pl}=0.05$ and $\delta_e=\delta_i=0.03$. This value for $\beta_{pl}$ is substantially lower than the value used in Section \ref{sec:newdf1}, which had $\beta_{pl}=0.85$. The ability to go down to lower values of the plasma beta is due to the re-gauging process as explained in Section \ref{Subsec:regauge}. The plots that we show are intended to demonstrate progress in the numerical evaluation of low-beta DFs for nonlinear force-free fields, and as a proof of principle. 

The value of $\delta_s$ is chosen such that $\delta_s<\beta_{pl}$, since as explained in Section \ref{sec:plotsgauge1}, attaining convergence numerically has not been easy for values of $\delta_s>\beta_{pl}$ when $\beta_{pl}<1$.

Initial investigations of the shape of the variation of the DF in the $v_x$ and $v_y$ directions indicate that the DF seems to have a Gaussian profile, as in the DFs analysed in Section \ref{sec:newdf1}. Hence, as in that work, we shall compare the DFs calculated in this work to drifting Maxwellians, in order to measure the actual difference between the Vlasov equilibrium $f_{s}$, and the Maxwellian $f_{Maxw,s}$. In Figures (\ref{fig:jpp1a}-\ref{fig:jpp1e}) and (\ref{fig:jpp2a}-\ref{fig:jpp2e}) we give contour plots in $(v_x/v_{\text{th},s},v_y/v_{\text{th},s})$ space of the `raw' difference between the DFs defined by Equation (\ref{eq:gammaANWT}) and (\ref{eq:Maxshift}). These figures bear close resemblance to those presented in Section \ref{sec:plotsgauge1}. Specifically, we see `shallower' peaks for the exact Vlasov solution, $f_{s}$, than for $f_{Maxw,s}$. There is also a clear anisotropic effect in that $f_{s}$ falls off more quickly in the $v_x$ direction than in the $v_y$ direction as compared to $f_{Maxw,s}$. Note that whilst the raw differences plotted in these figures may not seem substantial, they can in fact be significant as a proportion of $f_{Maxw,s}$, and even of the order of the magnitude of $f_{Maxw,s}$. As a demonstration of this fact we present plots in Figures (\ref{fig:jpp3a}-\ref{fig:jpp3e}) and (\ref{fig:jpp4a}-\ref{fig:jpp4e}) of the quantity defined by
\[
f_{diff,s}=(f_{s}-f_{Maxw,s})/f_{Maxw,s}
\]
for line cuts through $(v_x/v_{\text{th},s},v_y/v_{\text{th},s}=0)$ and $(v_x/v_{\text{th},s}=0,v_y/v_{\text{th},s})$ respectively, for the ions. As suggested by the contour plots, $f_{diff,i}$ takes on significantly larger values in the $v_y$ direction, indicating that the tail of $f_i$ falls off less quickly than $f_{Maxw,i}$ in $v_y$ than in $v_x$.

We are yet to observe multiple peaks in the multiplicative DFs for the FFHS, derived herein and in Section \ref{sec:newdf1}. However, the summative Harrison-Neukirch equilibria \citep{Harrison-2009PRL} could develop multiple maxima for sufficiently large values of the magnitude of the drift velocities. For the DF derived in this chapter, and as in Section \ref{sec:newdf1}, the `amplitude' of the drift velocity profile across the current sheet is given by
\begin{equation}
\frac{u_{s}}{v_{\text{th},s}}=2\text{sgn}(q_{s})\frac{\delta_s }{\beta_{pl}},\nonumber
\end{equation}
where $u_s$ represents the maximum value of the drift velocities. As a result, large values of the drift velocity correspond to large values of $\delta_s/\beta_{pl}$, and these are exactly the regimes for which we are struggling to attain numerical convergence. This theory suggests that we may not be seeing DFs with multiple maxima because we are not in the appropriate parameter space. 

\begin{figure}
\centering
\begin{subfigure}[b]{0.48\textwidth}
\includegraphics[width=\textwidth]{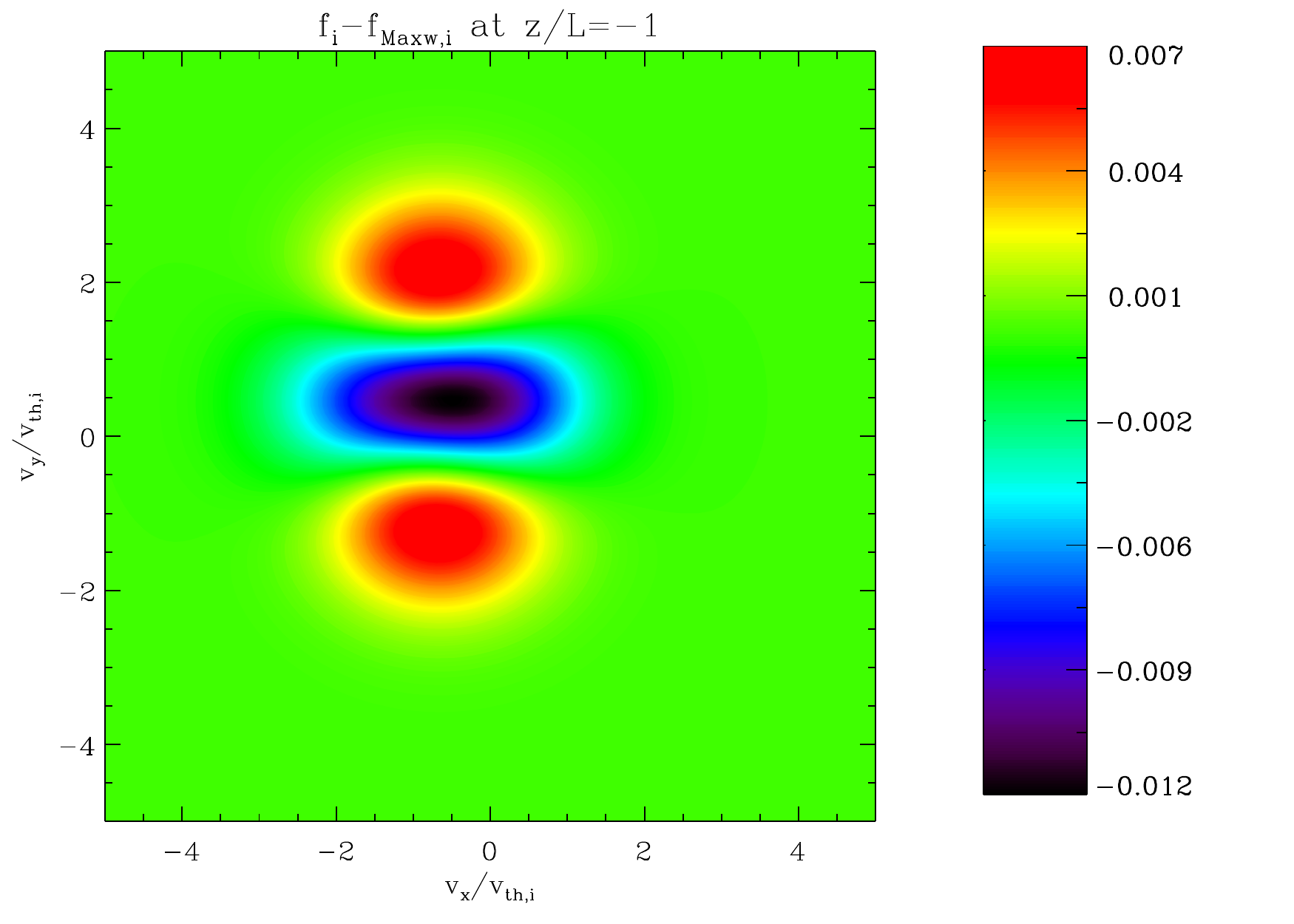}
 \caption{\small \label{fig:jpp1a}}
\end{subfigure}
\begin{subfigure}[b]{0.48\textwidth}
\includegraphics[width=\textwidth]{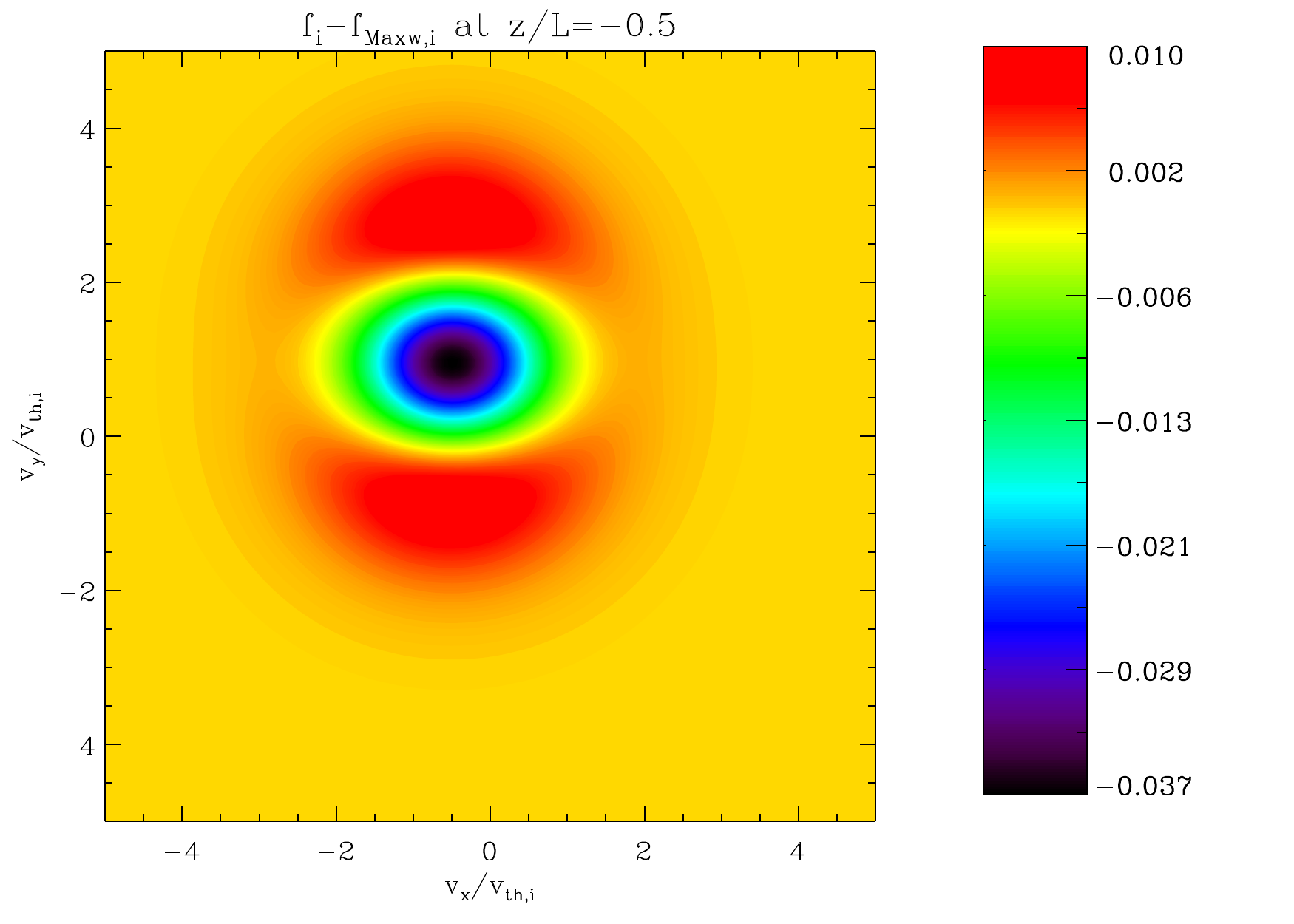}
 \caption{\small \label{fig:jpp1b}}
\end{subfigure}
\begin{subfigure}[b]{0.48\textwidth}
\includegraphics[width=\textwidth]{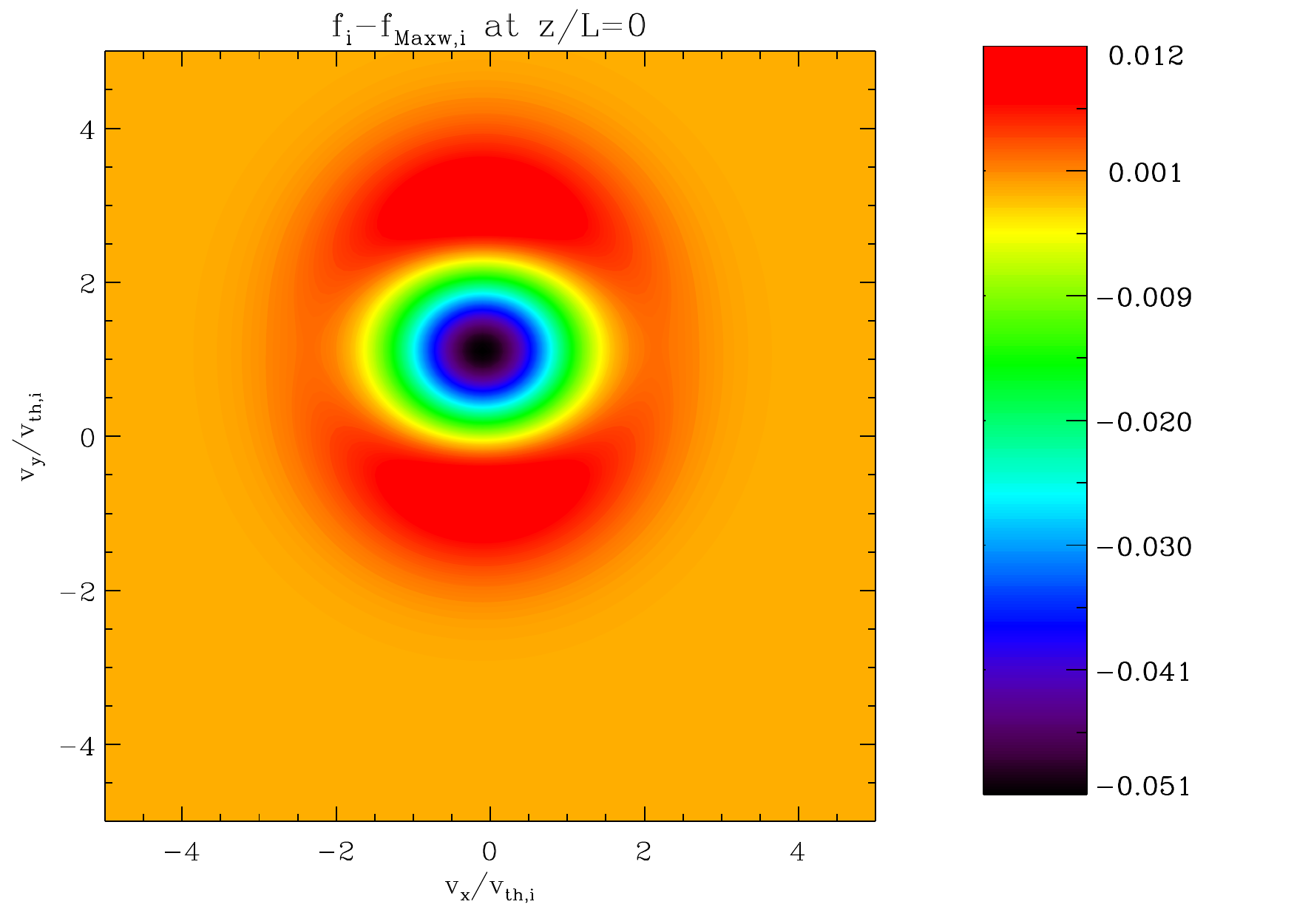}
 \caption{\small \label{fig:jpp1c}}
\end{subfigure}
\begin{subfigure}[b]{0.48\textwidth}
\includegraphics[width=\textwidth]{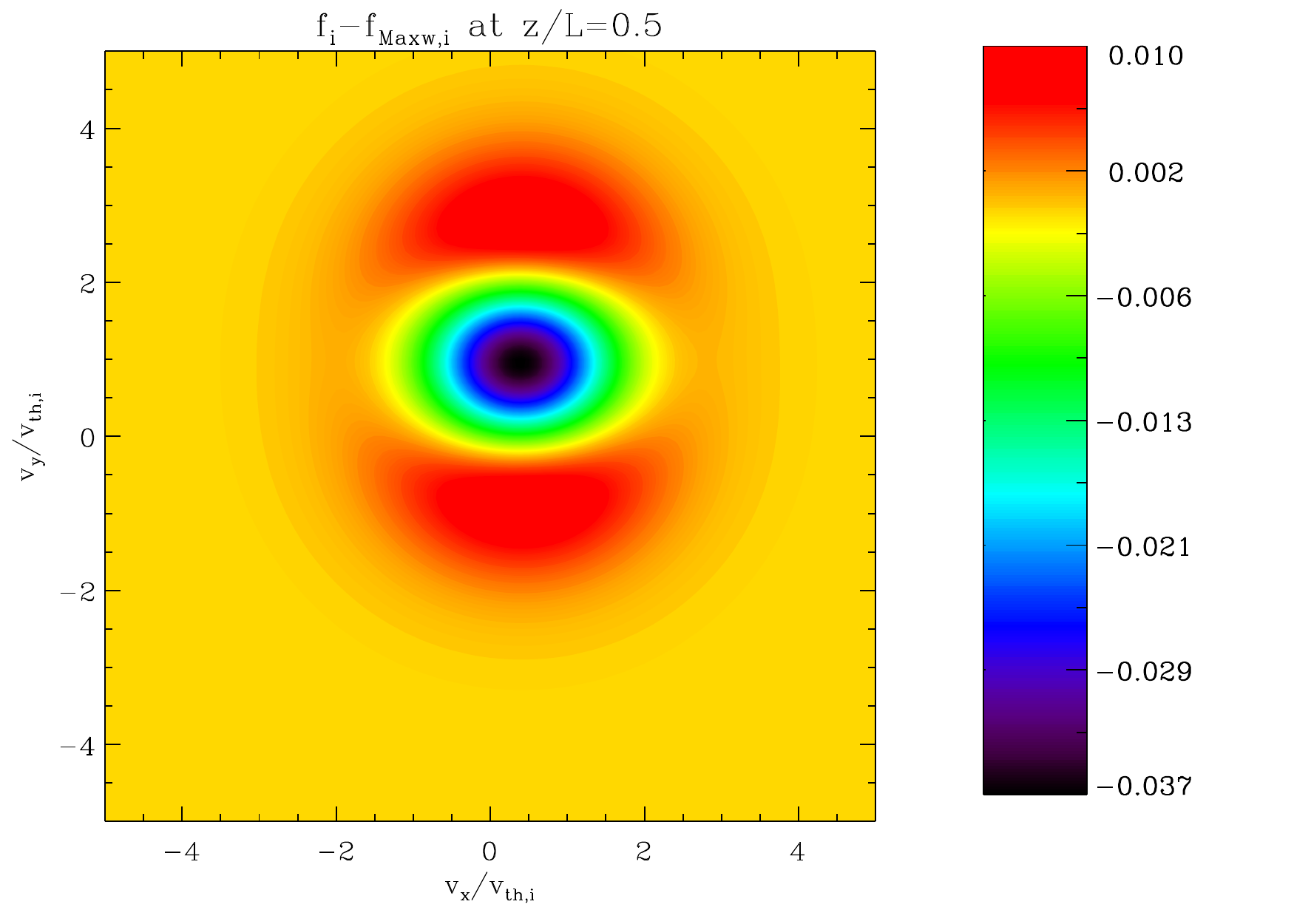}
 \caption{\small \label{fig:jpp1d}}
\end{subfigure}
\begin{subfigure}[b]{0.48\textwidth}
\includegraphics[width=\textwidth]{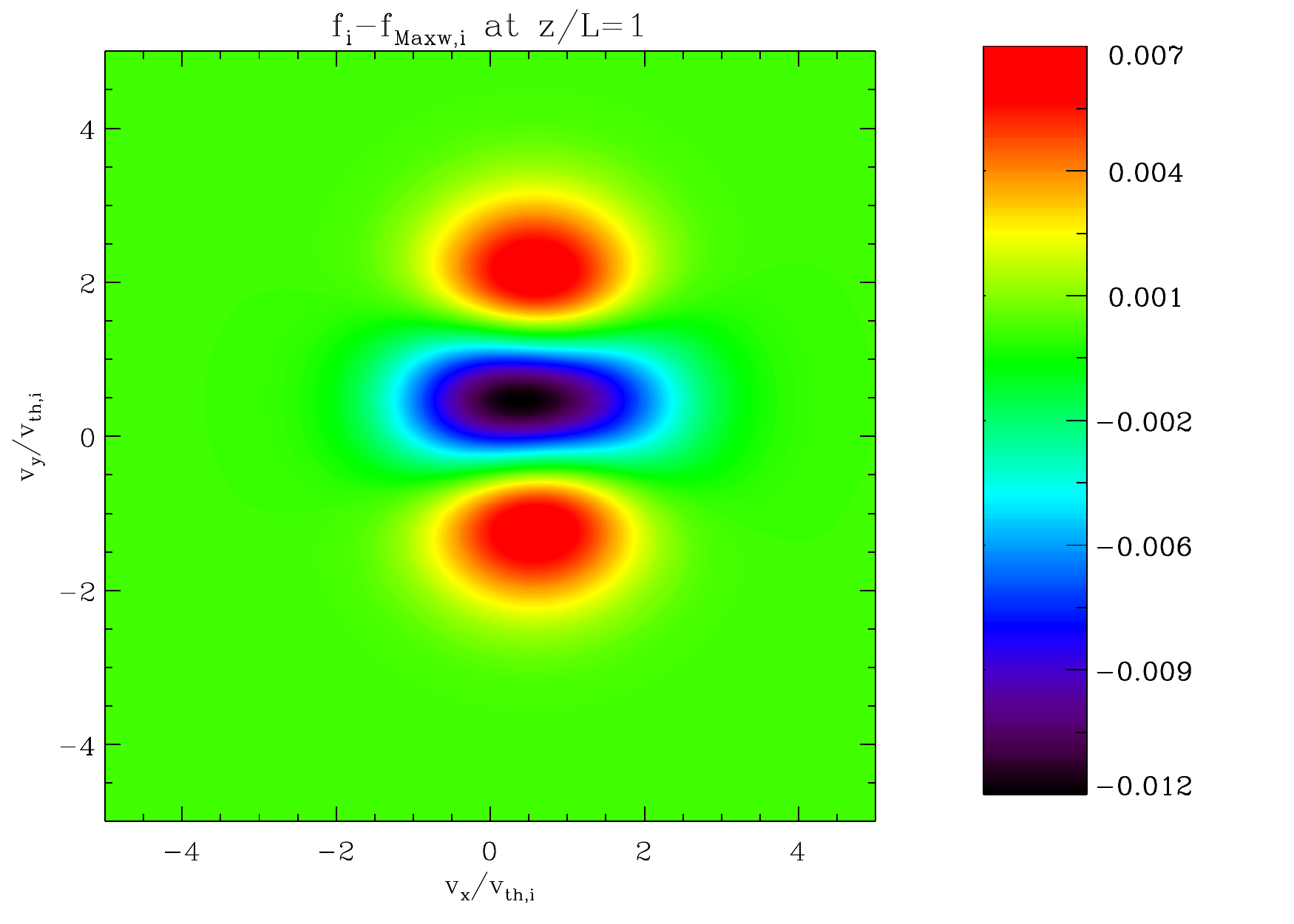}
 \caption{\small \label{fig:jpp1e}}
\end{subfigure}
\caption{\small Contour plots of $f_i-f_{Maxw,i}$ for $z/L=-1$ (\ref{fig:jpp1a}),  $z/L=-0.5$ (\ref{fig:jpp1b}), $z/L=0$ (\ref{fig:jpp1c}), $z/L=0.5$ (\ref{fig:jpp1d}) and $z/L=1$ (\ref{fig:jpp1e}). $\beta_{pl}=0.05$ and $\delta_i=0.03$. }
 \end{figure}

\begin{figure}
\centering
\begin{subfigure}[b]{0.48\textwidth}
\includegraphics[width=\textwidth]{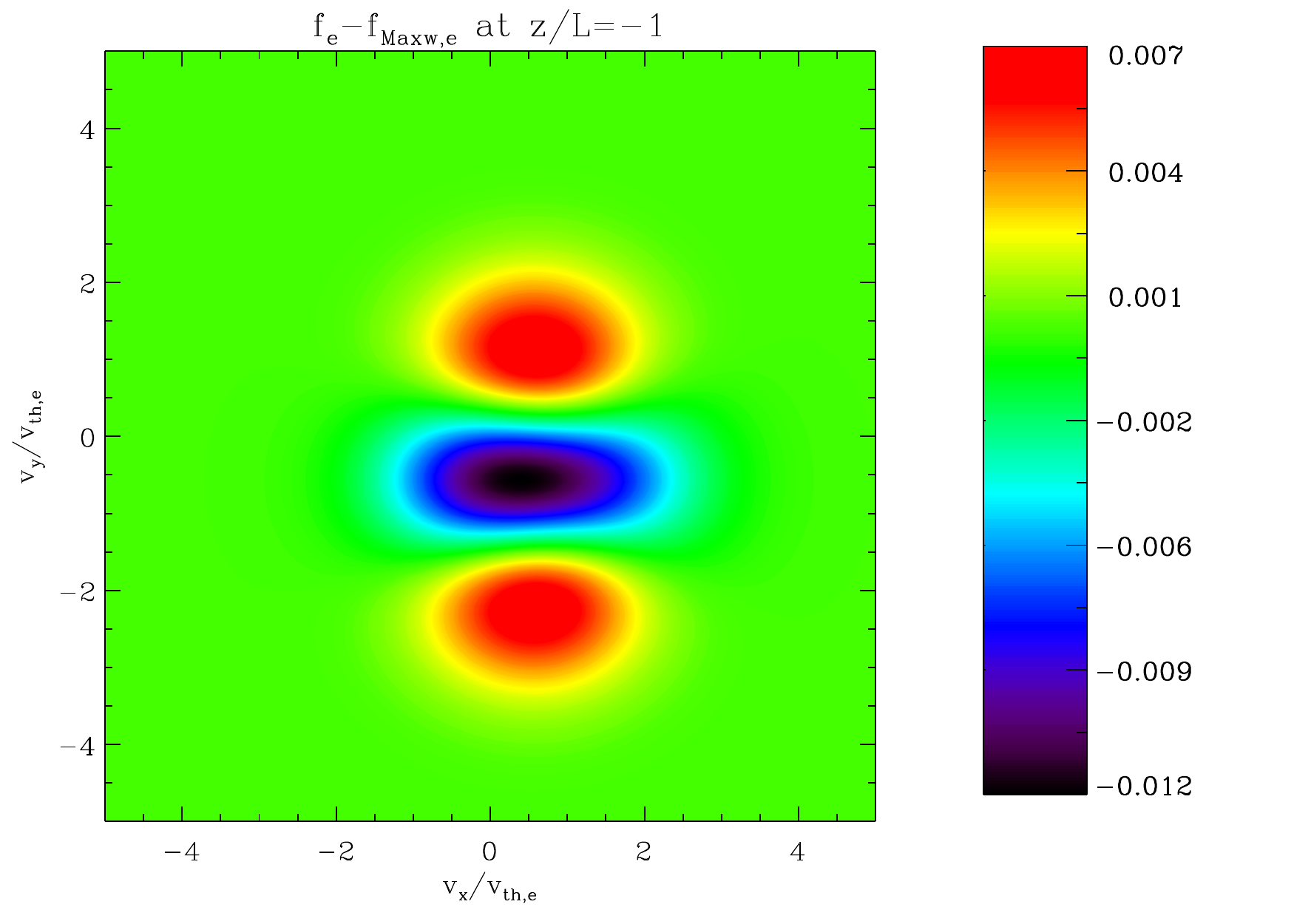}
 \caption{\small \label{fig:jpp2a}}
\end{subfigure}
\begin{subfigure}[b]{0.48\textwidth}
\includegraphics[width=\textwidth]{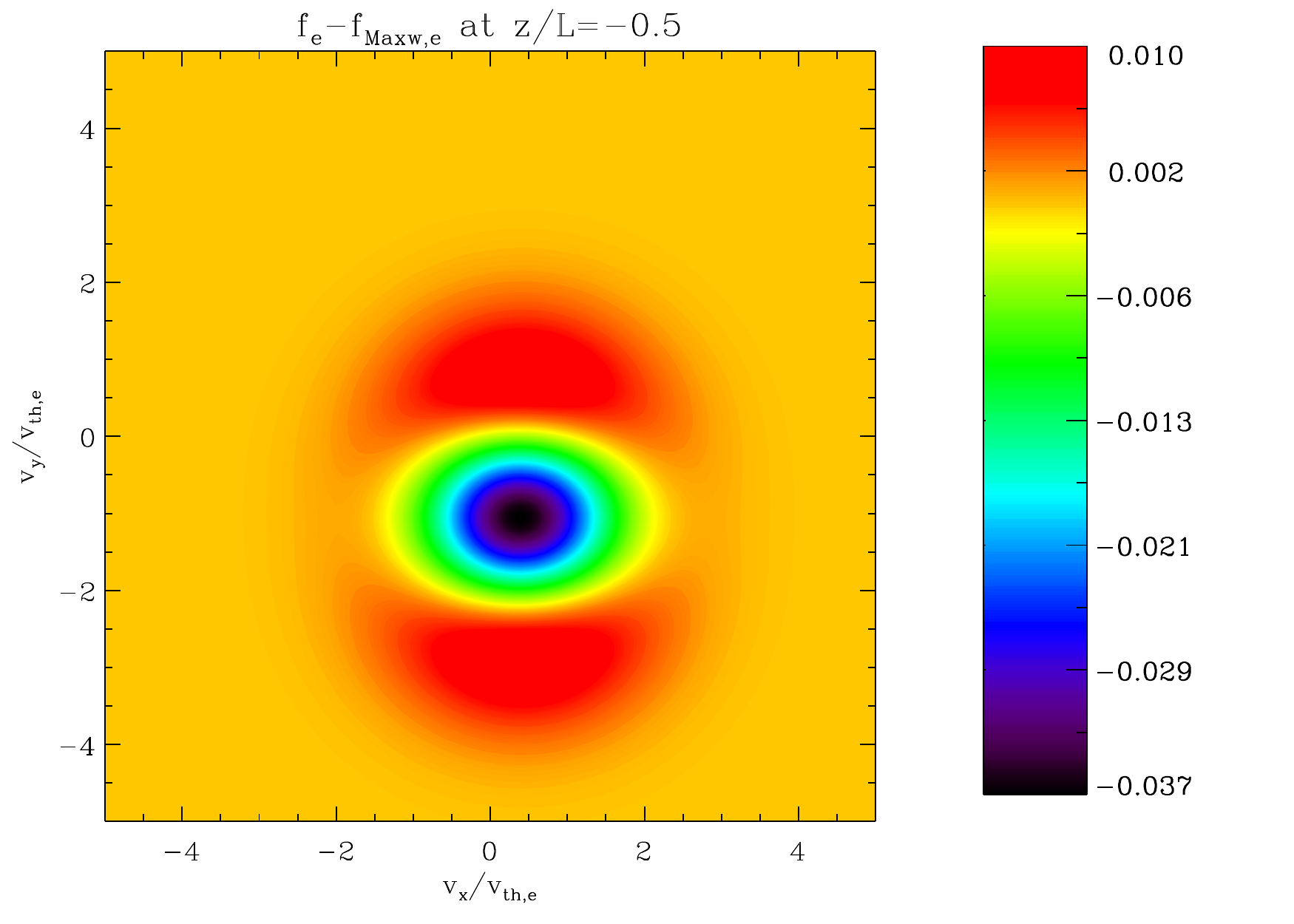}
 \caption{\small \label{fig:jpp2b}}
\end{subfigure}
\begin{subfigure}[b]{0.48\textwidth}
\includegraphics[width=\textwidth]{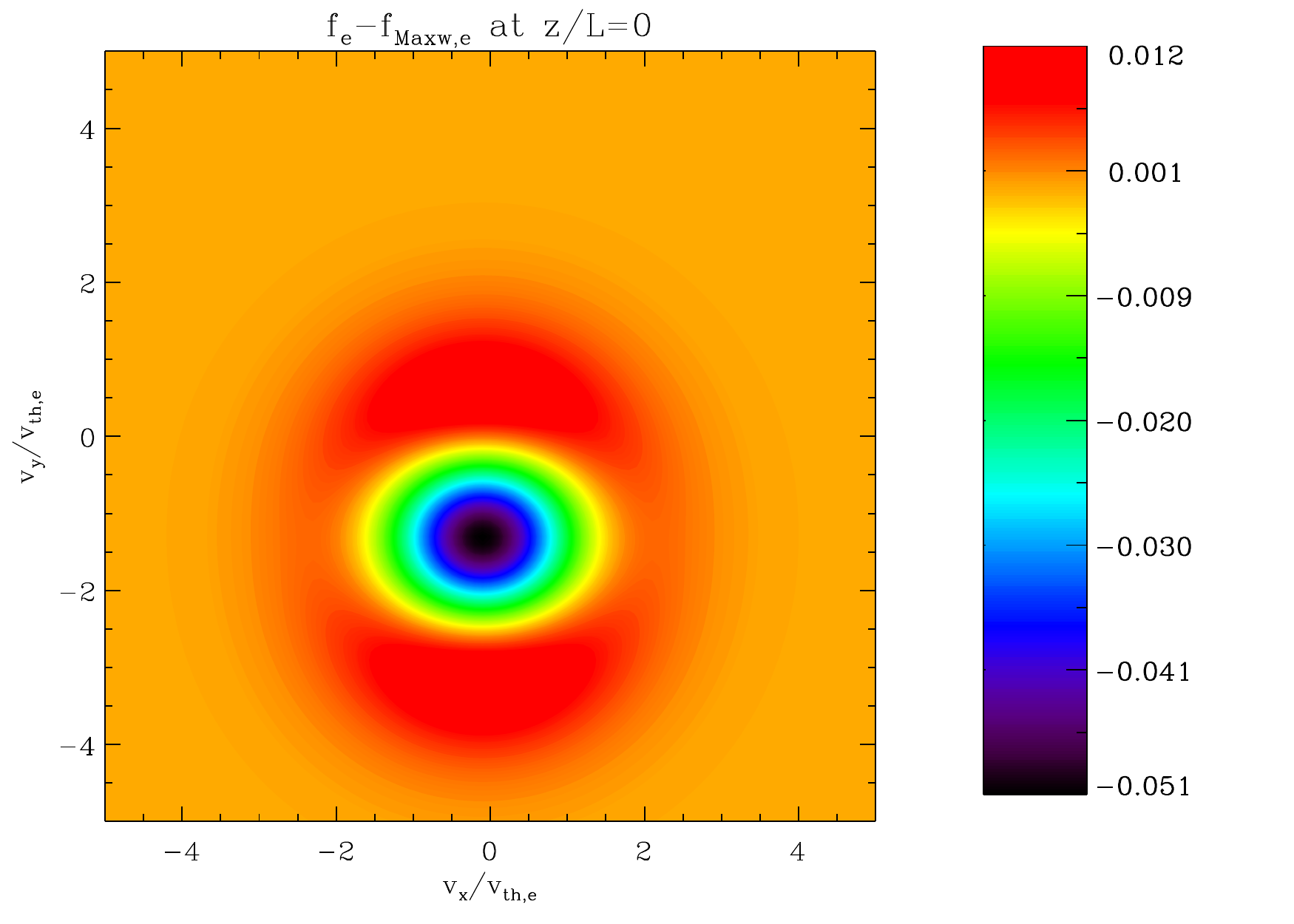}
 \caption{\small \label{fig:jpp2c}}
\end{subfigure}
\begin{subfigure}[b]{0.48\textwidth}
\includegraphics[width=\textwidth]{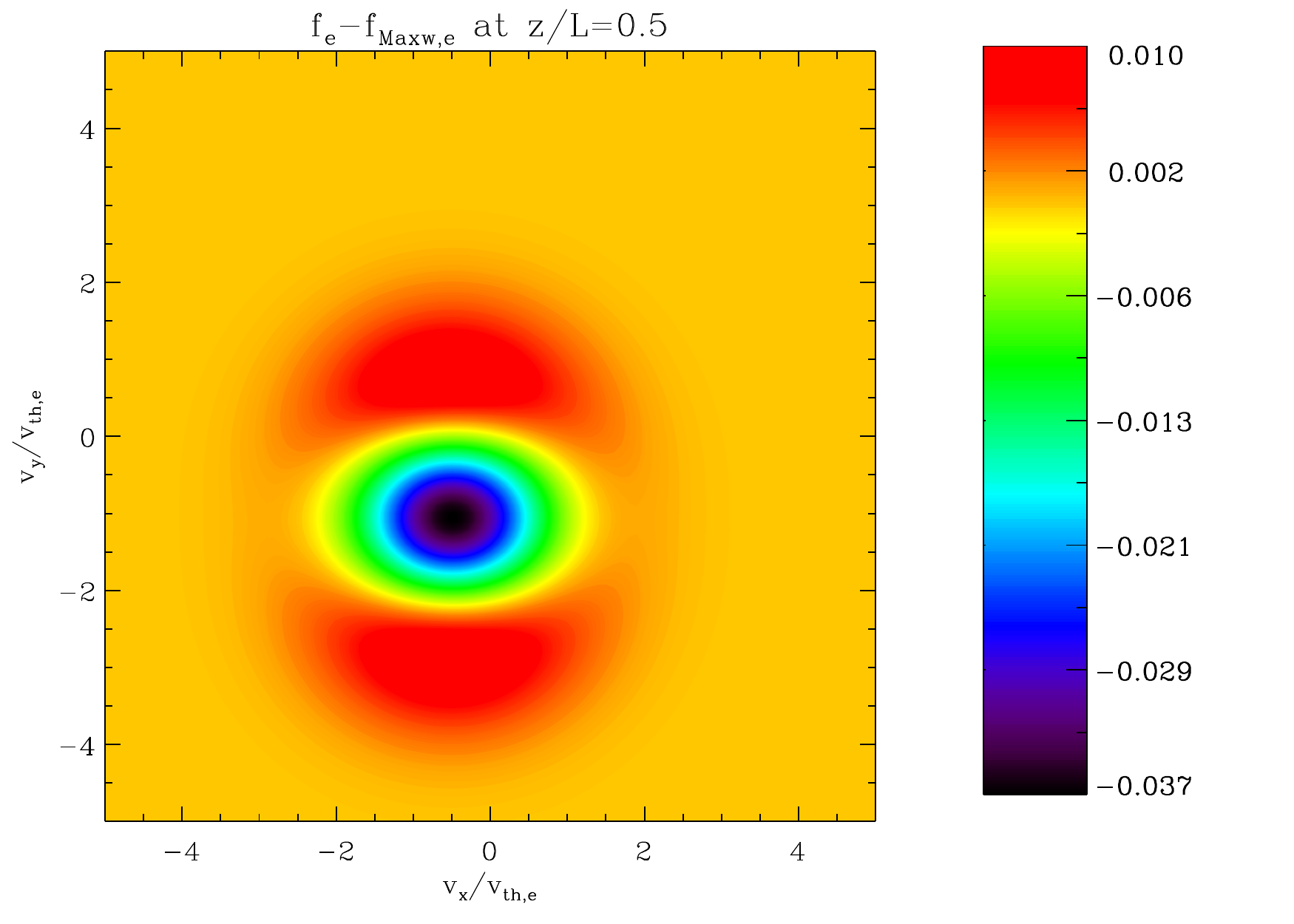}
 \caption{\small \label{fig:jpp2d}}
\end{subfigure}
\begin{subfigure}[b]{0.48\textwidth}
\includegraphics[width=\textwidth]{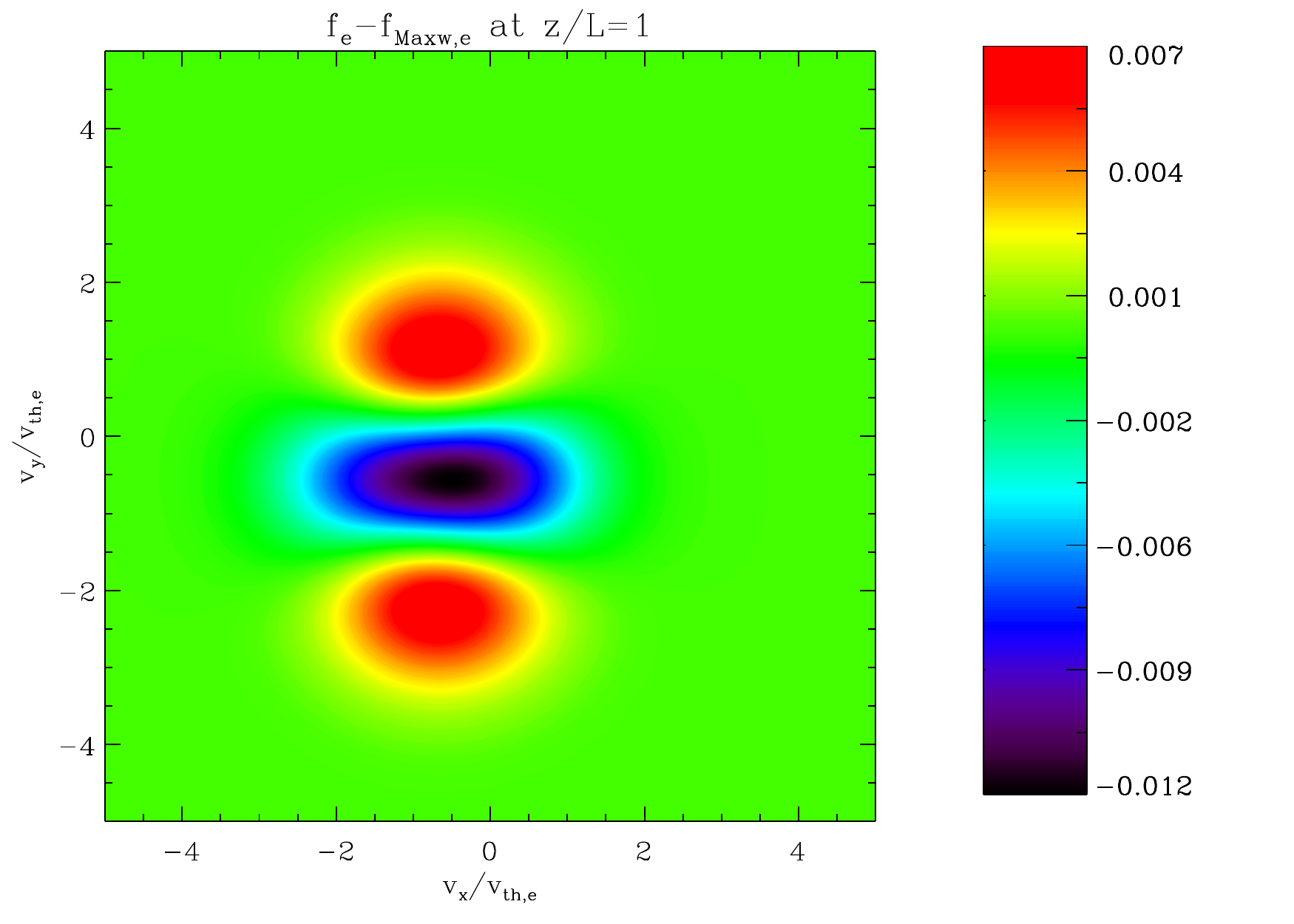}
 \caption{\small \label{fig:jpp2e}}
\end{subfigure}
\caption{\small Contour plots of $f_e-f_{Maxw,e}$ for $z/L=-1$ (\ref{fig:jpp2a}),  $z/L=-0.5$ (\ref{fig:jpp2b}), $z/L=0$ (\ref{fig:jpp2c}), $z/L=0.5$ (\ref{fig:jpp2d}) and $z/L=1$ (\ref{fig:jpp2e}). $\beta_{pl}=0.05$ and $\delta_e=0.03$. }
 \end{figure}

\begin{figure}
\centering
\begin{subfigure}[b]{0.48\textwidth}
\includegraphics[width=\textwidth]{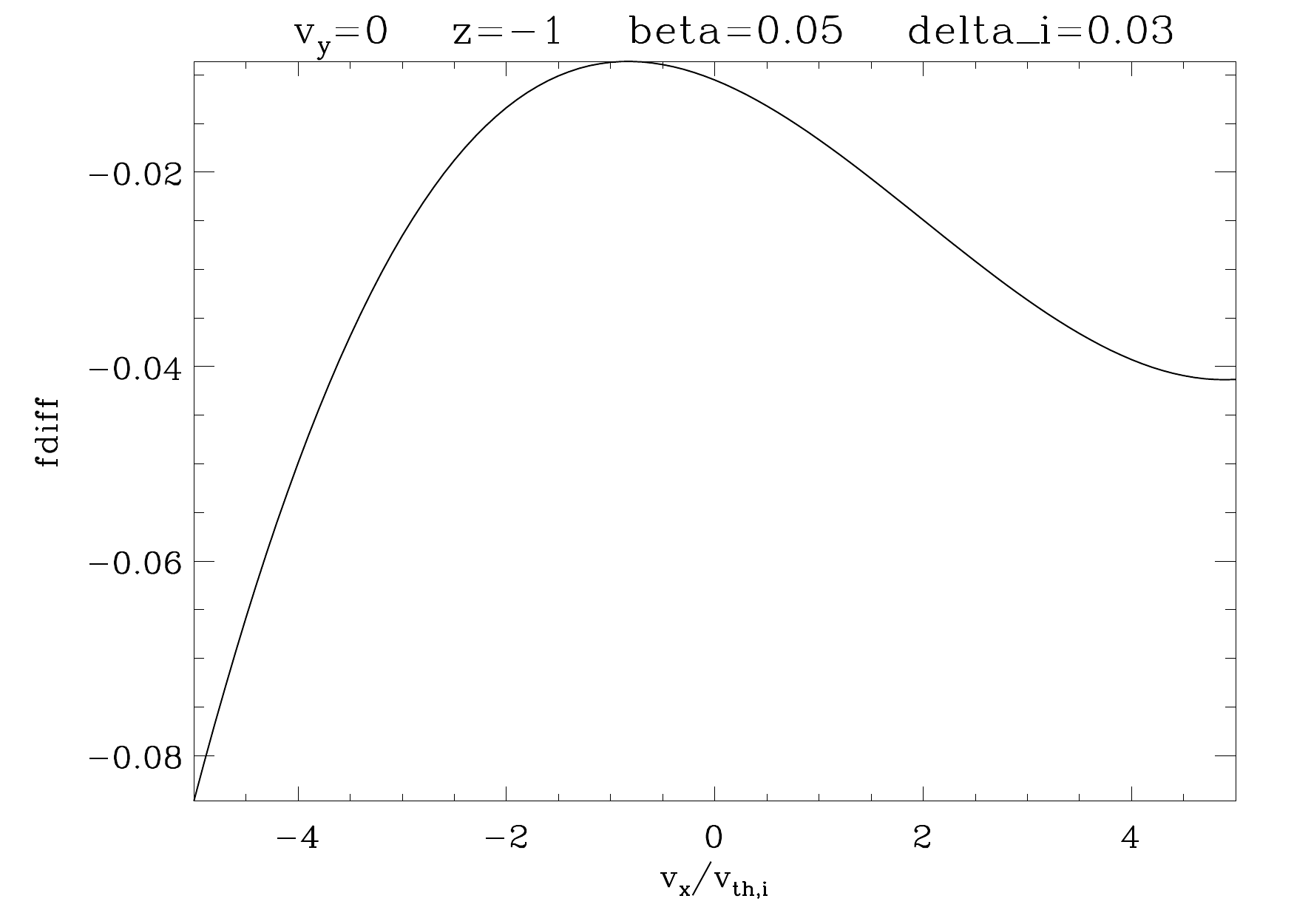}
 \caption{\small \label{fig:jpp3a}}
\end{subfigure}
\begin{subfigure}[b]{0.48\textwidth}
\includegraphics[width=\textwidth]{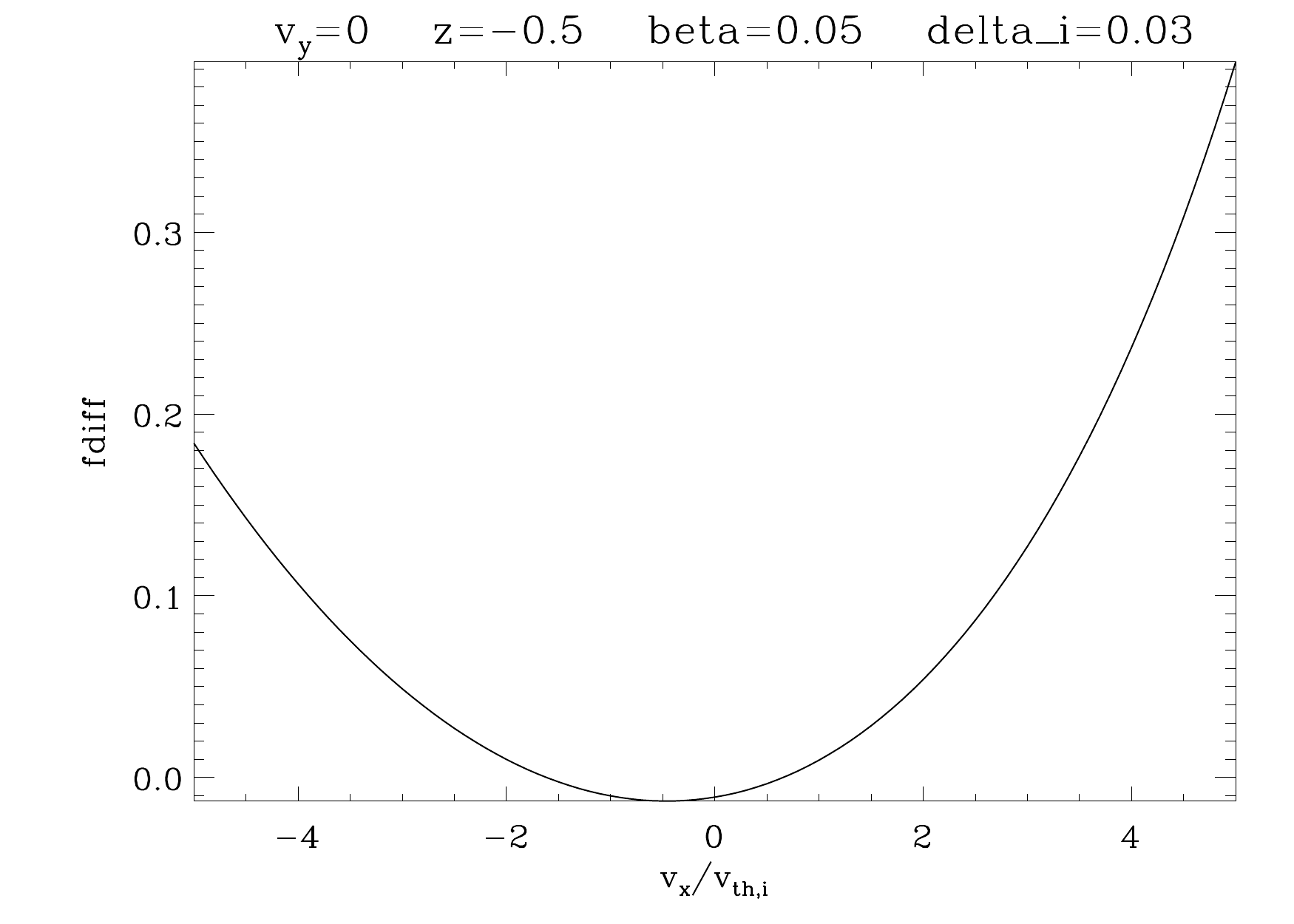}
 \caption{\small \label{fig:jpp3b}}
\end{subfigure}
\begin{subfigure}[b]{0.48\textwidth}
\includegraphics[width=\textwidth]{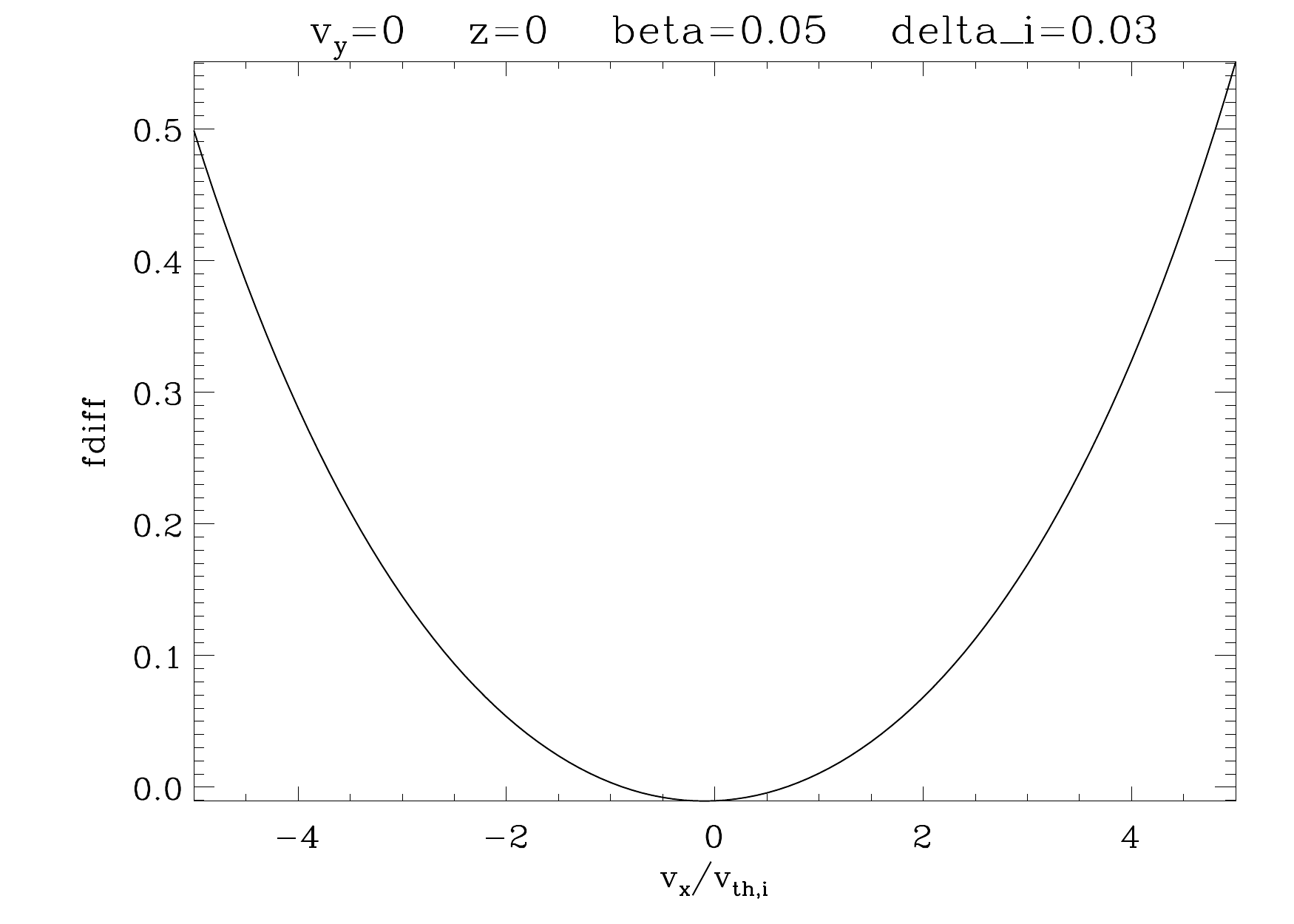}
 \caption{\small \label{fig:jpp3c}}
\end{subfigure}
\begin{subfigure}[b]{0.48\textwidth}
\includegraphics[width=\textwidth]{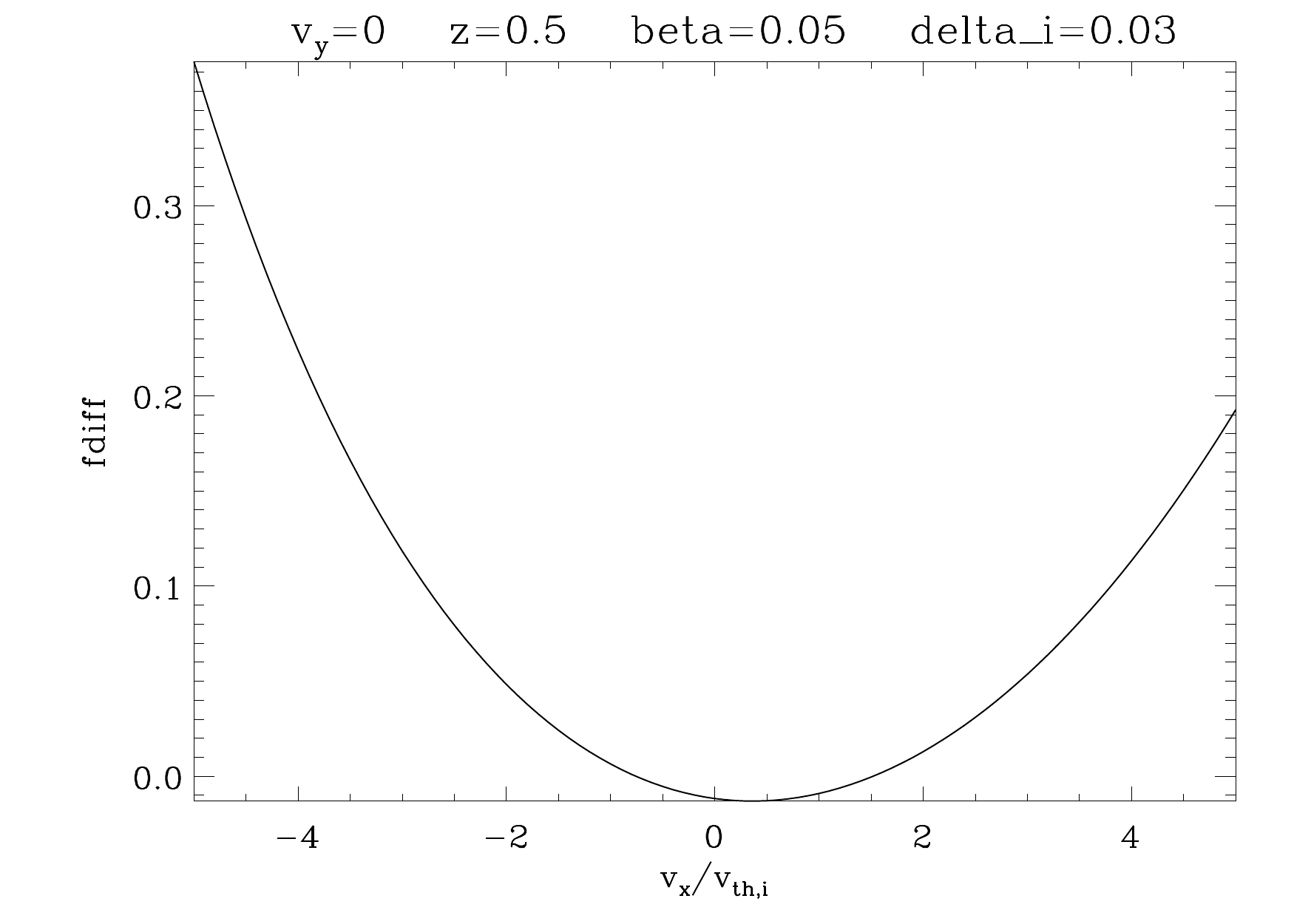}
 \caption{\small \label{fig:jpp3d}}
\end{subfigure}
\begin{subfigure}[b]{0.48\textwidth}
\includegraphics[width=\textwidth]{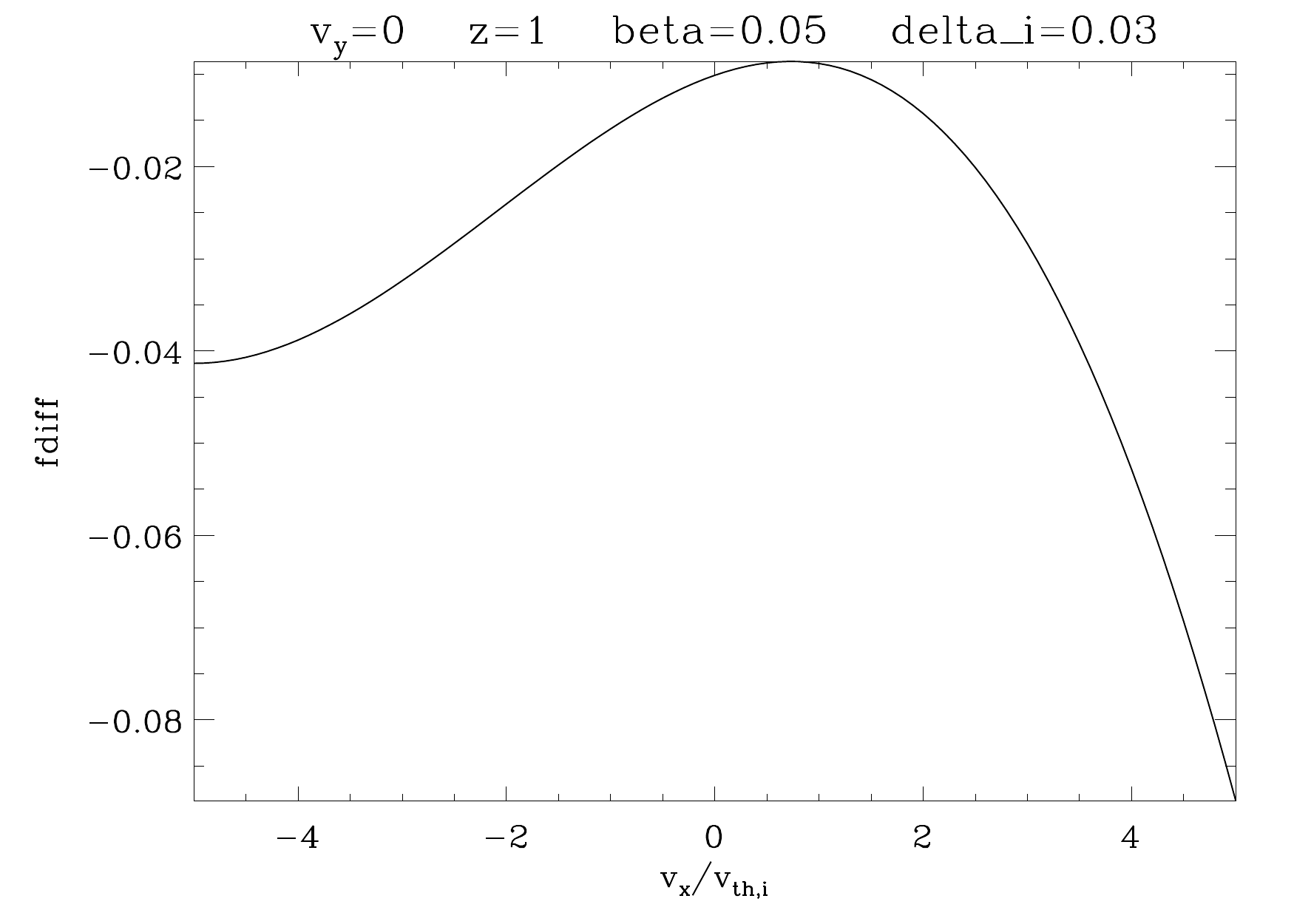}
 \caption{\small \label{fig:jpp3e}}
\end{subfigure}
\caption{\small Line plots of $f_{diff,i}$ against $v_x/v_{\text{th},i}$ at $v_y=0$ for $z/L=-1$ (\ref{fig:jpp3a}),  $z/L=-0.5$ (\ref{fig:jpp3b}), $z/L=0$ (\ref{fig:jpp3c}), $z/L=0.5$ (\ref{fig:jpp3d}) and $z/L=1$ (\ref{fig:jpp3e}). $\beta_{pl}=0.05$ and $\delta_i=0.03$. }
 \end{figure}

\begin{figure}
\centering
\begin{subfigure}[b]{0.48\textwidth}
\includegraphics[width=\textwidth]{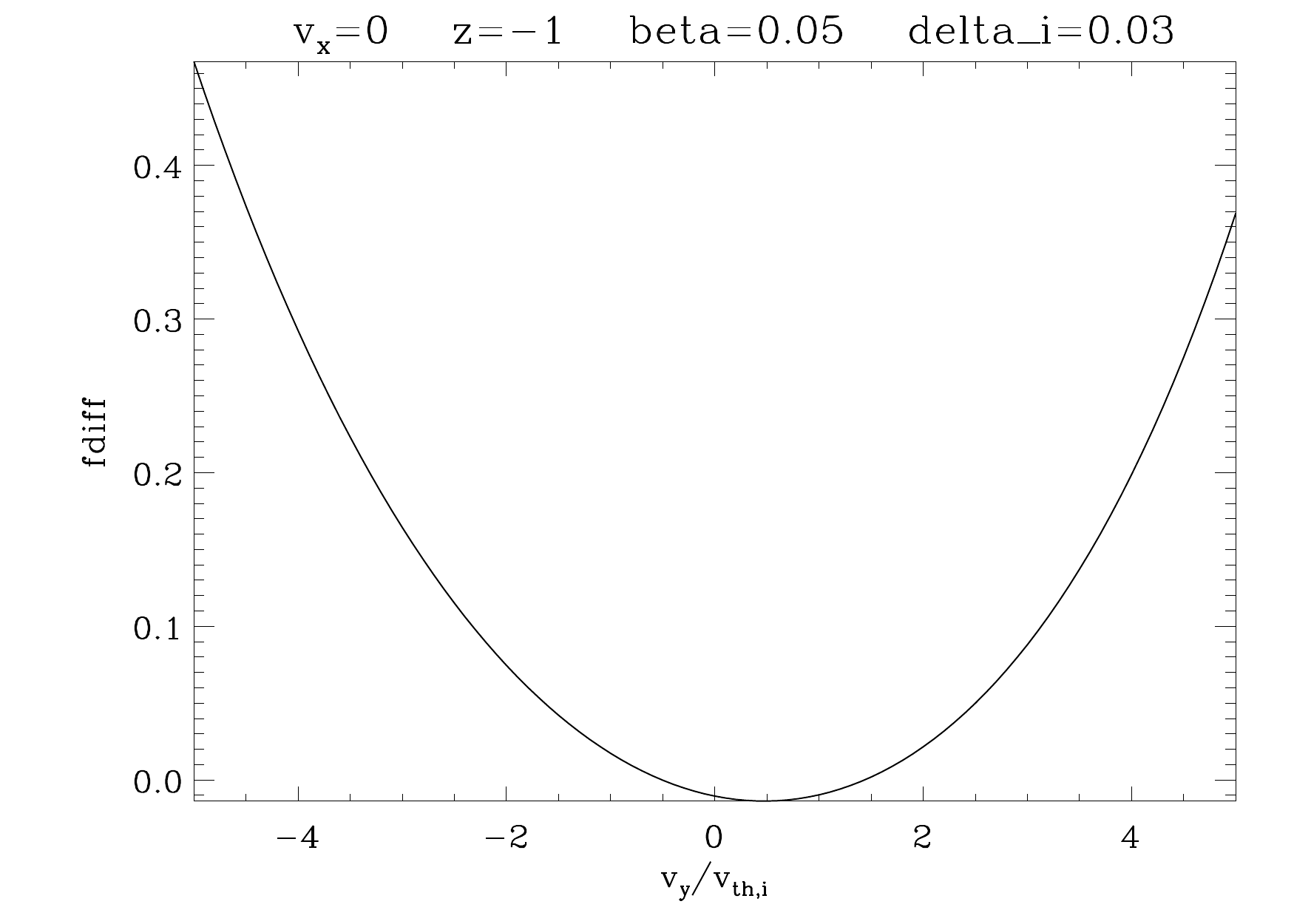}
 \caption{\small \label{fig:jpp4a}}
\end{subfigure}
\begin{subfigure}[b]{0.48\textwidth}
\includegraphics[width=\textwidth]{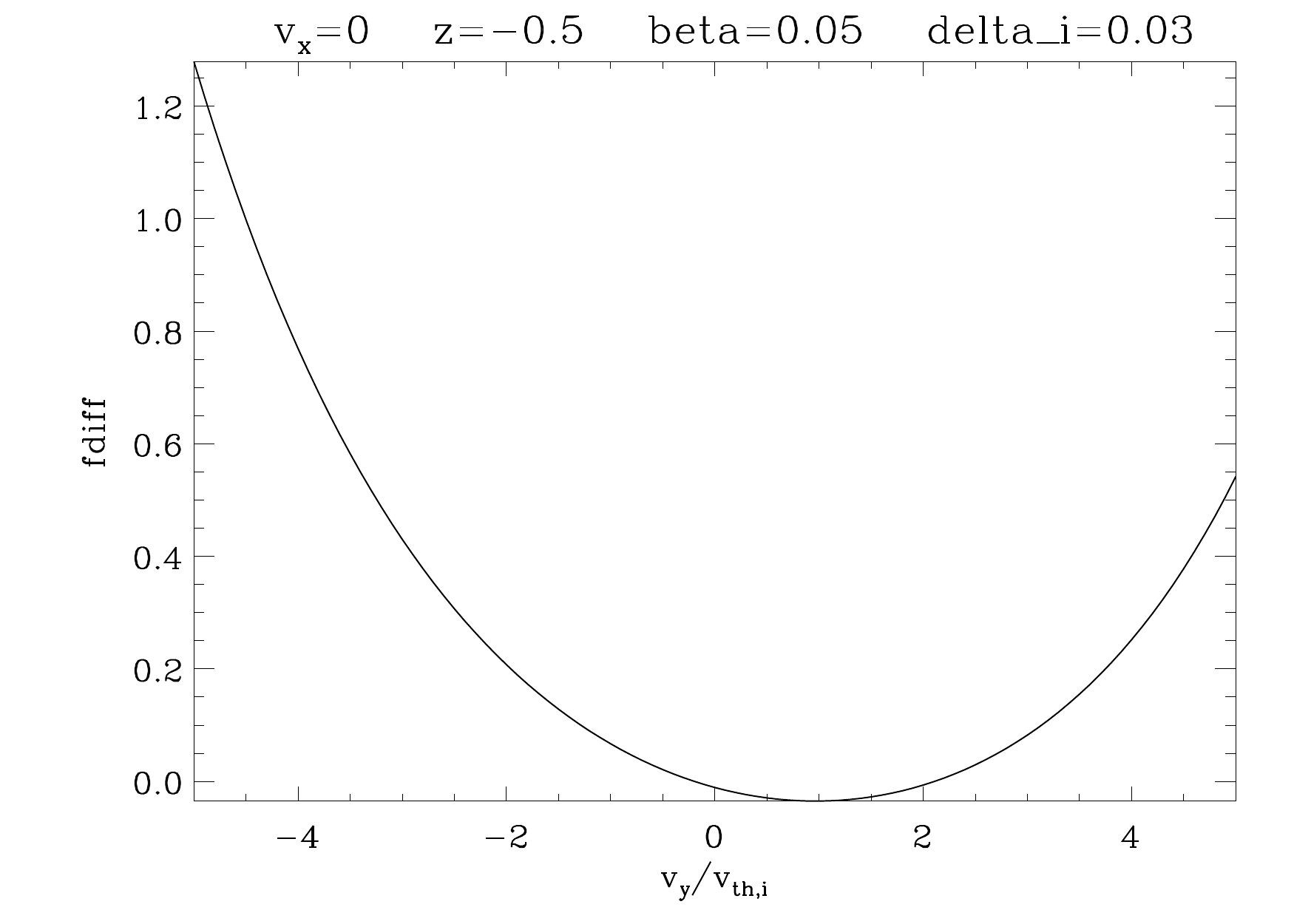}
 \caption{\small \label{fig:jpp4b}}
\end{subfigure}
\begin{subfigure}[b]{0.48\textwidth}
\includegraphics[width=\textwidth]{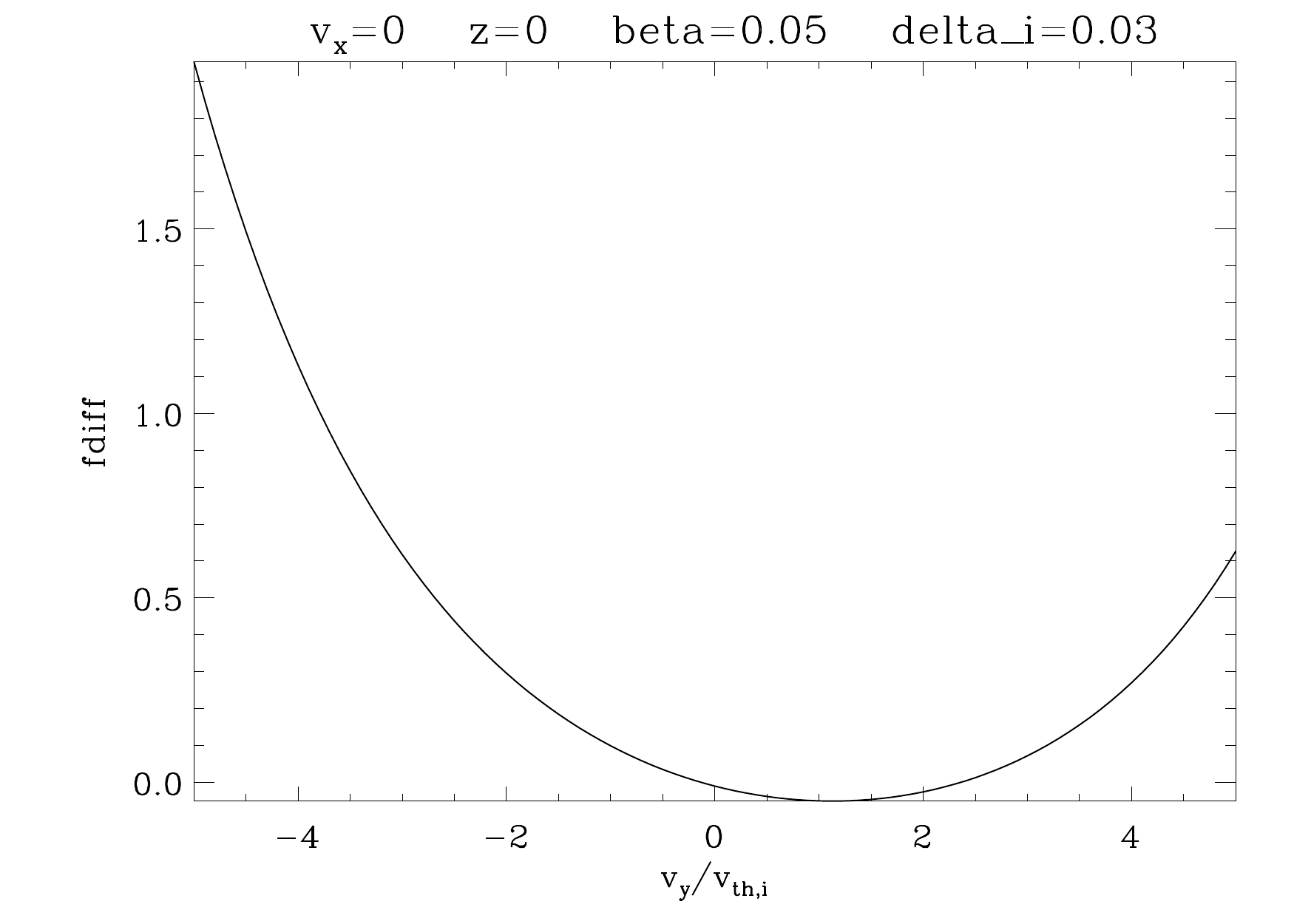}
 \caption{\small \label{fig:jpp4c}}
\end{subfigure}
\begin{subfigure}[b]{0.48\textwidth}
\includegraphics[width=\textwidth]{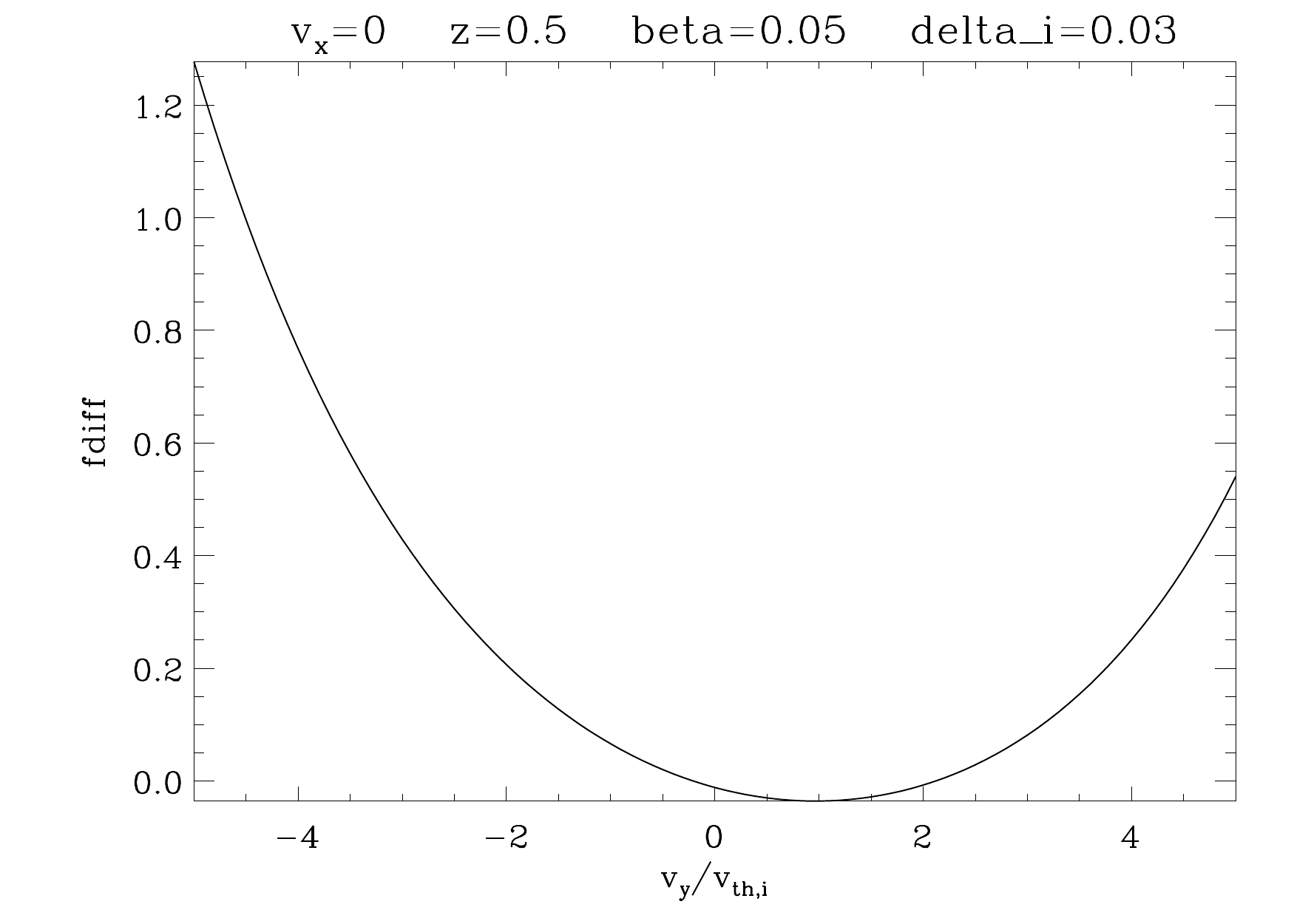}
 \caption{\small \label{fig:jpp4d}}
\end{subfigure}
\begin{subfigure}[b]{0.48\textwidth}
\includegraphics[width=\textwidth]{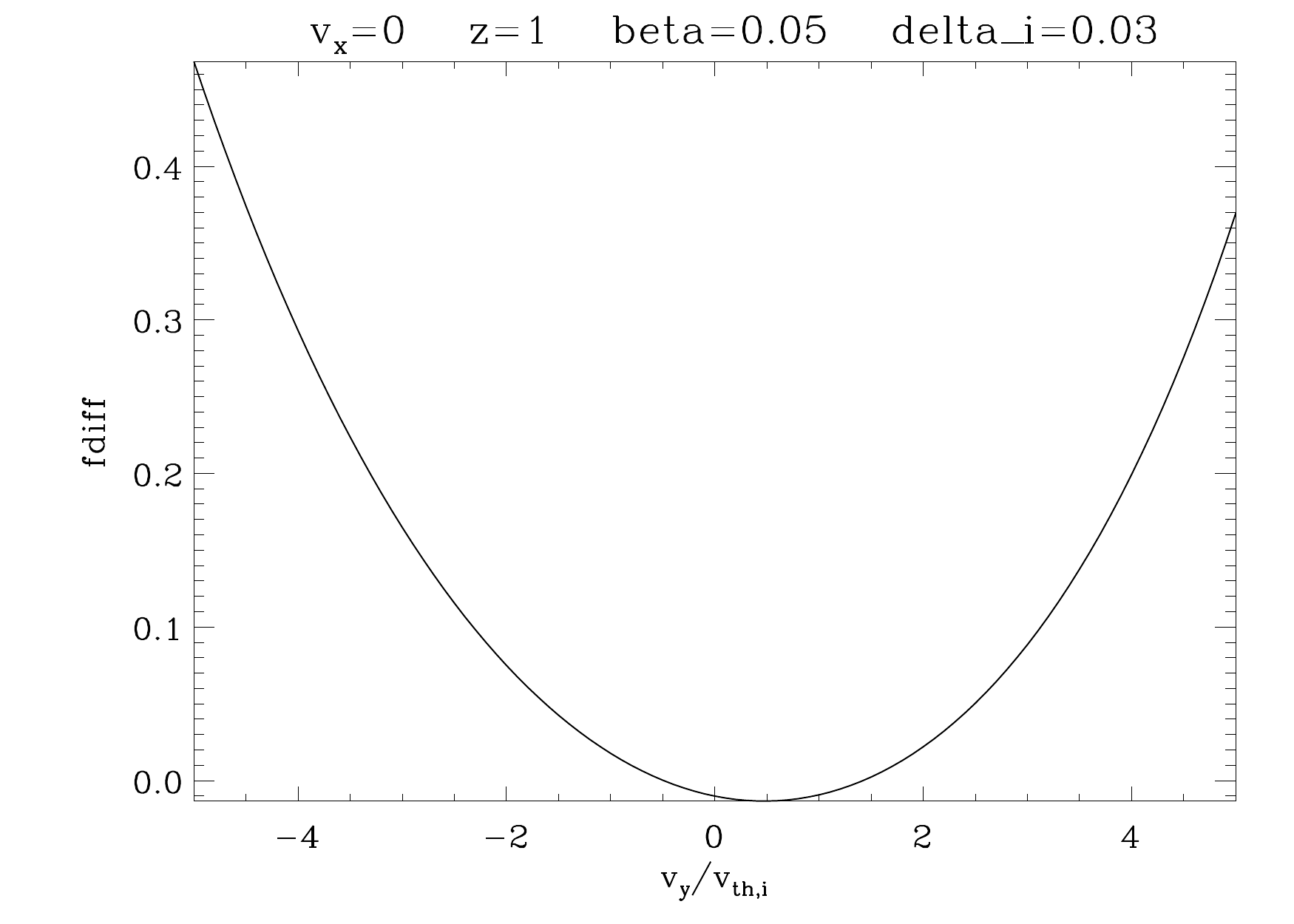}
 \caption{\small \label{fig:jpp4e}}
\end{subfigure}
\caption{\small Line plots of $f_{diff,i}$ against $v_y/v_{\text{th},i}$ at $v_x=0$ for $z/L=-1$ (\ref{fig:jpp4a}),  $z/L=-0.5$ (\ref{fig:jpp4b}), $z/L=0$ (\ref{fig:jpp4c}), $z/L=0.5$ (\ref{fig:jpp4d}) and $z/L=1$ (\ref{fig:jpp4e}). $\beta_{pl}=0.05$ and $\delta_i=0.03$. }
 \end{figure}

\section{Summary}
This chapter contains presentation and analysis of the first DFs capable of describing low plasma beta, nonlinear force-free collisionless equilibria. By using expressions for the moments of the DFs we have derived the relationships between the micro- and macroscopic parameters of the equilibrium, in particular the current sheet width. We have presented line-plots of the electron DF in the $v_x$ direction as a representative example. These show that the DF has a single maximum in the $v_x$ direction, and \emph{seems} to resemble a Maxwellian, at least for the parameter range studied. However, a detailed comparison with a Maxwellian describing the same particle density and average velocity/current density shows that there are significant deviations. This was corroborated by contour plots of the difference between the DF and the Maxwellian in the $(v_x,v_y)$ plane. 

While it has been shown that the infinite series over Hermite polynomials are convergent for all parameter values, plotting the DF in the original gauge,
\[
\boldsymbol{A}=B_0L(2\arctan (\exp(z/L)), \, \ln \text{sech}(z/L),\, 0),
\]
has been difficult for the low-beta regime, and particularly due to the $v_x$ dependent sum. As such, $\beta_{pl}=0.85$ was the lowest value of the plasma beta for which we could be confident in the numerical method. Further work on attaining numerical convergence for a wider parameter range was necessary, with a particular motivation was to find out whether the DF develops multiple peaks similar to the DF found for an additive form of $P_{zz}$ \citep{Neukirch-2009}.

Motivated by the numerical challenges mentioned above, in Section \ref{sec:newdf2} we presented calculations for a DF with a different gauge to that considered in previous studies \citep{Harrison-2009PRL, Neukirch-2009, Wilson-2011, Abraham-Shrauner-2013, Kolotkov-2015}, 
\[
\boldsymbol{A}=B_0L(2\arctan \tanh(z/(2L)), \, \ln \text{sech}(z/L),\, 0).
\]
We have presented some plots of a comparison between the re-gauged DFs and shifted Maxwellian functions, as a proof of principle, namely that numerical convergence for values of $\beta_{pl}$ lower than previously reached in the `original gauge', can now be attained ($\beta_{pl}=0.05$). 

Verification of the analytical properties of convergence and boundedness for both the DFs written as infinite sums over Hermite polynomials have been given. Note that the verification of these DFs is rather involved due to the complex nature of the specific Maclaurin expansions that we consider, and is simpler for more `straightforward' expansions, e.g. for the example considered in Section \ref{Sec:Channell}. 

Future work could involve an in-depth parameter study of the new re-gauged multiplicative DF for the FFHS, with an analysis of how far the exact equilibrium DF differs from an appropriately drifting Maxwellian, frequently used in fully kinetic simulations for reconnection studies. In particular it would be interesting to see how much the DFs differ from drifting Maxwellians as the set of parameters $(\beta_{pl}, \delta_{s})$ are varied across a wide range. Preliminary numerical investigations verify that plotting DFs for the FFHS with a lower $\beta_{pl}$ than previously achieved, namely $\beta_{pl}=0.05$ rather than $\beta_{pl}=0.85$, has been made possible by the theoretical developments in this chapter. We have not yet observed multiple maxima for the DFs, but do see significant deviations from Maxwellian distributions, and an anisotropy in velocity space.

\null\newpage
\chapter{One-dimensional asymmetric current sheets} \label{Asymmetric}

 \epigraph{\emph{Reconnection is now among the most fundamental unifying concepts in astrophysics, comparable in scope and importance to the role of natural selection in biology.}}{\textit{from \citet{Moore-2015}}}

\noindent Much of the work in this chapter is drawn from \citet{Allanson-2017GRL}

\section{Preamble}
The NASA MMS mission has very recently made in situ diffusion region measurements of asymmetric magnetic reconnection for the first time \citep{Burch-2016Science}. In order to compare to the data obtained from kinetic-scale observations (e.g. see \cite{Burch-2016GRL}), it would be useful to have initial equilibrium conditions for PIC simulations that reproduce the physics of the dayside magnetopause current sheet as accurately as possible, i.e. self-consistent VM equilibria that model the magnetosheath-magnetosphere asymmetries in pressure and magnetic field strength. 

In this chapter, we present new `exact numerical' (numerical solutions to equations for exact VM equilibria), and exact analytical equilibrium solutions of the VM system that are self-consistent with 1D and asymmetric Harris-type current sheets, with a constant guide field. The DFs can be represented as a combination of shifted Maxwellian DFs, are consistent with a magnetic field configuration with more freedom than the previously known exact solution \citep{Alpers-1969}, and have different bulk flow properties far from the sheet.

\section{Introduction}

\subsection{Asymmetric current sheets}\label{sec:asymmintro}
Under many circumstances (and unlike the application in Chapter \ref{Sheets}), the plasma conditions can be different on either side of the current sheet, e.g. the magnetic field strength and its orientation. As well as in the magnetopause (e.g. see \cite{Burch-2016GRL}), such asymmetric current sheets are observed at Earth's magnetotail (e.g. \cite{Oieroset-2004}), in the solar wind (e.g. \cite{Gosling-2006}), between solar flux tubes (e.g. \cite{Linton-2006, Murphy-2012, Zhu-2015}), in turbulent plasmas (e.g. \cite{Servidio-2009, Karimabadi-2013}), and inside a tokamak (e.g. \cite{Kadomtsev-1975}).

Regarding the theoretical modelling of dynamical features, various authors have considered the impact of asymmetric current sheets on different aspects of instability and magnetic reconnection, such as the `Sweet-Parker' style analysis carried out by \citet{Cassak-2007}; the development of current driven instabilities (the lower-hybrid instability) \citep{Roytershteyn-2012}; and the suppression of reconnection at Earth's magnetopause \citep{Swisdak-2003, Phan-2013, Trenchi-2015, Liu-2016}. Whilst it can be argued that the general properties (e.g. the reconnection rate) of the nonlinear phase physics of magnetic reconnection are relatively insensitive with regards to the exactitude of the initial conditions, the physics in the linear stage can affect the dynamical evolution of the current sheets, and that can only be confidently studied with exact initial conditions (e.g. see \cite{Dargent-2016}). 

To give some specific examples of the use of exact solutions, setting up a VM equilibrium current sheet in numerical simulations would be helpful for the study of collisionless tearing instabilities, which could be important to understand the role of tearing modes in determining the orientation of the three-dimensional reconnection x-line in an asymmetric geometry \citep{Liu-2015}. This is especially crucial for predicting the location of magnetic reconnection at Earth's magnetopause under diverse solar wind conditions, as discussed in \cite{Komar-2015} for example. Knowledge of an exact equilibirum also facilitates the study of tearing instabilities under the influence of cross-sheet gradients (e.g. see \cite{Zakharov-1992, Kobayashi-2014, Pueschel-2015, Liu-2016}), which can be important for understanding the onset and diamagnetic suppression of sawtooth crashes in fusion devices .

\begin{figure}
    \centering

        \includegraphics[width=0.7\textwidth]{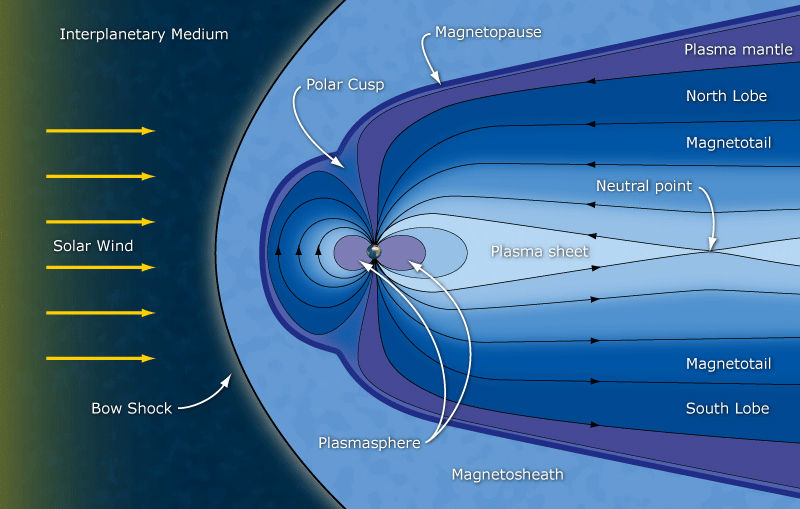}
        \caption{{\small Diagram representing the Earth's magnetic environment, and it's interaction with the solar wind. {\bf Image credit:} NASA, and without copyright.}}
        \label{fig:magnetosphere}
        \end{figure}

\subsection{Modelling the magnetopause current sheet}

\subsubsection{Model paradigm}
The macroscopic equilibrium for which we wish to obtain a self-consistent VM equilibrium is that which describes a current sheet in the Earth's dayside magnetopause. Figure \ref{fig:magnetosphere} depicts the Earth's magnetopause, its relation to the rest of the Earth's magnetosphere, and the interaction with the solar wind. In line with other theoretical approaches (e.g. see \cite{Hesse-2013}) and observational (e.g. see \cite{Burch-2016Science}) conclusions, the equilibrium should be `asymmetric' with respect to either side of the current sheet, i.e. it should be characterised by an enhanced density/pressure on the magnetosheath side of the current sheet, and an enhanced magnetic field magnitude on the magnetosphere side. These basic requirements are shown by Figures \ref{figtwo} and \ref{figone}, in which the coordinates $(x,y,z)$ are related to the ``Boundary Normal'' coordinates, $LMN$, (e.g. see \cite{Hapgood-1992,Burch-2016Science}). Their correspondence is given by $(\hat{x},\hat{y},\hat{z}) \sim (\hat{L},\hat{M},\hat{N})$, with the $xy$ plane tangential to the magnetopause, and $z$ normal to it. As explained by \citet{Hapgood-1992}, \emph{``There is no universal convention to resolve the L and M axes. The relationship between LMN and other systems ... is dependent on position.''} For a heuristic understanding, and in the paradigm of the `square-on' geometry presented by Figure \ref{fig:magnetosphere}, we can think of $x\sim L$ as pointing `Earth North', $y\sim M$ as pointing `Earth West', and $z\sim N$ as pointing `Sunward'. The figures relate to a specific magnetic field, to be defined in Section \ref{sec:prior}, but they portray the basic features that the model should have. Essentially, pressure balance dictates that an enhanced magnitude of magnetic pressure on the magnetosphere side of the current sheet ($z<0, N<0$) relies on a depleted thermal pressure, and vice versa for the magnetosheath side ($z>0, N>0$). However, the current density is modelled to be symmetric.
\begin{figure}
    \centering

       \begin{subfigure}[b]{0.7\textwidth}
        \includegraphics[width=\textwidth]{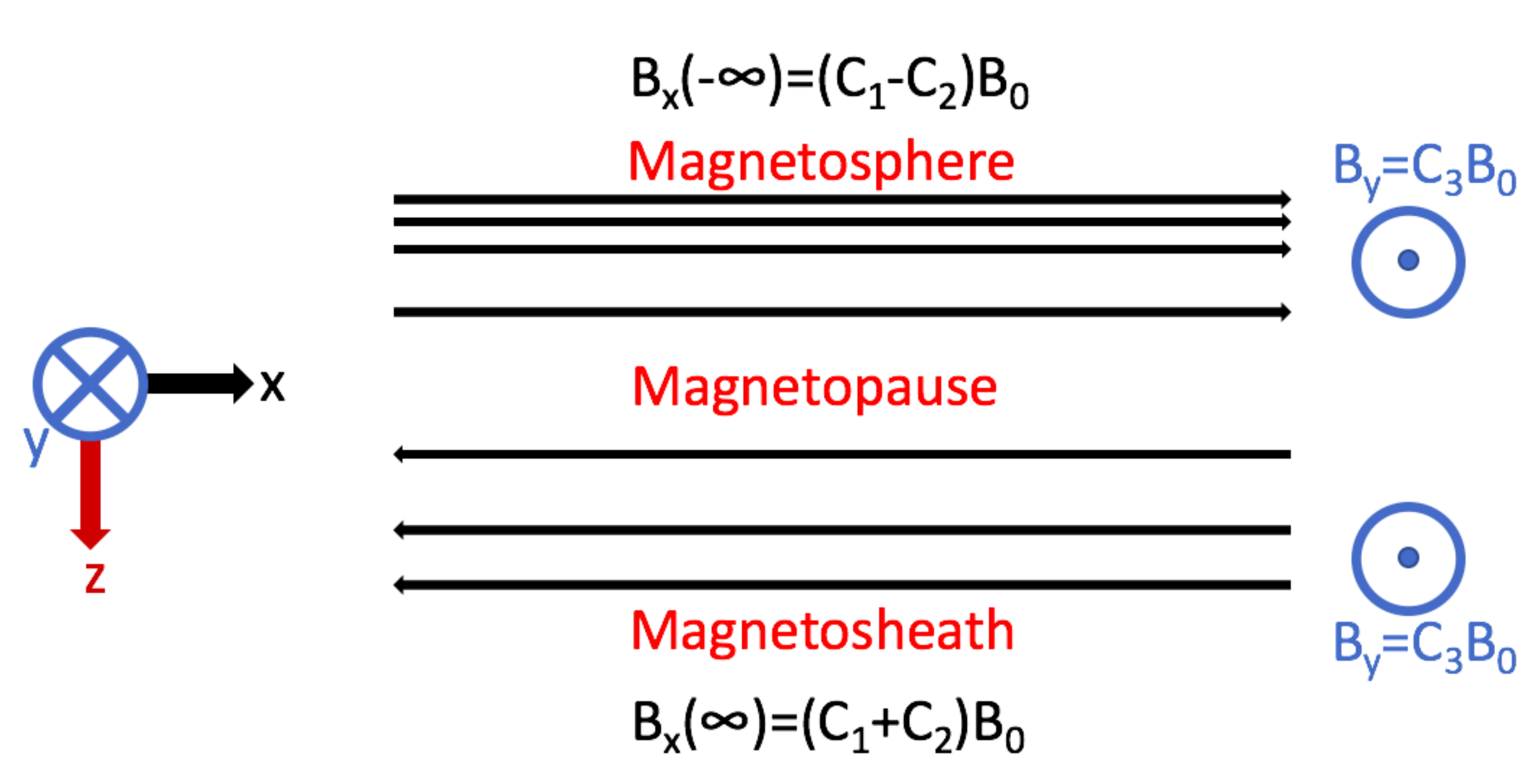}
        \caption{\small A representative diagram of the AH+G equilibrium magnetic field, for $C_{1}+C_{2}<C_{1}-C_{2}$. }
        \label{figtwo}
        \end{subfigure}
          \begin{subfigure}[b]{0.7\textwidth}
        \includegraphics[width=\textwidth]{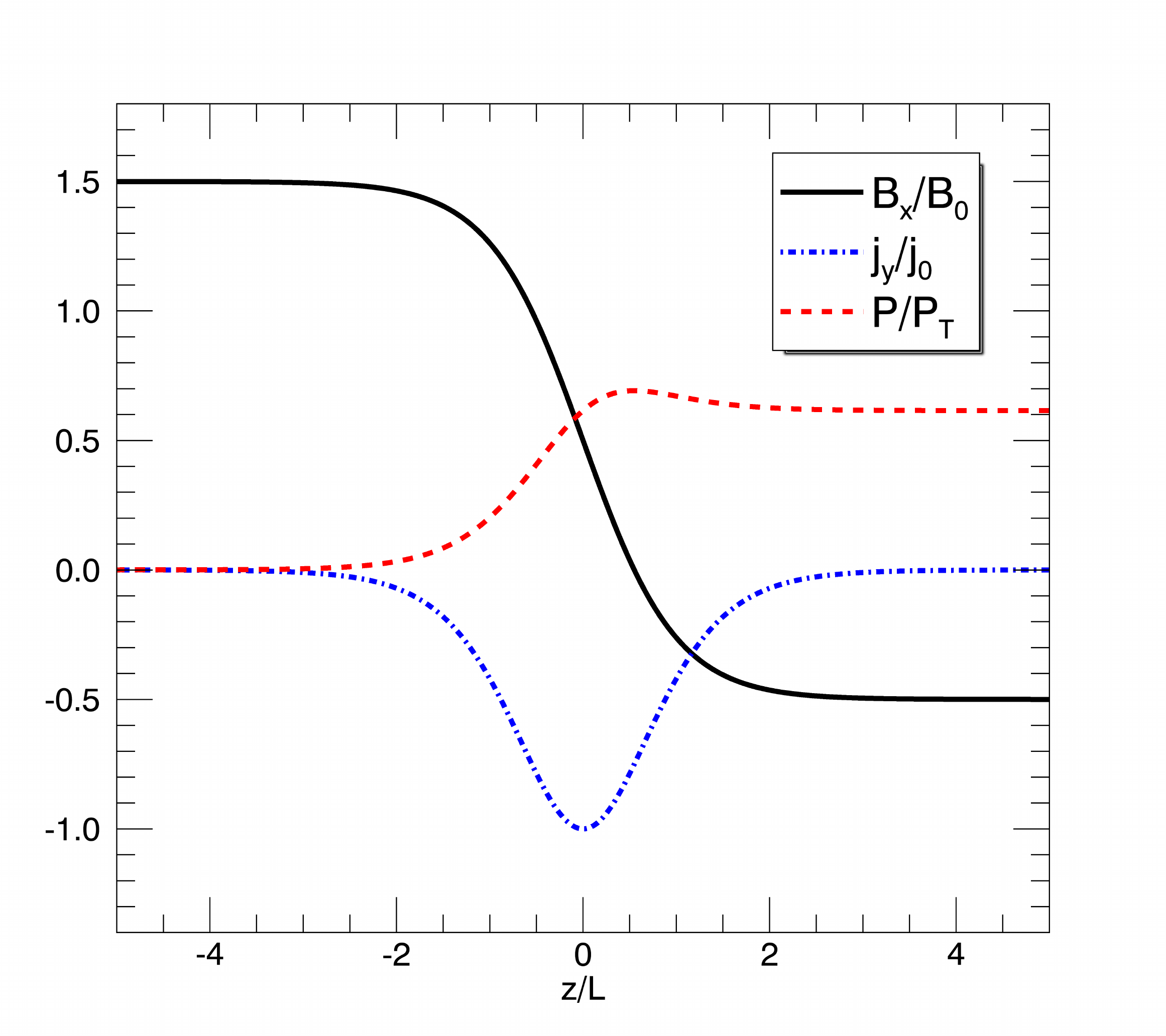}
        \caption{\small Normalised magnetic field $\tilde{B}_x$, current density $\tilde{j}_y$, and scalar pressure $\tilde{p}$ for parameter values $C_1=0.5$, $C_2=-1$, $C_3=1$ and $P_{T}=(C_1^2+C_2^2+C_3^2-2C_1C_2) /2= 1.625$. }
        \label{figone}
    \end{subfigure}
    \caption{\small The AH+G equilibrium configuration (Equation \ref{eq:ahgmagfield}).}
      \label{}
    \end{figure}
As in Chapters \ref{Vlasov} and \ref{Sheets}, we assume a 1D geometry for which $\nabla=(0,0,\partial/\partial z)$, which is justifiable by a separation of scales (e.g. see \cite{Quest-1981B}). In this case, a quasineutral macroscopic equilibrium will obey the following equation,
\begin{equation}
\frac{d}{dz} \left(P_{zz} + \frac{B^2}{2\mu_0}\right) =0,\label{eq:pausebalance}
\end{equation}
but, in contrast to the application to force-free current sheets in Chapter \ref{Sheets}, $P_{zz}$ and $B^2$ must be non-uniform in $z$.

\subsubsection{Typical approach in PIC simulations}
In the effort to model dayside magnetopause reconnection, asymmetric macroscopic equilibria that satisfy Equation (\ref{eq:pausebalance}) have been used in PIC simulations by e.g. \citet{Swisdak-2003, Pritchett-2008, Huang-2008, Malakit-2010, Wang-2013, Aunai-2013, Aunai-2013B,Hesse-2013, Hesse-2014, Dargent-2016, Liu-2016}. All but two \citep{Aunai-2013B, Dargent-2016} of these studies have used drifting Maxwellian DFs as initial conditions (Equation \ref{eq:Mshift}). As discussed in more detail in Section \ref{sec:flowshift}, these DFs can reproduce the same moments $(n(z), \boldsymbol{V}_s(z), p(z))$ necessary for a quasineutral fluid equilibrium, but are not exact solutions of the Vlasov equation and hence do not describe a kinetic equilibrium. The main aim of this chapter is to calculate exact solutions of the equilibrium VM equations consistent with a suitable dayside magnetopause current sheet model, in order to circumvent the need to use non-equilibrium DFs of the form in Equation (\ref{eq:Mshift}).

The work in this chapter is relevant to the main focus of the MMS mission, i.e. asymmetric magnetic reconnection, and so we envisage that this could be the main use of the results at the present time. However, as mentioned in Section \ref{sec:asymmintro}, there are other potential applications of the work to basic equilibrium and instability physics in the magnetotail, solar corona, turbulent plasmas and tokamaks.

\section{Exact VM equilibria for 1D asymmetric current sheets}

\subsection{Theoretical obstacles}
The (symmetric) Harris sheet (Equation (\ref{eq:harrissheet})) can be rendered asymmetric - the \emph{asymmetric Harris sheet} (AHS) - by the simple addition of a constant component to $B_{x}$,
\begin{equation}
\boldsymbol{B}=B_0\left(C_1+C_2\tanh\left(\frac{z}{L}\right),0,0\right),\label{eq:AHS}
\end{equation}
for $C_1$ and $C_2$ dimensionless constants, and and for which there is a field reversal (a change in the sign of $B_x$) only when 
\begin{equation}
\bigg|\frac{C_1}{C_2}\bigg|<1. \label{eq:reversal}
\end{equation}
The current density, $j_{y}$, is indepdendent of $C_{1}$, and so whilst a field-reversal is not essential for the existence of a current sheet in itself, we shall only consider the field-reversal regime. The addition of $C_1$ to $B_x$ leads to an equilibrium described by
\begin{eqnarray}
\frac{B^2}{2\mu_0}(z)&=&\frac{B_0^2}{2\mu_0}\left(C_1^2+2C_1C_2\tanh\left(\frac{z}{L}\right)+C_2^2\tanh^2\left(\frac{z}{L}\right)\right), \nonumber\\
 P_{zz}(z)&=&P_{T}-\frac{B_0^2}{2\mu_0}\left(C_1^2+2C_1C_2\tanh \left( \frac{z}{L}\right)+C_2^2\tanh^2\left( \frac{z}{L}\right)\right),
\end{eqnarray}
with $P_T>B_0^2(|C_1|+|C_2|)^2/(2\mu_0)$ the constant total pressure. 

The VM equilibrium DF self-consistent with the Harris sheet (\cite{Harris-1962} and as discussed in Section \ref{sec:HarrisDF}), 
\[
f_s=\frac{n_{0s}}{(\sqrt{2\pi}v_{th,s})^3}e^{-\beta_s(H_s-u_{ys}p_{ys})},
\]
can also be made to be consistent with the field,
\[
\boldsymbol{B}=B_0\left(\tanh\left(\frac{z}{L}\right),C_3,0\right),
\]
i.e. a Harris sheet plus guide field. This is achieved fairly simply by `sending' $A_x=0\to A_x=C_3B_0z$. This adds no real complications since $j_x=0$, $P_{zz}(A_y)$ remains unchanged, and one essentially just solves Amp\`{e}re's Law with different conditions as $|z|\to\infty$,
\[
\nabla^2A_x=0 \hspace{3mm}\text{s.t.}\hspace{3mm}A_x=0 \hspace{3mm}\to\nabla^2A_x=0 \hspace{3mm}\text{s.t.}\hspace{3mm}A_x=C_3B_0z.
\]
In the analogy of the particle in a potential (see Section \ref{sec:pseudo}), this corresponds to the particle having a non-zero and constant component of `velocity' in the $x\sim A_x$ direction, instead of zero velocity in that direction. As a result, one might expect that it should be relatively straightforward to adapt the Harris DF to be self-consistent with the AHS, but this is not the case in the field-reversal regime.

\subsubsection{$P_{zz}$ must depend on both $A_x$ and $A_y$}\label{sec:twodimensional}
The AHS has only one component of the current density, $j_y$, and since $\boldsymbol{j}=\partial P_{zz}/\partial\boldsymbol{A}$, one might expect that the equilibrium could be described by $P_{zz}=P_{zz}(A_y)$, and hence $f_s=f_s(H_s,p_{ys})$ accordingly. However, using the analogy of the particle in a potential (see Section \ref{sec:pseudo}), in which the following correspondences hold
\begin{eqnarray}
\text{Position:}\,(x,y) & \sim & (A_x,A_y),\nonumber\\
\text{Time:}\, t& \sim & z\nonumber\\
\text{Velocity:}\, (v_x(t),v_y(t))&\sim & \left(\frac{dA_x}{dz}(z),\frac{dA_y}{dz}(z)\right) \sim (B_y,-B_x) , \nonumber\\
\text{Potential:} \, \mathcal{V}(x,y) &\sim & P_{zz}(A_x,A_y),\nonumber\\
\text{Force:} \, \boldsymbol{\mathcal{F}}(x(t),y(t)) &\sim & \mu_0\frac{d^2\boldsymbol{A}}{dz^2},\nonumber\\
\text{Equation of motion:}\, \boldsymbol{\mathcal{F}}=-\nabla\mathcal{V} &\sim &   \mu_0\frac{d^2\boldsymbol{A}}{dz^2}  =-\frac{\partial P_{zz}}{\partial \boldsymbol{A}},  \nonumber
\end{eqnarray}
we note that - crucially - velocity is conjugate to the derivatives of $A_x,A_y$, and hence the magnetic field. The important observation to make is that a single-valued and 1D potential, $P_{zz}(A_y)$, cannot be compatible with a `velocity' of the form of the magnetic field in Equation (\ref{eq:AHS}),
\[
v_y(t)\sim C_1 + C_2\tanh t ,
\] 
when we are in the field-reversal regime ($|C_1|<|C_2|$). The reasoning is as follows. 

Without loss of generality suppose that $C_1, C_2 >0$. The particle begins its journey at $t=-\infty, \, y=\infty$ with velocity $C_1-C_2<0$. It then rolls up a `hill' in the potential, is stationary at $t=\tanh ^{-1} (-C_1/C_2)$, and rolls back down the hill towards $y=\infty$ with final velocity $C_1+C_2$ at $t=\infty$. This trajectory is not possible for a conservative potential that is single-valued in space. Hence we conclude that a 1D asymmetric current sheet with field reversal can not be analytically self-consistent with a pressure tensor that is a function of only one component of the vector potential. 

Despite the fact that $j_x=0$ for the AHS, and hence $\partial P_{zz}/\partial A_x=0$, it has become apparent that we require the `hill' to be 2D, such as the $P_{zz}(A_x,A_y)$ function depicted in Figure \ref{fig:surface}, for which the overlaid line depicts the particle trajectory. (The exact form and derivation of that particular pressure function shall be discussed in Section \ref{sec:pressuretensors}).

    \begin{figure}\centering
        \includegraphics[width=0.7\textwidth]{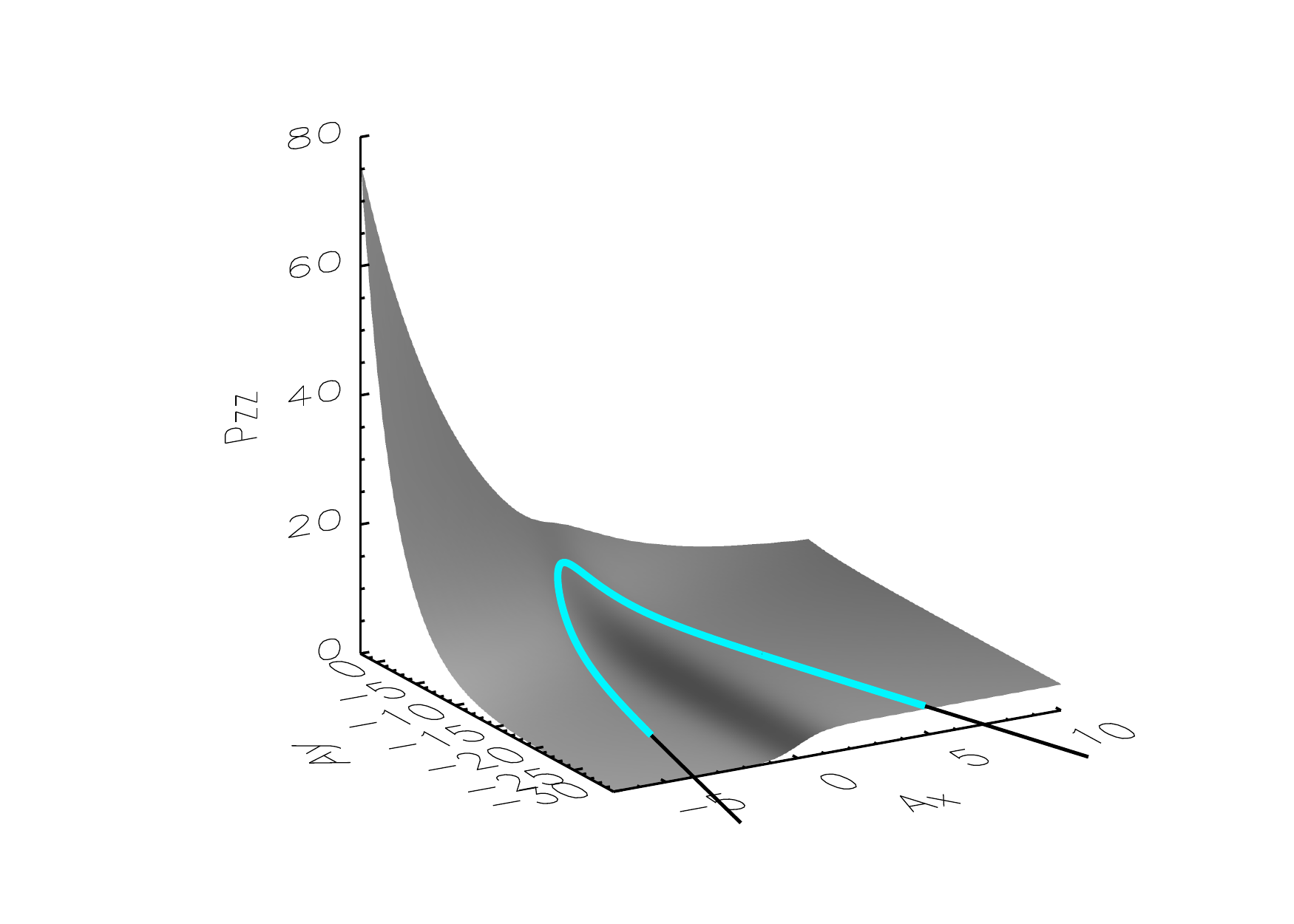}
        \caption{\small The ``$\tanh$'' pressure function for $C_1=-1, C_2=8, C_3=1, C_2=2\sqrt{C_1^2}=2$ }
        \label{fig:surface}

\end{figure}

We note that `exact numerical' VM equilibria have recently been found by \citet{Belmont-2012, Dorville-2015}, using the inverse approach, for the `normal/symmetric' Harris sheet magnetic field, and a modified `force-free Harris sheet' respectively. The equilibria have asymmetries in the number density and temperature either side of the sheet, with \citet{Dorville-2015} including an electric field. Their methods rely on similar notions to those discussed above, for which the DFs were multi-valued functions of the constants of motion. The DF derived by \citet{Belmont-2012} has been used as the initial condition for Hybrid simulations by \citet{Aunai-2013B}, and PIC simulations by \cite{Dargent-2016}. Exact numerical solutions for asymmetric current sheets are more numerous for the forward problem, with examples in e.g. \citet{Kan-1972,Lemaire-1976, Kuznetsova-1995, Roth-1996, Lee-1979JGR}.

\subsubsection{Prior exact analytical VM equilibria}\label{sec:prior}
To our knowledge, there is one known exact VM equilibrium for a magnetic field like the AHS. In the Appendix of \citet{Alpers-1969}, a DF is derived that is consistent with the `Alpers magnetic field', which could be written in a $z$-dependent geometry as
\begin{equation}
\boldsymbol{B}=B_0\left(-\frac{B_2}{2}\left(1+\tanh\left(\frac{z}{L}\right)\right), \frac{B_1}{2}\tanh \left(\frac{z}{L}\right),0\right). \label{eq:Alpers}
\end{equation}
Despite appearances, this magnetic field is almost equivalent to the AHS. To see this, we make a small digression.

First allow the AHS to have a constant \emph{guide field}, with $\boldsymbol{A}, \boldsymbol{B}$ and $\boldsymbol{j}$ given by
\begin{eqnarray}
\boldsymbol{A}&=&B_0L(\,C_3\tilde{z},\, -C_1\tilde{z}-C_2\ln \cosh\tilde{z},\,0),\label{eq:ahgvecpotential}\\
\nabla\times\boldsymbol{A}=\boldsymbol{B}&=&B_0(\,C_1+C_2\text{tanh}\tilde{z},\,C_3,\,0 ),\label{eq:ahgmagfield}\\
\frac{1}{\mu_0}\nabla\times\boldsymbol{B}=\boldsymbol{j}&=&\frac{B_0}{\mu_0L}(\, 0,\, C_2\text{sech}^2\tilde{z},\, 0),\label{eq:ahgcurrent}
\end{eqnarray}
then we have the \emph{Asymmetric Harris sheet plus guide field} (AH+G), with $C_3$ a non-zero constant. The vector potential, magnetic field, current density and length scales are normalised according to $\tilde{\boldsymbol{A}}B_0L=\boldsymbol{A}$, $\tilde{\boldsymbol{B}}B_0=\boldsymbol{B}$, $\boldsymbol{j}=j_0\tilde{\boldsymbol{j}}$ and $z=L\tilde{z}$ respectively, with $j_0=B_0/(\mu_0L)$. Example profiles of $\tilde{B}_x$ and $\tilde{j}_y$ are plotted in Figure \ref{figone} for parameter values $C_1=0.5$, $C_2=-1$ and $C_3=1$, (in line with other theoretical studies, e.g. see \cite{Pritchett-2008,Liu-2016}). For these parameter values, the left and right hand sides of the plot represent the magnetosphere and magnetosheath respectively, whilst the central current sheet is in the magnetopause. The equilibrium is maintained by the `gradient of a scalar pressure', $p(z):=P_{zz}$, according to 
\begin{equation}
P_{zz}(\tilde{z})=P_{T}-\frac{B_0^2}{2\mu_0}\left(C_1^2+2C_1C_2\tanh \tilde{z}+C_2^2\tanh^2\tilde{z}+C_3^2\right),\label{eq:scalar_pressure}
\end{equation}
for $P_{T}$ the total pressure (magnetic plus thermal), and $P_{zz}>0$ for $C_1^2+2|C_1C_2|+C_2^2+C_3^2<2\mu_0P_{T}/B_0^2$. The profile of $\tilde{p}(\tilde{z})=P_{zz}/P_T$ is plotted in Figure \ref{figone}, for $P_T=1.625$.

After a rotation by $\tan\theta=C_1/C_3$, the AH+G field becomes 
\begin{equation}
\mathbf{B}^\prime=B_{0}\left( \frac{C_2C_3}{\sqrt{C_1^2+C_3^2}}\tanh\tilde{z} ,   \sqrt{C_1^2+C_3^2}+\frac{C_1C_2}{\sqrt{C_1^2+C_3^2}}      \tanh \tilde{z} ,0\right),
\end{equation}
which is essentially equivalent to the Alpers magnetic field in Equation (\ref{eq:Alpers}) when $C_1C_2=C_1^2+C_3^2$. As such, the Alpers magnetic field is very similar to the AH+G field, but with one fewer degree of freedom. 

For the DF derived by Alpers, and those to be developed in this chapter, the guide field, $B_y=C_3B_0$, is crucial for making analytical progress. The existence of $B_y$ necessitates a non-trivial $A_x=C_3B_0z$, and as a result the `potential' $P_{zz}$ can now be a function of both $A_x$ and $A_y$. This two-dimensionality was reasoned to be an important feature of analytically described asymmetric fields in Section \ref{sec:twodimensional}, and will allow us to construct exact analytical DFs.

There is one more difference between the equilibrium derived by Alpers, and the one that we shall consider, and it is related to the bulk flows. 

As is necessary for consistency between the microscopic and macroscopic descriptions, the Alpers DF is self-consistent with the prescribed magnetic field, i.e. the sum of the individual species (kinetic) currents are equal to the current prescribed by Amp\`{e}re's Law, i.e. $\sum_s \boldsymbol{j}_{s}=\boldsymbol{j}=\nabla\times\boldsymbol{B}/\mu_{0}$. However, the $\boldsymbol{j}_{s}$ are non-zero at $z=+\infty$ (in our co-ordinates), i.e. the magnetosheath side. In contrast, equation (\ref{eq:current}) shows that the macroscopic current densities vanish as $z\to\pm\infty$, i.e. the Alpers DF gives species currents $\boldsymbol{j}_{s}$ that are not proportional to the macroscopic current $\boldsymbol{j}$. That is to say that there is finite ion and electron mass flow at infinity, \emph{``impinging vertically''} on the magnetosheath side of the current sheet. This could be appropriate if one wishes to consider a larger scale/global model including bulk flows at the boundary, but it is not suitable if one wishes to consider the domain as an isolated `patch', representing a local current sheet structure. 

In summary, the DF that we derive shall be consistent macroscopically with an equilibrium for which there are no mass flows at the boundary (as typically assumed in PIC simulations, e.g. \cite{Aunai-2013, Hesse-2013}), and is self-consistent with a magnetic field that has more degrees of freedom than that in \citet{Alpers-1969}.

\subsection{Outline of basic method}\label{sec:pressuretensors}
In order to find a VM equilibrium, we shall use `Channell's method' \citep{Channell-1976}. As discussed in Chapter \ref{Intro}, this involves the following steps: 
\begin{description}
\item[Pressure tensor:] First calculate a functional form $P_{zz}(A_x,A_y)$ that `reproduces' the scalar pressure of Equation (\ref{eq:scalar_pressure}) as a function of $z$. It must also satisfy $\partial P_{zz}/\partial \boldsymbol{A}=\boldsymbol{j}(z)$. There could in principle be infinitely many functions $P_{zz}(A_x,A_y)$ that satisfy both these criteria, but we shall choose specific $P_{zz}(A_x,A_y)$ functions which allows us to make analytical progress. 

Note that this procedure is - by the analogy of a particle in a potential - contrary to the `typical approach', in which one tries to establish the trajectory in a given potential. We know the `trajectory as a function of time' $\boldsymbol{A}(z)$, and the value of the potential along it $P_{zz}(z)$, and seek to construct a self-consistent `potential function in space', $P_{zz}(A_x,A_y)$.
\item[Inversion:] The second step is to use the assumed form of the DF in Equation (\ref{eq:F_form}) in the definition of the pressure tensor component $P_{zz}$ as the second-order velocity moment of the DF, $P_{zz}=\sum_sm_s\int v_z^2f_sd^3v$, and attempt to invert the integral transforms, either by Fourier transforms, Hermite polynomials, or perhaps some other method.
\item[Macro-micro:] The inversion process must yield an $f_s$ that not only reproduces the macroscopic expression for the pressure tensor (achieved by fixing parameters), but also that is consistent with quasineutrality ($\sigma(A_x,A_y)=0$), and in this case strict neutrality, $\phi=0$.
\end{description}
Let us first consider possible expressions for $P_{zz}(A_x,A_y)$. Pressure balance dictates that
\begin{equation}
P_{zz}(\tilde{z})=P_{T}-\frac{B_0^2}{2\mu_0}\left(C_1^2+2C_1C_2\tanh \tilde{z}+C_2^2\tanh^2\tilde{z}+C_3^2\right).\label{eq:pzzprofile}
\end{equation}
Using the knowledge that exponential functions are eigenfunctions of the Weierstrass transform \citep{Wolf-1977}, we would like to use exponential functions to represent the $P_{zz}$ function wherever possible. In Sections \ref{sec:tanhpressuredf} and \ref{sec:exppressuredf} we present two different attempts at using Channell's method for the AH+G field. The first requires a numerical approach, whereas the second can be completed analytically.

\section{The numerical/``$\tanh$'' equilibrium DF}\label{sec:tanhpressuredf}

\subsection{The pressure function}
From Equation (\ref{eq:ahgvecpotential}) we see that $\exp (2A_y/(C_2B_0L))=\text{sech}^2\tilde{z}\exp(-2C_1\tilde{z}/C_2)$, and so we can construct one part of the RHS of Equation (\ref{eq:pzzprofile}) by
\begin{equation}
\tanh^2\tilde{z}=1-\text{sech}^2\tilde{z}=1-\exp\left(\frac{2\tilde{A}_y}{C_2}\right)\exp\left(\frac{2C_1\tilde{A}_x}{C_2C_3}\right).\label{eq:tanh^2}
\end{equation}
The remaining task is to invert $\tanh\tilde{z}=\tanh\tilde{z}(\tilde{A}_x,\tilde{A}_y)$, and this is most readily achieved by
\begin{equation}
\tanh\tilde{z}=\tanh \left(\frac{\tilde{A}_x}{C_3}\right).\label{eq:tanh^1}
\end{equation}
Note that we have not chosen to take the square root of Equation (\ref{eq:tanh^2}), since we - naively - expect to be able to invert the Weierstrass transform for the expression in Equation (\ref{eq:tanh^1}) more easily (and in fact, it can be shown that one cannot solve Amp\`{e}re's Law by doing so). Substituting Equations (\ref{eq:tanh^1}) and (\ref{eq:tanh^2}) into Equation (\ref{eq:pzzprofile}) gives the pressure tensor
\begin{equation}
P_{zz}=P_0\left[C_2\exp\left(\frac{2\tilde{A}_y}{C_2}\right)\exp\left(\frac{2C_1\tilde{A}_x}{C_2C_3}\right)-2C_1\tanh \left(\frac{\tilde{A}_x}{C_3}\right)+C_b\right],\label{eq:tanh_pressure}
\end{equation}
with $C_b>2C_1$ for positivity of the pressure. There is \emph{a priori} no guarantee that this pressure tensor will satisfy Amp\`{e}re's law, $\partial P_{zz}/\partial\boldsymbol{A}=\boldsymbol{j}$. We can check the validity of the pressure with respect to Amp\`{e}re's law, by
\begin{equation*}
\frac{\partial P_{zz}}{\partial A_x}=\frac{P_0}{B_0L}\frac{\partial \tilde{P}_{zz}}{\partial \tilde{A}_x}=2\frac{P_0}{B_0L}\frac{C_1}{C_3}\left(  e^{2\tilde{A}_y/C_2}e^{2C_1\tilde{A}_x/(C_2C_3)}-\text{sech}^2(\tilde{A}_x/C_3)    \right)=0=j_x,
\end{equation*}
and
\begin{equation*}
\frac{\partial P_{zz}}{\partial A_y}=\frac{P_0}{B_0L}\frac{\partial \tilde{P}_{zz}}{\partial \tilde{A}_y}=\frac{2P_0}{B_0L}e^{2\tilde{A}_y/C_2}e^{2C_1\tilde{A}_x/(C_2C_3)}=\frac{2P_0}{B_0L}\text{sech}^2\tilde{z}=j_y\iff C_2=\frac{2\mu_0P_0}{B_0^2}.
\end{equation*}

\subsection{Inverting the Weierstrass transform}
As aforementioned, we can solve the inverse problem exactly for the exponential functions in Equation (\ref{eq:tanh_pressure}), using the fact that
\[
``g_{js}(p_{js})\propto \exp(\tilde{p}_{js})''\implies ``P_j \propto \exp(\tilde{A}_j)'',
\]
with the terminology of Chapter \ref{Vlasov}. Hence the challenge is to try to solve
\begin{equation}
\tanh (\tilde{A}_x/C_3)=\frac{1}{\sqrt{2\pi}}\int_{-\infty}^\infty \exp \left[ -\frac{1}{2}\left(  \tilde{p}_{xs}-\frac{\text{sgn}(q_{s})}{\delta_s}\tilde{A}_x \right)^2 \right] G_{s}(\tilde{p}_{xs})d\tilde{p}_{xs}.\label{eq:genetic}
\end{equation}
for some unknown $G_s$ function, one component of a DF of the form
\begin{equation}
f_s=\frac{n_{0s}}{(\sqrt{2\pi}v_{th,s})^3}e^{-\beta_sH_s}\left(   a_{0s}e^{\beta_su_{xs}p_{xs}}  e^{\beta_su_{ys}p_{ys}} +a_{1s}G_s(p_{xs})+b_s     \right) ,\label{eq:numericalDF}
\end{equation}
and such that the species-dependent constants are yet to be determined. It turns out that Equation (\ref{eq:genetic}) is not amenable to the Fourier transform method described in Section \ref{sec:ftransform} since there does not exist an analytic expression for the Fourier transform of the $\tanh $ function. Furthermore, one cannot use the Hermite polynomial expansion techniques as developed in Chapter \ref{Vlasov}, because the Maclaurin expansion for $\tanh x$,
\[
\tanh x =\sum_{n=0}^\infty \chi_n x^n,
\]
is only convergent for $|x|<\pi /2$. This is not a purely formal objection, for the following reason. Using the theory developed in Chapter \ref{Vlasov}, we could in principle construct a Hermite polynomial expansion for the $G_s$ function of the form
\[
G_s=\sum_{n=0}^\infty \chi_n \text{sgn}(q_s)^n\left(\frac{\delta_s}{\sqrt{2}}\right)^nH_n\left(\frac{p_{xs}}{\sqrt{2}m_sv_{\text{th},s}}\right),
\]
such that the the Weierstrass transform resulted in a Maclaurin series with the correct coefficients, $\chi_n$. However, the Hermite series is valid for all $p_{xs}$ - assuming that it is convergent - and there is \emph{a priori} no reason to restrict the range of the conjugate variable, $A_x$. Hence the result of the forward procedure is a pressure function that is not convergent for all $\tilde{A}_x$, and cannot equal the closed form on the LHS of Equation (\ref{eq:genetic}). Furthermore, since $\tilde{A}_x/C_3=\tilde{z}\in(-\infty,\infty)$, one can not even make an argument on the basis of \emph{accessibility} (i.e. claiming that this formal argument does not matter), which could possibly be justified if it were the case that $|\tilde{A}_x(\tilde{z})/C_3|<\pi/2\, \forall \tilde{z}$. In the absence of other analytical techniques, one must proceed with this problem numerically. We do not develop that approach in detail in this thesis, but we shall show some indicative results, to demonstrate the principle.

\begin{figure}
    \centering
   \begin{subfigure}[b]{0.55\textwidth}
        \includegraphics[width=\textwidth]{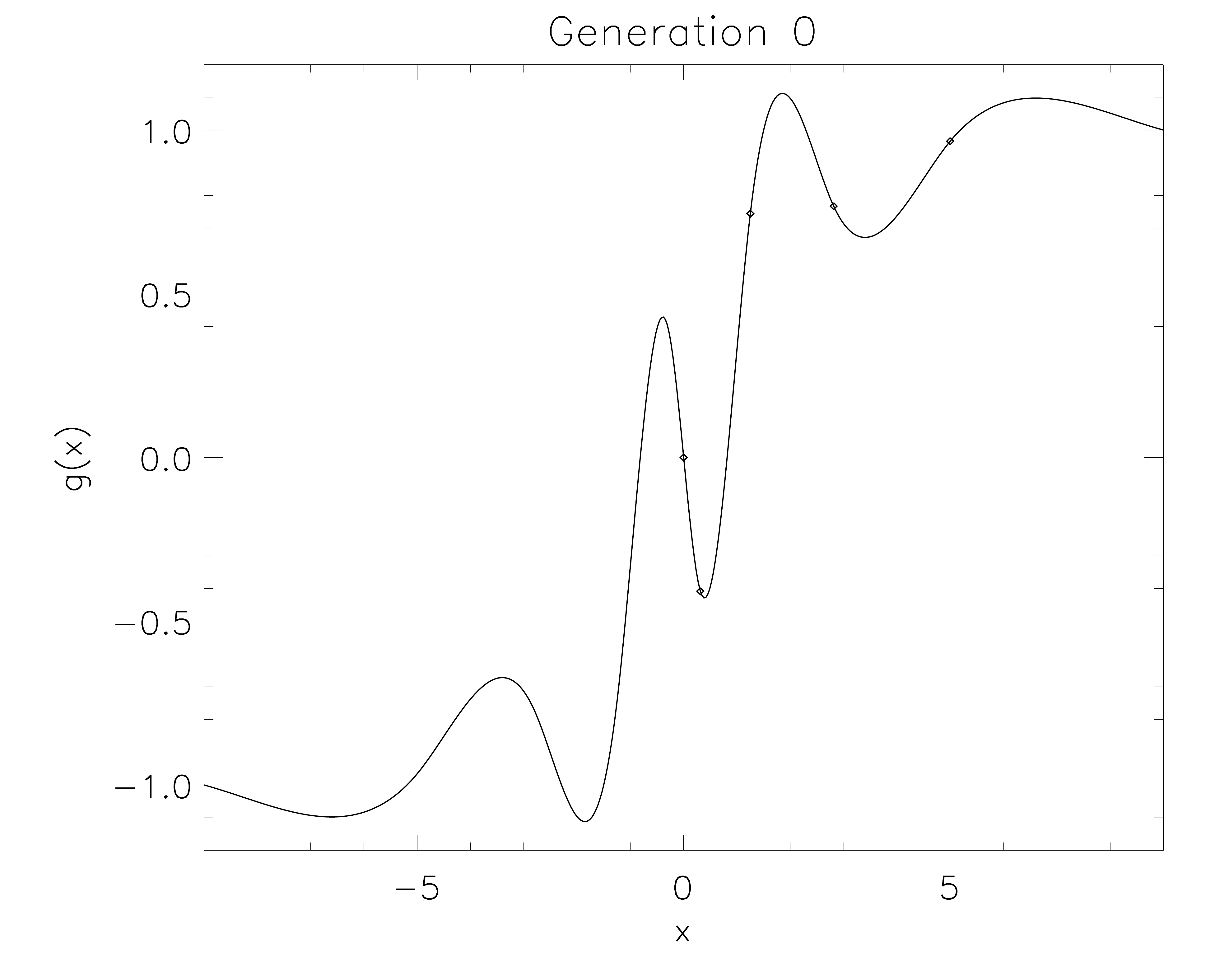}
        \caption{\small The `best fit' in the initial generation}
        \label{fig:000}
        \end{subfigure}
          \begin{subfigure}[b]{0.55\textwidth}
        \includegraphics[width=\textwidth]{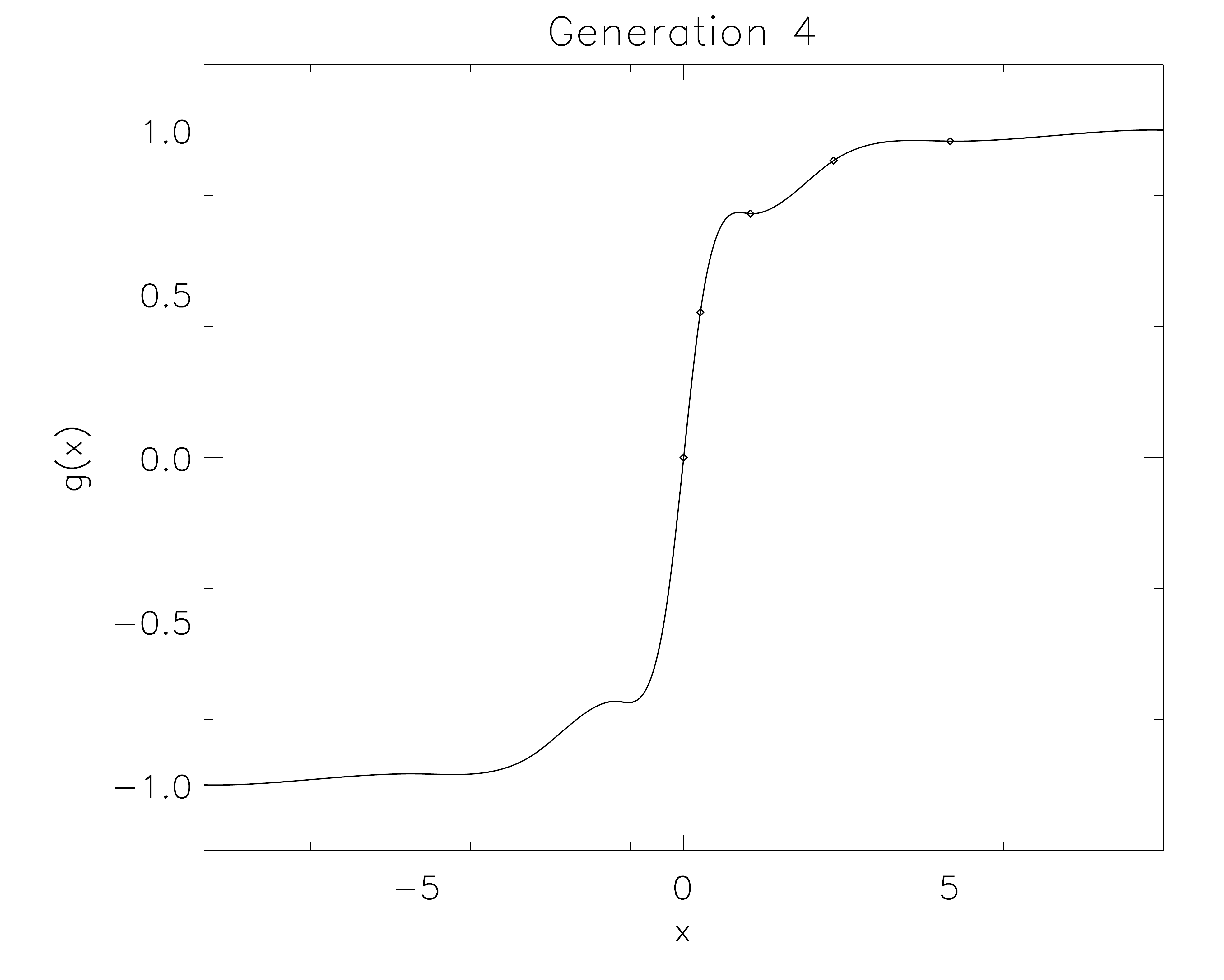}
        \caption{\small The `best fit' in the $4^{\text{th}}$ generation}
        \label{fig:004}
        \end{subfigure}
         \begin{subfigure}[b]{0.55\textwidth}
        \includegraphics[width=\textwidth]{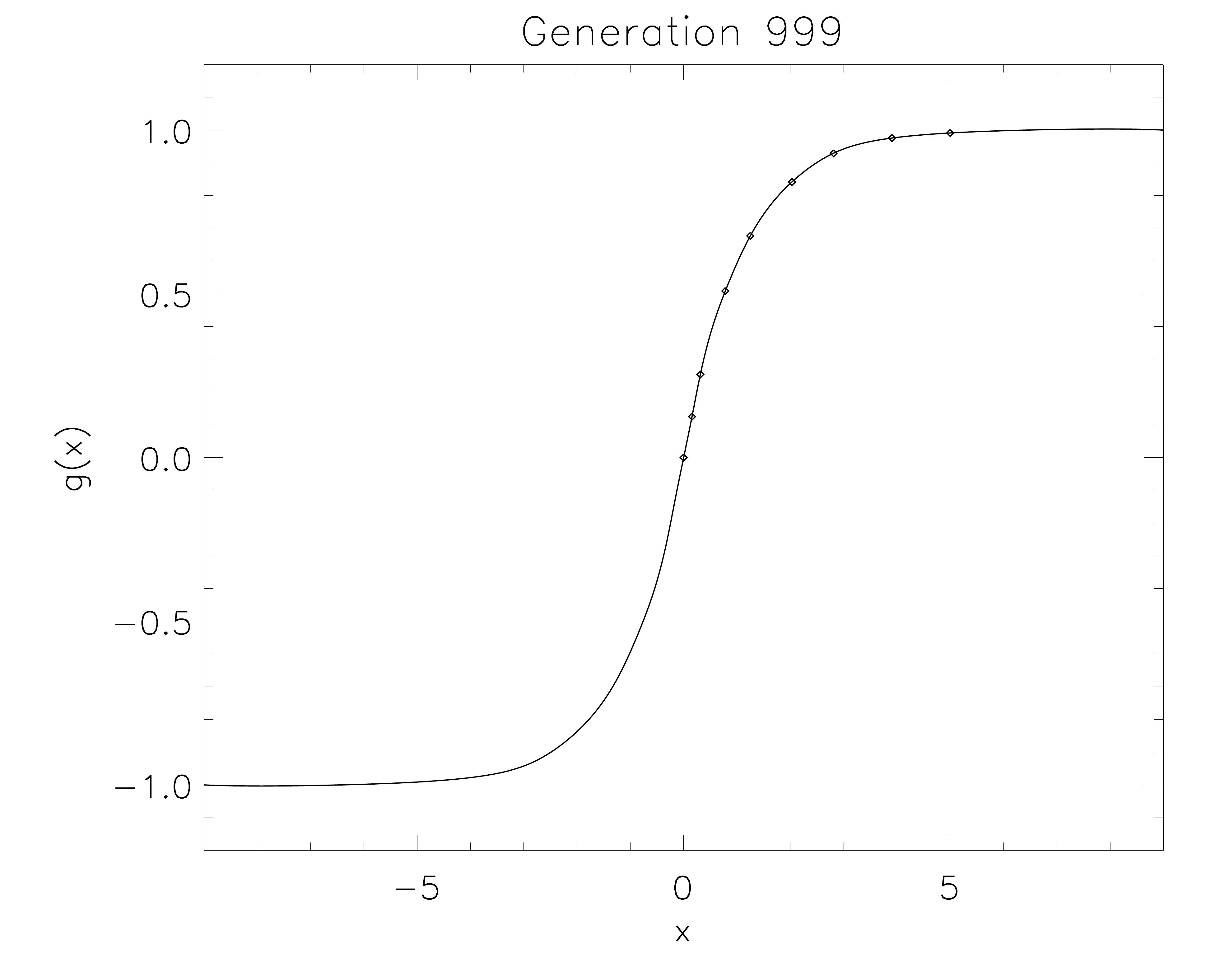}
        \caption{\small The `best fit' in the $999^{\text{th}}$ generation}
        \label{fig:999}
        \end{subfigure}
         \caption{\small The `most fit' numerical solution for the $G_{s}$ function at three separate generations (courtesy of J.D.B. Hodgson).}
 \label{fig:genetic}
        \end{figure}
        
        \begin{figure}
        \centering
        \includegraphics[width=0.7\textwidth]{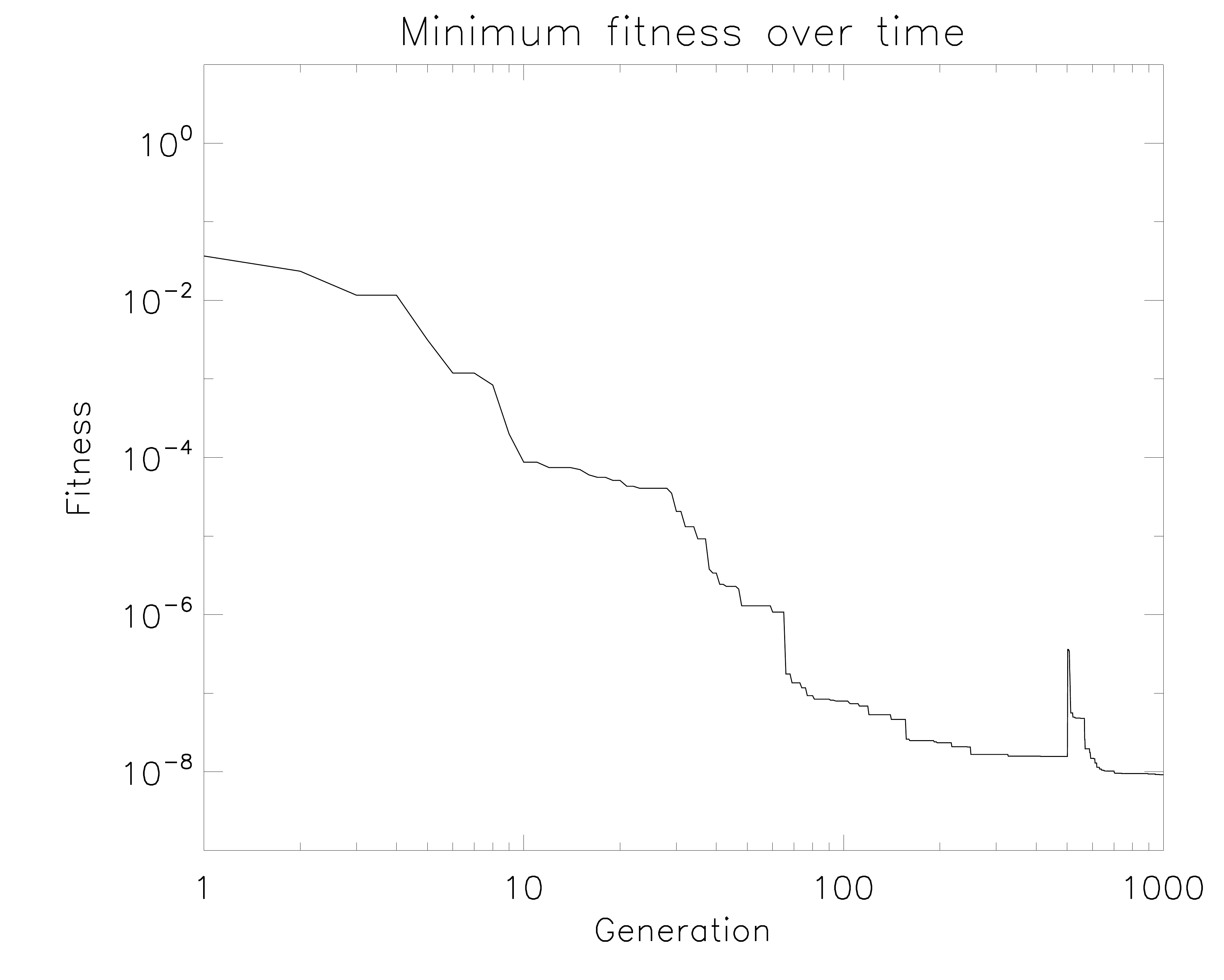}
        \caption{\small The minimum fitness/error through the generations (courtesy of J.D.B. Hodgson).}
        \label{fig:out}
    \end{figure}

In collaboration with J.D.B. Hodgson (who has led this particular effort), we have used \emph{Genetic algorithms} (e.g. see \cite{Holland-1975}) to construct numerical solutions for the $G_s$ function. My contribution to this project has been on the theoretical side, whereas J.D.B. Hodgson's has been the development of the algortithm and numerical approach, as well as Figures (\ref{fig:genetic}) and (\ref{fig:out}). The algorithm works by optimisation through random \emph{mutation}. One starts with an initial \emph{population} of candidate solutions to a problem, i.e. candidate $G_s$ functions that could solve Equation (\ref{eq:genetic}). Each member of the population (or \emph{chromosome}) is ranked according to some \emph{fitness} function. The population is then evolved in discrete steps (\emph{generations}), between which various mutations and \emph{genetic operations} occur, such that the fitness is hopefully optimised as $t\to\infty$.

Since the aim of the algorithm is - in general terms - to find a function $\mathcal{G}(p)$ that satisfies,
\[
P(A)=\int_a^b K(A,p)\mathcal{G}(p)dp,
\]
for known $P(A)$ and $K(A,p)$, a sensible fitness function to choose is
\[
F(\mathcal{G}(p))=\int_{A_0}^{A_1}  \left[  \int_a^b K(A,p)\mathcal{G}(p)dp \, -P(A)     \right]^2dA.\label{eq:fitness}
\]
In analytic terms, one would of course use $\pm \infty$ for all the relevant integral limits, but clearly one cannot do this in numerical computation. Figure \ref{fig:genetic} displays some results for a run of the algorithm through 1000 generations. Figures \ref{fig:000}, \ref{fig:004}, and \ref{fig:999} display the highest ranked chromosome of each population at the initial, $4^{\text{th}}$, and $999^{\text{th}}$ generations respectively. The highest ranked chromosome is the individual that best minimises the fitness function (Equation (\ref{eq:fitness})), which can be thought of as minimising the error. In Figure \ref{fig:out}, we show - on a loglog plot - the trend of the minimum fitness of each population, through the generations. The jump in the fitness around generation 500 identifies the point in the algorithm at which the grid resolution is increased, temporarily resulting in a larger error, which rapidly stabilises. An interesting feature of the `solution' given by Figure \ref{fig:999} is that it almost directly lies over the function $\tanh (\tilde{p}_{xs}/C_3)$, and hence it seems that 
\[
G_s(\tilde{p}_{xs})\approx \tanh (\tilde{p}_{xs}/C_3).
\]
The numerical procedure therefore seems to suggest that $\tanh x$ is close to a `numerical' eigenfunction of the Weierstrass transform, despite the fact that one cannot compute the Weierstrass transform of the $\tanh$ function.
  
Without an analytic expression for the function $G_s(\tilde{p}_{xs})$, we can make some progress in understanding the micro-macroscopic parameter relationships, and in calculating the bulk flow properties. Using standard integrals \citep{Gradshteyn}, we see that the DF in Equation (\ref{eq:numericalDF}) gives a pressure tensor of the form 
\begin{eqnarray}
P_{zz}&=&\sum_s m_s \int v_z^2 f_s d^3 v,\nonumber\\
&=&\sum_s\frac{n_{0s}}{\beta_s}\left(a_{0s}e^{(u_{xs}^2+u_{ys}^2)/(2v_{th,s}^2)}e^{\beta_su_{xs}q_sA_x}  e^{\beta_su_{ys}q_sA_y}+a_{1s}\tanh (\tilde{A}_x/C_3)+b_s\right).\nonumber
\end{eqnarray}
Channell's method dictates that this expression must match up with the macroscopic expression from Equation (\ref{eq:tanh_pressure}). This condition, as well as that of imposing $\sigma=0$ gives the following conditions
\begin{eqnarray}
\frac{2C_1}{C_2C_3B_0L}=e\beta_iu_{xi}=-e\beta_eu_{xe},\label{eq:uxs}\\
\frac{2}{C_2B_0L}=e\beta_iu_{yi}=-e\beta_eu_{ye},\nonumber\\
n_{0s}a_{0s}e^{(u_{xs}^2+u_{ys}^2)/(2v_{th,s}^2)}=:a_0=\frac{\beta_e\beta_i}{\beta_e+\beta_i}P_0C_2,\label{eq:a0}\\
n_{0s}a_{1s}=:a_1=-2\frac{\beta_e\beta_i}{\beta_e+\beta_i}P_0C_1,\label{eq:a1}\\
n_{0s}b_s=:b=\frac{\beta_e\beta_i}{\beta_e+\beta_i}P_0C_b,\nonumber
\end{eqnarray}
and, for completeness, the number density is given by
\begin{equation}
n_i=n_e:=n=a_0\text{sech}^2\tilde{z}+a_1\tanh\tilde{z}+b=\frac{\beta_e\beta_i}{\beta_e+\beta_i}P_{zz}.\nonumber
\end{equation}
The conditions listed above represent 10 constraints for 14 parameters $(\beta_s$, $n_{0s}$, $u_{xs}$, $u_{ys}$, $a_{0s}$, $a_{1s}$, $b_s)$, given macroscopic characteristics $B_0$, $P_0$, $C_1$, $C_2$, $C_3$, $C_b$ and $L$.

We can also calculate the bulk flow properties. In particular one should check that $j_x=0$. Using standard integrals \citep{Gradshteyn}, we see that
\begin{eqnarray}
j_x=0&=&\sum_sq_s\int f_s v_x d^3v,\nonumber\\
&=&\sum_sq_sn_{0s}\bigg[  a_{0s}u_{xs}e^{(u_{xs}^2+u_{ys}^2)/(2v_{th,s}^2)}e^{\beta_sq_su_{ys}A_y}e^{\beta_sq_su_{xs}A_x}\nonumber\\
&+&\frac{a_{1s}}{\sqrt{2\pi}v_{th,s}}\int_{-\infty}^{\infty} e^{-v_x^2/(2v_{th,s}^2)}v_xG_s(\tilde{p}_{xs})dv_x      \bigg].\nonumber
\end{eqnarray}
By differentiating Equation (\ref{eq:genetic}) with respect to $\tilde{A}_x$, we can see that
\begin{equation}
\int_{-\infty}^{\infty} e^{-v_x^2/(2v_{th,s}^2)}v_xG_s(\tilde{p}_{xs})dv_x=\frac{\sqrt{2\pi}\delta_s\text{sgn}(q_s)v_{th,s}^2}{C_3}\text{sech}^2(\tilde{A}_x/C_3).
\end{equation}
Plugging this back into the equation for $j_x$ gives
\begin{equation}
j_x=\sum_sq_s\left[ a_0u_{xs} \text{sech}^2\tilde{z} +\frac{a_1}{C_3\beta_sq_sB_0L}\text{sech}^2\tilde{z} \right],
\end{equation}
and substituting in Equation (\ref{eq:uxs}), and then Equations (\ref{eq:a0}) and (\ref{eq:a1}) gives
\begin{equation}
j_x=\frac{\beta_e\beta_i \text{sech}^2\tilde{z}}{(\beta_e+\beta_i)C_3B_0L}\sum_s\frac{1}{\beta_s}\left[ 2C_1P_0 -2P_0C_1 \right]=\frac{\text{sech}^2\tilde{z}P_0}{C_3B_0L}\sum_s0=0.
\end{equation}
In contrast to the solution found by \citet{Alpers-1969}, we see that this DF gives $V_{xs}\propto j_x =0$.

Similarly, we can calculate $j_y$,
\begin{eqnarray}
j_y&=&\sum_sq_s\int f_s v_y d^3v,\nonumber\\
&\vdots&\nonumber\\
&=&\frac{C_2B_0^2}{2\mu_0}\frac{2}{B_0L}\text{sech}^2\tilde{z}=j_y.\nonumber
\end{eqnarray}
The individual bulk velocities in the $y$ direction are proportional to the total current density, and go to zero at $\infty$, i.e. $V_{ys}\propto j_y$.

\section{The analytical/``exponential'' equilibrium DF}\label{sec:exppressuredf}
\subsection{The pressure tensor}
In this section we derive one more pressure tensor consistent with the AH+G field, that allows an exact analytical solution for the DF. The key step for analytic progress is to find distinct representations of $\tanh\tilde{z}=\tanh\tilde{z}(\tilde{A}_x,\tilde{A}_y)$ that allow inversion of the Weierstrass transform.

In a similar vein to the method in \citet{Alpers-1969}, we achieve this crucial step by identifying two distinct representations of $\tanh\tilde{z}(A_x,A_y)$,
\begin{eqnarray}
\tanh\tilde{z}&=&1-e^{-\tilde{z}}\text{sech}\tilde{z}=1-e^{\frac{C_1-C_2}{C_2C_3}\tilde{A}_x}e^{\frac{1}{C_2}\tilde{A}_y},\nonumber\\
\tanh\tilde{z}&=&\sqrt{1-\text{sech}^2\tilde{z}}=\sqrt{1-e^{\frac{2C_1}{C_2C_3}\tilde{A}_x}e^{\frac{2}{C_2}\tilde{A}_y}},\nonumber
\end{eqnarray}
These are composed as a linear combination, and then substituted into Equation (\ref{eq:scalar_pressure}) to give
\begin{eqnarray}
&&P_{zz}=P_{T}-\frac{B_0^2}{2\mu_0}\Bigg\{    C_1^2+C_3^2+2C_1C_2\left(  1-e^{\frac{C_1-C_2}{C_2C_3}\tilde{A}_x}e^{\frac{1}{C_2}\tilde{A}_y}    \right)   \nonumber\\
&&+C_2^2\left[ k\left( 1-e^{\frac{C_1-C_2}{C_2C_3}\tilde{A}_x}e^{\frac{1}{C_2}\tilde{A}_y}   \right)^2+(1-k)\left(  1-e^{\frac{2C_1}{C_2C_3}\tilde{A}_x}e^{\frac{2}{C_2}\tilde{A}_y} \right)      \right]    \Bigg\},\label{eq:pressure_tensor}
\end{eqnarray}
with $k$ a `separation constant'. Amp\`{e}re's Law implies that $P_{zz}$ must satisfy $\partial P_{zz}/\partial A_x(\tilde{z})=0$ and $\partial P_{zz}/\partial A_y(\tilde{z})=B_0C_2/(\mu_0L)\text{sech}^2\tilde{z}$, and it is seen to do so when $k=C_1/C_2$. In this case, Equation (\ref{eq:pressure_tensor}) can be re-written
\begin{eqnarray}
P_{zz}&=&P_{T}-\frac{B_0^2}{2\mu_0}\Bigg\{    C_1^2+C_3^2+-C_1C_2+C_1C_2\left(1+  \left(1-e^{\frac{C_1-C_2}{C_2C_3}\tilde{A}_x}e^{\frac{1}{C_2}\tilde{A}_y}   \right) \right)^2   \nonumber\\
&+&C_2(C_2-C_1)\left(  1-e^{\frac{2C_1}{C_2C_3}\tilde{A}_x}e^{\frac{2}{C_2}\tilde{A}_y} \right)          \Bigg\},\label{eq:pressure_tensor2}
\end{eqnarray}
An examination of the coefficients of the exponential functions in Equation (\ref{eq:pressure_tensor2}) tells us that $P_{zz}>0\, \forall\, (A_x,A_y)$ under the following conditions
\begin{eqnarray}
C_1C_2&<&0,\label{eq:c1c2neg}\\
P_T&>&\frac{B_0^2}{2\mu_0}\left[ C_1^2+C_3^2-C_1C_2+C_2^2-C_1C_2   \right]\nonumber\\
&&=\frac{B_0^2}{2\mu_0}\left[ C_1^2+C_2^2 +C_3^2-2C_1C_2  \right]\label{eq:ptotbound}
\end{eqnarray}
Now that a $P_{zz}>0$ has been found that satisfies Amp\`{e}re's Law and pressure balance, we can attempt to solve the inverse problem.

\subsection{The DF}
By comparison with Equation (\ref{eq:pressure_tensor2}) (in which $P_{zz}$ is written as a sum of exponential functions), we can suggest a form for the DF by using either `inspection and standard integral formulae' \citep{Gradshteyn}, Fourier transforms (see Section \ref{sec:ftransform}), or knowledge of eigenfunctions \citep{Wolf-1977}. The form that we choose is
\begin{eqnarray}
&&f_s=\frac{n_{0s}}{(\sqrt{2\pi}v_{\text{th},s})^3}e^{-\beta_{s}H_{s}}\bigg(a_{0s}e^{\beta_s(u_{xs}p_{xs}+u_{ys}p_{ys})}\nonumber\\
&&+a_{1s}e^{2\beta_s(u_{xs}p_{xs}+u_{ys}p_{ys})}+a_{2s}e^{\beta_s(v_{xs}p_{xs}+v_{ys}p_{ys})}+b_s  \bigg),\label{eq:DF_ansatz}
\end{eqnarray}
for $a_{0s}, a_{1s}, a_{2s}, b_s, u_{xs}, u_{ys}, v_{xs}$ and $v_{ys}$ as yet arbitrary constants, with the ``$a,b$'' constants dimensionless, and the ``$u,v$'' constants the bulk flows of particular particle populations (e.g. see \cite{Davidsonbook, Schindlerbook} and Section \ref{sec:Harristype}).

\subsubsection{Equilibrium parameters and their relationships}\label{secparameters}
We proceed with the necessary task of ensuring that the DF in Equation (\ref{eq:DF_ansatz}) exactly reproduces the correct pressure tensor expression of Equation (\ref{eq:pressure_tensor}). After some algebra we find the `micro-macroscopic' consistency relations by taking the $v_z^2$ moment of the DF, and these are displayed in Equations (\ref{eq:micromacro1} - \ref{eq:micromacro4}). 
\begin{eqnarray}
P_{T}-\frac{B_0^2}{2\mu_0}\left[(C_1+C_2)^2+C_3^2\right]=b\frac{\beta_e+\beta_i}{\beta_e\beta_i}, &\displaystyle \frac{C_1-C_2}{C_2C_3B_0L}=e\beta_iu_{xi}=-e\beta_eu_{xe}, \label{eq:micromacro1}\\
4C_1C_2\frac{B_0^2}{2\mu_0}=a_0\frac{\beta_e+\beta_i}{\beta_e\beta_i}, &   \displaystyle\frac{1}{C_2B_0L}=e\beta_iu_{yi}=-e\beta_eu_{ye} ,\label{eq:micromacro2}\\
 -C_1C_2\frac{B_0^2}{2\mu_0}=a_1\frac{\beta_e+\beta_i}{\beta_e\beta_i}, & \displaystyle\frac{2C_1}{C_2C_3B_0L}=e\beta_iv_{xi}=-e\beta_ev_{xe},\\
  C_2(C_2-C_1)\frac{B_0^2}{2\mu_0}=a_2\frac{\beta_e+\beta_i}{\beta_e\beta_i},& \displaystyle\frac{2}{C_2B_0L}=e\beta_iv_{yi}=-e\beta_ev_{ye}, \label{eq:micromacro4}
 \end{eqnarray}
We must also ensure that $n_i(A_x,A_y)=n_e(A_x,A_y)$ (for $n_{s}(A_{x},A_{y})$ the number density of species $s$) in order to be consistent with our assumption that $\phi=0$. The constants $a_0,a_1,a_2$ and $b$ are defined by these neutrality relations that complete this final step of the method, are found by calculating the zeroth order moment of the DF, and are written in Equations (\ref{eq:neutral1} - \ref{eq:neutral2}).
\begin{eqnarray}
a_{0}=n_{0s}a_{0s}e^{(u_{xs}^2+u_{ys}^2)/(2v_{\text{th},s}^2)},  &  a_{2}= n_{0s}a_{2s}e^{(v_{xs}^2+v_{ys}^2)/(2v_{\text{th},s}^2)}, \label{eq:neutral1}\\
 a_{1}=n_{0s}a_{1s}e^{2(u_{xs}^2+u_{ys}^2)/v_{\text{th},s}^2},   &   b=n_{0s}b_s.\label{eq:neutral2}
 \end{eqnarray}
These constraints are 16 in number, with 20 microscopic parameters $(\beta_s$, $n_{0s}$, $a_{0s}$, $a_{1s}$, $a_{2s}$, $b_s$, $u_{xs}$, $u_{ys}$, $v_{xs}$, $v_{ys})$, given chosen macroscopic parameters $(B_0$, $P_T$, $L$, $C_1$, $C_2$, $C_3)$.

\subsubsection{Non-negativity of the DF}
Since we integrate $f_s$ over velocity space to calculate $P_{zz}$, it is clear that non-negativity of $P_{zz}$ does not imply non-negativity of $f_s$. Furthermore, it is clear from Equations (\ref{eq:micromacro2}) and (\ref{eq:neutral1}) that $C_1C_2<0\implies a_{0s}<0$ (as well as $a_{1s}>0$, $a_{2s}>0$). We can also see by consideration of Equations (\ref{eq:ptotbound}) and (\ref{eq:neutral2}) that $b_s$>0. The fact that $a_{0s}<0$ is a cause for concern, regarding the positivity of the DF, given its form (Equation (\ref{eq:DF_ansatz})). However, by completing the square, the DF can be re-written as
\begin{eqnarray}
f_s=\frac{n_{0s}}{(\sqrt{2\pi}v_{\text{th},s})^3}e^{-\beta_{s}H_{s}}&\bigg[&\frac{1}{a_{1s}}\left(-\frac{a_{0s}}{2}+a_{1s}e^{\beta_s(u_{xs}p_{xs}+u_{ys}p_{ys})}\right)^2-\frac{a_{0s}^2}{4a_{1s}}\nonumber\\
&&+a_{2s}e^{\beta_s(v_{xs}p_{xs}+v_{ys}p_{ys})}+b_s\bigg].\nonumber
\end{eqnarray}
Hence we see that non-negativity of the DF is assured provided 
\begin{equation}
b_s\ge\frac{a_{0s}^2}{4a_{1s}}.\label{eq:bsbound}
\end{equation}

\subsubsection{The DF is a sum of Maxwellians}\label{sec:3max}
The equilibrium DF in equation (\ref{eq:DF_ansatz}) is written as a function of the constants of motion ($H_s, p_{xs},p_{ys}$), and this was suitable for constructing an exact equilibrium solution to the Vlasov equation. However, we can write $f_s$ explicitly as a function over phase-space ($z,\boldsymbol{v}$), in a form similar to that of the drifting Maxwellian in Equation (\ref{eq:Maxshift}). The DF can be re-written as 
\begin{eqnarray}
f_s(z,\boldsymbol{v})=\frac{1}{(\sqrt{2\pi}v_{\text{th},s})^3}&\bigg[&\mathcal{N}_{0s}(z)e^{-\frac{(\boldsymbol{v}-\boldsymbol{V}_{0s})^2}{2v_{\text{th},s}^2}}+\mathcal{N}_{1s}(z)e^{-\frac{(\boldsymbol{v}-\boldsymbol{V}_{1s})^2}{2v_{\text{th},s}^2}}\nonumber\\
&&+\mathcal{N}_{2s}(z)e^{-\frac{(\boldsymbol{v}-\boldsymbol{V}_{2s})^2}{2v_{\text{th},s}^2}}+be^{-\frac{\boldsymbol{v}^2}{2v_{\text{th},s}^2}}\bigg],\label{eq:sumMax}
\end{eqnarray}
for the density and bulk flow variables (``$\mathcal{N},\boldsymbol{V}$''), defined by 
\begin{eqnarray*}
\mathcal{N}_{0s}(z)&=&a_{0}e^{q_s\beta_{s} \boldsymbol{A}\cdot\boldsymbol{V}_{0s}}=a_0e^{-\tilde{z}}\text{sech}\tilde{z},\hspace{3mm} \boldsymbol{V}_{0s}=(u_{xs},u_{ys},0),  \\
\mathcal{N}_{1s}(z)&=&a_{1}e^{q_s\beta_{s} \boldsymbol{A}\cdot\boldsymbol{V}_{1s}}=a_1e^{-2\tilde{z}}\text{sech}^2\tilde{z},\hspace{3mm}\boldsymbol{V}_{1s}=(2u_{xs},2u_{ys},0),  \\
\mathcal{N}_{2s}(z)&=&a_{2}e^{q_s\beta_{s} \boldsymbol{A}\cdot\boldsymbol{V}_{2s}}=a_2\text{sech}^2\tilde{z},  \hspace{3mm} \boldsymbol{V}_{2s}=(v_{xs},v_{ys},0),\\
\end{eqnarray*}
respectively. The $u,v$ variables are normalised by $v_{\text{th},s}$ ($\tilde{u}_{xs}=u_{xs}/v_{\text{th},s}$ etc). This representation of $f_s$ has the advantages of having a clear visual/physical interpretation, and of being in a form readily implemented into PIC simulations as initial conditions. Despite the fact that each term of $f_s$ as written in Equation (\ref{eq:sumMax}) bears a strong resemblance to $f_{Maxw,s}$ as defined by Equation (\ref{eq:Maxshift}), $f_s$ is an exact Vlasov equilibrium DF, whereas $f_{Maxw,s}$ is not.

\subsection{Plots of the DF}
In order to plot the normalised DF, $\tilde{f}_s=f_s/\max f_s$, it is more convenient for Equations (\ref{eq:micromacro1}) - (\ref{eq:micromacro4}) to be expressed in dimensionless form. Making use of the dimensionless parameters also defined in Section \ref{sec:3max}, we have the following relationships
\begin{eqnarray}
\beta_{T}-\left[(C_1+C_2)^2+C_3^2\right]=bR,&&\hspace{3mm} \frac{(C_1-C_2)\delta_s^\star}{C_2C_3}=\tilde{u}_{xs}\nonumber\\
4C_1C_2=a_0R,&&\hspace{3mm}\frac{\delta_s^\star}{C_2}=\tilde{u}_{ys}, \nonumber\\
-C_1C_2=a_1R,&&\hspace{3mm} \frac{2C_1\delta_s^\star}{C_2C_3}=\tilde{v}_{xs},\nonumber\\
C_2(C_2-C_1)=a_2R,&&\hspace{3mm} \frac{2\delta_s^\star}{C_2}=\tilde{v}_{ys}.\nonumber
\end{eqnarray}
The signed magnetisation parameter $\delta_s^\star = m_sv_{\text{th},s}/(q_sB_0L)$ is the ratio of the (signed) thermal Larmor radius to the current sheet width, and the constants $R$ and $\beta_{T}$ defined by
\begin{eqnarray}
R&=&\frac{\beta_e+\beta_i}{\beta_e\beta_i} \frac{2\mu_0}{B_0^2},\nonumber\\
\beta_{T}&=&P_T\frac{2\mu_0}{B_0^2}\nonumber.
\end{eqnarray}
Hence, the normalised bulk flow parameters, $\tilde{u}_{xs}, \tilde{u}_{ys}, \tilde{v}_{xs}, \tilde{v}_{ys}$ are fixed by choosing the magnetisation, $\delta_s^\star$, and the magnetic field configuration, $C_1,C_2,C_3$. If in addition one chooses $n_{0s}$, and the ratio $R$ (note that $n_{0s}$R is dimensionless), then we see that the `density parameters' $a_0, a_1$ and $a_2$ are also fixed. In turn $a_{0s},a_{1s}$ and $a_{2s}$ are then fixed by Equations (\ref{eq:neutral1}) and (\ref{eq:neutral2}). Then, the lower bound on $b_s$ (for positivity of the DF) is determined by Equation (\ref{eq:bsbound}), and in turn we see a lower bound for $b$ and hence $\beta_{T}$.

Note that when $T_e=T_i:=T$ and $n_{0i}=n_{0e}:=n_0$, it is the case that $n_0R=2\beta_{pl}^\star$, for
\[
\beta_{pl}^\star=\frac{n_0k_BT}{B_0^2/(2\mu_0)},
\] 
a constant reference value for $\beta_{pl}$, which itself is spatially dependent. We shall also assume that $b_s=a_{0s}^2/(4a_{1s})$, and hence 
\[
\inf \tilde{f}_s =0 .
\]

In Figure \ref{fig:asymmetricdfs} we present plots of the DF in $(v_x/v_{\text{th},s},v_y/v_{\text{th},s})$ space, for $z/L=(0,0.1,1,10)$, and for the parameters 
\begin{eqnarray}
(\delta_i,R,n_0,C_1,C_2,C_3)&=&(0.2,0.1,1,-0.1,0.2,0.1),\nonumber\\
\implies(\tilde{u}_{xi},\tilde{u}_{yi},\tilde{v}_{xi},\tilde{v}_{yi})&=&(-3,1,-2,2)\nonumber.
\end{eqnarray}
We have chosen this particular parameter set, in order to clearly see that the VM equilibrium permits multiple maxima in velocity space, as is to be expected by a sum of drifting Maxwellians. However, whilst the plots of $\tilde{f}_i$ permit multiple maxima for $z/L=0,0.1,1$ in the parameter range chosen, we see that for large $z/L$ the DF is an isotropic Maxwellian, centred on $(0,0)$. This is consistent with no bulk flows $V_{xi},V_{xe}$ for large $\tilde{z}$, in contrast to the DF found by \citet{Alpers-1969}.

In particular, Figure \ref{fig:e1} shows $\tilde{f}_e$ for $\delta_e=\delta_i$, and hence
\[
(\tilde{u}_{xe},\tilde{u}_{ye},\tilde{v}_{xe},\tilde{v}_{ye})=-(\tilde{u}_{xi},\tilde{u}_{yi},\tilde{v}_{xi},\tilde{v}_{yi}),
\] 
with other parameters unchanged. As a result, we see that sending ``$q_i\to q_e$'' seems equivalent to sending ``$f_i(v_x/v_{\text{th},i},v_y/v_{\text{th},i})\to f_e(-v_x/v_{\text{th},e},-v_y/v_{\text{th},e})$''. However, for Figures \ref{fig:e2}, \ref{fig:e3} and \ref{fig:e4} we take $T_e=T_i$, and hence $\delta_e=\sqrt{m_e/m_i}\delta_i$, giving
\[
(\tilde{u}_{xe},\tilde{u}_{ye},\tilde{v}_{xe},\tilde{v}_{ye})\approx (  0.07,-0.02,0.05,-0.05     ).
\] 
The normalised bulk electron flow is now much smaller in magnitude, and this is represented in the figures. 

We note that there is a large portion of parameter space for which one sees no multiple maxima in velocity space (although we have not plotted these), indicating that the VM equilibrium that we present permits locally Maxwellian/thermalised - and hence micro-stable -DFs.

\begin{figure}
    \centering
    \begin{subfigure}[b]{0.45\textwidth}
        \includegraphics[width=\textwidth]{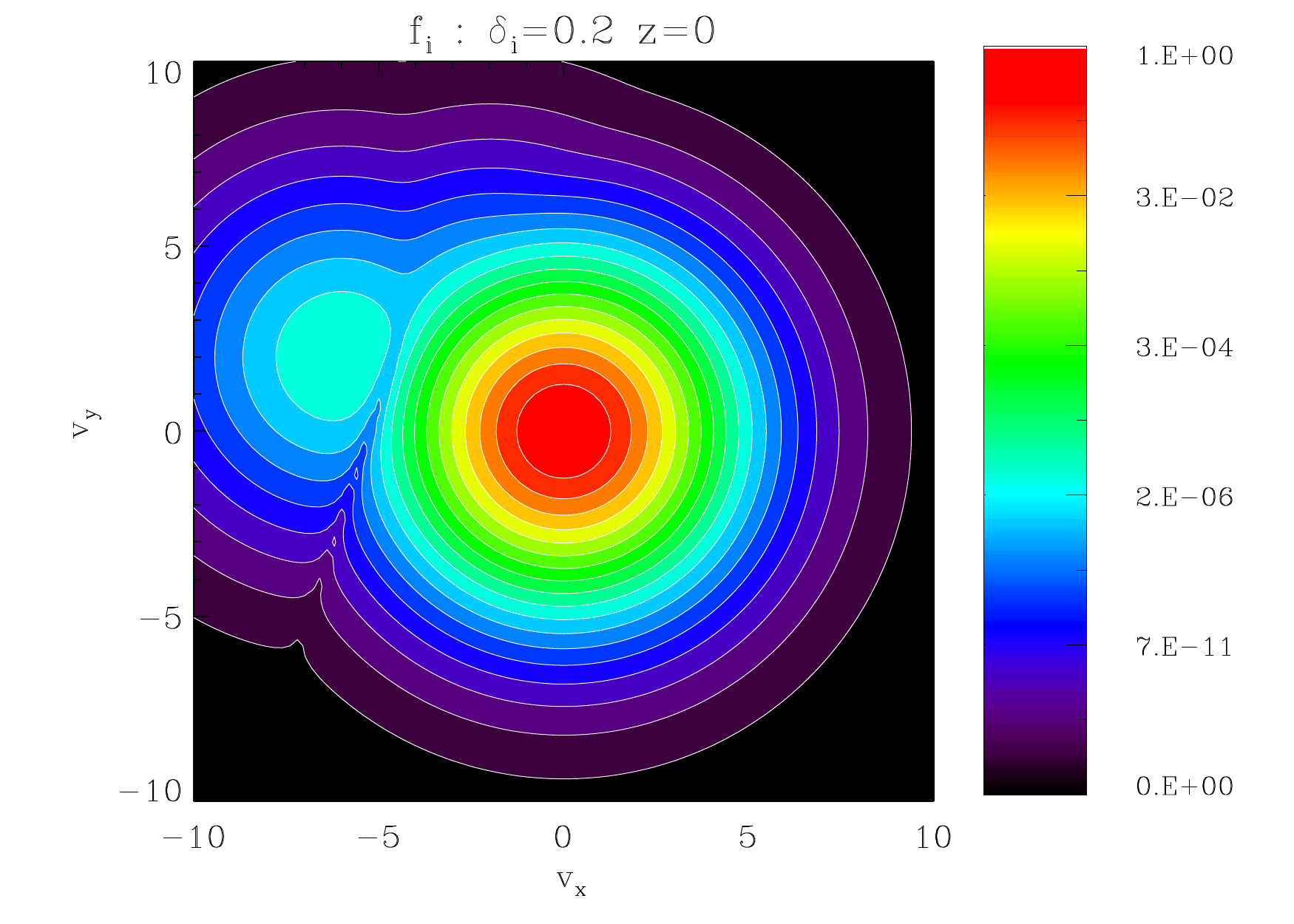}
        \caption{}
        \label{fig:ion1}
    \end{subfigure}
     \begin{subfigure}[b]{0.45\textwidth}
        \includegraphics[width=\textwidth]{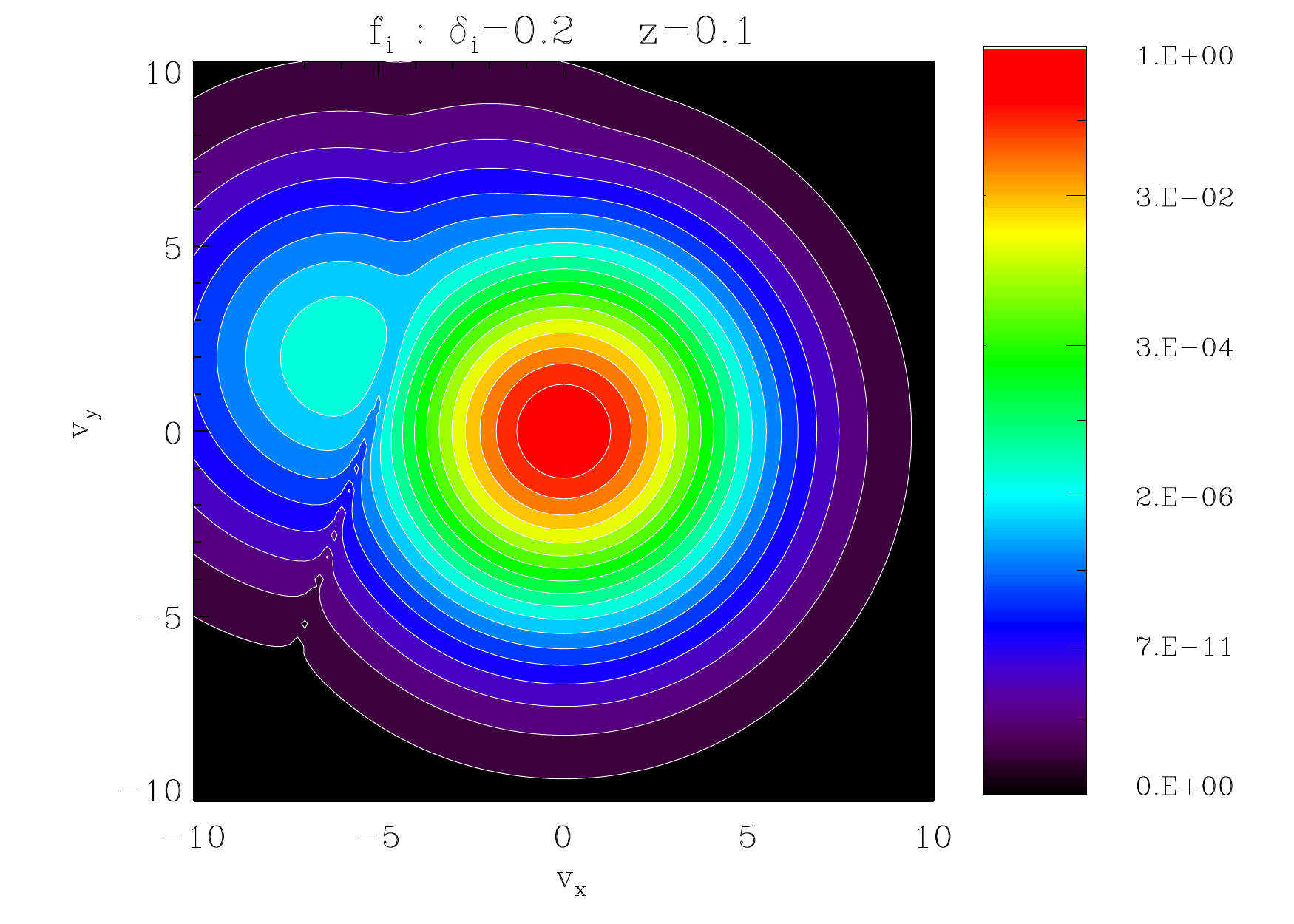}
        \caption{}
        \label{fig:ion2}
    \end{subfigure}
       \begin{subfigure}[b]{0.45\textwidth}
        \includegraphics[width=\textwidth]{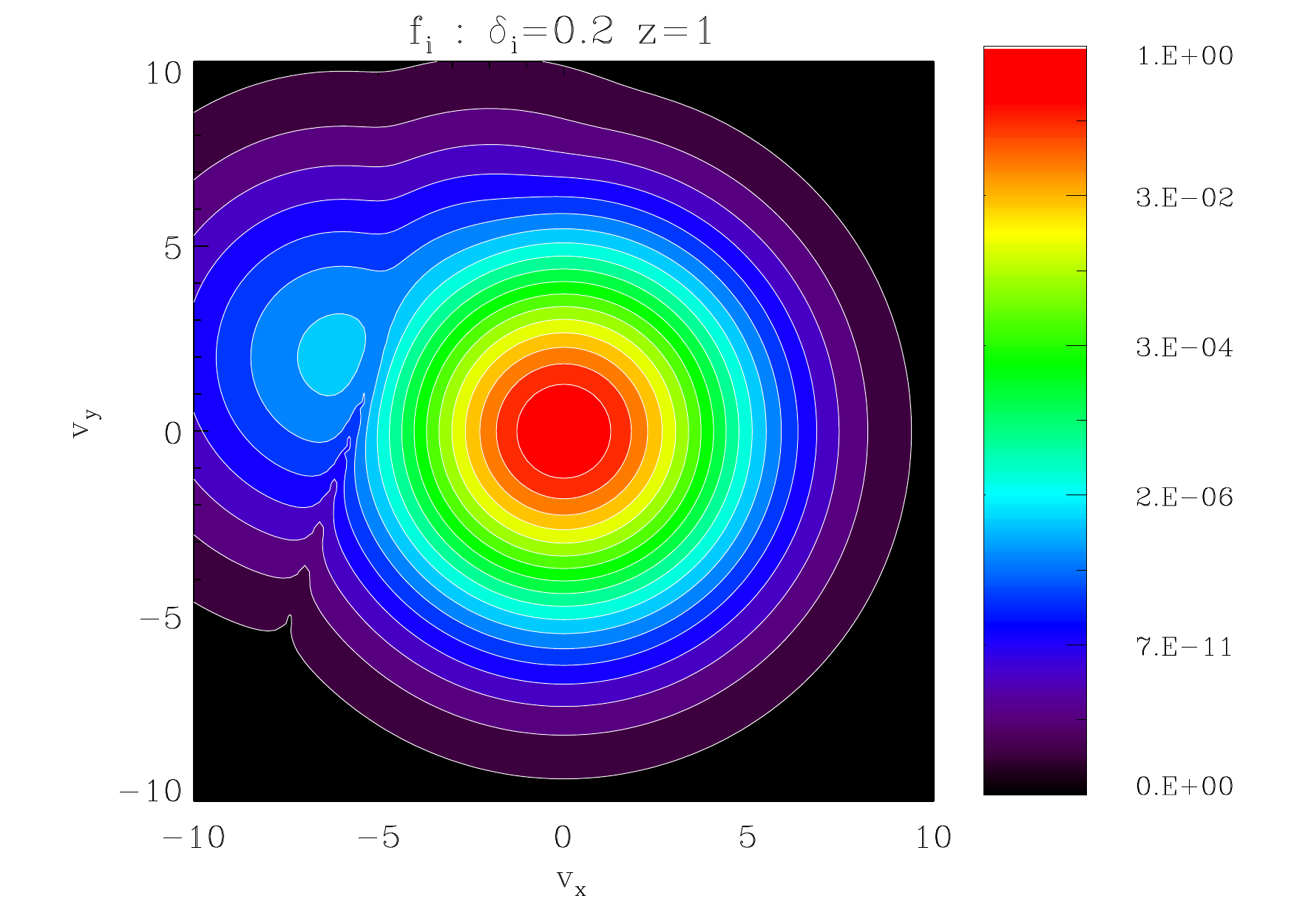}
        \caption{}
        \label{fig:ion3}
    \end{subfigure}
        \begin{subfigure}[b]{0.45\textwidth}
        \includegraphics[width=\textwidth]{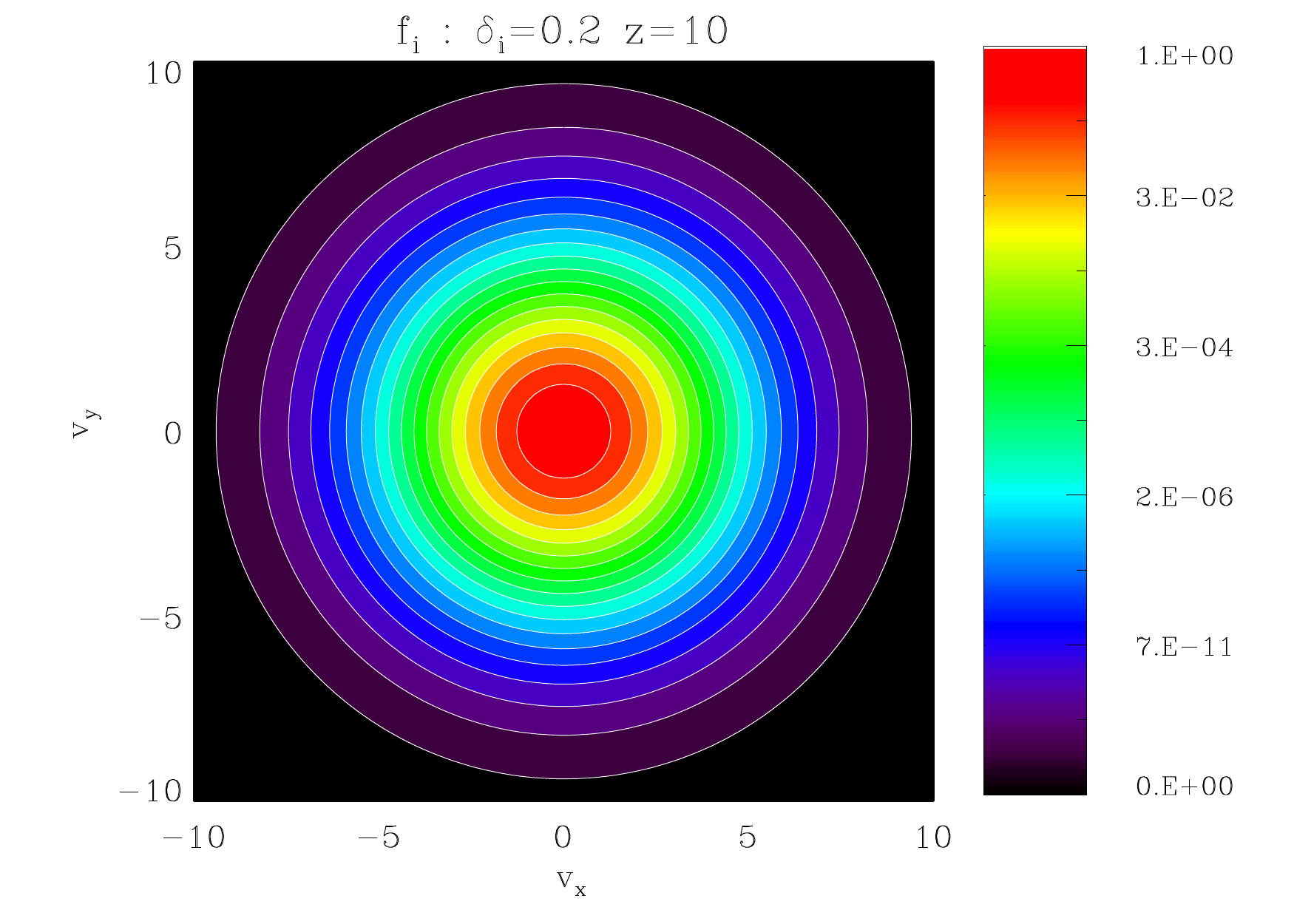}
        \caption{}
        \label{fig:ion4}
    \end{subfigure}
    \begin{subfigure}[b]{0.45\textwidth}
        \includegraphics[width=\textwidth]{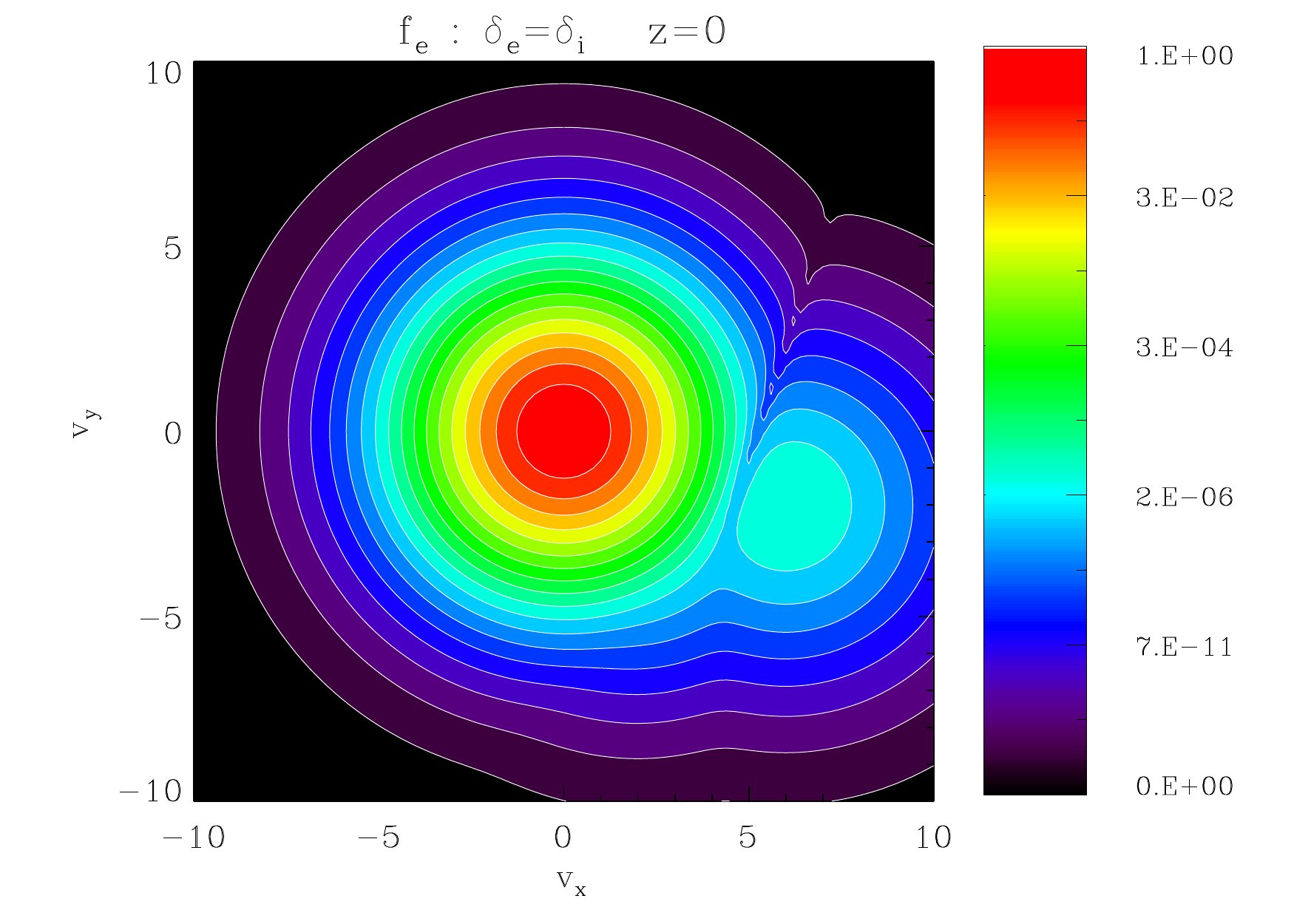}
        \caption{}
        \label{fig:e1}
    \end{subfigure}
     \begin{subfigure}[b]{0.45\textwidth}
        \includegraphics[width=\textwidth]{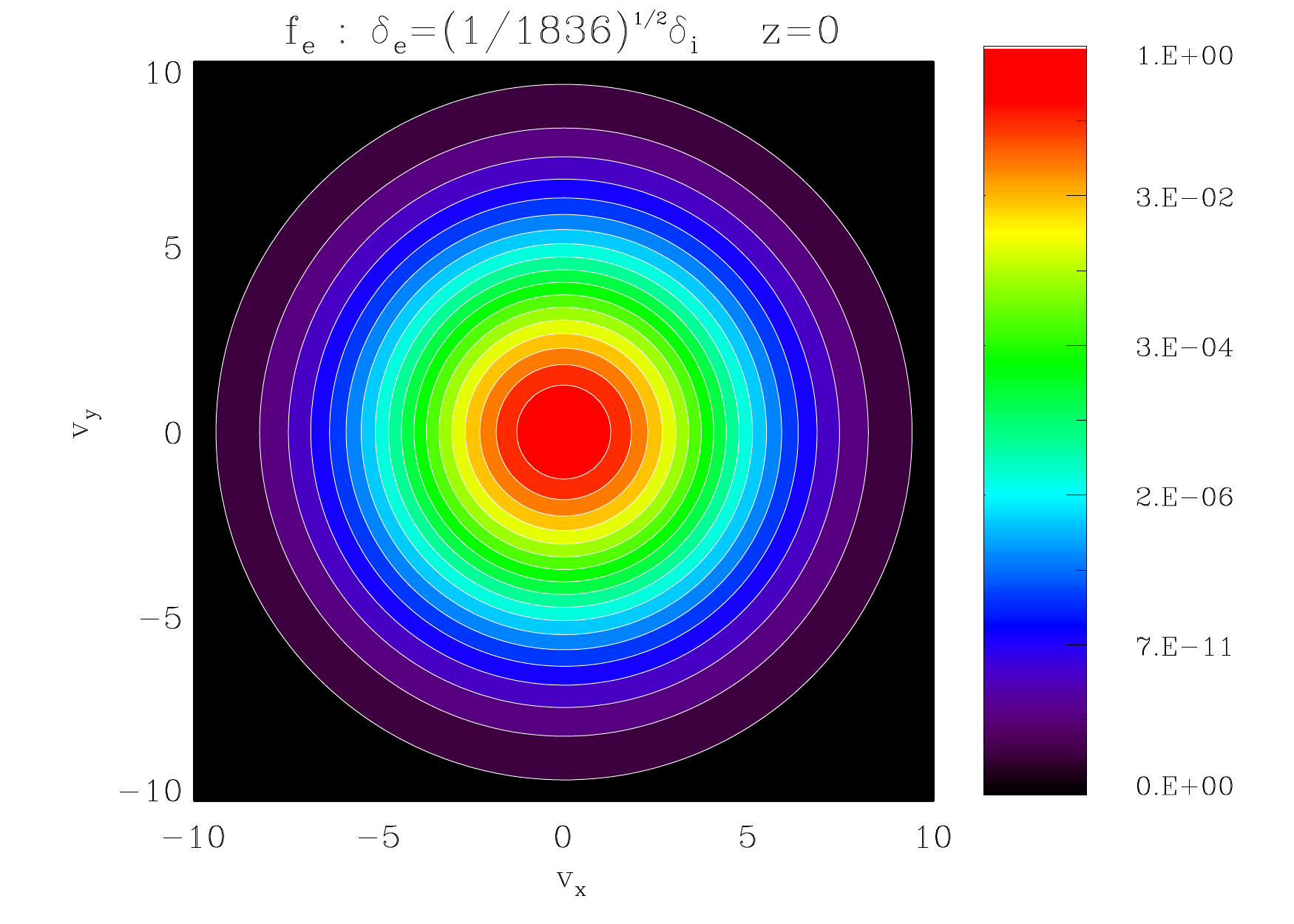}
        \caption{}
        \label{fig:e2}
    \end{subfigure}
       \begin{subfigure}[b]{0.45\textwidth}
        \includegraphics[width=\textwidth]{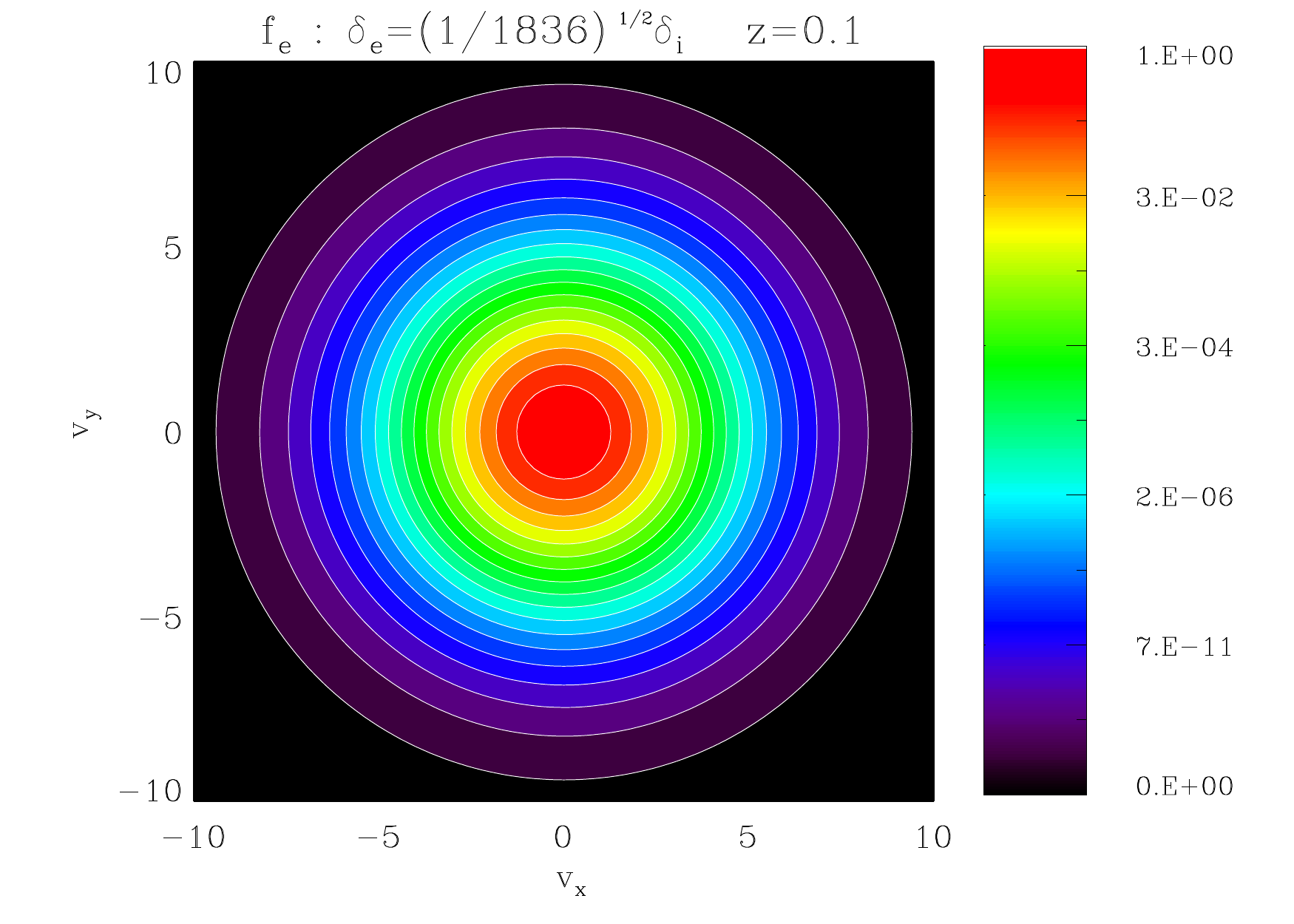}
        \caption{}
        \label{fig:e3}
    \end{subfigure}
        \begin{subfigure}[b]{0.45\textwidth}
        \includegraphics[width=\textwidth]{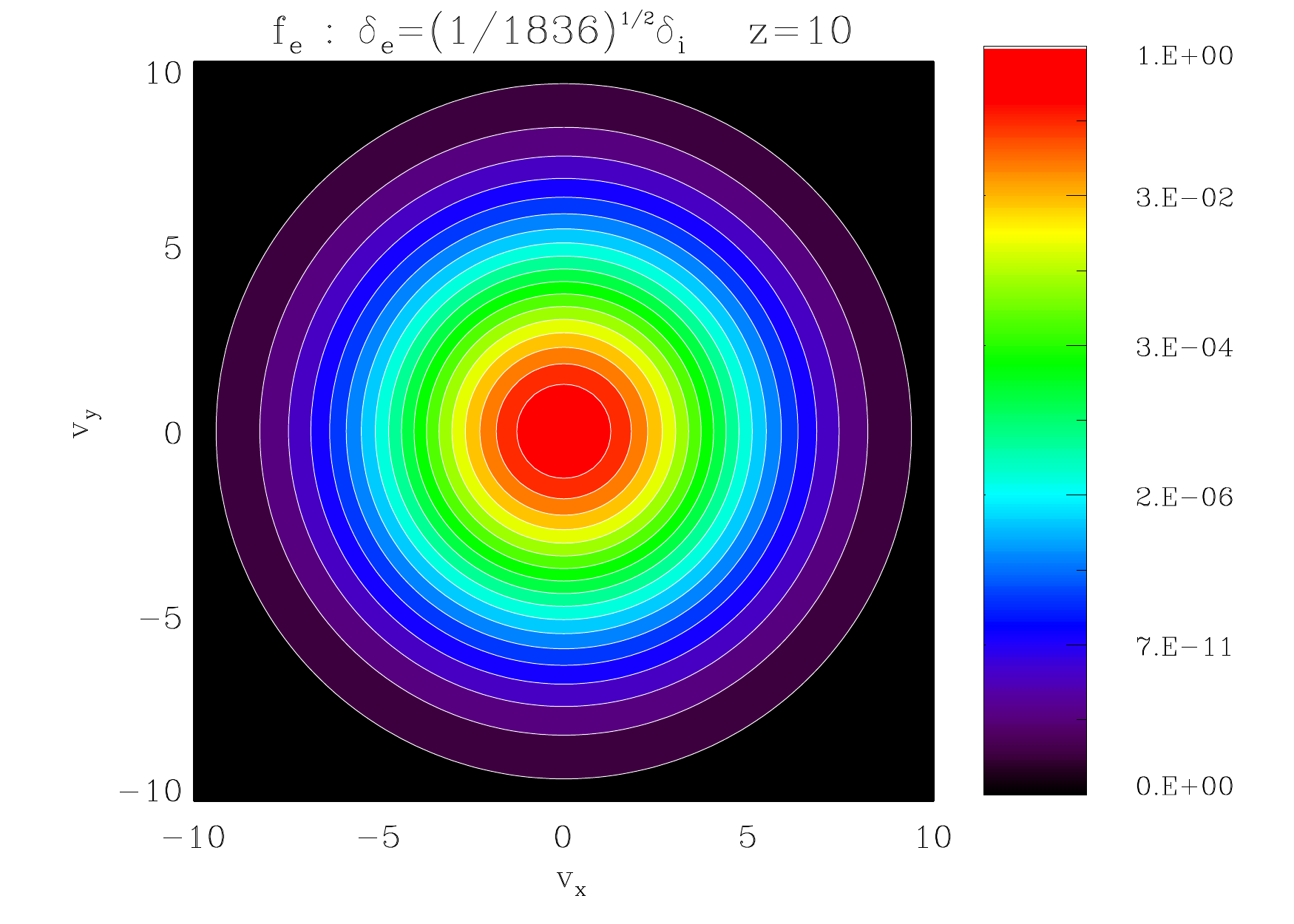}
        \caption{}
        \label{fig:e4}
    \end{subfigure}
        \caption{\small In Figures \ref{fig:ion1}, \ref{fig:ion2}, \ref{fig:ion3} and \ref{fig:ion4} we plot $\tilde{f}_i$ for $\delta_i=0.2$ and $z/L=0,0.1,1$ and $10$ respectively. In Figure \ref{fig:e1} we plot $\tilde{f}_e$ for $\delta_e=0.2$ and $z/L=0$. In Figures \ref{fig:e2}, \ref{fig:e3} and \ref{fig:e4} we plot $\tilde{f}_e$ for $\delta_e=\sqrt{m_e/m_i}\delta_i$ and $z/L=0,0.1$ and $10$ respectively.  }\label{fig:asymmetricdfs}
\end{figure}

\section{Discussion}
By considering the theory of the pressure tensor in vector-potential space (and its analogy with the problem of a particle in a potential), we have deduced that $P_{zz}$ must be a function of both $A_x$ and $A_y$, to describe a 1D asymmetric Harris current sheet with field reversal. This is - at first glance - a surprise, since there is only one component of the current density.

We have presented two valid $P_{zz}(A_x,A_y)$ functions that are self-consistent with an asymmetric Harris sheet plus guide field. One of these necessitated a numerical approach in order to solve for the DF, whereas the second allowed an analytical solution. The magnetic fields described by our models have often been used as asymmetric current sheet models for reconnection studies, and should be particularly suited to studying reconnection in Earth's dayside magnetopause.

The expression for the exact analytical VM equilibrium DF is elementary in form, and is written as a sum of exponential functions of the constants of motion, which can be re-written in $(z,\boldsymbol{v})$ space as a weighted sum of drifting Maxwellian DFs. This form for the DF can be readily used as initial conditions in particle-in-cell simulations. The equilibrium has zero mass flow far from the sheet, which is corroborated by the plots of the DF, and this is in contrast to the known exact analytical DF in the literature \citep{Alpers-1969}.

\null\newpage

\chapter{Neutral and non-neutral flux tube equilibria} \label{Cylindrical} 

\epigraph{\emph{Things are the way they are because they were the way they were.}}{\textit{Fred Hoyle}}

\noindent Much of the work in this chapter is drawn from \citet{Allanson-2016POP}

\section{Preamble}
In this chapter we calculate exact 1D collisionless plasma equilibria for a continuum of flux tube models, for which the total magnetic field is made up of the `force-free' Gold-Hoyle (GH) magnetic flux tube embedded in a uniform and anti-parallel background magnetic field. For a sufficiently weak background magnetic field, the axial component of the total magnetic field reverses at some finite radius. The presence of the background magnetic field means that the total system is not exactly force-free, but by reducing its magnitude, the departure from force-free can be made as small as desired. The DF for each species is a function of the three constants of motion; namely, the Hamiltonian and the canonical momenta in the axial and azimuthal directions. Poisson's equation and Amp\`{e}re's law are solved exactly, and the solution allows either electrically neutral or non-neutral configurations, depending on the values of the bulk ion and electron flows. These equilibria have possible applications in various solar, space, and astrophysical contexts, as well as in the laboratory.

The work in this chapter pertains to a cylindrical geometry, in which $r$ is the horizontal distance from the $z$ axis, and $\theta$ the azimuthal angle.

\section{Introduction}
Magnetic flux tubes and flux ropes are prevalent in the study of plasmas, with a wide variety of observed forms in nature and experiment, as well as uses and applications in numerical experiments and theory. Some examples of the environments and fields of study in which they feature include solar (e.g. \cite{Priest-2002,Magara-2003}); solar wind (e.g. \cite{Wangflux-1990, Borovsky-2008}); planetary magnetospheres (e.g. \cite{Sato-1986, Pontius-1990}) and magnetopauses (e.g. \cite{Cowley-1989}); astrophysical plasmas (e.g. \cite{Rogava-2000, Li-2006}); tokamak (e.g. \cite{Bottino-2007, Ham-2016}), laboratory pinch experiments (e.g. \cite{Rudakov-2000}), and the basic study of energy release in magnetised plasmas (e.g. \cite{Cowley-2015}), to give a small selection of references. 

One application of flux tubes is in the study of solar active regions (e.g. \cite{Fan-2009}) and the onset of solar flares and coronal mass ejections (e.g. \cite{Torok-2003, Titov-2003,Hood-2016}). A classic magnetohydrodynamic (MHD) model for magnetic flux tubes was first presented by T. Gold and F. Hoyle \citep{Gold-1960}, initially intended for use in the study of solar flares. The GH model is an infinite, straight, 1D and nonlinear force-free magnetic flux tube with constant `twist' \citep{Birn-2007}. Mathematically, the GH magnetic field could be regarded as the cylindrical analogue of the Force-Free Harris sheet \citep{Tassi-2008}, as the Bennett Pinch \citep{Bennett-1934} might be to the `original' Harris Sheet.  

It is typical to consider solar, space and astrophysical flux tubes within the framework of MHD (e.g. see \cite{Priest-2014}). However, many of these plasmas can be weakly collisional or collisionless, with values of the collisional free path large against any fluid scale \citep{Marsh-2006}, making a description using collisionless kinetic theory necessary. In this chapter, it is our intention to study the GH flux tube model beyond the MHD description, since - apart from the very recent work in \citet{Vinogradov-2016} - we see no attempt in the literature of a microscopic description of the GH field. 

The work in Chapters \ref{Vlasov} and \ref{Sheets}, as well as \citet{Alpers-1969, Harrison-2009POP, Harrison-2009PRL, Neukirch-2009, Wilson-2011, Abraham-Shrauner-2013, Kolotkov-2015}, used methods like Channell's \citep{Channell-1976} to tackle the VM inverse problem in Cartesian geometry. Channell described the extension of his work to cylindrical geometry as `not possible in a straightforward manner.' As explained in \citet{Tasso-2014} (in which cylindrical coordinates are used to model a torus), this is due in part to the `toroidicity' of the problem, i.e. the $1/r$ factor in the equations. As we shall see in this chapter, another potential complication is the need to allow -- at least in principle -- a non-zero charge density. 

There has been significant recent work on VM equilibria that are consistent with nonlinear force-free \citep{Harrison-2009POP, Harrison-2009PRL, Neukirch-2009, Wilson-2011, Abraham-Shrauner-2013, Kolotkov-2015, Allanson-2015POP, Allanson-2016JPP} and `nearly force-free' \citep{Artemyev-2011} magnetic fields in Cartesian geometry. VM equilibria for linear force-free fields have also been found in \citet{Sestero-1967, Bobrova-1979, Bobrova-2001}. Therein, force-free refers to a magnetic field for which the associated current density is exactly parallel, which is the definition we shall also use,
\begin{eqnarray}
\boldsymbol{j}\times\boldsymbol{B}=\frac{1}{\mu_0}(\nabla\times\boldsymbol{B})\times\boldsymbol{B}=\boldsymbol{0}.\nonumber
\end{eqnarray}
These works consider 1D collisionless current sheets, and so a natural question to consider is whether it is also possible to find self-consistent force-free (or nearly force-free) VM equilibria for other geometries, in particular cylindrical geometry. In this chapter we shall present particular VM equilibria for 1D magnetic fields which are nearly force-free in cylindrical geometry, i.e. flux tubes/ropes. These kinetic models and the the theory that follows are of potential applicability in the solar corona (e.g. see \cite{Wiegelmann, Hood-2016}), Earth's magnetotail (e.g. see \cite{Kivelson-1995, Khurana-1995, Slavin-2003, Yang-2014}) and magnetopause (e.g. \cite{Eastwood-2016}), planetary magnetospheres (e.g. \cite{DiBraccio-2015}), tokamak (e.g. \citep{Tasso-2007, Tasso-2014}) and laboratory (e.g. \cite{Davidsonbook}) plasmas.

\subsection{Previous work}
Two of the archetypal field configurations in cylindrical geometry are the $z$-Pinch and the $\theta$-pinch. The $z$-pinch has axial current and azimuthal magnetic field, 
\begin{equation}
\boldsymbol{j}\times\boldsymbol{B}=\nabla p \iff \frac{d}{dr}\left(p+\frac{B_{\theta}^2}{2\mu_0}\right)+\frac{B_{\theta}^2}{\mu_0r}=0,\nonumber
\end{equation}
 \citep{Freidbergbook}, a classical example of which is the Bennett Pinch
\begin{equation}
\tilde{B}_{\theta}=-\frac{\tilde{r}}{1+\tilde{r}^2},
\end{equation}
written in non-dimensional units, and for which a Vlasov equilibrium is well known \citep{Bennett-1934, Harris-1962}. In contrast, the $\theta$-Pinch has azimuthal current and axial magnetic field,
\begin{equation}
\boldsymbol{j}\times\boldsymbol{B}=\nabla p \iff \frac{d}{dr}\left(p+\frac{B_{z}^2}{2\mu_0}\right)=0.\nonumber
\end{equation}
Pinches that have both axial and azimuthal magnetic fields are known as screw or cylindrical pinches, e.g. see \citet{Freidbergbook, Carlqvist-1988}.

Consideration of `Vlasov-fluid' models of $z$-Pinch equilibria was given in \citet{Channon-2001}, with \citet{Mahajan-1989} calculating $z$-Pinch equilibria and an extension with azimuthal ion-currents. Others have also constructed kinetic models of the $\theta$-pinch, see \citet{Nicholson-1963, Batchelor-1975} for examples. In the same year as \citet{Pfirsch-1962}, cylindrical kinetic equilibria with only azimuthal currents were studied in \citet{Komarov-1962}. For examples of treatments of the stability of fluid and kinetic linear pinches, see \citet{Newcomb-1960,Pfirsch-1962,Davidsonbook} respectively. 

Recently there have been studies on `tokamak-like' VM equilibria with flows \citep{Tasso-2007, Tasso-2014}, starting from the VM equation in cylindrical geometry and working towards Grad-Shafranov equations for the vector potential. We also note two Vlasov equilibrium DFs in the literature that are close in style to the one that we shall present. The first is described in a brief paper \citep{El-Nadi-1976}, with an equilibrium presented for a cylindrical pinch. However, their distribution describes a different magnetic field and the DF appears not to be positive over all phase space. The second DF is a very recent paper that actually describes a magnetic field much like the one that we discuss  \citep{Vinogradov-2016}. Their DF is designed to model `ion-scale' flux tubes in the Earth's magnetosphere. Formally, their quasineutral model approaches a nonlinear force-free configuration in the limit of a vanishing electron to ion mass ratio. In their model, current is carried exclusively by electrons and the non-negativity of the DF depends on a suitable choice of microscopic parameters. Finally, we mention that in beam physics (e.g. see \cite{Morozov-1961, Hammer-1970, Gratreau-1977, Uhm-1985}), much work on constructing cylindrical VM equilibria is done by looking for mono-energetic distributions with conserved angular momentum,
\begin{equation}
f_s=\delta(H_s-H_{0s})g(p_{\theta s}),\nonumber
\end{equation}
for $H_{0s}$ a fixed energy, $H_s$ and $p_{\theta s}$ the Hamiltonian and angular momentum respectively. 



This chapter is structured as follows. In Section \ref{sec:theory} we first review the theory of the equation of motion consistent with a collisionless DF in cylindrical geometry, and discuss the question of the possibility of 1D force-free equilibria. Then we introduce the magnetic field to be used. We note that whilst the work in this chapter is applied to a particular magnetic field from Section \ref{subsec:magfield} onwards, the steps taken to calculate the equilibrium DF seem as though they could be adaptable to other cases. In Section \ref{sec:thedf} we present the form of the DF that gives the required macroscopic equilibrium, and proceed to `fix' the parameters of the DF by explicitly solving Amp\`{e}re's Law and Poisson's Equation. Note that whilst we choose to consider a two-species plasma of ions and electrons, we see no obvious reason preventing the work in this chapter being used to describe plasmas with a different composition. In Section \ref{sec:analysis} we present a preliminary analysis of the physical properties of the equilibrium. The analysis includes discussions on non-neutrality and the electric field; the equation of state and the plasma beta; the origin of individual terms in the equation of motion; plots of the DF; as well as particularly technical calculations in Sections \ref{sec:moments}, \ref{app:vthetamax} and \ref{app:vzmax}. Section \ref{sec:moments} contains the zeroth and first order moment calculations, used to find the number densities and bulk flows directly, and in turn the charge and current densities. Sections \ref{app:vthetamax} and \ref{app:vzmax} contain the mathematical details of the existence and location of multiple maxima of the DF in velocity-space.

The work in this chapter does not present a generalised method for the VM inverse problem in cylindrical geometry, but instead some particular solutions for a specific given magnetic field. Other than any interesting theoretical advances, a possible application of the results of this study could be to implement the obtained model in kinetic (particle) numerical simulations.

\section{General theory}\label{sec:theory}
\subsection{Vlasov equation in time-independent orthogonal coordinates}
A collisionless equilibrium is characterised by the 1-particle DF, $f_s$, a solution of the steady-state Vlasov Equation (e.g. see \cite{Schindlerbook}). The Vlasov equation can be written \citep{Santini-1970} in index notation as
\begin{equation}
\frac{\partial f_s}{\partial t} +\frac{1}{\sqrt{g}}\frac{\partial}{\partial x^i}\left(\sqrt{g}\frac{dx^i}{d t}f_s\right)+\frac{\partial}{\partial v^i}\left(\frac{d v^i}{d t}f_s\right)=0, \label{eq:Vlasov_index}
\end{equation}
for $i\in\{1,2,3\}$; time-independent orthogonal coordinates given by $x^i\in (x^1,x^2,x^3)$; orthogonal and orthonormal basis vectors defined by $e_i$ and $\hat{e}_i$ respectively; the diagonal metric tensor $g_{ij}=g_{ij}(x^1,x^2,x^3)=e_ie_j$, such that distances in configuration-space obey
\[
ds^2=g_{11}(dx^1)^2+g_{22}(dx^2)^2+g_{33}(dx^3)^2;
\]
$g=\text{Det}[g_{ij}]=g_{11}g_{22}g_{33}$; velocities given by $\boldsymbol{v}=v^i\hat{e}_i=\sqrt{g_{ii}}\,dx^i/dt\, \hat{e}_i$; and the Einstein summation convention applied such that repeated indicies are summed over, i.e.
\[
A_iB^i=A_1B^1+A_2B^2+A_3B^3.
\]
Superscript and subscript indices represent contra- and co-variant tensor components respectively, with the metric tensor able to raise or lower these indices, e.g.
\[
x_j=g_{ij}x^i,
\]
such that 
\[
\nabla=e_i\nabla^i=g_{ij}e^j\nabla^i=e^j\nabla_j(=e^i\nabla_i=\nabla ),
\]
and 
\[
\nabla_i=\frac{\partial}{\partial x^i}
\]
(see e.g. \cite{Leonhardtbook, Landaufields} for good introductions to index notation).

Equation (\ref{eq:Vlasov_index}) can be re-written in vector notation \citep{Santini-1970} as 
\begin{equation}
\frac{\partial f_s}{\partial t} +\boldsymbol{v}\cdot\nabla f +\frac{q_s}{m_s} \left[(\boldsymbol{E}+\boldsymbol{v}\times\boldsymbol{B})-\boldsymbol{v}\times(\nabla\times\boldsymbol{v})\right]\cdot \frac{\partial f_s}{\partial \boldsymbol{v}}=0, \label{eq:Vlasov_vector}
\end{equation}
for 
\begin{equation}
\boldsymbol{v}=v^i\hat{e}_i,\hspace{3mm}\frac{\partial f_s}{\partial \boldsymbol{v}}=\hat{e}^i\frac{\partial f_s}{\partial v^i},\nonumber
\end{equation}
In Cartesian geometry, equation (\ref{eq:Vlasov_vector}) reduces to a familiar form since the Cartesian basis vectors are position-independent, i.e.
\[
\nabla\times\boldsymbol{v}=v^x\nabla\times\hat{e}_x+v^y\nabla\times\hat{e}_y+v^z\nabla\times\hat{e}_z=\boldsymbol{0}.
\]
\subsection{Vlasov equation in cylindrical geometry}
In cylindrical geometry $(x^1=r,x^2=\theta,x^3=z)$, $(\hat{e}_r=e_r=\hat{r},\hat{e}_\theta=\frac{1}{r}e_\theta=\hat{\theta},\hat{e}_z=e_z=\hat{z})$, $\nabla\times\boldsymbol{v}=v^\theta \hat{r}\times\nabla\theta$, and equation (\ref{eq:Vlasov_vector}) can be shown to reduce to
\begin{eqnarray}
\frac{\partial f_s}{\partial t}+\boldsymbol{v}\cdot\nabla f_s+\frac{q_s}{m_s}\left(\boldsymbol{E}+\boldsymbol{v}\times\boldsymbol{B}\right)\cdot\frac{\partial f_s}{\partial \boldsymbol{v}}+\left[\frac{v_{\theta}^2}{r}\frac{\partial f_s}{\partial v_{r}}-\frac{v_{r}v_{\theta}}{r}\frac{\partial f_s}{\partial v_{\theta}}     \right]=0,\label{eq:Vlasov_cyl}
\end{eqnarray}
e.g. see \citet{Komarov-1962, Santini-1970} and \citet{Tasso-2007}. Note that the gradient operator in cylindrical coordinates is given by
\[ 
\nabla = \hat{r} \frac{\partial}{\partial r} + \hat{\theta}\frac{1}{r}\frac{\partial}{\partial \theta}+ \hat{z}\frac{\partial}{\partial z}  ,
\] 
such that the matrix representation of the metric tensor, $\underline{\underline{g}}=\text{Mat}[g_{ij}]$, is given by
\begin{eqnarray}
\underline{\underline{g}}=\left(\begin{matrix}
  1 & 0 & 0 \\
  0 & r^2 & 0 \\
  0 & 0 & 1
 \end{matrix}\right).\nonumber
\end{eqnarray}
The `fluid' equation of motion of a particular species $s$ is found by taking first-order velocity moments of the Vlasov equation. For the purposes of completeness and future reference the full first order moment-taking calculation is performed in Section \ref{subsec:eom}, since it is not easily found in the literature, to our knowledge. The result is that for an arbitrary DF that only depends spatially on $r$, the equation of motion can almost be written in a familiar form, as compared to the equation in Cartesian geometry \citep{Mynick-1979a, Greene-1993, Schindlerbook}, but with some `additional' terms. This is to be expected, given the form of equation (\ref{eq:Vlasov_cyl}).

\subsection{Equation of Motion in cylindrical geometry}\label{subsec:eom}
It will be useful to-rewrite the Vlasov equation from Equation (\ref{eq:Vlasov_cyl}) in index notation, in order to take the velocity moments. 
As such, the Vlasov equation can be written according to
\begin{equation}
\frac{\partial f_s}{\partial t}+v^{i}\nabla_if_s+\frac{q_s}{m_s}\left(E_i+\varepsilon_{ijk}v^jB^k\right)\nabla_{v_i}f_s+\left[\frac{(v^{\theta})^2}{r}\nabla_{v^r}f_s-\frac{v^rv^{\theta}}{r}\nabla_{v^\theta}f_s     \right]=0.\label{eq:vlasov2}
\end{equation}
The totally antisymmetric unit tensor of rank 3 (the Levi-Civita tensor) is $\varepsilon_{ijk}$, and it takes the value $0$ when any of its indices are repeated (e.g. $\varepsilon_{131}=0$), $+\sqrt{g}$ for an `ordered triplet' (e.g. $\varepsilon_{231}=\sqrt{g}$), and $-\sqrt{g}$ for a `disordered triplet' (e.g. $\varepsilon_{213}=-\sqrt{g}$). 
The first moment of the Vlasov equation (Equation (\ref{eq:vlasov2})), and multiplied by $m_s$, gives
\begin{eqnarray}
&&m_s\int \bigg\{\underbrace{v_i\frac{\partial f_s}{\partial t}}_{ \text{ A      }    }+\underbrace{v_iv_j\nabla^{j}f_s}_{\text{B}}+\frac{q_s}{m_s}\bigg(\underbrace{v_iE_j\nabla_{v_j}f_s}_{\text{C}}+\underbrace{v_i\epsilon_{jkl}v^kB^l\nabla_{v_j}f_s}_{\text{D}}\bigg)\nonumber\\
&&+\underbrace{v_i\frac{(v^{\theta})^2}{r}\nabla_{v^r}f_s}_{\text{E}}-\underbrace{v_i\frac{v^rv^{\theta}}{r}\nabla_{v^\theta}f_s}_{\text{F}}     \bigg\}d^3v=0,
\end{eqnarray}
with the triple integral written in shorthand by
\[
\int d^3v :=\int_{-\infty}^\infty\int_{-\infty}^\infty\int_{-\infty}^\infty dv_rdv_\theta dv_z.
\]

The first term, `A', gives $\partial/\partial t (\rho_sV_{js})$. Next, we notice that the spatial derivative in `B' can be taken outside of the integral. Then, if we write $v_i=V_{is}+w_{js}$, we see that by Leibniz' rule, for $\nabla$ a derivative
\begin{equation*}
\nabla\langle v_iv_j \rangle=\nabla(V_{is}V_{js})+\nabla \langle w_{is}w_{js}\rangle+\nabla ( V_{is}\langle w_{js}\rangle)+\nabla ( V_{js}\langle w_{is}\rangle),
\end{equation*}
with the angle brackets denoting an integral over velocity space (by definition $\langle w_i\rangle=0$). As a result, `B' becomes
\begin{equation*}  
\nabla^{j}(\rho_sV_{js}V_{is})+\nabla^jP_{ij}   .
\end{equation*}

We shall integrate terms `C-F' by parts and neglect surface terms, i.e. we assume that
\[
\lim_{|\boldsymbol{v}|\to\infty}\mathcal{G}(\boldsymbol{x},\boldsymbol{v},t)f_s(\boldsymbol{x},\boldsymbol{v},t)=0,
\]
for $\mathcal{G}$ representing the different variables multiplying the DF in terms `C-F'. As a result `C' and `D' become $-\sigma_sE_i$ and $-\sigma_s\epsilon_{ijk}V^j_{s}B^k$ respectively. If again, we rewrite $v_i=V_{is}+w_{is}$, and use Leibniz' rule, `E' becomes
\begin{equation*}  
-\frac{\delta_{ir}}{r}\rho_sV_{\theta s}^2-\frac{\delta_{ir}}{r}P_{\theta \theta,s}       ,
\end{equation*}
with $\delta_{ij}$ the Kronecker delta. Similarly, `F' becomes
\begin{equation*}  
\frac{1}{r}\pi_{ir,s}+\frac{\delta_{i\theta}}{r}\pi_{r\theta ,s}      ,
\end{equation*}
for $\pi_{ij,s}=m_s\int v_iv_jf_sd^3v$. Putting this all together gives
\begin{eqnarray}
&&\rho_s\frac{\partial V_{is}}{\partial t}+\nabla^j P_{ij,s}+\nabla^j(\rho_sV_{js}V_{is})-\sigma_sE_i-\sigma_s\epsilon_{ijk}V^j_{s}B^k\nonumber\\
&&-\rho_s(V_{\theta s})^2\frac{\delta_{ir}}{r}-\frac{\delta_{ir}}{r}P_{\theta\theta,s}+\frac{1}{r}\pi_{ir,s}+\frac{\delta_{i\theta}}{r}\pi_{r\theta ,s}=0.\label{eq:gen_eq_motion}
\end{eqnarray}
Taking the $r$-component, in equilibrium ($\partial /\partial t=0$), assuming a 1D configuration with only radial dependence ($\partial /\partial \theta=\partial /\partial z=0)$, letting $f_s$ be an even function of $v_r$ ($V_{rs}=P_{r\theta}=P_{zr}=0$), and noticing that $\pi_{rr,s}=\rho_sV_{rs}^2+P_{rr,s}=P_{rr,s}$ gives
\begin{equation*}
\frac{\partial P_{rr,s}}{\partial r}+\frac{1}{r}(P_{rr,s}-P_{\theta\theta,s})=\sigma_s(\boldsymbol{E}+\boldsymbol{V}_s\times\boldsymbol{B})_r+\rho_s\frac{V_{\theta s}^2}{r}.
\end{equation*}
We now consider the general expression for the $r$ component of the divergence of a rank-2 tensor in cylindrical coordinates \citep{NRL}
\begin{equation}
(\nabla\cdot\boldsymbol{T})_r= \frac{1}{r}\frac{\partial}{\partial r}\left(rT_{rr}\right)+\frac{1}{r}\frac{\partial T_{\theta r}}{\partial \theta}+\frac{\partial T_{z r}}{\partial z}-\frac{T_{\theta\theta}}{r}.                     \label{eq:divergence}
\end{equation}
Since the $P_{r\theta}$ and $P_{zr}$ terms of the pressure tensor are zero, this becomes
\begin{equation}
(\nabla\cdot\boldsymbol{P})_r= \frac{1}{r}\frac{\partial}{\partial r}\left(rP_{rr}\right)-\frac{P_{\theta\theta}}{r},  \label{eq:pressurediv}            
 \end{equation}
and so force balance for species $s$ is maintained - in equilibrium ($\partial /\partial t=0$), assuming a 1D configuration with only radial dependence ($\partial /\partial \theta=\partial /\partial z=0)$, and letting $f_s$ be an even function of the radial velocity $v_r$  -  according to
\begin{equation}
(\nabla\cdot\boldsymbol{P}_s)_r=(\boldsymbol{j}_s\times\boldsymbol{B})_r+\sigma_s E_r+\frac{\rho_s}{r}V_{\theta s}^2.\label{eq:species}
\end{equation}
Equation (\ref{eq:species}) can be summed over species to give
\begin{equation}
(\nabla\cdot\boldsymbol{P})_r+\boldsymbol{\mathcal{F}}_{\text{c}}=(\boldsymbol{j}\times\boldsymbol{B})_r+\sigma \boldsymbol{E}, \label{eq:CMHDE_alt}
\end{equation}
where
\begin{equation*}
\boldsymbol{\mathcal{F}}_{\text{c}}=\sum_s\boldsymbol{\mathcal{F}}_{\text{c},s}=-\frac{1}{r}\left(\rho_iV_{\theta i}^2+\rho_eV_{\theta e}^2\right)\hat{\boldsymbol{e}}_r
\end{equation*}
is the force density associated with the rotating bulk flows of the ions and electrons, and is in fact a centripetal force. Equation (\ref{eq:CMHDE_alt}) is a cylindrical analogue of the force balance equation in Cartesian geometry (e.g. see \cite{Mynick-1979a}). However, in the cylindrical case there are extra terms due to centripetal forces. Note that in a non-inertial frame that is co-moving with the respective species bulk flows, the species $s$ will also feel a fictitious force equal to $-\boldsymbol{\mathcal{F}}_{\text{c},s}$ (as well as any other forces), and this is known as the centrifugal force. 

From the point of view of a particular magnetic field $\boldsymbol{B}$ (which is the point we take by specifying a particular macroscopic equilibrium), we see that equilibrium is maintained by a combination of density/pressure variations as in the case of Cartesian geometry, but with additional contributions from centripetal forces and as an inevitable result of the resultant charge separation, an electric field. This effect is represented in Figure \ref{fig:E_inout}, with Figure \ref{fig:E_in} depicting the case for $E_r<0$, such that $-\boldsymbol{\mathcal{F}}_{ci}>-\boldsymbol{\mathcal{F}}_{ce}$. Whereas Figure \ref{fig:E_out} depicts the case for $E_r>0$, such that $-\boldsymbol{\mathcal{F}}_{ce}>-\boldsymbol{\mathcal{F}}_{ci}$. This demonstrates that `sourcing' an exactly force-free macroscopic equilibrium with an equilibrium DF in a 1D cylindrical geometry is inherently a more difficult task than in the Cartesian case. The presence of `extra' centripetal forces, and almost inevitably forces associated with charge separation, raises the question of whether exactly force-free ($\boldsymbol{j}\times\boldsymbol{B}=\boldsymbol{0}$) equilibria are possible at all in this geometry.

\begin{figure}
    \centering
    \begin{subfigure}[b]{0.7\textwidth}
        \includegraphics[width=\textwidth]{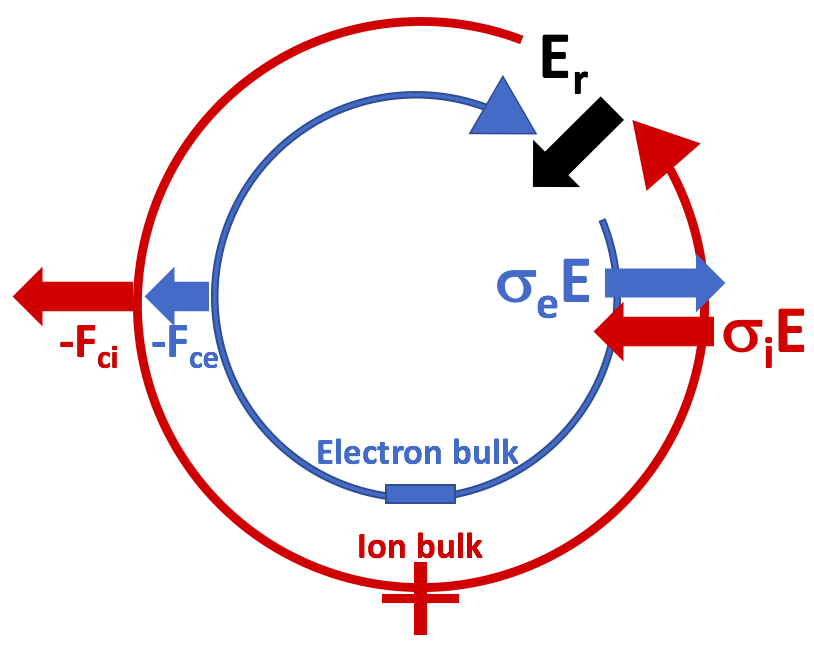}
        \caption{\small Force balance with $E_r<0$ and $ -\boldsymbol{\mathcal{F}}_{ci}>-\boldsymbol{\mathcal{F}}_{ce}$}
        \label{fig:E_in}
    \end{subfigure}
       \begin{subfigure}[b]{0.7\textwidth}
        \includegraphics[width=\textwidth]{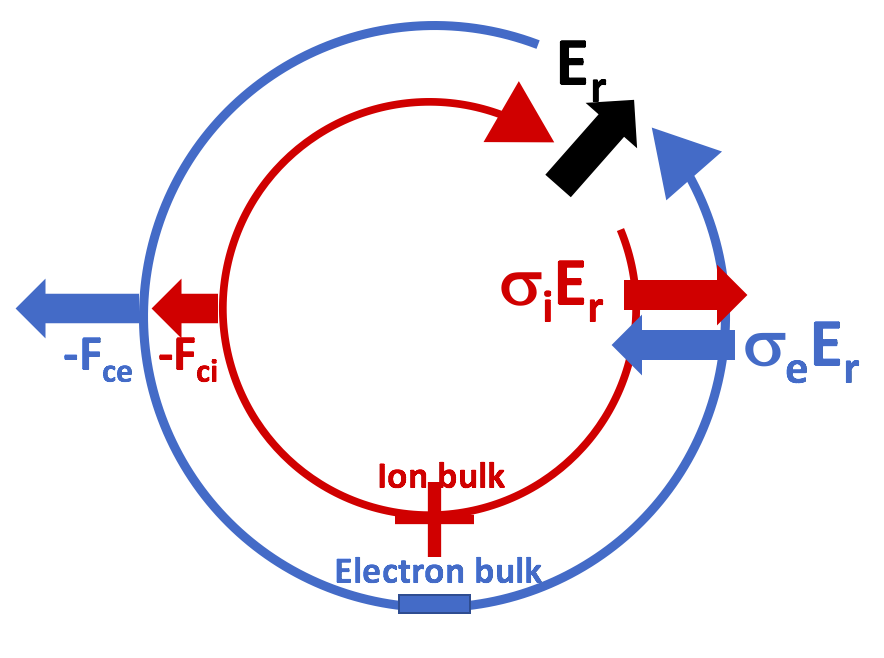}
        \caption{\small Force balance with $E_r>0$ and $-\boldsymbol{\mathcal{F}}_{ce}>-\boldsymbol{\mathcal{F}}_{ci}$}
        \label{fig:E_out}
    \end{subfigure}
    \caption{\small  A schematic representation of how, in force balance, the electric field, $E_r$ exists in order to balance the `charge separation' effect caused by the forces associated with the ion and electron rotational bulk flows, $\boldsymbol{\mathcal{F}}_{ci}$   and $\boldsymbol{\mathcal{F}}_{ce}$ respectively. Figure \ref{fig:E_in} depicts the case for $E_r<0$, such that $-\boldsymbol{\mathcal{F}}_{ci}>-\boldsymbol{\mathcal{F}}_{ce}$, whilst Figure \ref{fig:E_out} depicts the case for $E_r>0$, such that $-\boldsymbol{\mathcal{F}}_{ce}>-\boldsymbol{\mathcal{F}}_{ci}$. }\label{fig:E_inout}
\end{figure}

Before proceeding, we comment that given certain macroscopic constraints on the electromagnetic fields or fluid quantities - such as the force-free condition, or a specific given magnetic field (for example) - it is not \emph{a priori} known how to calculate a self-consistent Vlasov equilibrium, or if one even exists within the framework of the assumptions made. Hence one has to proceed more or less on a case by case basis, with the intention of achieving consistency with the required macroscopic conditions, upon taking moments of the DF.

\subsection{The Gold-Hoyle (GH) magnetic field}
The GH magnetic field \citep{Gold-1960} is a 1D ($\partial/\partial\theta=\partial/\partial z=0$), nonlinear force-free ($\nabla\times\boldsymbol{B}=\alpha(r)\boldsymbol{B}$) and uniformly twisted flux-tube model, with 
\begin{eqnarray}
\boldsymbol{A}_{GH}(\tilde{r})&=&\frac{B_0}{2\tau}\left(0,\frac{1}{\tilde{r}}\ln\left(1+\tilde{r}^2\right),-\ln\left(1+\tilde{r}^2\right)\right),\nonumber\\
\boldsymbol{B}_{GH}(\tilde{r})&=&B_0\left(0,\frac{\tilde{r}}{1+\tilde{r}^2},\frac{1}{1+\tilde{r}^2}\right),\nonumber\\
\boldsymbol{j}_{GH}(\tilde{r})&=&2\frac{\tau B_0}{\mu_0}\left(0,\frac{\tilde{r}}{(1+\tilde{r}^2)^2},\frac{1}{(1+\tilde{r}^2)^2}\right),\nonumber\\
\boldsymbol{j}_{GH}(\boldsymbol{A},\tilde{r})&=&2\frac{\tau B_0}{\mu_0}\left(0,\tilde{r}e^{-\frac{4\tau}{B_0}\tilde{r}A_{\theta}},e^{\frac{4\tau}{B_0}A_z}\right),\label{eq:GHjofA}
\end{eqnarray}
The constant $\tau$ has units of inverse length, and we use $1/\tau$ to represent the characteristic length scale of the system ($\tilde{r}=\tau r$). The parameter $B_0$ gives the magnitude of the magnetic field at $\tilde{r}=0$.  Note that the representation of $\boldsymbol{j}_{GH}(\boldsymbol{A})$ chosen in Equation (\ref{eq:GHjofA}) is representative and non-unique. In fact there are other possible representations, that include `mixtures' of $A_{\theta}$ and $A_z$ in each component of the current density. 

Furthermore, $\tau$ is a direct measure of the `twist' of the embedded flux tube (see \cite{Birn-2007}), with the number of turns per unit length (in $z$) along a field line given by $\tau/(2\pi)$ \citep{Gold-1960}. A diagram representing the qualitative interior structure of such a flux tube is given in Figure \ref{fig:interior}, and reproduced from \citet{Russell-1979} (their magnetic field was in fact not quite uniformly twisted, but close enough that the diagram still serves a purpose). The most important feature to note is how the $B_{z}$ component of the field dominates at small radii, whereas the $B_{\theta}$ component dominates for larger radii. This characteristic ensures that you travel the same distance in $z$, for each $2\pi$ revolution, regardless of how far from the central axis you are ($d\theta/dz=\text{const.}$). The force-free parameter for the magnetic field is 
\[
\alpha (r)=\frac{\nabla\times\boldsymbol{B}\cdot\boldsymbol{B}}{|\boldsymbol{B}|^{2}}=\frac{2\tau}{1+\tilde{r}^2}.
\]
Should one wish to consider the GH field in an MHD context ($\nabla p =\boldsymbol{j}\times\boldsymbol{B}=\boldsymbol{0}$) then the scalar pressure $p=\text{const.}$. This is seen by considering the 1D force-balance equation \citep{Freidbergbook},  
\begin{equation}
\boldsymbol{j}\times\boldsymbol{B}=\nabla p \iff \frac{d}{dr}\left(p+\frac{B_{\theta}^2}{2\mu_0}+\frac{B_{z}^2}{2\mu_0}\right)+\frac{B_{\theta}^2}{\mu_0r}=0,\nonumber
\end{equation}
for the GH field.
\begin{figure}
    \centering
        \includegraphics[width=0.7\textwidth]{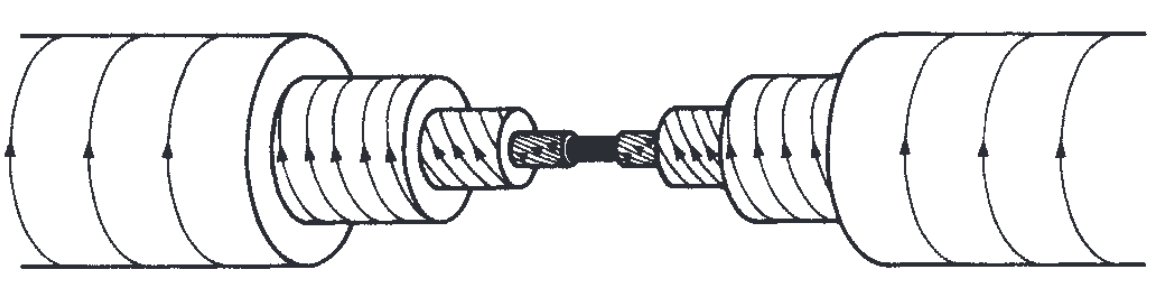}
    \caption{{\small The interior structure of a flux tube, from \citet{Russell-1979}, and similar to the GH model. {\bf Image Copyright:} Nature Publishing Group. Reprinted by permission from Macmillan Publishers Ltd:      \href{http://www.nature.com}{\emph{Nature}} {\bf 279} (June 1979), pp. 616-618., copyright (1979). }}\label{fig:interior}
\end{figure}

\subsection{Methods for calculating an equilibrium DF}
In \citet{Channell-1976, Harrison-2009PRL} for example, a method used to calculate a DF, given a prescribed 1D magnetic field was Inverse Fourier Transforms (IFT). This method was also discussed in Section \ref{sec:ftransform}. A DF of the form 
\begin{equation}
f_s\propto e^{-\beta_sH_s}g_s(p_{xs},p_{ys}),
\end{equation}
was used, with $H_{s}$, $p_{xs}$ and $p_{ys}$ the conserved particle Hamiltonian and canonical momenta in the $x$ and $y$ directions, and $g_s$ an unknown function, to be determined. Since our problem is one of a 1D equilibrium with variation in the radial direction, the three constants of motion are the Hamiltonian, and the canonical momenta in the $\theta$ and $z$ directions:
\begin{eqnarray}
&&H_s=\frac{m_s}{2}\left(v_{r}^2+v_{\theta}^2+v_{z}^2\right)+q_s\phi,\nonumber\\
&&p_{\theta s}=r\left(m_sv_{\theta}+q_sA_{\theta}\right),\hspace{3mm} p_{zs}=m_sv_{z}+q_sA_z.
\end{eqnarray}
One can try to calculate an equilibrium distribution for the GH force-free flux tube without a background field by a similar method, assuming a DF of the form
\begin{equation}
f_s\propto e^{-\beta_sH_s}g_s(p_{\theta s},p_{zs}). \label{eq:fansatz}
\end{equation}
By exploiting the convolution in the definition of the current density, 
\begin{eqnarray}
\boldsymbol{j}(\boldsymbol{A},r)&=&\sum_sq_s\int \,\boldsymbol{v}\,f_s(H_s,p_{\theta s},p_{zs})\,d^3v,\nonumber\\
&=&r\sum_s \frac{q_s}{m_s^4}\int \,(\boldsymbol{\mathfrak{p}}_s-q_s\boldsymbol{A})\, f_s(H_s,r\mathfrak{p}_{\theta s},\mathfrak{p}_{z s})\,d^3\mathfrak{p}_s,\nonumber
\end{eqnarray}
Amp\`{e}re's law can be solved formally by IFT (\emph{cf.} \cite{Harrison-2009PRL} and Section \ref{sec:ftransform}), or informally by `inspection' (\emph{cf.} \cite{Neukirch-2009}), with the quantity $\boldsymbol{\mathfrak{p}}_s$ defined by
\begin{equation}
\mathfrak{p}_{rs}=p_{rs},\hspace{3mm}\mathfrak{p}_{\theta s}=\frac{p_{\theta s}}{r},\hspace{3mm}\mathfrak{p}_{zs}=p_{zs}.\nonumber
\end{equation}
Notice how when written in this integral form, $\boldsymbol{j}$ is not only a function of $\boldsymbol{A}$, but - in contrast with the Cartesian case - also of the relevant spatial co-ordinate, $r$.

\subsubsection{Problems with equilibrium DFs for the GH field}
We shall now reproduce the calculations, representatively, for the $j_{\theta}$ case. These calculations are representative in that the choice of expression for the current density as a function of the vector potential is non-unique, as indicated previously. However, this calculation should demonstrate the inherent obstacle in calculating a Vlasov equilibrium DF for the GH field. 

The definition of the current density, along with the ansatz of Equation (\ref{eq:fansatz}) gives
\begin{eqnarray}
j_{\theta}=r\sum_s\frac{q_s}{m_s^4}\frac{n_{0s}}{(\sqrt{2\pi}v_{\text{th},s})^3}e^{-\beta_sq_s\phi}\int (\mathfrak{p}_{\theta s}-q_sA_{\theta}) e^{-(\boldsymbol{\mathfrak{p}}_s-q_s\boldsymbol{A})^2/(2m_s^2v_{\text{th},s}^2)}g_s(r\mathfrak{p}_{\theta s},\mathfrak{p}_{z s})d^3\mathfrak{p}_s.\nonumber
\end{eqnarray}
If we now take a representative (i.e. one possible) expression for the current density, chosen as a more `general' form than that in Equation (\ref{eq:GHjofA}),
\[j_{\theta}=c_1\frac{\tau^2 B_0 r}{\mu_0}\exp \left(\frac{c_2\tau^2 rA_{\theta }}{B_0}+\frac{c_3\tau A_{z}}{B_0} \right),\] 
for $c_{1}, c_{2}$ and $c_{3}$ constants, and re-write $\mathfrak{p}_{\theta s}=p_{\theta s}/r$, then we obtain    
\begin{eqnarray}
&&c_1\frac{\tau^2 B_0 r}{\mu_0}\exp\left(\frac{c_2\tau^2 rA_{\theta }}{B_0}+\frac{c_3\tau A_{z}}{B_0} \right)=\sum_s\frac{q_s}{m_s^3}\frac{n_{0s}}{(\sqrt{2\pi}v_{\text{th},s})^2}e^{-\beta_sq_s\phi}\times\nonumber\\
&&\int (p_{\theta s}-q_srA_{\theta})e^{-(p_{\theta s}-q_srA_{\theta})^2/(2m_s^2v_{\text{th},s}^2r^2)-(p_{z s}-q_sA_{z})^2/(2m_s^2v_{\text{th},s}^2)}g_s(p_{\theta s},p_{z s})dp_{\theta s}dp_{z s}.\nonumber
\end{eqnarray}

In the case of zero scalar potential, the result of the calculation is to give a $g_s$ function (and hence a DF) that is not a solution of the Vlasov equation as it is not a function of the constants of motion only. In essence, an additional ``$\exp (-r^2)$'' factor would be required in the DF to counter ``$\exp (+r^2)$'' terms that manifest by completing the square in the integration. That is to say that the `solution' would be of the form 
\[
g_{s}(p_{\theta s},p_{zs})=g_0   \exp\left(-\frac{\omega_s^2}{2\tau^2v_{\text{th},s}^2}\delta_s^2\tau^2r^2\right)   \exp\left(  \frac{\omega_s}{\tau v_{\text{th},s}}\frac{\tau^2 p_{\theta s}}{q_sB_0} + \frac{V}{v_{\text{th},s}}\frac{\tau p_{zs}}{q_sB_0} \right),\nonumber\\
\]
and hence the DF can be written as
\begin{equation}
f_s\propto g_0\exp\left(-\frac{\omega_s^2}{2\tau^2v_{\text{th},s}^2}\delta_s^2\tau^2r^2\right)e^{-\beta_sH_s}\exp\left(  \frac{\omega_s}{\tau v_{\text{th},s}}\frac{\tau^2 p_{\theta s}}{q_sB_0} + \frac{V}{v_{\text{th},s}}\frac{\tau p_{zs}}{q_sB_0} \right)\label{eq:GHDF},
\end{equation}
for some $g_0, \omega_s$ and $V$ related to $c_1,c_2$ and $c_3$ respectively. The ratio of the thermal Larmor radius, $r_L=m_sv_{\text{th},s}/(e|B|)$ (for $e=|q_s|$) to the macroscopic length scale of the system $L(=1/\tau)$, is given by
\begin{equation*}
\delta_s(r)=\frac{r_L}{L}=\frac{m_sv_{\text{th},s}\tau}{eB(r)},
\end{equation*}    
typically known as the `magnetisation parameter' \citep{Fitzpatrickbook} (see Table \ref{tab:param} for a concise list of the micro and macroscopic parameters of the equilibrium). Note that in our system, the magnitude of the magnetic field and hence $\delta_{s}$ itself is spatially variable. For the purposes of the calculations in this chapter however, we set
\begin{equation*}
\frac{m_sv_{\text{th},s}\tau}{eB_0}=\delta_s={\rm const.}
\end{equation*}
as a characteristic value.

The DF in Equation (\ref{eq:GHDF}) is not a solution of the Vlasov equation, but would approximate one in the limit 
\[\frac{\omega_s}{\tau v_{\text{th},s}}\delta_s=\frac{\omega_s }{q_s B_0/m_s}\to 0,\]
i.e. the vanishing ratio of the bulk angular frequency to the gyrofrequency of the individual particles (\emph{cf.} \cite{Vinogradov-2016} and more on this later). It is now apparent that the physical cause for the extra ``$\exp (+r^2)$'' term here would appear to be the forces associated with the rotational bulk flow, since the term is non-negligible when $\omega_s$ is of a sufficient magnitude.
 
 \begin{table*}
 \centering
  \caption{\small The fundamental parameters of the equilibrium.\\ The $s$ subscript refers to particles of species $s$.}
  \begin{center}
  \begin{tabular}{cc}
  \hline
Macroscopic& \\
parameter&Meaning\\
\hline
$B_0$&Characteristic magnetic field strength\\
$\tau$&Measure of the twist of flux tube\\
$k$&Strength of the background field\\
$\gamma_1\ne 0,1$, $0<\gamma_2 <1$&Gauge for scalar potential\\
${U}_{zs}, {V}_{zs}$&Bulk rectilinear flows\\
$ \omega_{s}$&Bulk angular frequency\\
\hline
Microscopic& \\
parameter&Meaning\\
\hline
$m_s$& Mass of particle\\
$q_s$, $e$& Charge, magnitude of charge\\
$\beta_s=1/(k_BT_s)$&  Thermal beta\\
$v_{\text{th},s}$&  Thermal velocity\\
$\delta_s(r), \delta_s$&   Magnetisation parameters\\
$n_{0s}$&Normalisation of particle number\\

\hline
\label{tab:param}
\end{tabular}
\end{center}
\end{table*}

If one assumes a non-zero scalar potential, then the above considerations would seem to imply that 
\[
-\beta_iq_i\phi=-\beta_eq_e\phi=-\frac{\omega_s^2}{2\tau^2v_{\text{th},s}^2}\delta_s^2\tau^2r^2,
\]
for there to be an exact Vlasov solution. This equation cannot be satisfied. The physical cause seems to be that, in the case of force-free fields, one would require a `different' electrostatic potential to balance the forces for the ions and electrons, which is of course nonsensical. Thus, our investigation seems to suggest that it is not possible to calculate a DF of the form of Equation (\ref{eq:fansatz}) for the exact GH field.

\subsection{GH flux tube plus background field (GH+B)}\label{subsec:magfield}
To make progress, we introduce a background field in the negative $z$ direction. The mathematical motivation for this change is to balance the `$\exp (r^2)$ problem'. Physically, it seems that the background field introduces an extra term (whose sign depends on species) into the force-balance, to allow for both the ion and electrons to be in force balance simultaneously, given one unique expression for the scalar potential.

The vector potential, magnetic field and current density used are as follows:
\begin{eqnarray}
\boldsymbol{A}_{GH+B}(\tilde{r})&=&\frac{B_0}{2\tau}\left(0,\frac{1}{\tilde{r}}\ln\left(1+\tilde{r}^2\right)-2k\tilde{r},-\ln\left(1+\tilde{r}^2\right)\right),\nonumber\\
&=&\boldsymbol{A}_{GH}-\left( 0,B_0k\tau^{-1}\tilde{r},0\right).\label{eq:vecfield}\\
\boldsymbol{B}_{GH+B}(\tilde{r})&=&B_0\left(0,\frac{\tilde{r}}{1+\tilde{r}^2},\frac{1}{1+\tilde{r}^2}-2k\right),\nonumber\\
&=&\boldsymbol{B}_{GH}-(0,0,2kB_0)\label{eq:magfield}.\\
\boldsymbol{j}_{GH+B}(\tilde{r})&=&2\frac{\tau B_0}{\mu_0}\left(0,\frac{\tilde{r}}{(1+\tilde{r}^2)^2},\frac{1}{(1+\tilde{r}^2)^2}\right),\nonumber\\
&=&\boldsymbol{j}_{GH}.\label{eq:current}
\end{eqnarray}
The dimensionless constant $k>0$ controls the strength of the background field in the $z$ direction, and as a result there are now two different interpretations to be made. We could either consider the system as a GH flux tube of uniform twist embedded in an untwisted uniform background field, or consider the whole GH+B magnetic field as a non-uniformly twisted flux tube. 

In the first interpretation, $\tau$ is (as aforementioned) a direct measure of the `twist' of the embedded flux tube (see \cite{Birn-2007}), with the number of turns per unit length (in $z$) along a field line given by $\tau/(2\pi)$ \citep{Gold-1960}. In the second interpretation, we see that the system is not uniformly twisted, with the $z$ distance traversed when following a field line (e.g. \cite{Marshbook}), given by 
\begin{equation*}
\int \frac{rB_z}{B_{\theta}}d\theta=\frac{1}{\tau}\left(1-2k(1+\tilde{r}^2)\right)\int d\theta.
\end{equation*}
The fact that this depends on $r$ demonstrates that the system as a whole has non-uniform twist. The number of turns per unit length in $z$ of the GH+B field: the `twist' is given by
\begin{equation*}
\left( \int_{\theta=0}^{\theta=2\pi} \frac{rB_z}{B_{\theta}}d\theta   \right)^{-1}=\frac{\tau}{2\pi}\left(1-2k(1+\tilde{r}^2)\right)^{-1},
\end{equation*}
and is plotted in Figure \ref{fig:twist} for three values of $k$. Since $k<1/2$ corresponds to the field-reversal regime, we see a mixture of positive and negative twists (Figure \ref{fig:twist1}). However, for $k\ge 1/2$ we see only negative values of the twist (Figures \ref{fig:twist2} and \ref{fig:twist3}), i.e. we travel in the negative $z$ direction as we wind round the GH+B flux tube in the anti-clockwise direction.
\begin{figure}
    \centering
    \begin{subfigure}[b]{0.6\textwidth}
        \includegraphics[width=\textwidth]{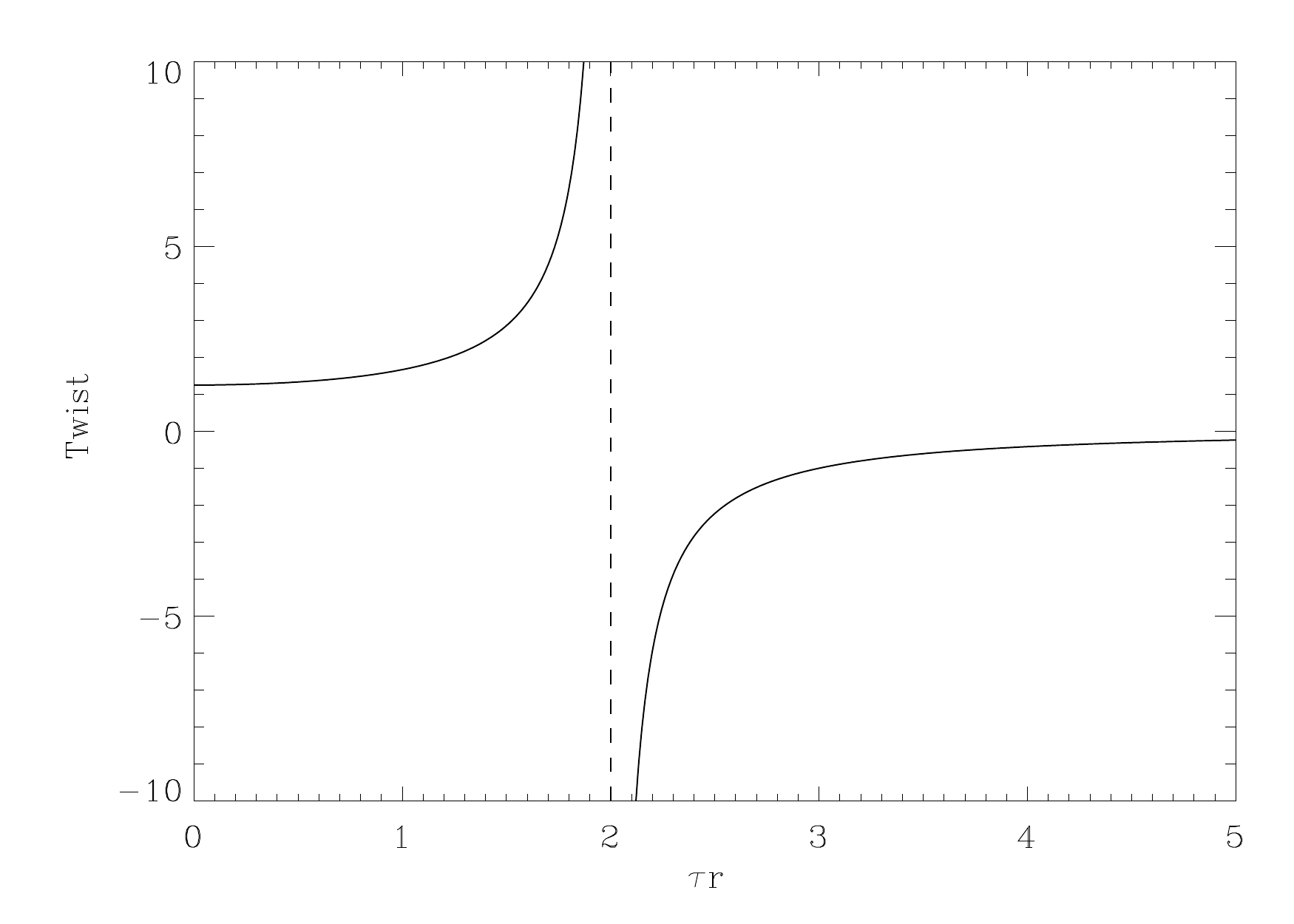}
        \caption{\small $k=0.1$}
        \label{fig:twist1}
    \end{subfigure}
       \begin{subfigure}[b]{0.6\textwidth}
        \includegraphics[width=\textwidth]{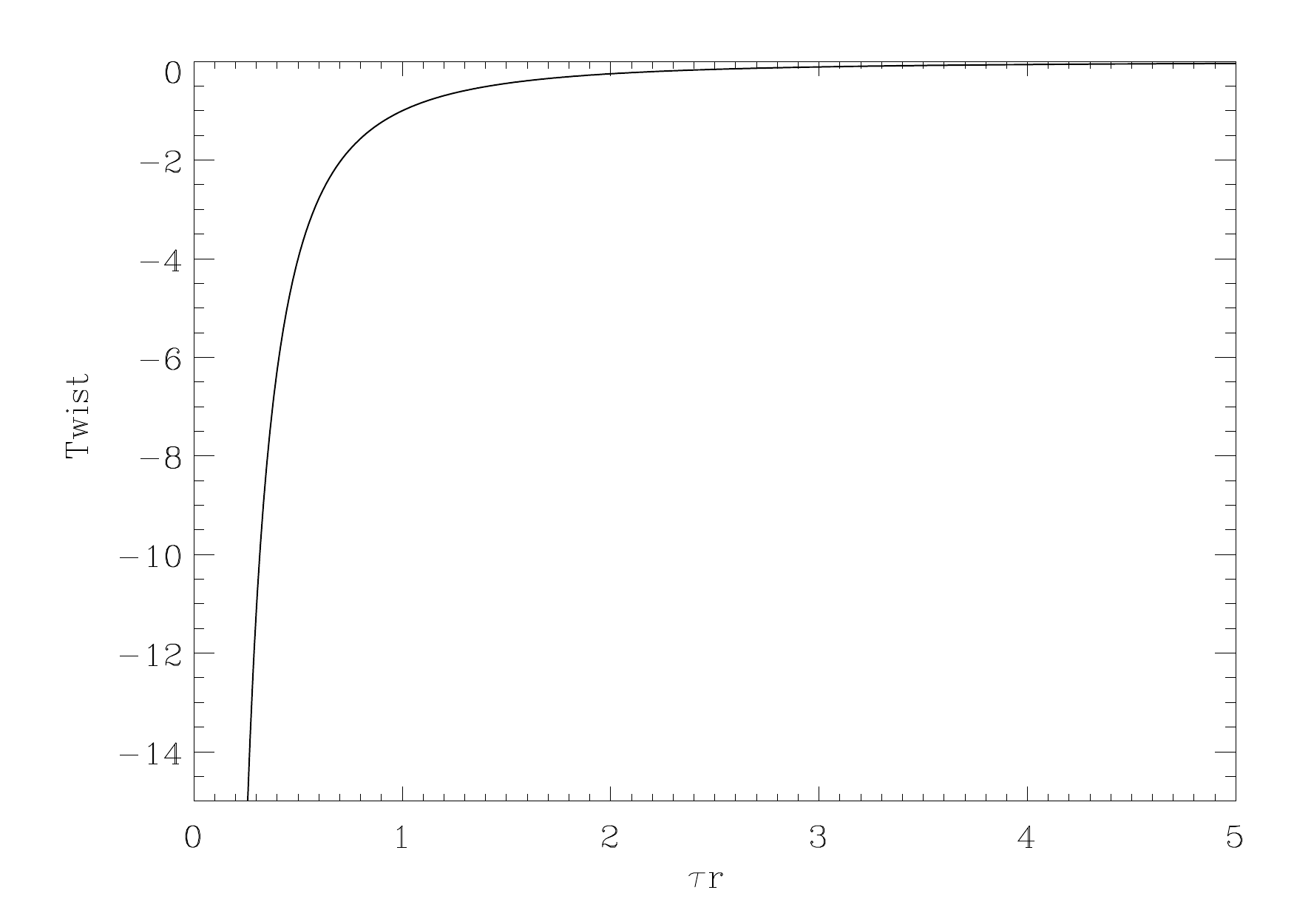}
        \caption{\small $k=0.5$}
        \label{fig:twist2}
    \end{subfigure}
        \begin{subfigure}[b]{0.6\textwidth}
        \includegraphics[width=\textwidth]{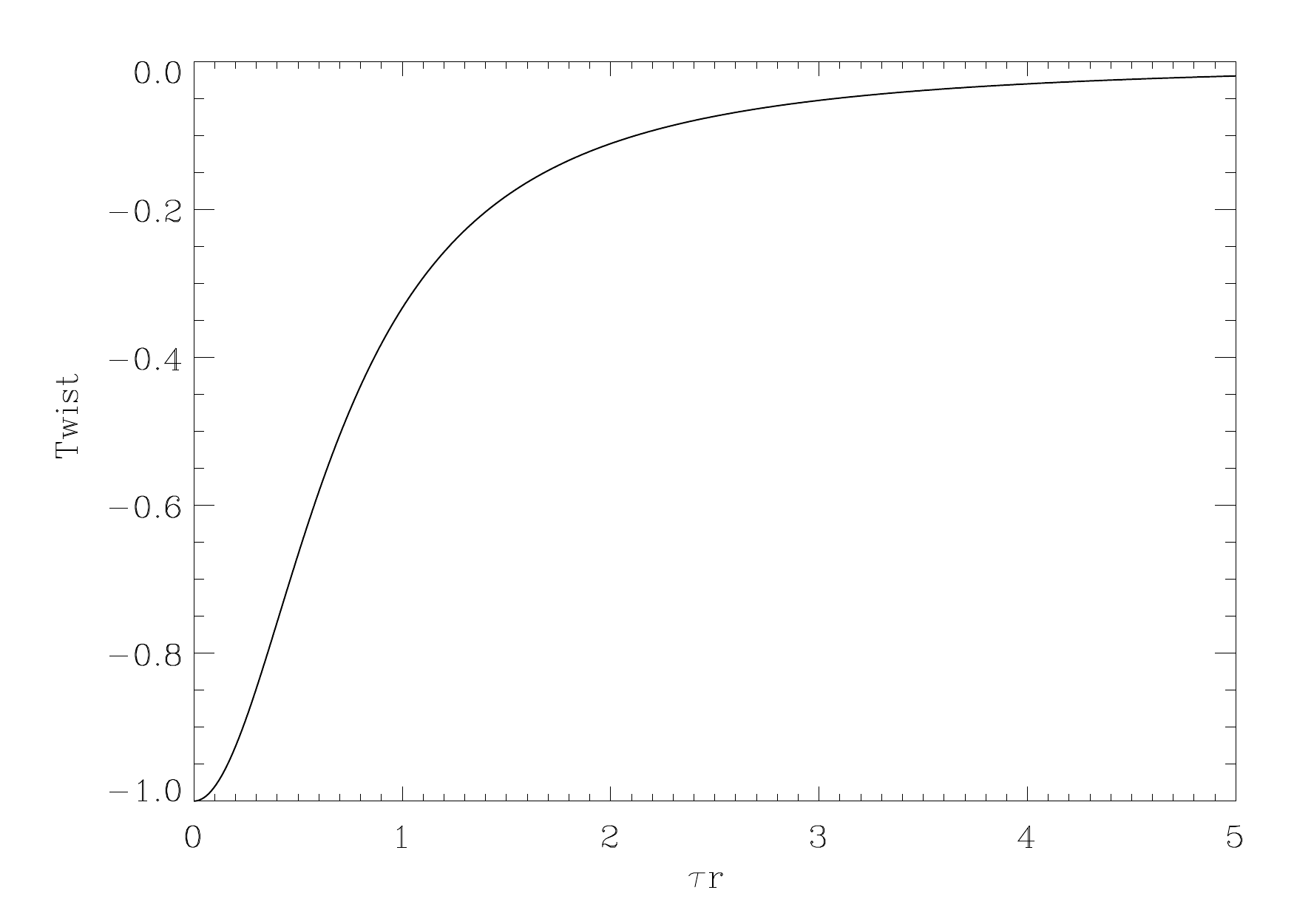}
        \caption{\small $k=1$}
        \label{fig:twist3}
    \end{subfigure}
    \caption{\small The twist (normalised by $\tau/(2\pi)$) of the GH+B field for three values of $k$. \ref{fig:twist1} shows the twist for $k<1/2$, and as such there are both negative and positive twists, due to the field reversal. \ref{fig:twist2} and \ref{fig:twist3} both show negative twist, since there is no magnetic field reversal. }\label{fig:twist}
\end{figure}
The magnetic field is plotted in Figures \ref{fig:0.3}-\ref{fig:0.5} for two values of $k$.  The $k=0.3$ case contains a reversal of the $\tilde{B}_z$ field direction and as such is akin to a Reversed Field Pinch (e.g. see \cite{Escande-2015} for a laboratory interpretation): this configuration may be of use in the study of astrophysical jets, see \citet{Li-2006} for example. The value $k=1/2$ gives zero $\tilde{B}_z$ at $\tilde{r}=0$, and as such is the value that distinguishes the two different classes of field configuration, namely unidirectional ($k\ge 1/2$) or including field reversal ($k<1/2$). The value of $\tilde{r}$ for which the $\tilde{B}_z$ field reverses is plotted in Figure \ref{fig:rev}. The magnitude of the GH+B magnetic field is plotted in Figure \ref{fig:bmag} for three values of $k$. For all values of $k$, $|\tilde{\boldsymbol{B}}|\to 2k$ for large $\tilde{r}$, i.e. to a potential field. 
\begin{figure}
    \centering
    \begin{subfigure}[b]{0.45\textwidth}
        \includegraphics[width=\textwidth]{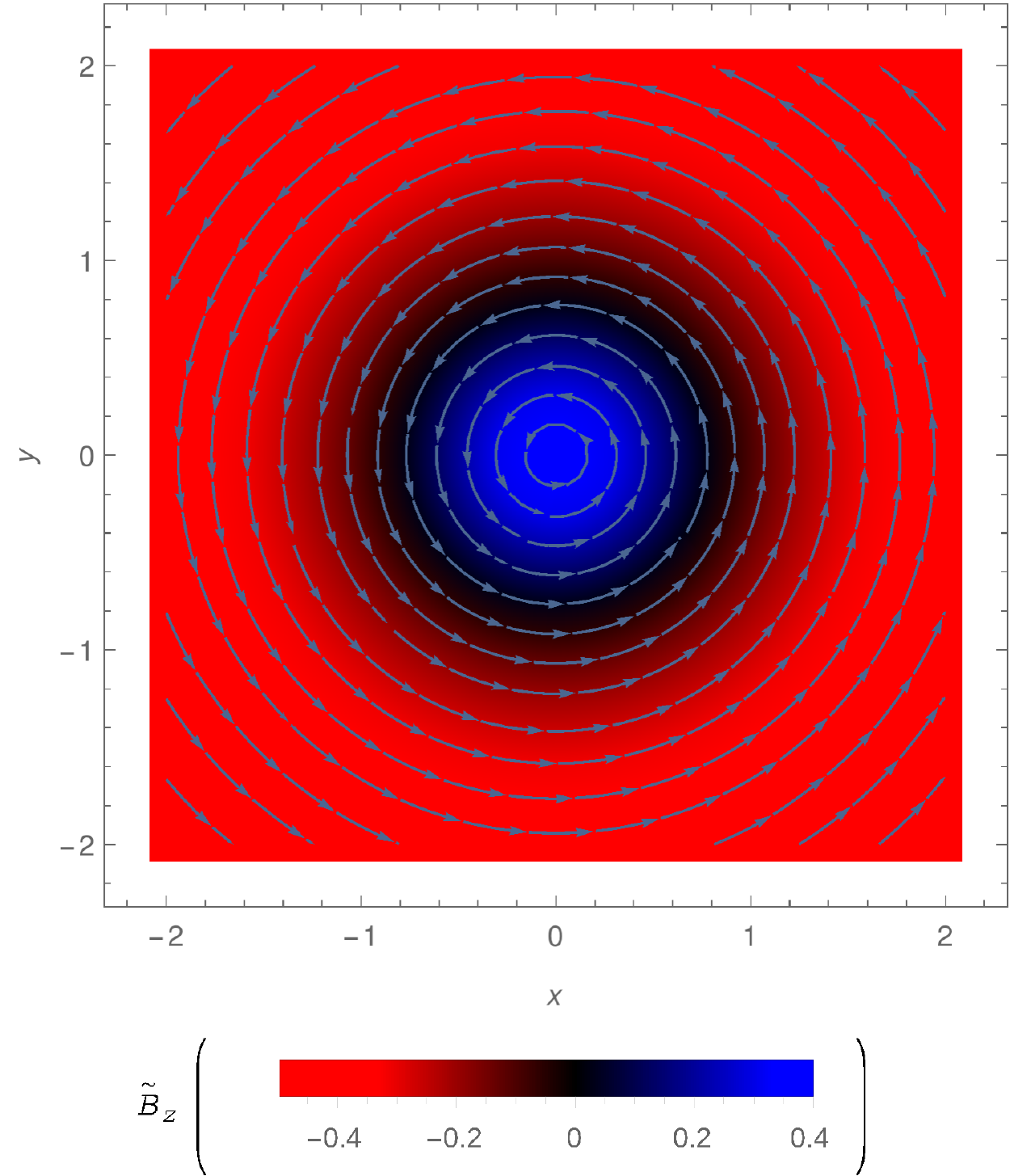}
        \caption{\small $\boldsymbol{B}$ for $k=0.3$}
        \label{fig:0.3}
    \end{subfigure}
       \begin{subfigure}[b]{0.45\textwidth}
        \includegraphics[width=\textwidth]{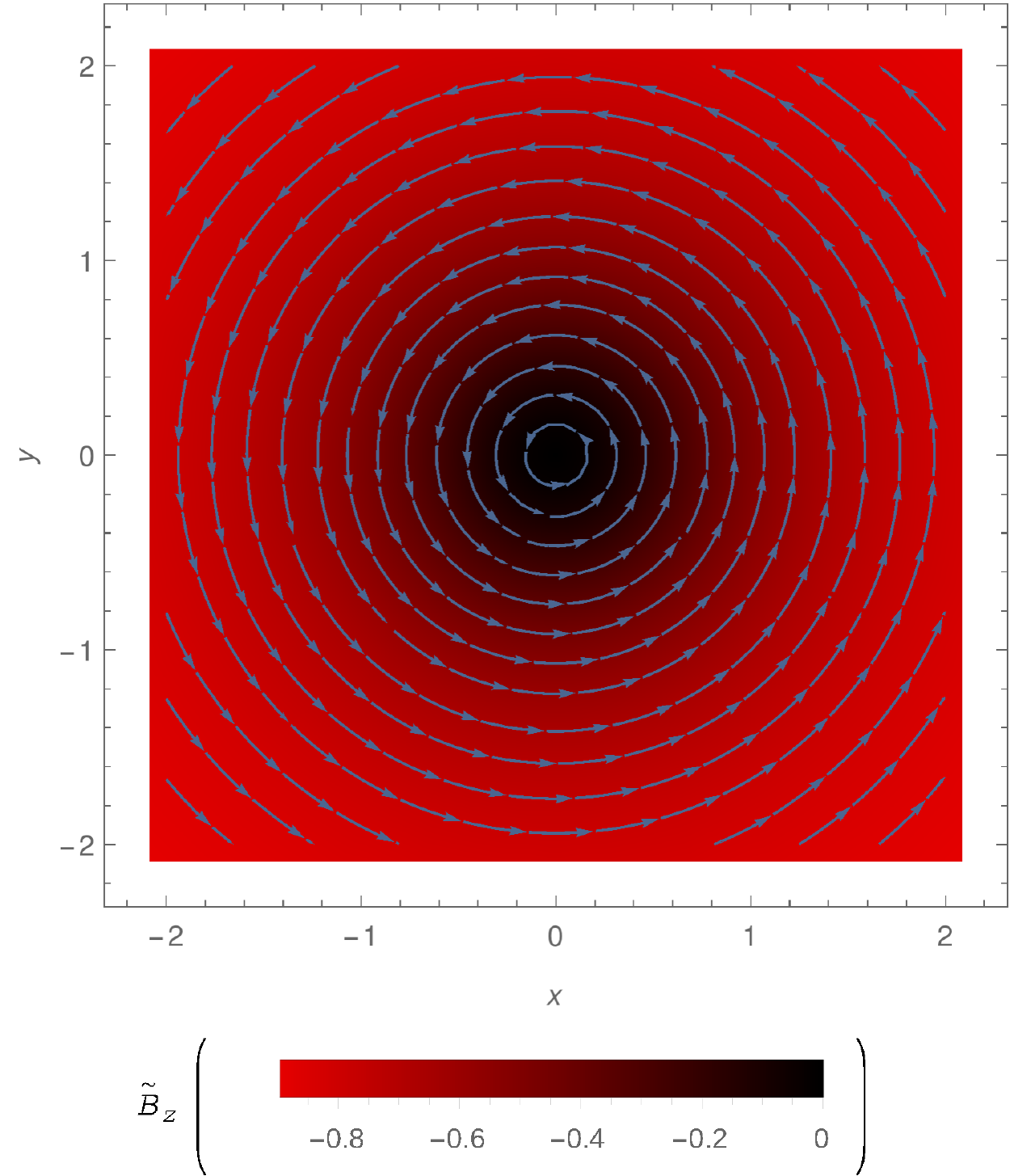}
        \caption{\small $\boldsymbol{B}$ for $k=0.5$}
        \label{fig:0.5}
    \end{subfigure}
        \begin{subfigure}[b]{0.7\textwidth}
        \includegraphics[width=\textwidth]{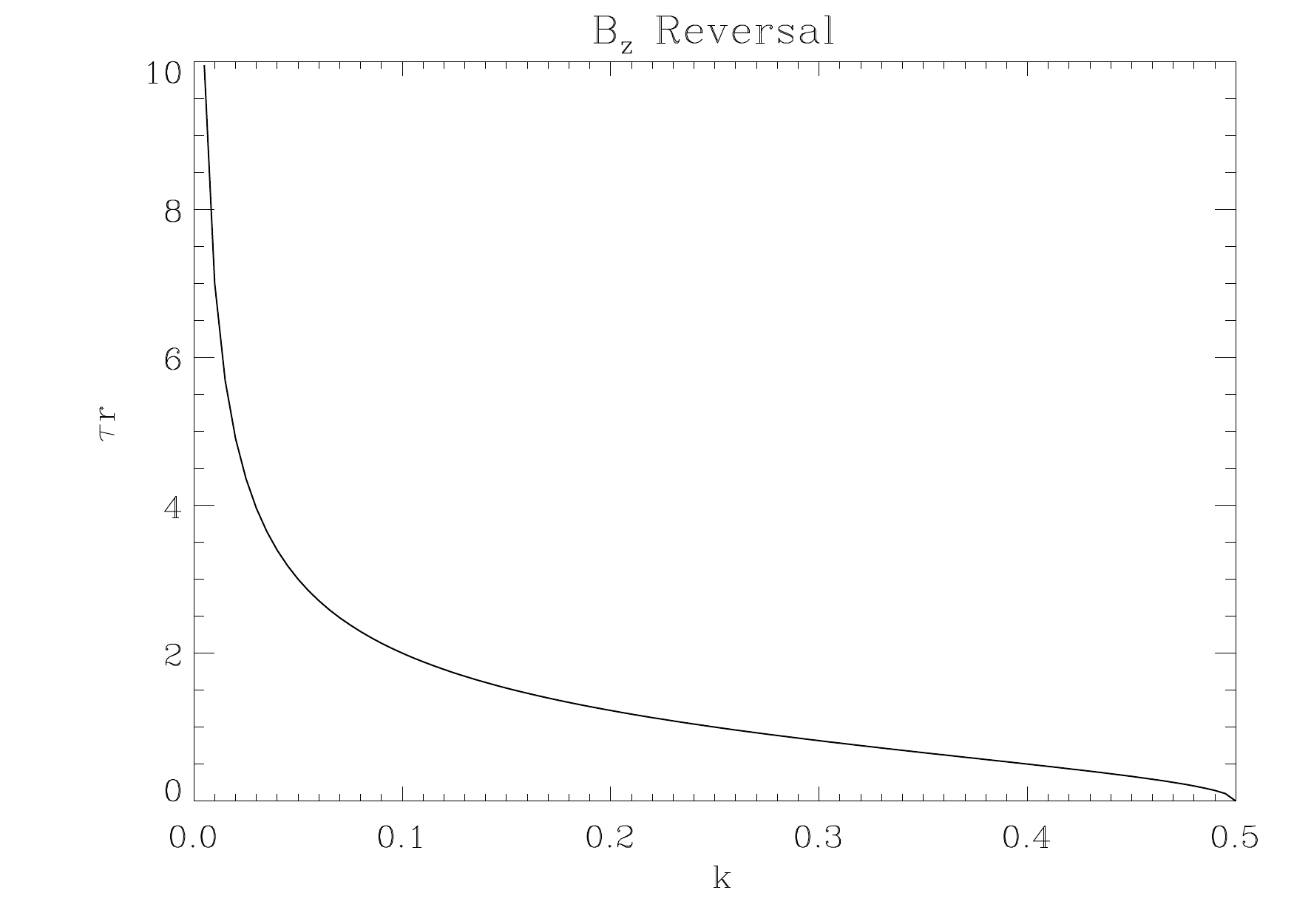}
        \caption{\small Radius of $B_z$ reversal, given $0<k<1/2$}
        \label{fig:rev}
    \end{subfigure}
    \caption{\small \ref{fig:0.3} and \ref{fig:0.5} show the GH+B magnetic field in the $xy$ plane, for two values of $k$. The curved arrows indicate the direction of the $\tilde{B}_\theta$ components, whilst the blue-black-red shading denotes the magnitude and direction of the $\tilde{B}_z$ component. The $k=0.3$ case contains a reversal of the $\tilde{B}_z$ field direction and as such is a Reversed Field Pinch whilst $k=0.5$ gives zero $\tilde{B}_z$ at $\tilde{r}=0$. \ref{fig:rev} shows the radius at which $\tilde{B}_z$ changes its direction, for $0<k<1/2$. $\tilde{B}_z$ does not reverse for $k\ge 1/2$. }\label{fig:magfield}
\end{figure}
We also note here that flux tubes embedded in an axially directed background field have recently been observed during reconnection events in the Earth's magnetotail, by the Cluster spacecraft (e.g. \cite{Borg-2012}), and that recent numerical modelling of `magnetohydrodynamic (MHD) avalanches' in the low-beta solar corona has used multiple flux ropes embedded in a uniform background magnetic field \citep{Hood-2016}. The magnetic field model used \citep{Hood-2009} is similar to the model in this chapter, as it is force-free and 1D. 

The primary task of this chapter is to calculate self-consistent collisionless equilibrium DFs for the GH+B field. This problem essentially reduces to solving Amp\`{e}re's Law such that Equation (\ref{eq:Vlasov_cyl}) is satisfied. We assume nothing about the electric field however, and in fact use that degree of freedom to solve Amp\`{e}re's Law. The resultant form of the scalar potential is then substituted into Poisson's equation, to establish the final relationships between the microscopic and macroscopic parameters of the equilibrium.

\section{The equilibrium DF}\label{sec:thedf}
Although the IFT method did not yield a self-consistent equilibrium DF for the GH field without a background field, the outcome of the calculation can still be used as an indication of possible forms for the DF for the GH+B field. Using trial and error we arrived at the DF
\begin{eqnarray}
f_s=\frac{n_{0s}}{(\sqrt{2\pi}v_{\text{th},s})^3}\left[e^{-\left(\tilde{H}_s-\tilde{\omega}_s\tilde{p}_{\theta s}-\tilde{{U}}_{zs}\tilde{p}_{zs}\right)}+{C}_se^{-\left(\tilde{H}_s-\tilde{{V}}_{zs}\tilde{p}_{zs}\right)}\right],\label{eq:ansatz}
\end{eqnarray}
which is a superposition of two terms that are consistent macroscopically with a `Rigid-Rotor' \citep{Davidsonbook}. A Rigid-Rotor is microscopically described by a DF of the form $F(\mathcal{H}-\omega p_{\theta}-Vp_z)$. Each $F(H-\omega p_{\theta}-Vp_z)$ term corresponds to an average macroscopic motion of rigid rotation with angular frequency $\omega$, and rectilinear motion with velocity $V$ (with $\omega=0$ in the second term of the DF in Equation (\ref{eq:ansatz})). This can be shown in a manner similar to that shown in Section \ref{sec:Harristype}.

The dimensionless constants $\tilde{\omega}_s$, $\tilde{{U}}_{zs}$, $\tilde{{V}}_{zs}$ and ${C}_s$ are yet to be determined, with ${C}_s>0$ for positivity of the distribution (see Table \ref{tab:norm} for a concise list of the dimensionless quantities used in this chapter). 
\begin{table*}
\begin{center}
\centering
  \caption{\small Dimensionless form of some important variables.\\ The $s$ subscript refers to particles of species $s$.}
  \begin{tabular}{cc}
  \hline
       Variable & Dimensionless form\\
       \hline
  Particle Hamiltonian      & $ \tilde{H}_s=\beta_sH_s       $  \\
  Particle angular momentum     &$      \tau p_{\theta s}=m_sv_{\text{th},s}\tilde{p}_{\theta s}  $     \\
  Particle $z$-Momentum          &$          p_{zs}=m_sv_{\text{th},s}\tilde{p}_{zs}   $ \\
 Vector potential  &        $ q_s\boldsymbol{A}=m_sv_{\text{th},s}\tilde{\boldsymbol{A}}_s   $ \\
Scalar Potential  &         $ \tilde{\phi}_s=q_s\beta_s\phi  $\\
Bulk rectilinear flows  &    $      v_{\text{th},s}\tilde{{U}}_{zs}={U}_{zs},\hspace{3mm}v_{\text{th},s}\tilde{{V}}_{zs}={V}_{zs}$    \\
Bulk angular frequency  &     $      \tau v_{\text{th},s}\tilde{\omega}_s=\omega_{s}     $ \\
Particle position (radial)  &       $\tau r=\tilde{r}      $ \\
Particle velocity  &       $\boldsymbol{v}=v_{\text{th},s}\tilde{\boldsymbol{v}}_s       $ \\
\hline
\label{tab:norm}
\end{tabular}
\end{center}
\end{table*}

\subsection{Moments of the DF}\label{sec:moments}
In order to satisfy Maxwell's equations, we shall require the charge and current densities. Hence we will require the zeroth- and first-order moments of the DF in Equation (\ref{eq:ansatz}), and these calculations follow. See Table \ref{tab:norm} for a clarification of all dimensionless quantities denoted by a tilde, $^{\tilde{}}$.
\subsubsection{Zeroth order moments}\label{app:number}
The number density of species $s$ is given by the zeroth moment of the DF;
\begin{eqnarray}
&&n_s=\int f_sd^3v_s=\frac{n_{0s}}{(\sqrt{2\pi})^3}\int e^{-\tilde{H}_s}\bigg(e^{{\tilde{{U}}_{zs}} \tilde{p}_{zs}}e^{{\tilde{\omega}_{s}}\tilde{p}_{\theta s}}+{C}_se^{\tilde{{V}}_{zs} \tilde{p}_{zs}}\bigg)d^3\tilde{v}_s\label{eq:nbasic}\\
&&=\frac{n_{0s}}{(\sqrt{2\pi})^2}e^{-\tilde{\phi}_s}\bigg[e^{\left(\tilde{{U}}_{zs}^2+\tilde{r}^2\tilde{\omega}_{s}^2\right)/2}e^{{\tilde{{U}}_{zs}}\tilde{A}_{zs}}e^{{\tilde{\omega}_{s}}\tilde{r}\tilde{A}_{\theta s}}\int_{-\infty}^{\infty} e^{-\left(\tilde{v}_{zs}-\tilde{{U}}_{zs}\right)^2/2}d\tilde{v}_{zs}\times\nonumber\\
&&\int_{-\infty}^{\infty} e^{-\left(\tilde{v}_{\theta s}-\tilde{\omega}_{s}\tilde{r}\right)^2/2}d\tilde{v}_{\theta s}+{C}_s\sqrt{2\pi}e^{\tilde{{V}}_{zs}^2/2}e^{\tilde{{V}}_{zs}\tilde{A}_{zs}}\int_{-\infty}^{\infty} e^{-\left(\tilde{v}_{zs}-\tilde{{V}}_{zs}\right)^2/2}d\tilde{v}_{zs}\bigg] \nonumber\\
&&=n_{0s}e^{-\tilde{\phi}_s}\left[e^{\left(\tilde{{U}}_{zs}^2+\tilde{r}^2\tilde{\omega}_{s}^2\right)/2}e^{{\tilde{{U}}_{zs}}\tilde{A}_{zs}}e^{{\tilde{\omega}_{s}}\tilde{r}\tilde{A}_{\theta s}}+{C}_se^{\tilde{{V}}_{zs}^2/2}e^{\tilde{{V}}_{zs}\tilde{A}_{zs}}\right] \label{eq:numdensity}
\end{eqnarray}
We take the following sum to calculate the charge density,
\begin{eqnarray}
\sigma=\sum_sq_sn_s=\sum_s n_{0s}q_se^{-\tilde{\phi}_s}\left[e^{\left(\tilde{{U}}_{zs}^2+\tilde{r}^2\tilde{\omega}_{s}^2\right)/2}e^{{\tilde{{U}}_{zs}}\tilde{A}_{zs}}e^{{\tilde{\omega}_{s}}\tilde{r}\tilde{A}_{\theta s}}+{C}_se^{\tilde{{V}}_{zs}^2/2}e^{\tilde{{V}}_{zs}\tilde{A}_{zs}}\right]\label{eq:sigmaapp}
\end{eqnarray}
\begin{figure}
    \centering
    \begin{subfigure}[b]{0.6\textwidth}
        \includegraphics[width=\textwidth]{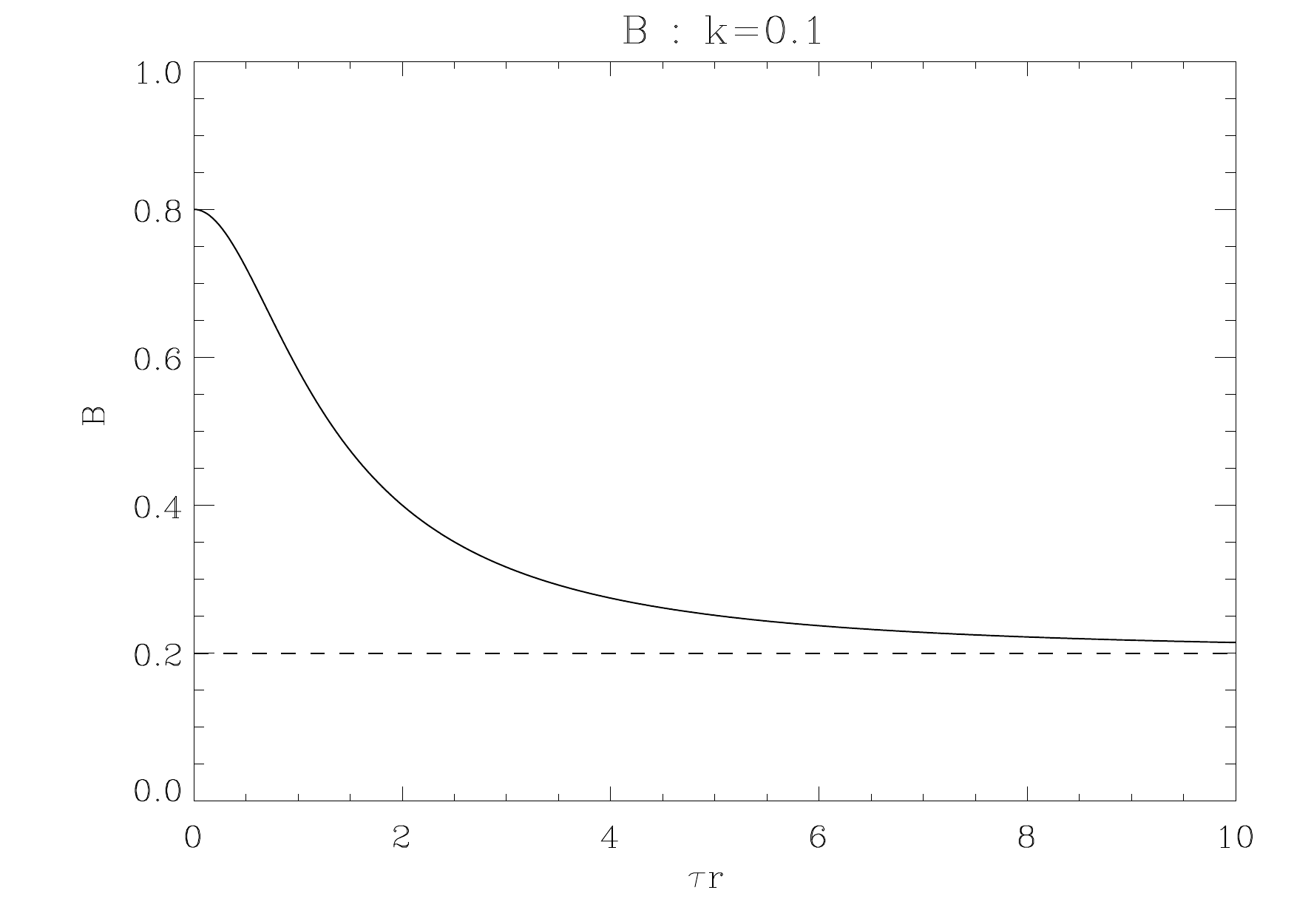}
        \caption{\small $|\boldsymbol{B}|$ for $k=0.1$}
        \label{fig:bmag1}
    \end{subfigure}
       \begin{subfigure}[b]{0.6\textwidth}
        \includegraphics[width=\textwidth]{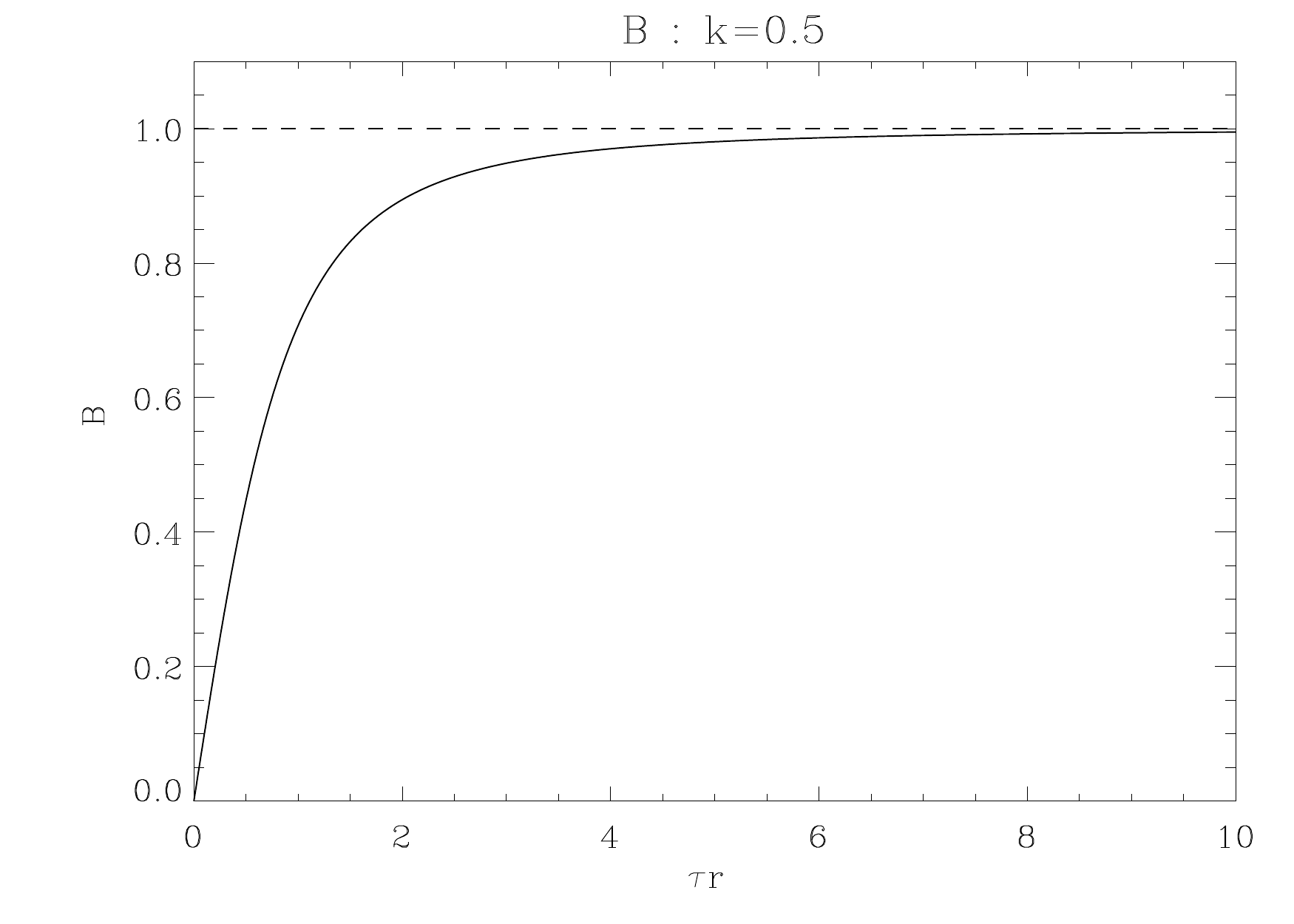}
        \caption{\small  $|\boldsymbol{B}|$ for $k=0.5$ }
        \label{fig:bmag2}
    \end{subfigure}
        \begin{subfigure}[b]{0.6\textwidth}
        \includegraphics[width=\textwidth]{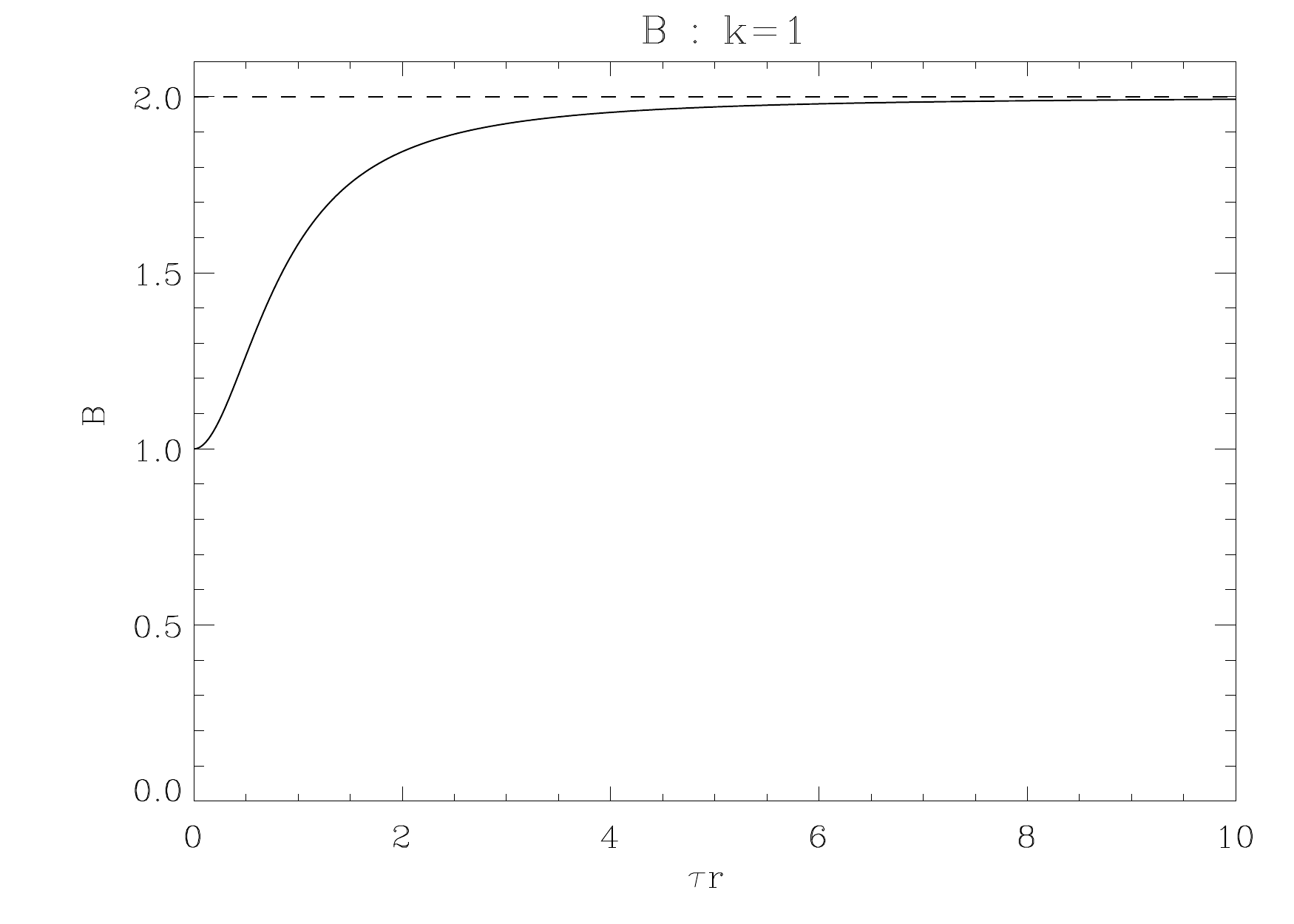}
        \caption{\small  $|\boldsymbol{B}|$ for $k=1$  }
        \label{fig:bmag3}
    \end{subfigure}
    \caption{\small \ref{fig:bmag1}-\ref{fig:bmag3} show the magnitude of the GH+B magnetic field for $k=0.1, 0.5$ and $k=1$ respectively, normalised by $B_0$. For $k<0.5$, $|\tilde{\boldsymbol{B}}|\to 2k$ from above, whereas for $k\ge 1/2$, $|\tilde{\boldsymbol{B}}|\to 2k$ from below.}\label{fig:bmag}
\end{figure}
\subsubsection{First order moments}\label{app:jz}
We take the $v_{z}$ moment of the DF to calculate the $z-$ component of the bulk velocity,
\begin{eqnarray}
&&V_{zs}=\frac{v_{\text{th},s}^4}{n_s}\int\tilde{v}_{z s} f_s d^3\tilde{v}_s,\nonumber\\
&&=\frac{v_{\text{th},s}}{n_s}\frac{n_{0s}}{(\sqrt{2\pi})^2}e^{-\tilde{\phi}_s}\bigg[e^{\left(\tilde{{U}}_{zs}^2+\tilde{r}^2\tilde{\omega}_{s}^2\right)/2}e^{{\tilde{{U}}_{zs}}\tilde{A}_{zs}}e^{{\tilde{\omega}_{s}}\tilde{r}\tilde{A}_{\theta s}}\int_{-\infty}^{\infty} \tilde{v}_{zs}e^{-\left(\tilde{v}_{zs}-\tilde{{U}}_{zs}\right)^2/2}d\tilde{v}_{zs}\times\nonumber\\
&&\int_{-\infty}^{\infty} e^{-\left(\tilde{v}_{\theta s}-\tilde{\omega}_{s}\tilde{r}\right)^2/2}d\tilde{v}_{\theta s}+{C}_s\sqrt{2\pi}e^{\tilde{{V}}_{zs}^2/2}e^{\tilde{{V}}_{zs}\tilde{A}_{zs}}\int_{-\infty}^{\infty} \tilde{v}_{zs}e^{-\left(\tilde{v}_{zs}-\tilde{{V}}_{zs}\right)^2/2}d\tilde{v}_{zs}\bigg] \nonumber\\
&&=\frac{n_{0s}v_{\text{th},s}}{n_s}e^{-\tilde{\phi}_s}\bigg[{\tilde{{U}}_{zs}}e^{{\tilde{{U}}_{zs}}\tilde{A}_{zs}}e^{\left(\tilde{{U}}_{zs}^2+\tilde{r}^2\tilde{\omega}_{s}^2\right)/2}e^{{\tilde{\omega}_{s}}\tilde{r}\tilde{A}_{\theta s}}+\tilde{{V}}_{zs}{C}_se^{\tilde{{V}}_{zs}^2/2}e^{\tilde{{V}}_{zs}\tilde{A}_{zs}}\bigg],
\end{eqnarray}
for $n_s$ the number density. We take the following sum to calculate the $z-$ component of the current density,
\begin{eqnarray}
&&j_z=\sum_sq_sn_sV_{zs}=\sum_s n_{0s}q_sv_{\text{th},s}e^{-\tilde{\phi}_s}\times\nonumber\\
&&\bigg({\tilde{{U}}_{zs}}e^{{\tilde{{U}}_{zs}}\tilde{A}_{zs}}e^{\left(\tilde{{U}}_{zs}^2+\tilde{r}^2\tilde{\omega}_{s}^2\right)/2}e^{{\tilde{\omega}_{s}}\tilde{r}\tilde{A}_{\theta s}}+\tilde{{V}}_{zs}{C}_se^{\tilde{{V}}_{zs}^2/2}e^{\tilde{{V}}_{zs}\tilde{A}_{zs}}\bigg).\label{eq:jzapp}
\end{eqnarray}

By taking the $v_{\theta}$ moment of the DF we can calculate the $\theta-$ component of the bulk velocity,
\begin{eqnarray}
&&V_{\theta s}=\frac{v_{\text{th},s}^4}{n_s}\int\tilde{v}_{\theta s} f_s d^3\tilde{v}_s=\frac{v_{\text{th},s}}{n_s}\frac{n_{0s}}{(\sqrt{2\pi})^2}e^{-\tilde{\phi}_s}\bigg[e^{\left(\tilde{{U}}_{zs}^2+\tilde{r}^2\tilde{\omega}_{s}^2\right)/2}e^{{\tilde{{U}}_{zs}}\tilde{A}_{zs}}e^{{\tilde{\omega}_{s}}\tilde{r}\tilde{A}_{\theta s}}\times\nonumber\\
&&\int_{-\infty}^{\infty} e^{-\left(\tilde{v}_{zs}-\tilde{{U}}_{zs}\right)^2/2}d\tilde{v}_{zs}\int_{-\infty}^{\infty}\tilde{v}_{\theta s} e^{-\left(\tilde{v}_{\theta s}-\tilde{\omega}_{s}\tilde{r}\right)^2/2}d\tilde{v}_{\theta s}\nonumber \\
&&=\frac{\tilde{r}\tilde{\omega}_sn_{0s}v_{\text{th},s}e^{-\tilde{\phi}_s}}{n_s}e^{\left(\tilde{{U}}_{zs}^2+\tilde{r}^2\tilde{\omega}_{s}^2\right)/2}e^{{\tilde{{U}}_{zs}}\tilde{A}_{zs}}e^{{\tilde{\omega}_{s}}\tilde{r}\tilde{A}_{\theta s}} ,
\end{eqnarray}
for $n_s$ the number density. This gives the $\theta-$ component of the current density,
\begin{eqnarray}
j_\theta=\sum_sq_sn_sV_{\theta s}=\sum_s n_{0s}q_sv_{\text{th},s}\tilde{r}\tilde{\omega}_se^{-\tilde{\phi}_s}e^{{\tilde{{U}}_{zs}}\tilde{A}_{zs}}e^{\left(\tilde{{U}}_{zs}^2+\tilde{r}^2\tilde{\omega}_{s}^2\right)/2}e^{{\tilde{\omega}_{s}}\tilde{r}\tilde{A}_{\theta s}}.\label{eq:jthetageneral}
\end{eqnarray}

\subsection{Maxwell's equations: \\
fixing the parameters of the DF}
By insisting on a specific magnetic field configuration (the GH+B field) we have made a statement on the macroscopic physics. In searching for the equilibrium DF, we are trying to understand the microscopic physics. In this sense we are tackling an `inverse problem'. Once an assumption on the form of the DF is made then -- should the assumed form be able to reproduce the correct moments -- this inverse problem reduces to establishing the relationships between the microscopic and macroscopic parameters of the equilibrium. In this Section we `fix' the free parameters of the DF in Equation (\ref{eq:ansatz}), such that Maxwell's equations are satisfied;
\begin{eqnarray}
\nabla\cdot\boldsymbol{E}&=&\frac{1}{\varepsilon_0}\sum_sq_s\int f_sd^3v,\\
\nabla\times\boldsymbol{B}&=&\mu_0\sum_sq_s\int \boldsymbol{v}f_sd^3v.
\end{eqnarray}
Note that the solenoidal constraint and Faraday's law are automatically satisfied for the GH+B field in equilibrium, since $\boldsymbol{B}=\nabla\times\boldsymbol{A}$ implies that $\nabla\cdot\boldsymbol{B}=0$ and $\boldsymbol{E}=-\nabla\phi$ implies that $\nabla\times\boldsymbol{E}=\boldsymbol{0}=-\frac{\partial\boldsymbol{B}}{\partial t}$.

\subsubsection{Amp\`{e}re's Law}
In Section \ref{app:jz} we have calculated the $j_{z}$ current density, found by summing first order moments in $v_z$  of the DF. We now substitute in the macroscopic expressions for $j_z(\tilde{r})$, $A_{\theta}(\tilde{r})$ and $A_z(\tilde{r})$ from (\ref{eq:current}) and (\ref{eq:vecfield}) into the expression for the $j_z$ current density of Equation (\ref{eq:jzapp}). After this substitution, we can calculate a $\phi(r)$ that makes the system consistent. The substitution of the known expressions for $j_{z}$, $A_{z}$ and $A_{\theta}$ gives
\begin{eqnarray}
&&j_{z}(\tilde{r})=\frac{2\tau B_0}{\mu_0}\frac{1}{(1+\tilde{r}^2)^2}=\sum_sn_{0s}q_sv_{\text{th},s}e^{-q_s\beta_s\phi}\times\nonumber\\
&&\bigg({\tilde{{U}}_{zs}}e^{({\tilde{{U}}_{zs}}^2+\tilde{r}^2{\tilde{\omega}_{s}}^2)/2-\text{sgn}(q_s){\tilde{\omega}_{s}}\tilde{r}^2k/\delta_s}\left(1+\tilde{r}^2\right)^{\text{sgn}(q_s)({\tilde{\omega}_{s}}-{\tilde{{U}}_{zs}})/(2\delta_s)}\nonumber\\
&&+{\tilde{{V}}_{zs}}{C}_se^{{\tilde{{V}}_{zs}}^2/2}\left(1+\tilde{r}^2\right)^{-\text{sgn}(q_s){\tilde{{V}}_{zs}}/(2\delta_s)}\bigg)\nonumber\\
&&=\text{``ion terms''} + \text{``electron terms''}
\end{eqnarray}
In order to satisfy the above equality  we can construct a solution by introducing a `separation constant' $\gamma_1\ne 0, 1$. We multiply the above equation by $(1+\tilde{r}^2)^2$ which makes the left-hand side constant, whilst the right-hand side is a sum of two (sets of) terms, one depending on ion parameters and the second depending on electron parameters. Then we can define $\gamma_1$ by
\begin{equation}
\frac{2\tau B_0}{\mu _0}=\underbrace{\frac{2\tau B_0}{\mu _0} (1-\gamma_1)}_{\text{ion terms}} + \underbrace{\frac{2\tau B_0}{\mu _0}\gamma_1}_{\text{electron terms}} ,\label{eq:a1a2}
\end{equation}
associating the `ion term' with the first term on the right-hand side of (\ref{eq:a1a2}), and the `electron term' with the second term on the right-hand side of (\ref{eq:a1a2}). After some algebra we can rearrange these two associations to give two expressions for the scalar potential, one in terms of the ion parameters, and one in terms of the electron parameters:
\begin{eqnarray}
\phi(r)&=&\frac{1}{q_i\beta_i}\ln\bigg\{\frac{\mu_0n_{0i}q_iv_{\text{th},i}}{2\tau B_0(1-\gamma_1)}\times\nonumber\\
&&\bigg[{\tilde{{U}}_{zi}}e^{({\tilde{{U}}_{zi}}^2+\tilde{r}^2{\tilde{\omega}_{i}}^2)/2-{\tilde{\omega}_{i}}\tilde{r}^2k/\delta_i}\left(1+\tilde{r}^2\right)^{2+({\tilde{\omega}_{i}}-{\tilde{{U}}_{zi}})/(2\delta_i)}\nonumber\\
&&+{\tilde{{V}}_{zi}}{C}_ie^{{\tilde{{V}}_{zi}}^2/2}\left(1+\tilde{r}^2\right)^{2-{\tilde{{V}}_{zi}}/(2\delta_i)}     \bigg]\bigg\}\nonumber\\
\phi(r)&=&\frac{1}{q_e\beta_e}\ln\bigg\{\frac{\mu_0n_{0e}q_ev_{\text{th},e}}{2\tau B_0\gamma_1}\bigg[{\tilde{{U}}_{ze}}e^{({\tilde{{U}}_{ze}}^2+\tilde{r}^2{\tilde{\omega}_{e}}^2)/2+{\tilde{\omega}_{e}}\tilde{r}^2k/\delta_e}\left(1+\tilde{r}^2\right)^{2-({\tilde{\omega}_{e}}-{\tilde{{U}}_{ze}})/(2\delta_e)}\nonumber\\
&&+\tilde{{V}}_{ze}{C}_ee^{\tilde{{V}}_{ze}^2/2}\left(1+\tilde{r}^2\right)^{2+\tilde{{V}}_{ze}/(2\delta_e)}     \bigg]\bigg\}\nonumber
\end{eqnarray}
The two values of the scalar potential above must be made identical by a suitable choice of relationships between the ion and electron parameters. Given enough freedom in parameter space, we could say that the $z$ component of Amp\`{e}re's Law is \emph{implicitly} solved the above equations, in that one just needs to choose a consistent set of parameters. However, we seek a solution in an \emph{explicit} sense. 

In order to make progress we non-dimensionalise the above equations by multiplying both sides by $e\beta_r$ with 
\begin{equation*}
\beta_r=\frac{\beta_i\beta_e}{\beta_e+\beta_i}.
\end{equation*}
Once this is done we can write the scalar potential in the form
\begin{eqnarray}
e\beta_r\phi(r)&=&\ln\left\{\left[\text{ion terms}\right]^{\frac{e\beta_r}{q_i\beta_i}}\right\},\label{eq:phi3}\\
e\beta_r\phi(r)&=&\ln\left\{\left[\text{electron terms}\right]^{\frac{e\beta_r}{q_e\beta_e}}\right\}.\label{eq:phi4}
\end{eqnarray}
Specifically, Equations (\ref{eq:phi3}) and (\ref{eq:phi4}) require the equality of the arguments of the logarithm to hold in order for a meaningful solution to be obtained for the scalar potential. A first step towards this is made by requiring consistent powers of the $1+\tilde{r}^2$ `profile' in the right-hand side of the above expression to allow factorisation. Hence
\begin{eqnarray}
({\tilde{\omega}_{i}}-{\tilde{{U}}_{zi}})/(2\delta_i)&=&-{\tilde{{V}}_{zi}}/(2\delta_i),\hspace{3mm} -({\tilde{\omega}_{e}}-{\tilde{{U}}_{ze}})/(2\delta_e)=\tilde{{V}}_{ze}/(2\delta_e),\nonumber\\
&\implies & {\tilde{\omega}_{i}}=\tilde{{U}}_{zi}-\tilde{{V}}_{zi},\hspace{3mm}\tilde{\omega}_e=\tilde{{U}}_{ze}-\tilde{{V}}_{ze},\label{eq:omegafix}
\end{eqnarray}
and hence the rigid-rotation, $\tilde{\omega}_s$, is fixed by the difference of the rectilinear motion, $\tilde{U}_{zs}-\tilde{V}_{zs}$. On top of this, we require that the power of the $1+\tilde{r}^2$ `profile' on the right-hand side is the same for both the ions and electrons, thus
\begin{equation}
 \frac{e\beta_r}{q_i\beta_i}\left(2-\tilde{{V}}_{zi}/(2\delta_i)\right)=\mathcal{E}=\frac{e\beta_r}{q_e\beta_e}\left(2+{\tilde{{V}}_{ze}}/(2\delta_e)\right).\label{eq:Gfix}
\end{equation}
This condition seems to be a statement on an average potential energy associated with the particles. Once more to allow factorisation of the $1+\tilde{r}^2$ `profile', we insist that net $\exp(r^2)$ terms cancel, i.e.
\begin{equation}
 \frac{{\tilde{\omega}_{i}}}{2}=\frac{k}{\delta_i}>0,\hspace{3mm}\frac{{\tilde{\omega}_{e}}}{2}=-\frac{k}{\delta_e} <0.\label{eq:kfix}
\end{equation}
The physical meaning of this condition seems to be that the frequencies of the rigid rotor for each species are matched according to the relevant magnetisation, and the background field magnitude. The remaining task is to ensure equality of the `coefficients'
\begin{eqnarray}
&&\left\{\frac{1}{4\delta_i(1-\gamma_1)} \frac{n_{0i}m_iv_{\text{th},i}^2}{B_0^2/(2\mu_0)}\left[{\tilde{{U}}_{zi}}e^{{\tilde{{U}}_{zi}}^2/2}+\tilde{{V}}_{zi}{C}_ie^{\tilde{{V}}_{zi}^2/2}    \right]\right\}^{\frac{e\beta_r}{q_i\beta_i}}={\mathcal{D}}\nonumber\\
&&=\left\{- \frac{1}{4\delta_e\gamma_1 }\frac{n_{0e}m_ev_{\text{th},e}^2}{B_0^2/(2\mu_0) }  \left[{\tilde{{U}}_{ze}}e^{{\tilde{{U}}_{ze}}^2/2}+\tilde{{V}}_{ze}{C}_ee^{\tilde{{V}}_{ze}^2/2} \right]\right\}^{\frac{e\beta_r}{q_e\beta_e}}\label{eq:D1fix}
\end{eqnarray}
These seem to be conditions on the ratios of the energy densities associated with the bulk rectilinear motion and the magnetic field respectively. Thus far we have 8 constraints and 12 unknowns (${\tilde{{U}}_{zs}}, {\tilde{{V}}_{zs}}, {\tilde{\omega}_{s}}, {C}_s, n_{0s}, \beta_{s}$), given fixed characteristic macroscopic parameters of the equilibrium; $B_0$, $\tau$, and $k$. We can now write down an expression for $\phi$ that explicitly solves the $z$ component of Amp\`{e}re's law;
\begin{equation}
\phi(\tilde{r})=\frac{1}{e\beta_r}\mathcal{E}\ln\left(1+\tilde{r}^2\right)+\phi (0),\label{eq:scalarpot}
\end{equation}
with 
\[
\phi(0)=\frac{1}{e\beta_r}\ln {\mathcal{D}}.
\] 
Clearly, we require that $\mathcal{D}>0$ for the expression above to make sense. It is clear that the sign of $\gamma_1$ could, in principle, affect the sign of $\mathcal{D}$. It is seen from (\ref{eq:D1fix}) that positivity of $\mathcal{D}$ implies that
\begin{eqnarray}
\frac{1}{1-\gamma_1}\left[ \tilde{U}_{zi}e^{\tilde{U}_{zi}^2/2}+\tilde{V}_{zi}C_ie^{\tilde{V}_{zi}^2/2}    \right]&>&0,\\
\frac{1}{\gamma_1}\left[ \tilde{U}_{ze}e^{\tilde{U}_{ze}^2/2}+\tilde{V}_{ze}C_ee^{\tilde{V}_{ze}^2/2}    \right]&<&0.
\end{eqnarray}
By rearranging the above inequalities to make $C_s$ the subject, it can be seen after some algebra that positivity of $\mathcal{D}$ and $C_s$ is guaranteed when 
\[
\gamma_1>1, \hspace{3mm} \text{sgn}(\tilde{U}_{zs})=-\text{sgn}(\tilde{V}_{zs}).
\]
Note that these conditions are sufficient, but not necessary, i.e. it is possible to have $\mathcal{D}>0$ and $C_s>0$ for any value of $\gamma_1\ne 0,1$, and even for $\text{sgn}(\tilde{U}_{zs})=\text{sgn}(\tilde{V}_{zs})$ in the case of $\gamma_1<0$. 

Thus far we have only considered the $j_z$ component, and it is premature to consider all components of Amp\`{e}re's Law satisfied. Let us move on to consider the $\theta$ component. In a process similar to that above, we substitute in the macroscopic expressions for $j_\theta(\tilde{r})$, $A_{\theta}(\tilde{r})$ and $A_z(\tilde{r})$ for the GH+B field into the expression for the $j_\theta$ current density of Equation (\ref{eq:jthetageneral}) in Section \ref{app:jz}. After this substitution, we can once more calculate the $\phi$ that makes the system consistent. The substitution gives
\begin{eqnarray}
&&j_{\theta}=\frac{2\tau B_0}{\mu_0}=\sum_{s}n_{0s}q_sv_{\text{th},s}\tilde{\omega}_se^{-q_s\beta_s\phi}\times\nonumber\\
&&e^{({\tilde{{U}}_{zs}}^2+\tilde{r}^2{\tilde{\omega}_{s}}^2)/2-\text{sgn}(q_{s)}{\tilde{\omega}_{s}}\tilde{r}^2k/\delta_s}\left(1+\tilde{r}^2\right)^{2+\text{sgn}(q_{s})({\tilde{\omega}_{s}}-{\tilde{{U}}_{zs}})/(2\delta_s)}
\end{eqnarray}
Using the parameter relations as above, we determine that the scalar potential is again given in the form of (\ref{eq:scalarpot}), 
\begin{equation*}
\phi(\tilde{r})=\frac{1}{e\beta_r}\mathcal{E}\ln\left(1+\tilde{r}^2\right)+\phi(0).
\end{equation*} 
Hence, this form of the scalar potential is consistent provided
\begin{equation}
\left[\frac{1}{1-\gamma_2}\frac{1}{4\delta_i} \frac{n_{0i}m_iv_{\text{th},i}\omega_i/\tau}{B_0^2/(2\mu_0)} e^{{\tilde{{U}}_{zi}}^2/2}   \right]^{\frac{e\beta_r}{q_i\beta_i}}={\mathcal{D}}=\left[ -\frac{1}{\gamma_2} \frac{1}{4\delta_e}\frac{n_{0e}m_ev_{\text{th},e}\omega_e/\tau}{B_0^2/(2\mu_0)  } e^{{\tilde{{U}}_{ze}}^2/2}  \right]^{\frac{e\beta_r}{q_e\beta_e}}\label{eq:D2fix}
\end{equation}
for $\gamma_2\ne 1$ another separation constant. These seem to be conditions on the ratios of the energy densities associated with the bulk rotation and the magnetic field respectively. This has added two more constraints. 

Once again we must ensure that $\mathcal{D}>0$. Since $\omega_e<0$, the right-hand side of the above equation implies that $\gamma_2>0$ to ensure that $\mathcal{D}>0$. Whilst the left-hand side implies that $\gamma_2<1$ for positivity of $\mathcal{D}$ since $\omega_i>0$. Hence we can say that for positivity
\[
0<\gamma_2<1.
\]
We can now consider Amp\`{e}re's Law satsified, given a $\phi$ that solves Poisson's equation. That is to say that we have satisfied the equation
\begin{eqnarray}
&&\left(\sum_sq_s\int\boldsymbol{v}f_sd^3v=\right)\boldsymbol{j}_{\text{micro}}(\phi,\boldsymbol{A})=\boldsymbol{j}_{\text{macro}}(r)\left(=\frac{1}{\mu_0}\nabla\times\boldsymbol{B}\right),\nonumber\\
&&\text{s.t.}\hspace{3mm}\phi(\tilde{r})=\frac{1}{e\beta_r}\mathcal{E}\ln\left(1+\tilde{r}^2\right)+\phi(0)\hspace{3mm}\text{and}\hspace{3mm}\boldsymbol{A}=\boldsymbol{A}_{GH+B},\nonumber
\end{eqnarray}
with $\boldsymbol{A}_{GH+B}$ defined by Equation (\ref{eq:vecfield}). As a result, the problem of consistency is now shifted to solving Poisson's Equation, where the remaining degrees of freedom lie. 

\subsubsection{Poisson's Equation}
The final step in `self-consistency' is to solve Poisson's Equation. Frequently in such equilibrium studies, this step is replaced by satisfying quasineutrality and in essence solving a first order approximation of Poisson's equation, see for example \citet{Schindlerbook, Harrison-2009POP,Tasso-2014} and Section \ref{sec:quasi} of this thesis. Here we solve Poisson's equation exactly, i.e. to all orders. Poisson's equation in cylindrical coordinates with only radial dependence gives
\begin{equation}
\nabla\cdot\boldsymbol{E}=-\frac{1}{r}\frac{\partial}{\partial r}\left(r\frac{\partial \phi}{\partial r}\right)=\frac{\sigma}{\varepsilon_0}.\label{eq:Poiss}
\end{equation}
The electric field is calculated as $\boldsymbol{E}=-\nabla \phi$, giving
\begin{equation}
E_r=-\partial_r\phi=-\frac{2\tau \mathcal{E}}{e\beta_r}\frac{\tilde{r}}{(1+\tilde{r}^2)}.\label{eq:elecfield}
\end{equation}
We can now take the divergence of the electric field $\nabla\cdot\boldsymbol{E}=\tau\tilde{r}^{-1}\partial_{\tilde{r}}(\tilde{r}E_{r})$ and so
\begin{equation}
\nabla\cdot\boldsymbol{E}=-\frac{4\tau^2\mathcal{E}}{e\beta_r}\frac{1}{(1+\tilde{r}^2)^2}\implies\sigma=-\frac{4\varepsilon_{0}\tau^2\mathcal{E}}{e\beta_r}\frac{1}{(1+\tilde{r}^2)^2}.\label{eq:Edivergence}
\end{equation}
This gives a non-zero net charge - per unit length in $z$ - of    
\begin{equation}
\mathcal{Q}=\int_{\theta=0}^{\theta=2\pi}\int_{r=0}^{r=\infty} \,\sigma \,r\,dr\,d\theta= -\frac{4\pi\varepsilon_{0}\mathcal{E}}{e\beta_r}.\label{eq:coulomb}
\end{equation}
The charge density derived in Equation (\ref{eq:Edivergence}) must equal the charge density calculated by taking the zeroth moment of the DF. The expression for the charge density calculated in (\ref{eq:sigmaapp}) gives
\begin{eqnarray}
\sigma&=&\sum_sn_{0s}q_se^{-q_s\beta_s\phi}\times\nonumber\\
&&\left(e^{({\tilde{{U}}_{zs}}^2+\tilde{r}^2{\tilde{\omega}_{s}}^2)/2}e^{{\tilde{{U}}_{zs}}\tilde{A}_{zs}}e^{{\tilde{\omega}_{s}}\tilde{r}\tilde{A}_{\theta s}}+{C}_se^{({\tilde{{U}}_{zs}}-{\tilde{\omega}_{s}})^2/2}e^{({\tilde{{U}}_{zs}}-{\tilde{\omega}_{s}})\tilde{A}_{zs}}\right) ,  \nonumber\\
&=&\sum_sn_{0s}q_se^{-q_s\beta_s\phi}\times\nonumber\\
&&\left(1+\tilde{r}^2\right)^{{\rm sgn}(q_s)({\tilde{\omega}_{s}}-{\tilde{{U}}_{zs}})/(2\delta_s)}\left(e^{{\tilde{{U}}_{zs}}^2/2}+{C}_se^{({\tilde{{U}}_{zs}}-{\tilde{\omega}_{s}})^2/2}\right),\nonumber\\
&=&\frac{1}{\left(1+\tilde{r}^2\right)^{2}}\sum_sn_{0s}q_s{\mathcal{D}}^{-\frac{q_s\beta_s}{e\beta_r}}\left(e^{{\tilde{{U}}_{zs}}^2/2}+{C}_se^{({\tilde{{U}}_{zs}}-{\tilde{\omega}_{s}})^2/2}\right).  \label{eq:chargedensity}
\end{eqnarray}
The second equality is found by substituting the form of the vector potential from Equation (\ref{eq:vecfield}), and the final equality is reached by using the conditions derived in Equations (\ref{eq:omegafix}) - (\ref{eq:scalarpot}).

We can now match Equations (\ref{eq:Edivergence}) and (\ref{eq:chargedensity}) to get
\begin{equation}
(\sigma(0)=)-\frac{4\varepsilon_0\tau^2\mathcal{E}}{e\beta_r}=\sum_{s}n_{0s}q_s  {\mathcal{D}}^{-\frac{q_s\beta_s}{e\beta_r}}\left(e^{{\tilde{{U}}_{zs}}^2/2}+{C}_se^{{\tilde{{V}}_{zs}}^2/2}\right).\label{eq:poisscon}
\end{equation}
We now have 12 physical parameters (${\tilde{{U}}_{zs}}, {\tilde{{V}}_{zs}}, {\tilde{\omega}_{s}}, {C}_s, n_{0s},\beta_{s}$) with 11 constraints (\ref{eq:omegafix}-\ref{eq:D1fix}), (\ref{eq:D2fix}) \& (\ref{eq:poisscon}). For example, if one picks $B_0$, $\tau$, $k$ and one microscopic parameter, say $\beta_i$, then the remaining parameters of the equilibrium, (${\tilde{{U}}_{zs}}, {\tilde{{V}}_{zs}}, {\tilde{\omega}_{s}}, {C}_s$, $n_{0s}$, $\beta_{e}$), are now determined. One could of course choose the values of a different set of parameters, and determine those that remain by using the constraints derived. Note that whilst the constants $\gamma_1\ne 0,1$ and $0< \gamma_2 <1 $ are system parameters, they are not physically meaningful as they only represent a change in the gauge of the scalar potential.

\section{Analysis of the equilibrium}\label{sec:analysis}

\subsection{Non-neutrality \& the electric field}
It is seen from equations (\ref{eq:Edivergence}) and (\ref{eq:coulomb}) that basic electrostatic properties of the equilibrium described by $f_s$ are encoded in $\mathcal{E}$. The equilibrium is electrically neutral only when $\mathcal{E}=0$, and non-neutral otherwise. Specifically, there is net negative charge when $\mathcal{E}>0$, and net positive charge when $\mathcal{E}<0$. This net charge is finite in the $(r,\theta)$ plane and given by $\mathcal{Q}$ in Equation (\ref{eq:coulomb}).

Physically, the sign of $\mathcal{E}$ seems to be related to the respective magnitudes of the bulk rotation frequencies, $\tilde{\omega}_s$. From equations (\ref{eq:omegafix}) and (\ref{eq:Gfix}) we see that $\mathcal{E}>0$ implies that
\begin{eqnarray}
\tilde{\omega}_i>\omega_{i}^{\star}&=&\tilde{U}_{zi}-4\delta_i,\nonumber\\
|\tilde{\omega}_e|<\omega_{e}^{\star}&=&-\tilde{U}_{ze}-4\delta_e,\nonumber
\end{eqnarray}
and $\mathcal{E}<0$ implies that
\begin{eqnarray}
\tilde{\omega}_i<\omega_{i}^{\star}&=&\tilde{U}_{zi}-4\delta_i,\nonumber\\
|\tilde{\omega}_e|>\omega_{e}^{\star}&=&-\tilde{U}_{ze}-4\delta_e.\nonumber
\end{eqnarray}
Hence, $\mathcal{E}>0$ is seen to occur for `sufficiently large' bulk ion rotation frequencies, and `sufficiently small' (in magnitude) bulk electron rotation frequencies. A positive $\mathcal{E}$ corresponds to an electric field directed radially `inwards'. This seems to make sense physically, by the following argument. A `larger' ($\tilde{\omega}_i>\omega_{i}^{\star}$) bulk ion rotation freqency gives a `larger' centrifugal force (in the co-moving frame), and a `smaller' ($ |\tilde{\omega}_e|<\omega_{e}^{\star} $) bulk electron rotation frequency gives a `smaller' centrifugal force (in the co-moving frame). For a dynamic interpretation, at a fixed $r$, the ions are forced to a slightly larger radius than the electrons, i.e. a charge separation manifests on small scales. This charge separation results in an inward electric field, $E_r<0$. An equally valid interpretation is to say that for an equilibrium to exist, an electric field must exist to counteract the differences in the forces associated with the bulk ion and electron rotational flows. This effect is represented in Figure \ref{fig:E_in}.

In a similar manner, $\mathcal{E}<0$ is seen to occur for `sufficiently small' ($\tilde{\omega}_i<\omega_{i}^{\star}$) bulk ion rotation frequencies, and `sufficiently large' ($ |\tilde{\omega}_e|>\omega_{e}^{\star} $) bulk electron rotation frequencies. A negative $\mathcal{E}$ corresponds to an electric field directed radially `outwards'. We can then interpret these result physically, in a manner like that above. This effect is represented in Figure \ref{fig:E_out}.

Finally, we can interpret the neutral case, $\mathcal{E}=0$, as the intermediary between the two circumstances considered above. That is to say that the equilibrium is neutral when the bulk rotation flows are just matched accordingly, such that there is no charge separation and hence no electric field.

\subsection{The equation of state and the plasma beta}
For certain considerations, e.g. the solar corona, it would be advantageous if the DF had the capacity to describe plasmas with sub-unity values of the plasma beta: the ratio of the thermal energy density to the magnetic energy density
\begin{equation}
\beta_{pl}(\tilde{r})=\frac{2\mu _0k_B}{B^2}\sum _sn_sT_s\label{eq:beta}.
\end{equation}
For our configuration, the number density is seen to be proportional to the $rr$ component of the pressure tensor, $P_{rr,s}=n_sk_BT_s$. This is demonstrated by the following calculation. In order to calculate $P_{rr}$, we must consider the integral
\begin{equation}
P_{rr}=\sum_s m_s\int_{-\infty}^\infty \, w_{rs}\,w_{rs}\,f_s\,d^3v.\label{eq:Prr}
\end{equation}
However, we do not have to consider a bulk velocity in the $r$ direction here $(V_{rs}=0)$, since $f_s$ is an even function of $v_r$. Using the fact that
\begin{eqnarray}
\int_{-\infty}^{\infty} v_r^2e^{-v_r^2/(2v_{\text{th},s}^2)}dv_r&=&v_{\text{th},s}^2\int_{-\infty}^{\infty} e^{-v_r^2/(2v_{\text{th},s}^2)}dv_r,\nonumber
\end{eqnarray}
and by consideration of Equations (\ref{eq:Prr}) and the number density, we see that
\begin{eqnarray}
P_{rr,s}=m_sv_{\text{th},s}^2n_s,\label{eq:temp}
\end{eqnarray}
that is to say that $k_BT_s=m_sv_{\text{th},s}^2$. Note that if $n_i=n_e:= n$ and hence $\mathcal{E}=0$ (neutrality), then we have an equation of state given by
\begin{equation*}
P_{rr}=\frac{\beta_e+\beta_i}{\beta_e\beta_i}n.
\end{equation*}
This resembles expressions found in the Cartesian case, in \citet{Channell-1976, Neukirch-2009, Allanson-2015POP} for example. Incidentally, we can use the connection between $n_s$ and $P_{rr}$ to give an expression for the $\beta_{pl}$ that is perhaps more typically seen,
\begin{equation*}
\beta_{pl}(\tilde{r})=\frac{2\mu _0}{B^2}\sum _sP_{rr,s}.
\end{equation*}
The square magnitude of the magnetic field (Equation (\ref{eq:magfield})) is given by
\begin{equation*}
B^2=\frac{B_0^2}{(1+\tilde{r}^2)}\left(1-4k+4k^2(1+\tilde{r}^2)\right).
\end{equation*}
Using the number density from Equation (\ref{eq:numdensity}) in the definition of the plasma beta from Equation (\ref{eq:beta}), as well as the equilibrium conditions (\ref{eq:omegafix}) - (\ref{eq:scalarpot}) gives
\begin{eqnarray}
&&\beta_{pl}(\tilde{r})=\frac{2\mu _0}{B_0^2(1+\tilde{r}^2)\left(1-4k+4k^2(1+\tilde{r}^2)\right)}\times\nonumber\\
&&\sum _s\frac{n_{0s}}{\beta_s}{\mathcal{D}}^{-\frac{q_s\beta_s}{e\beta_r}}\left(e^{{\tilde{{U}}_{zs}}^2/2}+{C}_se^{{\tilde{{V}}_{zs}}^2/2}\right).
\end{eqnarray}
It is not immediately obvious from the above equation what values $\beta_{pl}$ can have. However it is readily seen that as $\tilde{r}\to\infty$ then $\beta_{pl}\to 0$, essentially since the number density is vanishing at large radii. On the central axis of the tube we see that 
\begin{eqnarray}
&&\beta_{pl}(0)=\frac{2\mu _0}{B_0^2\left(1-4k+4k^2\right)}\times\nonumber\\
&&\sum _s\frac{n_{0s}}{\beta_s}{\mathcal{D}}^{-\frac{q_s\beta_s}{e\beta_r}}\left(e^{{\tilde{{U}}_{zs}}^2/2}+{C}_se^{{\tilde{{V}}_{zs}}^2/2}\right),
\end{eqnarray}
suggesting that for a suitable choice of parameters, it should be possible to attain any value of $\beta_{pl}$ on the axis.

\subsection{Origin of terms in the equation of motion}\label{subsec:origin}
It could be instructive to now consider the individual terms in the equation of motion for this equilibrium, Equation (\ref{eq:CMHDE_alt}), and repeated here,
\begin{equation*}
(\nabla\cdot\boldsymbol{P})_r=(\boldsymbol{j}\times\boldsymbol{B})_r+\sigma E_r-\boldsymbol{\mathcal{F}}_{\text{c}}\cdot \hat{\boldsymbol{e}}_r.
\end{equation*}
We will seek to see if, at least mathematically, that certain terms have their origin in other particular terms in the equation, and what these are. Rather than this suggesting `what balances what', it is an attempt to see the physical origin of the forces, i.e. \emph{which forces arise from which system configurations?}

\subsubsection{Centripetal forces and non-inertial motion}
Let's first consider the divergence of the pressure, Equation (\ref{eq:pressurediv}), and repeated here
\begin{equation*}
(\nabla\cdot\boldsymbol{P})_r= \frac{1}{r}\frac{\partial}{\partial r}\left(rP_{rr}\right)-\frac{P_{\theta\theta}}{r}.         
 \end{equation*}
 As mentioned in Section \ref{subsec:eom}, $P_{\theta\theta}=\pi_{\theta\theta}-\sum_sn_sV_{\theta s}^2$, since 
 \begin{eqnarray}
 P_{\theta\theta}&=&\sum_s\int(V_{\theta s}-v_\theta)^2f_sd^3v,\nonumber\\
 &=&\sum_s\left[n_sV_{\theta s}^2  -2n_sV_{\theta s}^2+\int v_{\theta}^2f_sd^3v   \right],\nonumber\\
 &=&\sum_s\left[\int v_\theta^2f_sd^3v-n_sV_{\theta s}^2\right],\nonumber\\
 &=&\pi_{\theta\theta}-\sum_sn_sV_{\theta s}^2=\pi_{\theta\theta}+\boldsymbol{\mathcal{F}}_c\cdot r\hat{\boldsymbol{e}}_r.\nonumber
 \end{eqnarray}
Hence the centripetal forces, $\boldsymbol{\mathcal{F}}_c=-\frac{1}{r}\sum_s\rho_sV_{\theta s}^2\hat{\boldsymbol{e}}_r$ are seen to have their origin in the terms in $P_{\theta\theta}/r$, from $\nabla\cdot\boldsymbol{P}$. This seems to say that in a lab frame, the centripetal forces arise from the stresses associated with the differences between the particle and bulk velocities, i.e. the $\int v_\theta V_{\theta s} f_s d^3v$ terms. So far we have accounted for the following terms,
\[
 \underbrace{\frac{1}{r}\frac{\partial}{\partial r}\left(rP_{rr}\right)-\frac{1}{r}\pi_{\theta\theta}+\textcolor{red}{\frac{1}{r}\sum_sn_sV_{\theta s}^2} }_{\text{``Derivatives'' of potentials}} =\underbrace{(\boldsymbol{j}\times\boldsymbol{B})_r+\sigma E_r-\textcolor{red}{\boldsymbol{\mathcal{F}}_{\text{c}}\cdot \hat{\boldsymbol{e}}_r}}_{\text{Forces}}.
 \]
 
 \subsubsection{Electric fields and pressure gradients}
 We now consider the $P_{rr}$ terms. Using the `equation of state' (\ref{eq:temp}), and the $n_s$ implicit from Equation (\ref{eq:chargedensity}) we see that
 \begin{eqnarray}
 \frac{1}{r}\frac{\partial}{\partial r}(rP_{rr})= \frac{1}{r}\frac{\partial}{\partial r}\left[r\left(   \frac{n_i}{\beta_i}+\frac{n_e}{\beta_e}\right)\right]&\propto &\frac{1}{\tilde{r}}\frac{\partial}{\partial \tilde{r}}\left[\tilde{r}       \frac{1}{(1+\tilde{r}^2)^2}\right],\nonumber\\
 &=&\underbrace{\frac{1}{\tilde{r}}\frac{1}{(1+\tilde{r}^2)^2}}_{\propto P_{rr}/r}-\underbrace{\frac{4\tilde{r}}{(1+\tilde{r}^2)^3}}_{\propto\partial P_{rr}/\partial r}\label{eq:prrexpand}
 \end{eqnarray}
 We can see from equations (\ref{eq:elecfield}), (\ref{eq:chargedensity}) and (\ref{eq:poisscon}), that 
\[
\sigma E_r=\frac{8\epsilon_0\tau^3\mathcal{E}^2}{e^2\beta_r^2}\frac{\tilde{r}}{(1+\tilde{r}^2)^3}.
\]
Hence the electric fields have their origin in the density/pressure gradients $\partial P_{rr}/\partial r$, and we have accounted for the following terms,
\[
 \underbrace{\frac{1}{r}P_{rr}+\textcolor{blue}{\frac{\partial }{\partial r}\sum_s\frac{n_s}{\beta_s}}-\frac{1}{r}\pi_{\theta\theta}+\textcolor{red}{\frac{1}{r}\sum_sn_sV_{\theta s}^2} }_{\text{``Derivatives'' of potentials}} =\underbrace{(\boldsymbol{j}\times\boldsymbol{B})_r+\textcolor{blue}{\sigma E_r}-\textcolor{red}{\boldsymbol{\mathcal{F}}_{\text{c}}\cdot \hat{\boldsymbol{e}}_r}}_{\text{Forces}}.
 \]

\subsubsection{`Lorentz forces' and $\pi _{\theta\theta}$}
Using the definition of the DF (Equation (\ref{eq:ansatz})), let's now consider the form of $\pi_{\theta\theta}/r$,
\begin{eqnarray}
-\frac{1}{r}\pi_{\theta\theta}=-\frac{1}{r}\sum_s\int v_{\theta}^2f_sd^3v&=& -\sum_s\frac{1}{r}\left(n_sr^2\omega_s^2+K_1n_s\right),\nonumber\\
&\propto&-\sum_s\frac{1}{\tilde{r}}\left(   \frac{\tilde{r}^2\tilde{\omega}_s^2}{(1+\tilde{r}^2)^2}+\frac{K_1}{(1+\tilde{r}^2)^2}         \right),\nonumber
\end{eqnarray}
for $K_1$ a positive constant, and using elementary integrals. The second term on the RHS is seen to cancel with the first term on the RHS of Equation (\ref{eq:prrexpand}), i.e. $ P_{rr}/ r$. Also, we see from equations (\ref{eq:magfield}) and (\ref{eq:current}) that
\[
\left(\boldsymbol{j}\times\boldsymbol{B}\right)_r=-\frac{4k\tau B_0^2}{\mu_0}\frac{\tilde{r}}{(1+\tilde{r}^2)^2},
\]
and so we see that the $\boldsymbol{j}\times\boldsymbol{B}$ force has it's origins in $\pi_{\theta\theta}$. Now we are in a position to account for all the terms in force balance,
\[
 \underbrace{ \textcolor{blue}{\frac{\partial }{\partial r}\sum_s\frac{n_s}{\beta_s}}+\textcolor{red}{\frac{1}{r}\sum_sn_sV_{\theta s}^2} -  \textcolor{green}{\frac{1}{r}\sum_sn_sr^2\omega_s^2} }_{\text{``Derivatives'' of potentials}} =\underbrace{\textcolor{blue}{\sigma E_r}-\textcolor{red}{\boldsymbol{\mathcal{F}}_{\text{c}}\cdot \hat{\boldsymbol{e}}_r}  + \textcolor{green}{(\boldsymbol{j}\times\boldsymbol{B})_r} }_{\text{Forces}}.
 \]
 
 \subsubsection{Summary of force balance analysis}
 The conclusions reached from this analysis are somewhat general since some results did not depend on the specific electromagnetic fields $(\boldsymbol{E},\boldsymbol{B})$. Regardless, we see that 
 \begin{itemize}
 \item The electric field sources/balances gradients in the particle number densities
 \item The centripetal forces are sourced/balanced by the bulk angular flows, $V_{\theta s} (r)$
 \item The Lorentz force is sourced/balanced by a centripetal-type force, that treats the flow as uniform circular motion, $V_{\theta s}=r\tilde{\omega}_s$, i.e. rotational flows consistent with a rigid-rotor (see Section \ref{sec:thedf}).
 \end{itemize}

\begin{figure}
    \centering
    \begin{subfigure}[b]{0.45\textwidth}
        \includegraphics[width=\textwidth]{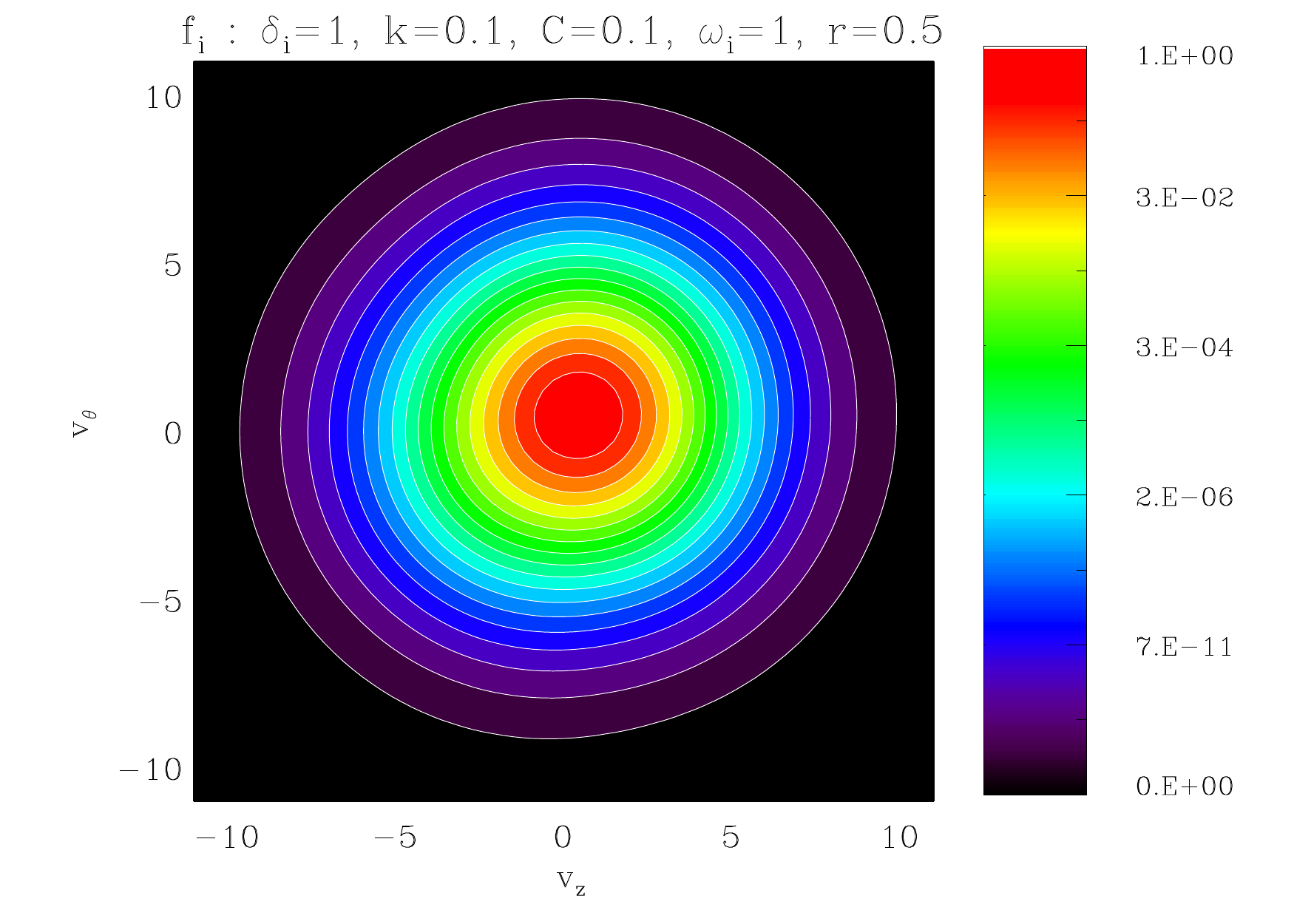}
        \caption{\small $(\tilde{\omega}_i,\tilde{r},C_i)=(1,0.5,0.1)$}
        \label{fig:4a}
    \end{subfigure}
       \begin{subfigure}[b]{0.45\textwidth}
        \includegraphics[width=\textwidth]{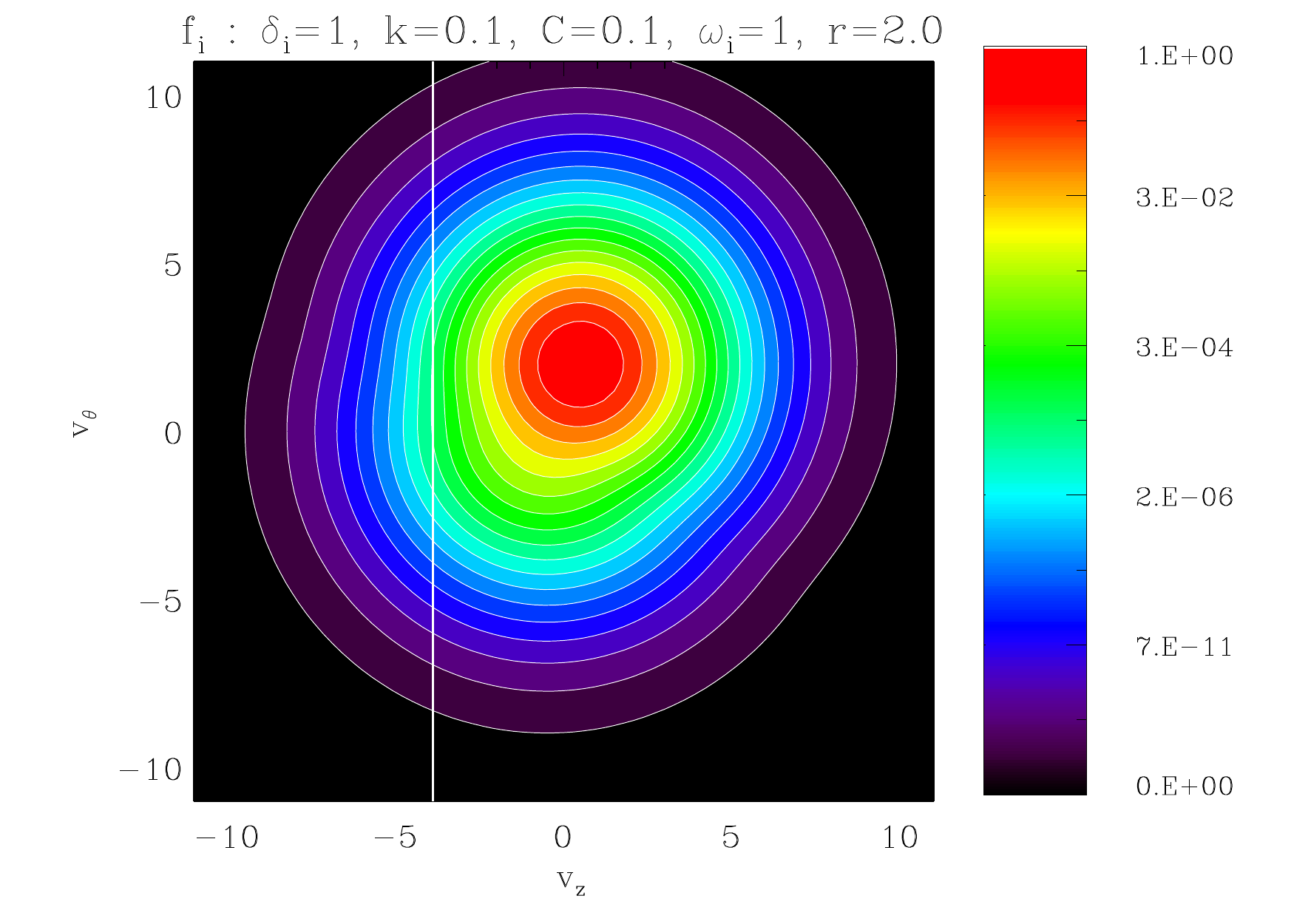}
        \caption{\small $(\tilde{\omega}_i,\tilde{r},C_i)=(1,2,0.1)$}
        \label{fig:4b}
    \end{subfigure}
        \begin{subfigure}[b]{0.45\textwidth}
        \includegraphics[width=\textwidth]{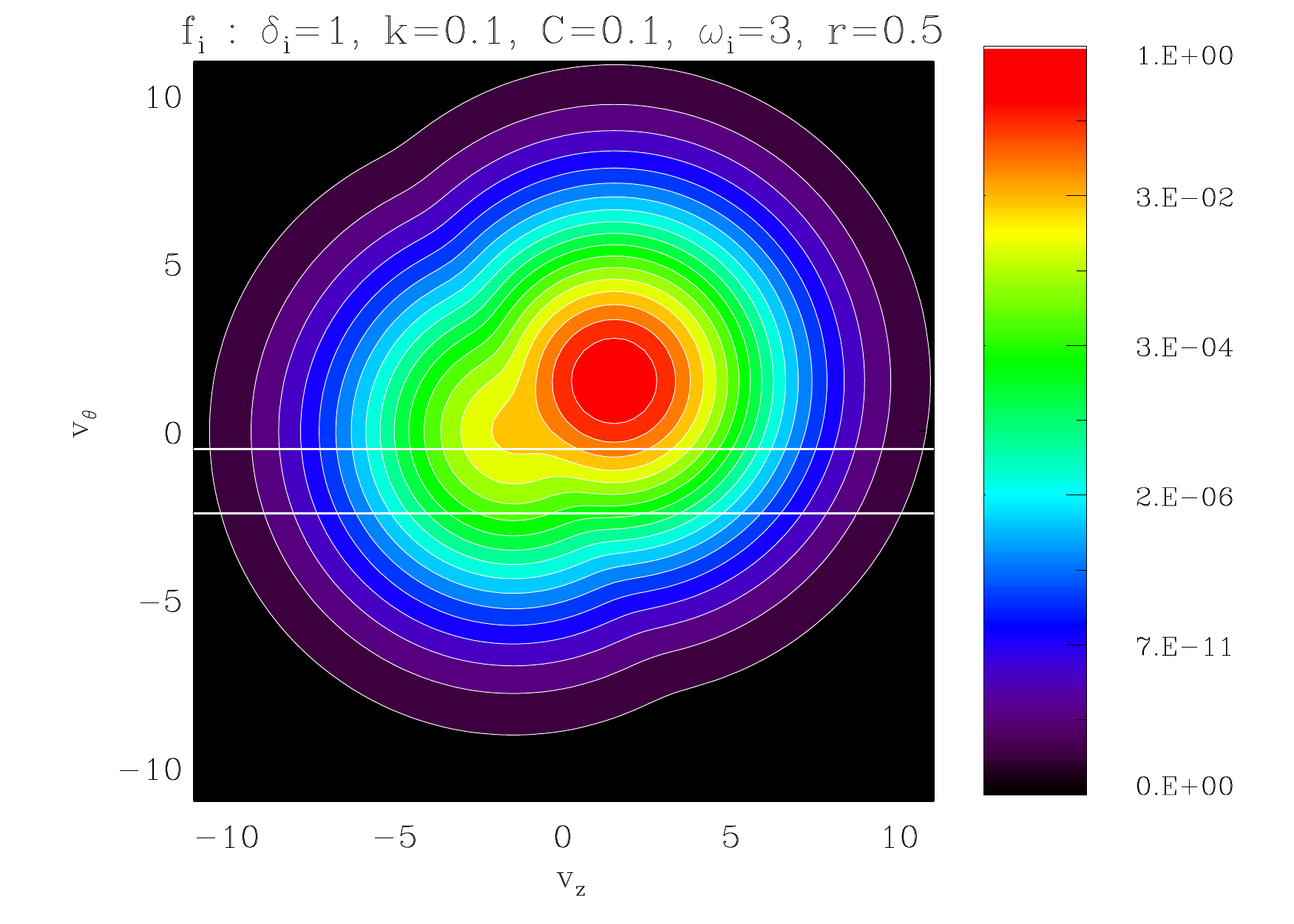}
        \caption{\small $(\tilde{\omega}_i,\tilde{r},C_i)=(3,0.5,0.1)$}
        \label{fig:4c}
    \end{subfigure}
    \begin{subfigure}[b]{0.45\textwidth}
        \includegraphics[width=\textwidth]{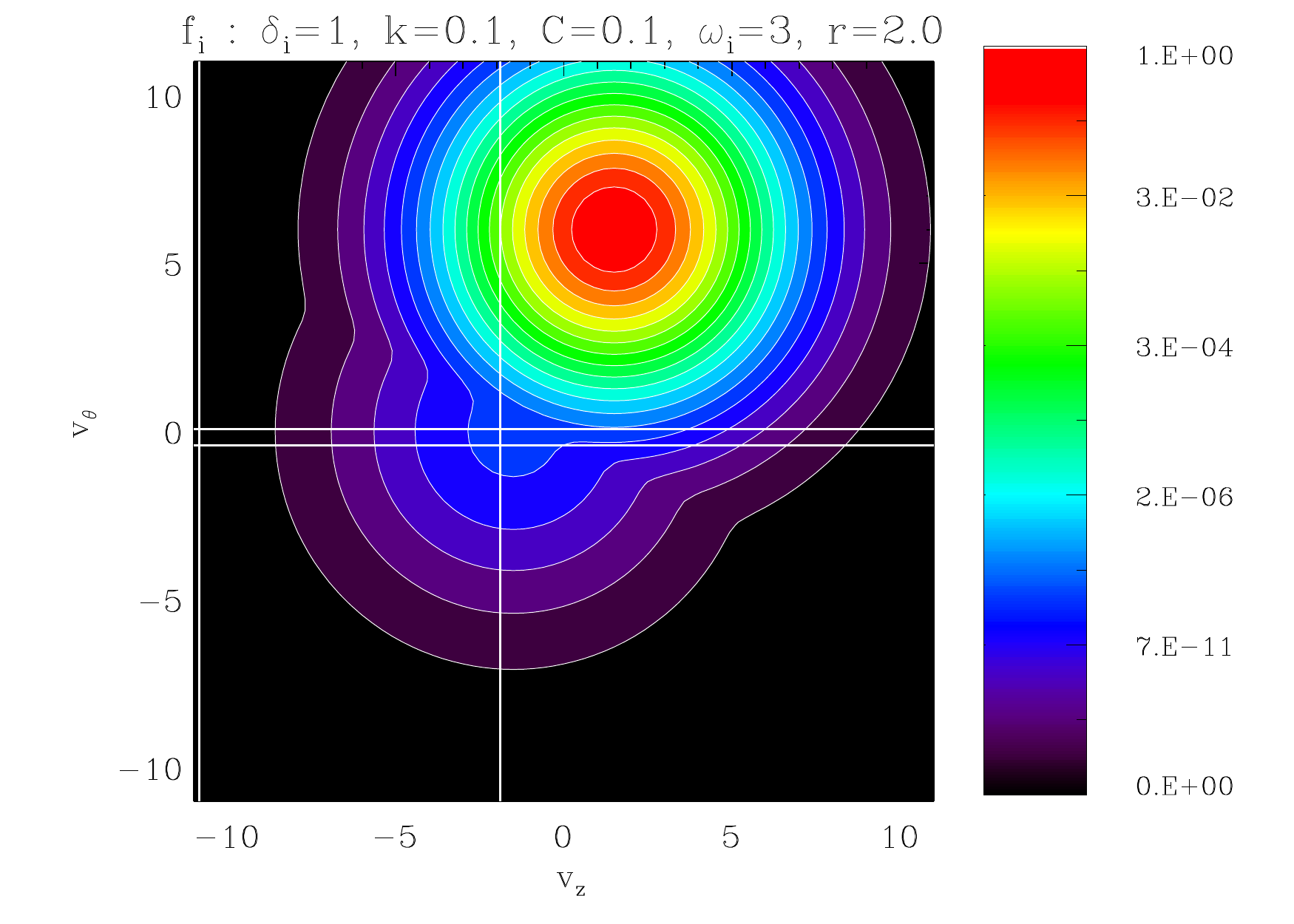}
        \caption{\small $(\tilde{\omega}_i,\tilde{r},C_i)=(3,2,0.1)$}
        \label{fig:4d}
    \end{subfigure}
        \begin{subfigure}[b]{0.45\textwidth}
        \includegraphics[width=\textwidth]{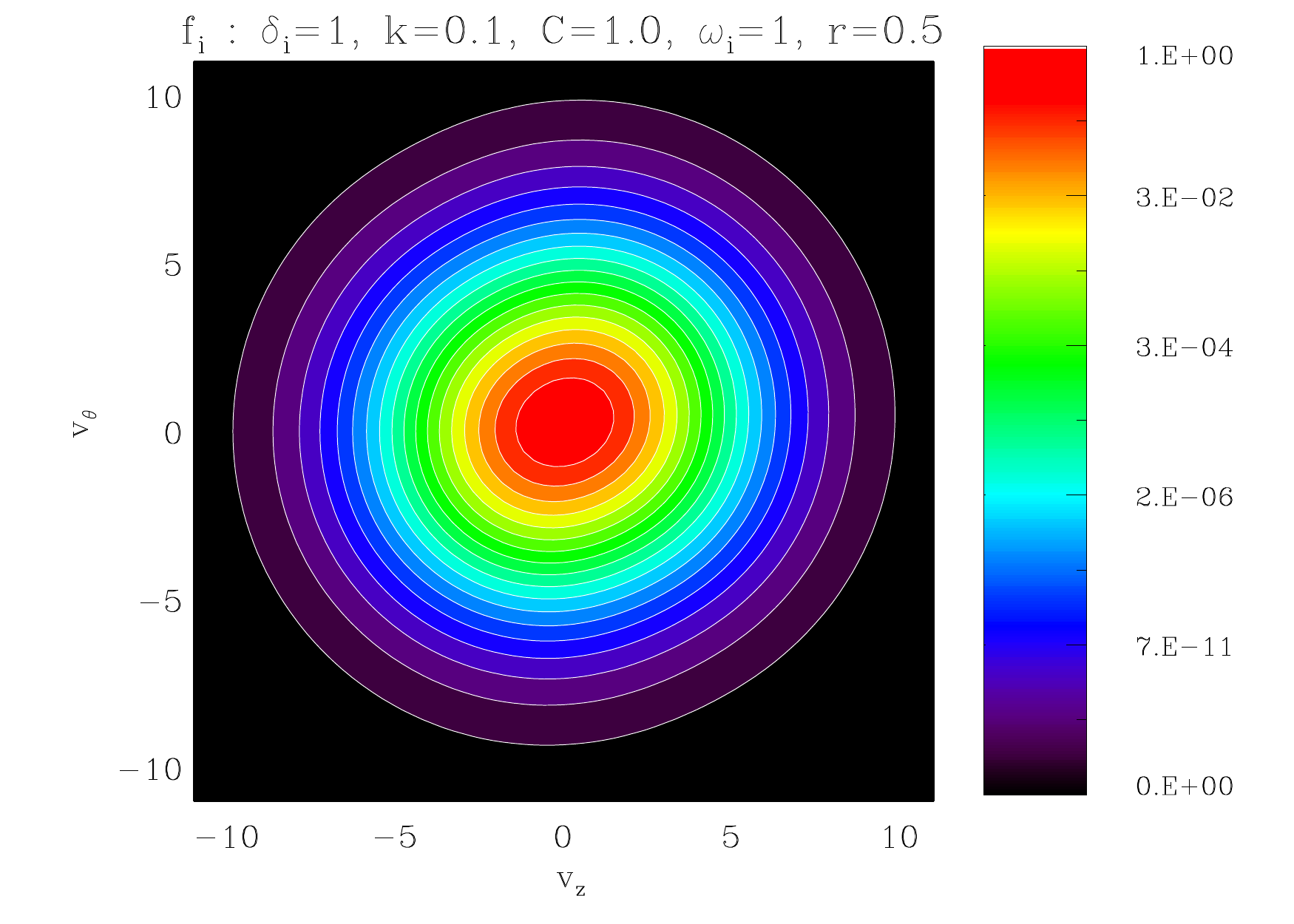}
        \caption{\small $(\tilde{\omega}_i,\tilde{r},C_i)=(1,0.5,1)$}
        \label{fig:4e}
    \end{subfigure}
    \begin{subfigure}[b]{0.45\textwidth}
        \includegraphics[width=\textwidth]{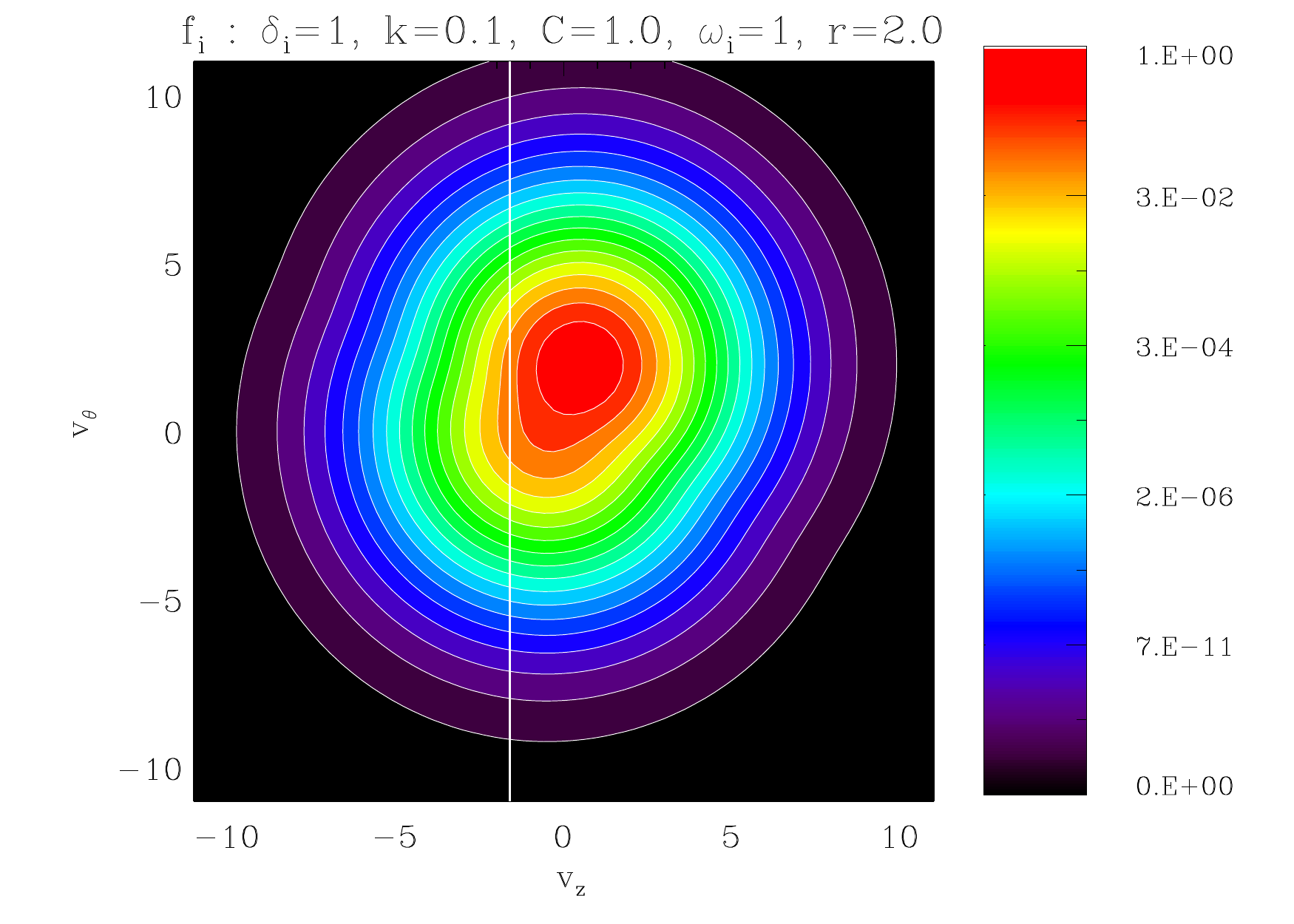}
        \caption{\small $(\tilde{\omega}_i,\tilde{r},C_i)=(1,2,1)$}
        \label{fig:4f}
    \end{subfigure}
        \begin{subfigure}[b]{0.45\textwidth}
        \includegraphics[width=\textwidth]{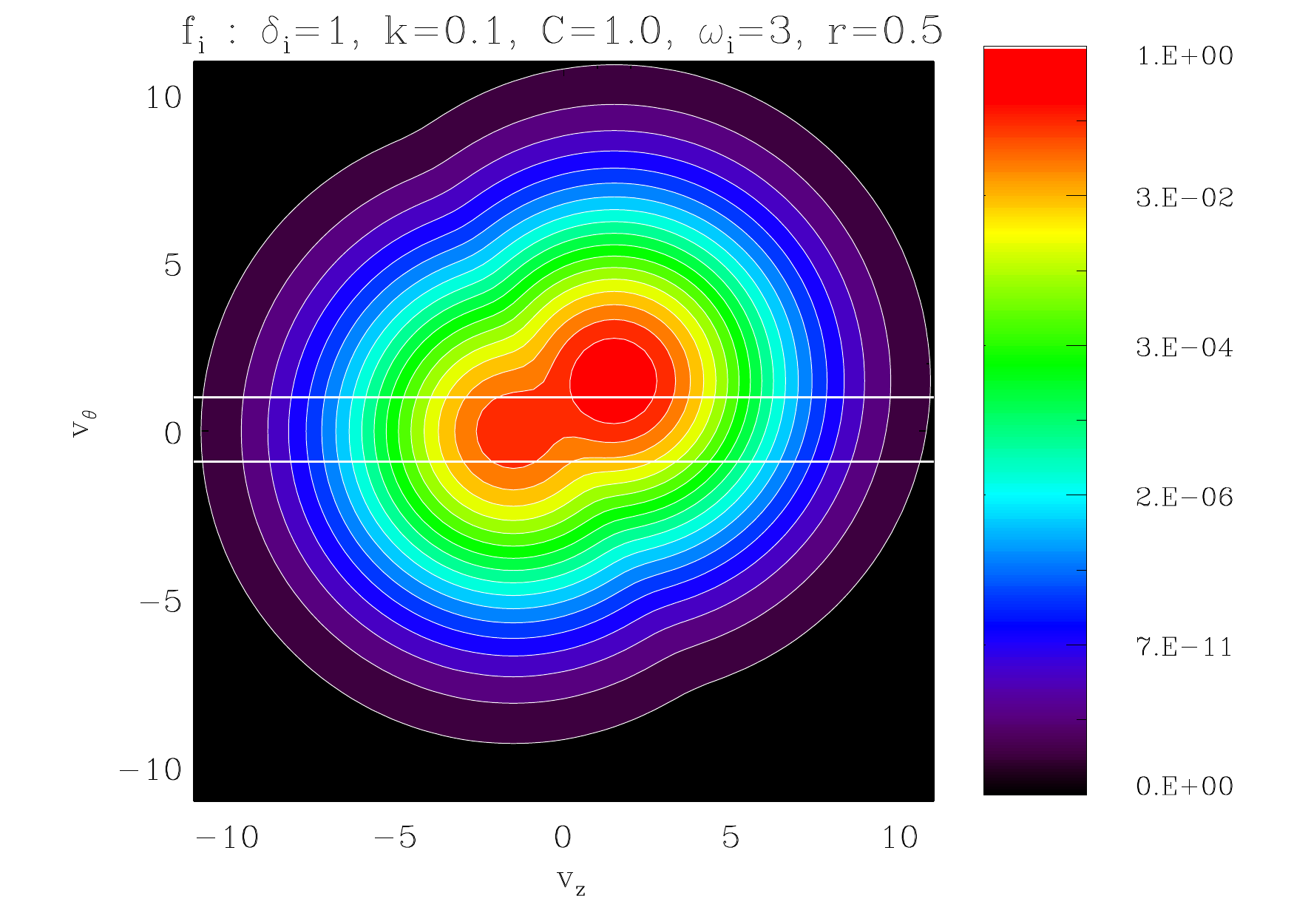}
        \caption{\small $(\tilde{\omega}_i,\tilde{r},C_i)=(3,0.5,1)$}
        \label{fig:4g}
    \end{subfigure}
     \begin{subfigure}[b]{0.45\textwidth}
        \includegraphics[width=\textwidth]{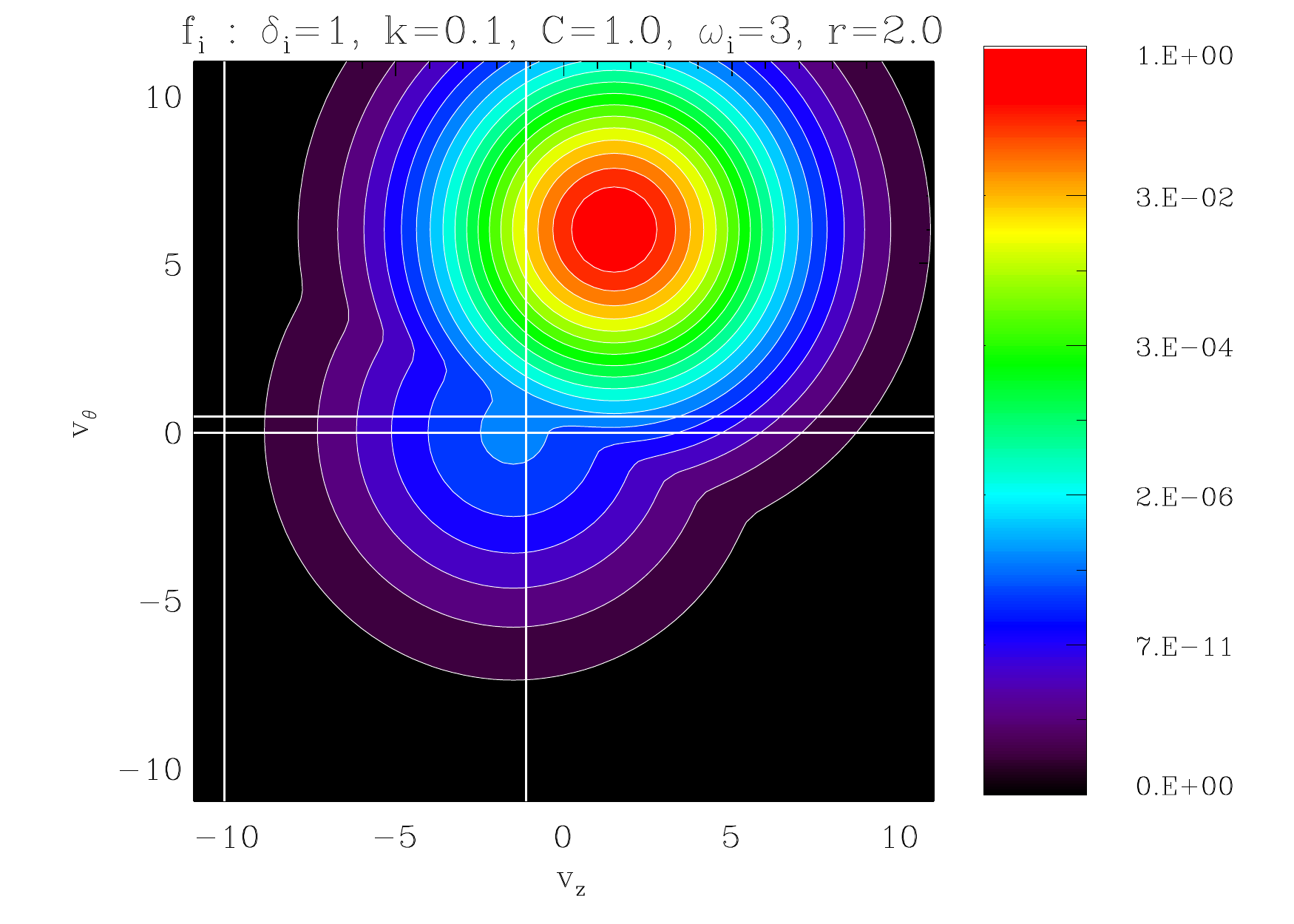}
        \caption{\small $(\tilde{\omega}_i,\tilde{r},C_i)=(3,2,1)$}
        \label{fig:4h}
    \end{subfigure}
    \caption{\small Contour plots of the $f_i$ in $(\tilde{v}_z,\tilde{v}_{\theta})$ space for an equilibrium with field reversal ($k=0.1<0.5$), for a variety of parameters ($\tilde{\omega}_i,\tilde{r}, C_i$) and $\delta_i=1$. The white horizontal/vertical lines indicate the regions in which multiple maxima in either the $\tilde{v}_z$ or $\tilde{v}_{z}$ directions can occur, if at all. A single line indicates that the `region' is a line.   }\label{fig:4}
\end{figure}

\begin{figure}
    \centering
    \begin{subfigure}[b]{0.45\textwidth}
        \includegraphics[width=\textwidth]{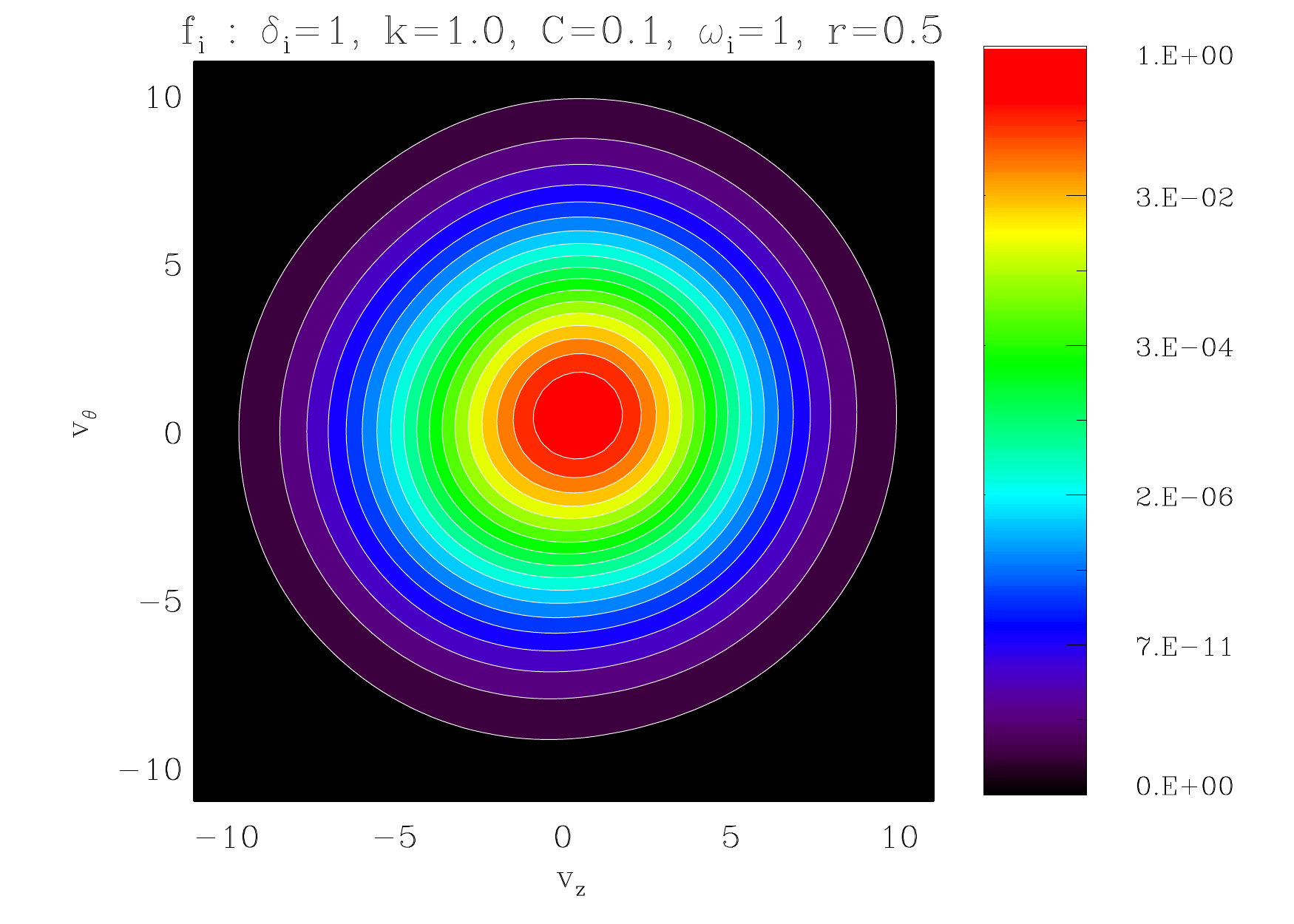}
        \caption{\small $(\tilde{\omega}_i,\tilde{r},C_i)=(1,0.5,0.1)$}
        \label{fig:5a}
    \end{subfigure}
       \begin{subfigure}[b]{0.45\textwidth}
        \includegraphics[width=\textwidth]{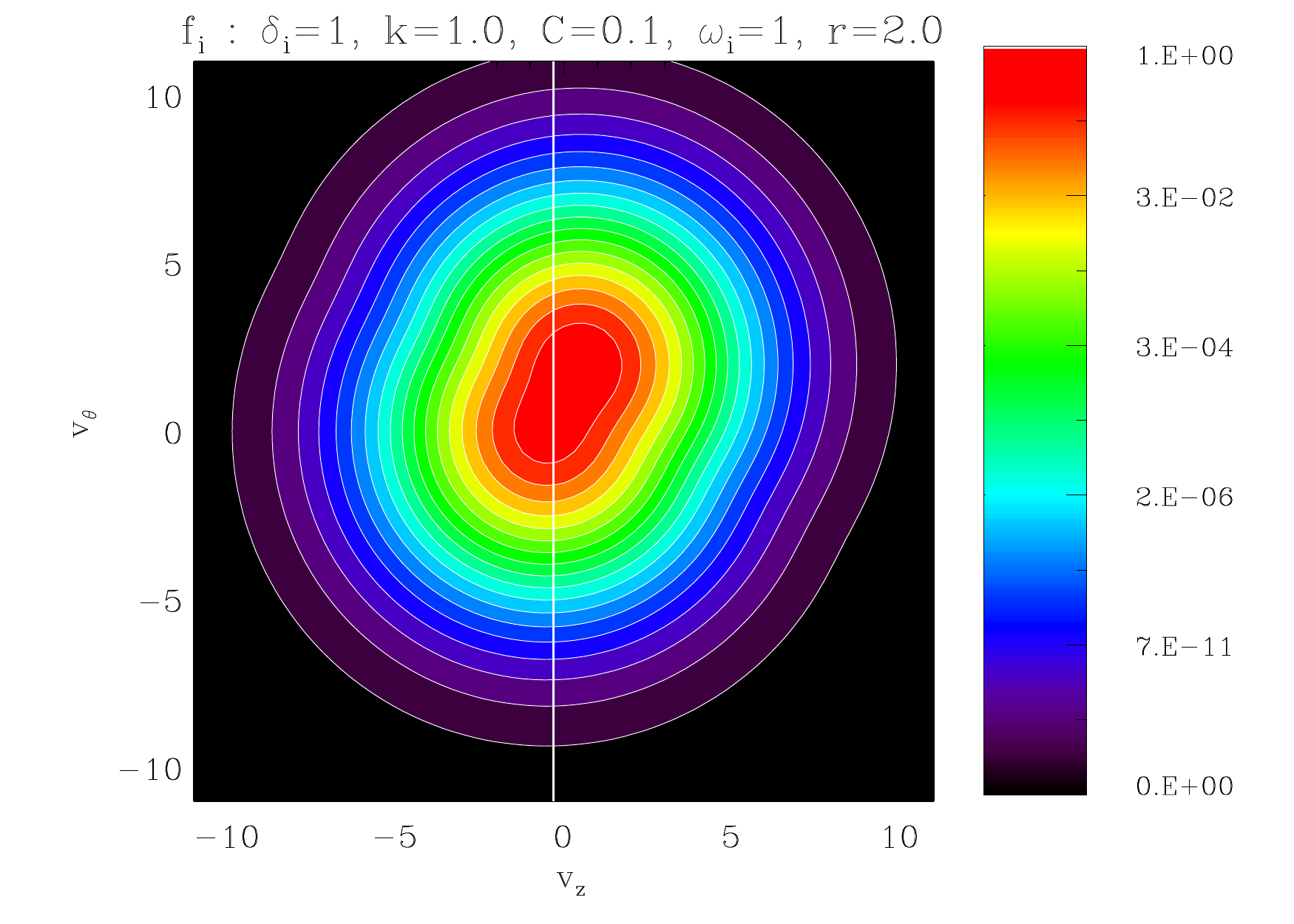}
        \caption{\small $(\tilde{\omega}_i,\tilde{r},C_i)=(1,2,0.1)$}
        \label{fig:5b}
    \end{subfigure}
        \begin{subfigure}[b]{0.45\textwidth}
        \includegraphics[width=\textwidth]{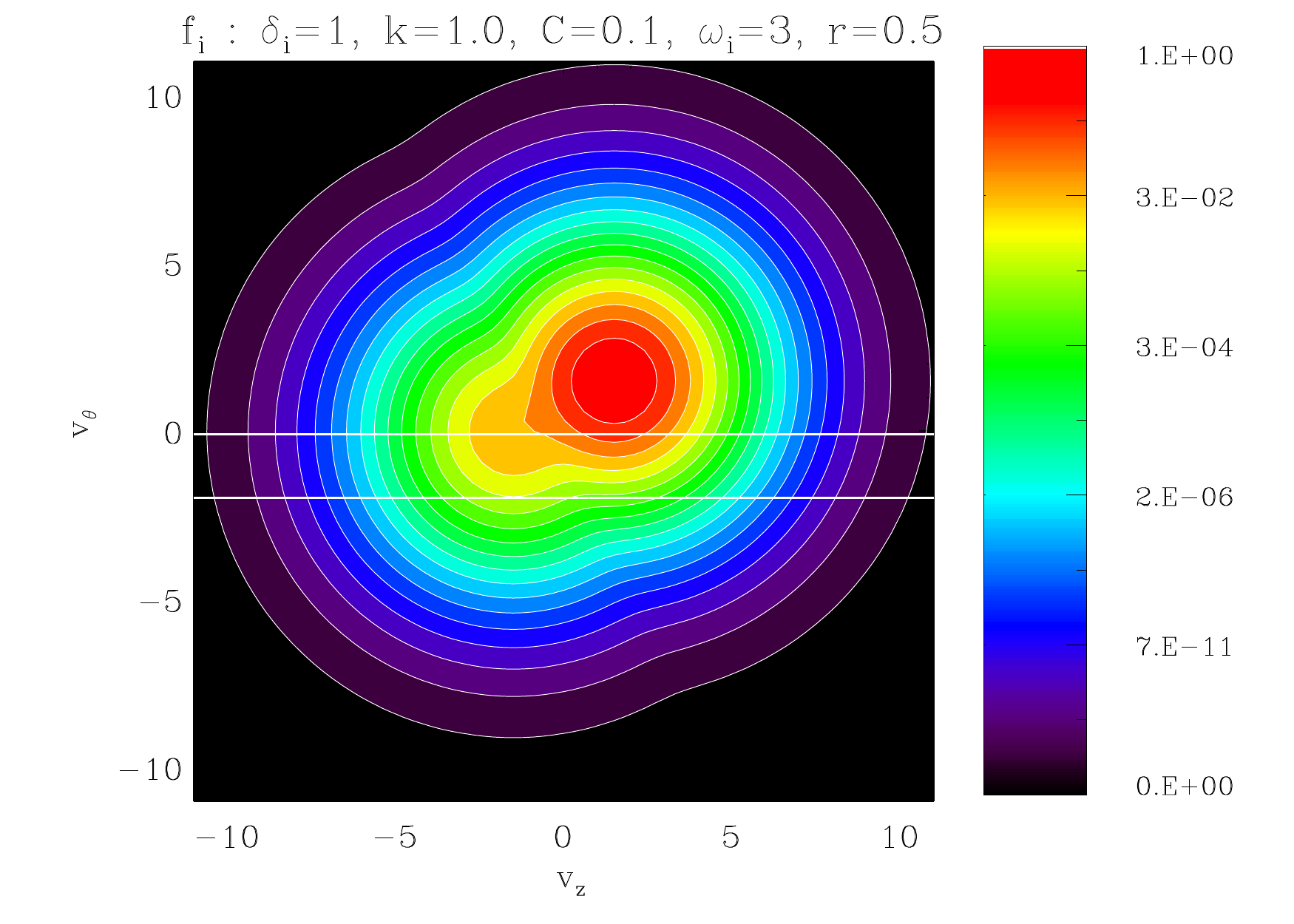}
        \caption{\small $(\tilde{\omega}_i,\tilde{r},C_i)=(3,0.5,0.1)$}
        \label{fig:5c}
    \end{subfigure}
    \begin{subfigure}[b]{0.45\textwidth}
        \includegraphics[width=\textwidth]{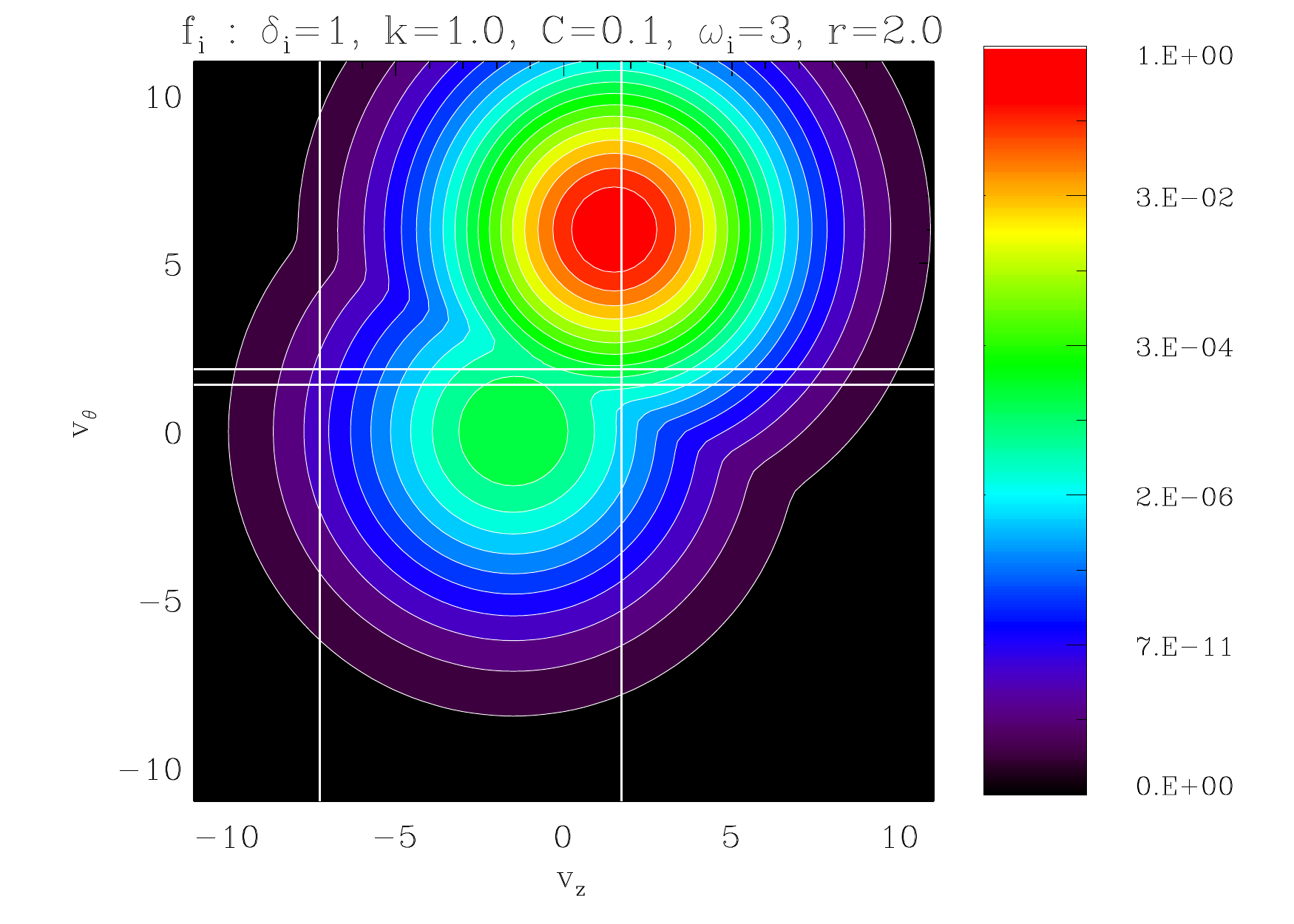}
        \caption{\small $(\tilde{\omega}_i,\tilde{r},C_i)=(3,2,0.1)$}
        \label{fig:5d}
    \end{subfigure}
        \begin{subfigure}[b]{0.45\textwidth}
        \includegraphics[width=\textwidth]{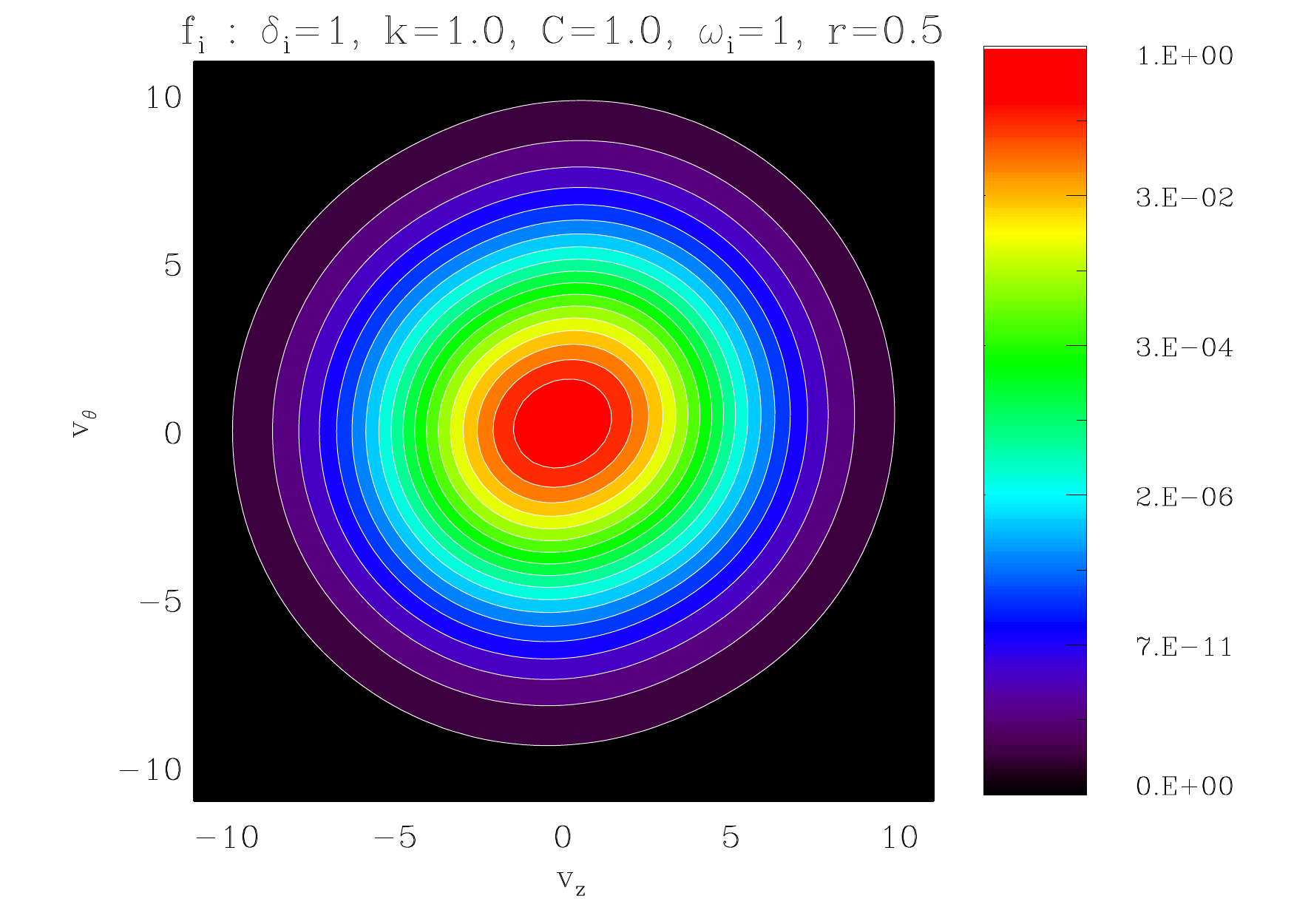}
        \caption{\small $(\tilde{\omega}_i,\tilde{r},C_i)=(1,0.5,1)$}
        \label{fig:5e}
    \end{subfigure}
    \begin{subfigure}[b]{0.45\textwidth}
        \includegraphics[width=\textwidth]{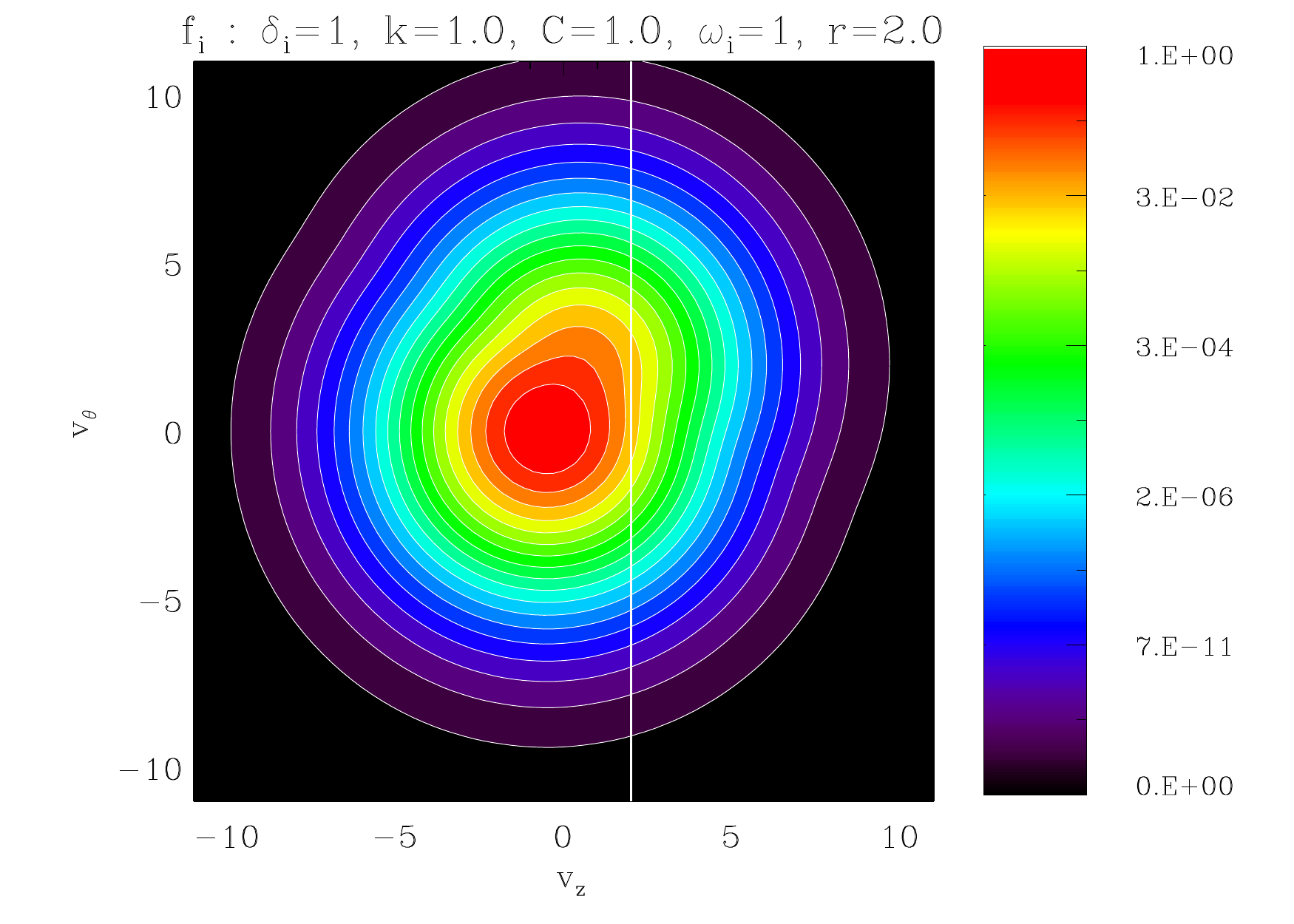}
        \caption{\small $(\tilde{\omega}_i,\tilde{r},C_i)=(1,2,1)$}
        \label{fig:5f}
    \end{subfigure}
        \begin{subfigure}[b]{0.45\textwidth}
        \includegraphics[width=\textwidth]{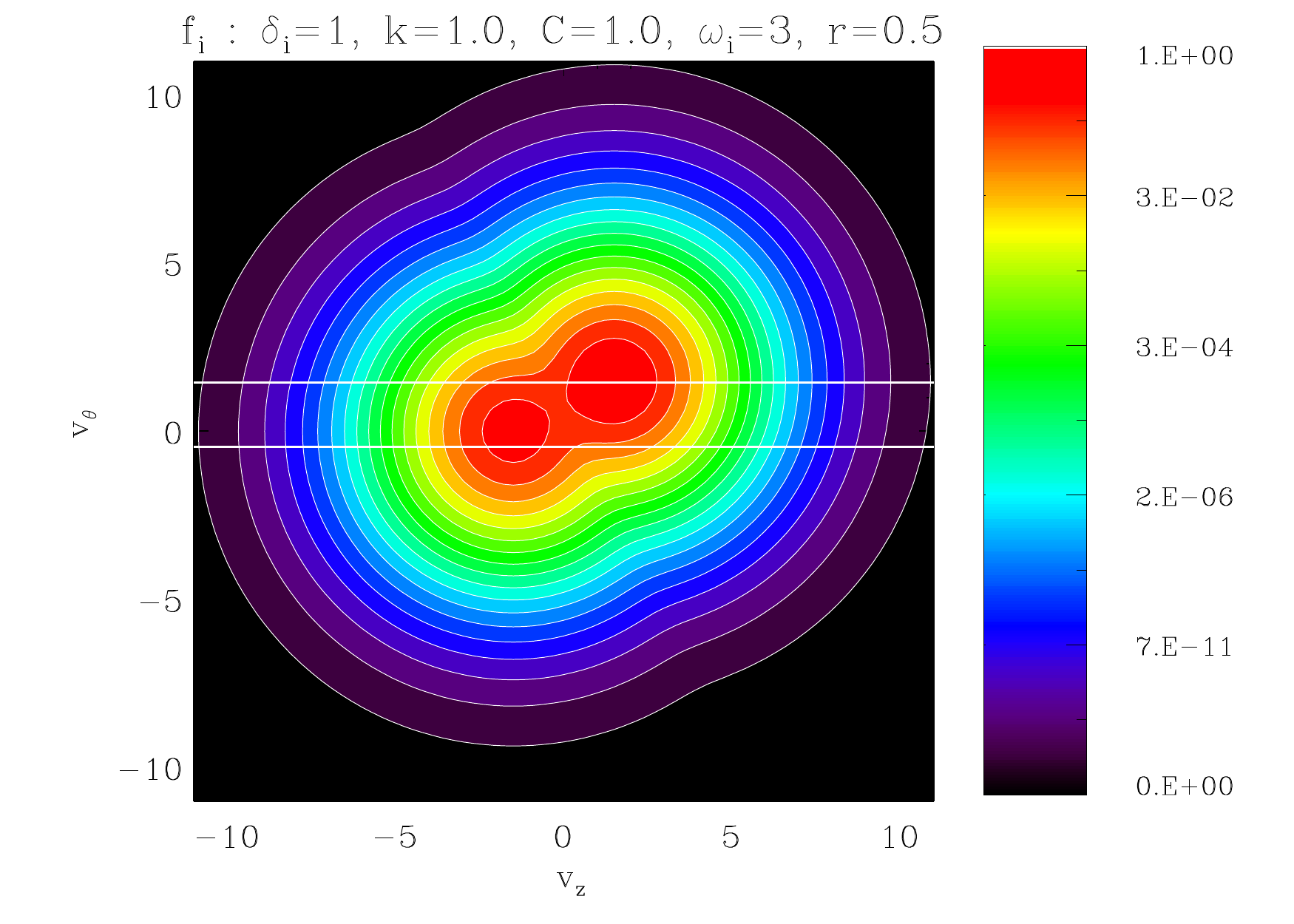}
        \caption{\small $(\tilde{\omega}_i,\tilde{r},C_i)=(3,0.5,1)$}
        \label{fig:5g}
    \end{subfigure}
     \begin{subfigure}[b]{0.45\textwidth}
        \includegraphics[width=\textwidth]{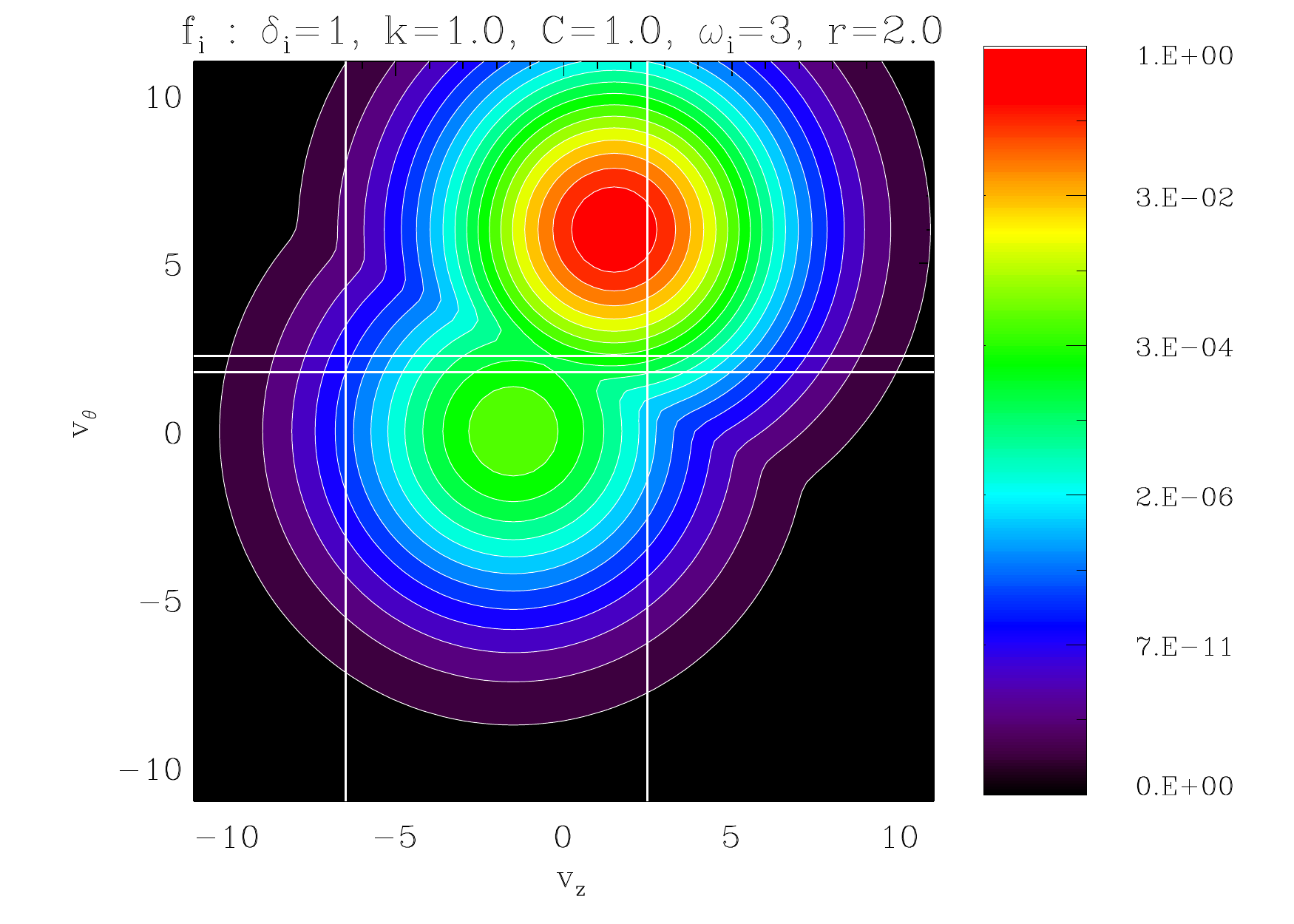}
        \caption{\small $(\tilde{\omega}_i,\tilde{r},C_i)=(3,2,1)$}
        \label{fig:5h}
    \end{subfigure}
    \caption{\small Contour plots of $f_i$ in $(\tilde{v}_z,\tilde{v}_{\theta})$ space for an equilibrium without field reversal ($k=1>0.5$), for a variety of parameters ($\tilde{\omega}_i,\tilde{r},C_i$) and $\delta_i=1$. The white horizontal/vertical lines indicate the regions in which multiple maxima in either the $\tilde{v}_z$ or $\tilde{v}_{z}$ directions can occur, if at all. A single line indicates that the `region' is a line.  }\label{fig:5}
\end{figure}

\begin{figure}
    \centering
    \begin{subfigure}[b]{0.45\textwidth}
        \includegraphics[width=\textwidth]{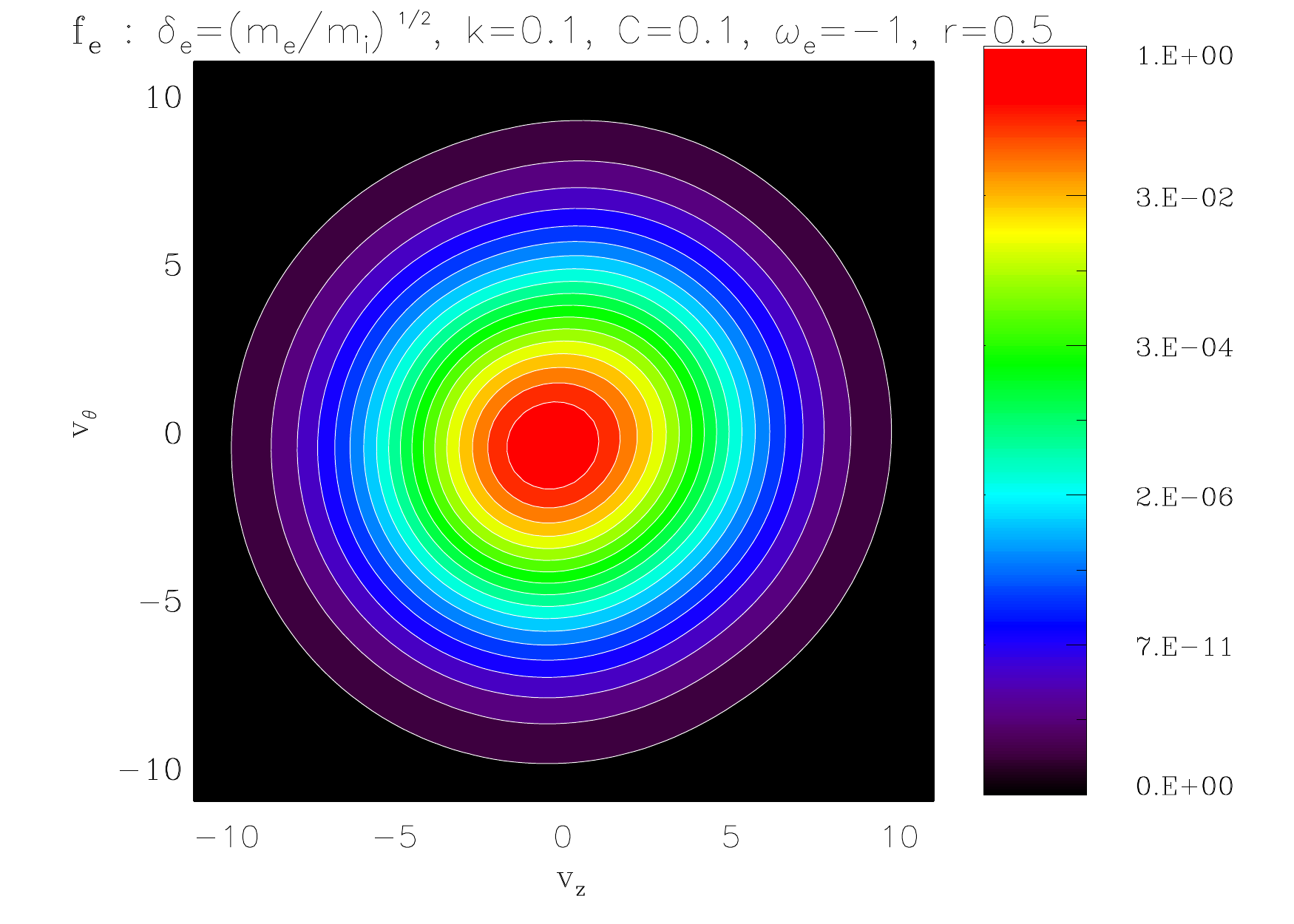}
        \caption{\small $(\tilde{\omega}_e,\tilde{r},C_e)=(-1,0.5,0.1)$}
        \label{fig:}
    \end{subfigure}
       \begin{subfigure}[b]{0.45\textwidth}
        \includegraphics[width=\textwidth]{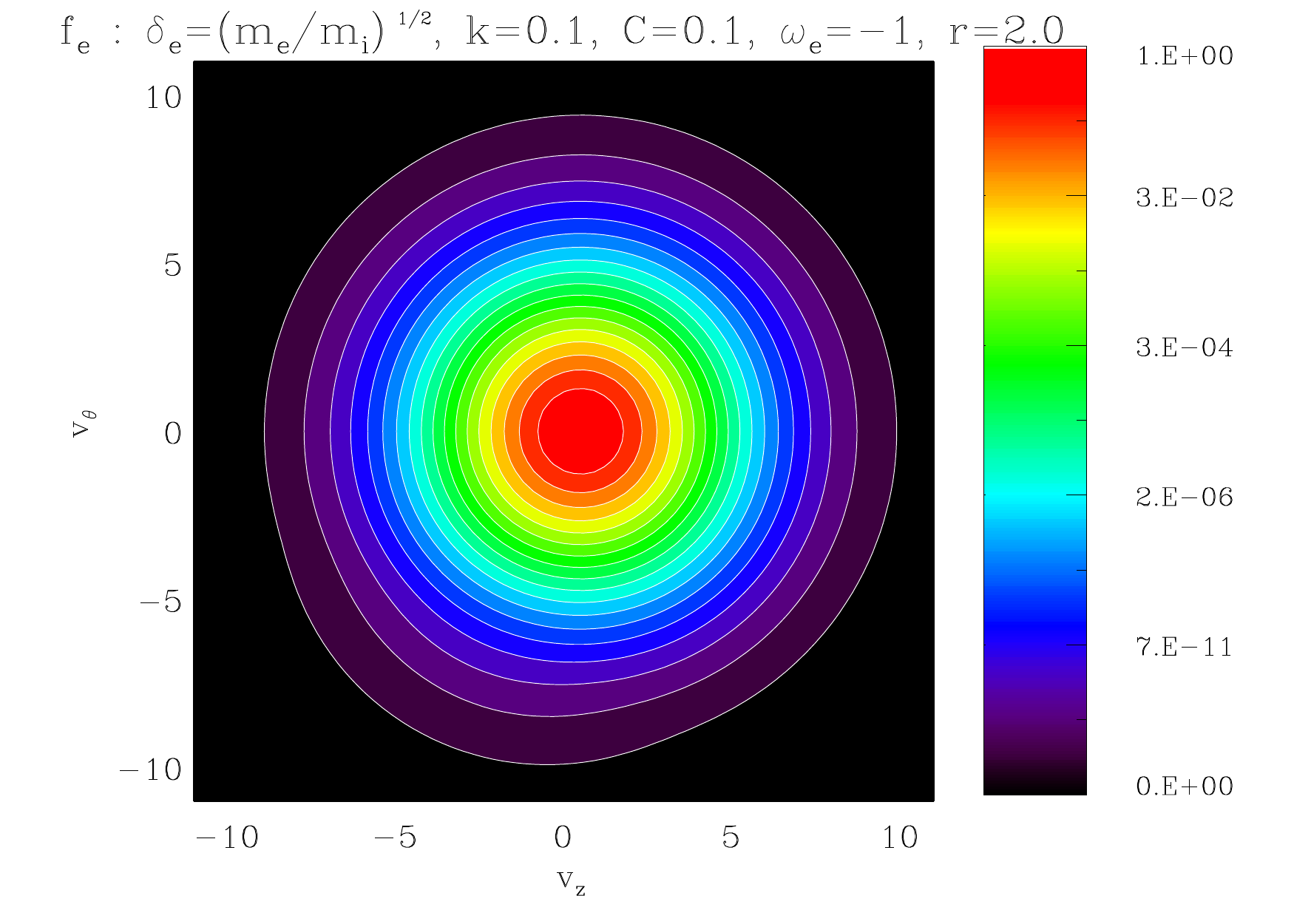}
        \caption{\small $(\tilde{\omega}_e,\tilde{r},C_e)=(-1,2,0.1)$}
        \label{fig:}
    \end{subfigure}
        \begin{subfigure}[b]{0.45\textwidth}
        \includegraphics[width=\textwidth]{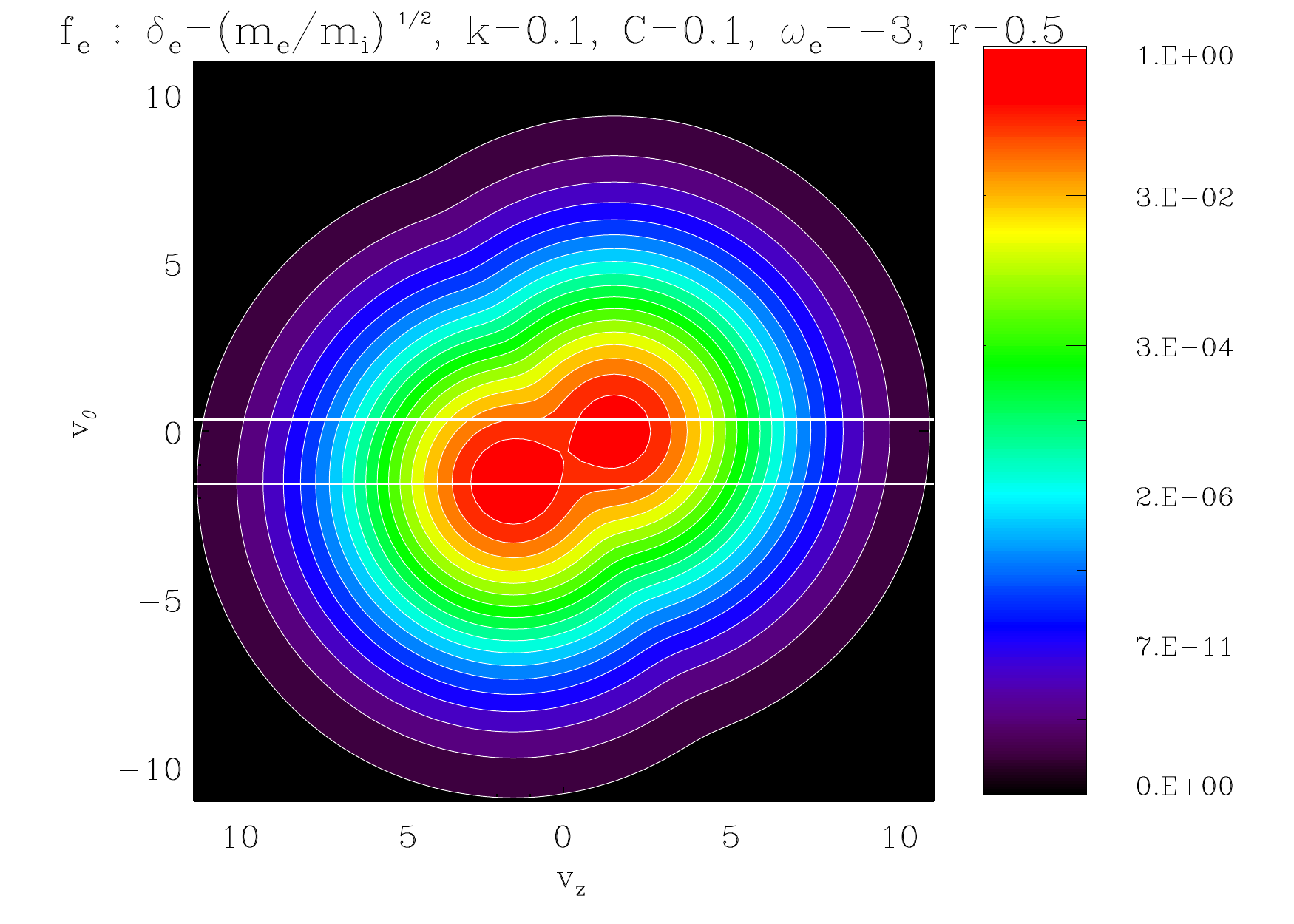}
        \caption{\small $(\tilde{\omega}_e,\tilde{r},C_e)=(-3,0.5,0.1)$}
        \label{fig:}
    \end{subfigure}
    \begin{subfigure}[b]{0.45\textwidth}
        \includegraphics[width=\textwidth]{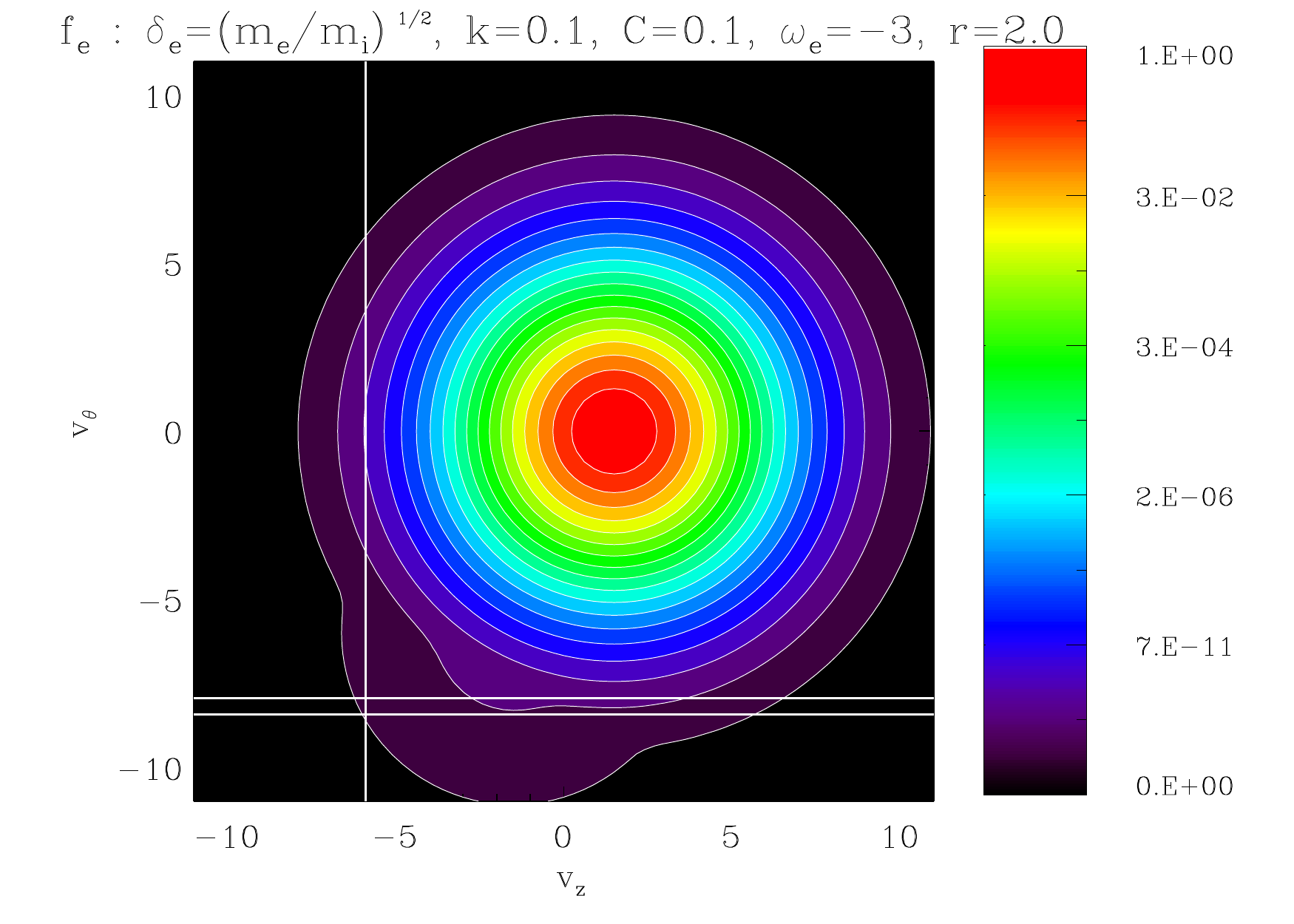}
        \caption{\small $(\tilde{\omega}_e,\tilde{r},C_e)=(-3,2,0.1)$}
        \label{fig:}
    \end{subfigure}
        \begin{subfigure}[b]{0.45\textwidth}
        \includegraphics[width=\textwidth]{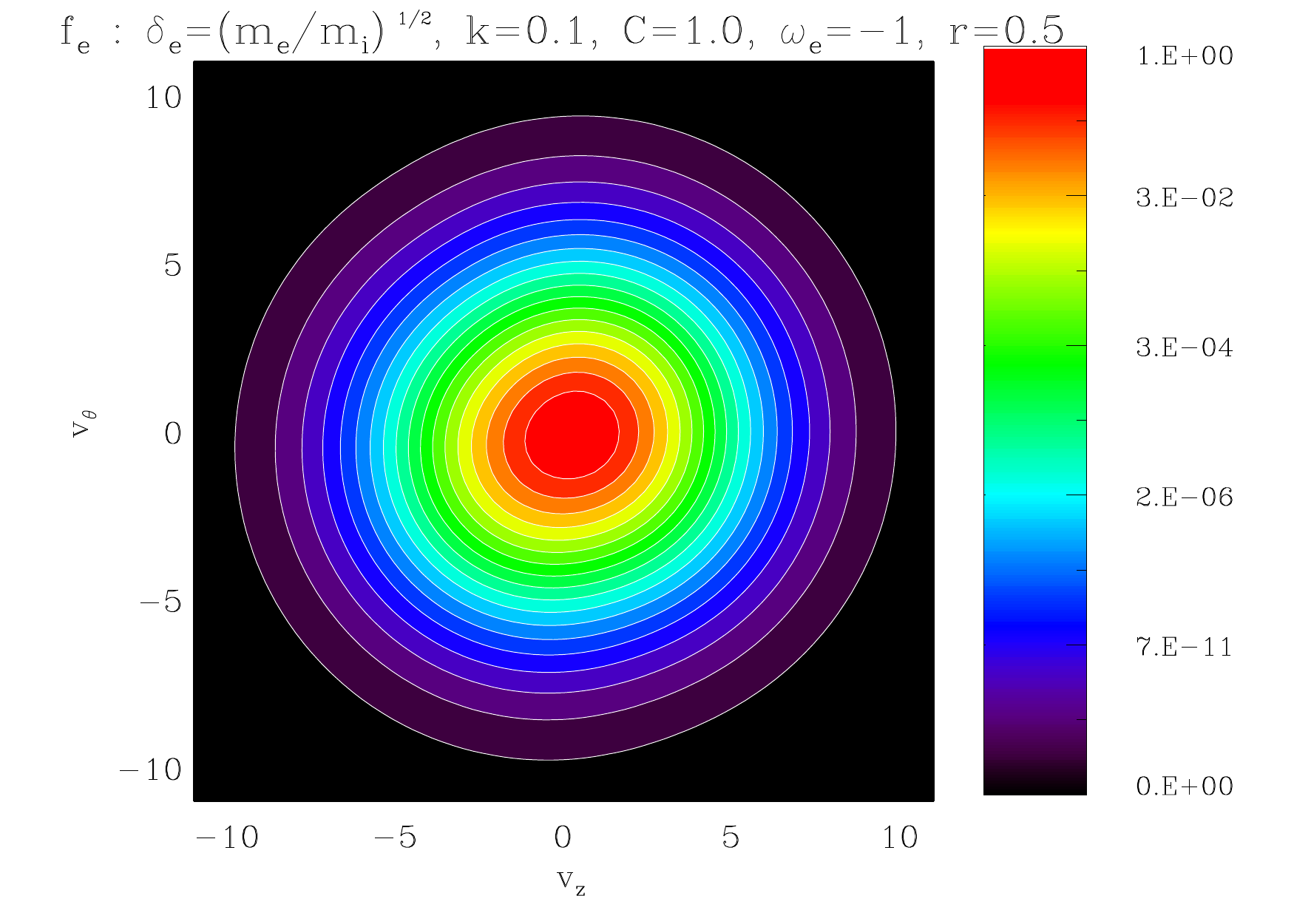}
        \caption{\small $(\tilde{\omega}_e,\tilde{r},C_e)=(-1,0.5,1)$}
        \label{fig:}
    \end{subfigure}
    \begin{subfigure}[b]{0.45\textwidth}
        \includegraphics[width=\textwidth]{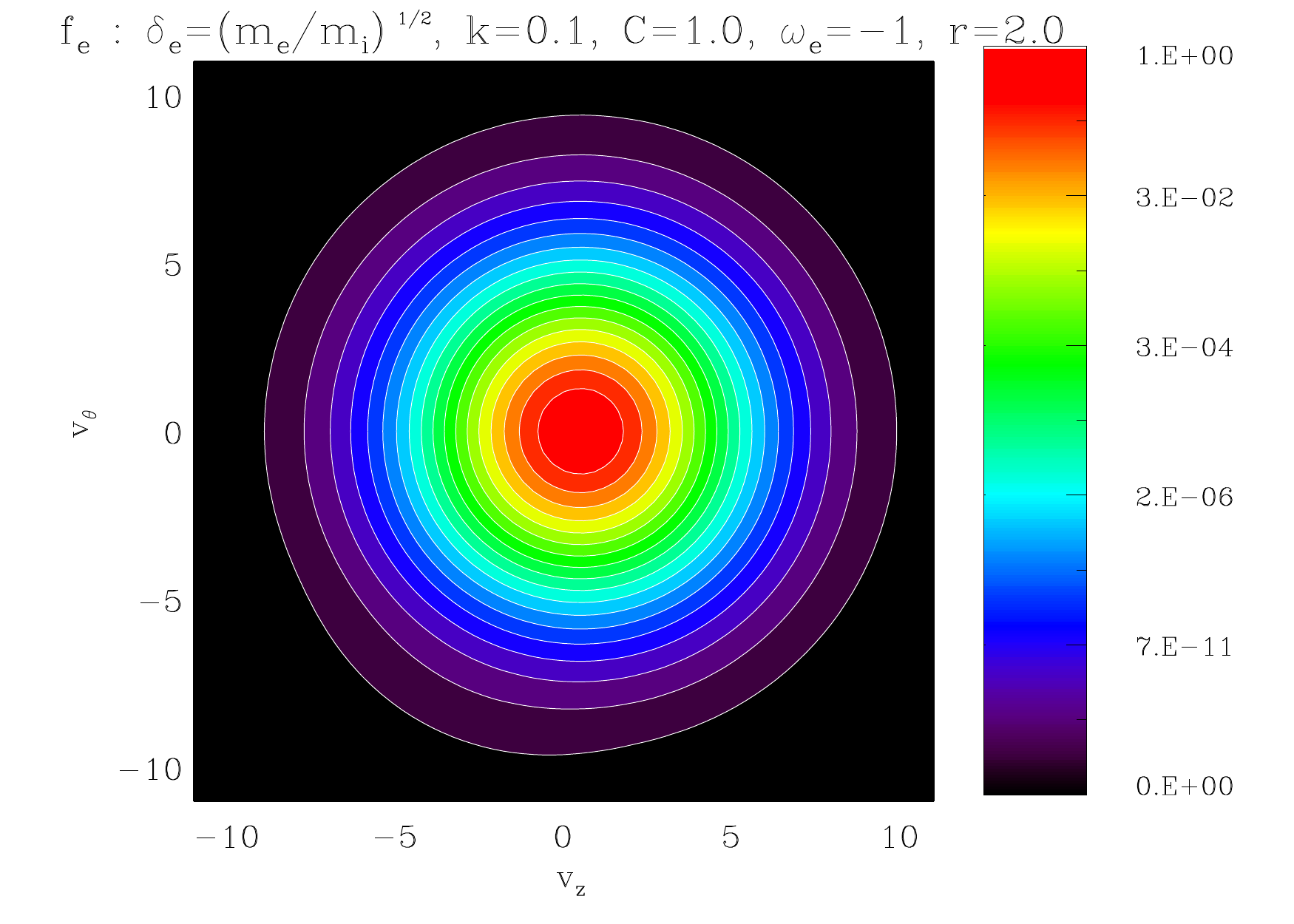}
        \caption{\small $(\tilde{\omega}_e,\tilde{r},C_e)=(-1,2,1)$}
        \label{fig:}
    \end{subfigure}
        \begin{subfigure}[b]{0.45\textwidth}
        \includegraphics[width=\textwidth]{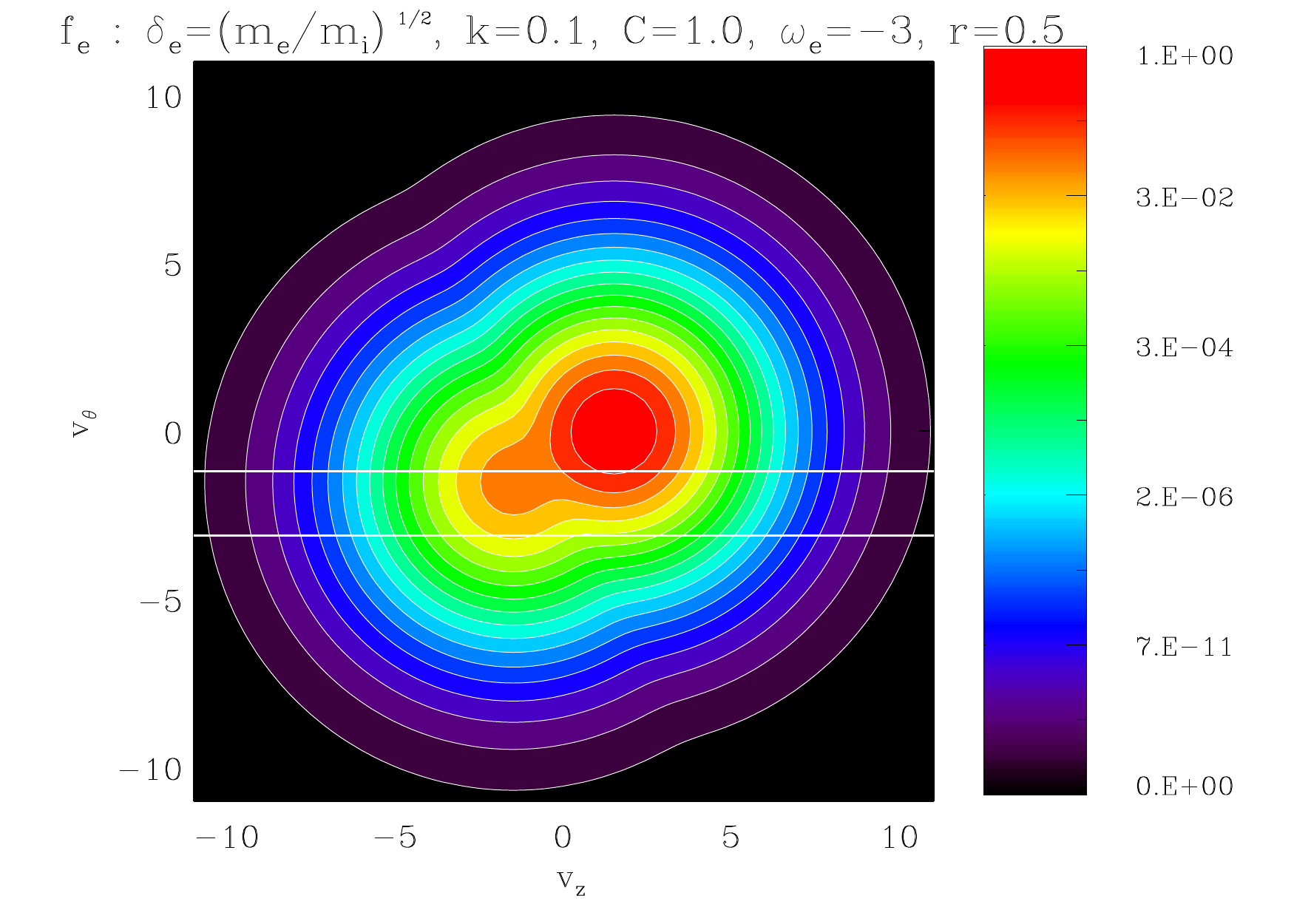}
        \caption{\small $(\tilde{\omega}_e,\tilde{r},C_e)=(-3,0.5,1)$}
        \label{fig:}
    \end{subfigure}
     \begin{subfigure}[b]{0.45\textwidth}
        \includegraphics[width=\textwidth]{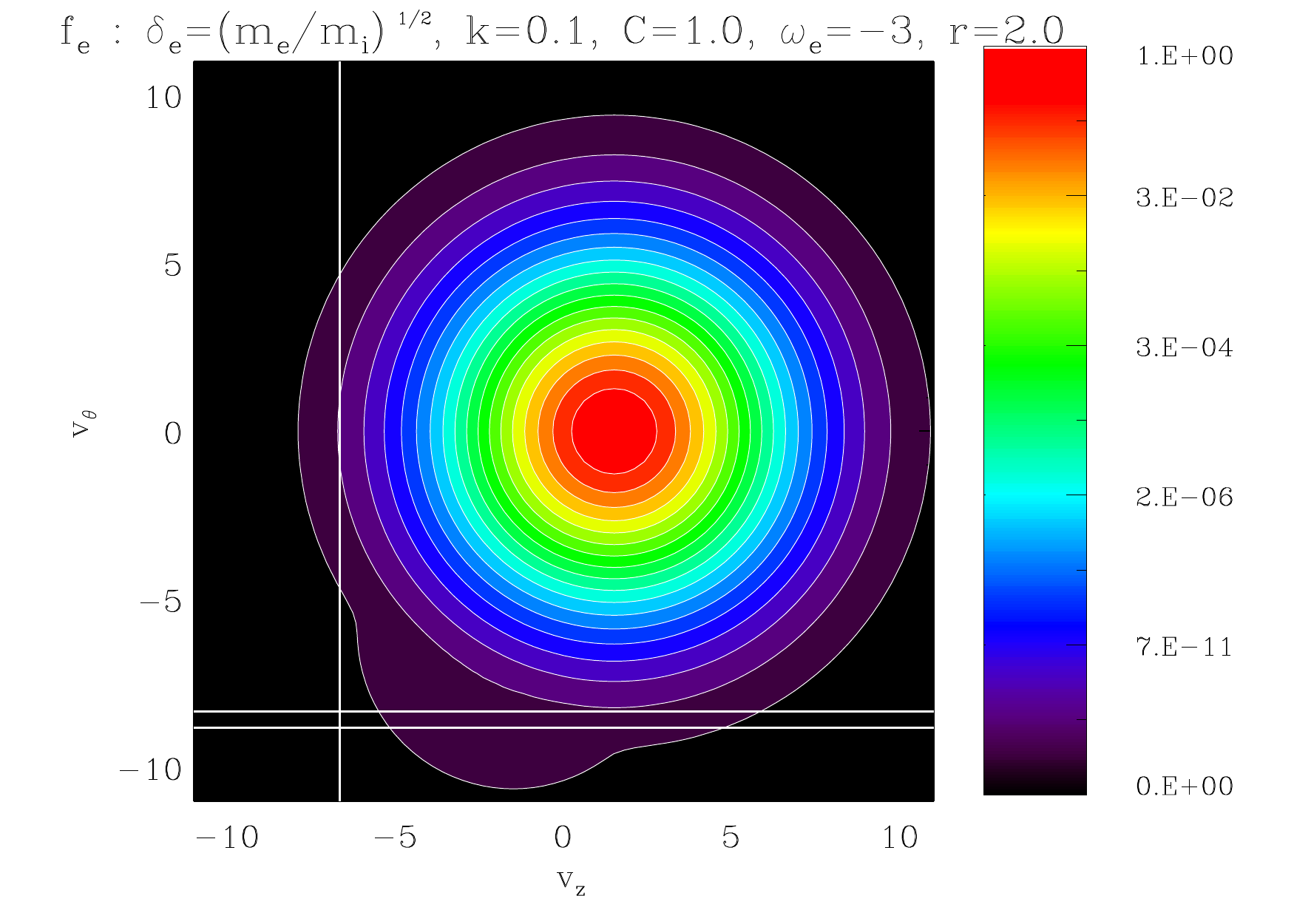}
        \caption{\small $(\tilde{\omega}_e,\tilde{r},C_e)=(-3,2,1)$}
        \label{fig:}
    \end{subfigure}
    \caption{\small     Contour plots of $f_e$ in $(\tilde{v}_z,\tilde{v}_{\theta})$ space for an equilibrium with field reversal ($k=0.1<0.5$), for a variety of parameters ($\tilde{\omega}_e,\tilde{r},C_e$) and $\delta_e\approx1/\sqrt{1836}$. The white horizontal/vertical lines indicate the regions in which multiple maxima in either the $\tilde{v}_z$ or $\tilde{v}_{z}$ directions can occur, if at all. A single line indicates that the `region' is a line.   }\label{fig:6}
\end{figure}

\begin{figure}
    \centering
    \begin{subfigure}[b]{0.45\textwidth}
        \includegraphics[width=\textwidth]{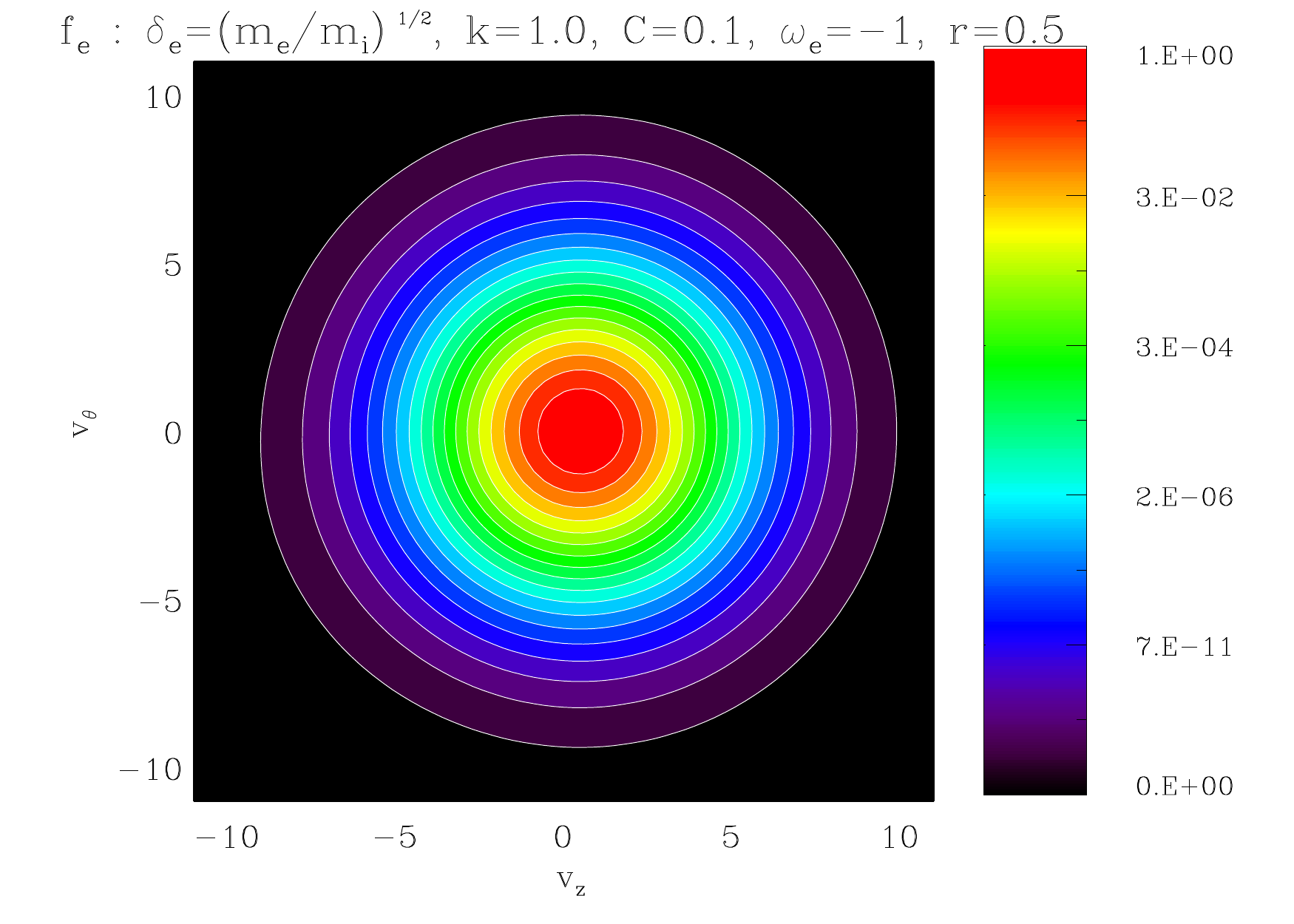}
        \caption{\small $(\tilde{\omega}_e,\tilde{r},C_e)=(-1,0.5,0.1)$}
        \label{fig:}
    \end{subfigure}
       \begin{subfigure}[b]{0.45\textwidth}
        \includegraphics[width=\textwidth]{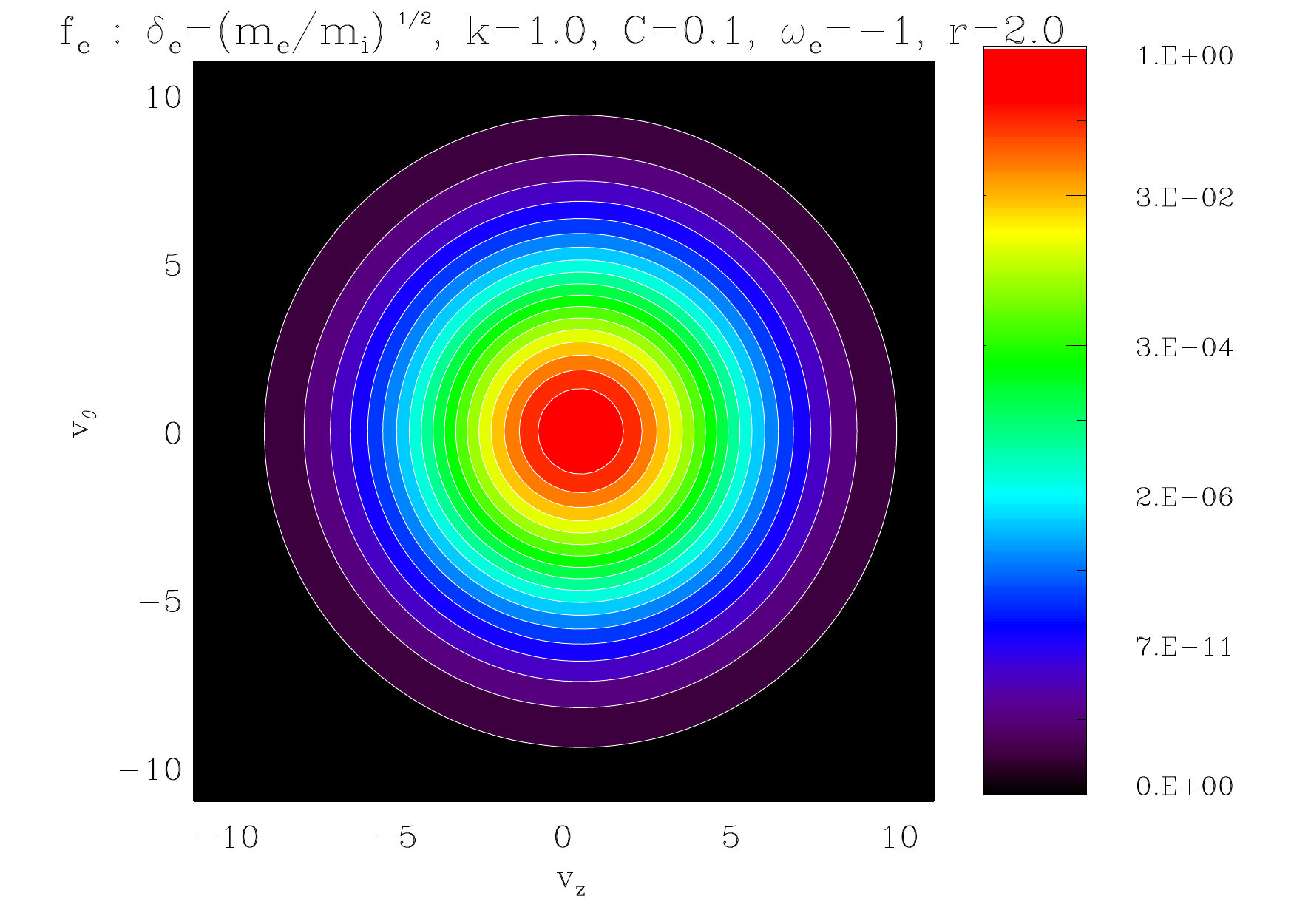}
        \caption{\small $(\tilde{\omega}_e,\tilde{r},C_e)=(-1,2,0.1)$}
        \label{fig:}
    \end{subfigure}
        \begin{subfigure}[b]{0.45\textwidth}
        \includegraphics[width=\textwidth]{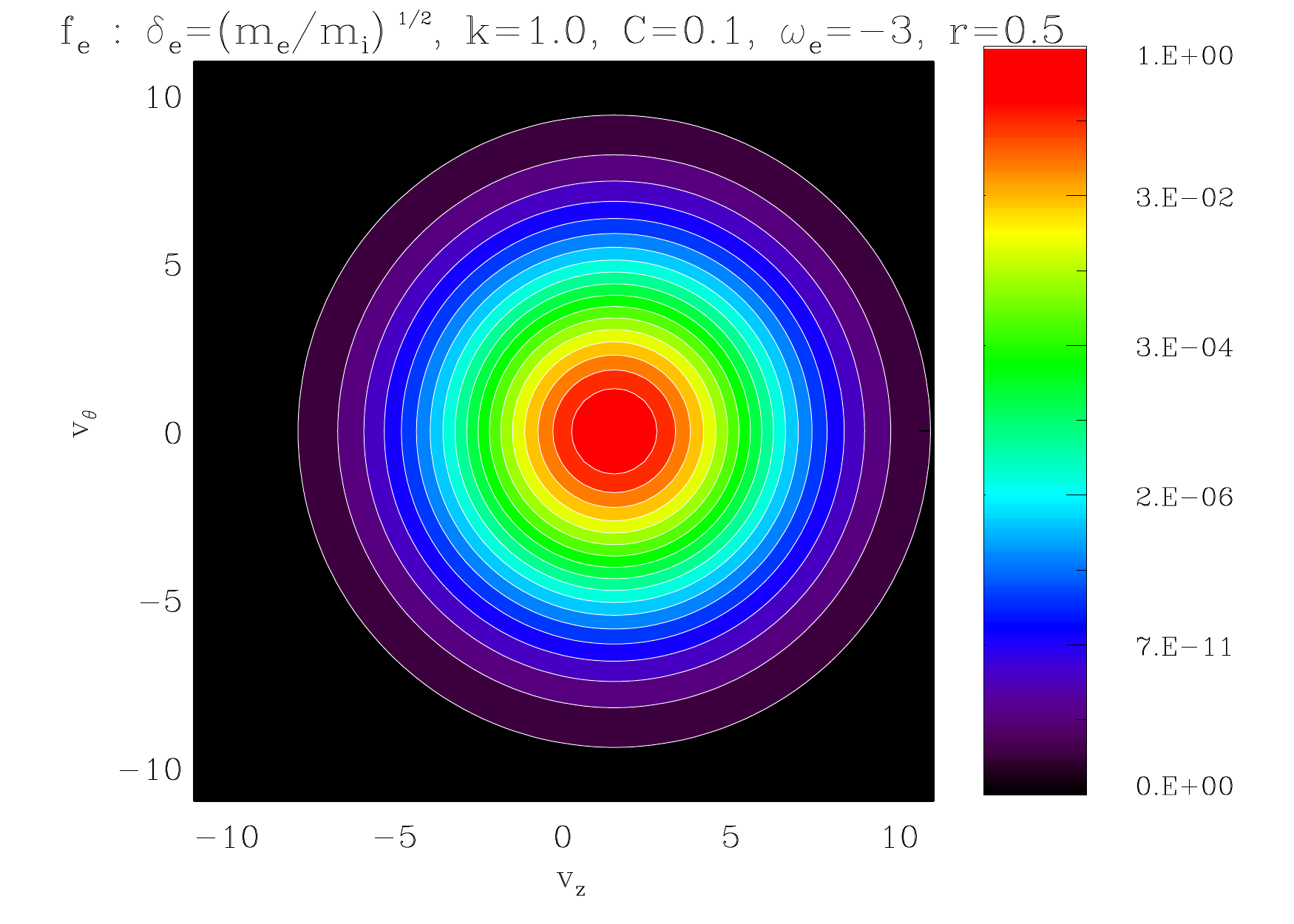}
        \caption{\small $(\tilde{\omega}_e,\tilde{r},C_e)=(-3,0.5,0.1)$}
        \label{fig:}
    \end{subfigure}
    \begin{subfigure}[b]{0.45\textwidth}
        \includegraphics[width=\textwidth]{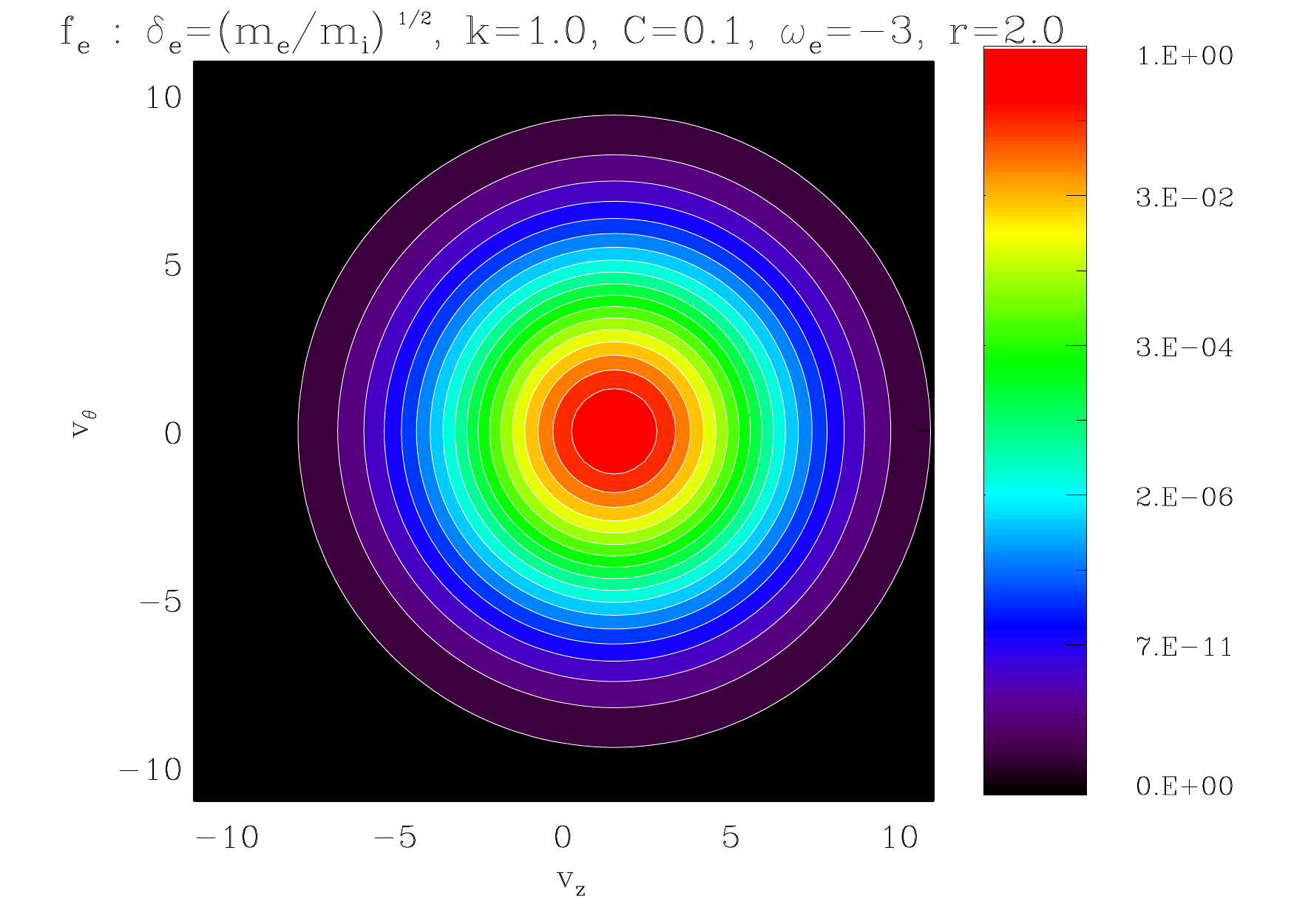}
        \caption{\small $(\tilde{\omega}_e,\tilde{r},C_e)=(-3,2,0.1)$}
        \label{fig:}
    \end{subfigure}
        \begin{subfigure}[b]{0.45\textwidth}
        \includegraphics[width=\textwidth]{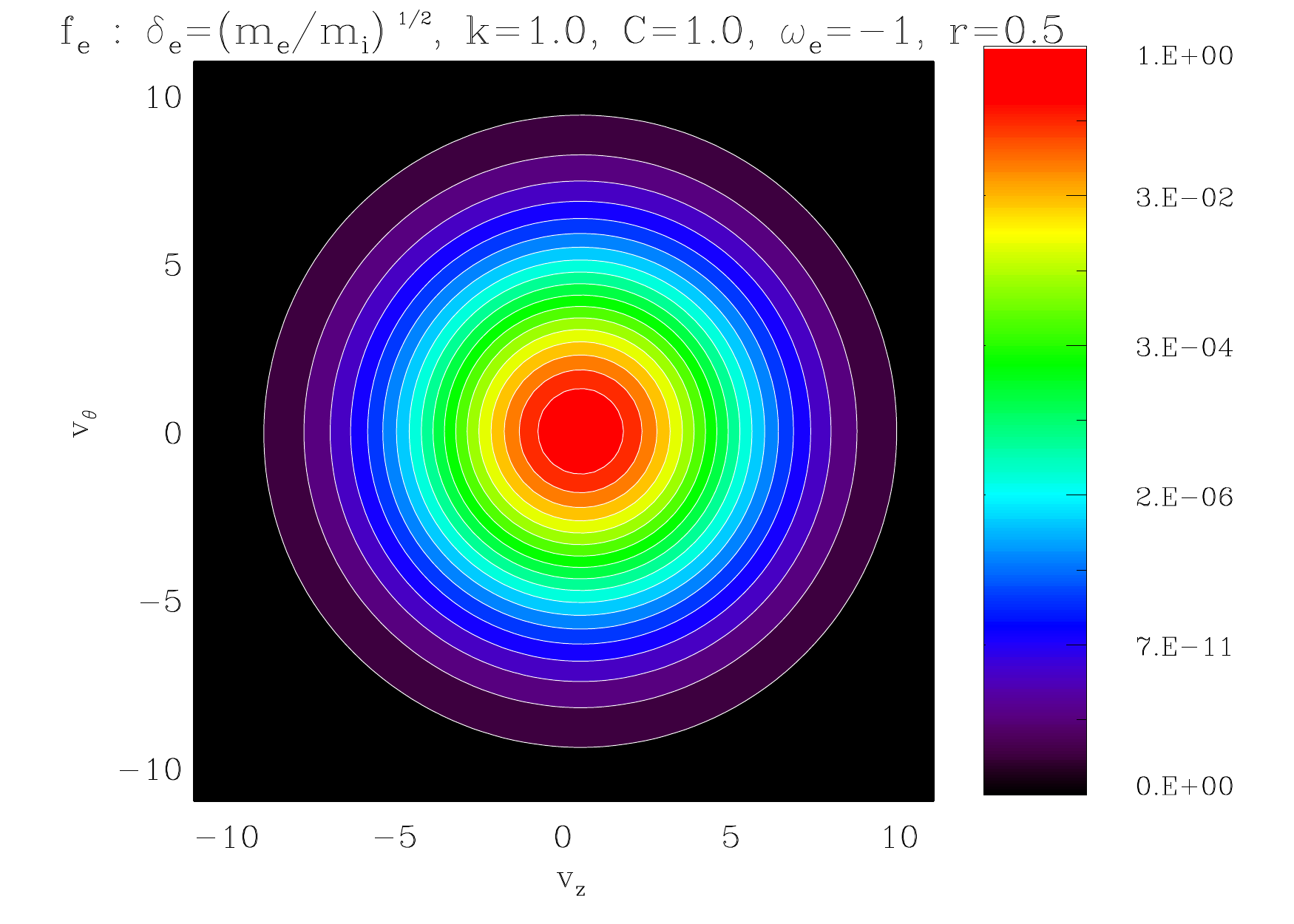}
        \caption{\small $(\tilde{\omega}_e,\tilde{r},C_e)=(-1,0.5,1)$}
        \label{fig:}
    \end{subfigure}
    \begin{subfigure}[b]{0.45\textwidth}
        \includegraphics[width=\textwidth]{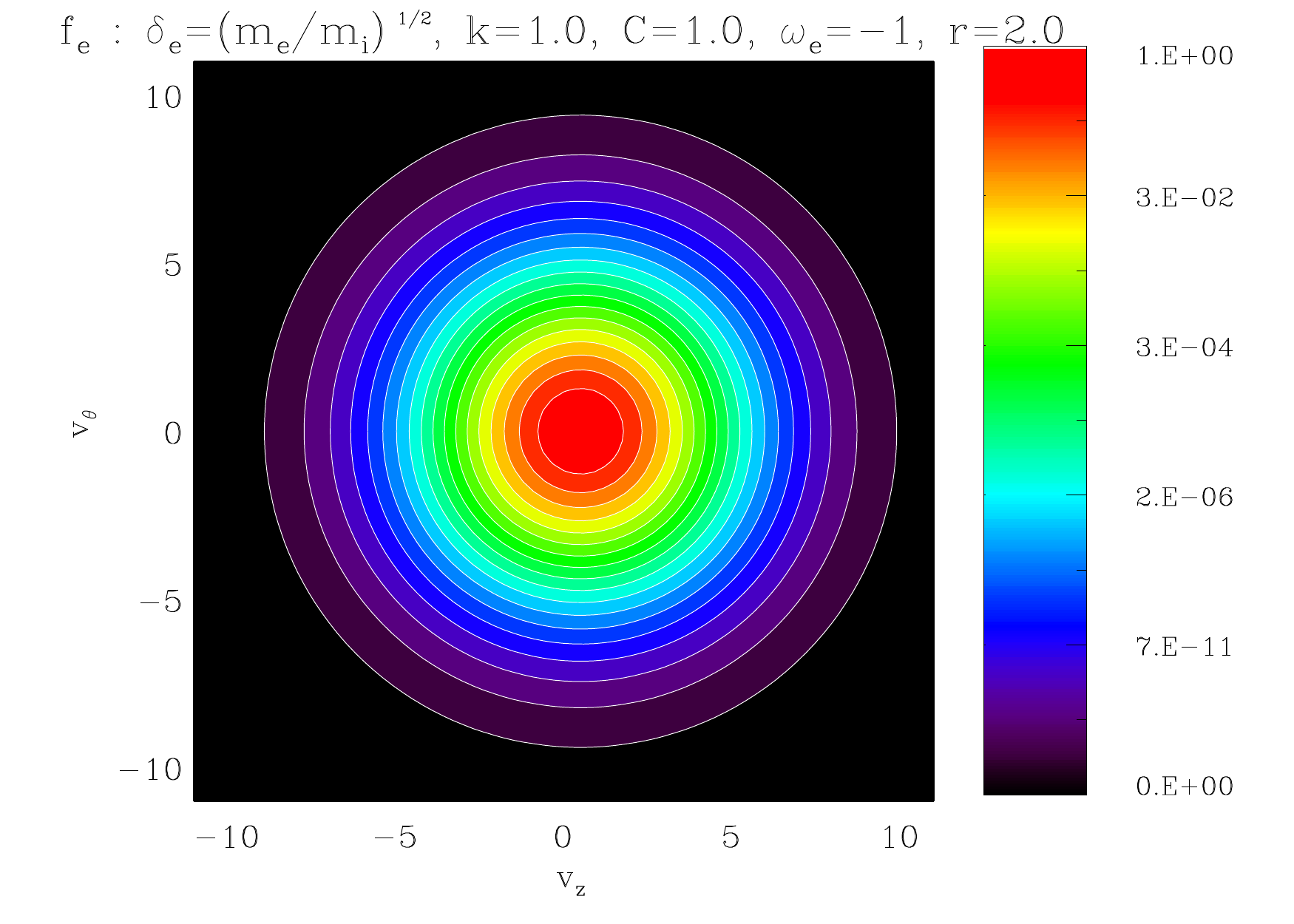}
        \caption{\small $(\tilde{\omega}_e,\tilde{r},C_e)=(-1,2,1)$}
        \label{fig:}
    \end{subfigure}
        \begin{subfigure}[b]{0.45\textwidth}
        \includegraphics[width=\textwidth]{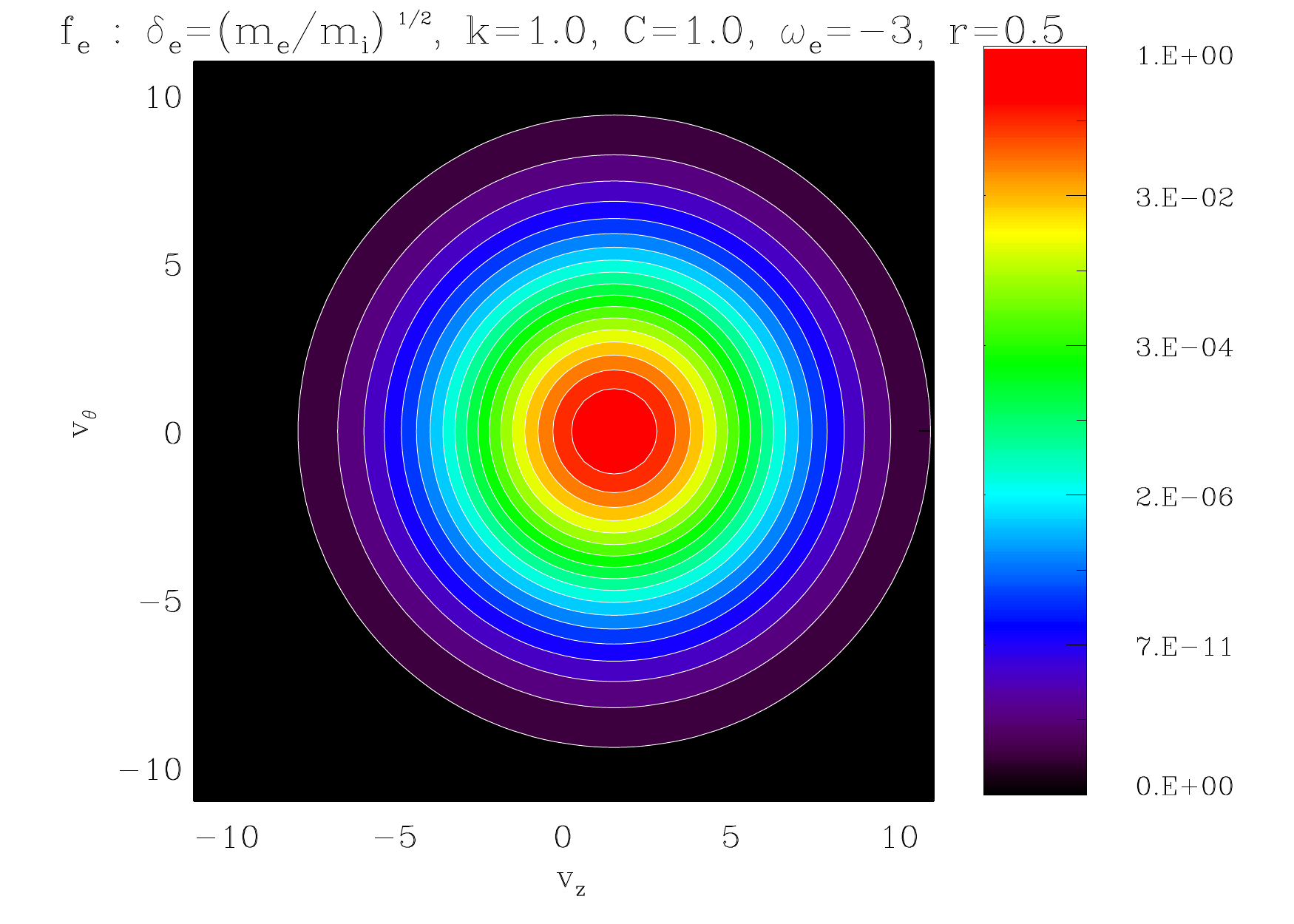}
        \caption{\small $(\tilde{\omega}_e,\tilde{r},C_e)=(-3,0.5,1)$}
        \label{fig:}
    \end{subfigure}
     \begin{subfigure}[b]{0.45\textwidth}
        \includegraphics[width=\textwidth]{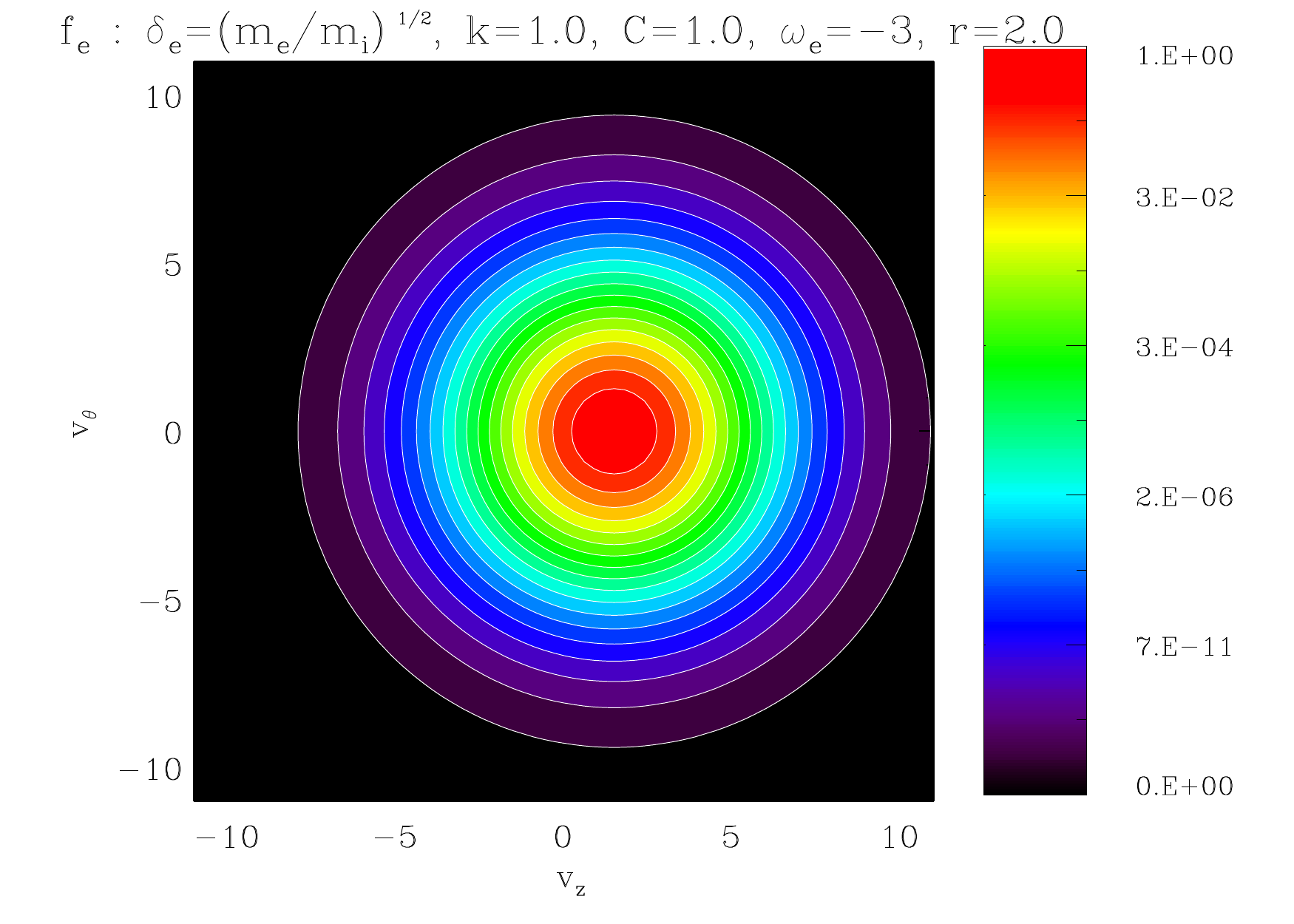}
        \caption{\small $(\tilde{\omega}_e,\tilde{r},C_e)=(-3,2,1)$}
        \label{fig:}
    \end{subfigure}
    \caption{\small       Contour plots of $f_e$ in $(\tilde{v}_z,\tilde{v}_{\theta})$ space for an equilibrium without field reversal ($k=1>0.5$), for a variety of parameters ($\tilde{\omega}_e,\tilde{r},C_e$) and $\delta_e\approx1/\sqrt{1836}$. Note that there are not any multiple maxima in this case.   }\label{fig:7}
\end{figure}

\subsection{Plots of the DF}
A characteristic that one immediately looks for in a new DF is the existence of multiple maxima in velocity space, which are a direct indication of non-thermalisation, relevant for the existence of micro-instabilities (e.g. see \cite{Gary-2005}). Using an analysis very similar to that in \citet{Neukirch-2009}, we can derive - for a given value of $\tilde{\omega}_{s}$ - conditions on $\tilde{r}$ and either $\tilde{v}_{z}$ or $\tilde{v}_{\theta}$, for the existence of multiple maxima in the $\tilde{v}_{\theta}$ or $\tilde{v}_{z}$ direction respectively. We present these calculations in Sections \ref{app:vthetamax} and \ref{app:vzmax}. The most readily understood results are that multiple maxima in the $\tilde{v}_{\theta}$ direction can only occur for $\tilde{r}>2/|\tilde{\omega}_{s}|$, and in the $\tilde{v}_z$ direction for $|\tilde{\omega}_s|>2$. Given these necessary conditions, one can then calculate that multiple maxima of $f_s$ will occur in the $\tilde{v}_{\theta}$ direction for $\tilde{v}_z$ bounded above and below, and vice versa. 

In Figures (\ref{fig:4}-\ref{fig:7}) we present plots of the DFs over a range of parameter values. Figures (\ref{fig:4}) and (\ref{fig:5}) show the ion DFs for $k=0.1$ and $k=1$ respectively, for all combinations of $\tilde{\omega}_{i}=1,3$, $\tilde{r}=0.5,2$ and $C_s=0.1, 1$, and with the magnetisation parameter $\delta_i=1$. As a graphical confirmation of the above discussion, we can only see multiple maxima in the $\tilde{v}_{\theta}$ direction for $\tilde{r}>2/|\tilde{\omega}_{s}|$, and in the $\tilde{v}_z$ direction for $|\tilde{\omega}_s|>2$, with the appropriate bounds marked by the horizontal/vertical white lines.  

Aside from multiple maxima in the orthogonal directions, the DF can also be `two-peaked'. That is, the DF can have two isolated peaks in $(\tilde{v}_z,\tilde{v}_{\theta})$ space. This is seen to occur for Figures (\ref{fig:5d}, \ref{fig:5g}, \ref{fig:5h}). Hence, $f_i$ is seen to be `two-peaked' when $k=1$ for both $\tilde{r}>2/\tilde{\omega}_i$ and $\tilde{r}<2/\tilde{\omega}_i$. However, we do not see a two-peaked DF for $k=0.1$. This seems to suggest that the stronger guide field ($k=1$) correlates with multiple peaks. Physically, this may correspond to the fact that a homogeneous guide field is consistent with a Maxwellian DF centred on the origin in $(\tilde{v}_z,\tilde{v}_{\theta})$ space, given that a Maxwellian contributes zero current. Hence, if the `main' part/peak of the DF is centred away from the origin, then the Maxwellian contribution from the guide field could contribute a secondary peak. These secondary peaks are seen to be more pronounced when $\tilde{C}_i$ is larger, i.e. the contribution from the second term from the DF is greater.  

Figures (\ref{fig:6}) and (\ref{fig:7}) show the electron DFs for $k=0.1$ and $k=1$ respectively, for all combinations of $\tilde{\omega}_{e}=1,3$; $\tilde{r}=0.5,2$, and $C_e=0.1,1$, and with the magnetisation parameter $\delta_e=\delta_i\sqrt{m_e/m_i}\approx 1/\sqrt{1836}$. This choice of magnetisation corresponds to $T_i=T_e$. In general we see DFs with fewer multiple maxima in velocity space than the ion plots, which is physically consistent with the electrons being more magnetised, i.e. more `fluid-like'. In particular we see no multiple maxima in Figure \ref{fig:7}, the case with the stronger background field. 

Note that when the electrons to have the same magnetisation as the ions, i.e. $\delta_e=\delta_i=1$, then these marked differences in the velocity-space plots disappear, and we observe a qualitative symmetry $f_i(\tilde{v}_\theta,\tilde{v}_z,r)\propto f_e(-\tilde{v}_\theta,-\tilde{v}_z,r)$.

\subsubsection{Maxima in $v_{\theta}$ space}\label{app:vthetamax}
The $\tilde{p}_{rs}$ dependence of the DF is irrelevant to our discussion, and as such can be integrated out. We can also neglect the scalar potential $\phi$. The reduced DF, $\tilde{F}_s$, in dimensionless form is 
\begin{eqnarray}
\tilde{F}_s=((\sqrt{2\pi}v_{\text{th},s})^2/n_{0s})\,e^{\tilde{\phi}_s}\,\int_{-\infty}^{\infty} \,f_s\, dv_{r},\nonumber
\end{eqnarray}
which then reads
\begin{eqnarray}
&&\tilde{F}_s=\exp\left\{-\frac{1}{2}\left[\left(\frac{\tilde{p}_{\theta s}}{\tilde{r}}-\tilde{A}_{\theta s}\right)^2+\left(\tilde{p}_{zs}-\tilde{A}_{zs}\right)^2\right]\right\}\times\nonumber\\
&&\left[\exp\left(\tilde{\omega}_s\tilde{p}_{\theta s}+\tilde{{U}}_{zs}\tilde{P}_{zs}  \right)+{C}_s\exp\left(\tilde{{V}}_{zs}\tilde{P}_{zs}   \right)\right].\label{eq:reduced}
\end{eqnarray}
We have written $\tilde{F}_s$ in terms of the canonical momenta, and so we search for stationary points given by $\partial \tilde{F}_s/\partial \tilde{p}_{\theta s}=0$, equivalent to $\partial \tilde{F}_s/\partial \tilde{v}_{\theta s}=0$.
Setting $\partial \tilde{F}_s/\partial \tilde{p}_{\theta s}=0$ gives
\begin{eqnarray}
\tilde{p}_{\theta s}-\tilde{r}\tilde{A}_{\theta s}&=&\frac{\tilde{\omega}_s\tilde{r}^2}{1+{C}_se^{-\tilde{\omega}_s\tilde{p}_{z s}}e^{-\tilde{\omega}_s\tilde{p}_{\theta s}}}\nonumber\\
&=&\frac{A}{1+Be^{-\tilde{\omega}_s\tilde{p}_{\theta s}}}:=R(\tilde{p}_{\theta s}).\label{eq:lineartheta}
\end{eqnarray}
To derive a necessary condition for multiple maxima, we analyse the RHS of Equation (\ref{eq:lineartheta}), $R(\tilde{p}_{\theta s})$. This function is bounded between 0 and A, and is monotonically increasing. Hence, using techniques similar to those in \citet{Neukirch-2009}, a necessary condition for multiple maxima in the DF is that 
\begin{eqnarray}
\max_{\tilde{p}_{\theta s}}R^{\prime}(\tilde{p}_{\theta s})>1,
\end{eqnarray} 
since the LHS of Equation (\ref{eq:lineartheta}) is a linear function of unit slope in $\tilde{p}_{\theta s}$. This condition can be shown to be equivalent to $A\tilde{\omega}_s/4>1$ and so 
\begin{equation}
\tilde{\omega}_s^2>4\tilde{r}^{-2}\iff\tilde{r}>2/|\tilde{\omega}_s|
\label{eq:rcond}
\end{equation}
This demonstrates that for sufficiently small $\tilde{r}$, there cannot exist multiple maxima. Equivalently, this condition will always be satisfied for some $\tilde{r}$, and as such is just a condition on the domain, in $\tilde{r}$, for which multiple maxima can occur. This condition is not sufficient however, as it could still be the case that there exists only one point of intersection (and hence one maximum), depending on the value of $B$. It is seen that $R$ has unit slope at 
\begin{eqnarray}
&&\tilde{p}_{\theta s}^{\pm}=\frac{1}{\tilde{\omega}_s}\times\nonumber\\
&&\left[\ln\left(2B\right)-\ln\left(A\tilde{\omega}_s-2\pm\sqrt{A\tilde{\omega}_s\left(A\tilde{\omega}_s-4\right)}\right)\right].
\label{eq:pthetastar}    
\end{eqnarray}
Clearly $R$ has unit slope for two values of $\tilde{p}_{\theta s}$. After some graphical consideration of the problem, it becomes apparent that $B$ should be bounded above and below for multiple maxima. After elementary consideration of the functional form of (\ref{eq:lineartheta}), for example with graph plotting software, we see that multiple maxima in the $\tilde{v}_{\theta}$ direction can only occur, for a given $\tilde{r}$, when $B$ (and hence $\tilde{v}_{z}$) satisfies these inequalities for ions
\begin{eqnarray}
&&\tilde{p}_{\theta i}^{ +}-R(\tilde{p}_{\theta i}^{ +})-\tilde{r}\tilde{A}_{\theta i}>0,\nonumber\\
&&\tilde{p}_{\theta i}^{ -}-R(\tilde{p}_{\theta i}^{ -})-\tilde{r}\tilde{A}_{\theta i}<0,
\end{eqnarray}
and these for electrons
\begin{eqnarray}
&&\tilde{p}_{\theta e}^{ +}-R(\tilde{p}_{\theta e}^{ +})-\tilde{r}\tilde{A}_{\theta e}<0,\nonumber\\
&&\tilde{p}_{\theta e}^{ -}-R(\tilde{p}_{\theta e}^{ -})-\tilde{r}\tilde{A}_{\theta e}>0.
\end{eqnarray}

\subsubsection{Maxima in $v_{z}$ space}\label{app:vzmax}
We shall once again use the reduced DF defined in Equation (\ref{eq:reduced}) in our analysis. Thus, we shall consider $\partial \tilde{F}_s/\partial \tilde{p}_{z s}=0$, which is equivalent to $\partial \tilde{F}_s/\partial \tilde{v}_{z s}=0$. Setting $\partial \tilde{F}_s/\partial \tilde{p}_{z s}=0$ gives
\begin{eqnarray}
\tilde{p}_{z s}-\tilde{A}_{z s}&=&\frac{\tilde{{U}}_{zs}+{C}_s\tilde{{V}}_{zs}e^{-\tilde{\omega}_s(\tilde{p}_{zs}+\tilde{p}_{\theta s})}}{1+{C}_se^{-\tilde{\omega}_s(\tilde{p}_{zs}+\tilde{p}_{\theta s})}}\nonumber\\
&=&\frac{A_1}{1+B_1e^{-D_1\tilde{p}_{z s}}}+\frac{A_2}{1+B_2e^{-D_2\tilde{p}_{z s}}}\nonumber\\
&:=&R_1(\tilde{p}_{z s})+R_2(\tilde{p}_{z s})=R(\tilde{p}_{zs})\label{eq:linearz},\nonumber
\end{eqnarray}
such that
\begin{eqnarray}
A_1&=&\tilde{{U}}_{zs},\hspace{3mm}A_2=\tilde{{V}}_{zs},\nonumber\\
B_1&=&{C}_se^{-\tilde{\omega}_s\tilde{p}_{\theta s}}=B_2^{-1},\hspace{3mm}D_1=\tilde{\omega}_s=-D_2.\nonumber
\end{eqnarray}
 To derive a necessary condition for multiple maxima, we analyse the RHS of Equation (\ref{eq:linearz}). Each $R$ function is bounded and  monotonic. Once again using techniques similar to those in \citet{Neukirch-2009}, a necessary condition for multiple maxima in the DF is that 
\begin{eqnarray}
\max_{\tilde{p}_{z s}}\left(R_1^{\prime}(\tilde{p}_{z s})+R_2^{\prime}(\tilde{p}_{z s})\right)>1.
\end{eqnarray} 
After some algebra this condition can be shown to be equivalent to $\tilde{\omega}_s^2/4>1$ and so 
\begin{eqnarray}
|\tilde{\omega}_s|>2.
\label{eq:omegacond}
\end{eqnarray}
This condition is not sufficient however, as it could still be the case that there exists only one point of intersection, depending on the value of $B_1(=1/B_2)$. The transition between 3 points of intersection and one occurs at the value of $B_1$ for which the straight line of slope unity through $\tilde{p}_{z s}=0$ just touches $R_1(\tilde{p}_{z s})+R_2(\tilde{p}_{z s})$ at the point where it also has unit slope. It is readily seen that $R_1+R_2$ has unit slope at 
\begin{eqnarray}
\tilde{p}_{z s}^{\pm}=\frac{1}{\tilde{\omega}_s}\left[\ln\left(  2B_{1} \right)      -   \ln\left(     \tilde{\omega}_s^2-2 \pm \sqrt{\tilde{\omega}_{s}^{2}(  \tilde{\omega}_{s}^{2}-4   )}    \right)     \right].
\end{eqnarray}
Clearly $R$ has unit slope for two values of $\tilde{p}_{z s}$. Once again, after some graphical consideration of the problem, it becomes apparent that $B_{1}$ should be bounded above and below for multiple maxima. After elementary consideration of the functional form of (\ref{eq:linearz}), for example with graph plotting software we see that multiple maxima in the $\tilde{v}_{z}$ direction can only occur, for a given $\tilde{r}$, when $B_1$ (and hence $\tilde{v}_{\theta}$) satisfies these inequalities for ions
\begin{eqnarray}
&&\tilde{p}_{z i}^{ +}-R(\tilde{p}_{z i}^{ +})-\tilde{A}_{z i}>0,\nonumber\\
&&\tilde{p}_{z i}^{ -}-R(\tilde{p}_{z i}^{ -})-\tilde{A}_{z i}<0,
\end{eqnarray}
and these for electrons
\begin{eqnarray}
&&\tilde{p}_{z e}^{ +}-R(\tilde{p}_{z e}^{ +})-\tilde{A}_{z e}<0,\nonumber\\
&&\tilde{p}_{z e}^{ -}-R(\tilde{p}_{z e}^{ -})-\tilde{A}_{z e}>0.
\end{eqnarray}

\section{Summary}

In this chapter we have calculated 1D collisionless equilibria for a continuum of magnetic field models based on the GH flux tube, with an additional constant background field in the axial direction. This study was motivated by a desire to extend the existing methods for solutions of the `inverse problem in Vlasov equilibria' in Cartesian geometry, to cylindrical geometry. 

In Section \ref{subsec:eom} we calculated the fluid equations of motion for a 1D system with azimuthal and axial flows, found by taking the first order velocity moment of the Vlasov equation in cylindrical coordinates. The presence of centripetal forces in the equation of motion demonstrated that it may be difficult to find Vlasov equilibrium DFs self-consistent with force-free fields.

However, initial efforts focussed on solving for the exact force-free GH field, but this seems impossible due to the centripetal forces, and this conclusion is somewhat corroborated by \citet{Vinogradov-2016}. The GH field in particular was chosen as it represents the `natural' analogue of the Force-Free Harris Sheet in cylindrical geometry, a magnetic field whose VM equilibria have been the subject of recent study, \citep{Harrison-2009PRL, Neukirch-2009, Wilson-2011, Abraham-Shrauner-2013, Kolotkov-2015}, as well as the work detailed in Chapters \ref{Vlasov} and \ref{Sheets}, featuring work from \citet{Allanson-2015POP,Allanson-2016JPP}

A background field was introduced, and an equilibrium DF was found that reproduces the required magnetic field, i.e. solves Amp\`{e}re's Law. It is the presence of the background field that allows us to solve Vlasov's equation and Amp\`{e}re's Law, and it appears physically necessary as it introduces an `asymmetry'; namely an extra term into the equation of motion whose sign depends explicitly on species. In contrast to the `demands' of insisting on a particular magnetic field, no condition was made on the electric field. The DF allows both electrically neutral and non-neutral configurations, and in the case of non-neutrality we find an exact and explicit solution to Poisson's equation for an electric field that decays like $1/r$ far from the axis. We note here that the type of solutions derived in this chapter could - after a Galilean transformation - be interpreted as 1D BGK modes with finite magnetic field (see \cite{Abraham-Shrauner-1968,Ng-2005,Grabbe-2005,Ng-2006} for example, to provide some context).

An analysis of the physical properties of the DF was given in Section \ref{sec:analysis}, with some particularly detailed calculations in Sections \ref{app:vthetamax} and \ref{app:vzmax}. The dependence of the sign of the charge density (and hence the electric field) on the bulk ion and electron rotational flows was analysed, with a physical interpretation given. Essentially the argument states that the electric field exists in order to balance the difference in the centrifugal forces (in the co-moving frame) between the two species. The DF was found to be able to give sub-unity values of the plasma beta, should this be required/desirable given the relevant physical system that it is intended to model. In Section \ref{subsec:origin} we performed a detailed analysis of the relationship between individual terms in the equation of motion. For clarity, the conclusions drawn for the macroscopic equilibrium considered in this chapter are that the electric field sources/balances gradients in the particle number densities; the centripetal forces are sourced/balanced by the bulk angular flows; and the $\boldsymbol{j}\times\boldsymbol{B}$ force is sourced/balanced by a centripetal-type force, that treats the flow as uniform circular motion, i.e. rotational flows consistent with a rigid-rotor (see Section \ref{sec:thedf}). The final part of the analysis focussed on plotting the DF in velocity space, for certain parameter values, and at different radii. Mathematical conditions were found that determine whether or not the DF could have multiple maxima in the orthogonal directions in velocity space, and these are corroborated by the plots of the DFs. For certain parameter values, the DF was also seen to have two separate, isolated peaks. This non-thermalisation suggests the existence of microinstabilities, for a certain choice of parameters.

Further work could involve a deeper analysis of the properties of the DFs and their stability. This work has also raised a fundamental question: `is it possible to describe a 1D force-free collisionless equilibrium in cylindrical geometry?' Preliminary investigations seem to suggest that it is not possible. It would also be of value to find out whether the relationships derived between individual terms in the equation of motion are totally general in nature, and if not, to what extent to they apply?

\null\newpage

\chapter{Discussion} 

\epigraph{\emph{For God's sake, stop researching for a while and begin to think.}}{\textit{Walter Hamilton Moberley}}

\label{Discussion} 
The details of the main results of this thesis have been explained in the preambles and summaries of Chapters \ref{Vlasov}, \ref{Sheets}, \ref{Asymmetric} and \ref{Cylindrical}, and as such we shall not duplicate that information. Here, it is the intention to place the motivation of the work and the results in context with regards to personal research direction, broader questions, and suggestions for future work.

\section{Context}

The overarching physical motivation for the work in this thesis is perhaps embodied by - and has its roots in - the `GEM challenge': \emph{`The goal is to identify the essential physics which is required to model collisionless magnetic reconnection'}, \citep{Birn-2001}. However, this thesis does not focus on the analysis of instability and reconnection itself. The results in this thesis are on the theoretical modelling of Vlasov-Maxwell equilibria, with the approach being a mixture of `general scientific curiosity' (e.g. Chapters \ref{Vlasov} and \ref{Cylindrical}), and the application to particular physical problems (e.g. Chapters \ref{Sheets}, \ref{Asymmetric} and \ref{Cylindrical}).

\subsection{Current sheets}
Much of the research effort in tackling the GEM challenge has been spent on antiparallel (i.e. $B_{x}(z)=-B_x(-z)$) reconnection, with initial equilibrium conditions as symmetric 1D current sheets (e.g. see \cite{Hesse-2001,Birn-2005} for examples with and without guide fields $B_y$ respectively). In particular, the Harris current sheet model (or some modification) is very frequently used, in no small part due to the well-known exact Vlasov-Maxwell equilibrium DF \citep{Harris-1962},
\[
f_s=\frac{n_{0s}}{(\sqrt{2\pi}v_{\text{th},s})^3}e^{-\beta_s(H_s-u_{ys}p_{ys})}.
\]

It is possible to approximate force-free ($\boldsymbol{j}\times\boldsymbol{B}=\boldsymbol{0}$) conditions, relevant to the $\beta_{pl}\ll 1$ conditions in the solar corona, by assuming a strong, uniform guide field $B_y(z)=B_{y0}\gg B_{x0}$,
\[
\boldsymbol{B}=(B_{x0}\tanh\tilde{z},B_{y0},0).
\] 
However, as discussed in Chapter \ref{Sheets}, the nature of such an equilibrium does not accurately represent a true force-free equilibrium, such as the force-free Harris sheet,
\[
\boldsymbol{B}=B_0(\tanh\tilde{z},\text{sech}\tilde{z},0).
\] 
Until the discovery of the first VM equilibrium DF for a nonlinear force-free field (the \emph{Harrison-Neukirch} equilibrium for the force-free Harris sheet) by \citet{Harrison-2009PRL}, the analysis of reconnection and instability of force-free fields had to be limited to the use of exact initial conditions for a uniform strong guide field configuration, e.g. \citet{Ricci-2004}; the use of inexact initial conditions (drifting Maxwellians) for an exact nonlinear force-free field (e.g. \cite{Birn-2010}); or one would have to use a linear force-free model (e.g. \cite{Bobrova-2001}), for which one cannot isolate and study a single current sheet. We are now beginning to see the first analyses of linear stability \citep{Wilson-2017}, and reconnection \citep{Wilson-2016} for exact nonlinear force-free current sheet models.

The Harrison-Neukirch equilibrium does have one fairly significant drawback, with regards to its use in a low plasma beta environment. Due to technical reasons regarding the manner in which the Vlasov-Maxwell equilibrium was constructed, $\beta_{pl}$ is bounded below by unity. This feature motivated our investigations of low-beta Vlasov-Maxwell equilibria for the force-free Harris sheet \citep{Allanson-2015POP, Allanson-2016JPP}, as discussed in Chapter \ref{Sheets}. The key step in reducing the lower bound for $\beta_{pl}$, was the use of pressure tensor transformation techniques, as discussed in \citet{Harrison-2009POP}, and for which we chose an exponential function. This transformation made the inverse problem \citep{Channell-1976} difficult to solve, and confidence in the solution necessitated some rigorous mathematical work (see \cite{Allanson-2016JPP}) and Chapter \ref{Vlasov}.

It is now established that \emph{`magnetic reconnection relies on the presence of a diffusion region, where collisionless or collisional plasma processes facilitate the changes in magnetic connection through the generation of dissipative electric fields'} \citep{Hesse-2011}. The very recent (and current) NASA MMS mission is able to make in-situ diffusion region measurements on kinetic scales for the very first time \citep{Burch-2016Science, Hesse-2016}. The satellite will focus on the dayside magnetopause in the first phase of its mission, and the magnetotail in the second phase. Current sheets in the dayside magnetopause are typically of a rather different nature than those of the symmetric Harris sheet type, by virtue of the asymmetric conditions either side of the current sheet. The magnetosheath side is characterised by an enhanced thermal pressure and depleted magnetic pressure, and vice versa for the magnetosphere side. Exact analytical \citep{Alpers-1969} and numerical \citep{Belmont-2012,Dorville-2015} Vlasov-Maxwell equilibria are few in number, and so the work in Chapter \ref{Asymmetric} and \citet{Allanson-2017GRL} is targeted towards improving this situation. In particular, the exact analytical solution due to \citet{Alpers-1969} has different bulk flow properties to the one that we present.

\subsection{Flux tubes}
Localised currents need not always obey a planar geometry; flux tubes play an important role in confinement and subsequent energy release in many areas of plasma physics (see Chapter \ref{Cylindrical}), and particularly in the solar corona (e.g. see \cite{Wiegelmann, Hood-2016}), as well as the extended structure of magnetic islands, perpendicular to current sheets in the magnetopause and magnetotail (e.g. see \cite{Kivelson-1995,Vinogradov-2016}). Hence it was with a combination of mathematical curiosity, and a desire to model nonlinear force-free flux tubes, that we attempted to calculate exact Vlasov-Maxwell equilibria for the Gold-Hoyle flux tube \citep{Gold-1960}, the natural analogue of the force-free Harris sheet in cylindrical geometry \citep{Tassi-2008}. The work is detailed in Chapter \ref{Cylindrical} and \citet{Allanson-2016POP}, and in fact we were unable to find solutions for the exact nonlinear force-free Gold-Hoyle model. However, the magnetic field can be arbitrarily close to a force-free field if desired. An interesting feature of the analysis focussed on the need to include non-neutrality and non-zero electric fields in the equilibrium, brought about by charge separation effects, inherent in the rotational motion of particles with different masses.





\section{Broader theoretical questions}

\subsection{The pressure tensor}
In a one-dimensional and $z$-dependent geometry, the `keystone' of the inverse problem is the pressure tensor component $P_{zz}(A_x,A_y)$: given a magnetic field, one first attempts to calculate $P_{zz}$, and then self consistent distribution functions. The main theoretical/mathematical developments in this thesis (related to Cartesian geometry) have focussed on the second step in this process, i.e. calculating self-consistent DFs, of the form
\[
f_s=\frac{n_{0s}}{(\sqrt{2\pi}v_{\text{th},s})^3}e^{-\beta_sH_s}g_s(p_{xs},p_{ys}),\label{eq:summarydf}
\]
given a $P_{zz}(A_x,A_y)$. However, there remain important questions about the determination of the $P_{zz}$ function itself. 

As discussed in Chapters \ref{Intro} and \ref{Asymmetric}, the problem of determining $P_{zz}(A_x,A_y)$ given a magnetic field (in force balance) is analogous to that of determining the shape of a conservative potential function, $\mathcal{V}(\boldsymbol{x})$, given the knowledge of the particle trajectory, $\boldsymbol{x}(t)$, and the value of the potential along the trajectory, $\mathcal{V}(t)$. In the case of 1D force-free fields there is an algorithmic path that determines a valid form of $P_{zz}$ (e.g. see Chapter \ref{Sheets}, \cite{Harrison-2009POP}). The question remains: `to what extent is it possible to find self-consistent $P_{zz}$ functions for a given magnetic field, and what are they?'

One other feature of interest is the solubility of Amp\`{e}re's Law,
\[
\frac{\partial P_{zz}}{\partial{\boldsymbol{A}}}=-\frac{1}{\mu_0}\frac{d^2\boldsymbol{A}}{dz^2},
\]
with respect to different $P_{zz}$ expressions. As demonstrated in Chapter \ref{Sheets} and \citet{Harrison-2009POP} for the case of force-free fields; given one $P_{zz}$ that satisfies Amp\`{e}re's Law, there exist infinitely many others. There are two obvious questions here. Firstly, it would be interesting to investigate if there are ways to transform the Harrison-Neukirch pressure function to allow sub-unity values of the plasma beta, in a way that is more readily soluble and easier to manipulate numerically than the result found in Chapter \ref{Sheets} and \citet{Allanson-2015POP, Allanson-2016JPP}. Secondly, is it in any way possible to extend the pressure transformation theory for force-free equilibria to non force-free equilibria? If so, then the theory is to be expected to be more complicated than for force-free fields, which relies on $P_{zz}$ being a constant when evaluated along the force-free trajectory $(A_x(z),A_y(z))$.

\subsection{Non-uniqueness}
One clear challenge is to marry together the need for individual, exact solutions of the inverse problem for Vlasov-Maxwell equilibria, versus the fact that there are in principle infinitely many solutions. In essence, how do we know that a given Vlasov-Maxwell equilibrium is appropriate physically? In Chapter \ref{Intro} we gave arguments for suggesting why distribution functions of the form in Equation (\ref{eq:summarydf}) were reasonable on both physical and mathematical grounds. In particular, this form of distribution function bears a strong resemblance to a (drifting) Maxwellian. Hence, provided the $g_s$ function is not too `exotic', it seems reasonable that these distribution functions can - for a certain choice of microscopic parameters - minimise the free energy (maximise the entropy) in a plasma, given certain constraints such as the conservation of energy in a closed system (e.g. see \cite{Schindlerbook}).

The inverse problem is characterised by non-uniqueness on the level of the $P_{zz}$ for a given $\boldsymbol{B}$, and on the level of $f_s$ for a given $P_{zz}$. It would be of interest to see if - given a distribution function of the form in Equation (\ref{eq:summarydf}) - the inversion of the Weierstrass transform gives a unique solution and if not, whether the inversion method (e.g. Fourier transform or Hermite polynomial expansion) has an effect on the outcome. As discussed in Chapter \ref{Vlasov}, these considerations are related to the `backwards uniqueness of the heat equation' \citep{Evansbook}, with $g_s$ and $P_{zz}$ somewhat equivalent to the initial and final `heat' distributions over a two-dimensional surface. 

\begin{figure}
    \centering
      \includegraphics[width=0.7\textwidth]{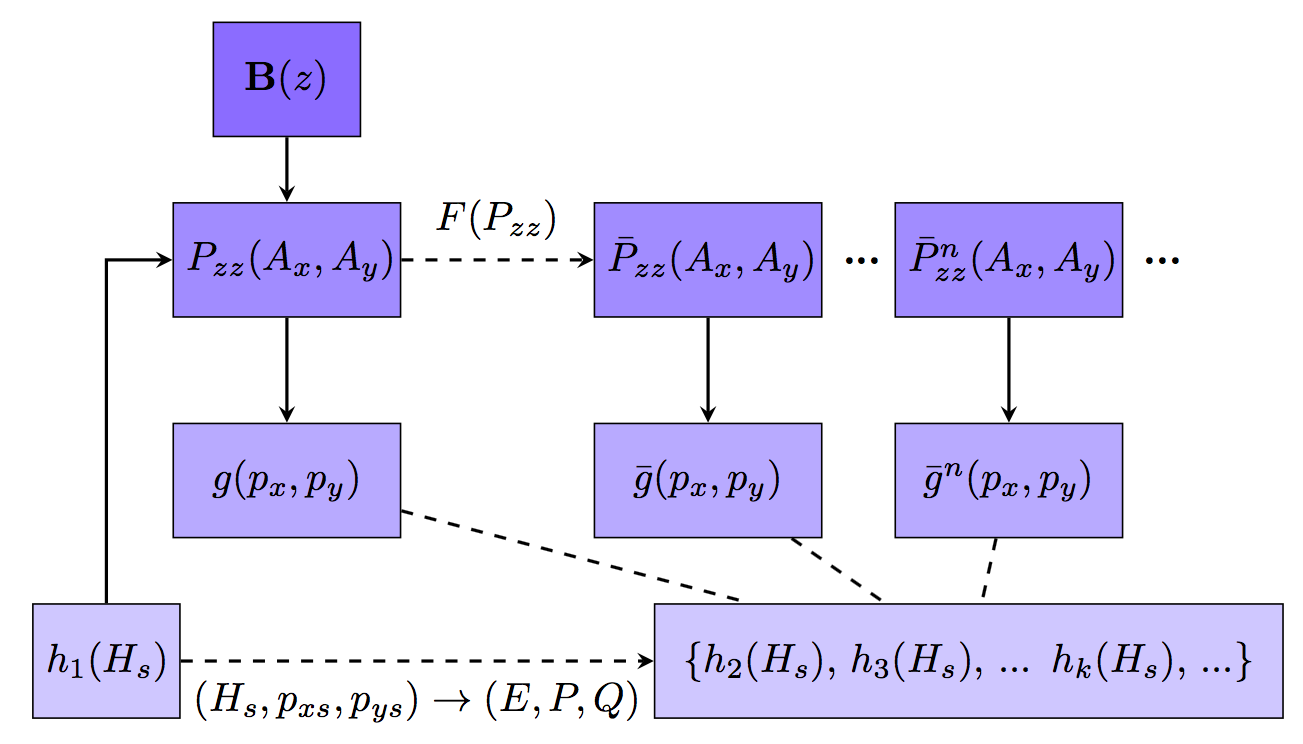}
    \caption{\small A schematic representation of the inverse problem in Vlasov-Maxwell equilibria.}\label{fig:structure}
\end{figure}

An explicit demonstration of the non-uniqueness of the inverse problem (on the level of $f_s$ for a given $P_{zz}$) was given by \citet{Wilson-2011} for the case of the force-free Harris sheet, and using ideas from \citet{Schmid-Burgk-1965}. As discussed in the Appendix, it is possible to rewrite the relevant integral equations in $dH_sdp_{xs}dp_{ys}$ space. When this is done, it soon becomes apparent that - in the case of $\phi=0$ - that there is considerable freedom in the dependency of the DF on $H_s$, for a given $P_{zz}(A_x,A_y)$. This is related to the `convoluted' nature of the $(A_x,A_y)$ and $(p_{xs},p_{ys})$ variables, and as such the $g_s(p_{xs},p_{ys})$ function and the $P_{zz}(A_x,A_y)$ function can be considered `tied' together, with flexibility in the function of energy. 

Putting all of this together, we see that the non-uniqueness of the inverse problem can be represented by Figure \ref{fig:structure}, which works as follows. For a given $\boldsymbol{B}$, one can attempt to find a self-consistent $P_{zz}$. In that case, one might assume the energy dependence of the DF to be of a certain form, e.g. $h=\exp (-\beta_sH_s)$, and then solve the inverse problem for $g_s$. Once these $g_s$ functions are found, it may be possible to find other $h$ functions that are self-consistent with the same $P_{zz}$, and hence $\boldsymbol{B}$. On top of all this, there could in practice be infinitely many such compatible $P_{zz}$ functions (which in the force-free case can be found using established pressure transformation theory). For each of these $P_{zz}$ functions one could then attempt to solve the inverse problem for $g_s$, given an assumed form of $h_s$. Once this is achieved, it may be possible to generalise the energy dependency once more.

In summary, we believe that there is more work to be done regarding the non-uniqueness of Vlasov-Maxwell equilibria. It would be desirable to be able to have a `road-map' of the variety of solutions to the inverse problem, with a clearer understanding of how they relate to one another in their mathematical structure, and their suitability for physical applications. In particular, can the somewhat complicated structure of the diagram in Figure (\ref{fig:structure}) be simplified, or brought in to a more holistic form, and to what extent can the heat/diffusion equation analogy be brought to bear on the problem at hand?

\subsection{Extensions to other physical systems and geometries}
Clearly, not all collisionless plasma equilibria can be modelled in a one-dimensional, Cartesian, strictly neutral and non-relativistic framework. For example, one might really need to consider two-dimensional current sheets in the Earth's magnetotail (e.g. see \cite{Artemyev-2013}), cylindrical geometry in a tokamak (e.g. see \cite{Tasso-2014}), non-neutral plasmas in nonlinear electrostatic structures (e.g. see \cite{Ng-2006, Vasko-2016}), and relativistic equilibria in pulsar magnetospheres (e.g. see \cite{DeVore-2015}). In contrast to the `forward problem', the theory for the `inverse problem' is only really well-developed for one-dimensional quaineutral plasmas in a Cartesian geometry, like those considered in this thesis. It would clearly be of interest to try and develop the methods of the inverse problem in some or all of these directions.

The generalisation that seems - at a first `glance' - to be the most readily made, is to two-dimensional plasmas. In fact, this is the paradigm in which the `forward problem' is most usually considered (e.g. see \cite{Schindler-2002,Schindlerbook, Artemyev-2013}). However, if one uses Jeans' theorem with the constants of motion of Hamiltonian and the canonical momenta, there is a clear trade-off between spatial invariance, and the number of non-zero components of the current density. To be precise, if we now let the system depend on both $x$ and $z$, then $p_{xs}$ is no longer a conserved quantity. In the absence of other conserved quantities, we now only have $H_s$ and $p_{ys}$ for the variables in the distribution function, and as such we can only model plasmas with a current density in the $y$ direction, and fields that are of the form
\begin{eqnarray}
\boldsymbol{A}&=&(0,A_y(x,z),0),\nonumber\\
\boldsymbol{B}&=&(B_x(x,z),0,B_z(x,z)),\nonumber\\
\boldsymbol{j}&=&(0,j_y(x,z),0).\nonumber
\end{eqnarray}
Note that since $j_x=j_z=0$, we could in principle add a constant $B_y$ field, and hence $A_x,A_z$ that are linear functions of $x,z$. This would not break the self-consistency with the Vlasov approach, provided the distribution function had no dependence on $A_x$ or $A_z$. This is somewhat similar to the realisation that the distribution function for the Harris sheet, is also self consistent with the Harris sheet plus guide field.

So we see there is a challenge if one wishes to maintain flexibility in both the spatial variance of the plasma considered, as well as more than one current carrying component. Formally speaking, one would have to proceed by identifying further exact (or approximate/adiabatic) constants of motion, in order to have more than one current component (e.g. see \cite{Schindlerbook, Zelenyi-2011} for discussions of these topics).

The `grand goal' of all of this theoretical work is, in my mind, some sort of unification of the forward and inverse approaches. Can we establish a framework that includes physically meaningful Vlasov-Maxwell equilibria, for which there are clear and well-understood routes from the microscopic Vlasov description of particles, to the macroscopic description of fluids and fields, and vice versa? First of all, I would be motivated to develop the forward/inverse theory - beyond quasineutrality - for distribution functions of the form described in \citet{Mottez-2004}
\[
f_s(H_s,p_{xs},p_{ys})=\int_{a_1}^{a_2}\frac{n_{0s}(a)}{(\sqrt{2\pi}v_{\text{th},sa})}e^{-\beta_{sa}(H_s-u_{xsa}p_{xs}-u_{ysa}p_{ys})} da,
\]
for $a_1,a_2$ constants, and $f_s$ the distribution function, which is formed by a continuous superposition over the index/variable $a$, and for which the $g_s$ functions have been written as exponentials, i.e. eigenfunctions of the Weierstrass transform. The $a$ variable indexes the thermal velocity, thermal beta, and the drift parameters, and $f_s$ reduces to a more immediately recognisable distribution function when $n_{0s}(a)=\delta(a-c)n_{0s}$, for $a_1<c<a_2$ and $n_{0s}$ a constant. A first step in this direction might be to consider a discrete superposition rather than a continuous one, i.e. for $n_{0s}(a)=\sum_j\delta (a-a_j)$.

\subsection{Stability}
As mentioned throughout this thesis, but never really explored, a theoretical understanding of equilibria is not complete without understanding their stability properties. Knowledge of Vlasov-Maxwell equilibria allows one to study micro-instabilities in phase space \citep{Gary-2005}, for which non-thermal distribution functions are a pre-condition (i.e. multiple maxima and/or anisotropic distributions in velocity space). And keeping in mind the `main' physical motivation for this body of work, we would be interested in considering instabilities involved in magnetic reconnection, e.g. the tearing mode (e.g. see \citet{Furth-1963,Drake-1977}).  

There are two main approaches to assess the stability of a (kinetic) equilibrium
\begin{description}
\item[Normal mode analysis] (e.g. see \citet{Daughton-1999, Gary-2005}): Linearise the Vlasov-Maxwell equations by expressing quantities in the form $f_{s}=f_{0s}+f_{1s}, \boldsymbol{B}=\boldsymbol{B}_0+\boldsymbol{B}_1$ etc, for the first order quantities as small perturbations to the zeroth order ones, to arrive at,
\[
\frac{df_{1s}}{dt}=-\frac{q_s}{m_s}\left(\boldsymbol{E}_1+\boldsymbol{v}\times\boldsymbol{B}_1\right)\cdot\frac{\partial f_{0s}}{\partial \boldsymbol{v}}.
\]
One then subjects this equation to a Laplace/Fourier analysis in time/space (perturbed quantities $\sim e^{i(\boldsymbol{k}\cdot\boldsymbol{x}-\omega t)}$, for $\boldsymbol{k}$ the real wave-vector, and $\omega$ the complex frequency), with the aim being to solve for $f_{1s}$, by integrating the RHS over the `unperturbed orbits'. One can then - in principle - use the knowledge of $f_{1s}$ to calculate the source terms, $\sigma_1$ and $\boldsymbol{j}_1$. The source terms and the perturbed distribution function can then be substituted into the linearised Maxwell equations, from which one attempts to calculate a dispersion relation, $\omega=\omega (\boldsymbol{k})$. The results of this analysis is that for certain $\boldsymbol{k}$, and $\omega=\omega_r+i\gamma$, one should see that the equilibrium is linearly stable to some perturbations ($\gamma<0$), and unstable to others ($\gamma>0$). This approach does not only tell the analyst the perturbations for which the equilibrium is unstable, but it also yields the `damping/growth-rate', $|\gamma |$, which tells us how quickly the perturbation damps/grows.
\item[The (linear and nonlinear) energy principles]: This approach counts a system as stable if \emph{``a suitably selected test energy remains bounded by the energy supplied from external sources.'} \citep{Schindlerbook}. In the linear approach, the method essentially rests on first calculating the total energy over the spatial domain (for which there is no energy flux across the boundaries). For example, assuming the electric energy density is vanishing (consistent with quasineutrality), the energy is given by
\[ 
W=\sum_s\int \frac{m_s}{2}v^2 f_sd^3vd^3x+\int\frac{1}{2\mu_0}B^2d^3x.
\]
Then, assuming linear perturbations of the form $f_{s}=f_{0s}+f_{1s}, \boldsymbol{B}=\boldsymbol{B}_0+\boldsymbol{B}_1$ etc, one tries to ascertain whether - under certain dynamical constraints - there is a \emph{``dynamic conversion of equilibrium energy into kinetic energy''} \citep{Schindlerbook}. If there is no dynamic conversion, then the equilibrium is said to be linearly stable. The energy approach typically provides sufficient criteria for stability, as opposed to necessary ones.
\end{description}

Preliminary analysis of the kinetic stability properties of the force-free Harris sheet have been conducted in \citet{Harrison-thesis, Wilson-thesis}. In \citet{Wilson-2016} the first particle-in-cell simulations were performed with exact intiial conditions for a nonlinear force-free field. In \citet{Wilson-2017} we carry out a normal-mode analysis for the collisionless tearing mode, of the manner described above, and for the Harrison-Neukirch equilibrium \citep{Harrison-2009PRL}. It is of interest to study the stability properties of exact force-free tangential equilibria - for which $\boldsymbol{B}\cdot\nabla=0$ and $\nabla n=0$ - since `density-driven/drift instabilities' (e.g. the lower hybrid drift instability) will not be present \citep{Gary-2005}.

Possible future work could include normal mode/energy principle and/or numerical (i.e. particle-in-cell) instability analyses of the specific equilibria presented in this thesis, and particularly that presented in Chapter \ref{Asymmetric}, given the timely relevance to the MMS mission. One might also wish to study the stability analysis of distribution functions in a general sense, viz: ``given a distribution function function that is a solution of the inverse problem, what are its necessary/sufficient stability properties, and how does it grow/damp?''




\null\newpage
\appendix
\chapter{Schmid-Burgk variables} 
 \label{Appendix}
\noindent This Appendix is based on results in \citet{Schmid-Burgk-1965,Wilson-2011}.
\section{Species-independent integrals}
For a general DF of the form $f_s=f_s(H_s,p_{xs},p_{ys})$, we see that $P_{zz}$ is given by
\[
P_{zz}=2\sum_s \frac{1}{m_s^3}\int_{-\infty}^\infty \int_{-\infty}^\infty \int_{H_{s,\text{min}}}^\infty \sqrt{2m_s(H_s-H_{s,\text{min}})}f_sdH_sdp_{xs}dp_{ys},
\]
for $H_{s,\text{min}}=[(p_{xs}-q_sA_x)^2+(p_{ys}-q_sA_y)^2]/(2m_s)+q_s\phi$. At this stage it seems clear that the result of the integral is species-dependent. If one makes substitutions using \emph{Schmid-Burgk variables},
\begin{eqnarray}
(E_s,P_s,Q_s)&=&\left(\frac{m_sH_s}{q_s^2},\frac{p_{xs}}{q_s},\frac{p_{ys}}{q_s}\right),\nonumber\\
F_s(E_s,P_s,Q_s)&=&\frac{m_s^3}{q_s^4}f_s(H_s,p_{xs},p_{ys}),\nonumber
\end{eqnarray}
then $P_{zz}$ is now written
\[
P_{zz}=2\sum_s \frac{e}{m_s}\int_{-\infty}^\infty \int_{-\infty}^\infty \int_{E_{s,\text{min}}}^\infty \sqrt{(E_s-E_{s,\text{min}})}F_sdE_sdP_sdQ_s,
\]
for $E_{s,\text{min}}=[(P_s-A_x)^2+(Q_s-A_y)^2]/2+\frac{q_s}{m_s}\phi$. As yet, we have only made substitutions, and there have been no restrictions. However, if we now assume strict neutrality, $\phi=0$, and - crucially - assume that the \emph{functional form} of the $F_s$ function is independent of species, then the above expression has an interesting property. Note that when we say `functional form is independent of species', we mean that regardless of the species $s$, the function $F_s$ maps the inputs ($E_s,P_s,Q_s$) according to the same rules, i.e.
\[
F_s(E_s,P_s,Q_s)=F(E_s,P_s,Q_s),
\] 
(for example, it cannot use an exponential function for ions, and a quadratic function for electrons). Under these assumptions, the triple integral in the $P_{zz}$ expression actually becomes species-independent. The $(E_s,P_s,Q_s)$ variables are nothing but dummy variables, and the integrand itself is now of the same form, regardless of $s$. As a result, $P_{zz}$ becomes
\begin{equation}
P_{zz}(A_x,A_y)=2e\left(\frac{1}{m_e}+\frac{1}{m_i}\right)\int_{-\infty}^\infty \int_{-\infty}^\infty \int_{E_{s,\text{min}}}^\infty \sqrt{(E_s-E_{s,\text{min}})}FdE_sdP_sdQ_s.\label{eq:Schmid}
\end{equation}
Similarly it can be shown that the charge density is given by
\[
\sigma(A_x,A_y)=2\int_{-\infty}^\infty \int_{-\infty}^\infty \int_{E_{s,\text{min}}}^\infty (E_s-E_{s,\text{min}})^{-1/2}FdE_sdP_sdQ_s\sum_{s}\frac{q_{s}}{e}=0,
\]
and we see that the DF is automatically self-consistent with the assumption of strict neutrality. 

The Schmid-Burgk variables have helped us to demonstrate that the species-dependency of velocity moments of the DF enter through a $q_s/m_s$ factor that multiplies the scalar potential, and through any `innate' species-dependency that the DFs may have in themselves. In particular, the assumption of strict neutrality is automatically self-consistent if $F_s=F$ (in the case of an electron-ion plasma, or any plasma for which $\sum_s q_s/|q_s|=0$).

\subsection{Freedom in the energy dependency}
Using the Schmid-Burgk variables and the assumptions explained above ($\phi=0,F_s=F$), \citet{Wilson-2011} show - for the the example of the FFHS - that it is possible under certain conditions to solve the inverse problem with a DF of the general form
\[
F = h(E_s)g(p_{xs},p_{ys}),
\]
and with the $h$ function not only of the typically assumed exponential form, but of a reasonably arbitrary nature. This process is demonstrated for $h$ functions that are in Dirac delta form ($\delta(E_s-E_{0})$), Step function form $(\Theta (E_{0}-E_s))$, and polynomial form ($\Theta (E_{0}-E_s)\,(E_{0}-E_s)^\chi $, for $\chi>-1$). 

As such, we can consider the $P_{zz}(A_x,A_y)$ and $g_s(p_{xs},p_{ys})$ functions as `tied' together. This `tie' is evidenced by the convoluted nature of the variables $\boldsymbol{A}$ and $\boldsymbol{p}_s$ in the relevant integral equations, i.e. velocity moments of the DF, in general form, are given by
\begin{eqnarray}
&&\langle v_j^kf_s\rangle(A_x,A_y):=\frac{n_{0s}}{(\sqrt{2\pi}v_{\text{th},s})^3} \frac{2}{m_s^{k+2}}\times\nonumber\\
&&\int_{-\infty}^\infty \int_{-\infty}^\infty \int_{H_{s,\text{min}}}^\infty\frac{(p_{js}-q_sA_j)^k}{\sqrt{2m_s(H_{s,\text{min}}-H_s)}} f_s(H_s,p_{xs},p_{ys}) dH_sdp_{xs}dp_{ys}\nonumber .
\end{eqnarray}

\subsection{Summary}
In summary, the Schmid-Burgk variables have helped us to see that in the case of strictly neutral plasmas, there is evidence to suggest that the inverse problem should be framed as as: \emph{``for a given macroscopic equilibrium, i.e. a $P_{zz}(a_x,A_y)$, what are the self-consistent $g$ functions''}, for
\[
f_s\propto h(E_s)g(P_s,Q_s),
\]
as opposed to: \emph{``for a given macroscopic equilibrium, i.e. a $P_{zz}(a_x,A_y)$, what are the self-consistent DFs?''}


\null\newpage

\listoffigures \addchaptertocentry{List of figures}






\end{document}